\newcommand*\samethanks[1][\value{footnote}]{\footnotemark[#1]}
\theoremstyle{definition}
\colorlet{shadecolor}{gray!20}
\newtheorem{theorem}{Theorem}
\newtheorem{corollary}[theorem]{Corollary}
\newtheorem{lemma}[theorem]{Lemma}
\newtheorem{definition}[theorem]{Definition}
\DeclareMathOperator*{\argmax}{argmax}
\newcommand{\appx}{\ensuremath{10^{-11}}}
\newcommand{\alfbound}{\appx}
\newcommand{\sumYS}{\ensuremath{\sum \ys}}
\newcommand{\main}{\textsc{Main}}
\newcommand{\streer}{\textsc{SteinerTree}}
\newcommand{\G}{\ensuremath{G}}
\newcommand{\V}{\ensuremath{V}}
\newcommand{\E}{\ensuremath{E}}
\newcommand{\cc}{\ensuremath{c}}
\newcommand{\ce}{\ensuremath{\cc_e}}
\newcommand{\inp}{in}
\newcommand{\out}{out}
\newcommand{\Ins}{\ensuremath{I}} %
\newcommand{\tm}{\ensuremath{t}}
\newcommand{\tmp}{\ensuremath{\tm^{*}}}
\newcommand{\GBGr}{\textsc{ShadowMoatGrowing}} %
\newcommand{\GBG}{\hyperref[proc:gbg]{\GBGr}}
\newcommand{\BBGr}{\textsc{LegacyMoatGrowing}}
\newcommand{\BBG}{\hyperref[proc:bbg]{\BBGr}} %
\newcommand{\currenttime}{\tau}
\newcommand{\Deltae}{\ensuremath{\Delta_e}}
\newcommand{\currentsets}{\ensuremath{FC}}
\newcommand{\activesets}{\ensuremath{ActS}}
\newcommand{\deactivesets}{\ensuremath{DS}}
\newcommand{\Fc}{\F_{contracted}}
\newcommand{\core}{\ensuremath{\textsc{Superactive}}}
\newcommand{\corep}{\ensuremath{\core^{\ep}}}
\newcommand{\corez}{\ensuremath{\core^{\ez}}}
\newcommand{\deltaS}{\ensuremath{\delta(S)}}
\newcommand{\rep}{\ensuremath{\sigma}}
\newcommand{\repd}{\ensuremath{\phi}}
\newcommand{\fard}{\ensuremath{f}}
\newcommand{\pare}{\ensuremath{b}}
\newcommand{\reptim}{\ensuremath{\currenttime}}
\newcommand{\vi}{\ensuremath{V_I}}
\newcommand{\acttree}{\ensuremath{T_S}}
\newcommand{\Root}{\textit{root}}
\newcommand{\RSET}{\{\Root\}}
\newcommand{\rstree}{\gr}
\newcommand{\sttm}{\currenttime}
\newcommand{\sttmp}{\currenttime'}
\newcommand{\stbstm}{\currenttime^*}
\newcommand{\stins}{\BIns}
\newcommand{\stinsp}{\BIns^*}
\newcommand{\deltaY}{\Delta_y}
\newcommand{\styi}{\ensuremath{z}}
\newcommand{\styo}{z^{*}}
\newcommand*\diff{\mathop{}\!\mathrm{d}}
\newcommand{\stlsttime}{\ensuremath{{\currenttime'}}}
\newcommand{\stacts}{K}
\newcommand{\terminals}{\mathcal{T}}
\newcommand{\stapx}{1+2\frac{\sqrt{2}}{3}}
\newcommand{\strgh}{1.943}
\newcommand{\stalfval}{\frac{1}{2} - \frac{\sqrt{2}}{3}}
\newcommand{\stbtaval}{\sqrt{2}-1}
\newcommand{\stx}{\ensuremath{\dis(q,u)}}
\newcommand{\sty}{\ensuremath{\dis(q,v)}}
\newcommand{\stz}{\ensuremath{\dis(q,w)}}
\newcommand{\stxp}{\ensuremath{\dis(q,\vone)}}
\newcommand{\styp}{\ensuremath{\dis(q,\vtwo)}}
\newcommand{\stzp}{\ensuremath{\dis(q,\vthree)}}
\newcommand{\Boost}{B}
\newcommand{\tmpp}{\ensuremath{\tau}}
\newcommand{\BoostExp}{\ensuremath{\Boost = (v, \tmpp)}}
\newcommand{\WithBoost}{\textsc{WithBoost}}
\newcommand{\BIns}{\ensuremath{\I}}
\newcommand{\BInsExp}{\ensuremath{\BIns = (\G, \tm, \tmp)}}
\newcommand{\BoostedModGWr}{\textsc{BoostedMoatGrowing}}
\newcommand{\BoostedModGW}{\hyperref[func:bmg]{\BoostedModGWr}}
\newcommand{\base}{base}
\newcommand{\add}{boost}
\newcommand{\Winr}{\textsc{Win}}
\newcommand{\Win}{\hyperref[func:win]{\Winr}}
\newcommand{\Lossr}{\textsc{Loss}}
\newcommand{\Loss}{\hyperref[func:loss]{\Lossr}}
\newcommand{\bta}{\beta}
\newcommand{\win}{\textit{win}}
\newcommand{\loss}{\textit{loss}}
\newcommand{\winone}{\ensuremath{\win_{1}}}
\newcommand{\wintwo}{\ensuremath{\win_{2}}}
\newcommand{\lossone}{\ensuremath{\loss_{1}}}
\newcommand{\losstwo}{\ensuremath{\loss_{2}}}
\newcommand{\LocalSearchr}{\textsc{LocalSearch}}
\newcommand{\LocalSearch}{\hyperref[func:ls]{\LocalSearchr}}
\newcommand{\YIn}{\y}
\newcommand{\YOut}{\y^{*}}
\newcommand{\activeattime}{\activesets^{\inp}_{\currenttime}}
\newcommand{\activeattimeprime}{\activesets^{\out}_{\currenttime}}
\newcommand{\ccattime}{\currentsets^{\inp}_{\currenttime}}
\newcommand{\ccattimeprime}{\currentsets^{\out}_{\currenttime}}
\newcommand{\y}{\ensuremath{y}}
\newcommand{\yb}{\ensuremath{\y^{b}}}
\newcommand{\ys}{\ensuremath{\y_S}}
\newcommand{\ysv}{\ensuremath{\gr_{S,v}}}
\newcommand{\yadd}{\ensuremath{\y_{\add}}}
\newcommand{\ybase}{\ensuremath{\y_{\base}}}
\newcommand{\yy}{\ensuremath{y^{\ee}}}
\newcommand{\yyb}{\ensuremath{y^{b\ee}}}
\newcommand{\yys}{\ensuremath{y^{\ee}_S}}
\newcommand{\yplus}{\ensuremath{y^{\eplus}}}
\newcommand{\yp}{\ensuremath{y^{\ep}}}
\newcommand{\yz}{\ensuremath{y^{\ez}}}
\newcommand{\refines}{\preceq}
\newcommand{\Fam}{\mathcal{F}}
\newcommand{\priority}{priority}
\newcommand{\priorityset}{\textsc{Reps}}
\newcommand{\ee}{+}
\newcommand{\eplus}{-}
\newcommand{\ep}{\prime}
\newcommand{\ez}{\prime\prime}
\newcommand{\BaseSet}{\textsc{Base}}
\newcommand{\Deputy}{\textsc{Deputy}}
\newcommand{\tv}[1]{\ensuremath{t_{#1}}}
\newcommand{\tV}{\tv{v}}
\newcommand{\Aa}{\activesets^{\ee}}
\newcommand{\Aplus}{\activesets^{\eplus}}
\newcommand{\Ap}{\activesets^{\ep}}
\newcommand{\Az}{\activesets^{\ez}}
\newcommand{\assignee}{\ensuremath{Assignee}}
\newcommand{\asp}{\assignee^{\ep}}
\newcommand{\gr}{\ensuremath{r}}
\newcommand{\rr}{\ensuremath{r^{\ee}}}
\newcommand{\rplus}{\ensuremath{r^{\eplus}}}
\newcommand{\rstar}{\ensuremath{\hat{r}^{\ee}}}
\newcommand{\rp}{\ensuremath{\hat{r}^{\ep}}}
\newcommand{\rpb}{\ensuremath{\hat{r}^{b\ep}}}
\newcommand{\rpp}{\ensuremath{\hat{r}^{p\ep}}}
\newcommand{\rz}{\ensuremath{\hat{r}^{\ez}}}
\newcommand{\rzb}{\ensuremath{\hat{r}^{b\ez}}}
\newcommand{\rzp}{\ensuremath{\hat{r}^{p\ez}}}
\newcommand{\gt}{\ensuremath{t}}
\newcommand{\ttt}{\ensuremath{t^{\ee}}}
\newcommand{\tplus}{\ensuremath{t^{\eplus}}}
\newcommand{\tp}{\ensuremath{t^{\ep}}}
\newcommand{\tz}{\ensuremath{t^{\ez}}}
\newcommand{\tin}{\ensuremath{t^{\inp}}}
\newcommand{\tout}{\ensuremath{t^{\out}}}
\newcommand{\rmax}{\ensuremath{\rr_{max}}} %
\newcommand{\largest}{\ensuremath{\textit{LG}}} %
\newcommand{\A}{\mathcal{A}}
\newcommand{\B}{\mathcal{B}}
\newcommand{\Bone}{\B_{1}}
\newcommand{\Btwo}{\B_{2}}
\newcommand{\OPT}{\text{OPT}}
\newcommand{\OPTA}{\text{OPT}_{\A}}
\newcommand{\OPTB}{\text{OPT}_{\B}}
\newcommand{\OPTBone}{\text{OPT}_{\Bone}}
\newcommand{\OPTBtwo}{\text{OPT}_{\Btwo}}
\newcommand{\opti}{\ensuremath{\mathcal{C}}}
\newcommand{\optcom}{\ensuremath{\textsc{comp}}}
\newcommand{\Extendr}{\textsc{Extend}}
\newcommand{\Extend}{\hyperref[func:extend]{\Extendr}}
\newcommand{\eps}{\ensuremath{\epsilon}}
\newcommand{\pot}{\ensuremath{\pi}}
\newcommand{\potcap}{\ensuremath{\rho}}
\newcommand{\setpot}{\ensuremath{\mathcal{P}}}
\newcommand{\EGWr}{\textsc{ExtendedMoatGrowing}}
\newcommand{\EGW}{\hyperref[func:egw]{\EGWr}}
\newcommand{\ExtendedGW}{\EGW}
\newcommand{\solext}{\ensuremath{\textit{SOL}_{\textit{XT}}}}
\newcommand{\otree}{\ensuremath{T_{\opti}}}
\newcommand{\xtree}{\ensuremath{T_{\largest(\opti)}}}
\newcommand{\X}{\ensuremath{X}}
\newcommand{\LG}{\ensuremath{\largest(\opti)}}
\newcommand{\NLG}{\ensuremath{\opti\setminus\largest(\opti)}}
\newcommand{\lgr}{\rstar_{\largest}}
\newcommand{\lgrz}{\rz_{\largest}}
\newcommand{\rmaxsum}{\ensuremath{\sum_{\opti\in\B}\rmax(\opti)}}
\newcommand{\XX}{X_{sum}}
\newcommand{\wvar}{\ensuremath{w}}
\newcommand{\wpvar}{\ensuremath{w'}}
\newcommand{\losscoeff}{\zeta}
\newcommand{\rex}{\gr}
\newcommand{\rxone}{\rex^{(1)}}
\newcommand{\rxtwo}{\rex^{(2)}}
\newcommand{\maxacts}{\ensuremath{\mathcal{M}}}
\newcommand{\maxactsp}{\ensuremath{\maxacts'}}
\newcommand{\maxactsz}{\ensuremath{\maxacts''}}
\newcommand{\pair}{\textsc{Pair}}
\newcommand{\pairv}{\ensuremath{\pair_v}}
\newcommand{\pairS}{\ensuremath{\pair(S)}}
\newcommand{\solone}{\ensuremath{\textit{SOL}_{\textit{LS}}}}
\newcommand{\solcnd}{\ensuremath{\textit{SOL}_{\textit{\cnd}}}}
\newcommand{\ALG}{\ensuremath{ALG}}
\newcommand{\cnd}{AP}
\newcommand{\candidater}{\textsc{AutarkicPairs}}
\newcommand{\candidate}{\hyperref[func:candid]{\candidater}}
\newcommand{\CInsExp}{\ensuremath{\G, \pair, \yy, \ccnd}}
\newcommand{\unsatisfiedr}{\textsc{Unsatisfied}}
\newcommand{\unsatisfied}{\hyperref[func:unsats]{\unsatisfiedr}}
\newcommand{\unsatisfiedS}{\ensuremath{\unsatisfied(S)}}
\newcommand{\Gp}{\G'}
\newcommand{\vap}{\ensuremath{\upsilon}}
\newcommand{\maxdis}{\ensuremath{\dis_{\max}}}
\newcommand{\maxdisS}{\ensuremath{\maxdis(S)}}
\newcommand{\vone}{\ensuremath{v_1}}
\newcommand{\vtwo}{\ensuremath{v_2}}
\newcommand{\vthree}{\ensuremath{v_3}}
\newcommand{\sone}{\ensuremath{S_1}}
\newcommand{\stwo}{\ensuremath{S_2}}
\newcommand{\cone}{\ensuremath{\lambda}}
\newcommand{\ctwo}{\ensuremath{\varkappa}}
\newcommand{\cfive}{\ensuremath{\gamma}}
\newcommand{\twoopt}{\ensuremath{g^{\textit{mix}}}}
\newcommand{\ccnd}{\ensuremath{\eta}}
\newcommand{\Y}{\ensuremath{Y}}
\newcommand{\YS}{\ensuremath{\Y_S}}
\newcommand{\YUS}{\ensuremath{\Y_{\unsatisfiedS}}}
\newcommand{\YPS}{\ensuremath{\Y_{\pairS}}}
\newcommand{\Ycnd}{\ensuremath{\widehat{\Y}}}
\newcommand{\dis}{\ensuremath{d}}
\newcommand{\tone}{\ensuremath{\Ycnd_{1}}}
\newcommand{\ttwo}{\ensuremath{\Ycnd_{2}}}
\newcommand{\sactsf}{\ensuremath{\textit{SActS}_F}}
\newcommand{\macts}{\ensuremath{\textit{MActS}}}
\newcommand{\mactsf}{\ensuremath{\textit{MActS}_F}}
\newcommand{\extra}{\ensuremath{\textit{UM}}}
\newcommand{\extralegacy}{\ensuremath{\extra^-}}
\newcommand{\extralegacyotree}{\extralegacy(\otree)}
\newcommand{\extraboosted}{\ensuremath{\extra^+}}
\newcommand{\extraboostedotree}{\extraboosted(\otree)}
\newcommand{\uncolored}{\ensuremath{\textit{UC}}}
\newcommand{\multicolored}{\ensuremath{\textit{MC}}}
\newcommand{\singlecolored}{\ensuremath{\textit{SC}}}
\newcommand{\solp}{\ensuremath{\textit{SOL}'}}
\newcommand{\EAP}{\ensuremath{\E_{\cnd}}}
\newcommand{\alf}{\alpha}
\newcommand{\I}{\ensuremath{I}}
\newcommand{\T}{\ensuremath{T}}
\newcommand{\F}{\ensuremath{F}}
\title{Breaking a Long-Standing Barrier:\\ 2-$\varepsilon$ Approximation for Steiner Forest}
\date{}
\author{
Ali Ahmadi\thanks{University of Maryland.}\\
\texttt{ahmadia@umd.edu}
\and
Iman Gholami\samethanks\\
\texttt{igholami@umd.edu}
\and
MohammadTaghi Hajiaghayi\samethanks\\
\texttt{hajiagha@umd.edu}
\and 
Peyman Jabbarzade\samethanks\\
\texttt{peymanj@umd.edu}
\and
Mohammad Mahdavi\samethanks\\
\texttt{mahdavi@umd.edu}
}
\begin{document}
\maketitle

\pagenumbering{gobble} 
\begin{abstract}
The Steiner Forest problem, also known as the Generalized Steiner Tree problem, is a fundamental optimization problem on edge-weighted graphs where, given a set of vertex pairs, the goal is to select a minimum-cost subgraph such that each pair is connected. 
This problem generalizes the Steiner Tree problem, first introduced in 1811, for which the best known approximation factor is 1.39 [Byrka, Grandoni, Rothvo{\ss}, and Sanità, 2010] (Best Paper award, STOC 2010).

The celebrated work of [Agrawal, Klein, and Ravi, 1989] (30-Year Test-of-Time award, STOC 2023), along with refinements by [Goemans and Williamson, 1992] (SICOMP'95), established a 2-approximation for Steiner Forest over 35 years ago. 
Jain's (FOCS'98) pioneering iterative rounding techniques later extended these results to higher connectivity settings. 
Despite the long-standing importance of this problem, breaking the approximation factor of 2 has remained a major challenge, raising suspicions that achieving a better factor---similar to Vertex Cover---might indeed be hard. Notably, fundamental works, including those by Gupta and Kumar (STOC'15) and Gro{\ss} et al. (ITCS'18), introduced 96- and 69-approximation algorithms, possibly with the hope of paving the way for a breakthrough in achieving a constant-factor approximation below 2 for the Steiner Forest problem. 

In this paper, we break the approximation barrier of 2 by designing a novel deterministic algorithm that achieves a $2 - \appx$ approximation for this fundamental problem.
As a key component of our approach, we also introduce a novel dual-based local search algorithm for the Steiner Tree problem with an approximation guarantee of $\strgh$, which is of independent interest.
\end{abstract}

\newpage
\tableofcontents
\thispagestyle{empty}
\newpage

\pagenumbering{arabic}  %
\setcounter{page}{1}
\section{Introduction}
The Steiner Forest problem, also known as the the Generalized Steiner Tree problem, is a fundamental NP-hard problem in computer science. Given a weighted undirected graph and a set of vertex pairs (demands), the goal is to select a minimum-cost subset of edges that connects each pair.
We can assume each vertex belongs to exactly one demand by duplicating vertices as needed, assigning each demand to a separate copy, and connecting copies with zero-cost edges. 
Additionally, vertices not in any demand can be treated as being paired with themselves.
Formally, given a weighted undirected graph $\G = (\V, \E, \cc)$ with edge costs $\cc : \E \to \mathbb{R}_{\geq 0}$ and an involution function $\pair : \V \to \V$ indicating that each vertex $v$ must be connected to $\pairv$, the objective is to find a minimum-cost subgraph that ensures $v$ and $\pairv$ are connected for all $v \in \V$.
In this work, we break the long-standing approximation barrier of 2 for Steiner Forest, marking a breakthrough after more than 35 years of efforts by the community.

The Steiner Forest problem generalizes the Steiner Tree problem, in which a set of terminals is given and the goal is to connect them using a tree of minimum total edge weight.
In our model, we assume that one terminal is designated as the root, and we create a duplicate of it for each of the remaining terminals, pairing each duplicate with its corresponding terminal.
All other vertices are paired with themselves.

The first studies on the Steiner Tree problem date back to 1811, and it was later recognized as one of Karp’s classic NP-complete problems~\cite{DBLP:conf/coco/Karp72}.  
Moreover, the problem is MAX SNP-hard~\cite{DBLP:journals/ipl/BernP89} and inapproximable within a factor of 96/95 unless P = NP~\cite{DBLP:journals/tcs/ChlebikC08}.  
Compared to the Steiner Tree problem, variants of the Steiner Forest problem appear to be more challenging. 
Examples include the Prize-Collecting Steiner Forest\footnote{While the Prize-Collecting Steiner Tree has had a better-than-2 approximation for a long time~\cite{DBLP:journals/siamcomp/ArcherBHK11,DBLP:conf/stoc/AhmadiGHJM24}, the Prize-Collecting Steiner Forest only recently achieved a 2-approximation~\cite{10.1145/3722551} (JACM'25).} and the even more difficult $k$-Steiner Forest\footnote{The $k$-Steiner Tree problem has long been known to admit a 2-approximation~\cite{DBLP:conf/stoc/Garg05,DBLP:journals/siamcomp/BateniHL18}, while the $k$-Steiner Forest problem is roughly as hard as the densest $k$-subgraph problem~\cite{DBLP:conf/soda/HajiaghayiJ06}, for which the best-known approximation is $O(n^{1/4})$~\cite{DBLP:conf/stoc/BhaskaraCCFV10}.}.  
This contrast highlights the inherent challenges in achieving better approximations for Steiner Forest, suggesting the possibility of a 2-approximation lower bound.

The Steiner Tree problem has been extensively studied in the context of approximation algorithms. 
A trivial 2-approximation algorithm was first proposed~\cite{DBLP:journals/acta/KouMB81}. 
Breaking this natural 2-approximation barrier became a major research focus, with Zelikovsky achieving an $11/6$-approximation~\cite{DBLP:journals/algorithmica/Zelikovsky93}, followed by Karpinski and Zelikovsky’s $1.65$-approximation~\cite{DBLP:journals/eccc/ECCC-TR95-030}. 
Further refinements led to a $1.55$-approximation by Robins and Zelikovsky~\cite{DBLP:journals/siamdm/RobinsZ05} and finally a $1.39$-approximation by Byrka, Grandoni, Rothvo{\ss}, and Sanità~\cite{DBLP:conf/stoc/ByrkaGRS10}.
While these improvements have been achieved for the Steiner Tree problem, determining whether a better-than-2 approximation exists for Steiner Forest remains one of the most prominent open problems in approximation algorithms (see e.g.~\cite{DBLP:books/daglib/0030297}\footnote{See Problem 6 of Shmoys and Williamson~\cite{DBLP:books/daglib/0030297}, among their ten key open problems in approximation algorithms, with Problems 1 and 2 on TSP and Asymmetric TSP. Notably, some of these problems like TSP~\cite{DBLP:conf/stoc/KarlinKG21} (Best Paper award, STOC 2021) and Asymmetric TSP~\cite{DBLP:conf/stoc/SvenssonTV18} (Best Paper award, STOC 2018) have seen progress since the book's 2011 publication.}).

The best-known approximation factor for Steiner Forest remains $2$, achieved in a landmark paper by Agrawal, Klein, and Ravi~\cite{AKRSTOC91,DBLP:journals/siamcomp/AgrawalKR95} using the primal-dual method.
This approach was later revisited as a general framework for constrained forest problems by Goemans and Williamson~\cite{GWSODA92,DBLP:journals/siamcomp/GoemansW95} and applied to other problems in this area, such as the Prize-Collecting Steiner Tree. Jain's  pioneering iterative rounding techniques~\cite{DBLP:conf/focs/Jain98,DBLP:journals/combinatorica/Jain01} later extended these results to higher connectivity settings. 
Despite many attempts, no improvement over the 2-approximation factor has been achieved. The problem is so significant that researchers have published papers in top conferences with much worse approximation factors, hoping to pave the way for a breakthrough below $2$ for the Steiner Forest problem.
Notably, a greedy algorithm achieved a $96$-approximation~\cite{DBLP:conf/stoc/Gupta015}, and a local search algorithm attained a $69$-approximation~\cite{DBLP:conf/innovations/0001G0MS0V18}.
These works were published in the hope that tighter analyses or refinements might eventually help break the barrier of $2$.
For the greedy algorithm with a $96$-approximation, we provide a counterexample in Appendix~\ref{eg:anupam83}, showing that its approximation factor is indeed worse than  $2$.  
Recently, even half-integral solutions to an LP relaxation for Steiner Forest have been studied~\cite{DBLP:journals/corr/abs-2412-06518}.
Needless to say, our approach takes a completely different path from these previous attempts to surpass the $2$-approximation barrier.

In this paper, we present the first deterministic algorithm to break the long-standing approximation barrier of 2 for the Steiner Forest problem.

\begin{theorem} 
\label{thm:main_steiner_forest}
    There exists a deterministic polynomial-time algorithm that achieves an approximation factor of $2 - \appx$ for the Steiner Forest problem. 
\end{theorem}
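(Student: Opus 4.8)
The plan is to analyze the algorithm \main, which combines three routines and returns the cheapest forest among the solutions they produce. First, \main\ runs \BoostedModGW, a variant of the classical primal--dual moat growing in which a carefully chosen collection of moats is grown past the schedule of the legacy procedure \BBG; this produces a feasible forest together with a dual $y$ and a record of the total \emph{boost} $\Boost$ that was injected. Second, \main\ calls \candidate\ (given $\G$, $\pair$, the dual, and a parameter $\ccnd$) to extract a family of \emph{autarkic pairs} --- demands whose cheapest reconnection is confined to a local region and can therefore be re-optimized without disturbing the connectivity of any other demand --- and re-solves each such region as a Steiner Tree instance via \improvedst, whose guarantee is $\strgh$. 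Third, \main\ runs \LocalSearch, which repeatedly swaps the segment of the current forest serving an autarkic pair for a cheaper connector as long as the total cost strictly decreases. Polynomial running time is immediate from the earlier bounds on the number of moats, on the number of autarkic pairs, and on the number of improving \LocalSearch\ steps, each of which decreases a nonnegative, polynomially bounded cost.

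For the approximation guarantee, fix an optimal forest \OPT\ and decompose it into tree components. The classical primal--dual bound $\mathrm{cost} \le 2\sum_S y_S \le 2\OPT$ (with $y$ a feasible dual) already yields a factor of $2$, so it suffices to locate a slack of at least $\appx\cdot\OPT$: either one of these two inequalities has room, or \LocalSearch\ and \candidate\ genuinely improve on the primal--dual forest, the improvement being financed by the boost $\Boost$. I would split the analysis on a threshold on $\Boost/\OPT$.

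\emph{Large-boost regime.} If $\Boost \ge \appx\cdot\OPT$, the boosted moats certify that the associated autarkic pairs each admit a reconnection strictly cheaper than the segment of the primal--dual forest serving them. The \win/\loss\ bookkeeping from the earlier sections bounds the aggregate \loss\ --- collateral cost increase on other demands caused by performing the swaps --- by the aggregate \win\ minus a constant fraction of $\Boost$, so \LocalSearch\ (or the \candidate\ solution) returns a forest of cost at most $2\sum_S y_S - \Omega(\Boost) \le (2-\appx)\OPT$. \emph{Small-boost regime.} If $\Boost < \appx\cdot\OPT$, then \BoostedModGW\ essentially reduces to \BBG, and the near-absence of boost forces a structural dichotomy on \OPT: either the dual $y$ pays for at most a $(1-\Omega(1))$ fraction of $\OPT$, so $2\sum_S y_S \le (2-\appx)\OPT$ directly, or a constant fraction of $\OPT$ is carried by tree components large enough that \improvedst\ rebuilds a connector of cost at most $\strgh$ times their optimum, and one of the moves examined by \LocalSearch\ realizes the resulting $2-\strgh$ saving against the primal--dual forest. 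In both regimes \main\ returns a forest of cost at most $(2-\appx)\OPT$, which proves Theorem~\ref{thm:main_steiner_forest}.

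The crux --- and the step I expect to be the main obstacle --- is the \win/\loss\ accounting in the large-boost regime. Because one swap serving a single autarkic pair can alter the connectivity of many other pairs, bounding the collateral \loss\ by a fraction of the \win\ requires the \candidate\ machinery to guarantee that the reconnection regions are essentially laminar and that each unit of \loss\ can be charged to a disjoint unit of \win\ or of $\Boost$; without such a structural guarantee \LocalSearch\ might cycle or fail to beat $2$. Keeping every loss incurred in this reduction small enough that the final constant stays strictly below $2$ --- a margin recorded here only crudely as $\appx$ --- is the delicate heart of the argument.
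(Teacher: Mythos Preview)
Your proposal misidentifies both the algorithm and the analysis, so it does not constitute a proof.

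\textbf{What the algorithm actually does.} \LocalSearch\ is not a swap-based local search on the forest. It is a search over \emph{fingerprints}: starting from the legacy fingerprint $\tplus$, it repeatedly applies a \emph{boost action} $(v,\tau)$ that raises $t^*_v$ so that $v$ stays active longer in the moat growing, accepting the boost only when the resulting drop in $y_{\base}$ exceeds $(1+\bta)$ times the rise in $y_{\add}$. The output $\solone$ is the forest of the final boosted moat growing. Separately, \Extend\ adds an $\eps$-fraction of each moat's base growth as potential so that vertices of the same optimal component become \emph{actively connected}; \LocalSearch\ is then run again to produce $\solext$. Finally, \candidate\ does not re-solve regions as Steiner Tree instances via \improvedst; it selects pairs $(S,\pair(S))$ with $(1+\ccnd)(Y_S+Y_{\pair(S)})>\maxdis(S)$, adds one shortest path per pair, inserts a zero-cost edge, and reruns \BBG\ on the modified graph to get $\solcnd$. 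The Steiner Tree result (Theorem~\ref{thm:main_steiner_tree}) is a \emph{byproduct} of the claw property, not a subroutine. Your description of ``swaps'', ``reconnection regions'', and ``collateral loss charged to disjoint win'' has no counterpart in the paper.

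\textbf{What the analysis actually does.} There is no case split on $\Boost/\OPT$. The optimal components are classified into $\A$ (where $\sum_{v\in\opti}\rstar_v\le(1-\cone)\cc(\otree)$), $\Bone$ (in $\B$ with $\rmax(\opti)\le\cfive\cc(\otree)$), and $\Btwo$ (in $\B$ with large $\rmax$). The key structural lemma is the \emph{claw property} (Lemma~\ref{lm:xyz}): at a local-search terminus, any three actively connected vertices satisfy $\tau+\tau'\le\tfrac12(d_1+d_2+d_3)+\tfrac{\bta}{2}\min d_i$. This is lifted to arbitrary actively connected sets (Lemma~\ref{lm:steiner-tree}), giving $\sum_v r_v-r_{\max}\le\tfrac{5+\bta}{6}\cc(T_S)$. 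The extension step is needed precisely so this bound applies within each optimal component; without it, vertices of a component need not be actively connected, and your ``small-boost'' dichotomy has no mechanism to recover the $\tfrac{5+\bta}{6}$ factor. The $\Btwo$ components are shown (Lemma~\ref{lm:Btwo_has_candidate}) to contain an autarkic pair, and the $\solcnd$ bound (Lemma~\ref{lm:sol-cnd-one-ub}) exploits that connecting them shaves $\Omega(\sum_{\opti\in\Btwo}\rmax(\opti))$ from the cost. The proof finishes by writing each of the three solution bounds as $\kappa_{i,\A}\cc(\OPTA)+\kappa_{i,\Bone}\cc(\OPTBone)+\kappa_{i,\Btwo}\cc(\OPTBtwo)+\kappa_{i,r}\sum\rmax$ and exhibiting explicit weights $\omega_1,\omega_2,\omega_3$ and parameter values (Table~\ref{tab:parameters}) so that each weighted coefficient is at most $2-2\alf$ (and the $\rmax$ coefficient is nonpositive). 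Your win/loss charging scheme and the laminar-region hope are not how the slack is produced; the actual slack comes from the claw property on $\Bone$, the autarkic-pair shortcut on $\Btwo$, and the $\cone$-gap on $\A$, balanced numerically.
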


Along the way, we introduce a novel dual-based local search algorithm with a $\stapx\approx\strgh$ approximation factor for the Steiner Tree problem, complemented by a lower bound of $1.5$ on its approximation factor. 

\begin{theorem} 
\label{thm:main_steiner_tree}
    There exists a deterministic polynomial-time algorithm that achieves a $(\stapx)$-approximation for the Steiner Tree problem.
\end{theorem}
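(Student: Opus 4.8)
\emph{Proof plan.} The plan is to certify the quality of a Steiner tree against a moat-growing dual and to improve that tree by a local search whose moves are powered by a \emph{boosted} moat-growing process, adapting the dual-based local search paradigm to the rooted Steiner Tree setting. I would work with the standard cut LP for Steiner Tree and its dual, where a feasible dual is a system of non-negative ``moat'' loads $y$ respecting the edge constraints; the Agrawal--Klein--Ravi / Goemans--Williamson moat-growing process already yields, in polynomial time, a Steiner tree $T$ together with a feasible dual of value at least $\tfrac12 c(T)$, which is exactly the factor $2$. The whole point is to show that whenever $T$ cannot be locally improved, one can actually exhibit a (fractional) dual of value at least $\tfrac{c(T)}{\stapx}$; the extra slack beyond $\tfrac12$ is manufactured by allowing a single moat to grow faster ``on credit.''

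The engine is \BoostedModGWr: for a chosen vertex $v$ and threshold $\tau$ it runs ordinary moat growing but lets the moat currently containing $v$ grow at an increased rate until time $\tau$, producing a dual that decomposes as $\ybase+\yadd$, where $\ybase$ is honestly feasible and $\yadd$ is the controlled overpayment. The local-search neighborhood of the current tree $T$ consists of all such choices $(v,\tau)$, each coming with a candidate replacement tree obtained by an \EGWr/\Extendr-type step that reroutes $T$ using the boosted structure. Each move is labeled a \Winr if the candidate tree costs at most $(1-\varepsilon)\,c(T)$ and a \Lossr otherwise; after the standard cost bucketing only polynomially many \Winr moves can occur, so the search reaches a locally optimal $T$ in polynomial time, and everything stays deterministic.

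The heart of the argument is the analysis of a locally optimal $T$. I would fix an optimal tree $\otree$ (in the rooted reduction the demands form a star, so $\otree$ is a single tree), and set up a charging scheme with two parameters $\alf=\stalfval$ and $\bta=\stbtaval$: $\alf$ governs how far a boost is allowed to grow before it must be ``cashed in'' through a \Lossr inequality, and $\bta$ governs how the load of a merged moat is split between its boosted and non-boosted parts. Summing the \Lossr inequalities over a carefully chosen family of boost vertices on $\otree$ and of thresholds, and combining them with the feasibility of $\ybase$ against $\otree$ (which by itself only buys the factor-$2$ bound), one obtains $c(T)\le \stapx\cdot\OPT$; note the arithmetic identity $\stapx=2-2\,\stalfval$, so the argument literally shaves a $2\,\stalfval\approx 0.057$ term off of $2$, and $\alf,\bta$ are precisely the values that optimize the trade-off in the charging inequalities. (A matching instance gives the $1.5$ lower bound for this particular algorithm, which is separate from the upper bound claimed here.)

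The step I expect to be the main obstacle is exactly this combine step: one must prove that \emph{every} locally optimal tree admits enough \Lossr certificates --- i.e.\ that for a suitable family of boost vertices and thresholds the only available moves really are losses --- and that these certificates aggregate without double-counting dual weight, which forces one to track precisely how boosted moats merge with ordinary moats, which moats have gone inactive at the relevant times, and how the \EGWr/\Extendr rerouting interacts with that moat structure. A secondary but necessary technical point is that \BoostedModGWr and the threshold search are polynomial, i.e.\ that only polynomially many ``event'' values of $\tau$ matter (as with the breakpoints in ordinary moat growing) and that every rerouting step keeps the solution a valid Steiner tree.
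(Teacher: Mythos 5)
You have the right overall architecture (moat growing~$\to$ local search over boost actions~$\to$ dual-based charging against the optimal tree), but several load-bearing parts of your sketch do not match how the argument actually has to work, and the key quantitative lemma is missing entirely.

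First, a mechanical issue: in the paper's \LocalSearch{}, a boost $\BoostExp$ is \emph{not} accepted or rejected by comparing tree costs. The quantities $\Winr(\BIns,\Boost)=\ybase-\ybase'$ and $\Lossr(\BIns,\Boost)=\yadd'-\yadd$ are measurements of the dual growth, and the acceptance rule is $\Winr \ge (1+\bta)\Lossr$. The actual cost of the forest can go \emph{up} after a single accepted boost; what is guaranteed to decrease is the total growth $\sum_S y_S$ (Lemma~\ref{lm:valuable-decrease-total}), which in turn bounds the cost via Lemma~\ref{lm:monotonic-forest-twice-growth}. Your scheme of labelling a $(v,\tau)$ move a ``Win'' when the candidate tree drops below $(1-\varepsilon)c(T)$, and then arguing polynomial termination by cost bucketing, is a different (and weaker) local search, and the Figure~\ref{fig:grid} discussion shows exactly why cost-based acceptance can get stuck where the dual-based acceptance does not. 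Also, $\EGWr$/$\Extendr$ play no role in the Steiner Tree algorithm (Algorithm~\ref{alg:steiner-tree}); they only appear in the forest algorithm. The rerouting you invoke is not part of this theorem.

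Second, and more importantly, the ``combine step'' you flag as the main obstacle is precisely where the heart of the proof lies, and your proposal gives no mechanism for it. The paper's instrument is the claw property (Lemma~\ref{lm:xyz}): at a local minimum of the search, for any actively connected triple $u,v,w$ and any vertex $q$,
\[
\sttm + \sttmp \;\leq\; \frac{\dis(q,u)+\dis(q,v)+\dis(q,w)}{2} + \bta\,\frac{\min\bigl(\dis(q,u),\dis(q,v),\dis(q,w)\bigr)}{2},
\]
where $\sttm,\sttmp$ are the first times two, respectively all three, of the vertices coalesce. This quantitative inequality is what you get from ``no boost $(q,\tau)$ is valuable.'' It is then extended to the optimal tree $\otree$ by transforming it into a rooted perfect binary tree, applying the claw inequality at each internal vertex with the nearest leaf in each of its three directions, and summing: the left-hand sides telescope to $3\sum_{v\in V_I}\reptim_v$, which is shown to dominate $\sum_v \rstar_v - \rstar_{\max}$ by an integral-counting argument (Lemma~\ref{lm:st-rsum}), and the right-hand sides collapse to $\tfrac{5+\bta}{2}\,\cc(\otree)$ via the tree estimate $\sum_v \repd_v\leq \cc(\acttree)$ (Lemmas~\ref{lm:st-dsum} and~\ref{lm:timetree}). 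That chain yields $\sum_v \rstar_v - \rstar_{\max}\leq \tfrac{5+\bta}{6}\cc(\OPT)$, which combined with $\cc(\solone)\leq 2(\sum_v\rstar_v - \rstar_{\max} + \lossone)$ and the bound $\lossone\leq \tfrac{\alf}{\bta}\cc(\OPT)$ gives the approximation $\max\bigl(2-2\alf,\ 2(\tfrac{5+\bta}{6}+\tfrac{\alf}{\bta})\bigr)$; optimizing yields $\alf=\stalfval$, $\bta=\stbtaval$ and the ratio $\stapx$. Your proposal correctly observes the arithmetic $\stapx=2-2\alf$, but without the claw inequality and its summation over the binary-tree internal vertices, there is no way to produce the ``loss certificates'' you mention or to aggregate them without double-counting, so the argument as you sketch it does not go through.
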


\paragraph{Other related works.}
A notable problem of similar caliber in combinatorial optimization to the Steiner Forest problem is the Traveling Salesman Problem (TSP). 
Recent advancements, such as the Best Paper at STOC 2021, which presented a randomized $(3/2 - \varepsilon)$-approximation algorithm for metric TSP~\cite{DBLP:conf/stoc/KarlinKG21} (later derandomized in~\cite{DBLP:conf/ipco/KarlinKG23}), highlight the progress in improving approximation algorithms for the classic TSP.
While this algorithm does not break the integrality gap of the natural linear programming (LP) relaxation for TSP~\cite{DBLP:conf/focs/KarlinKG22}, it represents a major step forward in surpassing a long-standing approximation barrier.
In contrast, our approach for the Steiner Forest problem provides a deterministic algorithm that breaks the long-standing $2$-approximation barrier by overcoming the integrality gap of its natural LP relaxation~\cite{DBLP:journals/siamcomp/AgrawalKR95,DBLP:journals/siamcomp/GoemansW95}.

Generalizations of the Steiner Tree and Steiner Forest problems have been extensively studied in the literature.
One notable extension is the Prize-Collecting Steiner Tree problem, which initially admitted a $2$-approximation algorithm~\cite{DBLP:journals/siamcomp/GoemansW95}, later improved to $1.96$~\cite{DBLP:journals/siamcomp/ArcherBHK11}, and more recently to a $1.79$-approximation~\cite{DBLP:conf/stoc/AhmadiGHJM24}.
In contrast, the Prize-Collecting Steiner Forest problem saw a breakthrough only recently, with the first $2$-approximation algorithm~\cite{10.1145/3722551} (JACM'25), following numerous $3$-approximation algorithms~\cite{DBLP:conf/soda/HajiaghayiJ06,DBLP:conf/soda/GuptaKLRS07,DBLP:conf/latin/HajiaghayiN10}. 
See~\cite{BateniH12,HajiaghayiKKN12,SharmaSW07} for additional work on special graph classes and generalizations of the Prize-Collecting Steiner Forest problem.

Similarly, the $k$-MST problem has undergone a long sequence of improvements~\cite{DBLP:conf/soda/RaviSMRR94,DBLP:conf/stoc/AwerbuchABV95,DBLP:journals/jcss/BlumRV99,DBLP:conf/focs/Garg96,DBLP:journals/ipl/AryaR98,DBLP:journals/mp/AroraK06} to eventually achieve a $2$-approximation algorithm~\cite{DBLP:conf/stoc/Garg05}. 
The $k$-Steiner Tree problem has been reduced to $k$-MST, benefiting from these advancements~\cite{DBLP:journals/siamcomp/BateniHL18}.
However, the $k$-Steiner Forest problem has been shown to be roughly as hard as the densest $k$-subgraph problem~\cite{DBLP:conf/soda/HajiaghayiJ06}, for which the current best approximation factor is $O(n^{1/4})$~\cite{DBLP:conf/stoc/BhaskaraCCFV10}.  

Finally, while no prior work has improved upon the $2$-approximation factor for general Steiner Forest instances, better approximations have been achieved for special graph classes.  
For example, a PTAS is known for planar graphs and, more generally, for graphs with bounded genus~\cite{DBLP:journals/jacm/BateniHM11}.  
Additionally, the Euclidean Steiner Forest problem also admits a PTAS~\cite{DBLP:journals/talg/BorradaileKM15}.  

\section{Algorithms and Intuition}
In this section, we present the intuition behind our algorithm, along with a more detailed explanation of its components and the techniques outlined in Section~\ref{sec:overview_techniques}. 

To begin, we introduce the concept of \emph{moat growing}, originally defined by Jünger and Pulleyblank in 1991~\cite{DBLP:journals/algorithmica/JungerP95}.  
We then describe the \emph{Legacy Moat Growing} algorithm, which is equivalent to the primal-dual algorithms of~\cite{DBLP:journals/siamcomp/AgrawalKR95,DBLP:journals/siamcomp/GoemansW95}.  
Moat growing algorithms maintain a forest, marking a subset of its connected components as \emph{active sets} and expanding these sets uniformly over time.  
As the active sets grow, they \emph{color} the uncolored portions of their adjacent edges. Once an edge is fully colored, it is added to the forest.
Since the cost of the forest produced by a moat growing algorithm is upper bounded by twice the total growth of active sets, we introduce the notion of \emph{assignments}, which distribute this growth among the vertices responsible for it.  
This allows us to bound the assigned values rather than the total growth, providing an upper bound on the solution cost.
In this work, we leverage the versatility of moat growing and coloring.  
See Section~\ref{sec:mod-goemans} for further details on moat growing algorithms, particularly the \BBG{} procedure.

Starting with Legacy Moat Growing as the base algorithm, we introduce a novel \emph{local search} method that iteratively attempts to increase the active duration of an active set.
This operation, called a \emph{boost action}, is applied only when increasing the growth of an active set significantly reduces the total growth across all active sets, thereby leading to a smaller upper bound on the solution cost.
Figure~\ref{fig:wheel} illustrates how a boost action can reduce the total growth of active sets, leading to a better solution.
For a detailed explanation of the \LocalSearch{} procedure and its properties, see Section~\ref{sec:local_search}.

The local search terminates when no boost action can further reduce the total growth of active sets.
At this point, we derive key structural properties of the resulting moat growing algorithm.
Most importantly, for any subset of \emph{actively connected} vertices—those that remain in active sets until reaching one another—their total assigned value, excluding the \emph{maximum assigned value}, is at most a fraction (less than one) of the cost of any tree spanning them.
We will define the maximum assigned value more formally later; for now, it can be viewed as roughly the time when these vertices become connected.
See Section~\ref{sec:steiner_tree} for details of this property.

This property leads to a $\strgh$-approximation algorithm for the Steiner Tree problem, where all terminals are actively connected and the optimal solution spans them. 
While this bound may not be tight—the best lower bound we found is $1.5$ (see Appendix~\ref{eg:binary15})—we did not pursue a tighter analysis, as the current guarantee already suffices for our main goal of improving the Steiner Forest approximation.
The algorithm satisfying Theorem~\ref{thm:main_steiner_tree} is presented in Algorithm~\ref{alg:steiner-tree}, with its proof provided in Section~\ref{apx:steiner_tree}.

\begin{algorithm}[H]
  \caption{Steiner Tree Algorithm}
  \label{alg:steiner-tree}
  \hspace*{\algorithmicindent} \textbf{Input:} A graph $\G=(\V, \E, \cc)$, with edge costs $\cc: \E \rightarrow \mathbb{R}_{\ge 0}$, a demand function $\pair : \V \to \V$, corresponds to a Steiner Tree instance, and algorithm parameter $0 < \bta < 1$.\\
  \hspace*{\algorithmicindent} \textbf{Output:} A tree $T$ satisfying \pair~demands.
  \begin{algorithmic}[1]
    \Procedure{\streer}{$\G, \pair, \bta$}
      \label{func:stree}
      \State $F, \tplus \gets \BBG(G, \pair)$
      \State $\solone, \ttt \gets \LocalSearch(G, \tplus, \bta)$    
      \State \Return $\solone$
    \EndProcedure
  \end{algorithmic}
\end{algorithm}

\begin{figure}[t]
    \centering
    \begin{subfigure}{0.25\textwidth}
        \centering
        \begin{tikzpicture}[scale=0.7]
            \draw[White] (0,-3) -- (0, 3);
            \def\dem{Red!70}
            \def\ter{White}
            \def\col{Black!70}
            \def\noder{0.1cm}
            \foreach \i in {0,1,2,3} {
                \coordinate (P\i) at (\i*90:2);
            }
        
            \coordinate (C) at (0, 0);

            \foreach \i in {0,1,2,3} {
                \pgfmathtruncatemacro{\j}{mod(\i+1,4)} %
                \coordinate (C\i) at ({(\i*90+45)}:3); %
                \draw[\col] (P\i) .. controls (C\i) .. (P\j);
                \draw[\col] (P\i) -- (C);
            }
            \node[\col, above left=-2pt] at (C) {$v$};
            \node[\col] at (C0) {$2$};
            \draw[\col] (P0) -- node[above]{$1+\xi$} (C);
            
            \foreach \i in {0,1,2,3} {
                \draw[\col, fill = \col] (P\i) circle (\noder);
            }
            \draw[\col, fill= \ter] (C) circle (\noder);
            
        \end{tikzpicture}
        \caption{}
        \label{fig:wheel-input}
    \end{subfigure}
    \begin{subfigure}{0.4\textwidth}
        \centering
        \begin{tikzpicture}[scale=0.7]
            \draw[White] (0,-3) -- (0, 3);
            \def\dem{Red!70}
            \def\ter{White}
            \def\col{Black!70}
            \def\noder{0.1cm}
            \foreach \i in {0,1,2,3} {
                \coordinate (P\i) at (\i*90:2);
            }
        
            \coordinate (C) at (0, 0);

            \foreach \i in {0,1,2,3} {
                \pgfmathtruncatemacro{\j}{mod(\i+1,4)} %
                \coordinate (C\i) at ({(\i*90+45)}:3); %
                \draw[\dem, line width=2pt] (P\i) .. controls (C\i) .. (P\j) coordinate[pos=0.5] (M\i);
                \draw[\col] (P\i) -- (C);
                \draw[\dem, line width=2pt] (P\i) -- ($ (P\i)!0.91!(C)$);
            }
            
            \foreach \i in {0,1,2,3} {
                \draw[\col, fill = \col] (P\i) circle (\noder);
            }
            \draw[\col, fill = \ter] (C) circle (\noder);
            
            \draw[\dem, line width=0.3pt, dash pattern=on 1.2pt off 0.4pt] ($(P0) + (180:1.85)$) to[] (M0)
            to[out=45, in=-45] (M3)
            to[] ($(P0) + (180:1.85)$);

            \draw[\dem, line width=0.3pt, dash pattern=on 1.2pt off 0.4pt] ($(P1) + (-90:1.85)$) to[] (M0)
            to[out=45, in=135] (M1)
            to[] ($(P1) + (-90:1.85)$);

            \draw[\dem, line width=0.3pt, dash pattern=on 1.2pt off 0.4pt] ($(P2) + (0:1.85)$) to[] (M1)
            to[out=135, in=-135] (M2)
            to[] ($(P2) + (0:1.85)$);

            \draw[\dem, line width=0.3pt, dash pattern=on 1.2pt off 0.4pt] ($(P3) + (90:1.85)$) to[] (M2)
            to[out=-135, in=-45] (M3)
            to[] ($(P3) + (90:1.85)$);

        \end{tikzpicture}
        \caption{}
        \label{fig:wheel-2}
    \end{subfigure}
    \begin{subfigure}{0.25\textwidth}
        \centering
        \begin{tikzpicture}[scale=0.7]
            \draw[White] (0,-3) -- (0, 3);
            \def\dem{Blue!70}
            \def\ter{White}
            \def\col{Black!70}
            \def\noder{0.1cm}
            \foreach \i in {0,1,2,3} {
                \coordinate (P\i) at (\i*90:2);
            }
        
            \coordinate (C) at (0, 0);

            \foreach \i in {0,1,2,3} {
                \pgfmathtruncatemacro{\j}{mod(\i+1,4)} %
                \coordinate (C\i) at ({(\i*90+45)}:3); %
                \draw[\col] (P\i) .. controls (C\i) .. (P\j) coordinate[pos=0.19] (M\i) coordinate[pos=0.81] (N\i);
                \draw[\dem, line width=2pt] (P\i) -- (C);
                \draw[\dem, line width=2pt] (P\i) -- ($ (P\i)!0.93!(C)$);
            }
            \foreach \i in {0,1,2,3} {
                \pgfmathtruncatemacro{\j}{mod(\i+1,4)}
                \draw[\dem, line width=2pt] (P\i) -- (M\i);
                \draw[\dem, line width=2pt] (P\j) -- (N\i);
            }
            \foreach \i in {0,1,2,3} {
                \draw[\col, fill = \col] (P\i) circle (\noder);
            }
            \draw[\col, fill = \ter] (C) circle (\noder);

            \foreach \i in {C, P0, P1, P2, P3} {
            \draw[\dem, line width=0.3pt, dash pattern=on 1.2pt off 0.4pt] (\i) circle (1);
            }
        \end{tikzpicture}
        \caption{}
        \label{fig:wheel-our}
    \end{subfigure}
    \caption{A Steiner Tree instance. (a) Initial configuration with four terminal vertices and a central vertex $v$, where $\xi > 0$ is sufficiently small. (b) After Legacy Moat Growing, moats centered on terminals expand, fully coloring the outer edges, resulting in a total growth of 4 and a solution cost of 8. (c) A boost action on $v$ connects moats earlier via central edges, reducing the total growth to $\frac{5}{2}(1+\xi)$ and the solution cost to $4(1+\xi)$.}
    \label{fig:wheel}
\end{figure}
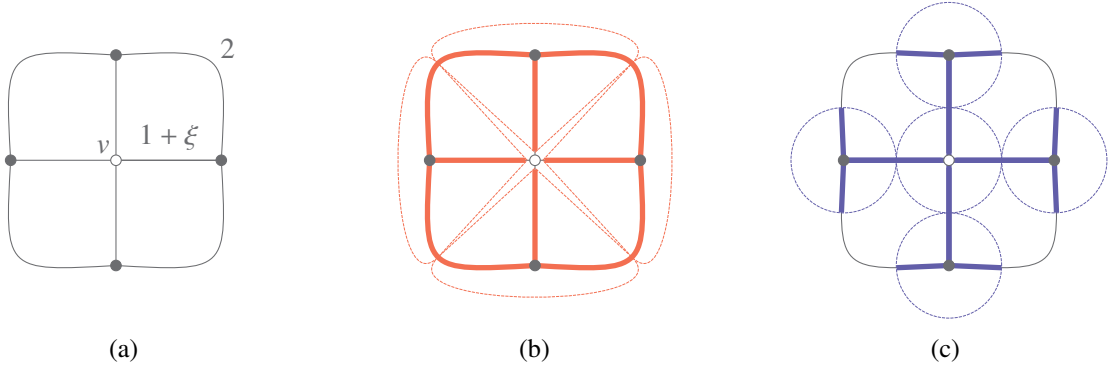 
Adapting our approach to the Steiner Forest problem introduces new challenges.  
Unlike in the Steiner Tree case, the vertices of each connected component in the optimal solution may not be actively connected in our algorithm, as some pairs can be satisfied earlier than others.  
This prevents us from directly applying the same bound developed for Steiner Tree.  
To address this, we introduce an \emph{extension} step and rerun the local search algorithm afterward.  
In the extension step, each active set receives a potential equal to $\eps$ times its growth time, allowing it to continue expanding using this potential before deactivation.  
This ensures, at a small cost, that the vertices of each connected component in the optimal solution become actively connected, enabling us to apply the same bounding technique as in the Steiner Tree case.  
If some components remain not actively connected after the extension, we provide an alternative upper bound to complete the analysis.  
For details on the extension step, including the \Extend{} procedure and its analysis, see Section~\ref{sec:extension}.

Finally, as mentioned earlier, the bounding argument for Steiner Tree does not account for the maximum assigned value within each connected component of the optimal solution.
This is not an issue if the connected components in our solution match those of the optimal solution.
However, problems arise when they do not align, and the maximum assigned value for a component becomes disproportionately large compared to the total assigned value of its vertices.
In such cases, vertices in those components often split into two groups that form quickly but take a long time to connect.
This leads to a large maximum assigned value, meaning these vertices remain active and continue growing for a long time relative to the cost of their tree in the optimal solution.
As a result, many edges are colored—edges that may later be used by pairs in other components—leading to inefficiencies.
See Figure~\ref{fig:ten-norm} for an example illustrating how these large values create challenges, preventing our local search and extension methods from achieving a better-than-2 approximation.

To address this, we introduce a novel technique called \emph{autarkic
pairs}, which refers to pairs of vertex subsets predicted to form the problematic structure described above.
These subsets are selected based on the observation that vertices in one subset correspond to pairs of vertices in the other.
While vertices within each subset connect quickly, it takes too long for the two groups to connect in our algorithm.
From each autarkic pair, we select one vertex pair, add the shortest path between them to the solution, and directly connect them with a zero-cost edge.
This ensures that all vertices in the autarkic pair use this shortcut to reach their counterparts, preventing moats from growing excessively and reducing the maximum assigned value for those vertices.
Once autarkic pairs are connected, we proceed with Legacy Moat Growing algorithm on the modified graph with these added links to satisfy all demands.
Figure~\ref{fig:ten} illustrates how this process works.
For details on how \candidate{} works and the proofs of its guarantees, see Section~\ref{sec:candidate}.

\tikzset{
  triangle/.style args={#1}{
    insert path={
        ++(90:1.2*#1)
      -- ++($(90:-1.2*#1)+(210:1.2*#1)$)
      -- ++($(210:-1.2*#1)+(330:1.2*#1)$)
      -- cycle %
    }
  }
}
\tikzset{
  pentagon/.style args={#1}{
    insert path={
      ++(90:1.05*#1)
      -- ++($(90:-1.05*#1)+(162:1.05*#1)$)
      -- ++($(162:-1.05*#1)+(234:1.05*#1)$)
      -- ++($(234:-1.05*#1)+(306:1.05*#1)$)
      -- ++($(306:-1.05*#1)+(18:1.05*#1)$)
      -- cycle
    }
  }
}
\tikzset{
  square/.style args={#1}{
    insert path={
        ++(45:1.1*#1)
      -- ++($(45:-1.1*#1)+(135:1.1*#1)$)
      -- ++($(135:-1.1*#1)+(225:1.1*#1)$)
      -- ++($(225:-1.1*#1)+(315:1.1*#1)$)
      -- cycle %
    }
  }
}
\tikzset{
  star/.style args={#1}{
    insert path={
      ++(90:1.1*#1)
      \foreach \a in {90,162,234,306,378} {
        -- ++($(\a:-1.1*#1) + (\a+36:0.5*#1)$)
        -- ++($(\a+36:-0.5*#1) + (\a+72:1.1*#1)$)
      }
      -- cycle
    }
  }
}
\begin{figure}[t]
    \centering
    \begin{subfigure}{0.22\textwidth}
    \centering
\begin{tikzpicture}[scale=0.7]

\draw[White] (0,-1.1) -- (0, 7.1);
\def\dem{Red!70}
\def\ter{White}
\def\col{Black!70}
\def\noder{0.1cm}

\def\n{3}
\pgfmathtruncatemacro{\top}{\n+1}

\def\tercl#1{%
  \ifcase#1
    White\or
    Green!20\or
    Sepia!20\or
    Plum!20%
  \fi
}

\def\k{1}

\def\angle{35}
\def\smallRotate{40}

\def\petalLength{0.5}
\def\forkLength{\petalLength/3}
\def\heightStep{1.5}
\def\widthStep{1.5}

\coordinate (A0) at (0,0);
\coordinate (A\top) at (0,\heightStep*\top);
\draw[\col] (A0) to[out=0, in=0] node[name=MID]{} (A\top);
\node[\col, right=-2pt] at (MID) {\small{2}};

\foreach \i in {1,...,\n} {
    \coordinate (B\i) at (-\widthStep,\heightStep*\i);
    \coordinate (A\i) at (0,\heightStep*\i);

    \draw[\col] (A\i) -- node[above=-2pt, name=L\i]{\small{1}} (B\i);
}
\foreach \i in {0,...,\n} {
    \pgfmathtruncatemacro{\nxt}{\i+1};
    \draw[\col] (A\i) -- node[name=LLL\i,left=-2pt]{\small{1}} (A\nxt);
}

\foreach \i in {1,...,\n} {
    \foreach \S/\d in {A\i/1, B\i/-1} {
        \foreach \j in {-\k,...,\k} {
            \def\currentAngle{\j*\angle}
            \coordinate (\S\j) at ($(\S) + (\currentAngle:\d*\petalLength)$);
            
            \draw[\col] (\S) -- node[name=LL\S\j]{} (\S\j);
        }
    }
    \node[\col, above=0pt] at (LLB\i-1){\small{$\xi$}};
    \node[\col, above=0pt] at (LLA\i1){\small{$\xi$}};
}
\node[\col, below=0pt] at (A0) {\small{$u$}};
\node[\col, above=0pt] at (A\top) {\small{$v$}};
\draw[\col, fill= \col] (A0) [triangle=\noder];
\draw[\col, fill= \col] (A\top) [triangle=\noder];
\foreach \i in {1,...,\n} {
    \foreach \S/\d in {A\i/1, B\i/-1} {
        \draw[\col, fill= \ter] (\S) circle(\noder); 
        \foreach \j in {-\k,...,\k} {
        }
    }
}
\foreach \S/\d in {A/1, B/-1} {
    \foreach \j in {-\k,...,\k} {
        \draw[\col, fill= \col] (\S1\j) [square=\noder];
        \draw[\col, fill= \col] (\S2\j) [star=\noder];
        \draw[\col, fill= \col] (\S3\j) [pentagon=\noder];
    }
}

\end{tikzpicture}
    \caption{}
    \label{fig:ten-inp}
    \end{subfigure}
    \hfill
    \begin{subfigure}{0.22\textwidth}
    \centering
\begin{tikzpicture}[scale=0.7]
\draw[White] (0,-1.1) -- (0, 7.1);
\def\dem{Red!70}
\def\ter{White}
\def\col{Black!70}
\def\noder{0.1cm}

\def\n{3}
\pgfmathtruncatemacro{\top}{\n+1}

\def\tercl#1{%
  \ifcase#1
    White\or
    Green!20\or
    Sepia!20\or
    Plum!20%
  \fi
}

\def\k{1}

\def\angle{35}
\def\smallRotate{40}

\def\petalLength{0.5}
\def\forkLength{\petalLength/3}
\def\heightStep{1.5}
\def\widthStep{1.5}

\coordinate (A0) at (0,0);
\coordinate (A\top) at (0,\heightStep*\top);
\draw[\col] (A0) to[out=0, in=0] (A\top);

\draw[\col, postaction={decorate, decoration={
    markings,%
    mark=between positions 0 and .096 step 0.1mm with {\draw[\dem, line width=1.5pt] (0,0) -- (0.2mm, 0);},
    mark=between positions 0.9 and 0.99 step 0.1mm with {\draw[\dem, line width=1.5pt] (0,0) -- (0.2mm, 0);}
}}] (A0) to[out=0, in=0] (A\top);

\foreach \i in {1,...,\n} {
    \coordinate (B\i) at (-\widthStep,\heightStep*\i);
    \coordinate (A\i) at (0,\heightStep*\i);

    \draw[\dem, line width=1.5pt] (A\i) -- (B\i);
}
\foreach \i in {0,...,\n} {
    \pgfmathtruncatemacro{\nxt}{\i+1};
    \draw[\dem, line width=1.5pt] (A\i) -- (A\nxt);
}

\draw[\dem, line width=0.3pt, dash pattern=on 1.2pt off 0.4pt] (A0) ellipse[x radius=\widthStep/2, y radius=\heightStep/2];
\draw[\dem, line width=0.3pt, dash pattern=on 1.2pt off 0.4pt] (A\top) ellipse[x radius=\widthStep/2, y radius=\heightStep/2];

\foreach \i in {1,...,\n} {
    \foreach \S/\d in {A\i/1, B\i/-1} {
        \draw[\dem, line width=0.3pt, dash pattern=on 1.2pt off 0.4pt] (\S) ellipse[x radius=\widthStep/2, y radius=\heightStep/2];
        \foreach \j in {-\k,...,\k} {
            \def\currentAngle{\j*\angle}
            \coordinate (\S\j) at ($(\S) + (\currentAngle:\d*\petalLength)$);
            
            \draw[\dem, line width=1.5pt] (\S) -- (\S\j);
        
            \draw [\dem, line width=0.2pt, dash pattern=on 1pt off 0.2pt] plot [smooth cycle] coordinates {
                ($(\S\j) + (\currentAngle+\smallRotate:\d*\forkLength)$)
                (\S)
                ($(\S\j) + (\currentAngle-\smallRotate:\d*\forkLength)$)
            };
        }
    }
}
\draw[\col, fill= \col] (A0) [triangle=\noder];
\draw[\col, fill= \col] (A\top) [triangle=\noder];
\foreach \i in {1,...,\n} {
    \foreach \S/\d in {A\i/1, B\i/-1} {
        \draw[\col, fill= \ter] (\S) circle(\noder); 
        \foreach \j in {-\k,...,\k} {
        }
    }
}
\foreach \S/\d in {A/1, B/-1} {
    \foreach \j in {-\k,...,\k} {
        \draw[\col, fill= \col] (\S1\j) [square=\noder];
        \draw[\col, fill= \col] (\S2\j) [star=\noder];
        \draw[\col, fill= \col] (\S3\j) [pentagon=\noder];
    }
}
\end{tikzpicture}
    \caption{}
    \label{fig:ten-norm}
    \end{subfigure}
    \hfill
    \begin{subfigure}{0.22\textwidth}
    \centering
\begin{tikzpicture}[scale=0.7]
\draw[White] (0,-1.1) -- (0, 7.1);
\def\dem{Red!70}
\def\ter{White}
\def\col{Black!70}
\def\noder{0.1cm}

\def\n{3}
\pgfmathtruncatemacro{\top}{\n+1}

\def\tercl#1{%
  \ifcase#1
    White\or
    Green!20\or
    Sepia!20\or
    Plum!20%
  \fi
}

\def\k{1}

\def\angle{35}
\def\smallRotate{40}

\def\petalLength{0.5}
\def\forkLength{\petalLength/3}
\def\heightStep{1.5}
\def\widthStep{1.5}

\coordinate (A0) at (0,0);
\coordinate (A\top) at (0,\heightStep*\top);
\draw[\col] (A0) to[out=0, in=0] (A\top);

\foreach \i in {1,...,\n} {
    \coordinate (B\i) at (-\widthStep,\heightStep*\i);
    \coordinate (A\i) at (0,\heightStep*\i);

    \draw[\col] (A\i) -- (B\i);
}
\foreach \i in {0,...,\n} {
    \pgfmathtruncatemacro{\nxt}{\i+1};
    \draw[\col] (A\i) -- (A\nxt);
}

\foreach \i in {1,...,\n} {
    \foreach \S/\d in {A\i/1, B\i/-1} {
        \foreach \j in {-\k,...,\k} {
            \def\currentAngle{\j*\angle}
            \coordinate (\S\j) at ($(\S) + (\currentAngle:\d*\petalLength)$);
            
            \draw[\col] (\S) -- (\S\j);
        }
    }
    \draw[\col, dash pattern=on 3pt off 0.5pt] (A\i-\k) to[out=-150, in=-30] node[below, name=M\i]{} (B\i\k);
}
\node[\col, below=-5pt] at (M1) {\small{0}};

\draw[\col, fill= \col] (A0) [triangle=\noder];
\draw[\col, fill= \col] (A\top) [triangle=\noder];
\foreach \i in {1,...,\n} {
    \foreach \S/\d in {A\i/1, B\i/-1} {
        \draw[\col, fill= \ter] (\S) circle(\noder); 
        \foreach \j in {-\k,...,\k} {
        }
    }
}
\foreach \S/\d in {A/1, B/-1} {
    \foreach \j in {-\k,...,\k} {
        \draw[\col, fill= \col] (\S1\j) [square=\noder];
        \draw[\col, fill= \col] (\S2\j) [star=\noder];
        \draw[\col, fill= \col] (\S3\j) [pentagon=\noder];
    }
}

\end{tikzpicture}
    \caption{}
    \label{fig:ten-zero-edge}
    \end{subfigure}
    \hfill
        \begin{subfigure}{0.22\textwidth}
        \centering
\begin{tikzpicture}[scale=0.7]
\draw[White] (0,-1.1) -- (0, 7.1);
\def\dem{Blue!70}
\def\ter{White}
\def\col{Black!70}
\def\noder{0.1cm}

\def\n{3}
\pgfmathtruncatemacro{\top}{\n+1}

\def\tercl#1{%
  \ifcase#1
    White\or
    Green!20\or
    Sepia!20\or
    Plum!20%
  \fi
}

\def\k{1}

\def\angle{35}
\def\smallRotate{40}

\def\petalLength{0.5}
\def\shortLength{0.3}
\def\forkLength{\petalLength/3}
\def\heightStep{1.5}
\def\widthStep{1.5}

\coordinate (A0) at (0,0);
\coordinate (A\top) at (0,\heightStep*\top);
\draw[\dem, line width=1.5pt] (A0) to[out=0, in=0] node[name=MID]{} (A\top);

\foreach \i in {1,...,\n} {
    \coordinate (B\i) at (-\widthStep,\heightStep*\i);
    \coordinate (A\i) at (0,\heightStep*\i);

    \draw[\col] (A\i) -- (B\i);
}
\foreach \i in {0,...,\n} {
    \pgfmathtruncatemacro{\nxt}{\i+1};
    \draw[\col] (A\i) -- (A\nxt);
}
\draw[\dem, line width=1.5pt] (A0) -- (A1);
\draw[\dem, line width=1.5pt] (A\n) -- (A\top);

\tikzset{
    center arc/.style args={#1:#2:#3}{
        insert path={+ (#1:#3) arc (#1:#1+#2:#3)}
    }
}

\draw[\dem, line width=0.3pt, dash pattern=on 1.2pt off 0.4pt] (A0) [center arc=-30:-165:2*\heightStep/3];
\draw[\dem, line width=0.3pt, dash pattern=on 1.2pt off 0.4pt] ($(A0) + (90:\heightStep/3)$) [center arc=-165:-110:2*\heightStep/3];
\draw[\dem, line width=0.3pt, dash pattern=on 1.2pt off 0.4pt] (A1) arc(-180:-263:2*\heightStep/3);
\draw[\dem, line width=0.3pt, dash pattern=on 1.2pt off 0.4pt] (MID) 
arc(-270:-209:2*\heightStep/3);
\draw[\dem, line width=0.3pt, dash pattern=on 1.2pt off 0.4pt] (MID) 
arc(-270:-400:2*\heightStep/3)
arc(-0:-60:2*\heightStep/3)
arc(-0:-100:2*\heightStep/3);

\draw[\dem, line width=0.3pt, dash pattern=on 1.2pt off 0.4pt] ($(A\top)$) [center arc=30:165:2*\heightStep/3];
\draw[\dem, line width=0.3pt, dash pattern=on 1.2pt off 0.4pt] ($(A\top) + (-90:\heightStep/3)$) [center arc=165:110:2*\heightStep/3];
\draw[\dem, line width=0.3pt, dash pattern=on 1.2pt off 0.4pt] (A\n) arc(180:263:2*\heightStep/3);
\draw[\dem, line width=0.3pt, dash pattern=on 1.2pt off 0.4pt] (MID) 
arc(270:209:2*\heightStep/3);
\draw[\dem, line width=0.3pt, dash pattern=on 1.2pt off 0.4pt] (MID) 
arc(270:400:2*\heightStep/3)
arc(0:60:2*\heightStep/3)
arc(0:100:2*\heightStep/3);

\foreach \i in {1,...,\n} {
    \foreach \S/\d in {A\i/1, B\i/-1} {
        \foreach \j in {-\k,...,\k} {
            \def\currentAngle{\j*\angle}
            \coordinate (\S\j) at ($(\S) + (\currentAngle:\d*\petalLength)$);
            
            \draw[\dem, line width=1.5pt] (\S) -- (\S\j);
        
            \draw [\dem, line width=0.2pt, dash pattern=on 1pt off 0.2pt] plot [smooth cycle] coordinates {
                ($(\S\j) + (\currentAngle+\smallRotate:\d*\forkLength)$)
                (\S)
                ($(\S\j) + (\currentAngle-\smallRotate:\d*\forkLength)$)
            };
        }
    }
    \draw[\dem, line width=1.5pt, dash pattern=on 3pt off 0.5pt] (A\i-\k) to[out=-150, in=-30] node[below, name=M\i]{} (B\i\k);
}
\draw[\col, fill= \col] (A0) [triangle=\noder];
\draw[\col, fill= \col] (A\top) [triangle=\noder];
\foreach \i in {1,...,\n} {
    \foreach \S/\d in {A\i/1, B\i/-1} {
        \draw[\col, fill= \ter] (\S) circle(\noder); 
        \foreach \j in {-\k,...,\k} {
        }
    }
}
\foreach \S/\d in {A/1, B/-1} {
    \foreach \j in {-\k,...,\k} {
        \draw[\col, fill= \col] (\S1\j) [square=\noder];
        \draw[\col, fill= \col] (\S2\j) [star=\noder];
        \draw[\col, fill= \col] (\S3\j) [pentagon=\noder];
    }
}
\end{tikzpicture}
    \caption{}
    \label{fig:ten-our}
    \end{subfigure}
    \caption{A Steiner Forest instance.
    (a) Initial configuration with $n$ rows (here, $n = 3$), and small $\xi \in (0, 1/n]$.
    Each row contains three demand pairs, and vertices $v$ and $u$ also form a required pair.
    A vertical path through the middle rows costs $n + 1$ and connects $v$ and $u$, while an edge of cost 2 offers an alternative.
    (b) After Legacy Moat Growing, vertices in each row quickly form two groups, but require significant growth to connect, resulting in coloring the vertical path.
    The total cost is $2n + 1 + O(1)$.
    (c) We detect vertices in each row as autarkic pairs and directly connect one vertex pair per row using their shortest paths, assuming a zero-cost edge connects them from this point on.
    (d) Running Legacy Moat Growing on the modified graph, the row vertices connect quickly via the new links, avoiding further growth and preventing coloring of the vertical path.
    Vertices $u$ and $v$ now connect via the edge of cost 2, and the total cost becomes $n + 2 + O(1)$.}
    \label{fig:ten}
\end{figure} 
The idea behind autarkic pairs is that, for any minimization problem with a known $\chi$-approximation algorithm, the approximation can be improved by identifying a structure whose addition allows part of the optimal solution to be removed without violating feasibility. If the cost of this structure is less than $\chi$ times the cost of the removed part, and the improvement accounts for a constant fraction of the optimal cost, applying the $\chi$-approximation algorithm to the remainder yields a better approximation. For Steiner Forest, we design autarkic pairs based on this natural idea and show that when the maximum assigned value for most optimal components is large enough, the resulting approximation is strictly better than 2.

\begin{algorithm}[b]
  \caption{Main Algorithm}
  \label{alg:main}
  \hspace*{\algorithmicindent} \textbf{Input:} A graph $\G = (\V, \E, \cc)$ with edge costs $\cc : \E \to \mathbb{R}_{\ge 0}$, a demand function $\pair : \V \to \V$, and algorithm parameters $0 < \bta, \eps, \ccnd < 1$.\\
  \hspace*{\algorithmicindent} \textbf{Output:} A forest $F$ satisfying \pair~demands.
  \begin{algorithmic}[1]
    \Procedure{\main}{$\G, \pair, \bta, \eps, \ccnd$}
      \State $F, \tplus \gets \BBG(\G, \pair)$
      \label{exe:base}
      \State $\solone, \ttt, \yyb \gets \LocalSearch(\G, \tplus, \bta)$
      
      \label{exe:localsearch}

      \State $\tp \gets \Extend(\G, \tplus, \ttt, \yyb,\eps)$ 
      \label{exe:egw}
      \State $\solext, \tz, \y^{b\prime\prime} \gets \LocalSearch(\G, \tp, \bta)$
      \label{exe:Xlocalsearch}
      
      \State $\solcnd \gets \candidate(\G, \pair, \yyb, \ccnd)$
      \label{exe:candid}
      
      \State \Return Best among \solone, \solcnd, 
      and \solext
      \label{line:best-sol}
    \EndProcedure
  \end{algorithmic}
\end{algorithm}

These approaches lead to a $(2 - 2\alf)$-approximation algorithm for the Steiner Forest problem, where $\alf \ge \frac{\alfbound}{2}$.
The algorithm satisfying Theorem~\ref{thm:main_steiner_forest} is presented in Algorithm~\ref{alg:main}.
It has three parameters $0 < \bta, \eps, \ccnd < 1$, each used by different modules of the algorithm.
The parameter values that achieve the desired approximation guarantee, along with the proof of the theorem, are provided in Section~\ref{sec:final}.

\subsection{Overview of Techniques}
\label{sec:overview_techniques}
In this section, we provide a comprehensive overview of the techniques that form the foundation of our algorithm.
We begin by discussing the classic 2-approximation algorithms introduced by~\cite{DBLP:journals/siamcomp/AgrawalKR95} and revisited by~\cite{DBLP:journals/siamcomp/GoemansW95}, which serve as the basis for our work.

\paragraph{Legacy Moat Growing.}
The Legacy Moat Growing algorithm starts with an empty forest $F$ and maintains its connected components, designating some as active sets.
An active set is any component that contains at least one \emph{unsatisfied} vertex, meaning a vertex that is not yet connected to its pair.

During the main phase of the algorithm, active sets grow uniformly, coloring their adjacent edges—those with exactly one endpoint in the set—at a constant rate.
Edges are viewed as curves with lengths equal to their costs.
Growth continues until an edge is fully colored, at which point it is added to $F$, merging the components at its endpoints.
Only edges between distinct components are colored and added to $F$.
Newly formed components become active or inactive depending on whether they contain any unsatisfied vertices.

The algorithm terminates when no active sets remain, ensuring all demand pairs are connected in the forest $F$.
Finally, in a pruning phase, any edge of $F$ that is the only edge crossing a set of vertices which was inactive at some point during the algorithm is removed, yielding the final solution.
An illustration of the algorithm is provided in Figure~\ref{fig:moats}, and a more detailed explanation appears in Section~\ref{subsec:def-legacy}.

\tikzset{
  triangle/.style args={#1}{
    insert path={
        ++(90:1.2*#1)
      -- ++($(90:-1.2*#1)+(210:1.2*#1)$)
      -- ++($(210:-1.2*#1)+(330:1.2*#1)$)
      -- cycle %
    }
  }
}
\tikzset{
  pentagon/.style args={#1}{
    insert path={
      ++(90:1.05*#1)
      -- ++($(90:-1.05*#1)+(162:1.05*#1)$)
      -- ++($(162:-1.05*#1)+(234:1.05*#1)$)
      -- ++($(234:-1.05*#1)+(306:1.05*#1)$)
      -- ++($(306:-1.05*#1)+(18:1.05*#1)$)
      -- cycle
    }
  }
}
\tikzset{
  square/.style args={#1}{
    insert path={
        ++(45:1.1*#1)
      -- ++($(45:-1.1*#1)+(135:1.1*#1)$)
      -- ++($(135:-1.1*#1)+(225:1.1*#1)$)
      -- ++($(225:-1.1*#1)+(315:1.1*#1)$)
      -- cycle %
    }
  }
}
\tikzset{
  star/.style args={#1}{
    insert path={
      ++(90:1.1*#1)
      \foreach \a in {90,162,234,306,378} {
        -- ++($(\a:-1.1*#1) + (\a+36:0.5*#1)$)
        -- ++($(\a+36:-0.5*#1) + (\a+72:1.1*#1)$)
      }
      -- cycle
    }
  }
}
\begin{figure}[t]
    \centering
    \begin{subfigure}{0.32\textwidth}
        \centering
        \begin{tikzpicture}[scale=0.7]

\draw[White] (0,-2.9) -- (0, 1.4);
\def\dem{Red!70}
\def\demf{Red!20}
\def\demi{Blue!70}
\def\demif{Blue!20}
\def\demii{Green!70}
\def\demiif{Green!20}
\def\ter{White}
\def\col{Black!70}
\def\treeone{Purple}
\def\treetwo{RubineRed!70!Black}
\def\noder{0.1cm}

\def\tercl#1{%
  \ifcase#1
    White\or
    Goldenrod!20\or
    olive!20\or
    Green!20\or
    Sepia!20\or
    Plum!20\or
  \fi
}

\def\len{4}

\coordinate (A) at (0, 0);
\coordinate (B) at ($(A) + (0:\len/1.5)$);
\coordinate (C) at ($(A) +(-70:\len/3)$);
\coordinate (D) at ($(C) +(-45:\len/4)$);

\foreach \i in {A, B, C, D} {
    \draw[\dem, 
    line width=0.3pt, 
    dash pattern=on 1.2pt off 0.4pt,
    pattern={
        Lines[angle=45, distance=2pt,  line width=0.3pt]%
    },
    pattern color=\demf
    ] (\i) circle (\len/8);
}
\draw[\col] (B) -- (A) -- (C) -- (D);
\draw[\dem, line width=1.5pt] (B) -- ($(B)!0.5cm!(A)$);
\draw[\dem, line width=1.5pt] (A) -- ($(A)!0.5cm!(B)$);
\draw[\dem, line width=1.5pt] (A) -- ($(A)!0.5cm!(C)$);
\draw[\dem, line width=1.5pt] (C) -- ($(C)!0.5cm!(A)$);
\draw[\dem, line width=1.5pt] (C) -- (D);
\foreach \i/\c in {A/star, B/star, C/square, D/square} {
    \draw[\col, fill= \col] (\i) [\c=\noder];
}
    \end{tikzpicture}
        \caption{}
        \label{fig:maots-first}
    \end{subfigure}
    \hfill
\begin{subfigure}{0.32\textwidth}
        \centering
        \begin{tikzpicture}[scale=0.7]

\draw[White] (0,-2.9) -- (0, 1.4);
\def\dem{Red!70}
\def\demf{Red!20}
\def\demi{Blue!70}
\def\demif{Blue!20}
\def\demii{Green!70}
\def\demiif{Green!20}
\def\ter{White}
\def\col{Black!70}
\def\treeone{Purple}
\def\treetwo{RubineRed!70!Black}
\def\noder{0.1cm}

\def\tercl#1{%
  \ifcase#1
    White\or
    Goldenrod!20\or
    olive!20\or
    Green!20\or
    Sepia!20\or
    Plum!20\or
  \fi
}

\def\len{4}

\tikzset{
    center arc/.style args={#1:#2:#3}{
        insert path={+ (#1:#3) arc (#1:#1+#2:#3)}
    }
}

\coordinate (A) at (0, 0);
\coordinate (B) at ($(A) + (0:\len/1.5)$);
\coordinate (C) at ($(A) +(-70:\len/3)$);
\coordinate (D) at ($(C) +(-45:\len/4)$);

\foreach \i in {A, B} {
    \draw[
    \demi, 
    line width=0.3pt, 
    dash pattern=on 1.2pt off 0.4pt,
    pattern={
        Lines[angle=45, distance=2pt,  line width=0.3pt]%
    },
    pattern color=\demif
    ] (\i) circle (\len/3 - \len/8);
}
\draw[\demi, 
    line width=0.3pt, 
    dash pattern=on 1.2pt off 0.4pt]
    (A);
\foreach \i in {A, B, C, D} {
    \draw[\dem, 
    line width=0.3pt, 
    dash pattern=on 1.2pt off 0.4pt,
    pattern={
        Lines[angle=45, distance=2pt,  line width=0.3pt]%
    },
    pattern color=\demf
    ] (\i) circle (\len/8);
}

\draw[\col] (B) -- (A) -- (C) -- (D);
\draw[\dem, line width=1.5pt] (B) -- ($(B)!0.5cm!(A)$);
\draw[\demi, line width=1.5pt] ($(B)!0.5cm!(A)$) -- ($(B)!0.833cm!(A)$);
\draw[\dem, line width=1.5pt] (A) -- ($(A)!0.5cm!(B)$);
\draw[\demi, line width=1.5pt] ($(A)!0.5cm!(B)$) -- ($(A)!0.833cm!(B)$);
\draw[\demi, line width=1.5pt] ($(A)!0.5cm!(C)$) -- ($(A)!0.833cm!(C)$);
\draw[\dem, line width=1.5pt] (A) -- ($(A)!0.5cm!(C)$);
\draw[\dem, line width=1.5pt] (C) -- ($(C)!0.5cm!(A)$);
\draw[\dem, line width=1.5pt] (C) -- (D);
\foreach \i/\c in {A/star, B/star, C/square, D/square} {
    \draw[\col, fill= \col] (\i) [\c=\noder];
}

    \end{tikzpicture}
     \caption{}
        \label{fig:maots-second}
    \end{subfigure}
    \hfill
\begin{subfigure}{0.32\textwidth}
        \centering
        \begin{tikzpicture}[scale=0.7]

\draw[White] (0,-2.9) -- (0, 1.4);
\def\dem{Red!70}
\def\demf{Red!20}
\def\demi{Blue!70}
\def\demif{Blue!20}
\def\demii{Green!70}
\def\demiif{Green!20}
\def\ter{White}
\def\col{Black!70}
\def\treeone{Purple}
\def\treetwo{RubineRed!70!Black}
\def\noder{0.1cm}

\def\tercl#1{%
  \ifcase#1
    White\or
    Goldenrod!20\or
    olive!20\or
    Green!20\or
    Sepia!20\or
    Plum!20\or
  \fi
}

\def\len{4}

\tikzset{
    center arc/.style args={#1:#2:#3}{
        insert path={+ (#1:#3) arc (#1:#1+#2:#3)}
    }
}

\coordinate (A) at (0, 0);
\coordinate (B) at ($(A) + (0:\len/1.5)$);
\coordinate (C) at ($(A) +(-70:\len/3)$);
\coordinate (D) at ($(C) +(-45:\len/4)$);

\foreach \i in {C, D} {
    \fill[
    \demii, 
    line width=0.1pt, 
    pattern={
        Lines[angle=45, distance=2pt,  line width=0.3pt]%
    },
    pattern color=\demiif
    ] (\i) circle (\len/4);
}
\foreach \i in {A, B} {
    \fill[
    \demii, 
    line width=0.1pt, 
    pattern={
        Lines[angle=45, distance=2pt,  line width=0.3pt]%
    },
    pattern color=\demiif
    ] (\i) circle (\len/3);
}
\draw[\demii, line width=0.3pt, dash pattern=on 1.2pt off 0.4pt] (B) circle(\len/3);
\draw[\demii, line width=0.3pt, dash pattern=on 1.2pt off 0.4pt] (A) [center arc=-26:272:\len/3];
\draw[\demii, line width=0.3pt, dash pattern=on 1.2pt off 0.4pt] (C) [center arc=15:27:\len/4];
\draw[\demii, line width=0.3pt, dash pattern=on 1.2pt off 0.4pt] (C) [center arc=179:76:\len/4];
\draw[\demii, line width=0.3pt, dash pattern=on 1.2pt off 0.4pt] (D) [center arc=195:240:\len/4];
\foreach \i in {A, B} {
    \draw[White, fill=White] (\i) circle (\len/3 - \len/8);
}
\foreach \i in {C, D} {
    \draw[White, fill=White] (\i) circle (\len/8);
}
\foreach \i in {A, B} {
    \draw[
    \demi, 
    line width=0.3pt, 
    dash pattern=on 1.2pt off 0.4pt,
    pattern={
        Lines[angle=45, distance=2pt,  line width=0.3pt]%
    },
    pattern color=\demif
    ] (\i) circle (\len/3 - \len/8);
}
\draw[\demi, 
    line width=0.3pt, 
    dash pattern=on 1.2pt off 0.4pt]
    (A);
\foreach \i in {A, B, C, D} {
    \draw[\dem, 
    line width=0.3pt, 
    dash pattern=on 1.2pt off 0.4pt,
    pattern={
        Lines[angle=45, distance=2pt,  line width=0.3pt]%
    },
    pattern color=\demf
    ] (\i) circle (\len/8);
}

\draw[\col] (B) -- (A) -- (C) -- (D);
\draw[\dem, line width=1.5pt] (B) -- ($(B)!0.5cm!(A)$);
\draw[\demi, line width=1.5pt] ($(B)!0.5cm!(A)$) -- ($(B)!0.833cm!(A)$);
\draw[\demii, line width=1.5pt] ($(B)!0.833cm!(A)$) -- ($(A)!0.833cm!(B)$);
\draw[\dem, line width=1.5pt] (A) -- ($(A)!0.5cm!(B)$);
\draw[\demi, line width=1.5pt] ($(A)!0.5cm!(B)$) -- ($(A)!0.833cm!(B)$);
\draw[\demi, line width=1.5pt] ($(A)!0.5cm!(C)$) -- ($(A)!0.833cm!(C)$);
\draw[\dem, line width=1.5pt] (A) -- ($(A)!0.5cm!(C)$);
\draw[\dem, line width=1.5pt] (C) -- ($(C)!0.5cm!(A)$);
\draw[\dem, line width=1.5pt] (C) -- (D);
\foreach \i/\c in {A/star, B/star, C/square, D/square} {
    \draw[\col, fill= \col] (\i) [\c=\noder];
}
    \end{tikzpicture}
    \caption{}
        \label{fig:maots-third}
    \end{subfigure}
    \caption{Here, there are two demand pairs: star pairs and square pairs.
    (a) Initially, all vertices form active sets and grow until the square pairs reach each other.
    (b) Once the connected component containing squares becomes inactive, the remaining connected components continue to grow until one of them reaches the square component.
    (c) At this point, there are two connected components that are both active and grow until they meet, causing the edge between the stars to become fully colored.
    Note that while all edges are added to $\F$ at this moment, the edge between a star and a square will be removed during the pruning phase since there was an inactive set containing squares that cut only this edge.}
    \label{fig:moats}
\end{figure}
 
This algorithm achieves a 2-approximation because each portion of every edge is colored at most once, and each active set colors part of at least one edge from the optimal solution—specifically, from the paths connecting its unsatisfied vertices to their respective pairs.
Moreover, the forest $F$ is fully colored, with active sets coloring, on average, at most two edges in the final $F$.
Thus, the cost of the solution is at most twice that of the optimal solution.
Next, we take a closer look at the lower bound for the optimal solution and the upper bound for our solution.

As mentioned, the lower bound on the cost of the optimal solution is based on each active set coloring at least one edge of the optimal solution.
However, if active sets color at least two such edges, a higher lower bound is achieved, leading to a better approximation factor.
This idea was introduced by Ahmadi, Gholami, Hajiaghayi, Jabbarzade, and Mahdavi in their work on the Prize-Collecting Steiner Forest~\cite{10.1145/3722551} and Prize-Collecting Steiner Tree~\cite{DBLP:conf/stoc/AhmadiGHJM24}.
They classify active sets into \emph{single-edge sets} and \emph{multi-edge sets} based on how many edges they color from the optimal solution.
If most of the active sets are multi-edge sets, meaning they color more than one edge of the optimal solution, then the optimal cost must be significantly larger than the total growth of active sets, implying that the moat growing algorithm’s solution could be better than a 2-approximation.
Conversely, if most active sets are single-edge sets, this offers insight into the structure of the optimal solution and its relationship to active set growth.

We now highlight and make use of the fact that active sets color, on average, two edges in the solution. This holds because, at any moment during the algorithm, if we contract the final forest $F$ by its connected components at that time, all leaves in the resulting forest correspond to active sets.
Consequently, the average degree of active sets—that is, the number of edges in the final $F$ being colored at that moment—is at most 2.  
To maintain this property, we modify the growth rule so that the invariant of leaves being active always holds, ensuring the solution remains at most twice the total growth of active sets.
A key condition that preserves this structure is that an inactive set never becomes active unless it merges with another active set to form a larger component.  
Furthermore, by allowing vertex sets to remain active longer than in Legacy Moat Growing, connectivity between vertices increases, resulting in valid solutions.
Understanding the power of moat growing algorithms allows us to explore modifications that reduce the total growth of active sets.  
Since the solution cost is at most twice this value, such reductions lead to better upper bounds on the final cost.

\paragraph{Shadow Moat Growing.}
In Legacy Moat Growing algorithm, active sets are determined by unsatisfied demands, making it difficult to modify their behavior independently.
To address this, we introduce a modified version called Shadow Moat Growing.
This approach uses a \emph{fingerprint} from a moat growing algorithm, assigning a time value to each vertex to ensure that active sets containing those vertices remain active until that time.
Given this fingerprint, Shadow Moat Growing reruns a moat growing algorithm while preserving the intended active set behavior.
It also allows increasing the time value of any vertex, enabling certain active sets to remain active longer and adapt their behavior accordingly.
Details of this approach are provided in Section~\ref{subsec:def-shadow}.

\paragraph{Assignment.}
Since our solution is at most twice the total growth of active sets, our goal is to effectively bound this growth.
To do so, we define assignments, which distribute the growth of active sets across the vertices responsible for it.
For example, in Legacy Moat Growing algorithm, each active component is active due to specific unsatisfied vertices, and we assign the corresponding growth to those vertices.
However, modifying active set behavior complicates this process.
Our algorithm involves multiple executions of moat growing, each using a different method of assignment.
These assignments associate a value with each vertex such that the total assigned value captures the majority of the growth of active sets, allowing us to bound the solution cost by roughly twice this total.
We therefore focus on bounding the assigned values.

To define assignments, we prioritize vertices according to Legacy Moat Growing execution, giving higher priority to vertices that are satisfied later.
For each active set, we select the highest-priority vertex from each connected component of the optimal solution and decide how to distribute the growth among them.
The distribution method varies: in some assignments, the growth is partitioned; in others, it is fully allocated to each selected vertex.
Since these assignments are used solely for analysis and are not part of the actual algorithm, we assume access to the optimal solution for defining them.
The general definition of assignments and the one corresponding to Legacy Moat Growing are provided in Sections~\ref{subsec:monotonic_preliminary} and~\ref{subsec:legacy_execution}, with additional assignments introduced later for other moat growing variants.

Now that we have demonstrated the flexibility of the moat growing algorithm and introduced the concept of assignment, we turn to improving the approximation factor for the Steiner Forest problem.  
While no algorithm is currently known to achieve this, significant progress has been made on a well-known special case: the Steiner Tree problem.  
The 2-approximation algorithm for Steiner Forest naturally extends to Steiner Tree.  
Additionally, Steiner Tree admits a trivial 2-approximation via the minimum spanning tree on the metric closure of the terminals.  
Over the years, several algorithms have improved the approximation factor for this problem, many of which are based on local search~\cite{DBLP:journals/algorithmica/Zelikovsky93,DBLP:journals/jco/KarpinskiZ97,DBLP:journals/siamdm/RobinsZ05,DBLP:conf/soda/TraubZ22}.

\paragraph{Local Search.}  
Local search algorithms for Steiner Tree typically begin with an initial solution, often the 2-approximation, and iteratively add a constant-size subgraph.  
Consequently, some edges can be removed as long as all demands remain satisfied, with the goal of improving the solution.

However, despite being well studied, this approach has not led to improvements for Steiner Forest.  
We revisit this paradigm with a broader perspective.  
Traditional local search relies on adding a small subgraph to improve the solution, but such subgraphs may not exist in Steiner Forest instances.  
More precisely, the optimal solution may consist of many connected components, and improving the solution could require breaking these components and regrouping vertices differently (see Figure~\ref{fig:ten}).  
In Steiner Tree, knowing which vertices appear in the optimal solution is sufficient to reconstruct it.  
This gives hope that adding a small subset of vertices from the optimal solution could lead to improvement.  
In contrast, Steiner Forest requires knowing not only which vertices to include but also how they are grouped into components, making local improvements substantially more complex.  
The only known local search approach for Steiner Forest achieves a 69-approximation~\cite{DBLP:conf/innovations/0001G0MS0V18}.

Here, we propose a novel local search algorithm for Steiner Tree.  
Our approach is more general than previous ones, yielding a solution better than the 2-approximation for Steiner Tree and introducing structural properties that help improve the approximation factor for Steiner Forest.  
It builds on a deep understanding of moat growing algorithms, specifically why their solution cost is at most twice the total growth of active sets and how to preserve this relationship while modifying growth behavior.

Unlike traditional local search, our algorithm aims to improve the upper bound on the solution rather than the cost itself.  
We define a \emph{boost action}, which increases the fingerprint of a vertex, allowing its active set to remain active longer.  
In each iteration, we select a boost action that reduces the total growth of all active sets (see Figure~\ref{fig:wheel}).  
We define the \emph{win} of a boost action as the reduction in total growth of existing active sets, and the \emph{loss} as the additional growth introduced by the action.  
We apply a boost only when its win exceeds its loss by a factor of $1 + \bta$, for some $\bta \in (0, 1)$, calling it a \emph{valuable boost action}.  
Each valuable boost reduces the total growth of active sets, and since the solution cost is at most twice this total, the bound improves accordingly.  
If the total growth becomes small enough, we obtain a good solution.  
Details of the local search algorithm are provided in Section~\ref{sec:local_search}.

Notably, the cost of the solution may increase in a single iteration, but the upper bound always decreases.  
Over multiple iterations, the cost typically improves as well.  
An example in Appendix~\ref{eg:grid} demonstrates how repeated boosts can eventually yield the optimal solution, even if the first step does not reduce the cost.  
This illustrates the generality of our approach: unlike classical local search, which focuses on small local changes, our method can reshape the entire solution structure, even if it temporarily worsens.  
The key objective is to reduce the total growth of active sets, which upper bounds the solution cost, rather than minimizing the cost directly.  
This enables strategic decisions that ultimately lead to a good approximation.

\paragraph{Steiner Tree.}  
The development of a general local search technique provides several useful properties for the resulting moat growing algorithm.  
In particular, we observe structural features in moat growing algorithms where no valuable boost action can be applied.  
These properties hold for the algorithm produced by our local search.  
We now present a key result for such algorithms, which allows us to prove an approximation factor strictly better than 2 for the Steiner Tree problem.

The main result, which we call the \emph{claw property}, states that when no valuable boost action is available, the cost of any tree connecting three actively connected vertices can be lower bounded by a factor greater than one times the sum of their assigned values, excluding the maximum.
Here, actively connected means that the vertices remain in active sets until they all lie in the same connected component.
We formalize this property in Section~\ref{sec:claw_property} and extend it to larger sets of actively connected vertices in Section~\ref{sec:claw_property_extension}.
We then apply this result to the Steiner Tree problem in Section~\ref{apx:steiner_tree}, proving Theorem~\ref{thm:main_steiner_tree}.
Note that the claw property is also used in our autarkic pair method, which we explain later.

To extend the claw property to more vertices, we assume the tree connecting them is a binary tree with the relevant vertices placed as leaves.  
Any tree can be transformed into such a binary tree by duplicating vertices and adding zero-cost edges.  
We analyze the structure by selecting a vertex as a center point and fixing one leaf in each of its three directions.  
Applying the claw property at this center bounds the sum of the assigned values of the selected leaves, excluding the maximum, by their total distance to the center.  
This follows from the assumption that boosting the center point is not valuable.

By applying this process to each vertex of the tree and selecting the closest leaf in each direction, we generate a set of inequalities.  
Summing the left-hand sides gives a bound on the total assigned value of the vertices, excluding the maximum.  
To make this summation meaningful, we show that at any time $\tau$ during the moat growing algorithm, if $k$ active sets contain at least one of the relevant vertices (meaning at most $k$ assigned values exceed $\tau$), then there are $3(k - 1)$ center points whose inequalities contribute at that moment.  
We also show that the total of the right-hand sides is at most a constant factor times the cost of the tree.

Summing all inequalities, we find that the total assigned value to these vertices, minus the maximum, is at most $\frac{5 + \bta}{6}$ times the cost of the tree, where $\bta$ is the constant from our local search.  
Since all terminals in the Steiner Tree problem are actively connected in Legacy Moat Growing, and therefore also in the moat growing algorithm produced by our local search, we can apply this result to the terminals using the optimal solution as the connecting tree.  
This implies that our algorithm achieves a strictly better-than-2 approximation for Steiner Tree.

However, this analysis does not suffice to improve the approximation factor for Steiner Forest.  
The main issue is that we do not bound the maximum assigned value for each connected component of the optimal solution.  
This is not a problem when the connected components in our solution match those of the optimal solution, since a refined analysis shows that the cost of each component is bounded by twice the total assigned value minus the maximum.  
However, if our solution connects multiple components of the optimal solution, we need a bound on the maximum assigned value within each of those components (see Figure~\ref{fig:ten}).

Another challenge is that the claw property requires all vertices in a component to be actively connected.  
In Steiner Forest, demands may not enforce this, and some vertices may be satisfied early and become inactive.  
To apply the claw property, we must first ensure that most vertices in each optimal component remain actively connected.  
We begin by addressing this issue and then return to the problem of bounding the maximum assigned value for each component of the optimal solution.

\paragraph{Extension.}  
To address the issue of vertices within the same connected component of the optimal solution not being actively connected, we extend active sets by allocating an $\eps$-fraction of their growth time as potential. This potential allows active sets, or their supersets, to continue growing before deactivation, ensuring that more vertices become actively connected.  
More precisely, based on the growth of an active set before extension, we allocate $\eps$ times this growth as potential to one of its vertices that remained active from the beginning until the set was formed.  
We then run a new moat growing algorithm in which a connected component is considered active at time $\tau$ under two conditions. The first is that it contains a vertex that was active at time $\tau$ in the moat growing algorithm before the extension. If this condition is not met, the component remains active as long as it has remaining potential, which it consumes until exhausted.  
Whenever two active sets merge, their potentials are combined and transferred to the newly formed set.

After this extension step, which increases connectivity and merges some previously separate components, we hope that most vertices within each connected component of the optimal solution become actively connected.  
We then execute the local search to apply the same bounding technique used in the Steiner Tree case to the vertices of each optimal component that are actively connected.

On the other hand, if many vertices within a connected component remain not actively connected after the extension, we derive a stronger lower bound on the cost of the optimal solution.  
Since these vertices are not actively connected, those that become inactive earlier must, during their active sets' growth, color at least one edge of the optimal solution, as there must be a path connecting them to other vertices in the same component.  
Moreover, when these active sets grow solely due to potential, their vertices are already satisfied and remain active only because of the extension. 
In this case, they cannot color only one edge of the optimal solution, as such an edge would be redundant and contradict optimality.  
Therefore, these active sets must color at least two edges of the optimal solution and are classified as multi-edge sets.  
This provides a stronger lower bound than the previous assumption, where each active set was charged for coloring only one edge.  
With this improved lower bound, we can show that our solution is strictly better than twice the cost of the optimal solution.
More details on the algorithm and analysis of our extension approach are provided in Section~\ref{sec:extension}.

Note that while both the extension and the local search tend to increase the growth of active sets, they do so in different ways.
The local search focuses on incremental improvement by extending the active duration of a single vertex at a time, with the amount of increase varying across steps.
In contrast, the extension applies a small, uniform increase to the activity period of all active sets, relative to their previous growth.

To the best of our knowledge, the only prior work using a similar idea is by Bateni, Hajiaghayi, and Moharammi~\cite{DBLP:journals/jacm/BateniHM11}, who developed a PTAS for the Steiner Forest problem.  
They introduced a procedure called PC-Clustering, which, after constructing a solution, assigns each component a potential equal to $1/\eps$ times its cost and lets components grow and color adjacent edges in a separate phase.  
Like us, their goal is to connect vertices that belong to the same component of the optimal solution.
However, our approach differs in two key ways. 
First, in PC-Clustering, all components grow after the main algorithm has terminated, whereas we allow active sets to consume their potential immediately after deactivation, regardless of the state of other components. This ensures that the relevant vertices become actively connected, not just connected.  
Second, our potential is $\eps$ times the growth of each active set, a small value close to zero, while their potential is $1/\eps$ times the cost of the component’s tree, which is large.  
As a result, they rely on unconnected vertices being far apart after PC-Clustering, while we only need to ensure that a small amount of additional growth cannot connect such vertices. We also leverage the fact that this small growth results in coloring multiple edges of the optimal solution, leading to a stronger lower bound.

Next, we address the issue of bounding the maximum assigned value for each connected component of the optimal solution.

\paragraph{Autarkic Pairs.}  
As mentioned earlier, vertices in different connected components of the optimal solution can end up connected in our solution.
This creates a challenge, as we need to bound the maximum assigned value of vertices within each connected component of the optimal solution.
However, the bound derived from local search and used for the Steiner Tree case does not guarantee such a property.
If the total assigned value of the vertices in a component of the optimal solution (including the maximum assigned value) is a constant factor smaller than the cost of its tree, then that component already has low assigned value and can be ignored.
Likewise, if the maximum assigned value is small compared to the component’s cost, its contribution is negligible, and the lack of a bound does not matter.
The real difficulty arises when the total assigned value of a component is close to the cost of its optimal tree, and the maximum assigned value is large.

To address this, we try to identify such components despite not knowing the structure of the optimal solution.
We do this by selecting groups of vertices whose growth times are largely aligned, meaning that the majority of growth from their active sets comes from active sets whose unsatisfied vertices are exactly those vertices, and their pairs behave symmetrically.
We refer to these groups of vertices and their associated pairs as autarkic pairs.
We show that any connected component of the optimal solution whose total assigned value nearly matches its tree cost and has a large maximum assigned value must contain such autarkic pairs.

Typically, these components consist of two groups of vertices, each of which connects internally very quickly during the moat growing process, while connecting the two groups takes significantly longer.
This delay results in large assigned values for one vertex from each group.
Note that if a component contains three actively connected vertices with large assigned values, we can apply the claw property to conclude that the cost of connecting them in the optimal solution must be large, contradicting the assumption that the assigned value closely matches the tree cost.
This ensures that, in the moat growing algorithm produced by local search, such components must evolve as two early-formed groups that connect later.
Moreover, if the active sets corresponding to these groups contain unsatisfied vertices from other components of the optimal solution, they must be multi-edge sets, as they color multiple components.
This leads to a higher lower bound for the optimal solution, and similarly, the total assigned value no longer matches the tree cost.
Therefore, we may assume that these two groups mostly grow together and do not include vertices from other components of the optimal solution, which results in them being selected as autarkic pairs.
See Section~\ref{sec:candidate_properties} for further details.

After identifying autarkic pairs, we select the shortest path between an arbitrary pair of vertices from each pair and add it to our solution.
We then insert a zero-cost edge between the endpoints of these paths and run Legacy Moat Growing again on the modified graph.
In this new run, autarkic pairs no longer need to grow significantly to reach each other, as they are already connected.
This ensures that the maximum assigned value within each connected component remains small relative to the total assigned value, effectively resolving the issue.
See Figure~\ref{fig:ten} for further intuition and Section~\ref{sec:candidate} for a full description and analysis.

To analyze the solution obtained through this approach, we focus on active sets whose unsatisfied vertices belong to a group of an autarkic pair.  
If a constant fraction of the total growth of these active sets comes from multi-edge sets, then similar to the extension case, this case is already handled.  
Otherwise, if these active sets are single-edge sets, meaning they color exactly one edge of the optimal solution, we can safely remove those edges from the optimal solution.  
Since we have already added connecting paths between autarkic pairs, removing these edges does not violate any demands.  
This gives a valid solution for the remaining instance, with cost significantly smaller than the original optimal solution.  
As Legacy Moat Growing yields a 2-approximate solution for the new instance, removing part of the optimal solution improves the upper bound on our solution by approximately twice the cost of the removed edges.
While we do add some edges—the paths between autarkic pairs—their total cost is nearly equal to the cost of the removed edges.
Since the gain is roughly twice the cost and the added cost is only equal to it, the overall solution cost is reduced.
See Section~\ref{sec:candidate_analysis} for more details.

It is worth noting that the autarkic pair method, like our local search, is completely novel.  
However, it follows a natural idea for improving approximation guarantees in minimization problems.  
The idea is to find a structure such that, when selected and added to any solution, and its cost set to zero, the optimal solution improves by more than $1/\chi$ of its value, where $\chi$ is the approximation factor of the base algorithm.  
If the cost of the added structure is less than $\chi$ times the saved cost, then combining the $\chi$-approximate solution on the new instance with the added structure gives a better-than-$\chi$ approximation for the original instance.  
To achieve a constant-factor improvement, this saved cost must be a constant fraction of the optimal solution.  
For Steiner Forest, we design autarkic pairs to realize this idea in a problem-specific way.  
While this method does not always guarantee a constant-factor improvement, the extension and local search procedures complement it and handle the remaining cases.

\section{Preliminaries}
\label{sec:prelim}

\paragraph{Graph Notation.}  
We denote the cost of an edge $e$ by $\cc_e$. For convenience, we extend this to a cost function $\cc: 2^{\E} \to \mathbb{R}_{\geq 0}$ such that for any subset of edges $P \subseteq \E$, we define $\cc(P) = \sum_{e \in P} \cc_e$. This allows us to compute the cost of trees, forests, and other edge sets more easily. 

For any $S \subseteq \V$, we denote the adjacent edges of $S$ by $\delta(S)$, which is the set of edges with exactly one endpoint in $S$, i.e., the set of edges between $S$ and $\V \setminus S$ in $G$.  
Given sets $S, S' \subseteq \V$, we say that $S$ \emph{cuts} $S'$, denoted $S \odot S'$, if $S \cap S' \ne \emptyset$ and $S' \nsubseteq S$.  
For a tree $T$, we write $S \odot T$ to mean $S \odot V(T)$, and for a forest $F$, we write $S \odot F$ to mean that $S$ cuts at least one connected component of $F$.

We use $\dis(u, v)$ to denote the total cost of the shortest path between vertices $u$ and $v$ in the graph $\G$.

\paragraph{Set Theory Notation.}  
A family of sets is \emph{disjoint} if its elements are pairwise disjoint.  
For a set of vertices $\V$, we say a disjoint family $\mathcal{F}$ is a \emph{refinement} of another disjoint family $\mathcal{H}$ if, for every $A \in \mathcal{F}$, there exists $B \in \mathcal{H}$ such that $A \subseteq B$.  
A disjoint family is a \emph{partition} if the union of its sets equals the ground set.

\paragraph{Problem-Specific Notation.}  
Throughout the paper, we fix a specific optimal solution, which is a forest in $G$. Among all optimal solutions of minimum cost, we select one with the fewest edges; this is the solution we refer to as \emph{the} optimal solution.  
For simplicity, $\OPT$ refers both to the forest of the optimal solution and the partition of vertices it induces.  
Thus, we use $\opti \in \OPT$ to denote a connected component in $\OPT$, and $\cc(\OPT)$ to denote the cost of the optimal solution. For each component $\opti \in \OPT$, we denote its tree by $\T_{\opti}$. 
For any vertex $v$, $\optcom(v)$ denotes the connected component in $\OPT$ that contains $v$.

We denote the pair of a vertex $v$ by $\pairv$.  
We extend this notation and define a pair function $\pair: 2^{\V} \to 2^{\V}$ such that for any subset $S \subseteq \V$, we define $\pair(S) = \{\pairv \mid v \in S\}$.
A vertex is said to be \emph{satisfied} if it is connected to its pair, and \emph{unsatisfied} otherwise.  
We also define the following function:

\begin{definition}[$\unsatisfiedr$]
\label{def:unsatisfied}
For any subset of vertices $S \subseteq \V$, we define $\unsatisfiedr(S) \subseteq S$ as the subset of vertices in $S$ whose pair is not also in $S$:
$$
\unsatisfiedr(S) = \{v \in S \mid \pairv \notin S\}.
$$ \label{func:unsats}
\end{definition}

\section{Monotonic Moat Growing Algorithms}
\label{sec:mod-goemans}
Here, we provide a detailed exploration of monotonic moat growing algorithms.  
We begin in Section~\ref{subsec:def-legacy} by introducing the \BBG{} algorithm, which yields a 2-approximate solution.  
In Section~\ref{subsec:def-shadow}, we formalize the notion of monotonic moat growing and introduce the concept of fingerprints, which are used to implement Shadow Moat Growing, a technique capable of simulating any monotonic moat growing algorithm.  
Section~\ref{subsec:monotonic_preliminary} presents core definitions and properties relevant to monotonic moat growing algorithms.  
Finally, in Section~\ref{subsec:legacy_execution}, we analyze Legacy Execution, which corresponds to the call to \BBG{} in Line~\ref{exe:base} of our main algorithm, and examine its key properties.

\subsection{Legacy Moat Growing: The 2-Approximation Algorithm}
\label{subsec:def-legacy}

In this section, we present Legacy Moat Growing algorithm, a 2-approximation method introduced by Agrawal, Klein, and Ravi~\cite{DBLP:journals/siamcomp/AgrawalKR95}, and later used by Goemans and Williamson~\cite{DBLP:journals/siamcomp/GoemansW95}.  
This algorithm forms the foundation of our approach, and its pseudocode is presented in Algorithm~\ref{alg:bbg}.  
Although the original algorithm was introduced using a primal-dual approach, we omit the LP formulation here, as we can prove its approximation guarantee without it.

The algorithm maintains a forest $\F$, which is initially empty, along with a collection of its connected components, denoted as $\currentsets$.  
To track progress, the algorithm maintains a subset of $\currentsets$ called active sets, stored in $\activesets$.  
These active sets correspond to connected components that contain vertices whose pairs are not yet in the same connected component.  
We also refer to these active sets as moats.  
Additionally, the algorithm keeps track of connected components that were inactive at some point during the algorithm in $\deactivesets$.

The algorithm proceeds as a continuous process where active sets grow and color their adjacent edges.  
Edges are modeled as curves with a length equal to their cost, and each portion of an edge can be colored only once.  
Let $\ys$ represent the growth duration of the set $S \subseteq \V$.  
When an edge $e$ becomes fully colored, i.e., when  $\sum_{S:e\in \deltaS} \ys = \ce$, it is added to the forest $\F$, and the connected components of its endpoints are merged within $\currentsets$.  
The active sets $\activesets$ are then updated by removing the previous components and adding the newly merged one.  
If the new component no longer contains any unsatisfied vertices, it becomes inactive.  
The process continues until all demands are satisfied. 

After completing the main process, the algorithm enters a pruning phase.
During this phase, as long as there exists a subset of vertices $S \in \deactivesets$ that cuts exactly one edge of $\F$, that edge is removed from the forest.
This removal is safe because the pairs of all vertices inside $S$ are also contained within $S$, and removing the edge only disconnects vertices within $S$ from those outside it.
See Figure~\ref{fig:moats} for an illustration of the algorithm.

Since the algorithm operates as a continuous process, we define a \emph{moment} as a conceptual unit of time that increases continuously throughout the moat growing process.  
This notion of time reflects the progress of active sets as they grow and color edges.  
We denote the state of the algorithm at a specific moment as time $\currenttime$, with a small duration $\Delta$ such that no event occurs within the interval $(\currenttime, \currenttime + \Delta)$.
During this interval, the variables $\F$, $\currentsets$, and $\activesets$ remain unchanged, while the growth duration $\ys$ of active sets increases by $\Delta$, and the $\ys$ of other sets stay the same.

To make the algorithm efficient and polynomial-time, we run it in discrete steps.  
The current time $\currenttime$ is maintained as a discrete variable that increments only when a new event occurs.  
At each step, the algorithm calculates the next event time $\Delta_e$, which is the minimum duration needed to fully color an edge.  
The current time $\tau$ is then incremented by $\Delta_e$, and the growth duration $\ys$ of all active sets is updated by $\Delta_e$.  
Subsequently, all edges that become fully colored at this moment are added to $\F$, and the connected components and active sets are updated accordingly.  

Finally, the algorithm returns a function $t: V \rightarrow \mathbb{R}_{\ge 0}$ that records the first moment when vertices $v$ and $\pairv$ become connected.
This function is subsequently used in our main algorithm to facilitate running other moat growing algorithms on the given input.

\begin{algorithm}[ht]
  \caption{A 2-approximation Algorithm: Legacy Moat Growing}
  \label{alg:bbg}
  \hspace*{\algorithmicindent} \textbf{Input:} A graph $\G=(\V, \E, \cc)$ with edge costs $\cc: \E \rightarrow \mathbb{R}_{\ge 0}$, and demand function $\pair : \V \to \V$.\\
  \hspace*{\algorithmicindent} \textbf{Output:} A forest $F$ that satisfies all demands and a function $t: V \rightarrow \mathbb{R}_{\ge 0}$ indicating the earliest moment $v$ and \pairv~are connected.
  \begin{algorithmic}[1]
    \Procedure{\BBGr}{\G, $\pair$} \label{proc:bbg}
      \State $\tau \gets 0$
      \State $\F \gets \emptyset$
      \State $\currentsets \gets \{\{v\} \mid v \in \V\}$
      \State $\activesets \gets \{\{v\} \mid v \in \V, \pairv \neq v\}$ \label{line:legacy_define_acts}
      \State $\deactivesets \gets \{\{v\} \mid v\in \V, \pairv = v\}$
      \State $\tV \gets 0 \text{ for all } v\in\V$
      \State Implicitly set $\ys \gets 0$ for all $S \subseteq \V$
      
      \While{$\activesets \neq \emptyset$}  \Comment{While demands are not satisfied}
        \State $\Delta_e \gets \min_{e = uv \in E} \frac{c_e - \sum_{S \ni e} y_S}{|\{S_u, S_v\} \cap \activesets|}$, where $u\in S_u\in \currentsets$, $v\in S_v\in \currentsets$, and $S_u \ne S_v$
        \State $\tau \gets \tau + \Delta_e$
        \For{$S \in \activesets$}
            \State $\ys \gets \ys + \Delta_e$
        \EndFor
        \For{$e\in E$} 
          \State Let $S_v, S_u \in \currentsets$ be sets that contains each endpoint of $e$
          \If{$\sum_{S: e \in \deltaS} \ys = \ce$ \textbf{and} $S_v \neq S_u$}\Comment{Edge $(v, u)$ become fully colored}
            \State $\F \gets \F \cup \{e\}$
            \State $\currentsets \gets (\currentsets \setminus \{S_v, S_u\}) \cup \{S_v \cup S_u\}$
            
            \State $\activesets \gets (\activesets \setminus \{S_v, S_u\}) \cup \{S_v \cup S_u\}$
            \For{$w\in S_v \textbf{ such that } \pair_w \in S_u$}
                \State $t_w, t_{\pair_w} \gets \currenttime$ \label{line:legacy_define_t} \Comment{$w$ and $\pair_w$ just connected}
            \EndFor
          \EndIf
        \EndFor
        \For{$S \in \activesets$}
          \If{$\unsatisfiedS = \emptyset$} \label{line:legacy_deactivation_condition}\Comment{$S$ become inactive}
            \State $\activesets \gets \activesets \setminus \{S\}$    
            \State $\deactivesets \gets \deactivesets \cup \{S\}$
          \EndIf

        \EndFor
         
      \EndWhile      
      \While{$S\in \deactivesets \textbf{ exists such that } |\deltaS \cap \F|= 1$} \Comment{Remove unnecessary edges}
        \State $\F \gets \F \setminus \deltaS$
      \EndWhile
      \State \Return $\F, t$
    \EndProcedure
  \end{algorithmic}
\end{algorithm}

Note that our explanation of pruning based on deactivated sets differs slightly from the 2-approximation algorithm commonly used for the Steiner Forest problem, although our approach has been widely applied in related settings.
In the standard method for Steiner Forest, edges that are not required to satisfy any demands are removed.
The edges we remove form a subset of these, yet we still obtain a 2-approximate solution.
This distinction is important, as our goal is to design moat growing algorithms that operate independently of demand information during execution.

In each iteration of Algorithm~\ref{alg:bbg}, either an edge is added to the forest $F$, or an active set becomes inactive. Since both events can occur only a linear number of times in $\lvert \V \rvert$, the algorithm runs in polynomial time.
\begin{corollary}
    \label{cor:legacy_polynomial_time}
    The \BBG{} procedure runs in polynomial time.
\end{corollary}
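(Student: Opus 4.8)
The plan is to argue that \BBG{} (Algorithm~\ref{alg:bbg}) performs only polynomially many iterations of each of its two \textbf{while} loops and does only polynomially much work per iteration. Concretely, I would (i) bound the number of iterations of the main growth loop, (ii) bound the cost of a single such iteration, and (iii) bound the pruning loop, then combine the three.

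For (i), the key observation is that \emph{every} iteration of the main loop adds at least one edge to $\F$. Indeed, $\Delta_e$ is defined as the minimum, over edges $e=uv$ whose endpoints lie in distinct components $S_u\ne S_v$ of $\currentsets$ at least one of which is active, of the residual length $\bigl(\ce-\sum_{S\ni e}\ys\bigr)/\lvert\{S_u,S_v\}\cap\activesets\rvert$ needed to finish coloring $e$; this minimum is well defined and finite whenever the instance is feasible, since an active set that is not all of $\V$ necessarily has an adjacent edge in $\G$ (otherwise its unsatisfied vertex could never be connected to its pair), while a component equal to $\V$ would contain no unsatisfied vertex and hence not be active. After the growth step, the minimizing edge $e$ satisfies $\sum_{S:e\in\deltaS}\ys=\ce$ while its endpoints still lie in distinct components, so it is added to $\F$ in the following \textbf{for} loop. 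Since $\F$ is maintained as an acyclic subgraph of $\G$ on $\lvert\V\rvert$ vertices, it never exceeds $\lvert\V\rvert-1$ edges, so the main loop runs at most $\lvert\V\rvert-1$ times. (The informal dichotomy mentioned before the corollary—each iteration either adds an edge or deactivates a set—also suffices, since at most $O(\lvert\V\rvert)$ distinct components ever arise and each is deactivated at most once; but the sharper claim above avoids tracking deactivations.)

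For (ii) and (iii): maintain $\currentsets$ with a union--find structure, and note that although $\ys$ is notionally defined for all $S\subseteq\V$, only the $O(\lvert\V\rvert)$ sets currently appearing as connected components of $\F$ ever have nonzero growth, so each $\Delta_e$ can be computed by scanning the $O(\lvert\E\rvert)$ edges (or, more efficiently, by maintaining residual lengths incrementally). The remaining per-iteration work---scanning edges to update $\F$ and the map $t$, updating $\activesets$ and $\deactivesets$, and testing $\unsatisfiedS=\emptyset$ for each current component---is clearly polynomial. Finally, the pruning \textbf{while} loop removes exactly one edge of $\F$ per iteration, hence runs at most $\lvert\V\rvert-1$ times, and each iteration searches the polynomially many sets of $\deactivesets$ for one cutting a single edge of $\F$, which is polynomial. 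Putting these bounds together yields the claimed polynomial running time. I do not expect a genuine obstacle here; the only points needing a little care are verifying that the minimizing edge is actually added to $\F$ (so that progress is made in every iteration) and observing that only polynomially many of the $\ys$ are ever nonzero, so that the bookkeeping over the ``implicit'' family $\{\ys\}_{S\subseteq\V}$ is in fact polynomial.
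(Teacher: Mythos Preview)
Your proposal is correct and follows essentially the same approach as the paper, which simply observes (in the sentence preceding the corollary) that each iteration either adds an edge or deactivates a set, both of which happen at most $O(\lvert\V\rvert)$ times. Your version is a bit sharper---noting that in \BBG{} specifically every iteration actually adds an edge, since $\Delta_e$ is the only event considered---and fills in the per-iteration and pruning-phase bookkeeping that the paper leaves implicit, but the underlying idea is the same.
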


\subsection{Generalize Moat Growing: Shadow Moat Growing Algorithm}
\label{subsec:def-shadow}

Here we propose a more abstract view of this algorithm, called monotonic moat growing algorithm.
Monotonic moat growing algorithm is a more general concept that Legacy Moat Growing algorithm falls within it.
This concept refer to set of algorithms similar to Legacy Moat Growing such that we will explore many other algorithms of this concept.

\begin{definition}[Monotonic Moat Growing Algorithm]
\label{def:monotonic}
A continuous algorithm where a forest $\F$, initially empty, is maintained.
A set of connected components of $\F$ are called active sets.
As time advance, $\ys$ of active sets increase uniformly and they color their adjacent edges uniformly.
When an edge become fully colored, it is added to $\F$, the connected components of its endpoints are merged, and its new component become active.
When an active set become inactive, it cannot activate again unless it merges with another active set, which form an active set that is union of those components.
After completing the main process, the algorithm enters a pruning phase where while there is a subset of vertices that were an inactive connected component at any moment of the algorithm and cuts exactly one edge of $\F$, that edge is removed from $\F$.
The remaining $\F$ is the final solution of the monotonic moat growing algorithm.
\end{definition}

Now, we want to define concept of fingerprint which then used by Shadow Moat Growing algorithm to simulate any monotonic moat growing algorithm on a given input.

\begin{definition}[Fingerprint]
    \label{def:fingerprint}
    Given a Steiner Forest instance and a monotonic moat growing algorithm, we define the \textit{fingerprint} of the execution of the algorithm on the given input as a function \(t: V \to \mathbb{R}_{\ge 0}\), which assigns to each vertex a value corresponds to time with the following properties:
    \begin{itemize}
        \item Every vertex $v$ should be in an active set from the beginning until $\tV$ in the algorithm.
        \item At any moment $\tau$, every active set $S$ must contain a vertex $v \in S$ such that $\tV \ge \tau$.
    \end{itemize}
\end{definition}

Note that a monotonic moat growing algorithm may admit infinitely many distinct fingerprints.

The procedure \BBG{} returns a function \( t: V \rightarrow \mathbb{R}_{\ge 0} \), where \( t_v \) is the earliest moment at which vertex \( v \) becomes connected to \( \pairv \) in the algorithm. 
We now verify that this function is indeed a fingerprint.

\begin{lemma}
\label{lm:tplus_is_fingerprint}
The output function \( t \) returned by $\BBG$ is a fingerprint.
\end{lemma}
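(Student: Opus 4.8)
The plan is to verify directly that the function $t$ returned by \BBG{} satisfies the two defining properties of a fingerprint from Definition~\ref{def:fingerprint}, using the structure of Algorithm~\ref{alg:bbg}. For the first property, I would argue that every vertex $v$ remains in an active set from time $0$ until time $\tV$. Consider a vertex $v$ with $\pairv \ne v$ (if $\pairv = v$, then $\tV = 0$ and there is nothing to prove since the condition is vacuous at time $0$). Initially $\{v\} \in \activesets$ by Line~\ref{line:legacy_define_acts}. The connected component containing $v$ can only leave $\activesets$ in one of two ways: it merges with another component (in which case $v$ is still contained in the new, larger component, which is added to $\activesets$), or it satisfies the deactivation condition $\unsatisfiedS = \emptyset$ at Line~\ref{line:legacy_deactivation_condition}. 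But the latter requires that every vertex in $S$, including $v$, has its pair inside $S$ — which by Line~\ref{line:legacy_define_t} means $t_v$ has already been set to a value $\le \currenttime$. Hence the component containing $v$ stays active for all moments strictly before $\tV$, which is exactly what the first property demands. A small point to handle carefully: the pruning phase does not affect active-set membership during the main process, and the definition of fingerprint only constrains behavior "in the algorithm" during growth, so pruning is irrelevant here.

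For the second property, I would show that at any moment $\currenttime$, every active set $S \in \activesets$ contains some vertex $v$ with $\tV \ge \currenttime$. Since $S$ is active, $\unsatisfiedS \ne \emptyset$, so pick any $v \in \unsatisfiedS$; by definition $\pairv \notin S$, meaning $v$ and $\pairv$ are not yet in the same component at time $\currenttime$. Therefore the moment at which $v$ becomes connected to $\pairv$ — which is exactly $\tV$ by the semantics of Line~\ref{line:legacy_define_t} — has not yet occurred, so $\tV \ge \currenttime$ (indeed $\tV > \currenttime$, or $\tV = \currenttime$ if the connection happens at this very moment). This vertex $v$ witnesses the required property.

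I expect the main subtlety — not a deep obstacle, but the place where care is needed — to be the bookkeeping around the exact moment an event occurs: the algorithm runs in discrete steps, so when a merge happens at time $\currenttime$, one must be precise about whether "being in an active set until $\tV$" is a closed or half-open condition, and likewise whether $S$ is still considered active "at moment $\currenttime$" before or after the deactivation loop runs. The cleanest way to handle this is to appeal to the continuous-time viewpoint described in the paragraph defining a "moment": during the open interval $(\currenttime, \currenttime + \Delta)$ the state is fixed, and one checks the two properties hold throughout each such interval, with the endpoints handled by noting that $t_v$ is defined to be the \emph{earliest} connecting moment. A secondary point worth a sentence is confirming that $t_w$ is assigned exactly once and at the correct moment — this follows because once $w$ and $\pair_w$ lie in a common component they never separate during the main phase, so the assignment in Line~\ref{line:legacy_define_t} triggers only at the first merge that joins them.
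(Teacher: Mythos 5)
Your proposal is correct and takes essentially the same approach as the paper's proof: directly verifying the two conditions of Definition~\ref{def:fingerprint} by tracking when a component containing $v$ can be deactivated (only after $v$ is satisfied, hence only once $t_v$ has been set) and by extracting an unsatisfied witness vertex from any active set. The extra remarks you add about the pruning phase, the $\pairv = v$ corner case, and the open-interval bookkeeping are sound but not load-bearing; the paper elides them without loss.
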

\begin{proof}
We want to prove that  both conditions of Definition~\ref{def:fingerprint} holds.

\begin{itemize}
    \item For each vertex \( v \), it must be in  active sets from moment \( 0 \) up to moment \( \tV \):
    Initially, each vertex \( v \) which $\pairv\neq v$ is placed in an active singleton set in Line~\ref{line:legacy_define_acts}. 
    A connected components deactivated only if all its vertices are satisfied (see Line~\ref{line:legacy_deactivation_condition}). 
    Since vertex $v$ connect to $\pairv$ at time $t_v$ (see Line~\ref{line:legacy_define_t}), it is unsatisfied until that time.
    Therefore, every connected component containing $v$ before time $t_v$ has an unsatisfied vertex and should be active.

    \item At any moment \( \tau \), for each active set in that moment \( S \), there should be a vertex \( v \in S \) with \( \tV \ge \tau \):
    In \BBG{}, a connected component $S$ is active only if it contains at least one unsatisfied vertex \( v \in S \) (see Line~\ref{line:legacy_deactivation_condition}). Since \( v \) and \( \pairv \) are not yet connected at time \( \tau \) (see Line~\ref{line:legacy_define_t}), we have \( \tV > \tau \) which complete the proof.
\end{itemize}
    Since both conditions are satisfied, the function \( t \) returned by Algorithm~\ref{alg:bbg} is a fingerprint.
\end{proof}

Now we propose Shadow Moat Growing algorithm, which is a monotonic moat growing algorithm that utilizes fingerprints to simulate any monotonic moat growing algorithm.
The pseudocode for this algorithm can be found in Algorithm~\ref{alg:gbg}.
Although we never directly invoke this algorithm, it serves an essential purpose in demonstrating that monotonic moat growing algorithms, such as \BBG{}, can be interpreted through this framework.
This understanding allows us to modify the moat growing process by adjusting the fingerprint.
Furthermore, our algorithm invokes advanced versions of \GBG{}, such as Algorithms~\ref{alg:mod-gw-boost} and~\ref{alg:extend}.
These advanced algorithms take the fingerprint of a monotonic moat growing algorithm as input and assume that running \GBG{} with this fingerprint simulates the desired moat growing algorithm.
They then modify the fingerprint to obtain a new monotonic moat growing algorithm, allowing for flexibility and customization in the moat growing process.

The process of \GBG{} closely resembles that of \BBG{}.
The key difference lies in the criteria for marking a connected component as an active set.
In \BBG{}, a connected component is active if it contains an unsatisfied vertex.
In contrast, in \GBG{}, a connected component is considered an active set if it contains a vertex $v$ such that $t_v > \currenttime$, where $\currenttime$ denotes the current moment of the algorithm and $t$ represents the fingerprint given as input to \GBG.
In essence, the fingerprint $t_v$ enforces that any connected component containing vertex $v$ must remain active until time $t_v$.

\begin{algorithm}[ht]
  \caption{Shadow Moat Growing}
  \label{alg:gbg}
  \hspace*{\algorithmicindent} \textbf{Input:}
  A graph $G = (V, E, c)$ with edge costs $c: E \to \mathbb{R}_{\geq 0}$, and a function $t: V \to \mathbb{R}_{\geq 0}$ specifying the minimum time each vertex $v$ must be growing.\\
  \hspace*{\algorithmicindent} \textbf{Output:}  
$F$, the resulting forest of fingerprint $t$.
  \begin{algorithmic}[1]
    \Procedure{\GBGr}{$\G, t$} \label{proc:gbg}
      \State $\tau \gets 0$
      \State $\F \gets \emptyset$
      \State $\currentsets \gets \{\{v\} \mid v \in \V\}$
      
      \State $\activesets \gets \{\{v\} \mid v \in \V, \tV > 0\}$
      \State $\deactivesets \gets \{\{v\} \mid v\in \V, \tV = 0\}$
      
      \State Implicitly set $\ys \gets 0$ for $S \subseteq \V$
      \While{$\activesets \neq \emptyset$}  \Comment{While there exists an active set}
        \State $\Delta_e \gets \min_{e = uv \in E} \frac{c_e - \sum_{S \ni e} y_S}{|\{S_u, S_v\} \cap \activesets|}$, where $u\in S_u\in \currentsets$, $v\in S_v\in \currentsets$, and $S_u \ne S_v$

        \State $\Delta_t \gets \min_{v\in \V, \tV > \currenttime}(\tV - \currenttime)$
        \State $\Delta \gets \min(\Deltae, \Delta_t)$
        
        \For{$S \in \activesets$}
            \State $\ys \gets \ys + \Delta$ 
        \EndFor
        \State $\currenttime \gets \currenttime + \Delta$ 
        \For{$e\in E$}
          \State Let $S_v, S_u \in \currentsets$ be sets that contain each endpoint of $e$
          \If{$\sum_{S: e \in \deltaS} \ys = \ce$ \textbf{and} $S_v \neq S_u$}\Comment{Edge $(v, u)$ become fully colored}
            \State $\F \gets \F \cup \{e\}$
            \State $\currentsets \gets (\currentsets \setminus \{S_v, S_u\}) \cup \{S_v \cup S_u\}$
           
            \State $\activesets \gets (\activesets \setminus \{S_v, S_u\}) \cup \{S_v \cup S_u\}$
          \EndIf
        \EndFor
        \For{$S \in \activesets$}
          \If{$\tV \le \currenttime \textbf{ for all } v \in S$} \Comment{$S$ become inactive}
            \State $\activesets \gets \activesets \setminus \{S\}$    
            \State $\deactivesets \gets \deactivesets \cup \{S\}$
          \EndIf
        \EndFor
      \EndWhile
      
      \While{$S\in \deactivesets \textbf{ exists such that } |\deltaS \cap \F|= 1$}  \Comment{Remove unnecessary edges}
        \State $\F \gets \F \setminus \deltaS$
      \EndWhile
      
      \State \Return $F$
    \EndProcedure
  \end{algorithmic}
\end{algorithm}

Next, we aim to show that Shadow Moat Growing, given a fingerprint of a monotonic moat growing algorithm, accurately simulates that algorithm.

\begin{lemma}
\label{lm:ghost_equivalence}
    If $t$ is the fingerprint of a monotonic moat growing algorithm $A$ on a given graph $\G$, then for any moment during the process of $A$, the connected components, active sets, and the $\ys$ values in $A$ will be identical to those in $\GBG(\G, t)$ at that moment.
\end{lemma}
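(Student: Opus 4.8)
The plan is to prove the statement by induction over the (finitely many) moments at which the discrete state of either process changes, carrying the invariant that at every such moment the forest $\F$, the family of connected components $\currentsets$, the family of active sets $\activesets$, and all growth values $\ys$ coincide in $A$ and in $\GBG(\G,t)$. The engine of the induction is the following observation: once the active sets and the $\ys$ agree at some moment, every edge $e=uv$ is being colored at the same constant rate $|\{S_u,S_v\}\cap\activesets|$ in both processes, so the quantity $\Deltae$ computed by $\GBG$ — the time until the next edge is completed — is the same, hence the next edge addition, and the merge of components it induces, happen in the same way and at the same time. Thus the only thing needing separate care is keeping the two \emph{families of active sets} synchronized, i.e.\ showing that $A$ deactivates, and keeps inactive, exactly the components that $\GBG$ does.

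The crucial claim is that, at any moment $\currenttime$, a component $C\in\currentsets$ is active in $A$ if and only if it contains a vertex $v$ with $\tV>\currenttime$ — exactly the activation rule used by $\GBG$. The ``if'' direction is immediate from the first property of Definition~\ref{def:fingerprint}: if $v\in C$ and $\tV>\currenttime$, then $v$ lies in an active set of $A$ at moment $\currenttime$, and the unique component of $\F$ containing $v$ at that moment is $C$. For ``only if'', the second property of Definition~\ref{def:fingerprint} gives that an active $C$ must contain some $v$ with $\tV\ge\currenttime$; the remaining boundary case $\tV=\currenttime=\max_{u\in C}t_u$ is handled by reading ``active at moment $\currenttime$'' as ``growing on the half-open interval $[\currenttime,\currenttime+\Delta)$'', consistent with the interpretation of the first fingerprint property already used in the proof of Lemma~\ref{lm:tplus_is_fingerprint}: for any $\currenttime'\in(\currenttime,\currenttime+\Delta)$ the second property would require a member of $C$ with fingerprint $\ge\currenttime'>\max_{u\in C}t_u$, which is impossible, so $C$ is not active just after $\currenttime$; and by the monotonicity clause of Definition~\ref{def:monotonic}, once inactive $C$ cannot reactivate unless it merges with an active set.

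Granting the claim, the induction proceeds as follows. Base case: both processes start with $\F=\emptyset$ and all singletons as components, and the claim at moment $0$ yields active sets $\{\{v\}:\tV>0\}$, matching $\GBG$'s initialization. Inductive step: assume the states agree at a moment $\currenttime$; the next event in $\GBG$ occurs after $\Delta=\min(\Deltae,\Deltat)$, where $\Deltat$ is the time to the next fingerprint expiry. Since no fingerprint lies strictly inside $(\currenttime,\currenttime+\Delta)$ and no edge is completed there, the active sets stay constant and equal in both processes over that interval, so all $\ys$ grow identically and the invariant persists. At time $\currenttime+\Delta$: if $\Delta=\Deltae$, the same edges become fully colored in both, the same merges occur, and the active-set families are re-synchronized by the claim; if $\Delta=\Deltat$, no edge completes, and the only possible change is the deactivation of components whose largest fingerprint equals $\currenttime+\Delta$, which the claim guarantees happens identically in $A$ (in particular $\GBG$ may stop at a fingerprint expiry that triggers no deactivation when the affected component still holds a larger fingerprint, causing no state change in either process). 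The while-loops terminate together once every fingerprint has expired, and the pruning phases agree because each depends only on the final $\F$ and on the collection of sets that were ever inactive, both of which have been shown to coincide.

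The step I expect to be the main obstacle is precisely the active-set characterization, and within it the boundary analysis at moments $\currenttime=\tV$: Definition~\ref{def:fingerprint} as stated pins down the active sets of $A$ only up to behavior exactly at fingerprint-expiry moments, and closing this gap cleanly requires committing to the half-open-interval semantics of a ``moment'' and invoking the monotonicity clause of Definition~\ref{def:monotonic}. Everything after that — matching $\Deltae$, propagating merges, and reconciling the pruning phase — is routine bookkeeping about the continuous-time schedule.
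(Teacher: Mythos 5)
Your proposal is correct and takes essentially the same route as the paper: an induction over discrete events, driven by the observation that agreement of active sets forces agreement of the $\ys$ values and hence of edge-coloring rates and merge times, with the two fingerprint properties pinning down deactivations. The paper organizes this as a proof by contradiction on the first divergent event (three cases: deactivation in $A$ but not $\GBG$, deactivation in $\GBG$ but not $A$, and an unmatched merge), while you extract the active-set characterization as a standalone claim first and are more explicit about the half-open-interval semantics at $\currenttime=\max_{u\in C}t_u$; this is a presentational refinement rather than a different argument.
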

\begin{proof}
First, note that if at any moment until time $\currenttime$, the active sets in both algorithms are the same, then the $\ys$ values of sets $S \subseteq \V$ are also the same in both algorithms at moment $\currenttime$.
This is because $\ys$ represents the duration for which a component is active, and since active sets are identical in both algorithms, these values must also be the same.

Now, we show that at each moment $\currenttime$, the set of all connected components and active sets are identical in both algorithms. This directly implies the identicality of $\ys$, and consequently, the final forest $F$ in both algorithms $A$ and $\GBG$.

We prove this by induction on the events that change connected components and active sets in either of these algorithms. The induction base (at the start of the algorithm) is clear since all singleton sets are connected components in both algorithms.
Additionally, in $\GBG$, the singleton sets for vertices with $t_v > 0$ start as active sets, and $t_v>0$ for a vertex $v$ if and only if it starts as an active set in $A$.

Now, assume for contradiction that there exists a first moment $\tau$ at which the connected components or active sets differ between the two algorithms.
We consider three possible cases:

\begin{enumerate}
    \item An active set $S$ deactivates in $A$ but remains active in $\GBG$:
    Since $S$ is active in $\GBG$, there must be a vertex $v \in S$ such that $\tV > \tau$.
    By the first property in Definition~\ref{def:fingerprint}, $v$ must have remained in active sets of $A$ until $t_v$.
    Therefore, $S$ should still be active in $A$, which contradicts the assumption that $S$ has deactivated in $A$. Hence, this case is not possible.

    \item An active set $S$ deactivates in $\GBG$ but remains active in $A$:  
    According to the implementation of $\GBG$, an active set deactivates only when there is no vertex $v \in S$ such that $\tV > \tau$.  
    However, if $S$ is still active in $A$, then by the second property in Definition~\ref{def:fingerprint}, the fingerprint function $t$ must assign a value larger than $\tau$ to some vertex in $S$.  
    This contradicts the assumption that no vertex $v \in S$ has $\tV > \tau$.  
    Therefore, this case is also not possible.

    \item Two connected components merge in one algorithm but not in the other: 
    We have established that until $\tau$, the set of active sets, and therefore, $y_S$ values are identical in both algorithms. A merge occurs when an edge becomes fully colored, meaning $\sum_{S: e \in \deltaS} \ys = c_e$. Since being fully colored for an edge depends solely on $y_S$ values, which are identical in both algorithms, any merge that occurs in one algorithm must also occur in the other. Thus, this case is also impossible.
\end{enumerate}

Since all cases leading to a difference in connected components and active sets have been ruled out, it follows that these remain identical in both algorithms at all times. Consequently, the $y_S$ values are the same.
Additionally, the third point above shows that the same edge becomes fully colored in both algorithms, which ultimately proves that the forest $F$ remains identical in both algorithms throughout their execution.
Note that the pruning phase does not violate this consistency, as connected components and active sets remain identical at every moment in both algorithms, resulting in the same connected components becoming inactive at the same moments.  
Since the pruning phase removes edges based on inactive connected components, it follows that both algorithms prune the same set of edges.
\end{proof}

The next lemma is a classic result used in previous works that leverage a similar primal-dual approach for related problems. It shows that the final forest cost is at most twice the total growth of active sets.  
We prove this for any monotonic moat growing algorithm and then focus on designing new moat growing algorithms that reduce this total growth while satisfying all demands, aiming to produce improved solutions with a better upper bound.

\begin{lemma}
\label{lm:monotonic-forest-twice-growth}
    For any monotonic moat growing algorithm and any vertex $v \in \V$, the total cost of the resulting forest $F$ is at most twice the total growth of active sets not containing $v$, that is
    $$\cc(F) \le 2 \sum_{\substack{S\subseteq \V \\ v \notin S}}\ys.$$
\end{lemma}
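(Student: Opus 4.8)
The plan is to run the classical primal-dual degree argument, adapted so that it applies to an arbitrary monotonic moat growing algorithm (not just the demand-driven one) and so that it yields the slightly stronger form in which the active sets containing $v$ are dropped from the right-hand side.

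\emph{Step 1: reduce to a single moment.} Every edge $e\in\F$ was added only once it became fully colored, and no set $S$ with $e\in\delta(S)$ can keep growing after the two endpoints of $e$ lie in one connected component (such an $S$ would be a current component containing exactly one endpoint of $e$, which is impossible once both endpoints are in the same component); hence $\ce=\sum_{S:\,e\in\delta(S)}\ys$ holds for the final, pruned forest as well. Writing $\ys$ as the total length of time during which $S$ is active and swapping the sum with the integral over time, this gives
\[
\cc(\F)=\sum_{e\in\F}\ce=\sum_{S\subseteq\V}\ys\,\bigl|\delta(S)\cap\F\bigr|=\int_{0}^{T}\Bigl(\textstyle\sum_{S\text{ active at }\currenttime}\bigl|\delta(S)\cap\F\bigr|\Bigr)\,d\currenttime,
\]
where $T$ is the termination moment, and likewise $2\sum_{S:\,v\notin S}\ys=2\int_{0}^{T}\bigl|\{S\text{ active at }\currenttime:\ v\notin S\}\bigr|\,d\currenttime$. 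So it suffices to prove, for every moment $\currenttime<T$, the pointwise inequality $\sum_{S\text{ active at }\currenttime}\bigl|\delta(S)\cap\F\bigr|\le 2\,\bigl|\{S\text{ active at }\currenttime:\ v\notin S\}\bigr|$.

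\emph{Step 2: the per-moment inequality via a contracted forest.} Fix $\currenttime$, let $\F_{\currenttime}$ be the set of edges added up to moment $\currenttime$, and let $\currentsets$ be the partition of $\V$ into the current connected components. Contract $\F\cup\F_{\currenttime}$ along the parts of $\currentsets$ to obtain a graph $H$ whose nodes are the current components. Then $H$ is a forest: $\F\cup\F_{\currenttime}$ is a subgraph of the acyclic union of all edges ever added, each part of $\currentsets$ is spanned by edges of $\F_{\currenttime}$, and every edge of $\F_{\currenttime}$ lies inside a single part, so these become self-loops and the edge set of $H$ is exactly the set of edges of $\F$ joining two distinct parts; consequently $\deg_H(S)=|\delta(S)\cap\F|$ for every $S\in\currentsets$. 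The crucial observation is that every leaf of $H$ is active at moment $\currenttime$: if a leaf $C$ of $H$ were inactive at $\currenttime$, then $C$ was an inactive connected component at that moment, so $C\in\deactivesets$, and $|\delta(C)\cap\F|=\deg_H(C)=1$, meaning the pruning loop would not have terminated --- contradicting that $\F$ is the pruned forest. Now a standard degree count in a forest closes Step 2: in each nontrivial tree of $H$ all leaves are active and all inactive nodes have degree at least $2$, so the degrees of the active nodes of that tree sum to at most twice the number of active nodes in the tree, minus $2$; summing over trees and accounting for active singleton components gives $\sum_{S\text{ active at }\currenttime}\deg_H(S)\le 2\ell-2k$, where $\ell$ is the number of active sets at $\currenttime$ and $k$ the number of nontrivial trees of $H$. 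Since active sets are pairwise disjoint, at most one of them contains $v$, so the target right-hand side is either $2\ell$ (no active set contains $v$), in which case we are done, or $2\ell-2$, in which case either $k\ge 1$ or $H$ has no edges at all and the degree sum is $0$; either way the inequality holds, and integrating over $\currenttime$ finishes the proof.

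\emph{Main obstacle.} I expect the delicate part to be the contraction in Step 2, specifically the claim that we obtain a forest. Pruning can remove edges that were internal to a current component --- indeed even to a currently active one --- so contracting $\F$ itself along $\currentsets$ may create cycles. The fix above (contracting $\F\cup\F_{\currenttime}$ instead, using that $\F_{\currenttime}$ re-spans each current component while all its edges become self-loops and therefore do not change any $\delta(S)\cap\F$) must be argued precisely, together with the laminarity over time of the family of connected components --- which holds because components only merge, and is exactly what makes $\F\cup\F_{\currenttime}$ acyclic. A secondary point, immediate from Definition~\ref{def:monotonic}, is that "inactive connected component at moment $\currenttime$" is precisely the membership condition for $\deactivesets$, so the "leaves are active" argument applies verbatim to an arbitrary monotonic moat growing algorithm.
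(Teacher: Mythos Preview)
Your proof is correct and follows the same approach as the paper's: reduce to a per-moment degree inequality on the graph obtained by contracting the current components, use the pruning phase to rule out inactive leaves, and count degrees in the resulting forest. Your worry that contracting $F$ alone along $\currentsets_\currenttime$ might create cycles is unfounded---the paper simply observes that each current component is convex in $F$ (any $F$-path between two of its vertices stays inside it, since $F$ is a subforest of the unpruned forest and the component is a subtree there)---but your workaround of contracting $F\cup F_\currenttime$ instead is equally valid and yields the same degrees.
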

\begin{proof}
    Since all edges in $F$ are fully colored, we can say
    \begin{align*}
    \cc(F)&=\sum_{e\in F} \cc_e \\
        &=\sum_{e\in F}\sum_{\substack{S\subseteq \V\\e\in \deltaS}} \ys\\
        &=\sum_{\substack{S\subseteq \V}}\lvert \deltaS\cap F\rvert\cdot \ys.
    \end{align*}
    Now, we show that at any moment $\currenttime$ of the monotonic moat growing, the increase in this value is at most the increase in \[2 \sum_{\substack{S\subseteq \V \\ v \notin S}}\ys.\]
    Let $\activesets_\currenttime$ be the active sets at moment $\currenttime$. Then, if $\ys$ of active sets increase by $\Delta$ at moment $\currenttime$, the increase in $\sum_{\substack{S\subseteq \V}}\lvert \deltaS\cap F\rvert\cdot \ys$ will be
    \begin{align*}
        \sum_{S\in \activesets_\currenttime} \lvert \deltaS\cap F\rvert\cdot \Delta,
    \end{align*}
    while the increase in 
    \(2 \sum_{S:v \notin S}\ys\)
    is
    \begin{align*}
    2 \sum_{\substack{S\in \activesets_\currenttime\\ v \notin S}}\Delta \geq 2(\lvert \activesets_\currenttime\rvert-1)\cdot \Delta.
    \end{align*}
    Therefore, it suffices to show that 
    \[
    \sum_{S\in \activesets_\currenttime} \lvert \deltaS\cap F\rvert \leq 2(\lvert \activesets_\currenttime\rvert-1)
    \]
    at any moment $\currenttime$.

    Given the components $\currentsets_\currenttime$ at moment $\currenttime$, define $\Fc$ as the graph obtained from $\F$ by contracting each set in $\currentsets_\currenttime$ into a single vertex and removing isolated vertices. Now, let $V_a$ represent the vertices in $\Fc$ corresponding to active sets $\activesets_\currenttime$, and $V_i$ the rest of the vertices corresponding to inactive components. Then, $\Fc$ has the following properties:
    \begin{itemize}
        \item $\Fc$ is a forest, since $\F$ is a forest and any contracted set cannot contain two vertices in one component of $\F$ without containing the path between them.
        \item For any vertex $v$ of $\Fc$, if $S$ is the corresponding set in $\currentsets_\currenttime$, $\deg_{\Fc}(v)=\lvert \deltaS\cap\F\rvert$. Additionally, isolated vertices excluded from $\Fc$ represent sets $S$ with $\lvert \deltaS\cap\F\rvert=0$.
        \item For any vertex $v$ in $V_i$, we have $\deg_{\Fc}(v)\geq 2$. $\Fc$ contains no isolated vertices, so $\deg_{\Fc}(v)\geq 1$. Now, assume otherwise that $\deg_{\Fc}(v)=1$. Then, consider the set $S$ corresponding to $v$. Since $v \in V_i$, $S$ must be in $\deactivesets$ at the end of the algorithm. However, the final check in the monotonic moat growing algorithm would remove the single edge attached to $S$ from $\F$ if this was the case. Hence, $\deg_{\Fc}(v)$ must be at least $2$.
    \end{itemize}
    Now, combining these properties, we can say
    \begin{align*}
        \sum_{S\in \activesets_\currenttime} \lvert \deltaS\cap F\rvert &= \sum_{v\in V_a} \deg_{\Fc}(v)\\
        &=\sum_{v\in V_a \cup V_i} \deg_{\Fc}(v) - \sum_{v\in V_i} \deg_{\Fc}(v)\\
        &\leq \sum_{v\in V_a \cup V_i} \deg_{\Fc}(v) - 2\lvert V_i \rvert \tag{$\deg_{\Fc}(v) \geq 2$ for all $v\in V_i$}\\
        &\leq 2(\lvert V_a \rvert + \lvert V_i\rvert - 1) - 2\lvert V_i \rvert
        \tag{$\Fc$ is a forest with at most $\lvert V_a \rvert + \lvert V_i\rvert - 1$ edges}\\
        &\leq 2(\lvert V_a\rvert-1)\\
        &\leq 2(\lvert \activesets_\currenttime\rvert - 1)
    \end{align*}
    which completes the proof.
\end{proof}

\subsection{Notation and Definitions for Moat Growing Algorithms}
\label{subsec:monotonic_preliminary}

In this section, we introduce notation and properties specific to moat growing algorithms.
We begin by examining how often active sets cut the optimal solution, using this to classify active sets and to partition the optimal solution based on which class of active sets colors each portion.
We then present a general definition of assignments, which serves as a key component of our analysis.

Since our approach focuses on variants of monotonic moat growing algorithms, we introduce common notation that will be used throughout the paper.  
For the remainder of the paper, we consider four specific executions of monotonic moat growing algorithms, each denoted by a distinct symbol.  
These symbols appear as superscripts on variables to indicate the corresponding execution.  
See Figure~\ref{fig:overview} for the list of symbols and their associated moat growing algorithms.

\begin{figure}[ht]
    \centering
\begin{tikzpicture}[
  scale=0.7,
  node distance=1.8cm and 2.3cm,
  round/.style={draw, circle, minimum size=8mm, inner sep=0pt},
  arrow/.style={-Stealth, thick},
  font=\small
]
\def\dblue{Black!90}
\def\dred{red}
\def\dyellow{Black!40}

\node[round] (a) at (0,0) {$\eplus$};
\node[left=3.2cm of a] (root) {};
\node[round, right=of a] (b) {$\ee$};
\node[round, above right=of b, yshift=-1cm] (c) {};
\node[round, below right=of b, yshift=1cm] (d) {$\ep$};
\node[round, right=3cm of d] (e) {$\ez$};

\draw[arrow] (root) -- node[above, yshift=1pt, xshift=-2pt] {Legacy Moat Growing} (a);
\draw[arrow] (a) -- node[above, yshift=1pt, xshift=-2pt] {Local Search} (b);
\draw[arrow] (b) -- node[above, yshift=1pt, xshift=-2pt, sloped] {Autarkic Pairs} (c);
\draw[arrow] (b) -- node[above, yshift=1pt, xshift=-2pt, sloped] {Extension} (d);
\draw[arrow] (d) -- node[above, yshift=1pt, xshift=-2pt] {Local Search} (e);

\draw[arrow, \dyellow, shorten <= 4pt] (a) -- ++(0,-2) node[\dblue, below, align=center] {Legacy Execution};
\draw[arrow, \dyellow, shorten <= 4pt] (b) -- ++(0,-2) node[\dblue, below, align=center] {Boosted Execution\\\textcolor{\dred}{$\solone$}};
\draw[arrow, \dyellow, shorten <= 4pt] (d) -- ++(0,-2) node[\dblue, below, align=center] {Extended Execution};
\draw[arrow, \dyellow, shorten <= 4pt] (e) -- ++(0,-2) node[\dblue, below, align=center] {Extended-Boosted Execution\\\textcolor{\dred}{$\solext$}};

\node[above=0cm of c] {\textcolor{\dred}{$\solcnd$}};

\end{tikzpicture}
    \caption{Flow of Algorithm~\ref{alg:main}, showing the different modules and their order.  
    Each module results in an ``execution'' of a monotonic moat growing algorithm, with the name of each execution shown below the corresponding circle.  
    The symbol inside each circle is used to refer to variables from that execution by placing it as a superscript.  
    The three solutions compared at the end of the algorithm are also identified, along with their positions in the flow and how they are obtained.}
    \label{fig:overview}
\end{figure}
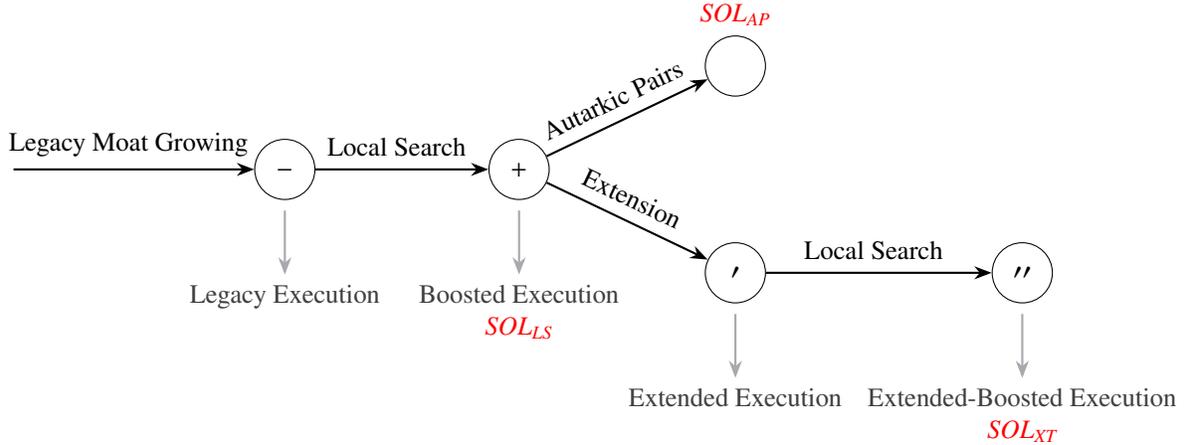 
\paragraph{Active and Superactive.}
We say that a vertex is \emph{active} at moment $\currenttime$ in a monotonic moat growing algorithm if it has not belonged to any inactive set up to that point.  
A set of vertices is said to be \emph{actively connected} if its members remain in active sets from the beginning of the algorithm until they all reach one another and lie in the same connected component.

We now introduce the notion of \emph{superactive sets}, which capture subsets of vertices that are actively connected.

\begin{definition}[Superactive Sets]
    For an active set in an execution of a monotonic moat growing algorithm, its corresponding superactive set is the subset of its vertices that are active at the moment the active set is formed.  
    We denote the superactive set of an active set $S$ by $\core(S)$, and for a particular execution $(*)$, we write $\core^*(S)$.  
    The term superactive sets refers to the collection of such subsets corresponding to all active sets throughout the algorithm.
\end{definition}

Although some vertices may be immediately deactivated and never included in any active set, we assume that $\{v\}$ is a superactive set for every vertex $v \in \V$.

In a monotonic moat growing algorithm, each connected component is formed by merging two existing components.  
As a result, the family of connected components that appear during the execution forms a \emph{laminar} structure: any two such subsets are either disjoint or one is a subset of the other.  
Since each active set corresponds to a connected component at some moment, we obtain the following corollary:

\begin{corollary}
\label{cor:active_sets_laminar}
    Active sets during a monotonic moat growing algorithm form a laminar family.
\end{corollary}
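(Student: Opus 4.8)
The plan is to show directly that any two active sets are either disjoint or nested, by identifying each active set with the connected component of $\F$ that it is at the moment it is active, and then exploiting the fact that $\F$ only gains edges during the main phase.

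First I would record the basic observation that if $S$ is active at some moment $\currenttime$, then $S$ is exactly a connected component of the forest $\F$ as it stands at moment $\currenttime$. This is immediate from Definition~\ref{def:monotonic}: active sets are connected components of $\F$, and a component that was deactivated can become active again only by merging with another active set, at which point it is once more a connected component of $\F$. I would also note that during the main phase edges are only added to $\F$, so if $\currenttime_1 \le \currenttime_2$ then the forest $\F$ at moment $\currenttime_1$ is a subgraph of the forest $\F$ at moment $\currenttime_2$; hence any vertex set that is connected in $\F$ at moment $\currenttime_1$ is still connected in $\F$ at moment $\currenttime_2$, i.e.\ connected components only grow over time.

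Next, take two active sets $S_1$ and $S_2$, active at moments $\currenttime_1$ and $\currenttime_2$ respectively, and assume without loss of generality $\currenttime_1 \le \currenttime_2$. By the observations above, $S_1$ is contained in a single connected component $C$ of $\F$ at moment $\currenttime_2$, while $S_2$ is itself a connected component of $\F$ at moment $\currenttime_2$. Two connected components of the same forest are either equal or disjoint, so either $S_2 = C \supseteq S_1$, giving $S_1 \subseteq S_2$, or $S_2 \cap C = \emptyset$, giving $S_1 \cap S_2 = \emptyset$. In either case $\{S_1, S_2\}$ is laminar, and since $S_1, S_2$ were arbitrary active sets, the whole collection of active sets is laminar. (Equivalently, this can be phrased as an induction on the edge-addition events: the singletons are laminar, and each merge of two disjoint current components $S_u, S_v$ into $S_u \cup S_v$ preserves laminarity, since by induction every earlier-appearing component is disjoint from or contained in $S_u$ or in $S_v$, hence disjoint from or contained in $S_u \cup S_v$.)

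The argument is essentially routine once the right intermediate statement is isolated. The only points that need a little care are the remark that deactivation does not spoil the picture — an active set is always a \emph{current} connected component, even one formed after a spell of inactivity — and the observation that the pruning phase, which does delete edges from $\F$, occurs only after all active sets have been determined and therefore cannot affect the family under consideration. I do not anticipate a genuine obstacle here.
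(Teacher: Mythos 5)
Your proposal is correct and matches the paper's argument: the paper observes that the family of all connected components appearing during the execution is laminar (since components only ever merge), and that active sets form a subfamily of these components, hence inherit laminarity. Your version spells this out by comparing two active sets at times $\currenttime_1 \le \currenttime_2$ directly, but it is the same idea.
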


Similarly, the superactive sets in a monotonic moat growing execution also form a laminar family.  
Initially, superactive sets are disjoint. As the execution proceeds, when two active sets merge, their superactive sets merge as well.  
When an active set is deactivated, its superactive set becomes empty. These rules ensure that the laminar structure is preserved throughout the execution. 
This property allows us to define the following notion.

\begin{definition}[Maximal Superactive Sets]
    In an execution of a monotonic moat growing algorithm, a superactive set is called \emph{maximal} if it is not strictly contained in any other superactive set.  
    The family of maximal superactive sets, denoted by $\maxacts$, is obtained by filtering the collection of superactive sets to retain only the maximal ones.  
    This family forms a partition of the vertex set.
\end{definition}

For any maximal superactive set $S$, there exists an active set $S'$ whose corresponding superactive set is $S$, and which becomes inactive at some moment $\currenttime$.  
When we refer to the deactivation of a superactive set $S$, we mean the moment $\currenttime$ when the associated active set $S'$ is deactivated.  
We also refer to $S'$ as the active set from which $S$ is derived, although $S'$ may contain subsets whose superactive set is also $S$.

\paragraph{Comparison of Fingerprints and Refinement Families.}
We say that a fingerprint $\tout$ is \emph{larger} than a fingerprint $\tin$ if $\tout_v \ge \tin_v$ for all $v \in \V$. In this case, we also say that $\tin$ is \emph{smaller} than $\tout$.

We now compare pairs of monotonic moat growing executions whose fingerprints satisfy this relation. These comparisons play a central role in the remainder of the paper, as we will increase the value of $t$ for some vertices to generate larger fingerprints and design new moat growing algorithms with the goal of obtaining better solutions.
Figure~\ref{fig:ref} summarizes the key properties of such comparisons. The precise statements and their proofs are provided in the following lemmas.

The next lemma states that if the fingerprint of one execution is larger than that of another, then at any moment during the execution, any active set in the run with the smaller fingerprint is a subset of an active set in the run with the larger fingerprint at the same time.  
A similar statement holds for connected components.

\begin{figure}[t]
    \centering
    \begin{subfigure}{0.45\textwidth}
    \centering
        \begin{tikzpicture}[scale=0.7]

\def\dem{Black}
\def\demf{Red!20}
\def\demi{Red}
\def\demif{Blue!20}
\def\demii{Green!70}
\def\demiif{Green!20}
\def\ter{White}
\def\col{Black}
\def\treeone{Purple}
\def\treetwo{RubineRed!70!Black}
\def\noder{0.1cm}

\def\tercl#1{%
  \ifcase#1
    White\or
    Goldenrod!20\or
    olive!20\or
    Green!20\or
    Sepia!20\or
    Plum!20\or
  \fi
}

\def\edge{1}
\def\r{1.044}

\coordinate (A) at (0, 0);
\coordinate (B) at (-\edge, \edge);
\coordinate (C) at (\edge, 1.5*\edge);
\coordinate (D) at (-1.6*\edge, 2*\edge);

\coordinate (E) at (0.2*\edge, 3.4*\edge);
\coordinate (F) at (1.2*\edge, 4.4*\edge);
\coordinate (G) at (-0.9*\edge, 4.5*\edge);

\coordinate (L) at (3*\edge, \edge);
\coordinate (M) at (4*\edge, 2*\edge);
\coordinate (N) at (3.1*\edge, 2.4*\edge);

\coordinate (K) at (4.2*\edge, 3.4*\edge);

\coordinate (H) at (2.5*\edge, 4.9*\edge);
\coordinate (I) at (3*\edge, 5.3*\edge);
\coordinate (J) at (2.4*\edge, 5.3*\edge);

\draw[\col] (A) -- (B);
\draw[\col] (C) -- (D);

\draw[\col] (E) -- (F);
\draw[\col] (E) -- (G);

\draw[\col] (L) -- (M);
\draw[\col] (N) -- (M);

\draw[\col] (H) -- (I);
\draw[\col] (H) -- (J);

\foreach \i in {A,...,N} {
    \draw[\col, fill= \ter] (\i) circle(\noder);
}

\def\mar{0.6}
\draw[\dem, line width=0.3pt, dash pattern=on 1.2pt off 0.4pt] (K) circle(\mar);

\def\ang{-45}
\draw[\dem, line width=0.3pt, dash pattern=on 1.2pt off 0.4pt] ($(E) + (\ang:\mar)$) arc(\ang:\ang-90:\mar) -- ($(G) + (\ang-90:\mar)$) arc(\ang-90:\ang-210:\mar) to[out=\ang-300, in=\ang+210] ($(F) + (\ang+120:\mar)$) arc(\ang+120:\ang:\mar) -- cycle;

\def\ang{-45}
\draw[\dem, line width=0.3pt, dash pattern=on 1.2pt off 0.4pt] ($(L) + (\ang:\mar)$) arc(\ang:\ang-120:\mar) -- ($(N) + (\ang-130:\mar)$) arc(\ang-130:\ang-240:\mar) -- ($(M) + (\ang-260:\mar)$) arc(\ang-260:\ang-350:\mar) -- cycle;

\def\mar{0.3}
\foreach \i in {K, L} {
    \draw[\demi, line width=0.3pt] (\i) circle(\mar); 
}

\def\ang{135}
\draw[\demi, line width=0.3pt] ($(E) + (\ang:\mar)$) arc(\ang:\ang+180:\mar) -- ($(F) + (\ang-180:\mar)$)  arc(\ang-180:\ang:\mar) -- cycle;

    \end{tikzpicture}
    \caption{}
    \label{fig:ref-one}
    \end{subfigure}
    \hfill
    \begin{subfigure}{0.45\textwidth}
    \centering
        \begin{tikzpicture}[scale=0.7]

\def\dem{Black}
\def\demf{Red!20}
\def\demi{Red}
\def\demif{Blue!20}
\def\demii{Green!70}
\def\demiif{Green!20}
\def\ter{White}
\def\col{Black}
\def\treeone{Purple}
\def\treetwo{RubineRed!70!Black}
\def\noder{0.1cm}

\def\tercl#1{%
  \ifcase#1
    White\or
    Goldenrod!20\or
    olive!20\or
    Green!20\or
    Sepia!20\or
    Plum!20\or
  \fi
}

\def\edge{1}
\def\r{1.044}

\coordinate (A) at (0, 0);
\coordinate (B) at (-\edge, \edge);
\coordinate (C) at (\edge, 1.5*\edge);
\coordinate (D) at (-1.6*\edge, 2*\edge);

\coordinate (E) at (0.2*\edge, 3.4*\edge);
\coordinate (F) at (1.2*\edge, 4.4*\edge);
\coordinate (G) at (-0.9*\edge, 4.5*\edge);

\coordinate (L) at (3*\edge, \edge);
\coordinate (M) at (4*\edge, 2*\edge);
\coordinate (N) at (3.1*\edge, 2.4*\edge);

\coordinate (K) at (4.2*\edge, 3.4*\edge);

\coordinate (H) at (2.5*\edge, 4.9*\edge);
\coordinate (I) at (3*\edge, 5.3*\edge);
\coordinate (J) at (2.4*\edge, 5.3*\edge);

\draw[\col] (A) -- (B);
\draw[\col] (B) -- (D);
\draw[\col] (C) -- (D);

\draw[\col] (E) -- (F);
\draw[\col] (E) -- (G);

\draw[\col] (L) -- (M);
\draw[\col] (N) -- (M);

\draw[\col] (H) -- (I);
\draw[\col] (H) -- (J);

\draw[\col] (M) -- (K);
\draw[\col] (F) -- (H);

\foreach \i in {A,...,N} {
    \draw[\col, fill= \ter] (\i) circle(\noder);
}

\def\mar{0.6}

\def\ang{-45}
\draw[\dem, line width=0.3pt, dash pattern=on 1.2pt off 0.4pt] ($(E) + (\ang-10:\mar)$) arc(\ang-10:\ang-90:\mar) -- ($(G) + (\ang-90:\mar)$) arc(\ang-90:\ang-200:\mar) to[out=\ang-290, in=\ang+210] ($(I) + (\ang+120:\mar)$) arc(\ang+120:\ang:\mar) -- cycle;

\def\ang{-45}
\draw[\dem, line width=0.3pt, dash pattern=on 1.2pt off 0.4pt] ($(L) + (\ang:\mar)$) arc(\ang:\ang-120:\mar) -- ($(N) + (\ang-130:\mar)$) arc(\ang-130:\ang-180:\mar) -- ($(K) + (\ang-180:\mar)$) arc(\ang-180:\ang-320:\mar) -- ($(M) + (\ang-320:\mar)$) arc(\ang-320:\ang-350:\mar) -- cycle;

\def\mar{0.3}

\def\ang{135}
\draw[\demi, line width=0.3pt] ($(E) + (\ang:\mar)$) arc(\ang:\ang+180:\mar) -- ($(F) + (\ang-180:\mar)$)  arc(\ang-180:\ang:\mar) -- cycle;

\def\ang{-45}
\draw[\demi, line width=0.3pt] ($(L) + (\ang:\mar)$) arc(\ang:\ang-150:\mar) -- ($(K) + (\ang-160:\mar)$) arc(\ang-160:\ang-320:\mar) -- ($(M) + (\ang-320:\mar)$) arc(\ang-320:\ang-350:\mar) -- cycle;

    \end{tikzpicture}
    \caption{}
    \label{fig:ref-two}
    \end{subfigure}
    \caption{Connected components, active sets (black dashed lines), and their superactive sets (solid red lines) at the same moment in two monotonic moat growing executions. The fingerprint in (b) is larger than in (a). As shown, every connected component, active set, and superactive set in (a) is a subset of its counterpart in (b). The execution in (b) exhibits greater connectivity: any two vertices connected in (a) are also connected in (b), some vertices that are inactive in (a) are active in (b), and some vertices not included in any superactive set in (a) belong to one in (b).}
    \label{fig:ref}
\end{figure} 
\begin{lemma}
\label{lm:large_fingerprint_refinement}  
Let $\tin$ and $\tout$ be fingerprints of two monotonic moat growing algorithms on a graph $\G$, where $\tout$ is larger than $\tin$.  
At any moment $\currenttime$ during these executions, the following hold:  
\begin{itemize}  
    \item The active sets in the execution with fingerprint $\tin$ form a \emph{refinement} of the active sets in the execution with fingerprint $\tout$.  
    \item The connected components in the execution with fingerprint $\tin$ form a \emph{refinement} of those in the execution with fingerprint $\tout$.  
\end{itemize}   
\end{lemma}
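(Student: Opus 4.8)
The plan is to reduce to two runs of Shadow Moat Growing and then argue by induction over the events of the two executions.

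First, by Lemma~\ref{lm:ghost_equivalence} the execution with fingerprint $\tin$ is step-for-step identical (in its connected components, active sets, and $y_S$ values) to $\GBG(\G,\tin)$, and likewise for $\tout$; so it suffices to prove both claims for $A=\GBG(\G,\tin)$ and $B=\GBG(\G,\tout)$. I would prove the single invariant $(\star)$: \emph{at every moment $\currenttime$, every connected component of $A$ is contained in a connected component of $B$}, and derive the active-set claim from it. Indeed, if $(\star)$ holds and $S$ is an active set of $A$ at time $\currenttime$, then (otherwise $\GBG$ would have deactivated $S$) some $w\in S$ has $\tin_w>\currenttime$; the $B$-component $C\supseteq S$ then contains $w$ with $\tout_w\ge\tin_w>\currenttime$, and in $\GBG$ a component is deactivated only once \emph{all} its vertices $v$ satisfy $t_v\le\currenttime$, while a freshly merged component is always reinserted as active, so $C$ is active in $B$. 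Hence the active sets of $A$ refine those of $B$, which together with $(\star)$ gives the lemma.

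To prove $(\star)$, I would induct over the (finite) sequence of events — edge completions and deactivations in either execution, ordered by time. At time $0$ both partitions are the singletons. A deactivation changes no component, and a merge in $B$ only coarsens $B$'s partition, so $(\star)$ survives both. The only dangerous event is a merge in $A$: an edge $e=uv$ becomes fully colored in $A$ at a time $\currenttime^{\dagger}$, merging the (distinct) $A$-components $S_u\ni u$ and $S_v\ni v$. By induction these lie inside $B$-components $C_u$ and $C_v$; if $C_u=C_v$ we are done, so suppose $C_u\ne C_v$, which (since $B$-components never split) means $u$ and $v$ lie in distinct $B$-components throughout $[0,\currenttime^{\dagger}]$.

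The crux is a rate-domination argument on $e$ over $[0,\currenttime^{\dagger}]$. At any such moment let $S_u',S_v'$ be the $A$-components and $C_u',C_v'$ the $B$-components of $u,v$; by hypothesis $S_u'\subseteq C_u'$, $S_v'\subseteq C_v'$, with $S_u'\ne S_v'$ and $C_u'\ne C_v'$. Exactly as above, if $S_u'$ is active in $A$ then $C_u'$ is active in $B$ (its witness vertex $w$ satisfies $\tout_w\ge\tin_w>$ the current time), and similarly for $v$; hence the number of active $B$-components among $\{C_u',C_v'\}$ is at least the number of active $A$-components among $\{S_u',S_v'\}$, i.e. the instantaneous coloring rate of $e$ is never smaller in $B$ than in $A$. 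Integrating, the color accumulated on $e$ in $B$ by time $\currenttime^{\dagger}$ is at least that in $A$, which equals $\ce$; so $e$ is fully colored in $B$ by time $\currenttime^{\dagger}$ while its endpoints are still in distinct $B$-components, forcing $B$ to add $e$ and merge $C_u,C_v$ no later than $\currenttime^{\dagger}$ — contradicting $C_u\ne C_v$ at $\currenttime^{\dagger}$ (in the borderline case where $e$ completes in both runs exactly at $\currenttime^{\dagger}$, both merge $u,v$ simultaneously and $(\star)$ still holds). This closes the induction. The main obstacle is precisely this crux: the rate inequality relies on an active $A$-component certifying activeness of the $B$-component containing it \emph{at the same moment}, which needs careful handling of the $>$ versus $\ge$ thresholds in the fingerprint/deactivation rules and of simultaneous events, together with the observation that "$e$ fully colored in $B$ by $\currenttime^{\dagger}$" indeed forces the $B$-merge by $\currenttime^{\dagger}$; the rest is bookkeeping.
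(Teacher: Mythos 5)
Your proposal is correct and takes essentially the same approach as the paper: both reduce to (or reason via) Shadow Moat Growing, induct over the discrete sequence of merge/deactivation events, and dispatch the only dangerous case—a merge in the smaller-fingerprint run—by a coloring/rate-domination argument on the merging edge that forces it to become fully colored in the larger-fingerprint run no later, contradicting that its endpoints were still separated there. The one cosmetic difference is that you isolate the connected-component refinement as the primary invariant $(\star)$ and derive the active-set refinement from it via the fingerprint property, whereas the paper's case analysis establishes both refinements simultaneously; this is a cleaner organization of the same underlying argument, not a different route.
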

\begin{proof}
    For each moment $\currenttime$, let $\activeattime$ denote the active sets and $\ccattime$ denote the connected components in the execution on $\tin$, and let $\activeattimeprime$ and $\ccattimeprime$ denote the corresponding active sets and connected components in the execution on $\tout$.
    
    Initially, when $\currenttime = 0$, we have
    $\activesets^{\inp}_{0} \subseteq \activesets^{\out}_{0}$ since vertices with $\tin_v = 0$ may have $\tout_v > 0$.
    Also, $\currentsets^{\inp}_{0} = \currentsets^{\out}_{0}$.
    
    Assume $\currenttime$ is the minimum time at which neither $\activeattime \refines \activeattimeprime$ nor $\ccattime \refines \ccattimeprime$.

    The relation $\refines$ means the left-hand side is a refinement of the right-hand side. Additionally, we say a set $S$ of vertices covers another set $S'$ if $S' \subseteq S$.
    
    There are four possible cases for the last event that could lead to this violation:
    \begin{enumerate}
        \item \emph{Two connected components $C_1$ and $C_2$ in the execution on $\tout$ are merged:} Then $C_1 \cup C_2$ covers everything $C_1$ and $C_2$ independently covered. Therefore, this case cannot break refinement.

        \item \emph{Two connected components $C_1$ and $C_2$ in the execution on $\tin$ are merged:}
        \begin{itemize}
            \item If both $C_1$ and $C_2$ were covered by the same active set $C$ in the execution on $\tin$ beforehand, then $C$ also covers $C_1 \cup C_2$ at this moment. Thus, this case preserves refinement.
            \item If $C_1$ and $C_2$ were covered by different sets beforehand, let the merging edge be $e = (u, v)$. At any earlier time, whenever $u$ and $v$ belonged to active sets in the execution on $\tin$,
            they have been in active sets cutting $e$ on both executions (until now active sets hold the refinement condition). Thus, $\sum_{S: e \in \delta(S)} y_S = c_e$ must hold in both runs. This contradicts $C_1$ and $C_2$ being subsets of different active sets at moment~$\currenttime$. Hence, this case is also not possible.
        \end{itemize}

        \item \emph{One active set $C'$ in the execution on $\tin$ becomes deactivated:} 
        In this case, first, there is no change in $\ccattime$ and $\ccattimeprime$. Second, since only one set is removed from $\activeattime$, we still have $\activeattime \refines \activeattimeprime$. Thus, this case cannot violate refinement.

        \item \emph{One active set $C'$ in the execution on $\tout$ becomes deactivated:}
        Again, there is no change in $\ccattime$ and $\ccattimeprime$. Since $C'$ becomes deactivated, for all $v \in C'$, we have $\tout_v \le \tau$. This condition also holds for any $C \subseteq C'$ as $\tin_v \le \out_v$. Therefore, there is no active $C \subseteq C'$ in execution on $\tin$. 

    \end{enumerate}
    
    Since none of these cases can lead to a violation, such a time $\currenttime$ cannot exist. Hence, for all $\currenttime$, $\activeattime$ is a refinement of $\activeattimeprime$, and $\ccattime$ is a refinement of $\ccattimeprime$.
\end{proof}

The next three lemmas establish direct consequences of the above lemma.
\begin{lemma}
\label{lm:large_fingerprint_vertex_remains_active}
    Let $\tin$ and $\tout$ be fingerprints of two monotonic moat growing algorithms on a graph $\G$, where $\tout$ is larger than $\tin$.  
    If a vertex $v$ is active at moment $\currenttime$ in the execution with fingerprint $\tin$, then $v$ is also active at moment $\currenttime$ in the execution with fingerprint $\tout$.
\end{lemma}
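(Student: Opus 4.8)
The plan is to derive this directly from the refinement lemma (Lemma~\ref{lm:large_fingerprint_refinement}). First I would unpack the definition: a vertex $v$ is active at moment $\currenttime$ exactly when the connected component containing $v$ lies in $\activesets$ at every moment $\currenttime' \le \currenttime$. This reading makes the property monotone in $\currenttime$, so it suffices to show that for each $\currenttime' \le \currenttime$ the connected component of $v$ in the execution with fingerprint $\tout$ is active at moment $\currenttime'$.

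Next, fixing such a $\currenttime'$, I would argue as follows. Since $v$ is active at $\currenttime$ in the execution with fingerprint $\tin$, and $\currenttime' \le \currenttime$, it is active at $\currenttime'$ in that execution, so its connected component $S$ at moment $\currenttime'$ is an active set of the $\tin$-execution. By Lemma~\ref{lm:large_fingerprint_refinement}, the active sets of the $\tin$-execution form a refinement of the active sets of the $\tout$-execution at moment $\currenttime'$, hence there is an active set $S'$ of the $\tout$-execution with $S \subseteq S'$. Because active sets are connected components and $v \in S \subseteq S'$, the set $S'$ must be precisely the connected component of $v$ in the $\tout$-execution at moment $\currenttime'$; in particular that component is active. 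Since $\currenttime'$ was arbitrary, $v$ is active at $\currenttime$ in the $\tout$-execution.

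I expect no genuine obstacle here: the statement is essentially a one-line corollary of Lemma~\ref{lm:large_fingerprint_refinement}. The only points requiring care are to invoke the active-set part of that lemma rather than the connected-component part, and to use the correct time-monotone reading of ``active at moment $\currenttime$'' (as opposed to merely ``$v$'s current component is active'', which is weaker because a deactivated set may later reactivate by merging). Getting these two points right is what makes the deduction clean, and no new machinery beyond the previous lemma is needed.
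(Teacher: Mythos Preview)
Your proposal is correct and uses the same key tool as the paper, namely Lemma~\ref{lm:large_fingerprint_refinement}. The paper's proof is a short contradiction argument at the single moment $\currenttime$, but it tacitly identifies ``$v$ is not active at $\currenttime$'' with ``$v$ is not in any active set at $\currenttime$,'' which is precisely the subtlety you flag (a deactivated vertex can later lie in an active set again after a merge). Your direct argument, ranging over all $\currenttime'\le\currenttime$, handles this cleanly and is in that sense slightly more careful than the paper's own proof; otherwise the two are essentially the same.
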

\begin{proof}
    Assume for contradiction that $v$ is active at moment $\currenttime$ in the execution on $\tin$ but not in the execution on $\tout$.  
    Then $v$ must belong to an active set $S$ in the execution on $\tin$ at time $\currenttime$, but not to any active set in the execution on $\tout$ at that time.  
    However, by Lemma~\ref{lm:large_fingerprint_refinement}, $S$ must be contained in an active set $S'$ in the execution on $\tout$ at the same moment.  
    This implies $v \in S'$, so $v$ is active in the execution on $\tout$, contradicting the assumption.  
\end{proof}

\begin{lemma}
\label{lm:large_fingerprint_superactive}
Let $\tin$ and $\tout$ be fingerprints of two monotonic moat growing algorithms on a graph $\G$, where $\tout$ is larger than $\tin$.  
If two vertices are actively connected in the execution on $\tin$, then they are also actively connected in the execution on $\tout$.
\end{lemma}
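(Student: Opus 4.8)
The plan is to derive this as an immediate consequence of Lemmas~\ref{lm:large_fingerprint_refinement} and~\ref{lm:large_fingerprint_vertex_remains_active}. Let $u$ and $v$ be two vertices that are actively connected in the execution on $\tin$, and let $\currenttime_0$ denote the first moment at which $u$ and $v$ lie in the same connected component in that execution. By the definition of \emph{actively connected}, both $u$ and $v$ are active at every moment $\currenttime < \currenttime_0$ in the execution on $\tin$.

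First I would pin down the moment at which $u$ and $v$ become connected in the execution on $\tout$. At moment $\currenttime_0$ the vertices $u$ and $v$ belong to a common connected component $C$ in the execution on $\tin$. By Lemma~\ref{lm:large_fingerprint_refinement}, the connected components in the execution on $\tin$ form a refinement of those in the execution on $\tout$, so at moment $\currenttime_0$ the set $C$ is contained in a single connected component of the execution on $\tout$; in particular $u$ and $v$ already lie in the same connected component there. Hence there is a moment $\currenttime_0' \le \currenttime_0$ at which $u$ and $v$ first become connected in the execution on $\tout$.

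Next I would check that $u$ and $v$ remain in active sets up to moment $\currenttime_0'$ in the execution on $\tout$. Fix any moment $\currenttime < \currenttime_0'$. Since $\currenttime_0' \le \currenttime_0$ we have $\currenttime < \currenttime_0$, so both $u$ and $v$ are active at moment $\currenttime$ in the execution on $\tin$, and Lemma~\ref{lm:large_fingerprint_vertex_remains_active} then yields that $u$ and $v$ are active at moment $\currenttime$ in the execution on $\tout$ as well. As this holds for every $\currenttime < \currenttime_0'$, neither $u$ nor $v$ has belonged to an inactive set before the two first become connected in the execution on $\tout$, which is exactly the statement that $u$ and $v$ are actively connected in the execution on $\tout$.

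The argument is essentially bookkeeping, so there is no substantial obstacle; the only point needing a little care is the observation that the connection moment in the larger-fingerprint execution can only move earlier ($\currenttime_0' \le \currenttime_0$), so that the activity guarantee supplied by Lemma~\ref{lm:large_fingerprint_vertex_remains_active} at all moments before $\currenttime_0$ already covers all moments before $\currenttime_0'$. (If one wants the stronger version for an arbitrary set of actively connected vertices, the same proof goes through verbatim, taking $\currenttime_0$ to be the moment all of them first lie in one component in the execution on $\tin$ and applying Lemma~\ref{lm:large_fingerprint_refinement} to that common component.)
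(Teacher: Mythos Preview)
Your proof is correct and takes essentially the same approach as the paper: both arguments combine Lemma~\ref{lm:large_fingerprint_refinement} with Lemma~\ref{lm:large_fingerprint_vertex_remains_active}. The paper's version is slightly terser---it picks a single moment $\currenttime$ at which $u,v$ are both active and in a common active set, then applies the active-set refinement part of Lemma~\ref{lm:large_fingerprint_refinement} directly, relying on the fact that a vertex active at $\currenttime$ is active at all earlier moments---whereas you explicitly track the first connection times $\currenttime_0,\currenttime_0'$ and use the connected-components refinement; the substance is the same.
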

\begin{proof}
Assume $u$ and $v$ are actively connected in the execution on $\tin$.  
This means that at some moment $\currenttime$, they lie in the same active set $S$ and both are still active.  

By Lemma~\ref{lm:large_fingerprint_refinement}, in the execution on $\tout$, $u$ and $v$ must belong to an active set $S' \supseteq S$ at the same time $\currenttime$.  
Furthermore, by Lemma~\ref{lm:large_fingerprint_vertex_remains_active}, both $u$ and $v$ are active at time $\currenttime$ in the execution on $\tout$.  
Therefore, $u$ and $v$ are active and lie in the same connected component in the execution on $\tout$, so they are actively connected.
\end{proof}

\begin{lemma}
\label{lm:large_fingerprint_superactive_refinement}
    Let $\tin$ and $\tout$ be fingerprints of two monotonic moat growing algorithms on a graph $\G$, where $\tout$ is larger than $\tin$.
    At any time $\currenttime$ during both executions, the superactive sets derived from the active sets in the execution with fingerprint $\tin$ form a refinement of the superactive sets derived from the active sets in the execution with fingerprint $\tout$.
\end{lemma}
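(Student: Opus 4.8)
The plan is to reduce the claim to two facts already in hand — the active-set refinement of Lemma~\ref{lm:large_fingerprint_refinement} and the activity-persistence of Lemma~\ref{lm:large_fingerprint_vertex_remains_active} — via the following pointwise description of superactive sets: at any moment $\currenttime$, if $S$ is an active set in an execution, then its superactive set equals $\{v \in S : v \text{ is active at moment } \currenttime\}$. Granting this, the lemma is almost immediate, so the substance of the proof lies in establishing the description.

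First I would prove the description directly from the bookkeeping rules for superactive sets (active singletons at the start, union on merges, empty on deactivation), separately for the two executions; write $\core^{\inp}(S)$ and $\core^{\out}(S)$ for the superactive set of $S$ in the execution on $\tin$ and on $\tout$, respectively. For one direction, if $v \in S$ is active at $\currenttime$, then $v$ has never lain in an inactive set, so the connected component of $v$ has been active at every moment up to $\currenttime$; starting from the active singleton $\{v\}$ with superactive set $\{v\}$ and repeatedly taking unions as its component merges with others, $v$ stays in the superactive set of its current component, and since that component is never deactivated the superactive set is never emptied, so $v$ lies in the superactive set of $S$ at time $\currenttime$. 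For the converse, suppose $v$ belonged to some inactive set before $\currenttime$; at that moment its component's superactive set was empty, and from then on, by the rule that a deactivated component can regain activity only by merging with an active one, every superactive set that $v$'s component carries is the superactive set of a \emph{disjoint} active branch, which cannot contain $v$; hence $v$ never re-enters a superactive set, so $v \notin \core^{\inp}(S)$. The moment $\currenttime = 0$ is trivial, as all active sets and superactive sets are singletons.

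With the description in place, fix $\currenttime$ and an active set $S$ in the $\tin$-execution. By Lemma~\ref{lm:large_fingerprint_refinement} there is an active set $S'$ in the $\tout$-execution at the same moment with $S \subseteq S'$. For any $v \in \core^{\inp}(S)$ we have $v \in S \subseteq S'$ and, by the description, $v$ is active at $\currenttime$ in the $\tin$-execution; Lemma~\ref{lm:large_fingerprint_vertex_remains_active} then makes $v$ active at $\currenttime$ in the $\tout$-execution, so $v \in \core^{\out}(S')$ by the description applied there. Thus $\core^{\inp}(S) \subseteq \core^{\out}(S')$, and since distinct active sets at a fixed moment are disjoint connected components, both collections of superactive sets are disjoint families; this is exactly the asserted refinement.

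I expect the only delicate point to be the converse half of the pointwise description — ruling out that an already-deactivated vertex ever reappears in a superactive set — which rests squarely on the ``no reactivation except via merger with an active set'' property of monotonic moat growing, guaranteeing that after deactivation a vertex only ever sits inside superactive sets inherited from disjoint branches. Everything downstream of the description is a one-line deduction from the two cited lemmas. (If one prefers to sidestep the description, the same conclusion follows by an event-by-event induction parallel to the proof of Lemma~\ref{lm:large_fingerprint_refinement}, additionally tracking that each $\tin$-execution superactive set stays inside the superactive set of the containing $\tout$-execution active set; but the route above is shorter.)
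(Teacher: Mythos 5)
Your proof is correct and follows essentially the same route as the paper's: it invokes Lemma~\ref{lm:large_fingerprint_refinement} to find an active set $S' \supseteq S$ in the $\tout$-execution, then applies Lemma~\ref{lm:large_fingerprint_vertex_remains_active} to carry the activity of each $v \in \core^{\inp}(S)$ over to the $\tout$-execution, concluding $v \in \core^{\out}(S')$. The explicit ``pointwise description'' you prove is just an unwinding of Definition of superactive sets (a vertex in $S$ is active at $\currenttime$ iff it was active when $S$ was formed, since activity is never regained), so your extra paragraph is a more verbose rendering of a step the paper treats as immediate; everything else matches.
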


\begin{proof}
    Let $\core^{\inp}$ and $\core^{\out}$ be the functions that return the superactive set of a given active set in the executions on $\tin$ and $\tout$, respectively.
    
    At time $\currenttime$, we claim that for any active set $S$ in the execution on $\tin$, $\core^{\inp}(S) \subseteq \core^{\out}(S')$ for the active set $S' \supseteq S$ in the execution on $\tout$, whose existence is guaranteed by Lemma~\ref{lm:large_fingerprint_refinement}. 
    This is sufficient to show that the superactive sets under $\tin$ form a refinement of those under $\tout$.
    
    Take any vertex $v \in \core^{\inp}(S)$. By definition, $v$ is active at the time $S$ is formed in the execution on $\tin$, and remains active up to $\currenttime$. By Lemma~\ref{lm:large_fingerprint_vertex_remains_active}, $v$ must also be active at $\currenttime$ in the execution on $\tout$. Since $v \in S \subseteq S'$, it follows that $v \in \core^{\out}(S')$.
    This completes the proof.
\end{proof}

\paragraph{Cut-Based Classification.}
Next, we begin by defining the notions of \emph{single-edge set} and \emph{multi-edge set}, originally introduced by~\cite{10.1145/3722551} and later used by~\cite{DBLP:conf/stoc/AhmadiGHJM24}.
These concepts are fundamental to our analysis.

\begin{definition}[Single-Edge Set and Multi-Edge Set]
    For any active set $S \subseteq \V$, we call $S$ a \emph{single-edge set} if it cuts exactly one edge of the optimal solution, i.e., $|\deltaS \cap \OPT| = 1$, and a \emph{multi-edge set} if it cuts more than one edge of the optimal solution, i.e., $|\deltaS \cap \OPT| > 1$.
\end{definition}

While the works of~\cite{10.1145/3722551,DBLP:conf/stoc/AhmadiGHJM24} use these terms to classify sets based on how many edges of the optimal solution they cut, we require a more refined usage. In our setting, we consider specific forests—typically subgraphs of the optimal solution—and classify active sets based on how many edges of that subgraph they cut. Furthermore, we partition the forest based on the types of active sets that color its edges.

\begin{definition}
\label{def:uncolored_single_multi_colored}
    Let $\activesets$ denote the family of all active sets that appear throughout a given monotonic moat growing algorithm.  
    For a forest $F$, define:
    \begin{align*}
        \sactsf &= \{S \in \activesets \mid |\deltaS \cap F| = 1\}, \tag{active sets that cut exactly one edge of $F$} \\
        \mactsf &= \{S \in \activesets \mid |\deltaS \cap F| > 1\}. \tag{active sets that cut more than one edge of $F$}
    \end{align*}
    We then define:
    \begin{itemize}
        \item $\uncolored(F)$ as the total portion of $F$ not colored by any active set,
        \item $\singlecolored(F) = \sum_{S \in \sactsf} \ys$ as the total portion of $F$ colored by sets in $\sactsf$,
        \item $\multicolored(F) = \sum_{S \in \mactsf} |\deltaS \cap F| \cdot \ys$ as the total portion of $F$ colored by sets in $\mactsf$,
        \item $\extra(F) = \uncolored(F) + \multicolored(F)$.
    \end{itemize}
\end{definition}

For a particular execution $(*)$, we denote these quantities by $\uncolored^*$, $\multicolored^*$, and $\extra^*$.

The following corollary follows directly from the above definitions:

\begin{corollary}
\label{cor:F_USM_SE}
    For any monotonic moat growing algorithm and any forest $F$, we can partition $F$ into uncolored, single-colored, and multi-colored portions as follows:
    \[
        \cc(F) = \uncolored(F) + \singlecolored(F) + \multicolored(F) = \singlecolored(F) + \extra(F).
    \]
\end{corollary}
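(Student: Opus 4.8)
The plan is to start from the trivial decomposition $\cc(F)=\sum_{e\in F}\cc_e$ and split each edge's cost into the part colored by active sets during the execution and the part left uncolored. For an edge $e\in F$, I would first argue that the total amount of $e$ that gets colored over the course of the algorithm is exactly $\sum_{S:\, e\in\deltaS}\ys$, and that this quantity never exceeds $\cc_e$. The point is that $e$ is colored only while its two endpoints lie in distinct connected components: whenever an active set $S$ satisfies $e\in\deltaS$, the edge $e$ is not yet fully colored (otherwise it would already have been added to the forest and its endpoints merged) and is therefore actively colored at unit rate for the entire duration $\ys$; and once the endpoints of $e$ lie in a common component — which, in a monotonic moat growing algorithm, happens no later than the moment $e$ becomes fully colored, and is never undone since components only grow (Definition~\ref{def:monotonic}) — no active set cuts $e$ again. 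Hence the uncolored portion of $e$ equals $\cc_e-\sum_{S:\, e\in\deltaS}\ys\ge 0$, and summing over $e\in F$ gives $\uncolored(F)=\sum_{e\in F}\cc_e-\sum_{e\in F}\sum_{S:\, e\in\deltaS}\ys$.

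Next I would swap the order of summation to write $\sum_{e\in F}\sum_{S:\, e\in\deltaS}\ys=\sum_{S\subseteq\V}|\deltaS\cap F|\cdot\ys$, and then restrict the outer sum to active sets, since $\ys=0$ for any $S$ that was never active. Among active sets I would partition by the value of $|\deltaS\cap F|$: sets with $|\deltaS\cap F|=0$ contribute nothing; sets with $|\deltaS\cap F|=1$ are exactly the members of $\sactsf$ and contribute $\sum_{S\in\sactsf}\ys=\singlecolored(F)$; and sets with $|\deltaS\cap F|>1$ are exactly $\mactsf$ and contribute $\sum_{S\in\mactsf}|\deltaS\cap F|\cdot\ys=\multicolored(F)$. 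Plugging this into the previous identity yields $\cc(F)=\uncolored(F)+\singlecolored(F)+\multicolored(F)$, and the second claimed equality $\cc(F)=\singlecolored(F)+\extra(F)$ is then immediate from the definition $\extra(F)=\uncolored(F)+\multicolored(F)$.

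I do not expect a real obstacle here; the only step needing care is the justification that the colored amount of each edge is precisely $\sum_{S:\, e\in\deltaS}\ys$ and is bounded by $\cc_e$, which relies on the ``components only grow'' property of monotonic moat growing algorithms already exploited in the proof of Lemma~\ref{lm:monotonic-forest-twice-growth}. Everything else is bookkeeping — one exchange of summation order and a case split on $|\deltaS\cap F|$ — so the corollary really does follow directly from Definition~\ref{def:uncolored_single_multi_colored}, as asserted.
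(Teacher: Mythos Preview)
Your proposal is correct and matches what the paper intends: the paper gives no proof beyond ``follows directly from the above definitions,'' and your argument is precisely the natural unpacking of Definition~\ref{def:uncolored_single_multi_colored} one would write down. The only substantive step — that the colored amount of each edge $e$ equals $\sum_{S:\,e\in\deltaS}\ys\le\cc_e$ — is handled correctly via the monotonicity of components, and the rest is indeed just the exchange of summation and the case split on $|\deltaS\cap F|$.
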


Next, we prove the monotonicity property of $\extra$.

\begin{lemma}
\label{lm:extra_subgraph}
    For any monotonic moat growing algorithm, and any forests $F$ and $F'$ such that $F'$ is a subgraph of $F$, we have
    \[
        \extra(F') \le \extra(F).
    \]
\end{lemma}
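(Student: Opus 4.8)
The plan is to bound the two summands of $\extra$ separately and show that each is individually monotone under passing to a subgraph: $\uncolored(F') \le \uncolored(F)$ and $\multicolored(F') \le \multicolored(F)$, whence $\extra(F') = \uncolored(F') + \multicolored(F') \le \uncolored(F) + \multicolored(F) = \extra(F)$. It is tempting to instead use the identity $\extra(F) = \cc(F) - \singlecolored(F)$ coming from Corollary~\ref{cor:F_USM_SE}, since $F' \subseteq F$ gives $\cc(F') \le \cc(F)$; but $\singlecolored$ is \emph{not} monotone — an active set that cuts several edges of $F$ can cut only one edge of $F'$, so its growth migrates out of the $\multicolored$ term and into the $\singlecolored$ term — so that identity alone is inconclusive. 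Working with $\uncolored$ and $\multicolored$ directly sidesteps this.

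For $\uncolored$ the argument is immediate. By definition $\uncolored(F)$ is the total length of the uncolored portions of the edges of $F$; writing $u(e) \ge 0$ for the uncolored length of edge $e$ (a fixed nonnegative quantity of the fixed execution), we have $\uncolored(F) = \sum_{e \in F} u(e)$, and likewise for $F'$. Since $F' \subseteq F$ this is a sum over a subset of the same nonnegative terms, so $\uncolored(F') = \sum_{e \in F'} u(e) \le \sum_{e \in F} u(e) = \uncolored(F)$.

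The crux is $\multicolored$, and the plan is a per–active-set comparison. Fix an arbitrary active set $S$, and let $k = |\deltaS \cap F|$ and $k' = |\deltaS \cap F'|$; since $F' \subseteq F$ we have $k' \le k$. The contribution of $S$ to $\multicolored(F')$ is $k'\,\ys$ if $k' \ge 2$ (i.e.\ $S$ is a multi-edge set for $F'$) and $0$ otherwise, and similarly its contribution to $\multicolored(F)$ is $k\,\ys$ if $k \ge 2$ and $0$ otherwise. If $k' \le 1$, then $S$ contributes $0$ to $\multicolored(F')$, which is at most its (nonnegative) contribution to $\multicolored(F)$. If $k' \ge 2$, then also $k \ge k' \ge 2$, so $S$ contributes $k\,\ys \ge k'\,\ys$ to $\multicolored(F)$, which is at least its contribution to $\multicolored(F')$. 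In either case $S$'s contribution to $\multicolored(F')$ is at most its contribution to $\multicolored(F)$; summing over all active sets gives $\multicolored(F') \le \multicolored(F)$, and adding the $\uncolored$ inequality completes the proof.

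The main (and only mildly delicate) point — exactly what a one-line "$\extra = \cc - \singlecolored$" argument fails to capture — is the case where an active set drops from being a multi-edge set of $F$ to a single-edge (or zero-edge) set of $F'$; the case split on $k' \le 1$ versus $k' \ge 2$ above is precisely what absorbs this, using that a multi-edge set of $F'$ is automatically a multi-edge set of $F$ with at least as large a cut.
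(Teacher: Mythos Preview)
Your proof is correct and follows essentially the same approach as the paper: both bound $\uncolored$ and $\multicolored$ separately, using that $F' \subseteq F$ makes the uncolored sum a subset-sum of nonnegative terms and that any multi-edge set for $F'$ is also a multi-edge set for $F$ with at least as many cut edges. Your per-active-set case split on $k' \le 1$ versus $k' \ge 2$ is just a slightly more explicit version of the paper's observation that $\macts_{F'} \subseteq \mactsf$.
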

\begin{proof}
    First, note that $\uncolored(F') \le \uncolored(F)$ since every edge in $F'$ is also in $F$, and any uncolored portion in $F'$ must also be uncolored in $F$.

    Furthermore, we have $\macts_{F'} \subseteq \mactsf$, because if an active set cuts more than one edge in $F'$, it must also cut more than one edge in $F$. Therefore, any portion of $F'$ colored by $\macts_{F'}$ is also part of the portion of $F$ colored by $\mactsf$.
    It follows that $\multicolored(F') \le \multicolored(F)$. 
    
    Combining both bounds, we obtain:
    \[
        \extra(F') = \uncolored(F') + \multicolored(F') \le \uncolored(F) + \multicolored(F) = \extra(F). \qedhere
    \]
\end{proof}

The next lemma shows that $\extra(F)$ provides a lower bound on the cost of the optimal solution that exceeds the total growth of the active sets involved in coloring it.

\begin{lemma}
\label{lm:ys_le_T_extra}
    For any monotonic moat growing algorithm and any forest $F$, we have
    $$\sum_{\substack{S\subseteq \V \\ S \odot F}} \ys \le \cc(F) - \dfrac{\extra(F)}{2}.$$
\end{lemma}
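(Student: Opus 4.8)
The plan is to reduce the statement to a one-line counting inequality about multi-edge sets, using the decomposition of $\cc(F)$ from Corollary~\ref{cor:F_USM_SE}. First I would observe that only active sets contribute to the left-hand sum: every $S \subseteq \V$ with $\ys > 0$ was an active set at some moment of the algorithm. Moreover, for such a set the condition $S \odot F$ is equivalent to $|\deltaS \cap F| \ge 1$. Indeed, if $S$ cuts some connected component $C$ of $F$, then (since $C$ is connected and $S$ contains some but not all of $V(C)$) some edge of $C$, hence of $F$, has exactly one endpoint in $S$; conversely, any edge of $F$ in $\deltaS$ lies in a component of $F$ that $S$ cuts. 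Thus the index set of the left-hand sum is precisely $\sactsf \cup \mactsf$.

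Next I would split the sum accordingly, using the definition of $\singlecolored(F)$:
\[
\sum_{\substack{S\subseteq \V\\ S\odot F}} \ys \;=\; \sum_{S\in\sactsf}\ys \;+\; \sum_{S\in\mactsf}\ys \;=\; \singlecolored(F) \;+\; \sum_{S\in\mactsf}\ys .
\]
It then remains to bound $\sum_{S\in\mactsf}\ys$. Since every $S\in\mactsf$ satisfies $|\deltaS\cap F|\ge 2$ by definition,
\[
\multicolored(F) \;=\; \sum_{S\in\mactsf} |\deltaS\cap F|\cdot\ys \;\ge\; 2\sum_{S\in\mactsf}\ys ,
\]
so $\sum_{S\in\mactsf}\ys \le \tfrac12\,\multicolored(F) \le \tfrac12\,\extra(F)$, where the last step uses $\extra(F)=\uncolored(F)+\multicolored(F)\ge\multicolored(F)$.

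Finally I would combine these with Corollary~\ref{cor:F_USM_SE}, which gives $\singlecolored(F) = \cc(F) - \extra(F)$, to conclude
\[
\sum_{\substack{S\subseteq \V\\ S\odot F}} \ys \;\le\; \singlecolored(F) + \tfrac12\,\extra(F) \;=\; \cc(F) - \extra(F) + \tfrac12\,\extra(F) \;=\; \cc(F) - \frac{\extra(F)}{2} .
\]
There is no serious obstacle here. The only point that needs a little care is the equivalence ``$S\odot F \iff |\deltaS\cap F|\ge 1$'' for active sets together with the resulting identification of the summation range (and the reminder that sets which are never active contribute $\ys=0$); everything else is the two-line observation that each multi-edge set is counted at least twice in $\multicolored(F)$.
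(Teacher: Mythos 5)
Your proof is correct and follows essentially the same route as the paper's: partition the active sets cutting $F$ into single-edge and multi-edge classes, observe that each multi-edge set is overcounted by a factor of at least two in $\multicolored(F)$, then close with Corollary~\ref{cor:F_USM_SE}. The extra care you take in identifying the index set with $\sactsf \cup \mactsf$ (via $S \odot F \iff |\deltaS \cap F| \ge 1$ and $\ys = 0$ for non-active sets) is correct but left implicit in the paper's version.
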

\begin{proof}
    We partition the active sets that cut $F$ into those that cut exactly one edge and those that cut at least two edges:
    \begin{align*}
        \sum_{\substack{S\subseteq \V \\ S \odot F}} \ys &= \sum_{\substack{S\subseteq \V \\ |\deltaS \cap F| = 1}} \ys + \sum_{\substack{S\subseteq \V \\ |\deltaS \cap F| \ge 2}} \ys\\
        &\le \sum_{S\in \sactsf} \ys + \sum_{S\in \mactsf} \dfrac{|\deltaS \cap F|}{2}\cdot\ys \\
        &= \singlecolored(F) + \dfrac{\multicolored(F)}{2} \\
        &\le \singlecolored(F) + \dfrac{\extra(F)}{2} \tag{Definition~\ref{def:uncolored_single_multi_colored}} \\
        &= \cc(F) - \dfrac{\extra(F)}{2}. \tag{Corollary~\ref{cor:F_USM_SE}}
    \end{align*}
\end{proof}

Next, we prove that for any tree $T$, every active set contributing to $\singlecolored(T)$ must cut the set of leaves of $T$ (see Figure~\ref{fig:tree-cuts}).
We then use this fact in Lemma~\ref{lm:coloring_exactly_one_edge} to derive a lower bound on $\multicolored(T)$.
This observation is crucial, as it implies that any active set that does not cut the set of leaves of a tree in the optimal solution must color more than one of its edges.
Identifying such active sets enables us to establish a lower bound on the cost of the optimal solution that exceeds the total growth of the active sets involved in coloring it, similar in spirit to the above lemma.

\begin{lemma}
\label{lm:cut_one_edge_cut_leaves}
    Let $T$ be a tree with leaf set $S' \subseteq \V$, and let $S \subseteq \V$ be a set such that $|T \cap \deltaS| = 1$.
    Then, $S \odot S'$.
\end{lemma}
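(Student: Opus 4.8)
The plan is to exploit the fact that in a tree every edge is a bridge. Write $e = (u,v)$ for the unique edge in $T \cap \delta(S)$; since $e \in \delta(S)$, exactly one of its endpoints lies in $S$, say $u \in S$ and $v \notin S$. Removing $e$ from $T$ splits it into exactly two subtrees $T_1 \ni u$ and $T_2 \ni v$ (this also forces $T$ to have at least two vertices, so the hypothesis is non-degenerate). First I would check that $V(T_1) \subseteq S$ and $V(T_2) \cap S = \emptyset$: any edge of $T$ with exactly one endpoint in $S$ lies in $T \cap \delta(S) = \{e\}$, so neither $T_1$ nor $T_2$ (which do not contain $e$) has an edge crossing the cut $(S, \V \setminus S)$; being connected, each of $T_1, T_2$ lies entirely on one side, and the side is pinned down by $u \in S$, $v \notin S$.

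The key step is then to show that each $T_i$ contains a vertex that is a leaf of the whole tree $T$, hence an element of $S'$. For $T_1$, pick a vertex $w$ of $T_1$ at maximum distance from $u$ within $T_1$ (if $V(T_1) = \{u\}$, take $w = u$). Such a $w$ is a leaf of $T_1$. Its neighbors in $T$ are its neighbors in $T_1$ together with, possibly, the endpoint reached via $e$; but $e$ is incident to $u$ only. So if $w \neq u$, then $w$ has a single neighbor in $T$ (its parent in $T_1$), and if $w = u$ then $u$'s only incident $T$-edge is $e$ — in both cases $w$ is a leaf of $T$. Applying the same argument to $T_2$ gives a leaf $w'$ of $T$ with $w' \in V(T_2)$.

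Finally I would assemble the two requirements for $S \odot S'$. Since $w \in V(T_1) \subseteq S$ and $w \in S'$, we get $S \cap S' \neq \emptyset$. Since $w' \in V(T_2)$ and $V(T_2) \cap S = \emptyset$ while $w' \in S'$, we get $w' \in S' \setminus S$, so $S' \nsubseteq S$. Together these give $S \odot S'$, as desired.

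I expect the only point needing a little care to be the claim that each $T_i$ contributes a genuine leaf of $T$ (as opposed to merely a leaf of $T_i$), together with the degenerate single-vertex case of $T_i$; once the "$e$ is the only crossing edge" observation is in hand, the bridge property of tree edges does essentially all the work and the rest is routine.
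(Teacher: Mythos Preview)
Your proof is correct and follows essentially the same approach as the paper: remove the unique crossing edge $e$, argue that the two resulting subtrees lie on opposite sides of $S$, and then show each subtree contributes a leaf of $T$. The only cosmetic difference is how you locate that leaf --- you take a farthest vertex from the endpoint of $e$, whereas the paper uses the counting observation that a subtree with more than one vertex has at least two leaves, hence one distinct from the endpoint of $e$; both arguments handle the single-vertex subtree identically.
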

\begin{proof}
    Let $e = \{u, v\} \in T \cap \deltaS$ be the unique edge of $T$ cut by $S$.
    Since $S$ cuts $T$, it must contain some, but not all, of the vertices of $T$.
    Removing $e$ splits $T$ into two connected components, each a subtree of $T$, such that one component lies entirely inside $S$, and the other lies entirely outside.
    Suppose this were not the case: that is, if one of the components contained vertices both inside and outside of $S$, then $S$ would cut at least one additional edge of $T$, contradicting the assumption that $|\deltaS \cap T| = 1$.
    On the other hand, if both components were on the same side of $S$ (both fully inside or both fully outside), then $S$ would not cut $e$ at all.
    Therefore, the two components must lie entirely on opposite sides of $S$.

    Now consider the two components created by removing $e$.
    If a component contains more than one vertex, it must have at least two leaves, and at least one of them is different from $u$ and $v$.
    Such a vertex is also a leaf in the original tree $T$, i.e., a member of $S'$.
    If a component consists of a single vertex, then that vertex is necessarily a leaf in $T$ and thus also belongs to $S'$.
    Therefore, one vertex from $S'$ lies inside $S$ and another lies outside $S$, implying $S \odot S'$.
\end{proof}

\tikzset{
  triangle/.style args={#1}{
    insert path={
        ++(90:1.2*#1)
      -- ++($(90:-1.2*#1)+(210:1.2*#1)$)
      -- ++($(210:-1.2*#1)+(330:1.2*#1)$)
      -- cycle %
    }
  }
}
\tikzset{
  hexagon/.style args={#1}{
    insert path={
      ++(90:#1)
      -- ++($(90:-#1)+(30:#1)$)
      -- ++($(30:-#1)+(-30:#1)$)
      -- ++($(-30:-#1)+(-90:#1)$)
      -- ++($(-90:-#1)+(-150:#1)$)
      -- ++($(-150:-#1)+(150:#1)$)
      -- cycle
    }
  }
}
\tikzset{
  pentagon/.style args={#1}{
    insert path={
      ++(90:1.05*#1)
      -- ++($(90:-1.05*#1)+(162:1.05*#1)$)
      -- ++($(162:-1.05*#1)+(234:1.05*#1)$)
      -- ++($(234:-1.05*#1)+(306:1.05*#1)$)
      -- ++($(306:-1.05*#1)+(18:1.05*#1)$)
      -- cycle
    }
  }
}
\tikzset{
  square/.style args={#1}{
    insert path={
        ++(45:1.1*#1)
      -- ++($(45:-1.1*#1)+(135:1.1*#1)$)
      -- ++($(135:-1.1*#1)+(225:1.1*#1)$)
      -- ++($(225:-1.1*#1)+(315:1.1*#1)$)
      -- cycle %
    }
  }
}
\tikzset{
  star/.style args={#1}{
    insert path={
      ++(90:1.1*#1)
      \foreach \a in {90,162,234,306,378} {
        -- ++($(\a:-1.1*#1) + (\a+36:0.5*#1)$)
        -- ++($(\a+36:-0.5*#1) + (\a+72:1.1*#1)$)
      }
      -- cycle
    }
  }
}

\begin{figure}[t]
    \centering
    \begin{subfigure}{0.32\textwidth}
        \centering
        \begin{tikzpicture}[
  scale=0.7,
  level distance=1.5cm, 
  level 1/.style={sibling distance=3cm},
  level 2/.style={sibling distance=1.5cm},
  level 3/.style={sibling distance=3cm},
  edge from parent/.style={draw=\col} 
        ]
\draw[White] (0,-3.5) -- (0, 0.5);
\def\dem{Red!70}
\def\ter{White}
\def\col{Black!70}
\def\noder{0.1cm}

\def\n{4}
\def\m{4}
\pgfmathtruncatemacro{\last}{\n+1}

\def\tercl#1{%
  \ifcase#1
    White\or
    olive!20\or
    Green!20\or
    Sepia!20\or
    Plum!20\or
    Goldenrod!20\or
  \fi
}
\def\tersh#1{%
  \ifcase#1
    none\or
    star\or
    square\or
    pentagon\or
    triangle\or
    hexagon\or
  \fi
}

\tikzset{
    center arc/.style args={#1:#2:#3}{
        insert path={+ (#1:#3) arc (#1:#1+#2:#3)}
    }
}
\node[circle,draw=\col,fill=\ter,inner sep=0pt,minimum size=2*\noder] (root) {}
    child {node[circle,draw=\col,fill=\ter,inner sep=0pt,minimum size=2*\noder] (A) {}
      child {node[circle,draw=\col,fill=\col,inner sep=0pt,minimum size=2*\noder] (B) {}}
      child {node[circle,draw=\col,fill=\col,inner sep=0pt,minimum size=2*\noder] (C) {}}
    }
    child {node[circle,draw=\col,fill=\ter,inner sep=0pt,minimum size=2*\noder] (D) {}
      child {node[circle,draw=\col,fill=\col,inner sep=0pt,minimum size=2*\noder] (E) {}}
      child {node[circle,draw=\col,fill=\col,inner sep=0pt,minimum size=2*\noder] (F) {}}
    };
\def\thick{2pt}
\draw[\col, line width=\thick] (A) -- (root);

\def\margin{0.5cm}
\draw[\dem, line width=0.3pt, 
    dash pattern=on 1.2pt off 0.4pt] ($(B) + (150:\margin)$) arc(150:270:\margin) -- ($(C) + (270:\margin)$) arc(-90:30:\margin) -- ($(A) + (30:\margin)$) arc(30:150:\margin) -- cycle;

\node[\col, above left=-2pt] at ($(A)!0.5!(root)$) {$e$};

\end{tikzpicture}
\caption{}
\label{fig:cut-only-one}
\end{subfigure}
\hfill
\begin{subfigure}{0.32\textwidth}
        \centering
        \begin{tikzpicture}[
  scale=0.7,
  level distance=1.5cm, 
  level 1/.style={sibling distance=3cm},
  level 2/.style={sibling distance=1.5cm},
  level 3/.style={sibling distance=3cm},
  edge from parent/.style={draw=\col} 
        ]
\draw[White] (0,-3.5) -- (0, 0.5);
\def\dem{Red!70}
\def\ter{White}
\def\col{Black!70}
\def\noder{0.07cm}

\def\n{4}
\def\m{4}
\pgfmathtruncatemacro{\last}{\n+1}

\def\tercl#1{%
  \ifcase#1
    White\or
    olive!20\or
    Green!20\or
    Sepia!20\or
    Plum!20\or
    Goldenrod!20\or
  \fi
}
\def\tersh#1{%
  \ifcase#1
    none\or
    star\or
    square\or
    pentagon\or
    triangle\or
    hexagon\or
  \fi
}

\tikzset{
    center arc/.style args={#1:#2:#3}{
        insert path={+ (#1:#3) arc (#1:#1+#2:#3)}
    }
}
\node[circle,draw=\col,fill=\ter,inner sep=0pt,minimum size=2*\noder] (root) {}
    child {node[circle,draw=\col,fill=\ter,inner sep=0pt,minimum size=2*\noder] (A) {}
      child {node[circle,draw=\col,fill=\col,inner sep=0pt,minimum size=2*\noder] (B) {}}
      child {node[circle,draw=\col,fill=\col,inner sep=0pt,minimum size=2*\noder] (C) {}}
    }
    child {node[circle,draw=\col,fill=\ter,inner sep=0pt,minimum size=2*\noder] (D) {}
      child {node[circle,draw=\col,fill=\col,inner sep=0pt,minimum size=2*\noder] (E) {}}
      child {node[circle,draw=\col,fill=\col,inner sep=0pt,minimum size=2*\noder] (F) {}}
    };
\def\thick{2pt}
\draw[\col, line width=\thick] (A) -- (root);
\draw[\col, line width=\thick] (E) -- (D);
\draw[\col, line width=\thick] (D) -- (F);

\def\margin{0.5cm}
\draw[\dem, line width=0.3pt, 
    dash pattern=on 1.2pt off 0.4pt] ($(B) + (150:\margin)$) arc(150:270:\margin) -- ($(F) + (270:\margin)$) arc(-90:60:\margin) to[out=150, in=-20] ($(A) + (70:\margin)$) arc(70:150:\margin) -- cycle;
\end{tikzpicture}
\caption{}
\label{fig:cut-all-leaves}
\end{subfigure}
\hfill
\begin{subfigure}{0.32\textwidth}
        \centering
        \begin{tikzpicture}[
  scale=0.7,
  level distance=1.5cm, 
  level 1/.style={sibling distance=3cm},
  level 2/.style={sibling distance=1.5cm},
  level 3/.style={sibling distance=3cm},
  edge from parent/.style={draw=\col} 
        ]
\draw[White] (0,-3.5) -- (0, 0.5);
\def\dem{Red!70}
\def\ter{White}
\def\col{Black!70}
\def\noder{0.07cm}

\def\n{4}
\def\m{4}
\pgfmathtruncatemacro{\last}{\n+1}

\def\tercl#1{%
  \ifcase#1
    White\or
    olive!20\or
    Green!20\or
    Sepia!20\or
    Plum!20\or
    Goldenrod!20\or
  \fi
}
\def\tersh#1{%
  \ifcase#1
    none\or
    star\or
    square\or
    pentagon\or
    triangle\or
    hexagon\or
  \fi
}

\tikzset{
    center arc/.style args={#1:#2:#3}{
        insert path={+ (#1:#3) arc (#1:#1+#2:#3)}
    }
}
\node[circle,draw=\col,fill=\ter,inner sep=0pt,minimum size=2*\noder] (root) {}
    child {node[circle,draw=\col,fill=\ter,inner sep=0pt,minimum size=2*\noder] (A) {}
      child {node[circle,draw=\col,fill=\col,inner sep=0pt,minimum size=2*\noder] (B) {}}
      child {node[circle,draw=\col,fill=\col,inner sep=0pt,minimum size=2*\noder] (C) {}}
    }
    child {node[circle,draw=\col,fill=\ter,inner sep=0pt,minimum size=2*\noder] (D) {}
      child {node[circle,draw=\col,fill=\col,inner sep=0pt,minimum size=2*\noder] (E) {}}
      child {node[circle,draw=\col,fill=\col,inner sep=0pt,minimum size=2*\noder] (F) {}}
    };
\def\thick{2pt}
\draw[\col, line width=\thick] (A) -- (C);
\draw[\col, line width=\thick] (A) -- (B);
\draw[\col, line width=\thick] (root) -- (D);

\def\margin{0.5cm}
\draw[\dem, line width=0.3pt, 
    dash pattern=on 1.2pt off 0.4pt] ($(A) + (135:\margin)$) arc(135:240:\margin) to[out=330, in=-90] ($(root) + (0:\margin)$) arc(0:135:\margin) -- cycle;

\end{tikzpicture}
\caption{}
\label{fig:cut-no-leaf}
\end{subfigure}
\caption{Tree $T$ with three subsets of vertices.  
(a) A set that cuts the leaves of $T$ can cut only one edge of $T$.  
(b, c) A set that contains all leaves of $T$ (b) or none of them (c) cannot cut only one edge of $T$.
}
\label{fig:tree-cuts}
\end{figure} 
\begin{lemma}
\label{lm:coloring_exactly_one_edge}
    For any monotonic moat growing algorithm and any tree $T$ with leaves $S' \subseteq \V$, we have
    \[
        \extra(T) \ge \cc(T) - \sum_{\substack{S \subseteq \V \\ S \odot S'}} \ys.
    \]
\end{lemma}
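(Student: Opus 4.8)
I would reduce this statement to Lemma~\ref{lm:cut_one_edge_cut_leaves} via the decomposition in Corollary~\ref{cor:F_USM_SE}. Applying that corollary with $F = T$ gives $\cc(T) = \singlecolored(T) + \extra(T)$, so the claimed inequality $\extra(T) \ge \cc(T) - \sum_{S \odot S'} \ys$ is equivalent to
\[
\singlecolored(T) \;\le\; \sum_{\substack{S \subseteq \V \\ S \odot S'}} \ys .
\]
Hence the whole proof amounts to establishing this reformulated bound on $\singlecolored(T)$.

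To prove it, recall from Definition~\ref{def:uncolored_single_multi_colored} that $\singlecolored(T)$ is exactly the sum of $\ys$ over all active sets $S$ with $|\deltaS \cap T| = 1$. Since $\ys \ge 0$ for every $S \subseteq \V$ (growth durations are nonnegative), it suffices to show that the index set $\{S : |\deltaS \cap T| = 1\}$ is contained in $\{S : S \odot S'\}$, where $S'$ is the leaf set of $T$. But this inclusion is precisely the content of Lemma~\ref{lm:cut_one_edge_cut_leaves}: any set cutting exactly one edge of $T$ must cut its leaf set. Summing $\ys$ over the smaller index set therefore yields a quantity no larger than the sum over the larger one, which is the reformulated inequality; combining this with the first paragraph gives the lemma.

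There is essentially no substantive obstacle here: the combinatorial heart of the argument is carried out entirely in Lemma~\ref{lm:cut_one_edge_cut_leaves}, and the present statement merely repackages it through the $\uncolored/\singlecolored/\multicolored$ partition of $\cc(T)$. The only points that warrant a moment's care are bookkeeping — confirming that $\singlecolored(T)$ is literally the $\ys$-sum over active sets cutting exactly one edge of $T$, that enlarging the index set from those active sets to all subsets $S$ with $S \odot S'$ only introduces nonnegative (indeed zero, for non-active $S$) terms, and that it is this containment of index sets, rather than equality, that accounts for the inequality typically being strict, since $S \odot S'$ also admits multi-edge active sets contributing to the right-hand side.
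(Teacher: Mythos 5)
Your proposal is correct and takes essentially the same route as the paper: both reduce the claim, via the decomposition $\cc(T) = \singlecolored(T) + \extra(T)$ of Corollary~\ref{cor:F_USM_SE}, to the inequality $\singlecolored(T) \le \sum_{S \odot S'} \ys$, and both obtain that inequality by applying Lemma~\ref{lm:cut_one_edge_cut_leaves} to show every active set counted in $\singlecolored(T)$ cuts the leaf set $S'$, then enlarging the index set of a nonnegative sum.
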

\begin{proof}
    We first show that 
    \[
        \singlecolored(T) \le \sum_{\substack{S \subseteq \V \\ S \odot S'}} \ys.
    \]
    By Lemma~\ref{lm:cut_one_edge_cut_leaves}, for any active set $S \in \sactsf$, we have $S \odot S'$.
    Since $\singlecolored(T)$ is the total portion of $T$ colored by active sets in $\sactsf$, and each such active set colors exactly one edge of $T$, we can conclude the above inequality.

    Now, using Corollary~\ref{cor:F_USM_SE}, we have
    $$\cc(T) = \extra(T) + \singlecolored(T) \le \extra(T) +  \sum_{\substack{S\subseteq \V \\ S\odot S'}} \ys,$$
    which completes the proof.
\end{proof}

The next lemma shows that if an active set cuts exactly one edge of a tree in the optimal solution, then removing that edge only disconnects those pairs that are cut by the active set itself.  
We use this property for two purposes.  
First, we show that if an active set does not contain any unsatisfied vertices, then it cannot cut exactly one edge of the optimal solution; otherwise, we could remove the edge it cuts and obtain a more optimal solution that satisfies all demands, leading to a contradiction.  
Second, we use this lemma to argue that if we remove edges cut by certain special single-edge active sets, then the only unsatisfied pairs are those cut by the same active set.  
This is useful because if we can find an alternative structure to satisfy those pairs, we can remove the corresponding edges from the optimal solution and obtain a lower-cost solution for the remaining pairs.

\begin{lemma}
\label{lm:remove_one_edge}
    Let $S \subseteq \V$ be a subset of vertices such that, in a connected component of the optimal solution $\opti$ (with tree $\otree$), $S$ cuts exactly one edge of that component, denoted by $e$.
    That is, $\deltaS \cap \otree = \{e\}$.
    Then, removing $e$ only violates pair demands for those pairs that have one endpoint in $\unsatisfiedS$.
\end{lemma}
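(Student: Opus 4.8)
The plan is to reduce the statement to a purely structural fact about the tree $\otree$ and then read off the conclusion about demands. Write $e=\{u,v\}$. Since $e\in\deltaS$, exactly one endpoint of $e$ lies in $S$; assume without loss of generality that $u\in S$ and $v\notin S$. Deleting $e$ from the tree $\otree$ splits it into exactly two subtrees, $T_1$ containing $u$ and $T_2$ containing $v$.

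The first step is to show $V(T_1)\subseteq S$ and $V(T_2)\cap S=\emptyset$. This is essentially the argument already used in the proof of Lemma~\ref{lm:cut_one_edge_cut_leaves}: if $T_1$ contained both a vertex of $S$ and a vertex outside $S$, then, being connected, $T_1$ would contain an edge of $\otree$ with exactly one endpoint in $S$, i.e.\ a second edge of $\otree$ in $\deltaS$, contradicting $\deltaS\cap\otree=\{e\}$. Hence $V(T_1)$ lies entirely on one side of the cut $S$, and since $u\in V(T_1)\cap S$ we conclude $V(T_1)\subseteq S$. The symmetric argument applied to $T_2$, together with $v\in V(T_2)$ and $v\notin S$, gives $V(T_2)\cap S=\emptyset$.

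The second step identifies the demands that break when $e$ is removed from $\OPT$. Because the fixed optimal solution is feasible and is a forest, every vertex $w$ lies in the same connected component of $\OPT$ as its pair, and the unique path between them in $\OPT$ is the path inside the tree of that component. Consequently, deleting $e$ can only disconnect a pair whose two endpoints lie in $\opti$ on opposite sides of $e$, namely pairs $\{w,\pair_w\}$ with, say, $w\in V(T_1)$ and $\pair_w\in V(T_2)$; every other pair, whether in another component of $\OPT$ or in $\opti$ but on a single side of $e$, stays connected. For such a violated pair, Step~1 gives $w\in V(T_1)\subseteq S$ and $\pair_w\in V(T_2)\subseteq\V\setminus S$, so $\pair_w\notin S$; by Definition~\ref{def:unsatisfied} this means $w\in\unsatisfiedS$. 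Thus every pair violated by removing $e$ has an endpoint in $\unsatisfiedS$, as claimed.

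I do not expect a genuine obstacle here: once Step~1 is in place, Step~2 is essentially bookkeeping. The only points requiring care are the structural facts about $\OPT$ that are used silently — that $\otree$ is a tree, so removing a single edge yields exactly two subtrees, and that $\OPT$ is a forest in which each demand is contained in one component, so that a pair separated within $\otree$ cannot be rerouted through another part of $\OPT$. The fact that $S$ may also contain vertices outside $\opti$ is harmless, since those vertices play no role in either step.
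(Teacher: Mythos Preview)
Your proof is correct and follows essentially the same approach as the paper's own proof: both argue, as in Lemma~\ref{lm:cut_one_edge_cut_leaves}, that removing $e$ splits $\otree$ into two subtrees lying entirely inside and entirely outside $S$, and then observe that any pair disconnected by this removal has one endpoint in $S$ and its partner outside $S$, hence lies in $\unsatisfiedS$. Your write-up is in fact more explicit about the structural facts (e.g., that $\OPT$ is a forest so no rerouting through another component is possible), which the paper leaves implicit.
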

\begin{proof}
    Similar to the proof of Lemma~\ref{lm:cut_one_edge_cut_leaves}, removing $e$ splits $\otree$ into two subtrees, one entirely inside $S$ and the other entirely outside.
    This means a pair $\{v, \pairv\}$ is disconnected by the removal of $e$ only if $v$ and $\pairv$ lie on opposite sides of the cut. Without loss of generality, assume $v \in S$. Then $\pairv \notin S$, and so $v \in \unsatisfiedS$, completing the proof.
\end{proof}

The next lemma provides a lower bound on the cost of the optimal solution based on the total portion of it colored by active sets.

\begin{lemma}
\label{lm:tree-bound-by-activeset-num-cuts}
    For any monotonic moat growing algorithm and any forest $F$, the cost of $F$ can be bounded in terms of its coloring by the active sets of the algorithm as follows:
    $$\cc(F) \ge \sum_{S\subseteq \V} |\deltaS\cap F| \cdot \ys.$$
\end{lemma}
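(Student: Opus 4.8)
The plan is to prove the inequality by a double-counting argument that reduces it to a single-edge statement. First I would interchange the order of summation to write
$$\sum_{S\subseteq\V}|\deltaS\cap F|\cdot\ys \;=\; \sum_{S\subseteq\V}\ \sum_{e\in \deltaS\cap F}\ys \;=\; \sum_{e\in F}\ \sum_{\substack{S\subseteq\V\\ e\in\deltaS}}\ys,$$
so that, since $\cc(F)=\sum_{e\in F}\cc_e$, it suffices to establish the per-edge bound $\sum_{S:\,e\in\deltaS}\ys \le \cc_e$ for every edge $e\in\E$ and then sum it over $e\in F$.

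For the per-edge bound I would argue directly from the coloring semantics of a monotonic moat growing algorithm. Fix $e=\{u,v\}$. Because $\deltaS$ depends only on the vertex set $S$, once a set $S$ with $e\in\deltaS$ is active it keeps $e$ as an adjacent edge and colors it at unit rate for the entire duration $\ys$ it stays active; moreover, during $S$'s active life $e$ cannot already lie in $\F$, for then $u$ and $v$ would be in the same component and no set would cut $e$. Hence $\sum_{S:\,e\in\deltaS}\ys$ equals exactly the total length of $e$ that is ever colored. Since every portion of $e$ is colored at most once and the total length of $e$ is $\cc_e$, this colored length can never exceed $\cc_e$, giving the claim.

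The only point needing care — and the one I expect to be the main obstacle — is justifying that no portion of $e$ is colored twice, so that the contributions $\ys$ of distinct active sets genuinely add up inside an interval of length $\cc_e$. This follows from two facts: (i) the instant the colored length of $e$ reaches $\cc_e$ the algorithm adds $e$ to $\F$ and merges the components of $u$ and $v$, after which no active set contains exactly one endpoint of $e$ and the coloring of $e$ halts permanently; and (ii) by the laminarity of the active sets (Corollary~\ref{cor:active_sets_laminar}), the active sets containing $u$ but not $v$ form a chain, as do those containing $v$ but not $u$, so the two chains color $e$ inward from opposite endpoints and the colored pieces are pairwise disjoint. Note this statement is precisely the one-sided version of the identity $\sum_{S:\,e\in\deltaS}\ys = \cc_e$ used in the proof of Lemma~\ref{lm:monotonic-forest-twice-growth}, now valid for an arbitrary forest $F$ whose edges need not be fully colored.
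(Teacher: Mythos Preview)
Your proof is correct and follows essentially the same coloring argument as the paper: active sets color their adjacent edges of $F$ at unit rate, each portion of an edge is colored at most once, so the total coloring $\sum_{S}|\deltaS\cap F|\cdot\ys$ cannot exceed $\cc(F)$. The paper states this in two sentences without the explicit sum interchange or the laminarity justification; your per-edge reduction and the discussion of why no portion is double-colored are a more detailed unpacking of the same idea.
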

\begin{proof}
    For an active set $S \subseteq \V$, the edges of $F$ it colors are precisely those in $\deltaS \cap F$.
    If $S$ grows for a duration of $\ys$, then it contributes $|\deltaS \cap F| \cdot \ys$ units to the total colored portion of $F$.
    Since each portion of $F$ can be colored at most once, summing over all active sets gives a lower bound on $\cc(F)$, completing the proof.
\end{proof}

\paragraph{Assignment.}
Another important concept that we introduce is \emph{assignment}.
The idea behind assignment is to allocate the growth of an active set to one or more of its vertices.
This approach allows us to bound the total growth of all active sets using the assigned values.
An assignment specifies how the growth of each active set is distributed among its vertices. 
Specifically, consider a particular moment during the execution of a monotonic moat growing algorithm. 
At this moment, each active set \(S\) assigns a fraction (possibly zero) of its growth to each of its vertices. 
It is worth noting that these fractions lie between 0 and 1, but the total fraction assigned by an active set is not necessarily bounded by 1.

\begin{definition} 
\label{def:assignment}
    We define \(\gr: 2^{\V} \times \V \rightarrow \mathbb{R}_{\ge 0}\) as an assignment where 
    \(\ysv\) represents the portion of growth assigned by the set \(S\) to a vertex \(v \in \core(S)\). 
    We abuse the notation and write \(\gr_v\) to represent the total growth assigned to vertex \(v\). Consequently, 
    $$
        \gr_v = \sum_{S \subseteq \V} \ysv.
    $$

    We further use the same notation and define \(\gr_{\max}(S)\) for a subset of vertices \(S\) to denote the maximum value of \(\gr_v\) for any \(v \in S\), i.e.,
    $$
        \gr_{\max}(S) = \max_{v \in S} \gr_v.
    $$
\end{definition}

Generally, we have two types of assignments, defined as follows:

\begin{definition}[Exclusive Assignment]
\label{def:ex-assignment}
An assignment is \emph{exclusive} if the total fractional assignment across all vertices at any given moment for an active set is at most $1$.
\end{definition}

\begin{definition}[Prefix-Time Assignment]
\label{def:prefix-time}
A prefix-time assignment is an assignment satisfying the following conditions:
\begin{itemize}
    \item At each moment in time, an active set assigns to each vertex a growth fraction of either $0$ or $1$.
    \item If, at some time \( \currenttime \), an active set assigns a growth fraction of $1$ to a vertex \( v \), then at any earlier time, the active set containing \( v \) must also assign its entire growth fraction of $1$ to \( v \).
\end{itemize}
\end{definition}

The exclusive assignment roughly matches the total assigned value with the total growth of active sets, allowing us to bound the total assigned value instead of the total growth.
It is used in the analysis throughout all subsequent sections and methods.
On the other hand, the prefix-time assignment better captures temporal aspects and provides an upper bound on nearly the total growth of active sets, although this bound can be significantly weak.
We define two types of prefix-time assignment, whose sole purpose is to analyze the autarkic pairs approach in Section~\ref{sec:candidate}.
The first prefix-time assignment will be formally introduced in the next subsection.

\subsection{Legacy Execution}
\label{subsec:legacy_execution}

Here, we discuss the execution of \BBG{} at Line~\ref{exe:base} of our main algorithm.
We refer to this execution as \emph{Legacy Execution}, and for any variable $x$ associated with this execution, we denote it as $x^-$.
Let us formally define Legacy Execution and then provide some of its properties that will be useful in the subsequent sections.

\begin{definition}[Legacy Execution]
\label{def:legacy_execution}
    We define Legacy Execution as the execution of \BBG{} in Line~\ref{exe:base} of Algorithm~\ref{alg:main} on the given problem input. In this context, we denote:
    \begin{itemize}
        \item $\yplus_S$ as the total growth of set $S \subseteq \V$,
        \item $\Aplus_{\currenttime}$ as the collection of active sets at time $\currenttime$ during this execution, and
        \item $\tplus$ as the fingerprint returned by this execution, which records the time at which each vertex connects to its pair (see Line~\ref{line:legacy_define_t}).
    \end{itemize}
    Note that, given Lemma~\ref{lm:tplus_is_fingerprint}, $\tplus$ is a fingerprint for this execution. According to Lemma~\ref{lm:ghost_equivalence}, executing \GBG{} on $\tplus$ results in an equivalent moat growing process. Therefore, all these variables can be considered as belonging to such a run of \GBG{} as well.
\end{definition}

Next, we prove that monotonic moat growing algorithms whose fingerprint has a higher value for every vertex compared to their $\tplus$ satisfy all demand pairs.  
This property is crucial for demonstrating that the various solutions introduced in the remainder of the paper are feasible for the problem.

\begin{lemma}
\label{lm:large_fingerprint_satisfies}
    Let $\gt$ be a fingerprint of a monotonic moat growing algorithm on a graph $\G$, where $\gt$ is larger than $\tplus$.  
    Then, the forest produced by the execution with fingerprint $\gt$ satisfies all demand pairs.
\end{lemma}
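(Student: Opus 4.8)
The plan is to treat the growth (main) phase and the pruning phase of the execution with fingerprint $\gt$ separately, in both places exploiting that $\gt$ is larger than $\tplus$ together with Lemma~\ref{lm:large_fingerprint_refinement}. By Lemma~\ref{lm:tplus_is_fingerprint}, $\tplus$ is itself a fingerprint, so Lemma~\ref{lm:large_fingerprint_refinement} applies with $\tin=\tplus$ and $\tout=\gt$; and by Lemma~\ref{lm:ghost_equivalence} I may identify Legacy Execution with $\GBG(\G,\tplus)$ and the execution of interest with $\GBG(\G,\gt)$, so that talk of ``connected components at moment $\currenttime$'' refers to the states of these two runs.

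\emph{Growth phase.} Fix a vertex $v$ with $\pairv\ne v$ (self-paired vertices are trivially satisfied). In Legacy Execution, $v$ and $\pairv$ lie in a common connected component from moment $\tplus_v$ onward, since components only ever merge. Since $\gt_v\ge\tplus_v$, the component containing $v$ in the $\gt$-run cannot deactivate before time $\gt_v$ (the deactivation test in \GBG{} fails as long as $v$ is present and $\currenttime<\gt_v$), so the $\gt$-run does reach moment $\tplus_v$; at that moment the connected components of the $\tplus$-run form a refinement of those of the $\gt$-run by Lemma~\ref{lm:large_fingerprint_refinement}, hence $v$ and $\pairv$ share a component of the $\gt$-run as well. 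Again using that components only merge, this persists to the end of the $\gt$-run's growth phase, so the forest present before pruning already connects every demand pair.

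\emph{Pruning phase.} A pruning step deletes the unique forest edge $e$ crossing a set $S$ that was an inactive connected component at some moment $\currenttime_0$ of the $\gt$-run (these, together with the self-paired singletons, are precisely the sets available to the pruning rule). The crucial claim is that every such $S$ is closed under pairing, i.e.\ $\pairv\in S$ for all $v\in S$. Indeed, deactivation of $S$ at $\currenttime_0$ forces $\gt_w\le\currenttime_0$ for all $w\in S$, hence $\tplus_v\le\gt_v\le\currenttime_0$ for each $v\in S$; so in the $\tplus$-run $v$ and $\pairv$ share some component $C$ at moment $\currenttime_0$, and by Lemma~\ref{lm:large_fingerprint_refinement} $C$ is contained in the $\gt$-run component of $v$ at $\currenttime_0$, which is $S$ itself (two connected components of the same run at the same moment that share the vertex $v$ must coincide). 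Thus $\pairv\in C\subseteq S$. Given this, the same cut argument as in Lemma~\ref{lm:remove_one_edge} — applied to the current (possibly already partially pruned) forest instead of $\otree$ — shows that deleting $e$ can disconnect a pair $\{v,\pairv\}$ only if exactly one of $v,\pairv$ lies in $S$; since $S$ is pair-closed, no such pair exists, so each pruning step preserves the invariant ``all demand pairs are connected.'' Iterating from the post-growth-phase forest proves the lemma.

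I expect the main obstacle to be the bookkeeping in the pruning phase: one must verify pair-closedness for \emph{every} set the pruning rule can invoke (including the self-paired initial singletons), confirm that the invariant ``all demands connected'' is re-established after each individual edge deletion so it can be used inductively, and — most delicately — justify identifying $S$ with the appropriate $\gt$-run connected component. This last step is exactly where the hypothesis $\gt\ge\tplus$ and Lemma~\ref{lm:large_fingerprint_refinement} are indispensable: without the fingerprint comparison, the Legacy component $C$ need not sit inside $S$, and indeed $\gt_v$ could be small while $v$ is still unsatisfied, which is precisely the situation the hypothesis rules out.
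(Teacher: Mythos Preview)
Your proof is correct and follows essentially the same route as the paper's. Both arguments use Lemma~\ref{lm:ghost_equivalence} to identify the executions with \GBG{} runs, invoke Lemma~\ref{lm:large_fingerprint_refinement} at moment $\tplus_v$ to conclude that $v$ and $\pairv$ are connected in the $\gt$-run before pruning, and then argue that pruning cannot separate them because any deactivated set $S$ in the $\gt$-run must already contain both $v$ and $\pairv$ (or neither). The paper phrases this last step per-pair (``no inactive set cuts their path,'' using that $v$ and $\pairv$ remain in active sets until they meet), whereas you phrase it per-set (every deactivated $S$ is pair-closed); these are dual formulations of the same observation, and your version is somewhat more explicit about the inductive invariant across pruning steps.
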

\begin{proof}
    First note that we can assume a Shadow Moat Growing algorithm has been run on $\gt$ and given Lemma~\ref{lm:ghost_equivalence}, it is equivalent to the given monotonic moat growing algorithm.
    
    Consider any vertex $v$.  
    By Definition~\ref{def:legacy_execution}, $\tplus_v$ is the moment when $v$ and its pair $\pairv$ reach each other in Legacy Execution, and both are active until this moment. Consequently, at this moment, there exists an active set $S$ such that $S$ contains both $v$ and $\pairv$ (they may become deactivated immediately afterward). 

    According to Lemma~\ref{lm:large_fingerprint_refinement}, at any moment until $\tplus_v$, $v$ should be in an active set in Shadow Moat Growing execution on $\gt$, and at moment $\tplus_v$, there exists an active set $S'$ such that $S \subseteq S'$. Therefore, $v$ and its pair reach each other while both are in active set until then. 

    As a result, even after removing unnecessary edges, they remain in the same component as no inactive set cuts their path.
\end{proof}

Now, based on the fingerprint \(\tplus\) obtained from Legacy Execution, we define the notion of the \emph{base phase} for any monotonic moat growing process as follows.

\begin{definition}[Base Phase]  
\label{def:base-phase}
    In an execution of a monotonic moat growing process, at moment \(\currenttime\), a vertex is said to be in the base phase if  
    \[
        \currenttime < \tplus_v.
    \]  
    Similarly, an active set \(S\) is in the base phase if it contains a vertex in the base phase.  
    Additionally, for an active set \(S\) and a given time $\currenttime$, we define \(\BaseSet(S, \currenttime)\) as the set of vertices in the base phase.  
\end{definition}

Intuitively, $\BaseSet(S, \currenttime)$ is the set of vertices in $S$ that have not yet reached their initial fingerprint, which indicates the time they reach their pair in Legacy Execution. 
We use this notation to define different assignments for different monotonic moat growing algorithms.
Now, a basic observation regarding the base phase is the following.

\begin{lemma}
\label{lm:base-set-subset}
    Let $\currenttime$ be any moment during the execution of a monotonic moat growing process. For any pair of vertex subsets $S$ and $S'$ such that $S \subseteq S'$, it holds that
    $$
        \BaseSet(S, \currenttime) \subseteq \BaseSet(S', \currenttime),
    $$
\end{lemma}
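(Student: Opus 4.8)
The plan is to simply unfold Definition~\ref{def:base-phase}. Recall that $\BaseSet(S, \currenttime)$ is defined as the set of vertices of $S$ that are in the base phase at moment $\currenttime$, i.e., the set $\{v \in S : \currenttime < \tplus_v\}$, where $\tplus$ is the fingerprint returned by Legacy Execution. The key observation is that whether a vertex $v$ lies in the base phase at moment $\currenttime$ depends only on $v$ (through its value $\tplus_v$) and on $\currenttime$, and not on which set is under consideration; membership in the set $S$ (respectively $S'$) is the only set-dependent part of the definition, and that membership is monotone under inclusion.

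Concretely, I would argue as follows. Let $v \in \BaseSet(S, \currenttime)$ be arbitrary. By definition this means $v \in S$ and $\currenttime < \tplus_v$. Since $S \subseteq S'$, we also have $v \in S'$, while the inequality $\currenttime < \tplus_v$ is untouched. Hence $v$ is a vertex of $S'$ that is in the base phase at moment $\currenttime$, so $v \in \BaseSet(S', \currenttime)$. As $v$ was arbitrary, we conclude $\BaseSet(S, \currenttime) \subseteq \BaseSet(S', \currenttime)$, which is exactly the claim.

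There is essentially no obstacle here: the statement is an immediate consequence of the definition of the base phase, and it is recorded only for convenient reference when setting up the assignments for the various monotonic moat growing executions in later sections.
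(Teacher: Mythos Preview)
Your proof is correct and follows essentially the same approach as the paper: both arguments unfold Definition~\ref{def:base-phase}, take an arbitrary $v \in \BaseSet(S,\currenttime)$, and use $S \subseteq S'$ together with the set-independent condition $\currenttime < \tplus_v$ to conclude $v \in \BaseSet(S',\currenttime)$.
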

\begin{proof}
    Suppose $v \in \BaseSet(S, \currenttime)$. By Definition~\ref{def:base-phase}, we have $\currenttime < \tplus_v$. Since $v \in S$ and $S \subseteq S'$, it follows that $v \in S'$. Therefore, by the same definition, $v \in \BaseSet(S', \currenttime)$.
\end{proof}

Next, we introduce an assignment that maps the growth occurring in Legacy Execution to individual vertices, with respect to the optimal solution. Similar assignments will be defined for other executions in our algorithms, allowing us to compare them and ultimately derive the desired approximation factor. 

To make this assignment meaningful and consistent, we introduce a notion of \emph{priority}—a way to rank vertices based on when their demands are satisfied in Legacy Execution. Priority helps determine which vertices in a subset are most influential in driving further growth in the algorithm.

\begin{definition}[Priority]
\label{def:priority}
    The priority of a vertex $v$ is determined by the time at which its associated demand is satisfied during Legacy Execution. Vertices satisfied later (larger $\tplus$) are assigned higher priority.

    In the event of a tie, priorities are resolved using a fixed total ordering over all vertices. This ordering is constructed by first ordering all demand pairs, and then, for each pair involving distinct vertices, arbitrarily ordering the two vertices within the pair.

    We denote the priority of a vertex $v$ by $\priority_v$, and write $\priority_v > \priority_u$ to indicate that $v$ has a higher priority than $u$.
\end{definition}

Immediately, we observe the following relationship between priority and the fingerprint of Legacy Execution.

\begin{corollary}
\label{cor:priority-tplus}
    For any pair of vertices $v$ and $u$,
    $$
        \tplus_v > \tplus_u \;\Rightarrow\; \priority_v > \priority_u,
    $$
    and
    $$
        \priority_v > \priority_u \;\Rightarrow\; \tplus_v \ge \tplus_u.
    $$
\end{corollary}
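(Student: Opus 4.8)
The plan is to unpack Definition~\ref{def:priority} directly; this is a routine consequence of how priority is defined in terms of $\tplus$, so the "proof" is really just a careful case analysis with no genuine obstacle.

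First I would establish the first implication. Suppose $\tplus_v > \tplus_u$. By Definition~\ref{def:priority}, the priority ordering first compares the times at which demands are satisfied in Legacy Execution, assigning higher priority to the vertex with the larger $\tplus$ value. Since $\tplus_v > \tplus_u$, there is no tie, so the tie-breaking total order is never consulted, and we conclude $\priority_v > \priority_u$ immediately.

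Next I would establish the second implication by contraposition (or equivalently a direct case split). Suppose $\priority_v > \priority_u$. We cannot have $\tplus_v < \tplus_u$, since by the first implication applied to $u$ and $v$ that would force $\priority_u > \priority_v$, contradicting $\priority_v > \priority_u$ (and priorities are a strict total order, so both cannot hold). Hence $\tplus_v \ge \tplus_u$, which is the desired conclusion. Note that equality $\tplus_v = \tplus_u$ is genuinely possible here — in that case the priority comparison is resolved by the fixed tie-breaking order — which is exactly why the second implication only yields $\ge$ and not $>$.

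There is no real obstacle; the only thing to be careful about is not over-claiming in the second implication (one cannot upgrade $\ge$ to $>$), and making sure the argument uses only that priorities form a strict total order together with the monotone relationship to $\tplus$ stated in Definition~\ref{def:priority}.
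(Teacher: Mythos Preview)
Your proposal is correct and matches the paper's treatment: the paper states this corollary as an immediate observation from Definition~\ref{def:priority} without giving an explicit proof, and your case analysis is exactly the routine unpacking of that definition that the paper leaves implicit.
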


Here, we define representatives for any subset of vertices based on priority, selecting vertices with the highest priority from each connected component within the subset.
This notation is also crucial for defining assignments.

\begin{definition}[Representatives]
\label{def:priority-set}
    For a subset $S$ of vertices, the set of \emph{representatives} of $S$ consists of the vertices in $S$ that have the highest priority within their respective connected components of the optimal solution. Formally,
    $$
        \priorityset(S) = \Big\{v \in S \;\Big|\; v = \argmax_{u \in S \cap \optcom(v)} \priority_u \Big\},
    $$
    where $\optcom(v)$ denotes the connected component of $v$ in the optimal solution.
\end{definition}

The following lemma can easily be concluded from the above definition.

\begin{lemma}
    \label{lm:opti_cap_priorityset}
    In a moat growing algorithm, for any time $\currenttime$, any subset of vertices $S\subseteq \V$, and any connected component of the optimal solution $\opti$, at most one vertex of $\opti$ is in $\priorityset(\BaseSet(S, \currenttime))$, which is the vertex with maximum priority. More precisely,
    \begin{align*}
        &\text{If } \BaseSet(S, \currenttime) \cap \opti = \emptyset \text{, then } & &\priorityset(\BaseSet(S, \currenttime)) \cap \opti = \emptyset \\
        &\text{Otherwise, } & &\priorityset(\BaseSet(S, \currenttime)) \cap \opti = \{\argmax_{v\in S \cap \opti}\priority_v\}
    \end{align*}
\end{lemma}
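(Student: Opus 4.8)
The plan is to unwind the definitions of $\priorityset$ and $\BaseSet$ and observe that the claim is essentially a restatement of what it means to take the highest-priority vertex within each optimal component. First I would fix a moat growing algorithm, a moment $\currenttime$, a subset $S \subseteq \V$, and a connected component $\opti \in \OPT$. Write $B = \BaseSet(S, \currenttime)$ for brevity. By Definition~\ref{def:priority-set}, a vertex $v \in B$ lies in $\priorityset(B)$ precisely when $v = \argmax_{u \in B \cap \optcom(v)} \priority_u$; that is, when $v$ has strictly higher priority than every other vertex of $B$ in its own optimal component. Note that the $\argmax$ is well-defined because, by Definition~\ref{def:priority}, ties in $\tplus$ are broken by a fixed total ordering, so priorities induce a strict total order on vertices and the maximum is unique.

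Next I would split into the two cases of the statement. If $B \cap \opti = \emptyset$, then no vertex of $\opti$ lies in $B$, hence certainly none lies in $\priorityset(B) \subseteq B$, giving $\priorityset(B) \cap \opti = \emptyset$. Conversely, suppose $B \cap \opti \neq \emptyset$. Let $v^* = \argmax_{v \in B \cap \opti} \priority_v$, which exists and is unique by the strict ordering on priorities. Since $\optcom(v^*) = \opti$ (as $v^* \in \opti$ and $\OPT$ partitions $\V$), the defining condition for membership in $\priorityset(B)$ applied to $v^*$ is exactly that $v^*$ maximizes priority over $B \cap \opti$, which holds by choice of $v^*$; hence $v^* \in \priorityset(B) \cap \opti$. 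For the reverse inclusion, if $w \in \priorityset(B) \cap \opti$, then $\optcom(w) = \opti$ and $w = \argmax_{u \in B \cap \opti} \priority_u = v^*$ by uniqueness of the maximizer; so $\priorityset(B) \cap \opti = \{v^*\}$. Finally, since $B \subseteq S$, we have $B \cap \opti \subseteq S \cap \opti$, but in fact every vertex of $S \cap \opti$ that lies outside $B$ has already been satisfied (is not in the base phase), so the maximizer over $B \cap \opti$ coincides with $\argmax_{v \in S \cap \opti} \priority_v$ — here one should check that the highest-priority vertex of $S \cap \opti$ is itself still in the base phase at time $\currenttime$, which follows because $S \cap \opti \cap B \neq \emptyset$ forces the overall maximizer to be in $B$ (a higher-priority vertex has larger or equal $\tplus$ by Corollary~\ref{cor:priority-tplus}, so if any vertex of $S \cap \opti$ is still growing, so is the top-priority one). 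This identifies $v^*$ with $\argmax_{v \in S \cap \opti} \priority_v$ and completes the proof.

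The only mildly delicate point — and the place I expect the bulk of the care to go — is that last identification of $\argmax_{v \in B \cap \opti} \priority_v$ with $\argmax_{v \in S \cap \opti} \priority_v$: one must rule out the possibility that the genuine top-priority vertex of $S \cap \opti$ has already left the base phase while some lower-priority vertex of $S \cap \opti$ is still in it. This is precisely where the monotonic relationship between priority and $\tplus$ (Corollary~\ref{cor:priority-tplus}) is used, together with the observation that $\BaseSet$ is a "downward-closed under priority" slice: if $u \in S$ is in the base phase at time $\currenttime$ (so $\currenttime < \tplus_u$) and $\priority_v \ge \priority_u$ with $v \in S$, then $\tplus_v \ge \tplus_u > \currenttime$, so $v$ is also in the base phase. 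Everything else is a direct rewriting of definitions, so no substantial obstacle is anticipated.
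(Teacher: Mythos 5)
Your proof is correct and takes essentially the same approach as the paper's: both hinge on Corollary~\ref{cor:priority-tplus} to show that the base-phase restriction does not change which vertex of $S\cap\opti$ achieves the maximum priority (if any vertex of $S\cap\opti$ is still in the base phase, so is the top-priority one). The paper just organizes it in the opposite direction—starting from $\argmax_{v\in S\cap\opti}\priority_v$ and checking whether it is still in the base phase—whereas you start from $\argmax$ over $\BaseSet(S,\currenttime)\cap\opti$ and then identify it with the maximizer over $S\cap\opti$; this is a cosmetic difference only.
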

\begin{proof}
    By Definition~\ref{def:priority-set}, $\priorityset(\BaseSet(S, \currenttime))$ can contain at most one vertex from $\opti$.
    If $\tplus_v \le \currenttime$ for the vertex $v = \argmax_{u \in S\cap \opti} \priority_u$, which also has the maximum value of $\tplus$ among vertices in $S \cap \opti$ (based on Corollary~\ref{cor:priority-tplus}), then $\BaseSet(S, \currenttime) \cap \opti = \emptyset$, so the statement holds.
    Otherwise, since $\tplus_v > \currenttime$, $v$ is in $\BaseSet(S, \currenttime)$ and, as it has the maximum priority among vertices in $\BaseSet(S, \currenttime) \cap \opti$, the claim follows.
\end{proof}

Next, we prove a consistency property of representatives throughout the algorithm.

\begin{lemma}
\label{lm:smaller-priorityset}
    Let $S$ be a subset of vertices. If $v \in \priorityset(S)$, then for any subset $S' \subseteq S$ with $v \in S'$, it follows that
    $$
    v \in \priorityset(S').
    $$
\end{lemma}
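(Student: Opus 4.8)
The statement claims that representatives are "downward consistent": if $v$ is a representative of $S$ (i.e., $v$ has the highest priority in $S$ among all vertices of $S$ lying in $v$'s optimal component $\optcom(v)$), then $v$ remains a representative of any subset $S' \subseteq S$ that still contains $v$. The plan is a direct unwinding of Definition~\ref{def:priority-set}. Fix $v \in \priorityset(S)$ and let $S' \subseteq S$ with $v \in S'$. By definition, $v = \argmax_{u \in S \cap \optcom(v)} \priority_u$, so $\priority_v \ge \priority_u$ for every $u \in S \cap \optcom(v)$.

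Now consider $S' \cap \optcom(v)$. Since $S' \subseteq S$, we have $S' \cap \optcom(v) \subseteq S \cap \optcom(v)$, so every $u \in S' \cap \optcom(v)$ satisfies $\priority_v \ge \priority_u$; moreover $v \in S' \cap \optcom(v)$ because $v \in S'$ and $v \in \optcom(v)$ trivially. Since priorities are given by a fixed total order (Definition~\ref{def:priority}), the argmax over the subset $S' \cap \optcom(v)$ is uniquely determined, and $v$ achieves the maximum priority in it. Hence $v = \argmax_{u \in S' \cap \optcom(v)} \priority_u$, which is exactly the condition for $v \in \priorityset(S')$.

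There is essentially no obstacle here — the only thing to be slightly careful about is that $\priorityset$ picks a unique argmax per component, which is well defined precisely because priorities form a total order (so ties are broken deterministically); this guarantees the argmax over the smaller set $S' \cap \optcom(v)$ is the same vertex $v$ rather than some other tied candidate. I would state this explicitly to keep the proof airtight. The proof is a one-paragraph monotonicity argument:

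\begin{proof}
    Let $v \in \priorityset(S)$, and let $S' \subseteq S$ with $v \in S'$. By Definition~\ref{def:priority-set}, $v \in S \cap \optcom(v)$ and $\priority_v \ge \priority_u$ for all $u \in S \cap \optcom(v)$. Since $S' \subseteq S$, we have $S' \cap \optcom(v) \subseteq S \cap \optcom(v)$, and $v \in S' \cap \optcom(v)$ because $v \in S'$. Hence $\priority_v \ge \priority_u$ for every $u \in S' \cap \optcom(v)$, and since priorities are determined by a fixed total ordering of the vertices (Definition~\ref{def:priority}), the maximizer of $\priority$ over the nonempty set $S' \cap \optcom(v)$ is unique and equals $v$. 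Therefore
    $$
        v = \argmax_{u \in S' \cap \optcom(v)} \priority_u,
    $$
    which by Definition~\ref{def:priority-set} means $v \in \priorityset(S')$.
\end{proof}
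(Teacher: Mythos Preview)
Your proof is correct and takes essentially the same approach as the paper: both use the observation that $S' \cap \optcom(v) \subseteq S \cap \optcom(v)$, so $v$ remains the highest-priority vertex in the smaller set. The paper phrases it as a two-line contradiction argument while you give the direct version, but the content is identical.
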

\begin{proof}
    Suppose, for the sake of contradiction, that $v \notin \priorityset(S')$. Then there exists a vertex $u \in S' \cap \optcom(v)$ such that $\priority_u \ge \priority_v$. Since $S' \subseteq S$, we also have $u \in S$, which contradicts the assumption that $v \in \priorityset(S)$.
\end{proof}

Next, we demonstrate that $\priorityset$ exhibits cardinality monotonicity.

\begin{lemma}  
\label{lm:smaller-priorityset-size}  
    Let $S\subseteq \V$ be a subset of vertices and $S' \subseteq S$. Then,  
    $$
    |\priorityset(S')| \leq |\priorityset(S)|.
    $$  
\end{lemma}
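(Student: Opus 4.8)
The plan is to exhibit an explicit injection from $\priorityset(S')$ into $\priorityset(S)$. The guiding observation is that, by Definition~\ref{def:priority-set}, $\priorityset$ selects at most one vertex from each connected component of the optimal solution: if two vertices of $\priorityset(S')$ lay in the same component $\opti \in \OPT$, both would have to equal $\argmax_{u \in S' \cap \opti} \priority_u$, forcing them to coincide (ties being broken by the fixed total order). So $|\priorityset(S')|$ is really counting the components $\opti$ with $S' \cap \opti \neq \emptyset$, and likewise for $S$; since $S' \subseteq S$, every such component for $S'$ is one for $S$.

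Concretely, I would define $f : \priorityset(S') \to \priorityset(S)$ by $f(v) = \argmax_{u \in S \cap \optcom(v)} \priority_u$. First I would check well-definedness: since $v \in S' \subseteq S$ and $v \in \optcom(v)$, the set $S \cap \optcom(v)$ is nonempty, so the argmax exists and is unique (again using the tie-breaking total order), and it lies in $\priorityset(S)$ directly from Definition~\ref{def:priority-set}. Next, injectivity: if $f(v_1) = f(v_2) =: w$, then $w \in \optcom(v_1) \cap \optcom(v_2)$, and since the components of $\OPT$ are disjoint, $\optcom(v_1) = \optcom(v_2)$; but $v_1, v_2 \in \priorityset(S')$ both lying in this common component forces $v_1 = v_2$ by the one-per-component fact above. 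Hence $f$ is injective and $|\priorityset(S')| \le |\priorityset(S)|$.

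I do not anticipate a real obstacle here; the only points requiring a little care are that the argmax is genuinely well-defined (handled by the fixed total ordering introduced in Definition~\ref{def:priority}) and the "at most one representative per optimal component" fact, which is immediate from the definition of $\priorityset$ and could alternatively be quoted via the reasoning already used in Lemma~\ref{lm:opti_cap_priorityset}. One could also present the argument purely by counting intersected components rather than via an explicit map; I would include whichever phrasing is shortest, but the injection version is the most self-contained.
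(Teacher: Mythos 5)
Your proof is correct and rests on the same key observation as the paper, namely that $\priorityset(\cdot)$ contains exactly one vertex per connected component of $\OPT$ it intersects. The paper states the counting argument directly (intersected components of $S'$ form a subset of those of $S$), which is the alternative phrasing you mention at the end; your explicit injection is merely a more detailed rendering of that same idea.
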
  
\begin{proof}
    For any set $S$, $\priorityset(S)$ contains exactly one vertex from each connected component of the optimal solution that intersects with $S$. Consequently, a subset $S' \subseteq S$ can only contain vertices from a subset of the connected components that $S$ intersects with, ensuring that $\priorityset(S')$ cannot have more elements than $\priorityset(S)$.  
\end{proof}

Now we can define our first assignment, $\rplus$, which corresponds to Legacy Execution, using the above definitions.

\begin{definition}
\label{def:rplus}
    Consider the moment $\currenttime$ of Legacy Execution.
    For each active set $S \in \Aplus_{\currenttime}$
    in the \emph{base phase},
    we assign the growth of this moment of $S$ to all vertices in $\priorityset(\BaseSet(S, \currenttime))$ with fraction $1$. We denote the total growth assigned to vertex \(v\) by set \(S\) as \(\rplus_{S,v}\), and the total accumulated growth assigned to \(v\) (over all sets) as \(\rplus_v\).
\end{definition}

Note that for every active set $S$ at time $\currenttime$, we have  $\BaseSet(S, \currenttime) \neq \emptyset$,  otherwise, all vertices in $S$ would already be satisfied, and $S$ would become inactive.  
This implies that $\yplus_s$ is assigned to at least one vertex, leading to the following corollary.

\begin{corollary}
\label{cor:yys_rplusv}
    The total assigned value of $\rplus$ is an upper bound for the total growth of active sets in Legacy Execution. That means,  
    $$
        \sum_{S\subseteq \V} \yplus_s \le \sum_{v\in \V} \rplus_v.
    $$
\end{corollary}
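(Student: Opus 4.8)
The plan is to account for the growth of each active set at each moment and show that it is always assigned to at least one vertex under the assignment $\rplus$. By Definition~\ref{def:rplus}, at a moment $\currenttime$ of Legacy Execution, each active set $S \in \Aplus_{\currenttime}$ in the base phase assigns its growth of that moment, with fraction $1$, to every vertex in $\priorityset(\BaseSet(S, \currenttime))$. So the key claim to verify is twofold: (i) every active set at every moment is in fact in the base phase, so that it does perform an assignment; and (ii) whenever an active set is in the base phase, the set $\priorityset(\BaseSet(S, \currenttime))$ is nonempty, so the growth is charged to at least one vertex.

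For (i) and (ii) together, it suffices to show $\BaseSet(S, \currenttime) \neq \emptyset$ for every active set $S$ at every moment $\currenttime$ of Legacy Execution. This is exactly the observation made just before the corollary: in \BBG{}, an active set $S$ is active only because it contains an unsatisfied vertex $v$ (Line~\ref{line:legacy_deactivation_condition}), and by the definition of $\tplus$ (Line~\ref{line:legacy_define_t}) such a vertex has $\tplus_v > \currenttime$, hence $v \in \BaseSet(S, \currenttime)$ by Definition~\ref{def:base-phase}. Then $\BaseSet(S, \currenttime)$ is a nonempty subset of vertices, so it intersects at least one connected component of the optimal solution, and therefore $\priorityset(\BaseSet(S, \currenttime)) \neq \emptyset$ by Definition~\ref{def:priority-set} (it contains the highest-priority vertex of each such component). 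Thus the growth $\Delta$ of $S$ at moment $\currenttime$ is assigned with fraction $1$ to at least one vertex.

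Summing over all moments, the total growth $\yplus_S$ of each active set $S$ is assigned, in full, to at least one vertex (possibly more, since the assignment is not exclusive and several representatives may each receive the full fraction). Integrating/summing the per-moment inequality
\[
\sum_{S \in \Aplus_{\currenttime}} \Delta \;\le\; \sum_{v \in \V} \;\sum_{S \in \Aplus_{\currenttime}}\rplus_{S,v}\big|_{\text{moment }\currenttime}
\]
over the whole execution yields $\sum_{S \subseteq \V} \yplus_S \le \sum_{v \in \V} \rplus_v$, which is the claim. The only mild subtlety — and the part deserving a careful word — is confirming that every active set is always in the base phase, i.e. that the deactivation condition of \BBG{} coincides with "no base-phase vertex remains"; this follows directly from how $\tplus$ is defined, so there is no real obstacle here, and the corollary is essentially immediate from Definition~\ref{def:rplus} and the preceding remark.
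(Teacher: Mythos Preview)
Your proof is correct and follows essentially the same approach as the paper: the corollary is stated immediately after the remark that every active set $S$ at time $\currenttime$ satisfies $\BaseSet(S,\currenttime)\neq\emptyset$ (since $S$ contains an unsatisfied vertex), so its growth is assigned to at least one vertex. You have simply spelled out this observation in more detail, including the nonemptiness of $\priorityset(\BaseSet(S,\currenttime))$, which the paper leaves implicit.
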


Another observation for this assignment shows that the assignment $\rplus$ to a vertex only happens when the vertex is unsatisfied.

\begin{lemma}
\label{lm:rplus_assigning_condition}
    The growth of a subset of vertices $S\subseteq \V$ can be assigned to $\rplus_v$ of vertex $v$ only if $v\in S$ and $\pairv \notin S$.
\end{lemma}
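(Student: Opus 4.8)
The statement is essentially a direct unpacking of Definition~\ref{def:rplus}, so the plan is simply to trace through that definition and invoke two standard facts about Legacy Execution: that its active sets are exactly the current connected components of the forest, and that $\tplus_v$ records the earliest moment at which $v$ and $\pairv$ become connected.

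First I would fix an arbitrary moment $\currenttime$ of Legacy Execution at which some set $S$ contributes to $\rplus_v$. By Definition~\ref{def:rplus}, this forces $S \in \Aplus_{\currenttime}$ to be an active set in the base phase and $v \in \priorityset(\BaseSet(S, \currenttime))$. Since, by Definition~\ref{def:priority-set}, $\priorityset(\BaseSet(S,\currenttime)) \subseteq \BaseSet(S,\currenttime)$, and $\BaseSet(S,\currenttime) \subseteq S$ by Definition~\ref{def:base-phase}, we obtain $v \in S$, settling the first half of the claim.

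For the second half, I would use that $v \in \BaseSet(S, \currenttime)$ means $\currenttime < \tplus_v$ by Definition~\ref{def:base-phase}, together with the fact that, by Definition~\ref{def:legacy_execution}, $\tplus_v$ is the first moment at which $v$ and $\pairv$ lie in the same connected component of $\F$ during Legacy Execution. Because $S$ is an active set at moment $\currenttime$, it is precisely a connected component of $\F$ at that moment (this is how $\activesets$ is maintained in Algorithm~\ref{alg:bbg}); hence if $\pairv$ were in $S$, then $v$ and $\pairv$ would already be connected at time $\currenttime$, giving $\tplus_v \le \currenttime$ and contradicting $\currenttime < \tplus_v$. Therefore $\pairv \notin S$, which completes the proof.

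The argument presents no genuine obstacle, since it only chains together definitions; the single point that deserves a sentence of care is the identification of an ``active set'' in Legacy Execution with a current connected component of $\F$, which is immediate from the pseudocode of Algorithm~\ref{alg:bbg} but is the one place where the reasoning steps outside the purely abstract definitions of Section~\ref{subsec:monotonic_preliminary}.
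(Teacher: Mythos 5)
Your proof is correct and follows essentially the same route as the paper: both establish $v \in S$ via the chain $\priorityset(\BaseSet(S,\currenttime)) \subseteq \BaseSet(S,\currenttime) \subseteq S$, and both establish $\pairv \notin S$ by observing that assignment forces $\currenttime < \tplus_v$, whereas having $\pairv \in S$ would make $v$ and $\pairv$ connected at time $\currenttime$, contradicting the definition of $\tplus_v$. Your version merely spells out the identification of active sets with connected components a bit more explicitly than the paper does.
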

\begin{proof}
    Let the growth of the active set $S$ in Legacy Execution at moment $\currenttime$ be assigned to $v$.
    Since $\priorityset(\BaseSet(S,\currenttime)) \subseteq S$ for any moment $\currenttime$, the growth of $S$ can only be assigned to vertices in $S$, meaning we must have $v\in S$.
    Additionally, $v\in\priorityset(\BaseSet(S,\currenttime)) \subseteq \BaseSet(S,\currenttime)$ if it is in the base phase, meaning $\currenttime < \tplus_v$.
    Since $v$ and $\pairv$ reach each other at time $\tplus_v$, we have $\pairv \notin S$.
\end{proof}

As a reminder, we have introduced two general frameworks for assignments. We can now show that $\rplus$ satisfies the conditions of a prefix-time assignment as defined in Definition~\ref{def:prefix-time}.

\begin{lemma}
\label{lm:rplus_prefix_time_assignment}
    $\rplus$ is a prefix-time assignment.
\end{lemma}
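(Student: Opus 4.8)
The plan is to check the two conditions of Definition~\ref{def:prefix-time} directly against the definition of $\rplus$ (Definition~\ref{def:rplus}). The first condition is immediate: at a moment $\currenttime$, an active set $S$ in the base phase assigns its growth with fraction exactly $1$ to each vertex of $\priorityset(\BaseSet(S,\currenttime))$ and fraction $0$ to every other vertex, while a set not in the base phase assigns nothing; hence the only fractions that ever occur are $0$ and $1$. So the work is entirely in the second condition.

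Suppose that at moment $\currenttime$ the active set $S$ assigns fraction $1$ to a vertex $v$, i.e.\ $v\in\priorityset(\BaseSet(S,\currenttime))$; in particular $v\in\BaseSet(S,\currenttime)$, so $\currenttime<\tplus_v$. Fix an arbitrary earlier moment $\currenttime'<\currenttime$. Since in Legacy Execution $v$ is unsatisfied, and hence lies in an active set, throughout $[0,\tplus_v)$, there is an active set $S'$ containing $v$ at time $\currenttime'$; because connected components only grow over time (equivalently, by the laminarity of active sets, Corollary~\ref{cor:active_sets_laminar}), we have $S'\subseteq S$. Note also that $S'$ is in the base phase at $\currenttime'$, since it contains $v$ and $\currenttime'<\tplus_v$. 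So it suffices to show $v\in\priorityset(\BaseSet(S',\currenttime'))$, which is exactly the statement that $S'$ assigns fraction $1$ to $v$ at $\currenttime'$. First, $v\in\BaseSet(S',\currenttime')$, since $v\in S'$ and $\currenttime'<\currenttime<\tplus_v$. Second, suppose for contradiction that $v$ is not a representative of $\BaseSet(S',\currenttime')$ within its optimal component; then there is $u\in S'\cap\optcom(v)$ with $\currenttime'<\tplus_u$ and $\priority_u>\priority_v$.

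The key step — the only place where more than bookkeeping happens — is to observe that such a $u$ must still be in the base phase at the \emph{later} moment $\currenttime$. Indeed, by Corollary~\ref{cor:priority-tplus}, $\priority_u>\priority_v$ implies $\tplus_u\ge\tplus_v>\currenttime$, so $u\in\BaseSet(S,\currenttime)$ (using $u\in S'\subseteq S$). Since $u\in\optcom(v)$ and $\priority_u>\priority_v$, this contradicts $v\in\priorityset(\BaseSet(S,\currenttime))$ by Definition~\ref{def:priority-set}. Hence no such $u$ exists, so $v\in\priorityset(\BaseSet(S',\currenttime'))$, which establishes the second condition and completes the argument that $\rplus$ is a prefix-time assignment.

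I expect the main pitfall to be the tempting but wrong shortcut of arguing monotonicity by set inclusion alone: $\BaseSet$ is \emph{not} monotone in time, since a vertex can lie in the base phase at $\currenttime'$ yet leave it by $\currenttime>\currenttime'$, so one cannot simply invoke Lemma~\ref{lm:smaller-priorityset} via $\BaseSet(S',\currenttime')\subseteq\BaseSet(S,\currenttime)$. The correct route is the priority comparison above: only higher-priority competitors of $v$ can threaten its representative status, and those competitors have $\tplus$ at least $\tplus_v$ by Corollary~\ref{cor:priority-tplus}, hence remain in the base phase through time $\currenttime$, where the hypothesis rules them out.
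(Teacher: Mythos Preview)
Your proof is correct and follows essentially the same approach as the paper: verify the $0/1$ fraction condition directly, then for the second condition take an earlier time $\currenttime'$, use laminarity to get $S'\subseteq S$, and argue by contradiction that any higher-priority competitor $u$ at time $\currenttime'$ would still be in the base phase at $\currenttime$ (via Corollary~\ref{cor:priority-tplus}), contradicting $v\in\priorityset(\BaseSet(S,\currenttime))$. Your closing remark about the non-monotonicity of $\BaseSet$ in time is a nice clarification not present in the paper but entirely consistent with its argument.
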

\begin{proof}
    Consider a moment $\currenttime$ when an active set $S$ exists and assigns its growth to $v$. By Definition~\ref{def:rplus}, 
    $$v \in \priorityset(\BaseSet(S, \currenttime)).$$ 
    This implies $\currenttime < \tplus_v$, meaning that $v$ was in active sets until this moment.

    For any earlier time $\currenttime' < \currenttime$, there exists an active set $S'$ that contains $v$. 
    Given Corollary~\ref{cor:active_sets_laminar}, we know that $S' \subseteq S$.
    By way of contradiction, assume $v \not\in \priorityset(\BaseSet(S', \currenttime'))$. This implies the existence of some $u \in \BaseSet(S', \currenttime') \cap \optcom(v)$ such that $\priority_u > \priority_v$, or equivalently, $\tplus_u \geq \tplus_v$ (by Corollary~\ref{cor:priority-tplus}). Furthermore, since
    \begin{itemize} 
        \item $\currenttime < \tplus_v \leq \tplus_u$, and
        \item $v$ and $u$ remain in the same active sets after $\currenttime'$,
    \end{itemize}
    it follows that $u \in \BaseSet(S, \currenttime) \cap \optcom(v)$, contradicting the fact that $v \in \priorityset(\BaseSet(S, \currenttime))$.
        
    Finally, by Definition~\ref{def:rplus}, an active set assigns a fraction of either 0 or 1 to each vertex, satisfying the two criteria of prefix-time assignment (Definition~\ref{def:prefix-time}).
\end{proof}

As a consequence of the above lemma, we can prove that all vertices $v \in \opti \in \OPT$ are connected to their pair at moment $\rplus_{\max}(\opti)$ in Legacy Execution.

\begin{lemma}
    \label{lm:legacy_connected_at_rmax}
    For any connected component of the optimal solution $\opti$ and any vertex $v\in \opti$, vertex $v$ and $\pairv$ are in the same connected component at time $\rplus_{\max}(\opti)$ in Legacy Execution.
\end{lemma}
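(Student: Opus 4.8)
The plan is to observe that, by the definition of the fingerprint $\tplus$ (Definition~\ref{def:legacy_execution}), $\tplus_v$ is precisely the first moment at which $v$ and $\pairv$ lie in a common connected component during Legacy Execution, and that the main phase of \BBG{} only ever merges components (never splits them). Hence it suffices to prove $\rplus_{\max}(\opti) \ge \tplus_v$ for every $v \in \opti$; then at time $\rplus_{\max}(\opti)$ the pair $\{v,\pairv\}$ is already, and still, in one component. Equivalently, I must show $\rplus_{\max}(\opti) \ge \max_{v\in\opti}\tplus_v$.

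The key step is to locate one vertex of $\opti$ whose $\rplus$-value is large enough. Let $v^* = \argmax_{u\in\opti}\priority_u$ be the highest-priority vertex of $\opti$. By Corollary~\ref{cor:priority-tplus}, $\priority_{v^*}\ge\priority_u$ forces $\tplus_{v^*}\ge\tplus_u$, so $\tplus_{v^*}=\max_{u\in\opti}\tplus_u$; thus it is enough to establish $\rplus_{v^*}\ge\tplus_{v^*}$, since then $\rplus_{\max}(\opti)\ge\rplus_{v^*}\ge\tplus_{v^*}\ge\tplus_v$ for all $v\in\opti$. To prove $\rplus_{v^*}\ge\tplus_{v^*}$, I would argue that at every moment $\currenttime<\tplus_{v^*}$ the growth of the active set containing $v^*$ is assigned, under $\rplus$, entirely to $v^*$: by the first property in Lemma~\ref{lm:tplus_is_fingerprint}, $v^*$ is unsatisfied and hence lies in some active set $S$ at such a moment; since $\currenttime<\tplus_{v^*}$ we have $v^*\in\BaseSet(S,\currenttime)$; and since $v^*$ has the highest priority in all of $\opti$, it is also the priority maximum of $\BaseSet(S,\currenttime)\cap\opti$, so $v^*\in\priorityset(\BaseSet(S,\currenttime))$ (this is the one-representative-per-component observation behind Lemma~\ref{lm:opti_cap_priorityset}). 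By Definition~\ref{def:rplus}, $S$ then assigns its full unit-rate growth at that moment to $v^*$; integrating over $\currenttime\in[0,\tplus_{v^*})$ yields $\rplus_{v^*}\ge\tplus_{v^*}$.

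I expect no serious obstacle here; the mild care needed is the bookkeeping that $v^*$ is continuously contained in active sets on the whole interval $[0,\tplus_{v^*})$, which is exactly the content of the fingerprint's first property, and that passing from $\opti$ to the smaller set $\BaseSet(S,\currenttime)\cap\opti$ cannot demote $v^*$ from being the priority maximum. The degenerate subcase $\pair_{v^*}=v^*$, giving $\tplus_{v^*}=0$, is immediate since then every vertex of $\opti$ is self-paired and the claim holds at time $0$. (In fact the same reasoning shows $\rplus_u\le\tplus_u$ for every $u$ by Lemma~\ref{lm:rplus_assigning_condition}, so $\rplus_{\max}(\opti)=\tplus_{v^*}=\max_{u\in\opti}\tplus_u$ exactly; but only the inequality $\rplus_{\max}(\opti)\ge\tplus_v$ is required.)
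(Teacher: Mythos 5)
Your proposal is correct and takes essentially the same approach as the paper: both proofs select the highest-priority vertex of $\opti$, apply Corollary~\ref{cor:priority-tplus} to conclude it maximizes $\tplus$ over $\opti$, and then show it is assigned the full growth at every moment before $\tplus_{v^*}$ (via Lemma~\ref{lm:opti_cap_priorityset}/Definition~\ref{def:rplus}) to obtain $\rplus_{v^*} \ge \tplus_{v^*}$. The only cosmetic difference is that the paper invokes the prefix-time property (Lemma~\ref{lm:rplus_prefix_time_assignment}) as part of its phrasing, whereas you integrate the per-moment assignment directly; both are the same underlying argument.
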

\begin{proof}
    Since for every vertex $v$, $\tplus_v$ is the time that $v$ and $\pairv$ connect together, we want to prove that $\tplus_v \le \rplus_{\max}(\opti)$ for any vertex $v \in \opti$.
    
    Let vertex $u\in \opti$ have the highest priority among vertices in $\opti$, meaning 
    $
    u = \argmax_{u' \in \opti} \priority_{u'}
    $.
    Given Lemma~\ref{lm:opti_cap_priorityset}, until $u$ is in the base phase, any active set containing $u$ should assign its growth to $u$.
    Therefore, since based on Lemma~\ref{lm:rplus_prefix_time_assignment} we know $\rplus$ is a prefix-time assignment, we have 
    $$
    \tplus_u \le \rplus_u.
    $$
    
    Additionally, using Corollary~\ref{cor:priority-tplus} we know that 
    $$
    u = \argmax_{u' \in \opti} \tplus_{u'}.
    $$
    Therefore, for any vertex $v \in \opti$, we have
    $$
    \tplus_v \le \tplus_u \le \rplus_u \le \rplus_{\max}(\opti),
    $$
    which completes the proof.
\end{proof}

The next lemma gives a lower bound for the cost of any connected component of the optimal solution based on the total assigned value to its vertices in $\rplus$.

\begin{lemma}
\label{lm:upper_bound_sum_rplus}
Let \(\opti\) be a connected component in the optimal solution and \(\otree\) be the tree of the optimal solution spanning the vertices in \(\opti\). We have,
\[
\sum_{v \in \opti} \rplus_v \le \cc(\otree).
\]
\end{lemma}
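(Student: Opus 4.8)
The plan is to reduce this to the classical ``each portion of each edge is colored at most once'' inequality. Concretely, Lemma~\ref{lm:tree-bound-by-activeset-num-cuts} applied to the tree $\otree$ and the growth values $\yplus$ of Legacy Execution gives
\[
    \cc(\otree) \ \ge\ \sum_{S \subseteq \V} |\deltaS \cap \otree| \cdot \yplus_S ,
\]
so it suffices to show $\sum_{v \in \opti} \rplus_v \le \sum_{S \subseteq \V} |\deltaS \cap \otree| \cdot \yplus_S$. Since $\rplus_v = \sum_{S \subseteq \V} \rplus_{S,v}$, I would prove this set by set: for every $S \subseteq \V$,
\[
    \sum_{v \in \opti} \rplus_{S,v} \ \le\ |\deltaS \cap \otree| \cdot \yplus_S ,
\]
and then sum over $S$.

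To establish the per-set bound, fix $S$ and assume it assigns positive growth to at least one vertex of $\opti$ (otherwise the left side is $0$). First I would argue $|\deltaS \cap \otree| \ge 1$: at a moment $\currenttime$ where the active set $S$ (necessarily in the base phase) assigns growth to some $v \in \opti$, we have $v \in \priorityset(\BaseSet(S,\currenttime)) \subseteq S$, and Lemma~\ref{lm:rplus_assigning_condition} gives $\pairv \notin S$; since the optimal forest connects $v$ to $\pairv$, the vertex $\pairv$ lies in the same optimal component, i.e.\ $\pairv \in \opti$, so $S$ contains $v \in \opti$ but not $\pairv \in \opti$, hence $S \odot \otree$. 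Second, by Definition~\ref{def:rplus} each assignment uses growth fraction exactly $1$, and by Lemma~\ref{lm:opti_cap_priorityset} the set $\priorityset(\BaseSet(S,\currenttime))$ contains at most one vertex of $\opti$ at every moment; therefore at each moment $S$ charges its growth increment to at most one vertex of $\opti$, and integrating over the run yields $\sum_{v \in \opti} \rplus_{S,v} \le \yplus_S \le |\deltaS \cap \otree| \cdot \yplus_S$. Summing over all $S$ and combining with the displayed inequality above gives $\sum_{v \in \opti} \rplus_v \le \cc(\otree)$.

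The step I expect to be the only real content is the cut argument above, namely showing that an active set assigning growth to a vertex $v \in \opti$ must actually cut $\otree$; this hinges precisely on $\pairv$ belonging to $\opti$ and not to $S$. Everything else is accounting: the reduction via Lemma~\ref{lm:tree-bound-by-activeset-num-cuts} and the per-moment ``at most one representative of $\opti$'' fact from Lemma~\ref{lm:opti_cap_priorityset}. Self-paired vertices cause no trouble, since they are satisfied from the start and so never receive any $\rplus$-assignment.
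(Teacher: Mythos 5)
Your proof is correct and follows essentially the same approach as the paper's proof: both use Lemma~\ref{lm:opti_cap_priorityset} to bound $\sum_{v\in\opti}\rplus_{S,v}\le\yplus_S$, Lemma~\ref{lm:rplus_assigning_condition} to deduce that any contributing $S$ cuts $\opti$ (hence $|\deltaS\cap\otree|\ge 1$), and Lemma~\ref{lm:tree-bound-by-activeset-num-cuts} to finish. The only cosmetic difference is that you prove the per-set inequality $\sum_{v\in\opti}\rplus_{S,v}\le|\deltaS\cap\otree|\cdot\yplus_S$ uniformly for all $S$, whereas the paper first restricts the outer sum to sets $S$ with $S\odot\opti$ and then drops the factor $|\deltaS\cap\otree|$ down to $1$; these are the same steps in a slightly different order.
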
 
\begin{proof}
    Recall from Definition~\ref{def:rplus}, by Lemma~\ref{lm:opti_cap_priorityset}, each moment of growth of active set $S$ in Legacy Execution is assigned to $\rplus$ of at most one vertex in $\opti$.
    Therefore,
    \begin{align}
    \label{eq:rplus_s_v_yplus_s}
        \sum_{v\in \opti}\rplus_{S,v} \le \yplus_s.
    \end{align}
    
    Additionally, based on Lemma~\ref{lm:rplus_assigning_condition}, the growth of $S$ can be assigned to $v \in \opti$ only if $v \in S$ and $\pairv \notin S$. Since $\pairv \in \opti$, we can conclude that $S \odot \opti$.
    Therefore, we can write a refined version of Definition~\ref{def:assignment} as follows:
    \begin{align*}
        \sum_{v\in \opti} \rplus_v &= \sum_{v\in \opti} \sum_{\substack{S\subseteq \V \\ S\odot \opti}} \rplus_{S,v} \\
        &= \sum_{\substack{S\subseteq \V \\ S\odot \opti}} \sum_{v\in \opti} \rplus_{S,v} \\
        &\le \sum_{\substack{S\subseteq \V \\ S\odot \opti}} \yplus_S \tag{Equation~\ref{eq:rplus_s_v_yplus_s}}\\
        &\le  \sum_{\substack{S\subseteq \V \\ S\odot \opti}} |\deltaS \cap \otree|\cdot\yplus_S \tag{If $S\odot \opti$ then $|\deltaS\cap \otree| \ge 1$}\\
        &\le \sum_{S \subseteq \V} |\deltaS \cap \otree|\cdot \yplus_S \\
        &\le \cc(\otree).\tag{Lemma~\ref{lm:tree-bound-by-activeset-num-cuts}}
    \end{align*}
\end{proof}

The following lemma, giving a lower bound for the cost of the optimal solution, can easily be derived from the above lemma.

\begin{lemma}
\label{lm:yys_opt}
    The total growth of active sets in Legacy Execution is a lower bound for the cost of the optimal solution.
    $$
    \sum_{S \subseteq \V} \yplus_S \le \cc(\OPT).
    $$
\end{lemma}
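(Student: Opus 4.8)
The plan is to derive Lemma~\ref{lm:yys_opt} as an immediate consequence of Lemma~\ref{lm:upper_bound_sum_rplus} together with Corollary~\ref{cor:yys_rplusv}. The only subtle point is matching up the two quantities: Corollary~\ref{cor:yys_rplusv} bounds the total growth by $\sum_{v \in \V} \rplus_v$, whereas Lemma~\ref{lm:upper_bound_sum_rplus} bounds $\sum_{v \in \opti} \rplus_v$ for each single component $\opti$ of the optimal solution. So the bridge I would use is that every vertex $v$ lies in exactly one component $\opti \in \OPT$ — recall from the Preliminaries that $\OPT$ is identified with the partition of $\V$ it induces — so summing the per-component bounds over all $\opti \in \OPT$ recovers the sum over all of $\V$.

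Concretely, the steps in order would be: first, apply Corollary~\ref{cor:yys_rplusv} to get
\[
\sum_{S \subseteq \V} \yplus_S \le \sum_{v \in \V} \rplus_v.
\]
Second, partition $\V$ according to $\OPT$ and rewrite the right-hand side as $\sum_{\opti \in \OPT} \sum_{v \in \opti} \rplus_v$. Third, apply Lemma~\ref{lm:upper_bound_sum_rplus} to each term, giving $\sum_{v \in \opti} \rplus_v \le \cc(\T_{\opti})$. Fourth, sum over all components: $\sum_{\opti \in \OPT} \cc(\T_{\opti}) = \cc(\OPT)$, since the trees $\T_{\opti}$ are precisely the connected components of the optimal forest and their costs partition $\cc(\OPT)$. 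Chaining these inequalities yields $\sum_{S \subseteq \V} \yplus_S \le \cc(\OPT)$.

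One small caveat worth addressing in the write-up: vertices $v$ with $\pairv = v$ (self-paired, hence immediately satisfied) sit in singleton components of $\OPT$ with $\cc(\T_{\opti}) = 0$, and indeed they receive no $\rplus$ assignment since by Lemma~\ref{lm:rplus_assigning_condition} a vertex only gets assigned growth while unsatisfied — so these contribute $0$ to both sides and cause no trouble. I don't anticipate any real obstacle here; the lemma is genuinely a one-line corollary of the two preceding results once the partition-over-components observation is made explicit, and the phrasing of the statement ("can easily be derived from the above lemma") signals exactly that. The proof I would write is essentially:
\[
\sum_{S \subseteq \V} \yplus_S \le \sum_{v \in \V} \rplus_v = \sum_{\opti \in \OPT} \sum_{v \in \opti} \rplus_v \le \sum_{\opti \in \OPT} \cc(\T_{\opti}) = \cc(\OPT),
\]
with the first inequality from Corollary~\ref{cor:yys_rplusv}, the middle inequality from Lemma~\ref{lm:upper_bound_sum_rplus} applied componentwise, and the final equality because $\OPT$'s components partition its edge set.
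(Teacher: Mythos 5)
Your proof is correct and follows exactly the same chain of inequalities as the paper's proof: apply Corollary~\ref{cor:yys_rplusv}, partition the sum over $\V$ into components of $\OPT$, apply Lemma~\ref{lm:upper_bound_sum_rplus} componentwise, and observe that the component costs sum to $\cc(\OPT)$. The remark about self-paired vertices is unnecessary but harmless.
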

\begin{proof}
    The proof can be derived by giving a lower bound for the cost of each connected component of the optimal solution as follows:
    \begin{align*}
        \sum_{S \subseteq \V} \yplus_S &\le \sum_{v\in \V} \rplus_v \tag{Corollary~\ref{cor:yys_rplusv}}\\
        &= \sum_{\opti \in \OPT} \sum_{v\in \opti} \rplus_v \\
        &\le \sum_{\opti \in \OPT} \cc(\otree) \tag{Lemma~\ref{lm:upper_bound_sum_rplus}} \\
        &= \cc(\OPT).
    \end{align*}
\end{proof}

\section{Local Search}
\label{sec:local_search}
In this section, we introduce a novel \emph{local search} approach that incrementally improves a given solution by modifying an initial monotonic moat growing algorithm.
The key idea is to delay the deactivation of certain active sets, allowing them to remain active for longer.
This can reduce the total growth of active sets by enabling them to connect and grow together more quickly.
As a result, the upper bound on the solution becomes smaller, and the actual cost may also decrease.

More precisely, we select a vertex $v$ and increase its fingerprint value $\gt_v$, an operation we call a \emph{boost action}.
We then check whether the moat growing algorithm corresponding to the new fingerprint yields a smaller total growth of active sets.
If it does, we apply the boost and repeat the process until no further improvement is possible.
This method not only reduces the upper bound and potentially lowers the cost, but also ensures useful structural properties in the resulting moat growing algorithm.

In the following subsections, we formally introduce boost actions, describe how the local search is carried out using a modified version of \GBG{}, analyze the resulting solution, and present key properties that are useful for the remainder of the paper.

\subsection{Algorithm}
Here, we introduce some necessary preliminary notation that follows the definition of our local search algorithm. We also present a modified version of Shadow Moat Growing algorithm, which is used by the local search.

From Section~\ref{subsec:def-shadow}, we know that in \GBG{}, the input consists of a graph $\G$ and a fingerprint $\tm$, where each $\tm_v$ specifies a time until which vertex $v$ is required to grow.
Since we consider increasing some of these values, we define a \emph{boosted instance} as follows:

\begin{definition}[Boosted Instance]
\label{def:boosted-instance}
    In a boosted instance $\BInsExp$, $\G$ represents the graph, $\tm$ denotes the base fingerprint, and $\tmp$, referred to as the \emph{boosted fingerprint}, represents the required growth after boosting, where $\tmp_v \ge \tm_v$ for all $v \in \V$.
\end{definition}

Initially, we can convert an input instance $(G, \tm)$ of \GBG{} to a boosted input instance $\BInsExp$ by letting $\tmp_v = \tm_v$ for all $v \in \V$. Now, we can apply boost actions to this instance to adjust the required growth time. We formally define \emph{boost action} below:

\begin{definition}[Boost]
    For a boosted instance $\BInsExp$, a \emph{boost action} $\BoostExp$, where $\tmpp \ge \tmp_v$, applied to a vertex $v$ updates $\tmp_v$ to $\tmpp$. 
    This results in a new boosted instance where the value of $\tmp_v$ is replaced by $\tmpp$.
    We denote this updated instance by $\WithBoost(\BIns, \Boost)$.
\end{definition}

To evaluate whether a boost action is beneficial, we first need to modify \GBG{}.
Therefore, we introduce \BoostedModGW{}, which is essentially the same algorithm as \GBG{} described in Section~\ref{subsec:def-shadow}, as both grow moats until they reach the times specified in the fingerprint. 
The difference lies in additional measurements used to assess whether a boost action is beneficial.  
Additionally, \BoostedModGW{} maintains a connected component $S$ active until no vertex $v$ within $S$ satisfies $\tmp_v > \currenttime$, where $\currenttime$ denotes the current moment in the algorithm. 

We define two parameters, $\ybase$ and $\yadd$, along with a function $\yb: 2^\V \to \mathbb{R}_{\ge 0}$.
The value $\ybase$ denotes the total growth required to reach $\tm_v$ for all $v \in \V$, and $\yadd$ represents the additional growth needed to reach $\tmp_v$.
More precisely, when an active set grows at time $\currenttime$, its growth contributes to $\ybase$ if it contains at least one vertex $v$ with $\tm_v > \currenttime$.
If instead all vertices in the set satisfy $\tmp_v > \currenttime \ge \tm_v$, the growth contributes to $\yadd$.
It follows that the total growth of active sets is $\ybase + \yadd$.
We use $\yb_S$ to denote the portion of growth from an active set $S$ that contributes to $\ybase$.

In Algorithm~\ref{alg:mod-gw-boost}, similar to Algorithm~\ref{alg:gbg}, we continue expanding active sets until the time constraints are met. 
The main difference is that each vertex must remain active until $\tmp_v$ instead of $\tm_v$ (see Line~\ref{line:BoostedMG_deactivation_condition}).  
We also split the total growth $\sum_{S \subseteq \V} \ys$ into two quantities, $\ybase$ and $\yadd$, and compute the function $\yb$ during execution (see Lines~\ref{line:BoostedMG_check_base_boost}--\ref{line:BoostedMG_compute_boost}).

Now, we measure the benefit of a boost action against the additional cost it introduces.  
This is done by comparing the results of \BoostedModGW{} on two instances, the original boosted instance $\BIns$ and the instance $\WithBoost(\BIns, \Boost)$ obtained after applying a boost action $\Boost$.  
Based on this comparison, we define \Winr{} and \Lossr{} for a boost action as follows.

\begin{definition}[Win, Loss, and Valuable Boost]  
\label{def:win-loss}  
For an instance $\BIns$, let the boost action $\BoostExp$ produce a new instance $\BIns' = \WithBoost(\BIns, \Boost)$. Suppose the procedure \BoostedModGW{} returns $(\ybase, \yadd)$ on $\BIns$ and $(\ybase', \yadd')$ on $\BIns'$.  

The \emph{win} of the boost action is the total decrease in $\ybase$, defined by  
$$
\Winr(\BIns, \Boost) = \ybase - \ybase'.
$$  \label{func:win}
The \emph{loss} is the total increase in $\yadd$ caused by applying $\Boost$, given by  
$$
\Lossr(\BIns, \Boost) = \yadd' - \yadd.
$$  \label{func:loss}
We say that the boost action $\Boost$ on instance $\BIns$ is \emph{valuable} if  
$$
\Winr(\BIns, \Boost) \ge (1+\bta) \cdot \Lossr(\BIns, \Boost),
$$  
where $\bta \in (0, 1)$ is a fixed constant. 
\end{definition}  

It is important to note that we will present the appropriate value of $\beta$ in Section~\ref{sec:final}, although this value is passed to \LocalSearch{} whenever we use it in our algorithm.

\begin{algorithm}[H]
  \caption{Boosted Moat Growing}
  \label{alg:mod-gw-boost}
  \hspace*{\algorithmicindent} \textbf{Input:} 
  A boosted instance $\BInsExp$ including a graph $\G = (\V, \E, \cc)$ with edge costs $\cc: \E \to \mathbb{R}_{\geq 0}$, a function $\tm: \V \to \mathbb{R}_{\geq 0}$ specifying the original fingerprint, and a function $\tmp: \V \to \mathbb{R}_{\geq 0}$ specifying the boosted fingerprint. \\
  \hspace*{\algorithmicindent} \textbf{Output:} 
  $\F$ is the resulting forest of fingerprint $\tmp$,
  $\ybase$ is the amount of growth needed to satisfy fingerprint $\tm$, $\yadd$ is
  the additional amount of growth to reach fingerprint $\tmp$,
  $\yb: 2^\V \to \mathbb{R}_{\ge 0}$ is a function that maps each set $S$ to the amount of its growth counted toward $\ybase$.
  
  \begin{algorithmic}[1]
    \Procedure{\BoostedModGWr}{$\BInsExp$}
       \label{func:bmg}
      \State $\currenttime \gets 0$
      \State $\F \gets \emptyset$
      \State $\currentsets \gets \{\{v\} \mid v \in \V\}$
      
      \State $\activesets \gets \{\{v\} \mid v \in \V, \tV > 0\}$
      \State $\deactivesets \gets \{\{v\} \mid v\in \V, \tV = 0\}$
      
      \State Implicitly set $\ys \gets 0$ for $S \subseteq \V$
      \State Implicitly set $\yb_S \gets 0$ for $S \subseteq \V$
      \While{$\activesets \neq \emptyset$} 
      \Comment{While there exists an active set}
        \State $\Deltae \gets \min_{e = uv \in \E} \frac{\cc_e - \sum_{S \ni e} \ys}{|\{S_u, S_v\} \cap \activesets|}$, where $u\in S_u\in \currentsets$, $v\in S_v\in \currentsets$, and $S_u \ne S_v$
        \State $\Delta_{\tm} \gets \min_{v \in \V, \tm_v > \currenttime}(\tm_v - \currenttime)$
        \State $\Delta_{\tmp} \gets \min_{v \in \V, \tmp_v > \currenttime}(\tmp_v - \currenttime)$
        \State $\Delta \gets \min(\Deltae, \Delta_{\tm}, \Delta_{\tmp})$
        \For{$S \in \activesets$}
        \label{line:start-add-delta}
            \If{$v \in S$\textbf{ exists  such that } $\tm_v > \currenttime$} \label{line:BoostedMG_check_base_boost}
                \State $\ybase \gets \ybase + \Delta$
                \State $\yb_S \gets \yb_S + \Delta$
            \Else
                \State $\yadd \gets \yadd + \Delta$ \label{line:BoostedMG_compute_boost}
            \EndIf
            \State $\ys \gets \ys + \Delta$
        \EndFor
        \label{line:end-add-delta}
        \State $\currenttime \gets \currenttime + \Delta$
        \For{$e\in E$}
          \State Let $S_v, S_u \in \currentsets$ be sets that contain each endpoint of $e$
          \If{$\sum_{S: e \in \deltaS} \ys = \ce$ \textbf{and} $S_v \neq S_u$}
          \Comment{Edge $(v, u)$ become fully colored}
            \State $\F \gets \F \cup \{e\}$
            \State $\currentsets \gets (\currentsets \setminus \{S_v, S_u\}) \cup \{S_v \cup S_u\}$
           
            \State $\activesets \gets (\activesets \setminus \{S_v, S_u\}) \cup \{S_v \cup S_u\}$
          \EndIf
        \EndFor
        \For{$S \in \activesets$}
          \If{$\tmp_v \le \currenttime \textbf{ for all } v \in S$}  \label{line:BoostedMG_deactivation_condition}
          \Comment{$S$ become inactive}
            \State $\activesets \gets \activesets \setminus \{S\}$    
            \State $\deactivesets \gets \deactivesets \cup \{S\}$
          \EndIf
        \EndFor
      \EndWhile
      
      \While{$S\in \deactivesets \textbf{ exists such that } |\deltaS \cap \F|= 1$}
      \Comment{Remove unnecessary edges}
        \State $\F \gets \F \setminus \deltaS$
      \EndWhile
      
      \State \Return $\F, \ybase, \yadd, \yb$
    \EndProcedure
  \end{algorithmic}
\end{algorithm}

Using valuable boost actions, we design our local search algorithm. We start with a boosted instance where $\tmp_v = \tm_v$ for all $v \in V$. In each iteration, we identify a valuable boost action and apply it to update the instance. This process is repeated until no further valuable boost actions remain.

\paragraph{Boost Actions Space.} One natural question that arises is: given the infinite number of possible boost actions, how can we determine whether a valuable boost exists? To address this, we define a polynomially bounded set of boost actions (with size polynomial in the input), and later show that this restricted subspace is sufficient for the analysis in this paper (see Lemma~\ref{lm:xyz}).

Our method proceeds as follows. We consider all vertices as potential candidates for the boost action. For a given vertex \( v \), we identify at most \( n \) specific time points such that checking only these is sufficient to determine whether a valuable boost action exists involving \( v \).

To find these time points, we simulate Shadow Moat Growing algorithm with a modified parameter \(\tmp\), where we set \(\tmp_v = \infty\). Here, \(\infty\) is chosen to be large enough so that the final moat grown around \( v \) includes all vertices while \( v \) is active. During this run, each time the active set containing \( v \) merges with another active set, we record the current time \(\currenttime\), but only if \(\currenttime\) exceeds the previous value of \(\tmp_v\).  
Note that we only capture times when \( v \) becomes connected to another active set, not to a component that has already become inactive.  

Let the resulting sequence of such time points be \(\currenttime_1, \currenttime_2, \dots, \currenttime_k\).  
We then define the candidate boost actions for vertex \( v \) as \(\Boost \in \{(v, \currenttime_i)\}_{i=1}^k\).  
Since each recorded time corresponds to a merge event between two sets, and there are at most \( n \) such events, we have \( k \leq n \).  
Repeating this process for all \( n \) vertices yields at most \( n^2 \) candidate boost actions to check.

\begin{corollary}
\label{cor:find-boost-polynomial}
    A valuable boost action, if one exists in the defined subspace, can be found in polynomial time or confirmed not to exist.
\end{corollary}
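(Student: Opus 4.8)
The plan is to reduce the claim to three facts: the candidate subspace can be enumerated in polynomial time; a single call to \BoostedModGW{} terminates in polynomially many steps; and deciding whether a fixed candidate is valuable costs only two such calls plus an arithmetic comparison. Together these give a polynomial-time procedure that either outputs a valuable boost action from the subspace or certifies that none exists.

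The technical heart is the running-time bound for \BoostedModGW{}, which I would argue by the same bookkeeping used for Corollary~\ref{cor:legacy_polynomial_time}. In each iteration of the main loop of Algorithm~\ref{alg:mod-gw-boost} the step length is $\Delta = \min(\Deltae, \Delta_{\tm}, \Delta_{\tmp})$, and after updating the $\ys$ values at least one ``progress event'' takes place: if $\Delta = \Deltae$ then some edge joining two distinct components (one of which is active, since $\Deltae$ is minimized only over such edges) becomes fully colored and is added to $\F$; otherwise the current time $\currenttime$ reaches a threshold $\tm_v$ (if $\Delta = \Delta_{\tm}$) or $\tmp_v$ (if $\Delta = \Delta_{\tmp}$) that it had not reached before. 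There are at most $|\E|$ edge additions and at most $|\V|$ distinct values among the $\tm_v$, and likewise among the $\tmp_v$; also the deactivation test on Line~\ref{line:BoostedMG_deactivation_condition} can succeed at most $|\V|$ times in total. Hence the main loop runs $O(|\E| + |\V|)$ times, each iteration does a polynomial amount of work (minimum computations over edges, a pass over the edges, a pass over the active sets), and the pruning phase removes at most $|\E|$ edges, so \BoostedModGW{} is polynomial-time. The same bound applies to \GBG{}, which is the special case where $\tmp = \tm$.

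For the subspace itself, I would invoke the construction preceding the corollary: for each vertex $v$ we run one simulation with $\tmp_v$ replaced by a polynomially sized constant $M$ that dominates every reachable time, e.g.\ $M = 1 + \cc(\E)$, which by the previous paragraph costs polynomial time; tracking the component containing $v$ and recording the increasing sequence of times $\currenttime_1,\dots,\currenttime_k$ at which it merges with another active set yields at most $k \le |\V|$ candidates $(v,\currenttime_i)$, since each merge permanently decreases the number of components. Ranging over all $|\V|$ vertices produces at most $|\V|^2$ candidate boost actions, all generated in polynomial time.

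Finally, for each candidate $\Boost = (v,\currenttime_i)$ we build $\WithBoost(\BIns,\Boost)$ in constant time, run \BoostedModGW{} on $\BIns$ and on $\WithBoost(\BIns,\Boost)$ to read off $\ybase,\yadd$ and $\ybase',\yadd'$, and test the condition of Definition~\ref{def:win-loss}, namely $\Win(\BIns,\Boost) = \ybase - \ybase' \ge (1+\bta)(\yadd' - \yadd) = (1+\bta)\,\Loss(\BIns,\Boost)$. Returning the first candidate that passes this test, or reporting that none does, requires $O(|\V|^2)$ executions of \BoostedModGW{} and hence runs in polynomial time. The only step that is not purely routine is the iteration count in the second paragraph; the subtlety there is remembering that the time-threshold events (crossings of $\tm_v$ and $\tmp_v$), which change neither $\F$ nor the active sets, are nonetheless bounded by $O(|\V|)$ because only $O(|\V|)$ distinct threshold values exist.
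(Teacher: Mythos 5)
Your proposal is correct and matches the paper's implicit reasoning: the paper presents this as a direct corollary of the subspace construction preceding it, relying on the polynomial enumeration of the at-most-$n^2$ candidates and on the polynomial running time of \BoostedModGW{} (which the paper treats as analogous to Corollary~\ref{cor:legacy_polynomial_time} and later asserts without separate proof inside Lemma~\ref{lm:localsearch_polynomial_time}). You simply make explicit the iteration-count argument for \BoostedModGW{} — progress via an edge addition or a newly crossed $\tm_v$/$\tmp_v$ threshold — that the paper leaves implicit.
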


At the end of this process, we obtain a boosted instance \( \BInsExp \) in which no further valuable boost actions within the defined space can be applied.  
The final output includes the boosted fingerprint and the forest produced by running \BoostedModGW{} on it.  
This approach is summarized in Algorithm~\ref{alg:local-search}.  
We refer to the execution of \BoostedModGW{} in Line~\ref{line:corres-mg} as the \emph{corresponding moat growing} in the local search, a term that will be further discussed.

\begin{algorithm}[H]
  \caption{Local Search}
  \label{alg:local-search}
  \hspace*{\algorithmicindent} \textbf{Input:} 
  Graph $G = (V, E, c)$ with edge costs $c: E \to \mathbb{R}_{\geq 0}$, function $t: V \to \mathbb{R}_{\geq 0}$ specifying a fingerprint, and parameter $0 < \bta < 1$.\\
  \hspace*{\algorithmicindent} \textbf{Output:} 
  A forest $\F$, a boosted fingerprint $\tmp$, and a function $y: 2^{\V} \to \mathbb{R}_{\geq 0}$ specifying the growth of active sets, all with respect to the corresponding moat growing algorithm.
  \begin{algorithmic}[1]
    \Procedure{\LocalSearchr}{$\G, \tm, \bta$}
    \label{func:ls}
        \State $\tmp \gets \tm$
        \State Initialize $\BIns \gets (\G, \tm, \tmp)$, representing the instance before any boost action is applied.
        \While {$\BoostExp \textbf{ exists such that } \Win(\BIns, \Boost) \ge (1+\bta) \cdot \Loss(\BIns, \Boost)$}
        \label{line:local-search-iteration}
            \State $\tmp_v \gets \tmpp$
            \State $\BIns \gets \WithBoost(\BIns, \Boost)$
        \EndWhile
        \State $\F, \ybase, \yadd, \yb \gets \BoostedModGW(\G, \tm, \tmp)$
        \label{line:corres-mg}
        \State \Return $\F, \tmp, \yb$
    \EndProcedure
  \end{algorithmic}
\end{algorithm}

Now, in \LocalSearch{}, we apply multiple boost actions. In our analysis, we are interested in the total win and loss across all these actions, leading to the following definition.

\begin{definition}[Total Win and Loss]
\label{def:total-win-loss}
    For a \LocalSearch{} on instance $\Ins$, we define $\win$ and $\loss$ as the total \Win{} and \Loss{}, respectively, resulting from all boost actions applied during the \LocalSearch{} to the initial instance. Consequently, we have
    $$
        \win \ge (1+\bta) \cdot \loss.
    $$
\end{definition}

\subsection{Analysis of Local Search}

In this section, we present key properties and analyses concerning boost actions, valuable boost actions, and the \LocalSearch{} procedure.

We first show that $\ybase$ and $\yadd$ together account for the total growth of active sets.

\begin{lemma}
\label{lm:ys-sum-ybase-yadd}
    For an instance $\BIns$, at the end of running \BoostedModGW{} on $\BIns$, we have
    $$
    \sum_{S \subseteq \V} \ys = \ybase + \yadd.
    $$
\end{lemma}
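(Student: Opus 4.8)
The plan is to establish the identity as an invariant maintained at every moment throughout the execution of \BoostedModGW{}, proved by induction on the discrete events (loop iterations) of Algorithm~\ref{alg:mod-gw-boost}. Concretely, I would prove that after each iteration of the main \textbf{while} loop,
\[
\sum_{S \subseteq \V} \ys = \ybase + \yadd,
\]
and since no subsequent step (in particular the pruning phase) alters $\ys$, $\ybase$, or $\yadd$, the identity holds at termination.

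For the base case, at initialization every $\ys$ is implicitly $0$, and $\ybase = \yadd = 0$, so both sides vanish. For the inductive step, observe that within a single iteration the only lines that modify $\ys$, $\ybase$, or $\yadd$ are the inner \textbf{for} loop over $S \in \activesets$ (Lines~\ref{line:start-add-delta}--\ref{line:end-add-delta}); the edge-contraction block and the deactivation block leave all three untouched, as does the pruning \textbf{while} loop at the end. In that inner loop, each active set $S \in \activesets$ has $\ys$ increased by exactly $\Delta$. Simultaneously, the \textbf{if/else} branch guarantees that for each such $S$ exactly one of two things happens: either there exists $v \in S$ with $\tm_v > \currenttime$, in which case $\ybase$ is increased by $\Delta$ (and $\yb_S$ as well, though that is irrelevant here); or no such vertex exists, in which case $\yadd$ is increased by $\Delta$. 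Hence the total increment to the left-hand side over this iteration is $|\activesets| \cdot \Delta$, and the total increment to $\ybase + \yadd$ is also $|\activesets| \cdot \Delta$ — one contribution of $\Delta$ per active set, split between the two counters. Applying the inductive hypothesis, the identity is preserved.

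There is essentially no obstacle here; the statement is a bookkeeping consequence of the algorithm's structure. The only points requiring a word of care are: (i) noting that $\ybase$ and $\yadd$ are implicitly initialized to $0$ (consistent with the paper's convention of implicitly setting accumulators to zero, as done for $\ys$ and $\yb_S$); (ii) confirming that the \textbf{if/else} is genuinely exclusive so that each active set contributes its $\Delta$ to exactly one of the two counters, never both and never neither; and (iii) confirming that $\ys$ is modified nowhere else in the procedure, so the per-iteration accounting is complete.
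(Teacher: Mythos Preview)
Your proposal is correct and follows essentially the same approach as the paper: both arguments observe that in Lines~\ref{line:start-add-delta}--\ref{line:end-add-delta}, every time $\Delta$ is added to some $\ys$, the same $\Delta$ is added to exactly one of $\ybase$ or $\yadd$. The paper dispatches this in a single sentence, while you spell out the invariant-maintenance structure more explicitly, but the underlying idea is identical.
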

\begin{proof}
    Based on the Lines~\ref{line:start-add-delta}-~\ref{line:end-add-delta} of Algorithm~\ref{alg:mod-gw-boost}, anytime we are adding $\Delta$ to one $\ys$, we add the same amount to one of $\ybase$ or $\yadd$, so the equality holds.
\end{proof}
 
Since each boost action produces a larger fingerprint, we can derive the following two corollaries from Lemma~\ref{lm:large_fingerprint_refinement}.

\begin{corollary}
\label{cor:boost_activeset_refinement}
    For any instance $\BIns$ and boost action $\Boost$, 
    the active sets at any moment $\currenttime$ during the execution of \BoostedModGW{} on $\BIns$ form a {\em refinement} of the active sets at the same moment during the execution of \BoostedModGW{} on $\WithBoost(\BIns, \Boost)$.
    Similarly, the connected components at the same moment in the execution on $\BIns$ form a refinement of the connected components in the other execution.
\end{corollary}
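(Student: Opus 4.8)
The plan is to read this off directly from Lemma~\ref{lm:large_fingerprint_refinement}, since a boost action only raises fingerprint coordinates. First I would unwind the definitions: the boost action $\Boost = (v, \tmpp)$ with $\tmpp \ge \tmp_v$ turns $\BIns = (\G, \tm, \tmp)$ into $\WithBoost(\BIns, \Boost)$, whose boosted fingerprint agrees with $\tmp$ in every coordinate except that its $v$-th coordinate is $\tmpp$. Hence, writing $\tin := \tmp$ for the boosted fingerprint of $\BIns$ and $\tout$ for the boosted fingerprint of $\WithBoost(\BIns, \Boost)$, we have $\tout_v = \tmpp \ge \tmp_v = \tin_v$ and $\tout_w = \tin_w$ for all $w \ne v$, so $\tout$ is larger than $\tin$.

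The second step is to check that \BoostedModGW{}, run on a boosted instance, is a monotonic moat growing algorithm (Definition~\ref{def:monotonic}) whose fingerprint is precisely the boosted fingerprint of that instance. The point here is that the extra bookkeeping in Algorithm~\ref{alg:mod-gw-boost} — the variables $\ybase$, $\yadd$ and the function $\yb$, computed in Lines~\ref{line:BoostedMG_check_base_boost}--\ref{line:BoostedMG_compute_boost} — never feeds back into the choice of $\Delta$, into which edges become fully colored, or into the deactivation test of Line~\ref{line:BoostedMG_deactivation_condition}; so the evolution of $\currentsets$ and $\activesets$ is exactly that of an ordinary monotonic moat growing process driven by the deactivation rule ``$S$ becomes inactive when $\tmp_w \le \currenttime$ for all $w \in S$.'' This rule is exactly what certifies the boosted fingerprint as a legitimate fingerprint in the sense of Definition~\ref{def:fingerprint}: each vertex $w$ lies in an active set until moment equal to its fingerprint value, and every active set at moment $\currenttime$ contains a vertex whose fingerprint value exceeds $\currenttime$. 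Thus \BoostedModGW{} on $\BIns$ is a monotonic moat growing algorithm with fingerprint $\tin$, and \BoostedModGW{} on $\WithBoost(\BIns, \Boost)$ is one with fingerprint $\tout$.

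Finally I would invoke Lemma~\ref{lm:large_fingerprint_refinement} with these $\tin$ and $\tout$: at every moment $\currenttime$, the active sets of the execution with fingerprint $\tin$ form a refinement of those of the execution with fingerprint $\tout$, and the same holds for the connected components. Translating back, the active sets (resp.\ connected components) of \BoostedModGW{} on $\BIns$ refine those of \BoostedModGW{} on $\WithBoost(\BIns, \Boost)$ at every moment, which is the assertion of Corollary~\ref{cor:boost_activeset_refinement}. The only step needing any care — the ``main obstacle,'' though it is minor — is the second one: making sure the auxiliary quantities tracked by \BoostedModGW{} do not perturb the underlying dynamics, so that the execution really is an instance of the monotonic moat growing framework to which Lemma~\ref{lm:large_fingerprint_refinement} applies; everything else is immediate.
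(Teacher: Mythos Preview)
Your proposal is correct and follows essentially the same approach as the paper, which simply states that the corollary is derived from Lemma~\ref{lm:large_fingerprint_refinement} because ``each boost action produces a larger fingerprint.'' You supply more detail than the paper does, in particular your second step verifying that the auxiliary bookkeeping in \BoostedModGW{} does not alter the underlying moat-growing dynamics; the paper treats this as implicit from its earlier remark that \BoostedModGW{} is ``essentially the same algorithm as \GBG{}'' with only additional measurements.
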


\begin{corollary}
\label{cor:localsearch_activeset_refinement}
    For any instance $(\G, \tm, \bta)$, let $\tmp$ be the boosted fingerprint produced by \LocalSearch{} on this instance. Then, at any moment $\currenttime$, the active sets in the execution of \GBG{} on $(\G, \tm)$ form a \emph{refinement} of the active sets at the same moment in the execution on $(\G, \tmp)$.  
    Similarly, the connected components at the same moment in the execution on $(\G, \tm)$ form a refinement of those in the execution on $(\G, \tmp)$.
\end{corollary}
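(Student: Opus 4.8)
The plan is to read the statement as a telescoping of single-boost comparisons. Recall that \LocalSearch{} on $(\G, \tm, \bta)$ starts from the boosted instance $\BIns_0 = (\G, \tm, \tm)$ and, as long as a valuable boost exists, replaces the current instance $\BIns_{i-1}$ by $\BIns_i = \WithBoost(\BIns_{i-1}, \Boost_i)$; since the restricted boost-action space is finite, this loop performs only finitely many boosts $\Boost_1, \dots, \Boost_k$, and the returned fingerprint $\tmp$ is the boosted fingerprint of $\BIns_k = (\G, \tm, \tmp)$. Each $\Boost_i = (v_i, \tmpp)$ only raises a single coordinate (a boost action requires $\tmpp \ge \tmp_{v_i}$), so the boosted fingerprints of $\BIns_0, \dots, \BIns_k$ form a chain that is pointwise nondecreasing and ends at $\tmp \ge \tm$.

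First I would apply Corollary~\ref{cor:boost_activeset_refinement} to each consecutive pair: for every $i$ and every moment $\currenttime$, the active sets in the execution of \BoostedModGW{} on $\BIns_{i-1}$ form a refinement of the active sets at the same moment in the execution on $\BIns_i$, and likewise for the connected components. Refinement is transitive — if every block of one disjoint family sits inside a block of a second, and every block of the second sits inside a block of a third, then every block of the first sits inside a block of the third — so composing these $k$ refinements shows that, at every moment $\currenttime$, the active sets (resp.\ connected components) of \BoostedModGW{} on $\BIns_0$ refine those of \BoostedModGW{} on $\BIns_k$.

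It then remains to identify the two endpoints of the chain with the executions named in the statement. The instance $\BIns_0 = (\G, \tm, \tm)$ has equal base and boosted fingerprints, so in Algorithm~\ref{alg:mod-gw-boost} the deactivation test ``$\tmp_v \le \currenttime$ for all $v \in S$'' is exactly the one in Algorithm~\ref{alg:gbg}, and the only other difference — the bookkeeping of $\ybase$, $\yadd$, and $\yb$ — never affects $\Delta$, the coloring of edges, the merges, or the deactivations; hence \BoostedModGW{} on $\BIns_0$ and \GBG{} on $(\G, \tm)$ produce the same active sets and connected components at every moment. The same argument, now with boosted fingerprint $\tmp$, identifies \BoostedModGW{} on $\BIns_k$, which equals \BoostedModGW{} on $(\G, \tm, \tmp)$ (the corresponding moat growing of the local search), with \GBG{} on $(\G, \tmp)$; alternatively one checks that $\tmp$ is a fingerprint of that execution — the deactivation rule keeps a component active precisely while it contains a vertex $v$ with $\tmp_v > \currenttime$, which yields both conditions of Definition~\ref{def:fingerprint} — and invokes Lemma~\ref{lm:ghost_equivalence}. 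Combining the previous paragraph with these two identifications gives exactly the two refinement statements of the corollary.

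I do not expect a genuine obstacle here: the content is entirely bookkeeping on top of Corollary~\ref{cor:boost_activeset_refinement}. The only points that merit a word of care are (i) that the local-search loop is finite, so the chain of instances has finite length — which follows from restricting to the polynomially many boost actions of the defined subspace — and (ii) that the boosted fingerprints along the chain are pointwise nondecreasing, which is immediate from the definition of a boost action. Everything else, namely transitivity of refinement and the equivalence between \BoostedModGW{} with equal fingerprints and \GBG{}, is routine.
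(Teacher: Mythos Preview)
Your proposal is correct. The paper, however, takes a more direct route: it simply notes that each boost action only increases a coordinate of the boosted fingerprint, so the final $\tmp$ is pointwise larger than $\tm$, and then applies Lemma~\ref{lm:large_fingerprint_refinement} in a single shot to the pair $(\tm, \tmp)$ rather than telescoping through the intermediate instances via Corollary~\ref{cor:boost_activeset_refinement}. Your chain-of-refinements argument is valid and makes the dependence on the local-search iterations explicit, and your care in identifying \BoostedModGW{} on $(\G, \tm, \tm)$ and $(\G, \tm, \tmp)$ with \GBG{} on $(\G, \tm)$ and $(\G, \tmp)$ respectively is a point the paper leaves implicit; but the one-step application of Lemma~\ref{lm:large_fingerprint_refinement} is shorter and avoids the need for finiteness of the loop or transitivity of refinement.
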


\paragraph{Boost Action Properties.}
We now focus on the properties of a boost action. Using Corollary~\ref{cor:boost_activeset_refinement}, we derive bounds for $\Win$ and $\Loss$ for any boost action. We begin by showing that $\Win$ is always nonnegative.

\begin{lemma}
\label{lm:positive_win}
    For any instance $\BInsExp$ and boost action $\Boost$, we have
    $$
        \Win(\BIns, \Boost) \ge 0.
    $$
\end{lemma}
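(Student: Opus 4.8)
## Proof Proposal

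\textbf{Overall approach.} The claim is that applying any boost action can only decrease (or keep fixed) the amount of growth $\ybase$ needed to satisfy the base fingerprint $\tm$. The natural strategy is to compare the two executions of \BoostedModGW{}—one on $\BIns$, one on $\BIns' = \WithBoost(\BIns, \Boost)$—moment by moment, and show that at every moment $\currenttime$, the instantaneous rate at which $\ybase$ accumulates in the boosted execution is at most the rate in the original execution. Since $\ybase$ accumulates at a rate equal to the number of active sets that are still in the base phase (i.e., contain a vertex $v$ with $\tm_v > \currenttime$), it suffices to show that at every moment $\currenttime$, the number of such ``base-phase'' active sets in the execution on $\BIns'$ is at most the number in the execution on $\BIns$.

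\textbf{Key steps.} First, I would invoke Corollary~\ref{cor:boost_activeset_refinement}: at any moment $\currenttime$, the active sets in the execution on $\BIns$ form a refinement of the active sets in the execution on $\BIns'$. So each active set $S'$ in the boosted execution is a union of active sets $S_1, \dots, S_k$ from the original execution (plus possibly vertices lying in no active set there, though those don't affect the base-phase test). Second, I would observe that the base-phase condition is determined solely by $\tm$, which is \emph{identical} in both instances—the boost only changes $\tmp$, hence only affects deactivation (Line~\ref{line:BoostedMG_deactivation_condition}), not the $\tm_v > \currenttime$ test on Line~\ref{line:BoostedMG_check_base_boost}. Third, the crucial combinatorial point: if $S'$ is an active set in the boosted execution that is in the base phase at moment $\currenttime$ (so some $v \in S'$ has $\tm_v > \currenttime$), then that same vertex $v$ lies in one of the pieces $S_i$ of the refinement, and $S_i$ must be active at moment $\currenttime$ in the original execution (by the refinement, $v \in S_i \subseteq S'$, and since $\tm_v > \currenttime \le \tmp_v$, the set $S_i$ cannot yet have been deactivated—more carefully, one uses that $v$ is active at moment $\currenttime$ in the original execution because $\tm_v > \currenttime$ together with Lemma~\ref{lm:large_fingerprint_vertex_remains_active} applied in reverse, or more directly that in \BoostedModGW{} on $\BIns$ the set containing $v$ stays active until $\tmp_v \ge \tm_v > \currenttime$). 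Thus $S_i$ is a base-phase active set in the original execution. Since distinct base-phase active sets $S'$ in the boosted execution get distinct witnesses $S_i$ (they are disjoint, being pieces of a refinement of disjoint sets), we get an injection from base-phase active sets of $\BIns'$ into base-phase active sets of $\BIns$. Fourth, integrating the rate inequality over all moments—or equivalently, comparing $\sum_{S} \yb_S$ across the two runs using the fact that $\yb$ accumulates exactly at base-phase active sets—yields $\ybase' \le \ybase$, i.e., $\Win(\BIns, \Boost) = \ybase - \ybase' \ge 0$.

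\textbf{Main obstacle.} The delicate part is making the ``moment-by-moment'' comparison rigorous, since the two executions may have their discrete event times interleaved differently, and the active set structure changes at merge and deactivation events. The clean way around this is to think of both processes in continuous time (as the paper emphasizes, these are continuous processes run in discrete steps), partition the timeline by the union of all event times of both executions, and on each subinterval argue the injection above. One should also double-check the edge case where a vertex $v$ with $\tm_v > \currenttime$ lies in $S'$ but in \emph{no} active set of the original execution at that moment—I claim this cannot happen: by the refinement relation between active sets, and the fact that $v$ must be active in the original run whenever $\tm_v > \currenttime$ (as its containing set cannot deactivate before $\tmp_v \ge \tm_v$), $v$ is in some original active set $S_i \subseteq S'$. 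Verifying this containment carefully, using that $\tmp'$ in the original instance still satisfies $\tmp'_v \ge \tm_v$, is where most of the care is needed; everything else is bookkeeping.
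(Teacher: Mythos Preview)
Your proposal is correct and follows essentially the same approach as the paper: both arguments compare the two executions moment by moment, use Corollary~\ref{cor:boost_activeset_refinement} to get the refinement relation, and then observe that any base-phase active set $S'$ in the boosted run contains a witness vertex $v$ with $\tm_v > \currenttime$ which must also lie in a (distinct, disjoint) base-phase active set $S \subseteq S'$ of the original run. The paper handles your ``main obstacle'' with exactly the one-line observation you anticipated, namely that since $\currenttime < \tm_v \le \tmp_v$, the component containing $v$ in the original execution is still active and in its base phase.
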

\begin{proof}
    By Definition~\ref{def:win-loss}, let $\BIns' = \WithBoost(\BIns, \Boost)$, and suppose that running \BoostedModGW{} on $\BIns$ yields $(\ybase, \yadd)$ while running it on $\BIns'$ yields $(\ybase', \yadd')$. Since $\Win(\BIns, \Boost) = \ybase - \ybase'$, it suffices to show that $\ybase \ge \ybase'$.

    Consider an active set $S'$ at moment $\currenttime$ in the execution on $\BIns'$ that contributes to $\ybase'$. By definition, this means there exists a vertex $v \in S'$ such that $\currenttime < \tm_v$. At the same moment in the execution on $\BIns$, vertex $v$ must lie in a connected component $S$, and since $\currenttime < \tm_v$, its growth also contributes to $\ybase$.
    By Corollary~\ref{cor:boost_activeset_refinement}, we have $S \subseteq S'$. Hence, for every active set contributing to $\ybase'$ in $\BIns'$, there exists a distinct subset in $\BIns$ that contributes to $\ybase$. Since the active sets in the original execution are disjoint, their corresponding subsets in the boosted execution are also disjoint and distinct.

    Therefore, each contribution to $\ybase'$ is matched by a corresponding (distinct) contribution to $\ybase$, implying $\ybase \ge \ybase'$. This completes the proof.
\end{proof}

We also use Corollary~\ref{cor:boost_activeset_refinement} to establish an upper bound on $\Loss$.

\begin{lemma}  
\label{lm:loss-less-than-new-time}  
For any instance $\BInsExp$ and boost action $\BoostExp$, we have 
\[
    \Loss(\BIns, \Boost) \le \tmpp.
\]  
\end{lemma}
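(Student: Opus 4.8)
The plan is to prove the equivalent statement $\yadd' \le \yadd + \tmpp$, where $(\ybase,\yadd)$ and $(\ybase',\yadd')$ are the outputs of \BoostedModGW{} on $\BIns$ and on $\BIns' = \WithBoost(\BIns,\Boost)$ respectively; since $\Loss(\BIns,\Boost) = \yadd' - \yadd$ by Definition~\ref{def:win-loss}, this gives the claim. The starting point is that $\BIns'$ differs from $\BIns$ only in that the boosted fingerprint of the single vertex $v$ is raised from $\tmp_v$ to $\tmpp$, while the base fingerprint $\tm$ is unchanged, so in particular $\tmp'_w = \tmp_w$ for all $w \neq v$. I would view $\yadd$ (and $\yadd'$) as a time integral: inspecting Lines~\ref{line:start-add-delta}--\ref{line:end-add-delta} of Algorithm~\ref{alg:mod-gw-boost}, the growth counted toward $\yadd$ at a moment $\currenttime$ is contributed exactly by those active sets all of whose vertices $w$ satisfy $\tm_w \le \currenttime$, i.e. the sets that remain active only ``because of the boosted fingerprint''. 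Writing $\mathcal{A}(\currenttime)$ and $\mathcal{A}'(\currenttime)$ for these families in the two executions, it then suffices to show, at every moment $\currenttime$, that $|\mathcal{A}'(\currenttime)| \le |\mathcal{A}(\currenttime)|$ when $\currenttime \ge \tmpp$, and $|\mathcal{A}'(\currenttime)| \le |\mathcal{A}(\currenttime)| + 1$ when $\currenttime < \tmpp$; integrating over $\currenttime \ge 0$ yields $\yadd' - \yadd \le \tmpp$.

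Next I would fix $\currenttime$ and invoke Corollary~\ref{cor:boost_activeset_refinement}: at moment $\currenttime$ the connected components (and active sets) of the execution on $\BIns$ refine those of the execution on $\BIns'$. Hence every connected component of the $\BIns$-execution that meets a set $S' \in \mathcal{A}'(\currenttime)$ is contained in $S'$. I partition $\mathcal{A}'(\currenttime)$ into the sets $S'$ containing at least one active set of the $\BIns$-execution and those containing none. For the first group: the members of $\mathcal{A}'(\currenttime)$ are pairwise disjoint, and any active set of the $\BIns$-execution lying inside such an $S'$ automatically belongs to $\mathcal{A}(\currenttime)$ (its vertices inherit $\tm_w \le \currenttime$ from $S'$), so mapping each $S'$ in this group to the nonempty family of $\mathcal{A}(\currenttime)$-sets inside it gives pairwise-disjoint images; thus this group contributes at most $|\mathcal{A}(\currenttime)|$.

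The heart of the argument — and the step I expect to be the main obstacle — is the second group: an $S' \in \mathcal{A}'(\currenttime)$ that is active in the boosted execution, has all vertices with $\tm_w \le \currenttime$, yet contains no active set of the original execution. I would show such an $S'$ must contain $v$ and can occur only while $\currenttime < \tmpp$. Indeed, if $v \notin S'$ then $\tmp'_w = \tmp_w$ on $S'$; every connected component of the $\BIns$-execution inside $S'$ is inactive there (otherwise it would be an active set of the $\BIns$-execution inside $S'$), so by the deactivation rule (Line~\ref{line:BoostedMG_deactivation_condition}) all its vertices satisfy $\tmp_w \le \currenttime$; since these components cover $S'$, every $w \in S'$ has $\tmp'_w \le \currenttime$, forcing $S'$ to be inactive in the $\BIns'$-execution — a contradiction. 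Hence $v \in S'$. Then, since $S'$ is active in the $\BIns'$-execution, some $w \in S'$ has $\tmp'_w > \currenttime$; if $w \neq v$ the component of the $\BIns$-execution containing $w$ (which lies in $S'$) would be active there, again a contradiction, so $w = v$ and $\currenttime < \tmp'_v = \tmpp$. As active sets at a moment are disjoint connected components, at most one contains $v$, so this group has size at most $1$ and is empty once $\currenttime \ge \tmpp$. Combining the two groups gives the desired per-moment bound, and integrating completes the proof. The only real delicacy is bookkeeping: matching the discrete-$\Delta$ accounting of $\yadd$ in the algorithm with the per-moment count of active sets, and handling the exact instants of deactivation (a measure-zero issue irrelevant to the integral).
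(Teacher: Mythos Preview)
Your proposal is correct and takes essentially the same approach as the paper: both reduce to $\yadd' \le \yadd + \tmpp$, use Corollary~\ref{cor:boost_activeset_refinement} to match each $S' \in \mathcal{A}'(\currenttime)$ to an active subset in the $\BIns$-execution contributing to $\yadd$, and isolate the one possible exception as the set containing $v$ during $\currenttime < \tmpp$. Your per-moment counting and explicit two-group partition is somewhat more carefully spelled out than the paper's version, but the argument is the same.
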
  

\begin{proof}  
By Definition~\ref{def:win-loss}, let $\BIns' = \WithBoost(\BIns, \Boost)$, and suppose that running \BoostedModGW{} on $\BIns$ yields $(\ybase, \yadd)$, while on $\BIns'$ it yields $(\ybase', \yadd')$. It suffices to show that  
\[
    \yadd' - \yadd \le \tmpp.
\]  

Consider any moment $\currenttime'$ during the execution on $\BIns'$, and let $S'$ be an active set at $\currenttime'$ that contributes to $\yadd'$. Suppose there exists a vertex $u \ne v$ in $S'$ such that $\tm_u \le \currenttime' < \tmp_u$. Then, in the execution on $\BIns$, the same vertex $u$ is in an active set $S$ at $\currenttime'$.  

By Corollary~\ref{cor:boost_activeset_refinement}, we have $S \subseteq S'$, which ensures that no vertex $w$ in $S$ satisfies $\currenttime' < \tm_w$. Thus, $S$ contributes to $\yadd$.  

The only case where an active set contributes to $\yadd'$ at time $\currenttime'$ but a corresponding subset does not contribute to $\yadd$ is when $v$ is the only vertex satisfying $\tm_v \le \currenttime' < \tmpp$. That is, the moment lies between the original and updated boosted fingerprint values for $v$. The total growth over such moments is at most $\tmpp$.  

In summary, since every contribution to $\yadd'$ either:
\begin{itemize}
    \item Contains a subset that contributes to $\yadd$, or
    \item Is solely due to $v$ during its extension period, contributing at most $\tmpp$,
\end{itemize}
we conclude that  
\[
    \yadd' \le \yadd + \tmpp,
\]
as desired.
\end{proof}  

Next, we state a property which shows that for any active set after a boost action, if it is independent of that boost action, then there exists a subset of it that was active before the boost action.

\begin{lemma} \label{lm:boost-not-involved-active-set}
Consider any instance $\BIns$ and boost $\Boost = (v, \tmpp)$. Then, for any moment $\currenttime_{0}$ and any active set $S$ at this moment of the execution of \BoostedModGW{} on $\WithBoost(\BIns,\Boost)$, if either $S$ does not include $v$ or $\tmpp \le \currenttime_0$, then there exists an active set $S'$ at moment $\currenttime_{0}$ of the execution of \BoostedModGW{} on $\BIns$ such that $S' \subseteq S$. 
\end{lemma}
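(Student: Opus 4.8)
The plan is to lean on Corollary~\ref{cor:boost_activeset_refinement}, which says that at moment $\currenttime_0$ the connected components in the execution of \BoostedModGW{} on $\BIns$ refine those in the execution on $\WithBoost(\BIns,\Boost)$. Since the connected components of a spanning forest form a partition of $\V$, and one partition refining another means each block of the coarser one is a disjoint union of blocks of the finer one, the active set $S$ — which is a connected component of the boosted execution — decomposes as $S = \bigsqcup_i C_i$, where each $C_i$ is a connected component at $\currenttime_0$ in the execution on $\BIns$. It therefore suffices to exhibit one index $i$ for which $C_i$ is \emph{active} at $\currenttime_0$ in the execution on $\BIns$; that $C_i$ can then be taken as the desired $S'$, since $C_i \subseteq S$ automatically.

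Next I would locate the vertex certifying activity. Because $S$ is active at $\currenttime_0$ in the execution on $\WithBoost(\BIns,\Boost)$, the deactivation condition (Line~\ref{line:BoostedMG_deactivation_condition}) gives some $w \in S$ whose boosted-fingerprint value in the new instance exceeds $\currenttime_0$. That new boosted fingerprint agrees with $\tmp$ on every vertex other than $v$, where it equals $\tmpp$. Now I use the hypothesis: if $v \notin S$ then certainly $w \neq v$; and if $\tmpp \le \currenttime_0$, then again $w \neq v$, since otherwise $w$'s new value would be $\tmpp \le \currenttime_0$, contradicting that it exceeds $\currenttime_0$. In either case $w \neq v$, so $\tmp_w$ equals $w$'s value in the new instance, whence $\tmp_w > \currenttime_0$. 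Taking $C_i$ to be the block of the refinement containing $w$, we get $C_i \subseteq S$ and $w \in C_i$ with $\tmp_w > \currenttime_0$.

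The final — and genuinely delicate — step is to upgrade ``$C_i$ contains a vertex whose $\tmp$-value exceeds $\currenttime_0$'' to ``$C_i$ is active at $\currenttime_0$ in \BoostedModGW{} on $\BIns$''. The clean way is to observe that $\tmp$ is a fingerprint of this execution in the sense of Definition~\ref{def:fingerprint}: the first fingerprint property guarantees $w$ lies in an active set from the start up to time $\tmp_w$, and since active sets are exactly connected components, the active set containing $w$ at moment $\currenttime_0 < \tmp_w$ is precisely $C_i$. Equivalently, one can argue directly from the algorithm that a connected component containing $w$ with $\tmp_w > \currenttime_0$ can never have been deactivated, because deactivation requires \emph{all} vertices of the component (including $w$) to have $\tmp$-value at most the deactivation time, and by monotonicity an inactive set only recurs inside a strictly larger active set. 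I expect this activity-certification to be the main obstacle; a secondary, routine point is checking that the deactivation test is genuinely performed at the relevant moments, which holds because the step size $\Delta_{\tmp}$ forces an event whenever a $\tmp$-threshold is crossed. With $C_i$ shown active, setting $S' = C_i$ finishes the proof.
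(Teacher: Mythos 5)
Your proof is correct and takes essentially the same approach as the paper's: both proofs locate a witness vertex $w \neq v$ in $S$ with $\tmp_w > \currenttime_0$ (splitting on the two hypotheses), observe that $w$ therefore lies in an active set at $\currenttime_0$ in the execution on $\BIns$, and conclude $S' \subseteq S$ via Corollary~\ref{cor:boost_activeset_refinement}. The one cosmetic difference is that you route through the connected-component refinement and then certify activity of the chosen block, whereas the paper applies the active-set refinement directly; the step you flag as delicate is the same one the paper treats as immediate, and your fingerprint-based justification of it is valid.
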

\begin{proof}
    Let $\BIns'$ be $\WithBoost(\BIns, \Boost)$, We denote $\tm$ as the fingerprint of $\BIns$. We investigate two cases for $S$ in the execution on $\BIns'$ at moment $\currenttime_0$.

    First, assume $v \not \in S$. In this case, a vertex $u \in S$ must exist where $\tm_u > \currenttime_0$. Consider the same moment in the execution on $\BIns$. Since $\tm_u > \currenttime_0$, $u$ must belong to an active set $S'$. As both $S$ and $S'$ have $u$, by Corollary~\ref{cor:boost_activeset_refinement}, $S' \subseteq S$. 
    Second, assume $v \in S$ and $\currenttime_0 \ge \tmpp$. Since 
    $\currenttime_0 \ge \tmpp$, $S$ is not active due to the boost $\Boost$. 
    Therefore, there again must exist a vertex $u$ such that $\tm_u < \currenttime_0$. Similar to the first case, we can prove that active set $S'$ exists such that $S' \subseteq S$.
    Thus, the proof is complete.
 \end{proof}

\paragraph{Valuable Boost Action Properties.} 
We now focus on valuable boost actions.  
To begin, we observe that for any valuable boost action, $\Win$ is greater than or equal to $\Loss$.

\begin{lemma}
\label{lm:win-minus-loss-positive}
    For any instance $\BIns$ and valuable boost action $\Boost$, we have
    $$
        \Win(\BIns, \Boost) - \Loss(\BIns, \Boost) \ge 0.
    $$
\end{lemma}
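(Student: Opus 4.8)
The plan is to derive this essentially immediately from the definition of a valuable boost action, combined with the nonnegativity of $\Win$ already established in Lemma~\ref{lm:positive_win}. Since $\Boost$ is valuable, Definition~\ref{def:win-loss} gives $\Win(\BIns, \Boost) \ge (1+\bta)\cdot \Loss(\BIns, \Boost)$ for the fixed constant $\bta \in (0,1)$. The only reason this does not instantly yield $\Win(\BIns,\Boost) \ge \Loss(\BIns,\Boost)$ is that multiplying $\Loss$ by $1+\bta > 1$ decreases it when $\Loss$ is negative; so I would handle the sign of $\Loss(\BIns, \Boost)$ with a short case split.

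First, if $\Loss(\BIns, \Boost) \ge 0$, then because $\bta > 0$ we have $(1+\bta)\cdot \Loss(\BIns, \Boost) = \Loss(\BIns, \Boost) + \bta\cdot \Loss(\BIns, \Boost) \ge \Loss(\BIns, \Boost)$, and chaining this with the valuable inequality gives $\Win(\BIns, \Boost) \ge \Loss(\BIns, \Boost)$, i.e. $\Win(\BIns, \Boost) - \Loss(\BIns, \Boost) \ge 0$. Second, if $\Loss(\BIns, \Boost) < 0$, then Lemma~\ref{lm:positive_win} gives $\Win(\BIns, \Boost) \ge 0 > \Loss(\BIns, \Boost)$, so again $\Win(\BIns, \Boost) - \Loss(\BIns, \Boost) > 0$. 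In either case the stated inequality holds, completing the proof.

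I do not anticipate any real obstacle; this is a bookkeeping lemma. The single point worth being careful about is precisely the case analysis on the sign of $\Loss$, since the ``$1+\bta$'' factor is only helpful when $\Loss \ge 0$ and Lemma~\ref{lm:positive_win} is exactly what covers the complementary case. (Alternatively, one could first show $\Loss(\BIns, \Boost) \ge 0$ directly — e.g. via Corollary~\ref{cor:boost_activeset_refinement}, matching each contribution to $\yadd$ with a contribution to $\yadd'$ — and then the single line $\Win \ge (1+\bta)\Loss \ge \Loss$ suffices; but the case split above is shorter and avoids this extra argument.)
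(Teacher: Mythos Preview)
Your proof is correct and follows essentially the same approach as the paper: a case split on the sign of $\Loss(\BIns,\Boost)$, invoking Lemma~\ref{lm:positive_win} when $\Loss<0$ and the valuable-boost inequality from Definition~\ref{def:win-loss} when $\Loss\ge 0$.
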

\begin{proof}
    We consider two cases based on the sign of $\Loss(\BIns, \Boost)$. 
    If $\Loss(\BIns, \Boost) < 0$, According to Lemma~\ref{lm:positive_win} we can conclude that $\Win(\BIns, \Boost) - \Loss(\BIns, \Boost) \ge 0$.
    Otherwise, based on Definition~\ref{def:win-loss} for a valuable boost action we have
    \begin{align*}
        \Win(\BIns, \Boost) - \Loss(\BIns, \Boost) &\ge \Win(\BIns, \Boost) - (1+\bta) \Loss(\BIns, \Boost) \tag{$\bta\Loss(\BIns, \Boost) > 0$}\\
        &\ge 0 \tag{Definition~\ref{def:win-loss}}.
    \end{align*}
\end{proof}

Next, we show how $\Win$ and $\Loss$ capture the relationship between the total growth of active sets before and after a valuable boost action.

\begin{lemma}
\label{lm:new-ys-based-on-win-loss}
    For an instance $\BIns$, let $\BIns' = \WithBoost(\BIns, \Boost)$ be the result of applying a \emph{valuable} boost action $\Boost$.  
    Let $\ys$ and $\ys'$ denote the growth duration of an active set $S \subseteq \V$ in the executions of \BoostedModGW{} on $\BIns$ and $\BIns'$, respectively.  
    Then,
    $$
        \sum_{S \subseteq \V} \ys' = \sum_{S \subseteq \V} \ys - \Win(\BIns, \Boost) + \Loss(\BIns, \Boost).
    $$
\end{lemma}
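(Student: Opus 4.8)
The statement relates the total growth after a valuable boost to the total growth before, via the identities $\sum_S \ys = \ybase + \yadd$ (Lemma~\ref{lm:ys-sum-ybase-yadd}) together with the definitions of $\Winr$ and $\Lossr$. The cleanest route is to decompose everything into the $\ybase$ and $\yadd$ parts and assemble the claimed equality algebraically.

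\medskip

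First I would invoke Lemma~\ref{lm:ys-sum-ybase-yadd} applied to both instances: running \BoostedModGW{} on $\BIns$ yields $(\ybase,\yadd)$ with $\sum_{S}\ys = \ybase + \yadd$, and running it on $\BIns' = \WithBoost(\BIns,\Boost)$ yields $(\ybase',\yadd')$ with $\sum_{S}\ys' = \ybase' + \yadd'$. Next, by Definition~\ref{def:win-loss}, $\Win(\BIns,\Boost) = \ybase - \ybase'$ and $\Loss(\BIns,\Boost) = \yadd' - \yadd$. From these we get $\ybase' = \ybase - \Win(\BIns,\Boost)$ and $\yadd' = \yadd + \Loss(\BIns,\Boost)$. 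Substituting,
\[
\sum_{S\subseteq\V}\ys' = \ybase' + \yadd' = \bigl(\ybase - \Win(\BIns,\Boost)\bigr) + \bigl(\yadd + \Loss(\BIns,\Boost)\bigr) = \sum_{S\subseteq\V}\ys - \Win(\BIns,\Boost) + \Loss(\BIns,\Boost),
\]
which is exactly the claim.

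\medskip

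So, somewhat unusually for a lemma labeled by "valuable boost," the proof never actually uses the valuable hypothesis $\Win \ge (1+\bta)\Loss$; the identity holds for an arbitrary boost action. I would note this, and the proof requires no real obstacle — the only thing to be careful about is making sure the $\yb$ component plays no role here (it does not; only the scalars $\ybase,\yadd$ and their primed versions enter) and that $\Win$, $\Loss$ may in principle be negative, which is fine since we are proving an exact equation rather than an inequality. The one-line substitution above is the whole argument.
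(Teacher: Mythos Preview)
Your proof is correct and follows essentially the same approach as the paper: both apply Lemma~\ref{lm:ys-sum-ybase-yadd} to decompose $\sum_S \ys$ and $\sum_S \ys'$ into their base and boost parts, then substitute using Definition~\ref{def:win-loss}. Your observation that the ``valuable'' hypothesis is never used is also accurate; the paper's proof does not use it either.
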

\begin{proof}
    Let $(\ybase, \yadd)$ and $(\ybase', \yadd')$ be the output of \BoostedModGW{} on $\BIns$ and $\BIns'$, respectively.  
    By applying Lemma~\ref{lm:ys-sum-ybase-yadd} and Definition~\ref{def:win-loss}, we have
    \begin{align*}
        \sum_{S \subseteq \V}\ys' &= \ybase' + \yadd' \tag{Lemma~\ref{lm:ys-sum-ybase-yadd}}\\
        &= \ybase' + \yadd' - \sum_{S \subseteq \V}\ys + \sum_{S \subseteq \V}\ys \\
        &= \ybase' + \yadd' - \ybase - \yadd + \sum_{S \subseteq \V}\ys\tag{Lemma~\ref{lm:ys-sum-ybase-yadd}}\\
        &= \sum_{S \subseteq \V}\ys - \Win(\BIns, \Boost) + \Loss(\BIns, \Boost). \tag{Definition~\ref{def:win-loss}}
    \end{align*}
\end{proof}

Consequently, we can conclude that after a valuable boost action, the total growth of active sets is reduced.  
We use this perspective in the appendix, where we provide instances and argue that if no boost action reduces the total growth, then no boost is valuable.

\begin{lemma}
    \label{lm:valuable-decrease-total}
    For an instance $\BIns$, let $\BIns' = \WithBoost(\BIns, \Boost)$ be the result of applying a \emph{valuable} boost action $\Boost$ to $\BIns$.
    Let $\ys$ and $\ys'$ denote the growth duration of an active set $S \subseteq \V$ in the executions of \BoostedModGW{} on $\BIns$ and $\BIns'$, respectively.  
    Then,
    $$
        \sum_{S \subseteq \V}\ys' \le \sum_{S \subseteq \V}\ys.
    $$
\end{lemma}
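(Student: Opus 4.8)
The statement to prove is Lemma~\ref{lm:valuable-decrease-total}, asserting that applying a valuable boost action never increases the total growth of active sets. The plan is to derive this immediately from the two preceding results: Lemma~\ref{lm:new-ys-based-on-win-loss}, which gives the exact identity
\[
\sum_{S \subseteq \V}\ys' = \sum_{S \subseteq \V}\ys - \Win(\BIns, \Boost) + \Loss(\BIns, \Boost),
\]
and Lemma~\ref{lm:win-minus-loss-positive}, which states that for a valuable boost action $\Win(\BIns, \Boost) - \Loss(\BIns, \Boost) \ge 0$.

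The argument is then a one-line chain: substitute the identity for $\sum_S \ys'$, observe that $-\Win(\BIns,\Boost) + \Loss(\BIns,\Boost) = -(\Win(\BIns,\Boost) - \Loss(\BIns,\Boost)) \le 0$ by Lemma~\ref{lm:win-minus-loss-positive}, and conclude $\sum_S \ys' \le \sum_S \ys$. Concretely I would write
\begin{align*}
\sum_{S \subseteq \V}\ys' &= \sum_{S \subseteq \V}\ys - \Win(\BIns, \Boost) + \Loss(\BIns, \Boost) \tag{Lemma~\ref{lm:new-ys-based-on-win-loss}}\\
&\le \sum_{S \subseteq \V}\ys, \tag{Lemma~\ref{lm:win-minus-loss-positive}}
\end{align*}
which completes the proof.

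There is essentially no obstacle here — this is a routine corollary that packages the combination of the previous two lemmas into a clean monotonicity statement for use later (the excerpt even flags that it will be invoked in the appendix to reason about when boosts are valuable). The only things to be careful about are bookkeeping: making sure $\ys$ and $\ys'$ refer to the executions of \BoostedModGW{} on $\BIns$ and $\BIns' = \WithBoost(\BIns,\Boost)$ respectively (matching the hypotheses of Lemma~\ref{lm:new-ys-based-on-win-loss}), and that "valuable" is used in the sense of Definition~\ref{def:win-loss} so that Lemma~\ref{lm:win-minus-loss-positive} applies. No case analysis on the sign of $\Loss$ is needed at this level, since that case split was already absorbed into the proof of Lemma~\ref{lm:win-minus-loss-positive}.
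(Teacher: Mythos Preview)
Your proposal is correct and matches the paper's proof essentially verbatim: the paper also chains Lemma~\ref{lm:new-ys-based-on-win-loss} followed by Lemma~\ref{lm:win-minus-loss-positive} in exactly the two-line display you wrote.
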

\begin{proof}
    According to Lemmas~\ref{lm:win-minus-loss-positive} and~\ref{lm:new-ys-based-on-win-loss}, we have
    \begin{align*}
        \sum_{S \subseteq \V}\ys' 
        &= \sum_{S \subseteq \V}\ys - \Win(\BIns, \Boost) + \Loss(\BIns, \Boost)\tag{Lemma~\ref{lm:new-ys-based-on-win-loss}}\\
        &\le \sum_{S \subseteq \V}\ys. \tag{Lemma~\ref{lm:win-minus-loss-positive}}
    \end{align*}
\end{proof}

\paragraph{Local Search Properties.}
Having established key properties of a single boost action, we now turn to the properties of the local search procedure, which applies multiple boost actions.
First, from Lemma~\ref{lm:new-ys-based-on-win-loss}, we derive the following corollary by noting that, as stated in Definition~\ref{def:total-win-loss}, $\win$ and $\loss$ are the total \Win{} and \Loss{} accumulated over all boost actions during the local search.

\begin{corollary}
\label{cl:total-ys-based-on-win-loss}
    For an instance $(\G, \tm, \bta)$, let $\tmp$ be the output fingerprint of $\LocalSearch(\G, \tm, \bta)$.  
    Consider the execution of \GBG{} with both $\tm$ and $\tmp$.  
    Let $\YIn_S$ and $\YOut_S$ denote the growth of $S$ during the respective runs of \GBG{}.  
    Recalling that the total win and loss of this \LocalSearch{} are denoted by $\win$ and $\loss$, we have
    $$
        \sum_{S \subseteq \V}\YOut_S = \sum_{S \subseteq \V}\YIn_S - \win + \loss.
    $$
\end{corollary}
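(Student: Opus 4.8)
The plan is to track how the total growth of active sets evolves across the sequence of boost actions performed by \LocalSearch{}, and then identify the two ends of this sequence with the two \GBG{} runs appearing in the statement. Let $\Boost_1, \dots, \Boost_k$ be the boost actions applied, in order, during the execution of $\LocalSearch(\G, \tm, \bta)$ (the loop at Line~\ref{line:local-search-iteration}), let $\BIns_0 = (\G, \tm, \tm)$ be the initial boosted instance, and set $\BIns_i = \WithBoost(\BIns_{i-1}, \Boost_i)$ for $i = 1, \dots, k$, so that $\BIns_k = (\G, \tm, \tmp)$, whose boosted fingerprint is precisely the output $\tmp$. For each $i$ write $y^{(i)}_S$ for the growth of a set $S \subseteq \V$ in the execution of $\BoostedModGW(\BIns_i)$.

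First I would apply Lemma~\ref{lm:new-ys-based-on-win-loss} at each step. Each $\Boost_i$ is a \emph{valuable} boost action for $\BIns_{i-1}$ — this is exactly the loop condition — so the lemma gives
\[
\sum_{S \subseteq \V} y^{(i)}_S = \sum_{S \subseteq \V} y^{(i-1)}_S - \Win(\BIns_{i-1}, \Boost_i) + \Loss(\BIns_{i-1}, \Boost_i)
\]
for every $i \in \{1, \dots, k\}$. Summing over $i$ telescopes the growth terms on the left and right, and since by Definition~\ref{def:total-win-loss} the accumulated $\Win$ and $\Loss$ over all boost actions equal $\win$ and $\loss$, we obtain
\[
\sum_{S \subseteq \V} y^{(k)}_S = \sum_{S \subseteq \V} y^{(0)}_S - \win + \loss.
\]

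It then remains to match the two endpoints with the corresponding \GBG{} runs. For $\BIns_0 = (\G, \tm, \tm)$ the base and boosted fingerprints coincide, so in $\BoostedModGW(\BIns_0)$ the deactivation test on Line~\ref{line:BoostedMG_deactivation_condition} is ``$\tm_v \le \currenttime$ for all $v \in S$'', we have $\Delta_{\tm} = \Delta_{\tmp}$ at every step, and the only difference from $\GBG(\G, \tm)$ is the extra bookkeeping of $\ybase, \yadd, \yb$, which affects neither which edges are colored nor how long any set grows; hence $\sum_S y^{(0)}_S = \sum_S \YIn_S$. Likewise, in $\BoostedModGW(\BIns_k)$ with $\BIns_k = (\G, \tm, \tmp)$, the deactivation test is ``$\tmp_v \le \currenttime$ for all $v \in S$'', which is exactly the rule $\GBG(\G, \tmp)$ uses for the fingerprint $\tmp$; the additional stopping times coming from $\Delta_{\tm}$ only refine the time-stepping for the accounting and change neither the colored edges nor the growth of any set, so $\sum_S y^{(k)}_S = \sum_S \YOut_S$. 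Substituting these two identities into the displayed equation gives $\sum_{S \subseteq \V} \YOut_S = \sum_{S \subseteq \V} \YIn_S - \win + \loss$, as claimed.

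The computation is routine once the setup is in place; the one point that needs care is the last paragraph — arguing cleanly that $\BoostedModGW{}$ on these particular instances runs exactly the same moat-growing process (same sequence of events, same $y_S$ values) as the stated $\GBG{}$ run, with the $\Delta_{\tm}$-refinements and the $\ybase/\yadd/\yb$ updates being inert. An alternative that bypasses even this is to verify that $\tm$ and $\tmp$ are fingerprints of the intended monotonic moat-growing algorithms and invoke Lemma~\ref{lm:ghost_equivalence}, but a direct comparison of the two pseudocodes is the most transparent route.
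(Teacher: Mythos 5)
Your proof is correct and takes essentially the same route the paper intends: the corollary is derived by telescoping Lemma~\ref{lm:new-ys-based-on-win-loss} over the sequence of valuable boost actions applied during \LocalSearch{} and then invoking Definition~\ref{def:total-win-loss} to identify the accumulated \Win{} and \Loss{} with $\win$ and $\loss$. The paper states this in one sentence and leaves implicit the step you spell out at the end — that $\BoostedModGW(\G,\tm,\tm)$ and $\BoostedModGW(\G,\tm,\tmp)$ run the same moat-growing process (same active sets, same $y_S$) as $\GBG(\G,\tm)$ and $\GBG(\G,\tmp)$ respectively, with the extra $\Delta_{\tm}$ time points and the $\ybase/\yadd/\yb$ bookkeeping being inert; making that explicit is a reasonable extra level of care but does not change the argument.
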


The next lemma shows that $\loss$ is equal to the total growth of active sets accumulated in $\yadd$.

\begin{lemma}
\label{lm:loss-equal-yadd}
    For an instance $(\G, \tm, \bta)$, let $\tmp$ be the output fingerprint of $\LocalSearch(\G, \tm, \bta)$.  
    Recall that $\loss$ denotes the total loss accumulated during this \LocalSearch{}.  
    Let $\yadd$ be the value returned by \BoostedModGW{} on $(\G, \tm, \tmp)$. Then,
    $$
        \loss = \yadd.
    $$
\end{lemma}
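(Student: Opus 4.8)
The plan is to exploit the telescoping structure of the total loss, together with the observation that the $\yadd$ component of \BoostedModGW{} vanishes when the boosted fingerprint coincides with the base fingerprint.

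First I would unwind Algorithm~\ref{alg:local-search}: \LocalSearch{} starts from the instance $\BIns_0 = (\G, \tm, \tm)$ and applies a finite sequence of valuable boost actions $\Boost_1, \dots, \Boost_k$, producing instances $\BIns_1, \dots, \BIns_k$ with $\BIns_i = \WithBoost(\BIns_{i-1}, \Boost_i)$, where the loop ends at $\BIns_k = (\G, \tm, \tmp)$. Let $a_i$ denote the $\yadd$ value returned by \BoostedModGW{} on $\BIns_i$. By Definition~\ref{def:win-loss}, $\Loss(\BIns_{i-1}, \Boost_i) = a_i - a_{i-1}$, and by Definition~\ref{def:total-win-loss}, $\loss = \sum_{i=1}^{k} \Loss(\BIns_{i-1}, \Boost_i)$. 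Summing and telescoping gives $\loss = a_k - a_0$.

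Next I would argue that $a_0 = 0$. In $\BIns_0$ the boosted fingerprint equals the base fingerprint, i.e.\ $\tmp \equiv \tm$. Inspecting Lines~\ref{line:BoostedMG_check_base_boost}--\ref{line:BoostedMG_compute_boost} of Algorithm~\ref{alg:mod-gw-boost}: whenever an active set $S$ grows at a moment $\currenttime$, the deactivation test on Line~\ref{line:BoostedMG_deactivation_condition} has failed, so there is a vertex $v \in S$ with $\tmp_v > \currenttime$; since $\tmp_v = \tm_v$, this gives $\tm_v > \currenttime$, so the test on Line~\ref{line:BoostedMG_check_base_boost} succeeds and the growth $\Delta$ is charged to $\ybase$. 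Hence the \textbf{else} branch on Line~\ref{line:BoostedMG_compute_boost} is never executed and $a_0 = 0$. (Alternatively, one can apply Lemma~\ref{lm:ys-sum-ybase-yadd} to $\BIns_0$ after noting that $\BoostedModGW{}$ on $(\G,\tm,\tm)$ grows the same sets for the same durations as $\GBG{}$ on $(\G,\tm)$, so all growth is counted in $\ybase$, again forcing $a_0 = 0$.) Finally, $a_k$ is precisely the quantity $\yadd$ in the statement, since $\BIns_k = (\G, \tm, \tmp)$ and $\yadd$ is defined as the value returned by \BoostedModGW{} on $(\G, \tm, \tmp)$. Combining, $\loss = a_k - a_0 = \yadd$. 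The degenerate case $k = 0$ (no valuable boost action is ever applied, so $\tmp = \tm$) is consistent: the empty sum gives $\loss = 0$ and the argument above gives $\yadd = 0$.

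There is no genuine obstacle here; the only point needing a little care is the justification that $a_0 = 0$, which reduces to reading the charging rule in \BoostedModGW{} correctly and using that an active set always contains a vertex whose boosted‑fingerprint value strictly exceeds the current time.
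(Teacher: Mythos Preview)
Your proposal is correct and follows essentially the same approach as the paper: both set up the telescoping sum $\loss = \sum_i (\yadd^{(i)} - \yadd^{(i-1)}) = \yadd^{(k)} - \yadd^{(0)}$ and then argue that $\yadd^{(0)} = 0$ because the initial instance has $\tmp = \tm$. Your justification of $\yadd^{(0)} = 0$ via the line-by-line inspection of the charging rule is in fact slightly more explicit than the paper's one-line remark.
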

\begin{proof}
    First, let \BoostedModGW{} on $(\G, \tm, \tmp)$ return $\yadd^{(0)}$. We observe that $\yadd^{(0)} = 0$ since active sets become deactivated when they contain no vertices $v$ with $\tm_v > \currenttime$.

    Next, let $\yadd^{(i)}$ denote the output of \BoostedModGW{} after the $i$-th valuable boost action, and let $Loss^{(i)}$ be the corresponding loss for this action. Suppose there are $k$ valuable boost actions. By the lemma's assumptions, we have $\yadd = \yadd^{(k)}$. Then, using Definitions~\ref{def:win-loss} and~\ref{def:total-win-loss}, we have
    \begin{align*}
        \loss &= \sum_{i=1}^{k} Loss^{(i)} \tag{Definition~\ref{def:total-win-loss}}\\
        &= \sum_{i=1}^{k} (\yadd^{(i)} - \yadd^{(i-1)})\tag{Definition~\ref{def:win-loss}}\\
        &= \yadd^{(k)} - \yadd^{(0)} \tag{Eliminating opposite terms}\\
        &= \yadd.
    \end{align*}
\end{proof}

We also show how the value of $\win$ can be computed from the total growth of active sets in the initial and final moat growing executions of the local search.

\begin{lemma}
\label{lm:win-equal-ybase-diff}
    For an instance $(\G, \tm, \bta)$, let $\tmp$ be the output fingerprint of $\LocalSearch(\G, \tm, \bta)$.  
    Recall that $\win$ denotes the total win of this \LocalSearch{}.  
    Let $\YIn_S$ represent the growth of $S$ in $\GBG(\G, \tm)$, and let $(\ybase^*, \yadd^*)$ be the output of $\BoostedModGW(\G, \tm, \tmp)$, where $\YOut_S$ denotes the growth of $S$ in this execution. Then
    $$
        \win = \sum_{S \subseteq \V} \YIn_S - \ybase^*.
    $$
\end{lemma}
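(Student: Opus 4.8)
The plan is to derive the identity by combining three already-established facts about the local search: Corollary~\ref{cl:total-ys-based-on-win-loss}, which relates the total growth in the initial and final \GBG{} executions via $\win$ and $\loss$; Lemma~\ref{lm:ys-sum-ybase-yadd}, which splits the total growth of the final (boosted) execution into $\ybase$ and $\yadd$; and Lemma~\ref{lm:loss-equal-yadd}, which identifies $\loss$ with $\yadd$. No new combinatorial argument about active sets is needed here — this is purely an algebraic bookkeeping step.

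Concretely, I would first recall that by Lemma~\ref{lm:ghost_equivalence} (via the fingerprints $\tm$ and $\tmp$) the growth values $\YOut_S$ produced by $\GBG(\G,\tmp)$ coincide with those produced by the corresponding moat growing $\BoostedModGW(\G,\tm,\tmp)$, so that Lemma~\ref{lm:ys-sum-ybase-yadd} gives $\sum_{S\subseteq\V}\YOut_S = \ybase^* + \yadd^*$. Next I would invoke Lemma~\ref{lm:loss-equal-yadd} to write $\loss = \yadd^*$. Then I would substitute both of these into Corollary~\ref{cl:total-ys-based-on-win-loss}, which states $\sum_{S\subseteq\V}\YOut_S = \sum_{S\subseteq\V}\YIn_S - \win + \loss$, obtaining
\[
\ybase^* + \yadd^* = \sum_{S\subseteq\V}\YIn_S - \win + \yadd^*,
\]
and cancelling $\yadd^*$ from both sides yields $\ybase^* = \sum_{S\subseteq\V}\YIn_S - \win$, which rearranges to the claimed identity.

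The only point that requires a small amount of care — and the closest thing to an obstacle — is making sure the $\yadd$ appearing in Lemma~\ref{lm:loss-equal-yadd} and the $\ybase^*, \yadd^*$ appearing in the statement all refer to the same invocation $\BoostedModGW(\G,\tm,\tmp)$ with $\tmp$ the final fingerprint of the local search; once that is observed, the cancellation is immediate. I would also note explicitly that $\yadd^*$ is finite and well-defined (it is the total growth charged to the boosted portion, bounded via Lemma~\ref{lm:loss-less-than-new-time} over the finitely many boost actions), so the cancellation step is legitimate rather than a formal manipulation of possibly infinite quantities. With these remarks in place the proof is a three-line chain of equalities.
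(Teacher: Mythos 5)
Your proof is correct and matches the paper's argument: both derive the identity by combining Corollary~\ref{cl:total-ys-based-on-win-loss}, Lemma~\ref{lm:ys-sum-ybase-yadd}, and Lemma~\ref{lm:loss-equal-yadd} in an algebraic chain. Your additional remark about reconciling the $\GBG(\G,\tmp)$ and $\BoostedModGW(\G,\tm,\tmp)$ executions via Lemma~\ref{lm:ghost_equivalence} is a small extra careful step that the paper leaves implicit, but it does not change the substance of the argument.
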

\begin{proof}
    It can be concluded from the results so far that:
    \begin{align*}
        \win &= \sum_{S \subseteq \V}\YIn_S - \sum_{S \subseteq \V}\YOut_S + \loss \tag{Corollary~\ref{cl:total-ys-based-on-win-loss}}\\
        &= \sum_{S \subseteq \V}\YIn_S - \ybase^* - \yadd^* + \loss \tag{Lemma~\ref{lm:ys-sum-ybase-yadd}}\\
        &= \sum_{S \subseteq \V}\YIn_S - \ybase^*\tag{Lemma~\ref{lm:loss-equal-yadd}}
    \end{align*}
\end{proof}

Finally, we show that our local search algorithm runs in polynomial time by proving that the number of its iterations is polynomial and using the fact that each iteration considers only a polynomial number of boost actions.

\begin{lemma}
\label{lm:localsearch_polynomial_time}
    The \LocalSearch{} procedure runs in polynomial time.
\end{lemma}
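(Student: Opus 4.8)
The plan is to decompose the running time of \LocalSearch{} into two parts: (a) the cost of a single iteration of the while loop in Algorithm~\ref{alg:local-search}, and (b) the total number of iterations, and to bound both by polynomials in $\lvert\V\rvert$ and $\lvert\E\rvert$.

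For part (a), by Corollary~\ref{cor:find-boost-polynomial}, deciding whether a valuable boost action exists in the restricted candidate space — and producing one if so — reduces to running \BoostedModGW{} on the current instance $\BIns$ together with the $O(\lvert\V\rvert^2)$ instances $\WithBoost(\BIns,\Boost)$ and comparing the returned $(\ybase,\yadd)$ values through Definition~\ref{def:win-loss}. So it suffices to observe that \BoostedModGW{} itself runs in polynomial time, which follows by the same event-counting argument as Corollary~\ref{cor:legacy_polynomial_time}: in each iteration of the while loop of Algorithm~\ref{alg:mod-gw-boost} we advance $\currenttime$ by $\Delta=\min(\Deltae,\Delta_{\tm},\Delta_{\tmp})$, and each case is charged to a distinct bounded event — an edge becomes fully colored and is added to $\F$ (at most $\lvert\V\rvert-1$ times), an active set is deactivated (at most $\lvert\V\rvert$ times, since deactivation is monotone once a component is formed), or $\currenttime$ reaches one of the at most $2\lvert\V\rvert$ breakpoints among $\{\tm_v\}$ and $\{\tmp_v\}$. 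The pruning loop is clearly polynomial as well, so one iteration of \LocalSearch{} costs polynomial time; the same bound applies to the single trailing call to \BoostedModGW{} in Line~\ref{line:corres-mg}.

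The crux is part (b), bounding the number of iterations. The starting point is Lemma~\ref{lm:valuable-decrease-total}: every valuable boost strictly decreases the total growth $\sum_{S\subseteq\V}\ys$ of the corresponding moat growing, except in the degenerate case $\Win=\Loss=0$, where the execution is unchanged and the boost can be disregarded. To turn a strict decrease into a polynomial bound, the plan is to track a monotone, polynomially bounded combinatorial invariant of the corresponding execution and to show it strictly advances at every valuable boost. Since each boost only enlarges the fingerprint $\tmp$, Lemma~\ref{lm:large_fingerprint_refinement}, Corollary~\ref{cor:localsearch_activeset_refinement}, and Lemma~\ref{lm:large_fingerprint_superactive_refinement} imply that across iterations the connected components, active sets, and (maximal) superactive sets only coarsen; equivalently, the number of maximal superactive sets is non-increasing and the set of actively connected vertex pairs is non-decreasing (Lemma~\ref{lm:large_fingerprint_superactive}), and both are bounded by $\binom{\lvert\V\rvert}{2}$. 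The argument is finished by showing that a valuable boost $(v,\tmpp)$ must strictly change such an invariant: by the construction of the candidate time points, $\tmpp$ is a moment at which, in the boosted execution, the active set containing $v$ performs one additional active merge that it did not perform (while active) before the boost, collapsing two maximal superactive sets or creating a new actively connected pair.

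I expect this last step to be the main obstacle. One must rule out that a valuable boost merely reroutes the coloring or reorders merges while leaving the chosen invariant fixed — for example, a boost that makes two groups merge earlier and ``more actively'' even though every pair among them was already actively connected. Handling this requires pinning down, via the refinement results of Section~\ref{subsec:monotonic_preliminary}, exactly how the execution evolves from $\BIns$ to $\WithBoost(\BIns,\Boost)$: the two runs coincide until $v$'s set would have deactivated, after which the boosted run follows the $\tmp_v=\infty$ run up to $\tmpp$ (using Corollary~\ref{cor:active_sets_laminar} to track $v$'s set), and one then verifies that this single extra active merge genuinely advances the invariant, taking care that the pruning phase cannot undo it. Combining (a) with this polynomial bound on the number of iterations yields that \LocalSearch{} runs in polynomial time.
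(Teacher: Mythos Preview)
Your approach is essentially the same as the paper's: polynomial cost per iteration via Corollary~\ref{cor:find-boost-polynomial} plus the event-counting bound for \BoostedModGW{}, and a polynomial bound on the number of iterations via the invariant ``number of actively connected pairs,'' which is monotone under fingerprint enlargement and bounded by $\binom{\lvert\V\rvert}{2}$. The one place you overcomplicate matters is your ``main obstacle'': the paper dispatches it in a sentence by appealing directly to the construction of the boost-action space --- since $\tmpp$ is by definition a moment at which $v$'s active set merges with another \emph{active} set $S'$ (in the $\tmp_v=\infty$ run), and $S'$ being active means it contains some $u$ with $\tmp_u>\tmpp$, the pair $(v,u)$ becomes actively connected after the boost but could not have been before (as $v$ was deactivated strictly before $\tmpp$), so every applied boost creates at least one new actively connected pair; the detour through Lemma~\ref{lm:valuable-decrease-total} and the $\Win=\Loss=0$ case is not needed.
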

\begin{proof}
    By Corollary~\ref{cor:find-boost-polynomial}, each iteration of Line~\ref{line:local-search-iteration} in \LocalSearch{} examines a polynomial number of candidate boost actions.
    For each evaluation of boost action, we run one instance of \BoostedModGW{}, which itself runs in polynomial time.
    
    To complete the proof, we show that the algorithm applies only a polynomial number of valuable boost actions. Specifically, after applying a polynomial number of such actions, no further valuable boost exists within the defined subspace.

    In each valuable boost action \(\BoostExp\), the chosen time \(\tmpp\) corresponds to the moment when an active set \(S\) reaches another active set \(S'\), which it had not previously reached while both were active. Additionally, since $S'$ is active, there exists a vertex \(u \in S'\) that has not yet reached its required growth time.

    This means that $v$ (from $S$) and $u$ (from $S'$) become a new pair of actively connected vertices, which was not the case before. Since there are at most $\binom{n}{2}$ such pairs, the number of valuable boost actions that can be applied is at most $\binom{n}{2}$, which is polynomial in $n$.

    Therefore, the total number of iterations is polynomial, and each iteration takes polynomial time, implying that the entire local search algorithm runs in polynomial time.
\end{proof}

\subsection{Boosted Execution}
\label{subsec:boosted_execution}
Up to this point, we have described the general structure and properties of \LocalSearch{}.
We now turn our attention to its first execution in Algorithm~\ref{alg:main}.
To support this part of the analysis, we introduce a set of definitions tailored to this specific run.
While these definitions may resemble earlier ones, they are adapted to the current context and will be used throughout the remainder of this section.

Recall that in Line~\ref{line:corres-mg}, we invoke \BoostedModGW{} on the fingerprint produced by the first call to \LocalSearch{}.
We now define the corresponding moat growing execution associated with this invocation.

\begin{definition}[Boosted Execution]
We refer to the execution of \BoostedModGW{} at Line~\ref{line:corres-mg} of the \LocalSearch{} call made in Line~\ref{exe:localsearch} of Algorithm~\ref{alg:main} as the \emph{Boosted Execution}~$(\ee)$.  
The following notation is associated with this execution:
\begin{itemize}
    \item $\solone$ denotes the output forest,
    \item $\yy_S$ denotes the growth of subset $S$,
    \item $\lossone$ denotes the $\loss$ of this \LocalSearch{},
    \item $\winone$ denotes the $\win$ of this \LocalSearch{},
    \item $\Aa_\currenttime$ denotes the family of active sets at time $\currenttime$, and
    \item $\ttt$ is the boosted fingerprint output by this \LocalSearch{}.
\end{itemize}
\end{definition}

Given the above definition, we observe that Boosted Execution corresponds exactly to the run of \BoostedModGW$(\G, \tplus, \ttt)$.

Recall that multiple boost actions are performed during the execution of \LocalSearch{}. Since each boost action increases the required growth time of a vertex, the following corollary holds.

\begin{corollary}
\label{cor:tt-ge-tplus}
For all vertices $v$, we have
\[
    \ttt_v \ge \tplus_v.
\]
\end{corollary}

Now, given $\ttt$, we can define the boost phase.

\begin{definition}[Boost Phase]
\label{def:boost-phase}
At time $\currenttime$ during the execution of a monotonic moat growing algorithm, a vertex $v$ is said to be in the \emph{boost phase} if 
$$
\tplus_v \le \currenttime < \ttt_v.
$$
An active set $S$ is in the \emph{boost phase} at time $\currenttime$ if:
\begin{itemize}
    \item no vertex in $S$ is in the base phase, and
    \item at least one vertex in $S$ is in the boost phase.
\end{itemize}
\end{definition}

\begin{definition}
\label{def:yy-base-boost}
We define the following quantities for Boosted Execution:
\begin{itemize}
    \item $\yy_{\base}$ denotes the total growth of active sets during the time they were in their base phase,
    \item $\yyb_{S}$ denotes the total growth of active set $S$ during the time it was in the base phase, and
    \item $\yy_{\add}$ denotes the total growth of active sets during the time they were in their boost phase.
\end{itemize}
As mentioned earlier, these quantities correspond to the output of $\BoostedModGW(\G, \tplus, \ttt)$.
\end{definition}

Since $\yyb_{S}$ represents the total time set $S$ is in the base phase, and $\yys$ represents the total time $S$ is an active set, we have the following corollary.

\begin{corollary}
\label{cor:yyb-smaller-yy}
    For any $S \subseteq V$,
    $$
        \yyb_{S} \le \yys.
    $$
\end{corollary}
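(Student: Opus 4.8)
The plan is to prove Corollary~\ref{cor:yyb-smaller-yy} directly from the definitions of $\yyb_S$ and $\yys$ together with the description of how $\yb$ is accumulated in Algorithm~\ref{alg:mod-gw-boost}. By Definition~\ref{def:yy-base-boost}, $\yyb_S$ is the total growth of the active set $S$ during the portion of time when $S$ is in its base phase, whereas $\yys$ is the total growth of $S$ over its entire active lifetime (base phase plus boost phase). Since the base phase of $S$ is a subinterval of the time during which $S$ is active, the contribution to $\yyb_S$ at any moment is either equal to the contribution to $\yys$ (when $S$ is in its base phase) or zero (when $S$ is in its boost phase), and in no case exceeds it.

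Concretely, I would argue at the level of the increments in Algorithm~\ref{alg:mod-gw-boost}. Recall from Lines~\ref{line:start-add-delta}--\ref{line:end-add-delta} that at each step, for every active set $S$, the value $\ys$ is increased by $\Delta$, and $\yb_S$ is increased by $\Delta$ exactly when $S$ contains a vertex $v$ with $\tm_v > \currenttime$ (i.e.\ when $S$ is in its base phase in the terminology of Definition~\ref{def:boost-phase}, instantiated with $\tm = \tplus$ and $\tmp = \ttt$ for Boosted Execution), and is left unchanged otherwise. Hence at every step the increment to $\yb_S$ is at most the increment to $\ys$. Since both quantities start at $0$ (they are implicitly initialized to $0$ for all $S \subseteq \V$), summing the increments over the whole execution gives $\yyb_S \le \yys$ for every $S \subseteq \V$.

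This is essentially a one-line argument; I do not anticipate any real obstacle. The only point requiring a small amount of care is making sure the identification is correct: $\yyb_S$ as defined in Definition~\ref{def:yy-base-boost} is precisely the function $\yb$ returned by $\BoostedModGW(\G, \tplus, \ttt)$, as noted in the remark following that definition, and "base phase of $S$" as in Definition~\ref{def:boost-phase} matches the condition "$\exists v \in S$ with $\tm_v > \currenttime$" checked in Line~\ref{line:BoostedMG_check_base_boost} once we substitute $\tm = \tplus$. With that identification in place, the corollary follows immediately, and in fact the same reasoning shows it holds for any instance $\BIns$ and the $\yb$ it produces, not just for Boosted Execution. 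I would present it as a short proof that at each moment of the execution the growth counted toward $\yyb_S$ is a subset of the growth counted toward $\yys$, concluding $\yyb_S \le \yys$.
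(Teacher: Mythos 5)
Your proposal is correct and takes essentially the same route as the paper: the paper's justification is the one-line observation that $\yyb_S$ counts only the base-phase portion of $S$'s active lifetime while $\yys$ counts all of it, which is exactly your argument. Your elaboration via the per-step increments in Lines~\ref{line:start-add-delta}--\ref{line:end-add-delta} of Algorithm~\ref{alg:mod-gw-boost} is a faithful unfolding of the same idea.
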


\paragraph{A Prefix-Time Assignment.}
Next, we define a new assignment based on Boosted Execution, show that it satisfies the properties of a prefix-time assignment, and compare it with the previously defined one.

\begin{definition}
\label{def:rr}
    Consider the moment $\currenttime$ of Boosted Execution. For each active set $S \in \Aa_{\currenttime}$ in the \emph{base phase} ($\exists v \in S: \currenttime < \tplus_v)$, we assign this moment of growth of this active set to all vertices in $\priorityset(\BaseSet(S, \currenttime))$ with fraction 1 and 0 for other vertices. We refer to the total growth assigned to vertex $v$ by $\rr_v$.
\end{definition}

Similar to $\rplus$, we show that $\rr$ is also a prefix-time assignment.

\begin{lemma}
\label{lm:rr_prefix_time_assignment}
    Assignment $\rr$ is a prefix-time assignment.
\end{lemma}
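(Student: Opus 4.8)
The plan is to show that $\rr$ satisfies the two defining conditions of Definition~\ref{def:prefix-time}: first, that at every moment each active set assigns a growth fraction of either $0$ or $1$ to each vertex, and second, the prefix-time monotonicity condition — if $S$ assigns fraction $1$ to $v$ at time $\currenttime$, then any active set containing $v$ at an earlier time $\currenttime' < \currenttime$ also assigns fraction $1$ to $v$. The proof is essentially identical in structure to the proof of Lemma~\ref{lm:rplus_prefix_time_assignment}, since $\rr$ is defined exactly like $\rplus$ — assigning growth to $\priorityset(\BaseSet(S, \currenttime))$ with fraction $1$ — except that it refers to the active sets of Boosted Execution rather than Legacy Execution. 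The key point is that $\BaseSet$ and $\priorityset$ are both defined in terms of $\tplus$ and the fixed priority order, which do not depend on which execution we are in; what changes is only the family of active sets, which by Corollary~\ref{cor:active_sets_laminar} still forms a laminar family.

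First I would note that the $0/1$ fraction condition is immediate from Definition~\ref{def:rr}, which explicitly assigns fraction $1$ to vertices in $\priorityset(\BaseSet(S,\currenttime))$ and $0$ to all others. Then, for the monotonicity condition, I would take a moment $\currenttime$ when an active set $S$ (of Boosted Execution) assigns its growth to $v$, so $v \in \priorityset(\BaseSet(S, \currenttime))$, which forces $\currenttime < \tplus_v$. For any earlier time $\currenttime' < \currenttime$, since $v$ is active at time $\currenttime$ it is active at time $\currenttime'$, so there is an active set $S'$ of Boosted Execution containing $v$ at time $\currenttime'$; by Corollary~\ref{cor:active_sets_laminar} (laminarity), $S' \subseteq S$. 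Arguing by contradiction, if $v \notin \priorityset(\BaseSet(S', \currenttime'))$, there is some $u \in \BaseSet(S', \currenttime') \cap \optcom(v)$ with $\priority_u > \priority_v$, hence $\tplus_u \ge \tplus_v$ by Corollary~\ref{cor:priority-tplus}. Since $\currenttime < \tplus_v \le \tplus_u$ and $u, v$ remain in the same active sets from $\currenttime'$ to $\currenttime$, we get $u \in \BaseSet(S, \currenttime) \cap \optcom(v)$, contradicting $v \in \priorityset(\BaseSet(S, \currenttime))$.

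The only subtlety I anticipate is making sure the statement "$u$ and $v$ remain in the same active sets after $\currenttime'$" is justified in the Boosted Execution context — this follows from laminarity of active sets plus the fact that once both $u,v \in S'$ and $S' \subseteq S$, with $S$ active at the later time $\currenttime$, the chain of merges connecting $S'$ to $S$ keeps $u$ and $v$ together and active (they cannot be deactivated earlier since $\tplus_u, \tplus_v > \currenttime' $ actually we need they stay active, which holds because $v$ is active at $\currenttime > \currenttime'$ and $u$ is in the same components as $v$ throughout). This is the same reasoning used in Lemma~\ref{lm:rplus_prefix_time_assignment}, so I expect no genuine obstacle; the main "work" is simply transcribing that argument with the active sets of Boosted Execution in place of Legacy Execution and citing Corollary~\ref{cor:active_sets_laminar} rather than Corollary~\ref{cor:active_sets_laminar}'s Legacy-specific instance. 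I would close by invoking Definition~\ref{def:prefix-time} to conclude that $\rr$ is a prefix-time assignment.
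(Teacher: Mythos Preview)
Your proposal is correct and follows essentially the same approach as the paper's proof: verify the $0/1$ fraction condition directly from Definition~\ref{def:rr}, then for the monotonicity condition argue by contradiction that a higher-priority vertex $u$ from the same optimal component would persist in $\BaseSet(S,\currenttime)$ (using $S' \subseteq S$ and $\tplus_u \ge \tplus_v > \currenttime$), contradicting $v \in \priorityset(\BaseSet(S,\currenttime))$. The paper's proof is slightly terser---it deduces that $v$ belongs to an active set $S'$ at time $\currenttime'$ directly from $\currenttime' < \tplus_v$ rather than via ``$v$ active at $\currenttime$ implies active at $\currenttime'$''---but the logic is the same.
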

\begin{proof}
Given Definition~\ref{def:prefix-time}, the first criterion holds since the assigned fraction is either 0 or 1.

Now, suppose a vertex $v$ is assigned a fraction of 1 at moment $\currenttime$ but not at an earlier moment $\currenttime'$. According to Definition~\ref{def:rr}, we have $\currenttime < \tplus_v$, which implies $\currenttime' < \tplus_v$. Therefore, at time $\currenttime'$, vertex $v$ belongs to the base set of some active set $S'$, i.e., $v \in \BaseSet(S', \currenttime')$.

Since $v$ was assigned a fraction of 0 at $\currenttime'$, there must exist another vertex $u \in \BaseSet(S', \currenttime')$ such that $\optcom(u) = \optcom(v)$ and $\priority_u > \priority_v$. This implies $\tplus_u \ge \tplus_v$ by Corollary~\ref{cor:priority-tplus}.

Given that we only merge sets in our process, it follows that $S' \subseteq S$ and $u \in S$. Since $\currenttime < \tplus_v \le \tplus_u$, it follows that $u \in \BaseSet(S, \currenttime)$, and hence $v \notin \priorityset(\BaseSet(S, \currenttime))$, which contradicts the assumption that $v$ was assigned a fraction of 1 at time $\currenttime$.
\end{proof}

The next lemma shows that the minimum assigned value $\rr$ among two vertices in the same connected component of the optimal solution is a lower bound on the time at which they become connected in Boosted Execution.

\begin{lemma}
\label{lm:connect_after_min_r}
    For any two vertices $u, v$ in a same connected component of the optimal solution, these vertices are not connected until time $\min(\rr_u, \rr_v)$ in Boosted Execution. 
\end{lemma}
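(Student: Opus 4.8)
The plan is to leverage the prefix-time structure of the assignment $\rr$ together with the fact that, for a vertex still in its base phase, every active set containing it must continue to assign its growth to the highest-priority representative in its own optimal component. Concretely, suppose for contradiction that $u$ and $v$ lie in the same component $\opti \in \OPT$ but become connected in Boosted Execution at some moment $\currenttime_0 < \min(\rr_u, \rr_v)$. The first step is to record what $\currenttime_0 < \rr_v$ means: since $\rr$ is a prefix-time assignment (Lemma~\ref{lm:rr_prefix_time_assignment}) and $\rr_v$ is the total length of the prefix of time during which the active set of $v$ assigns it fraction $1$, having $\currenttime_0 < \rr_v$ forces that at moment $\currenttime_0$ the vertex $v$ is still active, is in the base phase ($\currenttime_0 < \tplus_v$), and its active set $S_v$ at that moment satisfies $v \in \priorityset(\BaseSet(S_v,\currenttime_0))$. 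The same holds symmetrically for $u$ with some active set $S_u$.

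The second step is the core contradiction. If $u$ and $v$ are connected at moment $\currenttime_0$, then in Boosted Execution they lie in a common connected component at that time, and since both are still active, that component is an active set $S \supseteq S_u \cup S_v$ (using Corollary~\ref{cor:active_sets_laminar}/laminarity: active sets containing $u$ at time $\currenttime_0$ are nested, and the one that is a component containing both $u$ and $v$ must contain $S_u$ and $S_v$). Now $u, v \in S$, both are in the base phase at $\currenttime_0$, so $u, v \in \BaseSet(S, \currenttime_0)$. But $u$ and $v$ belong to the same optimal component $\opti$, and by Lemma~\ref{lm:opti_cap_priorityset} the set $\priorityset(\BaseSet(S, \currenttime_0))$ contains \emph{at most one} vertex of $\opti$ — namely the one of maximum priority. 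Hence at least one of $u, v$ — say $v$ — is not in $\priorityset(\BaseSet(S, \currenttime_0))$. Since $S$ is the active set containing $v$ at moment $\currenttime_0$, this contradicts the conclusion from the first step that $v \in \priorityset(\BaseSet(S_v, \currenttime_0))$ with $S_v$ being \emph{the} active set containing $v$ at that time (so $S_v = S$). This contradiction completes the argument.

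I expect the main obstacle to be the careful bookkeeping in the first step: translating "$\currenttime_0 < \rr_v$" into "at moment $\currenttime_0$, the unique active set containing $v$ assigns fraction $1$ to $v$." This needs the prefix-time property in the contrapositive direction — if at some moment $\currenttime_0$ the active set of $v$ assigned it fraction $0$ (or $v$ were already inactive), then by Definition~\ref{def:prefix-time} and the fact that inactive sets stay inactive, $v$ would be assigned fraction $0$ at all later moments too, so $\rr_v \le \currenttime_0$, contradicting $\currenttime_0 < \rr_v$. One should also note the edge case where $\currenttime_0 \ge \tplus_v$: then $v$ is past its base phase, the base-phase assignment stops, and again $\rr_v \le \tplus_v \le \currenttime_0$ would follow (this uses that once $\currenttime \ge \tplus_v$, $v \notin \BaseSet(S,\currenttime)$ for any $S$, so $v$ receives no further $\rr$ growth), so this case is also excluded by $\currenttime_0 < \rr_v$. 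Everything else — laminarity of active sets and the single-representative-per-optimal-component fact — is already packaged in Corollary~\ref{cor:active_sets_laminar} and Lemma~\ref{lm:opti_cap_priorityset}, so the rest of the proof is a short deduction.
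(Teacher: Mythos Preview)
Your proof is correct and follows essentially the same approach as the paper's: both hinge on Lemma~\ref{lm:opti_cap_priorityset} (at most one vertex of $\opti$ can be a representative in any active set) combined with the prefix-time nature of $\rr$ (Lemma~\ref{lm:rr_prefix_time_assignment}). The paper argues directly---once $u$ and $v$ share a component, the lower-priority one can never again be assigned, so its $\rr$ value is at most the connection time---whereas you phrase the same observation as a contradiction and are more explicit about the bookkeeping that converts ``$\currenttime_0 < \rr_v$'' into ``$v$ is still being assigned at $\currenttime_0$.'' One cosmetic remark: at a fixed moment the active sets are disjoint connected components, so once $u,v$ are connected and both in active sets you immediately have $S_u = S_v = S$; the appeal to laminarity is unnecessary there.
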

\begin{proof}
    Without loss of generality, assume $\priority_u > \priority_v$.
    Let $\currenttime$ be the time at which these two vertices become connected in Boosted Execution.
    After time $\currenttime$, they remain in the same connected component for the remainder of the execution.
    For any set $S$ containing both $u$ and $v$, Lemma~\ref{lm:opti_cap_priorityset} implies that $\yys$ cannot be assigned to $v$, as $u$ has higher priority.
    Hence, $\currenttime \ge \rr_v$ because assigning to $v$ is only possible before or at time $\rr_v$.
    By symmetry, if $\priority_u < \priority_v$, we conclude that $\currenttime \ge \rr_u$.
    Therefore, $\currenttime \ge \min(\rr_u, \rr_v)$, and these vertices do not connect before time $\min(\rr_u, \rr_v)$.
\end{proof}

Next, we show that the prefix-time assignment obtained from the first local search is strictly smaller than the previous assignment.

\begin{lemma}
\label{lm:rr-le-rplus}
    For any vertex \(v \in V\),
    $$
        \rr_v \le \rplus_v.
    $$
\end{lemma}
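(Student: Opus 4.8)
The plan is to compare the two prefix-time assignments $\rplus$ (from Legacy Execution) and $\rr$ (from Boosted Execution) by tracking, for each vertex $v$, how much growth gets assigned to it in each execution. The key point is that $\rr$ assigns only growth that occurs while an active set is in its \emph{base phase}, i.e. growth counted toward $\ybase$, whereas $\rplus$ assigns all the growth of Legacy Execution. So intuitively $\rr_v$ is a ``restricted'' version of $\rplus_v$, and the inequality should follow from the refinement relationship between the two executions (Corollary~\ref{cor:localsearch_activeset_refinement}) together with the fact that the selection rule for representatives ($\priorityset \circ \BaseSet$) is defined identically in both (Definitions~\ref{def:rplus} and~\ref{def:rr}) and behaves consistently under taking subsets (Lemmas~\ref{lm:smaller-priorityset}, \ref{lm:base-set-subset}, \ref{lm:opti_cap_priorityset}).

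First I would fix a vertex $v$ and a connected component $\opti = \optcom(v)$. Using Lemma~\ref{lm:rplus_assigning_condition} (and its analogue for $\rr$), growth is assigned to $v$ only from sets $S$ with $v \in S$, $\pairv \notin S$; and since $\rplus$ is prefix-time (Lemma~\ref{lm:rplus_prefix_time_assignment}), $\rplus_v$ equals the total duration, up to time $\tplus_v$, during which the active set containing $v$ in Legacy Execution has $v$ as the $\opti$-representative of its base set — more precisely, the total measure of moments $\currenttime < \tplus_v$ at which $v \in \priorityset(\BaseSet(S^-_\currenttime(v), \currenttime))$, where $S^-_\currenttime(v)$ is the active set containing $v$ at time $\currenttime$ in Legacy Execution. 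Similarly, $\rr_v$ is the total measure of moments $\currenttime < \tplus_v$ at which (i) the Boosted-Execution active set $S^+_\currenttime(v)$ containing $v$ is in its base phase, (ii) $v \in \priorityset(\BaseSet(S^+_\currenttime(v), \currenttime))$, and (iii) $v$ still lies in an active set at that moment. The strategy is then a pointwise-in-time comparison: show that every moment $\currenttime$ that contributes to $\rr_v$ also contributes to $\rplus_v$.

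So fix a moment $\currenttime < \tplus_v$ contributing to $\rr_v$. By Corollary~\ref{cor:localsearch_activeset_refinement}, the Legacy active set $S^- := S^-_\currenttime(v)$ is a subset of the Boosted active set $S^+ := S^+_\currenttime(v)$ (and $v$ being in an active set in Legacy Execution up to $\tplus_v$ is guaranteed since $v$ is unsatisfied until $\tplus_v$, so the Legacy set $S^-$ genuinely exists). By Lemma~\ref{lm:base-set-subset}, $\BaseSet(S^-, \currenttime) \subseteq \BaseSet(S^+, \currenttime)$, and $v$ lies in the former since $\currenttime < \tplus_v$. Now $v \in \priorityset(\BaseSet(S^+, \currenttime))$ by assumption (ii), and $v \in \BaseSet(S^-,\currenttime) \subseteq \BaseSet(S^+,\currenttime)$, so Lemma~\ref{lm:smaller-priorityset} gives $v \in \priorityset(\BaseSet(S^-, \currenttime))$. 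Hence this moment contributes to $\rplus_v$ as well. Integrating over all such moments yields $\rr_v \le \rplus_v$. The main obstacle — and the step I would be most careful about — is making the ``pointwise in time'' argument rigorous: one must confirm that both $\rplus_v$ and $\rr_v$ can genuinely be expressed as integrals over the same time axis of indicator functions of the events described (which relies on both being prefix-time assignments, Lemmas~\ref{lm:rplus_prefix_time_assignment} and~\ref{lm:rr_prefix_time_assignment}, so the assigned fraction is $0$ or $1$ at every moment), and that the base-phase restriction in Definition~\ref{def:rr} only ever \emph{removes} contributing moments relative to Legacy Execution, never adds new ones — which is exactly what the refinement corollary and the subset-monotonicity of $\priorityset \circ \BaseSet$ deliver.
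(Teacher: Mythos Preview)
Your proposal is correct and uses exactly the same core ingredients as the paper --- the refinement relation (Corollary~\ref{cor:localsearch_activeset_refinement}), monotonicity of $\BaseSet$ (Lemma~\ref{lm:base-set-subset}), and monotonicity of $\priorityset$ under subsets (Lemma~\ref{lm:smaller-priorityset}). The only cosmetic difference is that the paper exploits the prefix-time property of $\rr$ to check the inclusion at the single moment $\currenttime = \rr_v$ rather than integrating the indicator pointwise over all $\currenttime$, which slightly shortens the write-up but is logically the same argument.
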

\begin{proof}
    Since \(\rr\) is a prefix-time assignment, the first moment not assigned to \(v\) is precisely \(\rr_v\). Let \(S\) be the last active set in Boosted Execution that assigns its growth to \(v\), and let this happen at time \(\currenttime=\rr_v\). Since \(v\) is in the base phase at this moment, we have \(\tplus_v > \currenttime\), which implies that \(v\) must also belong to an active set \(S'\) in Legacy Execution at time \(\currenttime\).

    By Corollary~\ref{cor:localsearch_activeset_refinement}, active sets in Legacy Execution form a refinement of those in Boosted Execution, which means \(S' \subseteq S\). Then, by Lemma~\ref{lm:base-set-subset}, we have
    $
        \BaseSet(S', \currenttime) \subseteq \BaseSet(S, \currenttime).
    $
    Since \(v\) is in the base phase for both \(S\) and \(S'\), and \(v\) belongs to \(\priorityset(\BaseSet(S, \currenttime))\), Lemma~\ref{lm:smaller-priorityset} implies that \(v \in \priorityset(\BaseSet(S', \currenttime))\) as well.

    By Definition~\ref{def:rplus}, this means that \(S'\) assigns its growth to \(v\) at time \(\currenttime\) in Legacy Execution. Therefore, \(\rplus_v\), which accumulates growth assigned to \(v\) over time, must be at least \(\rr_v\).
    Hence, \(\rr_v \le \rplus_v\), as claimed.
\end{proof}

We now show that, in each connected component of the optimal solution, the maximum assigned value (see  Definition~\ref{def:assignment}) is equal in both executions.

\begin{lemma}
\label{lm:rplus_vone_rr_vone}
    For any connected component \(\opti\) in \(\OPT\),
    $$
        \rplus_{\max}(\opti) = \rr_{\max}(\opti).
    $$
\end{lemma}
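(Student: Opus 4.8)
The statement to prove is $\rplus_{\max}(\opti) = \rr_{\max}(\opti)$ for each optimal component $\opti$. We already have one inequality essentially for free: Lemma~\ref{lm:rr-le-rplus} gives $\rr_v \le \rplus_v$ for every vertex $v$, and taking the maximum over $v \in \opti$ yields $\rr_{\max}(\opti) \le \rplus_{\max}(\opti)$. So the whole job is the reverse inequality $\rplus_{\max}(\opti) \le \rr_{\max}(\opti)$.

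For the reverse direction, the natural route is to identify the vertex achieving $\rplus_{\max}(\opti)$ and show its $\rr$-value is already that large. Let $u = \argmax_{u' \in \opti} \priority_{u'}$ be the highest-priority vertex in $\opti$. By Corollary~\ref{cor:priority-tplus}, $u$ also has the maximum $\tplus$ value in $\opti$. The plan is to argue two things: (i) $\rplus_{\max}(\opti) = \rplus_u$, i.e.\ the highest-priority vertex is the one carrying the maximum $\rplus$-value; and (ii) $\rr_u = \rplus_u$, i.e.\ on this particular vertex the two assignments agree. Claim (i) should follow from Lemma~\ref{lm:opti_cap_priorityset}: whenever an active set $S$ in Legacy Execution has $\BaseSet(S,\currenttime) \cap \opti \ne \emptyset$, the unique representative in $\opti$ is $\argmax_{v \in S \cap \opti}\priority_v$; since $u$ has the global maximum priority in $\opti$, as long as $u$ itself is still in the base phase it is that representative, so it collects at least as much as any other vertex of $\opti$ can — one must check that a lower-priority vertex $v$ of $\opti$ can only be a representative at moments when $u$ is already out of the base phase (i.e.\ $\currenttime \ge \tplus_u \ge \tplus_v > \currenttime$, a contradiction, so in fact $v$ is never a representative once $u \in \opti$ is present), giving $\rplus_v \le \rplus_u$. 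Actually the cleanest phrasing: since $\rplus$ is a prefix-time assignment (Lemma~\ref{lm:rplus_prefix_time_assignment}) and $u$ is a representative of its base set at every moment $\currenttime < \tplus_u$ (by Lemma~\ref{lm:opti_cap_priorityset}, because $u$ has max priority in $\opti$), we get $\rplus_u \ge \tplus_u \ge \tplus_v$ for all $v \in \opti$; combined with $\rplus_v \le \rplus_u$ this pins down $\rplus_{\max}(\opti) = \rplus_u$.

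Claim (ii), $\rr_u = \rplus_u$, is where the real work lies, since in general $\rr \le \rplus$ can be strict. The key point is that $u$ is the globally highest-priority vertex in $\opti$, so it is a representative of $\priorityset(\BaseSet(S,\currenttime))$ of \emph{every} active set $S \ni u$ throughout its base phase — in both Legacy and Boosted Executions — simply because no other vertex of $\opti$ can outrank it and Lemma~\ref{lm:opti_cap_priorityset} says at most one vertex of $\opti$ is a representative. Thus in Boosted Execution, $u$ is assigned fraction $1$ at every moment $\currenttime < \tplus_u$ for which $u$ lies in an active set; since $\tplus$ is a fingerprint for Legacy Execution and Corollary~\ref{cor:tt-ge-tplus} gives $\ttt_u \ge \tplus_u$, the set containing $u$ is indeed active (and in its base phase, since $u$ itself is a base-phase vertex) at all such moments. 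Hence $\rr_u \ge \tplus_u$. But by Lemma~\ref{lm:rr-le-rplus} and the already-established $\rplus_u = \rplus_{\max}(\opti)$, and by Lemma~\ref{lm:legacy_connected_at_rmax} type reasoning $\rplus_u$ should not exceed $\tplus_u$... wait — this needs care: $\rplus_u$ could a priori exceed $\tplus_u$ if $S$ keeps assigning to $u$ after $\tplus_u$, but by Lemma~\ref{lm:rplus_assigning_condition} no growth is assigned to $u$ once $u$ is satisfied, i.e.\ after $\tplus_u$. So $\rplus_u \le \tplus_u$, and combined with $\rplus_u = \rplus_{\max}(\opti) \ge \rr_{\max}(\opti) \ge \rr_u \ge \tplus_u$, everything collapses to equality: $\rplus_{\max}(\opti) = \rplus_u = \tplus_u = \rr_u = \rr_{\max}(\opti)$.

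**Main obstacle.** The delicate step is establishing that $u$ (the highest-priority vertex of $\opti$) genuinely receives fraction $1$ at \emph{every} moment $\currenttime < \tplus_u$ in Boosted Execution — i.e.\ that it is always the representative of its base set and that its containing set is always in the base phase during that window. This requires combining Lemma~\ref{lm:opti_cap_priorityset} (uniqueness of the $\opti$-representative), the fingerprint property of $\tplus$ (vertex $u$ stays in active sets until $\tplus_u$, which by Corollary~\ref{cor:tt-ge-tplus} persists since $\ttt_u \ge \tplus_u$), and the observation that a set containing a base-phase vertex is itself in the base phase. Once that is in place, the sandwich $\tplus_u \le \rr_u \le \rplus_u \le \tplus_u$ does the rest. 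I would also double-check the edge case where $\opti$ is a singleton component paired with itself, though in that degenerate case $\BaseSet$ is empty and all the relevant quantities are $0$.
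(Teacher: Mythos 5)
Your proof is correct in its essentials and reaches the same conclusion by a somewhat different route than the paper's own argument. The paper fixes $v$ to be the highest-priority vertex among those attaining $\rplus_{\max}(\opti)$, then invokes the refinement result (Corollary~\ref{cor:localsearch_activeset_refinement}) to transfer assignment from the Legacy active set $S \ni v$ to the containing Boosted active set $S' \supseteq S$, finishing with a by-contradiction argument about priorities. You instead take $u$ to be the top-priority vertex of $\opti$ outright and pin both assignments to the single value $\tplus_u$: you get $\rplus_u \ge \tplus_u$ and $\rr_u \ge \tplus_u$ by observing that $u$, being the global priority maximum in $\opti$, is (by Lemma~\ref{lm:opti_cap_priorityset}) the unique $\opti$-representative of whichever active set contains it, and it stays active throughout $[0,\tplus_u)$ in both executions (using the fingerprint property of $\tplus$ for Legacy and $\ttt_u \ge \tplus_u$ for Boosted); you then close the sandwich with $\rplus_w \le \tplus_w$ for every $w$ (via Lemma~\ref{lm:rplus_assigning_condition}) and $\rr_u \le \rplus_u$ (Lemma~\ref{lm:rr-le-rplus}). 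Incidentally, the two chosen vertices coincide: the paper's $v$ turns out to equal your $u$, because you prove $u$ attains $\rplus_{\max}(\opti)$ — the paper just doesn't make that observation and so needs the refinement lemma to relate the two executions, whereas you bypass it entirely. The only real blemish is the parenthetical remark in your claim (i) argument, which is wrong as stated: a lower-priority $w \in \opti$ \emph{can} be the $\opti$-representative of an active set $S$ at a moment when $u$ is still in its base phase, provided $u \notin S$ (e.g.\ early on, when active sets are still singletons); the intended contradiction $\currenttime \ge \tplus_u > \currenttime$ only applies once $u \in S$, not in general. Fortunately you immediately abandon that line in favor of the sandwich $\rplus_w \le \tplus_w \le \tplus_u = \rplus_u$, which is correct, so the slip is harmless, but it should be deleted from a final write-up.
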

\begin{proof}
    Let \(v \in \opti\) be a vertex such that \(\rplus_v = \rplus_{\max}(\opti)\), and among all such vertices, assume \(v\) has the highest priority in \(\opti\).

    Consider a moment \(\currenttime\) just before \(\tplus_v\), where \(v\) is in an active set \(S\) in Legacy Execution. 
    Since \(v \in \BaseSet(S,\currenttime)\), \(S\) is assigning growth to a vertex in \(\opti\). Furthermore, as \(\rplus\) is a prefix-time assignment, \(S\) must assign growth to \(v\), as otherwise the vertex in \(\opti\) to which growth is assigned would have a higher value of \(\rplus\) than \(v\).
    By Corollary~\ref{cor:localsearch_activeset_refinement}, there exists an active set $S'$ in Boosted Execution at moment \(\currenttime\) such that \(S \subseteq S'\). 

    We now show that \(S'\) also assigns its growth to \(v\). By Lemma \ref{lm:base-set-subset}, \(\BaseSet(S,\currenttime)\subseteq \BaseSet(S',\currenttime)\) so \(v\in\BaseSet(S',\currenttime)\). 
    This implies that there is a vertex \(u\in\opti\) in \(\BaseSet(S',\currenttime)\) to which \(S'\) assigns growth.
    Suppose for contradiction \(u\neq v\).
    This would mean that \(u\) has a higher priority than \(v\).
    Since \(u\) is active in Legacy Execution at time \(\currenttime\), and has higher priority than \(v\), this contradicts our assumption that \(v\) is the highest-priority vertex among those with maximum \(\rplus\) in \(\opti\).

    Therefore, \(S'\) must assign its growth to \(v\), implying \(\rplus_v \le \rr_v\). 
    Thus, we can conclude that
    \begin{align*}
        \rplus_{\max}(\opti) &= \rplus_v \\
        &\le \rr_v \\
        &\le \rr_{\max}(\opti).
    \end{align*}

    Now, let \(w \in \opti\) be a vertex such that \(\rr_w = \rr_{\max}(\opti)\). Then:
    \begin{align*}
        \rplus_{\max}(\opti) &\le \rr_{\max}(\opti) \\
        &= \rr_w \\
        &\le \rplus_w \tag{Lemma~\ref{lm:rr-le-rplus}} \\
        &\le \rplus_{\max}(\opti).
    \end{align*}
    Thus, all inequalities must be equalities, and we conclude that
    \(
        \rplus_{\max}(\opti) = \rr_{\max}(\opti).
    \)
\end{proof}

\paragraph{An Exclusive Assignment.}
We now define our first exclusive assignment, corresponding to Boosted Execution, and establish several of its key properties.

\begin{definition}
\label{def:rstar}
    Consider the moment $\currenttime$ during Boosted Execution. For each active set $S \in \Aa_{\currenttime}$ in the \emph{base phase}, we assign this moment of growth \emph{proportionally} to all vertices in $\priorityset(\BaseSet(S, \currenttime))$. Specifically, each vertex in $\priorityset(\BaseSet(S, \currenttime))$ receives a fraction of the growth equal to $1 / |\priorityset(\BaseSet(S, \currenttime))|$. We denote the total growth assigned to a vertex $v$ by $\rstar_v$.
\end{definition}

Since each active set divides its growth equally among its assigned vertices, this defines an exclusive assignment (see Definition~\ref{def:ex-assignment}).  
Moreover, since for any active set only the growth during its base phase is assigned, and the total fraction of assignment during those moments is $1$, we conclude the following:

\begin{corollary}
\label{cor:sum-rstar-yyb}
    Given a set \( S \subseteq V \), we have
    $$
        \sum_{v \in S} \rstar_{S,v} = \yyb_{S}.
    $$
\end{corollary}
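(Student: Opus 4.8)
The statement to prove is Corollary~\ref{cor:sum-rstar-yyb}: for any $S \subseteq V$, $\sum_{v \in S} \rstar_{S,v} = \yyb_S$. This is labeled a corollary, so the expectation is that it follows almost immediately from Definition~\ref{def:rstar} (the definition of $\rstar$) together with Definition~\ref{def:yy-base-boost} / Corollary~\ref{cor:yyb-smaller-yy} (the meaning of $\yyb_S$).

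The plan is as follows. First I would unpack what $\rstar_{S,v}$ means: by Definition~\ref{def:rstar}, at each moment $\currenttime$ of Boosted Execution during which $S$ is an active set in its base phase, $S$ distributes exactly that moment's growth among the vertices of $\priorityset(\BaseSet(S,\currenttime))$, each receiving fraction $1/|\priorityset(\BaseSet(S,\currenttime))|$. Summing these fractions over all vertices $v$ in $\priorityset(\BaseSet(S,\currenttime))$ gives exactly $1$ at each such moment; for vertices outside this set the assigned fraction is $0$. Crucially, $\priorityset(\BaseSet(S,\currenttime)) \subseteq \BaseSet(S,\currenttime) \subseteq S$, so only vertices in $S$ ever receive a nonzero fraction from $S$. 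Hence $\sum_{v \in S} \rstar_{S,v}$ equals the integral over time of the total fraction assigned by $S$, which is the total duration during which $S$ is an active set in its base phase.

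Next I would identify that this total duration is precisely $\yyb_S$: by Definition~\ref{def:yy-base-boost}, $\yyb_S$ is defined as the total growth of active set $S$ during the time it was in the base phase. One subtlety worth a sentence: whenever $S$ is an active set in its base phase there is at least one vertex $v \in S$ with $\currenttime < \tplus_v$, so $\BaseSet(S,\currenttime) \neq \emptyset$, and therefore $\priorityset(\BaseSet(S,\currenttime)) \neq \emptyset$, which guarantees the fraction $1/|\priorityset(\BaseSet(S,\currenttime))|$ is well-defined and the fractions do sum to exactly $1$ (not less) at every base-phase moment. Putting these together yields $\sum_{v \in S}\rstar_{S,v} = \yyb_S$.

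The proof is essentially bookkeeping, so there is no real obstacle; the only thing to be careful about is matching the informal "assign this moment of growth proportionally" language of Definition~\ref{def:rstar} with the formal notion of integrating fractional assignments over the continuous time parameter, and confirming that the set over which the sum is taken ($v \in S$) captures all vertices that $S$ can possibly assign to. A clean one-paragraph proof would read:

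\begin{proof}
    By Definition~\ref{def:rstar}, the only moments at which the active set $S$ assigns any growth are those moments $\currenttime$ of Boosted Execution during which $S$ is active and in its base phase. At each such moment, $S$ distributes that moment's growth among the vertices of $\priorityset(\BaseSet(S,\currenttime))$, giving each a fraction $1/|\priorityset(\BaseSet(S,\currenttime))|$ and giving fraction $0$ to every other vertex. Since $S$ is in its base phase, there is a vertex $v \in S$ with $\currenttime < \tplus_v$, so $\BaseSet(S,\currenttime) \neq \emptyset$ and hence $\priorityset(\BaseSet(S,\currenttime)) \neq \emptyset$; thus the assigned fractions are well-defined and sum to exactly $1$ at every such moment. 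Moreover $\priorityset(\BaseSet(S,\currenttime)) \subseteq S$, so every vertex receiving a nonzero fraction from $S$ lies in $S$. Therefore $\sum_{v \in S} \rstar_{S,v}$ equals the integral over time of the total fraction assigned by $S$, which is exactly the total duration during which $S$ is an active set in its base phase. By Definition~\ref{def:yy-base-boost}, this duration is $\yyb_S$, so $\sum_{v \in S}\rstar_{S,v} = \yyb_S$.
\end{proof}
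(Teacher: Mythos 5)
Your proof is correct and follows the same reasoning the paper uses to justify this corollary: the paper observes that $S$ assigns growth only during its base phase and that the total assigned fraction at each such moment is exactly $1$, which immediately gives the equality. Your version is a more careful elaboration of the same bookkeeping — checking nonemptiness of $\priorityset(\BaseSet(S,\currenttime))$ and the inclusion $\priorityset(\BaseSet(S,\currenttime)) \subseteq S$ — but the argument is the same.
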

We can extend the above lemma and sum it over all sets to obtain the following lemma.

\begin{lemma}
\label{lm:sum-rstar-y-base}
    For the exclusive assignment in Boosted Execution, we have
    $$
        \sum_{v \in \V} \rstar_v = \yy_{\base}.
    $$
\end{lemma}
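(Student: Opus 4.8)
The statement $\sum_{v \in \V} \rstar_v = \yy_{\base}$ follows by summing the per-set identity from Corollary~\ref{cor:sum-rstar-yyb} over all subsets $S \subseteq \V$ and then recognizing that the total of the per-set base-phase growths $\yyb_S$ equals $\yy_{\base}$. So the proof is essentially a two-line swap of summation order plus an appeal to the definitions.

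\textbf{Key steps in order.} First, I would expand $\sum_{v\in\V}\rstar_v$ using Definition~\ref{def:assignment}, which says $\rstar_v = \sum_{S\subseteq\V}\rstar_{S,v}$; this gives $\sum_{v\in\V}\rstar_v = \sum_{v\in\V}\sum_{S\subseteq\V}\rstar_{S,v}$. Second, I would interchange the order of summation (both sums are over nonnegative terms, so Tonelli/finiteness makes this immediate) to obtain $\sum_{S\subseteq\V}\sum_{v\in\V}\rstar_{S,v} = \sum_{S\subseteq\V}\sum_{v\in S}\rstar_{S,v}$, using that $\rstar_{S,v}$ is only nonzero for $v \in \core(S)\subseteq S$. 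Third, I would apply Corollary~\ref{cor:sum-rstar-yyb} termwise to replace $\sum_{v\in S}\rstar_{S,v}$ by $\yyb_S$, yielding $\sum_{S\subseteq\V}\yyb_S$. Fourth, I would argue that $\sum_{S\subseteq\V}\yyb_S = \yy_{\base}$: by Definition~\ref{def:yy-base-boost}, $\yy_{\base}$ is the total growth of active sets while in their base phase, and $\yyb_S$ is precisely the contribution of set $S$ to that total, so summing over all $S$ recovers $\yy_{\base}$ (formally, one can argue this moment-by-moment: at each moment $\currenttime$, the base-phase active sets are disjoint, and each such set $S$ contributes $\Delta$ to $\yyb_S$ exactly when it contributes $\Delta$ to $\yy_{\base}$).

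\textbf{Main obstacle.} There is essentially no obstacle here — the only thing requiring a sentence of care is the final identification $\sum_S \yyb_S = \yy_{\base}$, which is really just unpacking the two definitions and observing that base-phase growth is partitioned across active sets (each active set is a distinct connected component at any moment, so no double-counting occurs). Everything else is a routine Fubini-style rearrangement. I would write the whole thing as a short \texttt{align*} chain with the three justifications (Definition~\ref{def:assignment}, Corollary~\ref{cor:sum-rstar-yyb}, and Definition~\ref{def:yy-base-boost}) as \texttt{\textbackslash tag}s.

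\begin{proof}
We unpack the definition of the total assigned value and swap the order of summation:
\begin{align*}
    \sum_{v \in \V} \rstar_v
    &= \sum_{v \in \V} \sum_{S \subseteq \V} \rstar_{S,v} \tag{Definition~\ref{def:assignment}} \\
    &= \sum_{S \subseteq \V} \sum_{v \in S} \rstar_{S,v} \\
    &= \sum_{S \subseteq \V} \yyb_{S} \tag{Corollary~\ref{cor:sum-rstar-yyb}} \\
    &= \yy_{\base}. \tag{Definition~\ref{def:yy-base-boost}}
\end{align*}
In the second line we used that $\rstar_{S,v}$ is nonzero only for $v \in \core(S) \subseteq S$, and in the last line we used that, at any moment of Boosted Execution, the active sets in their base phase are disjoint connected components, so the per-set base-phase growths $\yyb_S$ partition the total base-phase growth $\yy_{\base}$.
\end{proof}
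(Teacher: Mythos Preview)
Your proof is correct and matches the paper's own proof essentially line for line: the same expansion via Definition~\ref{def:assignment}, the same swap of summation (restricting to $v\in S$ since $\rstar_{S,v}=0$ otherwise), the same application of Corollary~\ref{cor:sum-rstar-yyb}, and the same final identification $\sum_S \yyb_S = \yy_{\base}$ via Definition~\ref{def:yy-base-boost}.
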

\begin{proof}
We can prove this by expanding the total assignment over all vertices:
\begin{align*}
    \sum_{v \in \V} \rstar_v 
    &= \sum_{v \in \V} \sum_{S \subseteq \V} \rstar_{S, v} \tag{Definition~\ref{def:assignment}} \nonumber \\
    &= \sum_{S \subseteq \V} \sum_{v \in S} \rstar_{S, v}
    \tag{$\rstar_{S, v} = 0$ for all $v \notin S$}\\
    &= \sum_{S \subseteq \V} \yyb_{S}
    \tag{Corollary~\ref{cor:sum-rstar-yyb}}\\
    &= \yy_{\base}.
    \tag{Definition~\ref{def:yy-base-boost}}
\end{align*}
\end{proof}

The next lemma, which bounds the sum of $\rstar$ over all vertices, follows directly from the above lemma.

\begin{lemma} \label{lm:sum-rstar-less-copt-minus-win}
    For the exclusive assignment in Boosted Execution, we have
    \[
        \sum_{v \in \V} \rstar_v \leq \cc(\OPT) - \winone.
    \]
\end{lemma}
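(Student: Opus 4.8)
The target inequality is $\sum_{v\in\V}\rstar_v \le \cc(\OPT) - \winone$. By Lemma~\ref{lm:sum-rstar-y-base}, the left-hand side equals $\yy_{\base}$, so it suffices to show $\yy_{\base} \le \cc(\OPT) - \winone$. Now recall that Boosted Execution is exactly the run of $\BoostedModGW(\G,\tplus,\ttt)$, so its outputs are $(\ybase^*,\yadd^*) = (\yy_{\base},\yy_{\add})$ in the notation of the earlier local-search lemmas. The plan is therefore to invoke Lemma~\ref{lm:win-equal-ybase-diff}, which (with $\YIn_S = \yplus_S$, i.e. the growth in Legacy Execution $\GBG(\G,\tplus)$, which coincides with $\BBG$ by Lemma~\ref{lm:ghost_equivalence}) gives
$$
\winone = \sum_{S\subseteq\V}\yplus_S - \yy_{\base},
$$
hence $\yy_{\base} = \sum_{S\subseteq\V}\yplus_S - \winone$.

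Substituting, the goal reduces to showing $\sum_{S\subseteq\V}\yplus_S - \winone \le \cc(\OPT) - \winone$, i.e. simply $\sum_{S\subseteq\V}\yplus_S \le \cc(\OPT)$. But this is precisely Lemma~\ref{lm:yys_opt}, which states that the total growth of active sets in Legacy Execution is a lower bound on the optimal cost. Combining these three facts — Lemma~\ref{lm:sum-rstar-y-base} to rewrite the LHS as $\yy_{\base}$, Lemma~\ref{lm:win-equal-ybase-diff} to express $\yy_{\base}$ via $\winone$ and the Legacy growth, and Lemma~\ref{lm:yys_opt} to bound the Legacy growth by $\cc(\OPT)$ — yields the claim in a short chain of (in)equalities.

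The one point requiring a little care is the bookkeeping of which execution's quantities feed into Lemma~\ref{lm:win-equal-ybase-diff}: that lemma is stated for a generic instance $(\G,\tm,\bta)$ with $\tmp$ the output of $\LocalSearch$, and I must confirm that the first \LocalSearch{} call in Algorithm~\ref{alg:main} is run on $(\G,\tplus,\bta)$ with output $\ttt$, so that $\YIn_S = \yplus_S$ and $\ybase^* = \yy_{\base}$ by the definitions in Section~\ref{subsec:boosted_execution}. This identification is immediate from Definition of Boosted Execution and Corollary~\ref{cor:tt-ge-tplus}, so there is no real obstacle — the proof is essentially a three-line computation assembling already-established lemmas. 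I would write it as an \texttt{align*} display: $\sum_v \rstar_v = \yy_{\base} = \sum_S \yplus_S - \winone \le \cc(\OPT) - \winone$, annotating each step with the lemma used.
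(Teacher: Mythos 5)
Your proof is correct and follows exactly the same three-step chain as the paper: rewrite $\sum_v \rstar_v$ as $\yy_{\base}$ via Lemma~\ref{lm:sum-rstar-y-base}, express $\yy_{\base}$ as $\sum_S \yplus_S - \winone$ via Lemma~\ref{lm:win-equal-ybase-diff}, and bound $\sum_S \yplus_S$ by $\cc(\OPT)$ via Lemma~\ref{lm:yys_opt}. Your bookkeeping of which execution's quantities feed into Lemma~\ref{lm:win-equal-ybase-diff} is also accurate.
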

\begin{proof}
    We have
    \begin{align*}
        \sum_{v\in \V} \rstar_v &= \yy_{\base}  \tag{Lemma \ref{lm:sum-rstar-y-base}} \\
        & = \sum_{S\subseteq \V}\yplus_S - \winone \tag{Lemma \ref{lm:win-equal-ybase-diff}}\\
        & \leq \cc(\OPT) - \winone.\tag{Lemma \ref{lm:yys_opt}}
    \end{align*}
\end{proof}

We can also use Lemma~\ref{lm:sum-rstar-y-base} to show that the difference between the total $\yy$ and the total $\rstar$ is exactly $\lossone$.

\begin{lemma}
\label{lm:yy-using-rstar-and-loss}
    For Boosted Execution, we have
    $$
        \sum_{v \in \V} \rstar_v + \lossone = \sum_{S \subseteq V} \yy_S.
    $$
\end{lemma}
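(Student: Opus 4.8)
The plan is to chain together the two structural facts already available: Lemma~\ref{lm:sum-rstar-y-base}, which says $\sum_{v\in\V}\rstar_v = \yy_{\base}$, and Lemma~\ref{lm:ys-sum-ybase-yadd}, which says that the total growth of active sets splits as $\sum_{S\subseteq\V}\yy_S = \ybase + \yadd$ where here $\ybase = \yy_{\base}$ and $\yadd = \yy_{\add}$ are exactly the quantities returned by $\BoostedModGW(\G,\tplus,\ttt)$ (this identification is recorded in Definition~\ref{def:yy-base-boost}). So the only remaining ingredient is to show that $\yy_{\add} = \lossone$, i.e. that the total boost-phase growth in Boosted Execution equals the total loss accumulated by the first \LocalSearch{}. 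This is precisely Lemma~\ref{lm:loss-equal-yadd} applied to the instance $(\G,\tplus,\bta)$: that lemma states $\loss = \yadd$ for the \BoostedModGW{} run on $(\G,\tm,\tmp)$ where $\tmp$ is the fingerprint output by $\LocalSearch(\G,\tm,\bta)$, and here $\tm = \tplus$, $\tmp = \ttt$, $\loss = \lossone$, $\yadd = \yy_{\add}$.

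Concretely, I would write: starting from $\sum_{S\subseteq V}\yy_S$, apply Lemma~\ref{lm:ys-sum-ybase-yadd} to get $\yy_{\base} + \yy_{\add}$; substitute $\yy_{\base} = \sum_{v\in\V}\rstar_v$ via Lemma~\ref{lm:sum-rstar-y-base}; and substitute $\yy_{\add} = \lossone$ via Lemma~\ref{lm:loss-equal-yadd}. This yields $\sum_{S\subseteq V}\yy_S = \sum_{v\in\V}\rstar_v + \lossone$, which rearranges to the claimed identity. The whole argument is a three-line display-math computation with each step justified by a citation to a prior result.

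The one subtlety I would be careful about — and the closest thing to an "obstacle" — is the bookkeeping of which $\BoostedModGW$ run each quantity refers to. Lemmas~\ref{lm:ys-sum-ybase-yadd} and~\ref{lm:loss-equal-yadd} are stated for a generic instance and for the pair $(\tm,\tmp)$, so I must make explicit that Boosted Execution \emph{is} the run $\BoostedModGW(\G,\tplus,\ttt)$ (as noted right after the definition of Boosted Execution), that $\tplus$ is a legitimate fingerprint (Lemma~\ref{lm:tplus_is_fingerprint}), and that $\ttt$ is the output fingerprint of the \LocalSearch{} call in Line~\ref{exe:localsearch}. Once that correspondence is stated, nothing else is needed. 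Here is the proof:

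\begin{proof}
    Recall that Boosted Execution is exactly the run $\BoostedModGW(\G, \tplus, \ttt)$, where $\ttt$ is the boosted fingerprint output by the \LocalSearch{} call in Line~\ref{exe:localsearch} of Algorithm~\ref{alg:main} on the instance $(\G, \tplus, \bta)$. Applying Lemma~\ref{lm:ys-sum-ybase-yadd} to this run, and using Definition~\ref{def:yy-base-boost}, which identifies the two returned quantities as $\yy_{\base}$ and $\yy_{\add}$, we have
    \begin{align*}
        \sum_{S \subseteq V} \yy_S &= \yy_{\base} + \yy_{\add} \tag{Lemma~\ref{lm:ys-sum-ybase-yadd}} \\
        &= \sum_{v \in \V} \rstar_v + \yy_{\add} \tag{Lemma~\ref{lm:sum-rstar-y-base}} \\
        &= \sum_{v \in \V} \rstar_v + \lossone. \tag{Lemma~\ref{lm:loss-equal-yadd}}
    \end{align*}
    In the last step we used Lemma~\ref{lm:loss-equal-yadd} with $\tm = \tplus$ and $\tmp = \ttt$, so that $\loss = \lossone$ and $\yadd = \yy_{\add}$. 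Rearranging gives
    $$
        \sum_{v \in \V} \rstar_v + \lossone = \sum_{S \subseteq V} \yy_S,
    $$
    as claimed.
\end{proof}
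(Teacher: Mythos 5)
Your proof is correct and uses exactly the same three ingredients as the paper's proof (Lemmas~\ref{lm:sum-rstar-y-base}, \ref{lm:loss-equal-yadd}, and \ref{lm:ys-sum-ybase-yadd}), just chained in the reverse direction, plus some extra but harmless bookkeeping about which $\BoostedModGW$ run the quantities belong to. This is the same approach as the paper.
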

\begin{proof}
The result follows directly from previous lemmas:
    \begin{align*}
        \sum_{v \in \V} \rstar_v + \lossone
        &= \yy_{\base} + \lossone \tag{ Lemma~\ref{lm:sum-rstar-y-base}} \\
        &= \yy_{\base} + \yy_{\add} \tag{ Lemma~\ref{lm:loss-equal-yadd}} \\
        &= \sum_{S \subseteq V} \yy_S. \tag{ Lemma~\ref{lm:ys-sum-ybase-yadd}}
    \end{align*}
\end{proof}
We now compare the exclusive and prefix-time assignments of Boosted Execution.

\begin{lemma}
\label{lm:rstar-smaller-rr}
    For any vertex \(v\),
    $$
        \rstar_v \le \rr_v.
    $$
\end{lemma}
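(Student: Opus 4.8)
Both $\rstar$ and $\rr$ are assignments defined from the same Boosted Execution, and both assign growth at the same moments: for every active set $S$ in its base phase at time $\currenttime$, each distributes $S$'s growth over the vertex set $\priorityset(\BaseSet(S,\currenttime))$. The only difference is that $\rr$ gives a full fraction $1$ to \emph{each} such vertex (Definition~\ref{def:rr}), while $\rstar$ splits a total of $1$ uniformly among them (Definition~\ref{def:rstar}). So pointwise, at every moment, the fraction of growth that an active set assigns to a fixed vertex $v$ under $\rstar$ is $1/|\priorityset(\BaseSet(S,\currenttime))| \le 1$, whereas under $\rr$ it is either $0$ or $1$; and crucially, it is nonzero in exactly the same situations (namely when $v\in\priorityset(\BaseSet(S,\currenttime))$). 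Hence $\rstar_{S,v}\le \rr_{S,v}$ at every moment, for every set $S$ and vertex $v$.

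First I would make precise the claim that the two assignments have the same support at every moment: for an active set $S$ in its base phase at time $\currenttime$, $v$ receives a positive fraction under $\rstar$ iff $v\in\priorityset(\BaseSet(S,\currenttime))$ iff $v$ receives fraction $1$ under $\rr$. This is immediate from the two definitions. Then, whenever the fraction is positive, the $\rstar$-fraction is $1/|\priorityset(\BaseSet(S,\currenttime))|$, which is at most $1$ since the set is nonempty, and the $\rr$-fraction is exactly $1$. Therefore the instantaneous contribution of $S$ to $\rstar_v$ is at most its instantaneous contribution to $\rr_v$.

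Finally I would integrate over time and sum over all active sets: since each instantaneous contribution to $\rstar_v$ is dominated by the corresponding contribution to $\rr_v$, summing gives $\rstar_v = \sum_{S\subseteq\V}\rstar_{S,v} \le \sum_{S\subseteq\V}\rr_{S,v} = \rr_v$. (Formally one can phrase this as: at every moment $\currenttime$ of Boosted Execution, the rate of increase of $\rstar_v$ is at most the rate of increase of $\rr_v$, and both start at $0$.) This completes the proof.

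**Main obstacle.** There is essentially no real obstacle here — the lemma is a direct consequence of the fact that uniform splitting of a unit among $k\ge 1$ parts gives each part weight $\le 1$, combined with the observation that the two assignments activate on identical events. The only thing to be careful about is making the "identical support" statement airtight, i.e.\ confirming that $\BaseSet(S,\currenttime)$ and the base-phase condition on $S$ are the same in both definitions (they are, since both are defined with respect to Boosted Execution and the fingerprint $\tplus$), so that no moment contributes to $\rr_v$ without also being available to contribute to $\rstar_v$.
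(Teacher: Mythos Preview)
Your proposal is correct and follows essentially the same approach as the paper's proof: both argue that at every moment and for every active set in its base phase, the fraction of growth assigned to $v$ under $\rstar$ (namely $1/|\priorityset(\BaseSet(S,\currenttime))|$) is at most the fraction assigned under $\rr$ (namely $1$), with identical support, and then integrate. The paper's version is terser, but the content is the same.
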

\begin{proof}
    By Definitions~\ref{def:rr} and~\ref{def:rstar}, at any moment when growth is assigned to \(v\), the amount contributed to \(\rstar_v\) is less than or equal to the amount contributed to \(\rr_v\), since in the exclusive assignment the growth is divided among potentially more vertices.
    As \(\rstar_v\) accumulates less (or equal) growth than \(\rr_v\) over time, %
    \(
        \rstar_v \le \rr_v.
    \)
\end{proof}

We can use the above lemma to show that for each connected component of the optimal solution, the total assigned value $\rstar$ to its vertices is at most the cost of its corresponding tree.

\begin{lemma} 
\label{lm:opt-lowerbound_rstar}  
    For any connected component of the optimal solution $\opti$ with tree $\otree$, we have
    $$  
     \sum_{v\in \opti} \rstar_v \le \cc(\otree).
    $$
\end{lemma}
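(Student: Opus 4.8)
The plan is to avoid reproving anything from scratch and instead chain together pointwise comparisons that are already available. The target inequality is additive over the vertices of $\opti$, so it suffices to bound $\rstar_v$ by a quantity whose sum over $\opti$ is already controlled, and the natural candidate is $\rplus_v$, for which Lemma~\ref{lm:upper_bound_sum_rplus} gives $\sum_{v\in\opti}\rplus_v \le \cc(\otree)$.

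Concretely, I would first apply Lemma~\ref{lm:rstar-smaller-rr} to obtain $\rstar_v \le \rr_v$ for every vertex $v$ — this holds because the exclusive assignment $\rstar$ of Boosted Execution splits each active set's growth among all of its representatives, whereas the prefix-time assignment $\rr$ hands the full growth to each representative. Then I would chain this with Lemma~\ref{lm:rr-le-rplus}, which gives $\rr_v \le \rplus_v$, to conclude $\rstar_v \le \rplus_v$ for all $v \in \V$. Summing over $v \in \opti$ and invoking Lemma~\ref{lm:upper_bound_sum_rplus} yields $\sum_{v\in\opti}\rstar_v \le \sum_{v\in\opti}\rr_v \le \sum_{v\in\opti}\rplus_v \le \cc(\otree)$, which is the claim.

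I do not expect a genuine obstacle here; the only point requiring care is that both $\rstar_v \le \rr_v$ and $\rr_v \le \rplus_v$ are honest per-vertex statements, so termwise summation over $\opti$ is legitimate. As an alternative (but longer) route, one could mimic the proof of Lemma~\ref{lm:upper_bound_sum_rplus} directly in Boosted Execution: rewrite $\sum_{v\in\opti}\rstar_v = \sum_{S\subseteq\V}\sum_{v\in\opti}\rstar_{S,v}$, use Lemma~\ref{lm:opti_cap_priorityset} to see that each active set $S$ contributes growth to at most one vertex of $\opti$ so that $\sum_{v\in\opti}\rstar_{S,v} \le \yyb_{S} \le \yys$ (via Corollaries~\ref{cor:sum-rstar-yyb} and~\ref{cor:yyb-smaller-yy}), observe that any contributing $S$ satisfies $S \odot \opti$ and hence $|\deltaS\cap\otree| \ge 1$, and finish with Lemma~\ref{lm:tree-bound-by-activeset-num-cuts} applied to Boosted Execution. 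Since the three-line reduction is cleaner and reuses machinery already in place, I would present that as the proof and omit the direct argument.
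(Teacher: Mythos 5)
Your main argument — chaining $\rstar_v \le \rr_v$ (Lemma~\ref{lm:rstar-smaller-rr}) with $\rr_v \le \rplus_v$ (Lemma~\ref{lm:rr-le-rplus}) termwise, summing over $\opti$, and invoking Lemma~\ref{lm:upper_bound_sum_rplus} — is exactly the paper's proof. Correct, and same approach.
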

\begin{proof}
    It can be concluded as follows:
    \begin{align*}
        \sum_{v\in \opti} \rstar_v &\le \sum_{v\in \opti} \rr_v \tag{Lemma~\ref{lm:rstar-smaller-rr}}\\
        &\le \sum_{v\in \opti} \rplus_v \tag{Lemma~\ref{lm:rr-le-rplus}}\\
        &\le \cc(\otree). \tag{Lemma~\ref{lm:upper_bound_sum_rplus}}
    \end{align*}
\end{proof}

The next lemma is independent of the assignment definitions and relies only on the basic setup of Boosted Execution. 
It shows that the total growth of active sets that do not color any edge of the optimal solution is bounded by the sum of $\winone$ and $\lossone$.

\begin{lemma}
\label{lm:ys_not_cut_opt_win_loss}
    We can bound the total growth of active sets that do not cut $\OPT$ by $\winone+\lossone$, that is,
    $$\sum_{\substack{S \subseteq \V \\ \deltaS\cap \OPT = \emptyset}} \yys \le \winone + \lossone.$$
\end{lemma}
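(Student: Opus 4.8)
The statement bounds $\sum_{S\odot\OPT=\emptyset}\yys$, the total growth of active sets in Boosted Execution that do not cut the optimal solution. The natural strategy is to compare the total growth in Boosted Execution against the total growth in Legacy Execution, isolating the contribution of active sets disjoint from $\OPT$. The plan is to start from the identity of Corollary~\ref{cl:total-ys-based-on-win-loss}, which gives $\sum_S \yys = \sum_S \yplus_S - \winone + \lossone$, and then argue that every unit of growth in Legacy Execution is ``charged'' against a cutting active set, so that the non-cutting growth in Boosted Execution is forced to be paid for entirely out of $\lossone$ plus whatever slack $\winone$ creates. More concretely, I would split the Boosted growth into the part coming from sets that cut $\OPT$ and the part from sets that do not:
$$\sum_{S\subseteq\V}\yys = \sum_{\substack{S\subseteq\V\\ S\odot\OPT}}\yys + \sum_{\substack{S\subseteq\V\\ \deltaS\cap\OPT=\emptyset}}\yys.$$

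The key lemma to invoke for the cutting part is Lemma~\ref{lm:ys_le_T_extra} applied to $F=\OPT$ (or the variant in terms of $\extra$), together with Lemma~\ref{lm:yys_opt} which gives $\sum_S \yplus_S \le \cc(\OPT)$. Actually the cleanest route is: in Legacy Execution, every active set is in the base phase and contains an unsatisfied vertex, hence (by the argument used for Lemma~\ref{lm:rplus_assigning_condition} / Lemma~\ref{lm:remove_one_edge}) it must cut $\OPT$; so $\sum_{S:S\odot\OPT}\yplus_S = \sum_S \yplus_S$. On the Boosted side, a set that does not cut $\OPT$ must be a set all of whose vertices are already satisfied in the $\tplus$ sense — i.e.\ it is growing only because of boosting, so its growth is counted toward $\yadd$ — unless it still cuts $\OPT$. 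Wait: I would need to verify that a Boosted active set not cutting $\OPT$ cannot be in its base phase, because a base-phase active set contains a vertex $v$ with $\currenttime<\tplus_v$, meaning $v$ is unsatisfied and hence its pair lies in another component of $\OPT$, forcing a cut. Thus every non-cutting Boosted active set is in its boost phase, so its growth is counted in $\yy_{\add}$, and by Lemma~\ref{lm:loss-equal-yadd} $\yy_{\add}=\lossone$. That already gives $\sum_{S:\,\deltaS\cap\OPT=\emptyset}\yys \le \yy_{\add}=\lossone \le \winone+\lossone$ directly — a short argument that does not even need the global counting identity.

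So the main structure I expect: (i) show a Boosted active set $S$ with $\deltaS\cap\OPT=\emptyset$ cannot be in its base phase — argue that if it were, it contains $v$ with $\currenttime<\tplus_v$, $v$ is unsatisfied at $\currenttime$ so $\pairv\notin S$, and since $\pairv\in\optcom(v)$ the component of $\OPT$ containing $v$ has one endpoint inside $S$ and one outside, so $S$ cuts at least one edge of that $\OPT$-component, contradiction; (ii) therefore such an $S$ is in its boost phase, so all its growth is accumulated in $\yy_{\add}$ (by Definitions~\ref{def:yy-base-boost} and~\ref{def:boost-phase}); (iii) sum over all such $S$ to get $\sum_{S:\,\deltaS\cap\OPT=\emptyset}\yys \le \yy_{\add}$; (iv) apply $\yy_{\add}=\yadd=\lossone$ (Lemma~\ref{lm:loss-equal-yadd} / Lemma~\ref{lm:ys-sum-ybase-yadd} bookkeeping) and $\winone\ge 0$ (from Lemma~\ref{lm:positive_win}, or rather $\winone\ge(1+\bta)\lossone\ge 0$) to conclude $\le\winone+\lossone$. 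The main obstacle is step (i): being careful about the pruning phase — a set $S$ that ``cuts $\OPT$'' must cut $\OPT$ as a set of vertices throughout, and I must make sure the argument uses $\deltaS\cap\OPT=\emptyset$ meaning $S$ is a union of whole components of $\OPT$, so that no unsatisfied vertex can sit inside it; this is exactly the observation that $\deltaS\cap\OPT=\emptyset$ implies $S$ contains, for each vertex it contains, that vertex's entire $\OPT$-component, hence all pairs, hence every vertex in $S$ is satisfied, hence $S$ is never in its base phase. Once that is pinned down the rest is routine bookkeeping with the already-established lemmas.
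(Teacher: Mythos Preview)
Your step (i) is the genuine gap: the claim that an active set $S$ in Boosted Execution with $\deltaS\cap\OPT=\emptyset$ cannot be in its base phase is false, and your justification conflates two different things. You correctly observe that $\deltaS\cap\OPT=\emptyset$ forces $S$ to contain, for every $v\in S$, the entire $\OPT$-component of $v$, hence $\pairv\in S$. But ``$\pairv\in S$'' is a statement about connectivity in \emph{Boosted} Execution, whereas ``$S$ is in base phase'' means some $v\in S$ has $\tplus_v>\currenttime$, a statement about connectivity times in \emph{Legacy} Execution. A boost can cause $v$ and $\pairv$ to merge in Boosted Execution at a time strictly earlier than $\tplus_v$; at such a moment the merged set is simultaneously in base phase and non-cutting. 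The wheel example (Figure~\ref{fig:wheel}) already exhibits this: after the valuable boost on the center, all vertices merge at time $(1+\xi)/2$, yet each terminal has $\tplus=1>(1+\xi)/2$, so the merged set is in base phase while containing all of $\OPT$.

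The paper's proof does not try to show the base-phase non-cutting growth vanishes. Instead it exploits exactly the configuration that breaks your argument: for a base-phase non-cutting $S$ at time $\currenttime$, take $v\in S$ with $\tplus_v>\currenttime$; since $\deltaS\cap\OPT=\emptyset$ we have $\pairv\in S$, and since $\tplus_v>\currenttime$ the vertices $v$ and $\pairv$ lie in \emph{distinct} Legacy active sets $S',S''\subseteq S$ at that moment. So each such $S$ accounts for at least two Legacy active sets, yielding
\[
2\sum_{\substack{S:\,\deltaS\cap\OPT=\emptyset}}\yyb_S \;+\; \sum_{\substack{S:\,\deltaS\cap\OPT\neq\emptyset}}\yyb_S \;\le\; \sum_{S}\yplus_S,
\]
which rearranges (via $\sum_S\yyb_S=\sum_S\yplus_S-\winone$) to $\sum_{S:\,\deltaS\cap\OPT=\emptyset}\yyb_S\le\winone$. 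Adding the trivial bound on the boost-phase part by $\yy_{\add}=\lossone$ then gives the lemma. Your proposed ``short argument'' thus skips the one nontrivial counting step and cannot be repaired without it.
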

\begin{proof}
    First, we expand the left-hand side:
    \begin{align*}
        \sum_{\substack{S \subseteq \V \\ \deltaS\cap \OPT = \emptyset}} \yys
        &= \sum_{\substack{S \subseteq \V \\ \deltaS\cap \OPT = \emptyset}} \yys 
        - \sum_{S \subseteq \V} \yys 
        + \sum_{S \subseteq \V} \yys\\
        &= \sum_{\substack{S \subseteq \V \\ \deltaS\cap \OPT = \emptyset}} \yys 
        - \sum_{S \subseteq \V} \yys 
        + \sum_{S \subseteq \V} \yyb_{S} + \yadd
        \tag{Lemma~\ref{lm:ys-sum-ybase-yadd}}\\
        &= -\sum_{\substack{S \subseteq \V \\ \deltaS\cap \OPT \ne \emptyset}} \yys 
        + \sum_{S \subseteq \V} \yyb_{S} + \yadd\\
        &= -\sum_{\substack{S \subseteq \V \\ \deltaS\cap \OPT \ne \emptyset}} \yys 
        + \sum_{S \subseteq \V} \yyb_{S} + \lossone
        \tag{Lemma~\ref{lm:loss-equal-yadd}}\\
        &\le -\sum_{\substack{S \subseteq \V \\ \deltaS\cap \OPT \ne \emptyset}} \yyb_{S} 
        + \sum_{S \subseteq \V} \yyb_{S} + \lossone
        \tag{Corollary~\ref{cor:yyb-smaller-yy}}\\
        &= \sum_{\substack{S \subseteq \V \\ \deltaS\cap \OPT = \emptyset}} \yyb_{S} + \lossone.
    \end{align*}
    Next, we expand the right-hand side:
    \begin{align*}
        \winone + \lossone &= 
        \sum_{S \subseteq \V} \yplus_{S} - \sum_{S \subseteq \V} \yyb_{S} + \lossone\tag{Lemma~\ref{lm:win-equal-ybase-diff}}\\
        &= \sum_{S \subseteq \V} \yplus_{S} -\sum_{\substack{S \subseteq \V \\ 
         \deltaS\cap \OPT = \emptyset}} \yyb_{S} - \sum_{\substack{S \subseteq \V \\ \deltaS\cap \OPT \ne \emptyset}} \yyb_{S} + \lossone.
    \end{align*}
    It suffices to show:
    $$
        \sum_{\substack{S \subseteq \V \\ \deltaS\cap \OPT = \emptyset}} \yyb_{S} + \lossone \le \sum_{S \subseteq \V} \yplus_{S} -\sum_{\substack{S \subseteq \V \\ 
         \deltaS\cap \OPT = \emptyset}} \yyb_{S} - \sum_{\substack{S \subseteq \V \\ \deltaS\cap \OPT \ne \emptyset}} \yyb_{S} + \lossone.
    $$
    Simplifying, we get:
    $$
        2\sum_{\substack{S \subseteq \V \\ \deltaS\cap \OPT = \emptyset}} \yyb_{S} +\sum_{\substack{S \subseteq \V \\ \deltaS\cap \OPT \ne \emptyset}} \yyb_{S} \le \sum_{S \subseteq \V} \yplus_{S}
    $$
    To prove this, consider an active set $S$ in Boosted Execution at some moment $\currenttime$, assumed to be in its base phase.  
    By Definition~\ref{def:base-phase}, there exists $v \in S$ such that $\tplus_v > \currenttime$.
    We aim to show that the contribution of $S$ to the left-hand side has a corresponding contribution to the right-hand side.
    We distinguish two cases:

    \begin{itemize}
        \item If $\deltaS\cap \OPT \ne \emptyset$, then at the same moment in Legacy Execution, $v$ is in an active set $S' \subseteq S$ by Corollary~\ref{cor:localsearch_activeset_refinement}. Thus, we say $S$ corresponds to $S'$.
        
        \item If $\deltaS \cap \OPT = \emptyset$, then $\pairv \in S$; otherwise, there must be a path between $v$ and $\pairv$ in the optimal solution, and $S$ would cut that path.
        In Legacy Execution, at the same moment $\currenttime$, $v$ and $\pairv$ belong to different active sets $S'$ and $S''$, respectively, by Definition~\ref{def:legacy_execution} and the fact that $\tplus_v > \currenttime$. Moreover, both $S'$ and $S''$ are subsets of $S$ by Corollary~\ref{cor:localsearch_activeset_refinement}.
        In this case, we say $S$ corresponds to $S'$ and $S''$.
    \end{itemize}

    Therefore, at any given moment, each active set in Boosted Execution either corresponds to one active set in Legacy Execution if it cuts $\OPT$, or corresponds to two active sets in Legacy Execution if it does not cut $\OPT$.

    Since the corresponding active sets in Legacy Execution are subsets of distinct active sets in Boosted Execution at a fixed moment, it follows that Legacy active sets themselves are all distinct. This completes the proof.
\end{proof}

\paragraph{Classifying Connected Components of the Optimal Solution.}  
Next, using the exclusive assignment from Boosted Execution, we partition the connected components of $\OPT$ into two distinct families.
This partition helps us analyze and bound the cost of our solutions, starting with $\solone$.

\begin{definition}
\label{def:classify_A_B}
    For a constant $\cone$, we classify a connected component $\opti$ with tree $\otree$ in the optimal solution as belonging to family $\A$ if
    $$
    \sum_{v \in \opti} \rstar_v \leq (1 - \cone) \cdot \cc(\otree),
    $$
    and to family $\B$ if
    $$
    \sum_{v \in \opti} \rstar_v > (1 - \cone) \cdot \cc(\otree).
    $$
    It follows that $\OPT = \A \cup \B$. We also refer to the forests of connected components in $\A$ and $\B$ as $\OPTA$ and $\OPTB$, respectively. Consequently, $\cc(\OPT) = \cc(\OPTA) + \cc(\OPTB)$.
\end{definition}

It is important to note that in Section~\ref{sec:final}, we determine an appropriate value for $\cone$ to ensure that the algorithm achieves the desired approximation guarantee.

We further partition the connected components in $\B$ into two families.

\begin{definition}
\label{def:classify_B}
    For a constant $\cfive$, a connected component $\opti \in \B$ with tree $\otree$ in the optimal solution is classified into family $\Bone$ if
    $$
        \rmax(\opti) \le \cfive \cdot \cc(\otree),
    $$
    and into family $\Btwo$ if
    $$
        \rmax(\opti) > \cfive \cdot \cc(\otree).
    $$
    Consequently, $\B = \Bone \cup \Btwo$. We denote by $\OPTBone$ and $\OPTBtwo$ the forests consisting of the connected components in $\Bone$ and $\Btwo$, respectively. It follows that $\cc(\OPTB) = \cc(\OPTBone) + \cc(\OPTBtwo)$.
\end{definition}

The value of $\cfive$ will be defined in terms of other parameters in Definition~\ref{def:ctwo_cfive}, and its exact value will be determined in Section~\ref{sec:final}.

By combining Lemma~\ref{lm:rstar-smaller-rr} and Definition~\ref{def:classify_A_B}, we have the following property for connected components in $\B$.

\begin{corollary}  
\label{corollary:B_bound_r}  
    For any connected component $\opti \in \B$, we have
    $$  
     \sum_{v\in \opti} \rr_v > (1-\cone) \cdot \cc(\otree).
    $$  
\end{corollary} 

The next three lemmas focus on bounding the cost of the solution $\solone$, the forest returned by Boosted Execution.
First, we show that if this solution does not meet the desired approximation factor, then $\lossone$ can be bounded in terms of the cost of the optimal solution.

\begin{lemma}\label{lm:lcl-loss}
    Consider the solution $\solone$. Then either:
    \begin{itemize}
        \item $\cc(\solone) \le (2 - 2\alf) \cdot \cc(\OPT)$, or
        \item $\lossone \le \frac{\alf}{\bta} \cdot \cc(\OPT)$.
    \end{itemize}     
\end{lemma}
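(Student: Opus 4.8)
The plan is to derive a single upper bound on $\cc(\solone)$ of the form $\cc(\solone) \le 2\cc(\OPT) - 2\bta\cdot\lossone$, and then split into two cases according to the size of $\lossone$.

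First I would note that Boosted Execution is an execution of \BoostedModGW{}, hence a monotonic moat growing algorithm, so Lemma~\ref{lm:monotonic-forest-twice-growth} applies to its output forest $\solone$. Using that lemma together with the trivial bound $\sum_{S:\,v\notin S}\yys \le \sum_{S\subseteq\V}\yys$ (all terms are nonnegative), we get $\cc(\solone)\le 2\sum_{S\subseteq\V}\yys$. Next I would rewrite the total growth via the exclusive-assignment accounting of Lemma~\ref{lm:yy-using-rstar-and-loss}, namely $\sum_{S\subseteq\V}\yys = \sum_{v\in\V}\rstar_v + \lossone$, and then bound $\sum_{v\in\V}\rstar_v \le \cc(\OPT) - \winone$ using Lemma~\ref{lm:sum-rstar-less-copt-minus-win}. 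Finally, invoking the defining inequality of the local search, $\winone \ge (1+\bta)\cdot\lossone$ from Definition~\ref{def:total-win-loss}, I obtain
$$\sum_{S\subseteq\V}\yys \;\le\; \cc(\OPT) - \winone + \lossone \;\le\; \cc(\OPT) - (1+\bta)\lossone + \lossone \;=\; \cc(\OPT) - \bta\lossone,$$
and therefore $\cc(\solone) \le 2\cc(\OPT) - 2\bta\lossone$.

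With this inequality established, the dichotomy is immediate. If $\lossone \le \frac{\alf}{\bta}\cdot\cc(\OPT)$, the second bullet holds directly. Otherwise $\lossone > \frac{\alf}{\bta}\cdot\cc(\OPT)$, so $2\bta\lossone > 2\alf\cdot\cc(\OPT)$, and hence $\cc(\solone) \le 2\cc(\OPT) - 2\bta\lossone < (2-2\alf)\cdot\cc(\OPT)$, giving the first bullet.

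I do not expect a genuine obstacle here: the lemma is a bookkeeping consequence of results already in hand (the ``cost at most twice the growth'' bound, the exclusive-assignment identities, and the win/loss inequality of the local search). The only point needing care is keeping the executions straight — every quantity above ($\yys$, $\rstar$, $\winone$, $\lossone$) refers to Boosted Execution and the local search call in Line~\ref{exe:localsearch} — and noting that $\alf$ here is the parameter with $\alf \ge \frac{\alfbound}{2}$ fixed later in Section~\ref{sec:final}, so the conclusion is what is needed for the final analysis.
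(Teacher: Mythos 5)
Your proposal is correct and follows essentially the same approach as the paper: both begin with $\cc(\solone)\le 2\sum_{S\subseteq\V}\yy_S$ (Lemma~\ref{lm:monotonic-forest-twice-growth}), derive $\sum_{S\subseteq\V}\yy_S \le \cc(\OPT) - \bta\lossone$ via the win/loss inequality of Definition~\ref{def:total-win-loss}, and conclude with the same dichotomy. The only cosmetic difference is that you route through the exclusive assignment (Lemma~\ref{lm:yy-using-rstar-and-loss} and Lemma~\ref{lm:sum-rstar-less-copt-minus-win}), whereas the paper goes more directly via Corollary~\ref{cl:total-ys-based-on-win-loss} and Lemma~\ref{lm:yys_opt}; since Lemma~\ref{lm:sum-rstar-less-copt-minus-win} is itself derived from those, the two chains collapse to the same bound.
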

\begin{proof}
    If $\cc(\solone) \le (2 - 2\alf) \cdot \cc(\OPT)$, we are done. Otherwise, assume $\cc(\solone) > (2 - 2\alf) \cdot \cc(\OPT)$. Then,
    \begin{align*}
        \cc(\solone) 
        &\le 2 \sum_{S \subseteq \V} \yy_S 
        \tag{Lemma~\ref{lm:monotonic-forest-twice-growth}} \\
        &= 2\left(\sum_{S \subseteq \V} \yplus_S - \winone + \lossone \right)
        \tag{Corollary~\ref{cl:total-ys-based-on-win-loss}} \\
        &\le 2\left(\sum_{S \subseteq \V} \yplus_S - \bta \lossone \right)
        \tag{Definition~\ref{def:total-win-loss}} \\
        &\le 2\left(\cc(\OPT) - \bta \lossone \right)
        \tag{Lemma~\ref{lm:yys_opt}}.
    \end{align*}
    Combining this with the assumption $\cc(\solone) > (2 - 2\alf) \cdot \cc(\OPT)$, we get
    $$
        (2 - 2\alf) \cdot \cc(\OPT) < 2\left(\cc(\OPT) - \bta \lossone \right).
    $$
    Dividing both sides by $2$ and rearranging terms, we obtain
    $$
        \lossone < \frac{\alf}{\bta} \cdot \cc(\OPT).
    $$
    This completes the proof.
\end{proof}

Next, we show that if $\solone$ fails to achieve the desired approximation guarantee, then $\winone$ can be bounded in terms of the cost of the optimal solution.

\begin{lemma}\label{lm:lcl-win}
    Consider the solution $\solone$. Then either:
    \begin{itemize}
        \item $\cc(\solone) \le (2 - 2\alf) \cdot \cc(\OPT)$, or
        \item $\winone \le (\alf + \frac{\alf}{\bta}) \cdot \cc(\OPT)$.
    \end{itemize}     
\end{lemma}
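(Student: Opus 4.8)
The plan is to mirror the structure of Lemma~\ref{lm:lcl-loss}, but now bound $\winone$ instead of $\lossone$. As in that lemma, I would proceed by a dichotomy: either the solution $\solone$ already satisfies $\cc(\solone) \le (2 - 2\alf)\cdot\cc(\OPT)$ and we are done, or else $\cc(\solone) > (2 - 2\alf)\cdot\cc(\OPT)$, and we derive the stated bound on $\winone$ from this strict inequality.

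Assuming $\cc(\solone) > (2 - 2\alf)\cdot\cc(\OPT)$, the key chain of inequalities starts from Lemma~\ref{lm:monotonic-forest-twice-growth}, which gives $\cc(\solone) \le 2\sum_{S\subseteq\V}\yy_S$. Then I would apply Corollary~\ref{cl:total-ys-based-on-win-loss} to write $\sum_{S\subseteq\V}\yy_S = \sum_{S\subseteq\V}\yplus_S - \winone + \lossone$, and Lemma~\ref{lm:yys_opt} to bound $\sum_{S\subseteq\V}\yplus_S \le \cc(\OPT)$. This yields
\begin{align*}
    (2-2\alf)\cdot\cc(\OPT) < \cc(\solone) \le 2\bigl(\cc(\OPT) - \winone + \lossone\bigr),
\end{align*}
which rearranges to $\winone < \alf\cdot\cc(\OPT) + \lossone$. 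At this point I need a bound on $\lossone$; the natural move is to invoke Lemma~\ref{lm:lcl-loss}: in the current branch $\cc(\solone) > (2-2\alf)\cdot\cc(\OPT)$, so the first alternative of that lemma fails, forcing $\lossone \le \frac{\alf}{\bta}\cdot\cc(\OPT)$. Substituting gives $\winone < \alf\cdot\cc(\OPT) + \frac{\alf}{\bta}\cdot\cc(\OPT) = (\alf + \frac{\alf}{\bta})\cdot\cc(\OPT)$, exactly the claimed bound.

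I don't anticipate a real obstacle here — this is essentially a routine consequence of the preceding lemmas, and the only subtlety is making sure the case split is handled consistently so that when we are in the ``bad'' branch for $\winone$ we are also in the ``bad'' branch for $\lossone$ (which is automatic, since both branches are triggered by the same condition $\cc(\solone) > (2-2\alf)\cdot\cc(\OPT)$). One could alternatively fold the $\lossone$ bound directly into the computation rather than citing Lemma~\ref{lm:lcl-loss} as a black box, but reusing that lemma keeps the argument short and clean. The proof is just a few lines of arithmetic on top of Lemmas~\ref{lm:monotonic-forest-twice-growth},~\ref{lm:yys_opt},~\ref{lm:lcl-loss} and Corollary~\ref{cl:total-ys-based-on-win-loss}.
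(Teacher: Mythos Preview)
Your proposal is correct and follows essentially the same approach as the paper's proof: the same case split, the same chain of inequalities via Lemma~\ref{lm:monotonic-forest-twice-growth}, Corollary~\ref{cl:total-ys-based-on-win-loss}, and Lemma~\ref{lm:yys_opt}, and then the invocation of Lemma~\ref{lm:lcl-loss} to bound $\lossone$.
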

\begin{proof}
    If $\cc(\solone) \le (2 - 2\alf) \cdot \cc(\OPT)$, we are done. Otherwise, assume $\cc(\solone) > (2 - 2\alf) \cdot \cc(\OPT)$. Then:
    \begin{align*}
        (2 - 2\alf) \cdot \cc(\OPT) &< \cc(\solone) \\
        &\le 2 \sum_{S \subseteq \V} \yy_S 
        \tag{Lemma~\ref{lm:monotonic-forest-twice-growth}} \\
        &= 2 \left( \sum_{S \subseteq \V} \yplus_S - \winone + \lossone \right) 
        \tag{Corollary~\ref{cl:total-ys-based-on-win-loss}} \\
        &\le 2 \left( \cc(\OPT) - \winone + \lossone \right) 
        \tag{Lemma~\ref{lm:yys_opt}}.
    \end{align*}
    Dividing both sides by $2$ and rearranging terms gives:
    \begin{align*}
        \winone &< \lossone + \alf \cdot \cc(\OPT) \\
        &\le \left( \alf + \frac{\alf}{\bta} \right) \cdot \cc(\OPT), 
        \tag{Lemma~\ref{lm:lcl-loss}}
    \end{align*}
    which completes the proof.
\end{proof}

Finally, we bound the cost of $\solone$ by separately analyzing the assigned values of vertices in classes $\A$ and $\B$, as defined in Definition~\ref{def:classify_A_B}, and incorporating the bound on $\lossone$.

\begin{lemma}
\label{lm:solone-ub}
    For the cost of the solution $\solone$, either
    $$
    \cc(\solone) \le 2(1-\cone) \cdot \cc(\A) + 2\cc(\B) + 2\dfrac{\alf}{\bta} \cdot \cc(\OPT),
    $$
    or $\solone$ is a $(2 - 2\alf)$ approximation of the optimal solution.
\end{lemma}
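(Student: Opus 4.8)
The plan is to combine the ``solution cost is at most twice the total growth of active sets'' bound (Lemma~\ref{lm:monotonic-forest-twice-growth}, specialized to Boosted Execution) with the identity from Lemma~\ref{lm:yy-using-rstar-and-loss}, which writes $\sum_{S}\yy_S = \sum_{v\in\V}\rstar_v + \lossone$. So first I would write $\cc(\solone)\le 2\sum_{S}\yy_S = 2\sum_{v\in\V}\rstar_v + 2\lossone$. Then I split the assignment sum over the connected components of the optimal solution: since $\OPT$ induces a partition of $\V$ (as fixed in the Preliminaries), $\sum_{v\in\V}\rstar_v = \sum_{\opti\in\OPT}\sum_{v\in\opti}\rstar_v = \sum_{\opti\in\A}\sum_{v\in\opti}\rstar_v + \sum_{\opti\in\B}\sum_{v\in\opti}\rstar_v$, using $\OPT=\A\cup\B$ from Definition~\ref{def:classify_A_B}.

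Next, I would bound each of the two groups. For components in $\A$, Definition~\ref{def:classify_A_B} gives directly $\sum_{v\in\opti}\rstar_v \le (1-\cone)\cc(\otree)$, so summing over $\opti\in\A$ yields $\sum_{\opti\in\A}\sum_{v\in\opti}\rstar_v \le (1-\cone)\cc(\A)$ (writing $\cc(\A)$ for $\cc(\OPTA)$ as the paper does). For components in $\B$, I would invoke Lemma~\ref{lm:opt-lowerbound_rstar}, which gives $\sum_{v\in\opti}\rstar_v \le \cc(\otree)$ for every component; summing over $\opti\in\B$ gives $\sum_{\opti\in\B}\sum_{v\in\opti}\rstar_v \le \cc(\B)$. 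Plugging these in: $\cc(\solone) \le 2(1-\cone)\cc(\A) + 2\cc(\B) + 2\lossone$.

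Finally, I would dispatch the $\lossone$ term using Lemma~\ref{lm:lcl-loss}: either $\cc(\solone)\le(2-2\alf)\cc(\OPT)$, in which case the ``or'' branch of the statement holds and we are done, or $\lossone \le \frac{\alf}{\bta}\cc(\OPT)$, which substituted into the inequality above gives exactly $\cc(\solone) \le 2(1-\cone)\cc(\A) + 2\cc(\B) + 2\frac{\alf}{\bta}\cc(\OPT)$, the first branch. There is no real obstacle here — the lemma is a bookkeeping assembly of already-established facts; the only thing to be careful about is making sure the two cases of Lemma~\ref{lm:lcl-loss} line up with the two cases of the conclusion, which they do, and not double-counting: the $\A$/$\B$ split is a genuine partition so no component is bounded twice. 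The mild subtlety worth a sentence in the write-up is justifying that $\OPT$ being a partition lets us decompose $\sum_{v\in\V}\rstar_v$ without leftover terms (vertices in no nontrivial component are singletons contributing trees of cost $0$, so they are harmless).
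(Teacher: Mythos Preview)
Your proposal is correct and follows essentially the same route as the paper: apply Lemma~\ref{lm:monotonic-forest-twice-growth}, then Lemma~\ref{lm:yy-using-rstar-and-loss}, split over $\A$ and $\B$ using Definition~\ref{def:classify_A_B} and Lemma~\ref{lm:opt-lowerbound_rstar}, and finish with Lemma~\ref{lm:lcl-loss}. The paper's write-up is slightly terser (it assumes the non-approximation case upfront rather than at the end), but the logical content is identical.
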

\begin{proof}
    If $\solone$ is a $(2 - 2\alf)$ approximation, the proof is complete. Otherwise, we proceed as follows:
    \begin{align*}
        \cc(\solone) 
        &\le 2\sum_{S \subseteq \V} \yy_S \tag{Lemma~\ref{lm:monotonic-forest-twice-growth}} \\
        &= 2\sum_{v \in \V} \rstar_v + 2 \lossone \tag{Lemma~\ref{lm:yy-using-rstar-and-loss}} \\
        &\le 2\sum_{\opti \in \A} \sum_{v \in \opti} \rstar_v + 2\sum_{\opti \in \B} \sum_{v \in \opti} \rstar_v + 2 \lossone \\
        &\le 2(1 - \cone) \cdot \cc(\OPTA) + 2\sum_{\opti \in \B} \sum_{v \in \opti} \rstar_v + 2 \lossone \tag{Definition~\ref{def:classify_A_B}} \\
        &\le 2(1 - \cone) \cdot \cc(\OPTA) + 2 \cc(\OPTB) + 2 \lossone \tag{Lemma~\ref{lm:opt-lowerbound_rstar}} \\
        &\le 2(1 - \cone) \cdot \cc(\OPTA) + 2 \cc(\OPTB) + 2\dfrac{\alf}{\bta} \cdot \cc(\OPT) \tag{Lemma~\ref{lm:lcl-loss}}.
    \end{align*}
    This completes the proof.
\end{proof}

\section{Structural Results on Local Minima}
\label{sec:steiner_tree}
In this section, we analyze the properties of the minimal moat growing algorithms resulting from our local search. Our main property, called the \emph{claw property}, states that for any triple of actively connected vertices, the total growth of active sets separating them lower bounds the cost of any tree connecting them by a constant factor. We generalize this property to larger sets of vertices and use it to bound the cost of our algorithm for the Steiner Tree problem.

\subsection{Claw Property}
\label{sec:claw_property}

We begin by stating the claw property in the following lemma. Given the result of our local search algorithm, it provides bounds on how long triples of vertices that connect actively can remain separated based on their distances to any other vertex.

\begin{lemma}[Claw Property]\label{lm:xyz}
    Consider the final execution of \BoostedModGW{} during a call to~\LocalSearch{}~and let $u, v, w$ be vertices that connect actively in this execution. 
    Define $\sttm$ as the first moment when at least two of $u, v, w$ belong to the same active set,  
    and let $\sttmp$ be the first moment when all three belong to the same active set.  
    Then, 
    the following inequality holds for any vertex 
    $q$
    \[
    \sttm + \sttmp \leq \frac{\stx+\sty+\stz}{2} + \bta\frac{\min(\stx,\sty,\stz)}{2}. 
    \]
\end{lemma}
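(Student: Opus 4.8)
\textbf{Proof plan for the Claw Property (Lemma~\ref{lm:xyz}).}

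The plan is to prove the inequality by a contradiction argument: assume it fails for some vertex $q$, and exhibit a valuable boost action at $q$, contradicting the fact that the execution is a local minimum of \LocalSearch{}. The boost to consider is $\Boost = (q, \tmpp)$ for a suitably chosen time $\tmpp$ slightly larger than $\sttmp$ (or more precisely, chosen so that $q$ stays active long enough to ``pull together'' the three moats containing $u$, $v$, $w$). The intuition, matching Figure~\ref{fig:wheel}, is that keeping $q$ active lets a single moat centered at $q$ reach out along the shortest paths to $u$, $v$, $w$ and merge their active sets earlier than moment $\sttm$ and $\sttmp$; this reduces $\ybase$ (the win), while the cost of keeping $q$ active is controlled by $\tmpp$ (the loss), bounded via Lemma~\ref{lm:loss-less-than-new-time}.

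The key steps I would carry out, in order. First, set up the two executions: the current one on $\BIns$ and the boosted one on $\BIns' = \WithBoost(\BIns, \Boost)$, and use Corollary~\ref{cor:boost_activeset_refinement} to relate their active sets. Second, lower-bound the win: in the boosted execution, the moat containing $q$ grows for an extra duration, and since it colors the shortest-path edges towards $u$, $v$, and $w$, it connects to the active sets of these three vertices strictly before moments $\sttm$ and $\sttmp$. Here one has to argue carefully that in the original execution, before $\sttm$ (resp.\ $\sttmp$), the growth of the active sets separating $u,v,w$ contributes to $\ybase$ and can be ``saved'' — roughly, the two events (at times $\sttm$ and $\sttmp$) that currently require the separate moats to grow are replaced by a single merge at an earlier time through $q$. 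Quantitatively, the saved growth should be on the order of $\sttm + \sttmp$ minus the time it takes the $q$-moat plus the three moats to jointly cover the shortest paths of total length $\stx+\sty+\stz$; choosing the center-point picture, the win works out to roughly $\sttm+\sttmp - \tfrac{\stx+\sty+\stz}{2} - \tfrac{\text{(correction)}}{2}$. Third, upper-bound the loss: by Lemma~\ref{lm:loss-less-than-new-time}, $\Loss(\BIns,\Boost)\le \tmpp$, and $\tmpp$ can be taken close to $\tfrac{\stx+\sty+\stz}{4} + \tfrac{\min(\stx,\sty,\stz)}{4}$ or similar — the exact calibration is what produces the $\bta$ term. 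Fourth, combine: if the claimed inequality were violated by more than an $\varepsilon$ margin, then $\Win \ge (1+\bta)\Loss$ would hold for small enough $\varepsilon$, making $\Boost$ valuable and contradicting termination of \LocalSearch{}. Passing to the limit $\varepsilon\to 0$ gives the non-strict inequality. One also needs to check $\Boost$ lies in the polynomially-bounded boost-action subspace, or argue that the existence of \emph{any} valuable boost (even outside the subspace) is enough — the cleaner route is to note that the relevant merge times are exactly the recorded candidate times for $q$, so $\Boost$ is in the subspace.

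The main obstacle I expect is making the win estimate rigorous — specifically, showing that the growth ``saved'' by the boost is at least $\sttm + \sttmp$ minus the covering cost, rather than merely one of these two terms. The subtlety is that the moats of $u$, $v$, $w$ in the original execution may merge with each other in a complicated order and may absorb other vertices; one must track that keeping $q$ active reroutes \emph{both} the first pairwise merge (at $\sttm$) and the final three-way merge (at $\sttmp$) through the $q$-moat, so that the durations corresponding to both events are reclaimed. This likely requires a careful accounting of which portions of which edges get colored by the $q$-moat versus the original moats, together with a monotonicity/refinement argument (Lemma~\ref{lm:large_fingerprint_refinement} and Corollary~\ref{cor:boost_activeset_refinement}) ensuring the boosted execution's moats dominate the originals. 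The factor $\tfrac12$ in front of $\stx+\sty+\stz$ and the $\bta$-weighted $\min$ term should emerge from optimizing the choice of $\tmpp$ against this win/loss tradeoff, balancing the $q$-moat's growth against the three terminal moats', with the asymmetry (only the minimum distance gets the $\bta$ bonus) reflecting that the shortest of the three branches is the ``bottleneck'' that determines when the valuable-boost threshold $1+\bta$ is first met.
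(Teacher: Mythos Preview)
Your high-level strategy---boost $q$, compare win against loss, and use that no valuable boost exists at termination of \LocalSearch{}---matches the paper. The gap is in the choice of boost time $\tmpp$, and this is exactly the point on which the argument turns.

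You suggest $\tmpp$ ``slightly larger than $\sttmp$'' or near $\tfrac{\stx+\sty+\stz}{4}+\tfrac{\min(\stx,\sty,\stz)}{4}$. The paper instead takes $\tmpp=\stbstm$, the \emph{smallest} time at which $q$ actively reaches any one of $u,v,w$ in the boosted run, which satisfies $\stbstm\le\min(\stx,\sty,\stz)/2$. The observation you are missing is that once $q$ merges with (say) $u$, the merged set stays active on its own---because $u,v,w$ are actively connected and remain active through $\sttmp$---so $q$ need not be boosted any further to ``pull together'' all three. With your larger $\tmpp$, Lemma~\ref{lm:loss-less-than-new-time} only yields $\Loss\le\tmpp$, which is too weak; with the paper's choice it yields $\Loss\le\min(\stx,\sty,\stz)/2$, and this is precisely where the $\bta\cdot\min/2$ term comes from. (There is also a trivial preliminary case: if $\stbstm>\sttmp$ or no such time exists, then each of $\stx,\sty,\stz$ is at least $2\sttmp$ and the inequality is immediate.)

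The win side is also more explicit than your sketch. The paper bounds $\Win-\Loss=\sum_S\YIn_S-\sum_S\YOut_S$ directly (Lemma~\ref{lm:new-ys-based-on-win-loss}), splits both sums over active sets according to whether they meet $\{u,v,w,q\}$, and cancels the ``do not meet'' parts via Lemma~\ref{lm:boost-not-involved-active-set}. For the ``do meet'' parts: in the original run the three moats contribute at least $3\sttm+2(\sttmp-\sttm)=\sttm+2\sttmp$; in the boosted run, sets containing $q$ contribute at most $\sttmp$, and sets containing one of $u,v,w$ but not $q$ contribute at most $\stx/2+\sty/2+\stz/2$ (since each merges with $q$'s set by that respective time). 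This gives $\Win-\Loss\ge\sttm+\sttmp-(\stx+\sty+\stz)/2$. Non-valuability then reads $\Win-\Loss\le\bta\,\Loss\le\bta\cdot\min(\stx,\sty,\stz)/2$, and combining finishes the proof. No $\varepsilon$-limiting is needed, and the chosen $\stbstm$ is by construction one of the recorded merge times for $q$, hence in the boost-action subspace.
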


\begin{proof}
    Let $\stins$ be the final instance on which \BoostedModGW{} is run during the local search. Consider the boost action $\Boost=(q,\stbstm)$, where $\stbstm$ is the smallest value for which $q$ reaches $u$, $v$, or $w$ actively in the boosted instance.
    If no such value exists or if $\stbstm>\sttmp$, then $q$ does not reach $u$, $v$, or $w$ while all four grow actively for $\sttmp$. This implies that
    \[
    \stx \geq 2\sttmp, \quad \sty \geq 2\sttmp, \quad \text{and} \quad \stz \geq 2\sttmp.
    \]  
    Consequently, we obtain
    \begin{align*}
        \sttm + \sttmp &\leq 2\sttmp \\
        &\leq \frac{\stx}{2} + \frac{\sty}{2}
        \\
        &\leq \frac{\stx+\sty+\stz}{2} + \bta\frac{\min(\stx,\sty,\stz)}{2}.
    \end{align*}
    From this point forward, we consider the case where $\stbstm \leq \sttmp$.
    We observe that by construction, \[\stbstm \leq \frac{\min(\stx,\sty,\stz)}{2}.\] 
\begin{figure}[t]
    \centering
        \begin{tikzpicture}[scale=0.7]

\draw[White] (0,-2.9) -- (0, 1.4);
\def\dem{Red!70}
\def\demf{Red!20}
\def\demi{Blue!70}
\def\demif{Blue!20}
\def\demii{Green!70}
\def\demiif{Green!20}
\def\ter{White}
\def\col{Black!70}
\def\treeone{Purple}
\def\treetwo{RubineRed!70!Black}
\def\noder{0.07cm}

\def\tercl#1{%
  \ifcase#1
    White\or
    Goldenrod!20\or
    olive!20\or
    Green!20\or
    Sepia!20\or
    Plum!20\or
  \fi
}

\def\len{2}
\def\r{1.044}

\coordinate (A) at (0, 0);
\coordinate (B) at (-0.3*\len, -1*\len);
\coordinate (C) at (0.8*\len, -1*\len);
\coordinate (D) at (2.2*\len, -1*\len);

\foreach \i/\c/\cc in {A/\demi/\demif, B/\dem/\demf, C/\dem/\demf, D/\dem/\demf} {
    \draw[\c, 
    line width=0.3pt, 
    dash pattern=on 1.2pt off 0.4pt,
    pattern={
        Lines[angle=45, distance=2pt,  line width=0.3pt]%
    },
    pattern color=\cc
    ] (\i) circle (\r*\len/2);
}
\newcommand{\sml}[1]{\scalebox{0.7}{#1}}
\draw[\col] (A) -- (B);
\draw[\col] (A) -- (C);
\draw[\col] (A) -- (D);
\foreach \i/\c in {A/\ter, B/\col, C/\col, D/\col} {
    \draw[\col, fill= \c] (\i) circle(\noder);
}
\node[\col, above=-0pt] at (A) {\sml{$q$}};
\node[\col, below=-0pt] at (B) {\sml{$u$}};
\node[\col, below=-0pt] at (C) {\sml{$v$}};
\node[\col, below=-0pt] at (D) {\sml{$w$}};
    \end{tikzpicture}
    \caption{
An illustration of the moat-growing process on a  subrgraph in a boosted instance,  
where a boost is applied to vertex $q$. Before the boost, vertices $u$, $v$,  
and $w$ connect while active.  
The boost enables $q$ to reach $u$ while active, imposing constraints on the  
growth of sets containing $u$, $v$, or $w$ but not $q$, determined by the distance of $q$ to $u$, $v$, and $w$.
}
    \label{fig:claw}
\end{figure}     
Let $\stinsp$ be the boosted instance obtained after applying the boost $\Boost$.  
Figure \ref{fig:claw} illustrates a moment in the execution of \BoostedModGW{} on $\stinsp$, when $q$ reaches one of $u,v,w$. We compare the executions of \BoostedModGW{} on $\stins$ and $\stinsp$.

We use $\YIn$ to denote the $\ys$ values in the execution on $\stins$ and $\YOut$ for the values in the execution on $\stinsp$. 
To analyze the effect of the boost, we track the difference  
\[
\deltaY = \sum_{S \subseteq \V} \YIn_S - \sum_{S \subseteq \V} \YOut_S
\]  
throughout the executions. This allows us to demonstrate that $\Boost$ is a valuable boost action unless the statement of the lemma holds. Since $\Boost$ is included in the boost actions space, and the local search has already terminated, it must not be valuable, completing the proof.  
    
    First, we show that at any moment after \( \sttmp \), the increase in \(\sum_{S \subseteq \V} \YIn_S\) is at least as large as the increase in \(\sum_{S \subseteq \V} \YOut_S\), implying that \( \deltaY \) does not decrease after the moment \( \sttmp \). 
    To establish this,
    we observe that by Lemma~\ref{lm:boost-not-involved-active-set}, at any moment after \( \sttmp\), 
    each active set \( S \) in the run on \( \stinsp \) can be mapped 
    to a distinct active set \( S' \subseteq S\) at the same moment in the run on \( \stins \).

    Since $\deltaY$ does not decrease after moment $\sttmp$, we use $\styi_S$ and $\styo_S$ to denote the values of $\YIn$ and $\YOut$ at moment $\sttmp$, and bound $\deltaY$ using the fact that 
    \begin{align}
    \label{ineq:z-bound-deltay}
    \deltaY \geq \sum_{S\subseteq\V}\styi_S - \sum_{S\subseteq\V} \styo_S.    
    \end{align}
    
     We consider active sets including one of $\{u,v,w,q\}$
    separately from those that do not. 
    In the run on $\stins$, $u$, $v$, and $w$ are contained in three different active sets until moment $\sttm$, and two different active sets from moment $\sttm$ to $\sttmp$. Therefore, we have 
    \begin{align}
        \sum_{S\subseteq\V}\styi_S 
        &\geq \sum_{\substack{S\subseteq\V\\S\cap\{u,v,w,q\}=\emptyset}}\styi_S + \sum_{\substack{S\subseteq\V\\\lvert S\cap\{u,v,w\}\rvert=1}} \styi_S + \sum_{\substack{S\subseteq\V\\\lvert S\cap\{u,v,w\}\rvert=2}} \styi_S \nonumber
        \\
        &\geq \sum_{\substack{S\subseteq\V\\S\cap\{u,v,w,q\}=\emptyset}}\styi_S + 3\sttm + 2(\sttmp - \sttm)
        \nonumber\\
&=\sum_{\substack{S\subseteq\V\\S\cap\{u,v,w,q\}=\emptyset}}\styi_S + 2\sttmp + \sttm. \label{ineq:zin-lower-bound}
    \end{align}
    On the other hand, in the run on $\stinsp$, $q$ connects actively with $u$, $v$, or $w$ at a moment before $\sttmp$. Then, $u$, $v$, and $w$ are guaranteed to be in the same active set as $q$ after the moments $\frac{\stx}{2}$, $\frac{\sty}{2}$ and $\frac{\stz}{2}$ respectively. Sets containing $q$ can grow for at most $\sttmp$ which is the entire duration we consider. Overall, we get
\begin{align}
\sum_{S \subseteq \V} \styo_S 
&\leq 
\sum_{\substack{S \subseteq \V \\ S \cap \{u,v,w,q\} = \emptyset}} \styo_S 
+ \sum_{\substack{S \subseteq \V \setminus \{q\} \\ u \in S}} \styo_S 
+ \sum_{\substack{S \subseteq \V \setminus \{q\} \\ v \in S}} \styo_S 
+ \sum_{\substack{S \subseteq \V \setminus \{q\} \\ w \in S}} \styo_S 
+ \sum_{\substack{S \subseteq \V \\ q \in S}} \styo_S \nonumber\\
&\leq 
\sum_{\substack{S \subseteq \V \\ S \cap \{u,v,w,q\} = \emptyset}} \styo_S 
+ \frac{\stx}{2} 
+ \frac{\sty}{2} 
+ \frac{\stz}{2} 
+ \sttmp. \label{ineq:zout-upper}
\end{align}        

    Furthermore, we can once again use Lemma \ref{lm:boost-not-involved-active-set} to show that if $S$ is an active set in the run on $\stinsp$ that does not include $q$, there must exist a subset of $S$ that is active at the same moment in the run on $\stins$.
    This implies that 
    \begin{align}
    \sum_{\substack{S\subseteq\V\\S\cap\{u,v,w,q\}=\emptyset}}\styo_S \leq \sum_{\substack{S\subseteq\V\\S\cap\{u,v,w,q\}=\emptyset}}\styi_S \label{ineq:z-no-q-bound}
    \end{align}
    as at any moment before $\sttmp$, the active sets not including $u$, $v$, $w$, or $q$
    in the run on $\stinsp$ can be mapped to disjoint active sets not including these vertices in the run on $\stins$. 

    Now, we can combine the inequalities to show that
    \begin{align*}
        \sum_{S\subseteq\V} \YIn_S - \sum_{S\subseteq\V} \YOut_S &\geq \sum_{S\subseteq\V} \styi_S - \sum_{S\subseteq\V} \styo_S \tag{Equation \ref{ineq:z-bound-deltay}}
        \\
        &\geq \left(\sum_{\substack{S\subseteq\V\\S\cap\{u,v,w,q\}=\emptyset}}\styi_S + 2\sttmp + \sttm\right) - \left(\sum_{\substack{S\subseteq\V\\S\cap\{u,v,w,q\}=\emptyset}}\styo_S + \frac{\stx+\sty+\stz}{2} + \sttmp\right)
        \tag{Equations \ref{ineq:zin-lower-bound} and \ref{ineq:zout-upper}}
        \\
        &\geq (2\sttmp + \sttm) - (\frac{\stx+\sty+\stz}{2} + \sttmp) \tag{Equation \ref{ineq:z-no-q-bound}}\\
        &= \sttmp + \sttm - \frac{\stx+\sty+\stz}{2}.
    \end{align*}    
    On the other hand, we have 
        \begin{align*}
            \sum_{S\subseteq\V} \YIn_S - \sum_{S\subseteq\V} \YOut_S &= \Win(\stins,\Boost) - \Loss(\stins,\Boost)
        \end{align*}
        by Lemma \ref{lm:new-ys-based-on-win-loss}. Since $\Boost$ cannot be a valuable boost, we have
        \begin{align*}
            \sum_{S\subseteq\V} \YIn_S - \sum_{S\subseteq\V} \YOut_S &= \Win(\stins,\Boost) - \Loss(\stins,\Boost) \leq \bta\Loss(\stins,\Boost).
        \end{align*}

        Combining the two inequalities results in
        \begin{align*}
            \sttm+\sttmp-\frac{\stx+\sty+\stz}{2} \leq \bta\Loss(\stins,\Boost).
        \end{align*}
        which is equivalent to 
        \begin{align*}
            \sttm+\sttmp \leq \frac{\stx+\sty+\stz}{2} + \bta\Loss(\stins,\Boost).
        \end{align*}
        Finally, $\Loss(\stins,\Boost)$ can be upper bounded by \[\frac{\min(\stx,\sty,\stz)}{2}\]
        using Lemma \ref{lm:loss-less-than-new-time} and the fact that \[\stbstm \leq \frac{\min(\stx,\sty,\stz)}{2}
        .\] 
        Therefore, we have
        \[
        \sttm+\sttmp \leq \frac{\stx+\sty+\stz}{2} + \bta \cdot \frac{\min(\stx,\sty,\stz)}{2}.
        \]
\end{proof}

\subsection{Generalizing the Claw Property to Larger Sets}
\label{sec:claw_property_extension}
Our main result in this section is Lemma~\ref{lm:steiner-tree}, 
which gives a bound on the cost of the tree connecting 
any set of vertices that actively reach one another
in a minimal moat growing algorithm. 
This bound can be stated for any assignment satisfying the following definition:

\begin{definition}\label{def:priority-based}
    Given an execution of a monotonic moat growing and $S \subseteq \V$ such that vertices in $S$ connect while active, we say an assignment $\gr$ is ``priority-based on $S$'' if for any active set $S'$ with $\gr_{S', v} > 0$ for a vertex $v \in S$, $v=\argmax_{u\in S\cap S'}\priority_u$.
\end{definition}
\begin{lemma} \label{lm:steiner-tree}
    Consider the final execution of \BoostedModGW{} during a call to \LocalSearch{}. 
    Let $S$ be a subset of vertices such that $S$ is connected in $\OPT$ and the vertices in $S$ are connected while active in this execution.
    Then, for any assignment $\rstree$ for this execution that is priority-based on $S$, we have
    \[\frac{6}{5+\bta} \left(\sum_{v\in S}\rstree_v - \rstree_{\max}(S)\right) \le \cc(\acttree)\]
    where $\acttree$ is the minimal subtree of $\OPT$ connecting $S$.
\end{lemma}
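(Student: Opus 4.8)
The plan is to build a binary tree on top of $\acttree$, apply the claw property (Lemma~\ref{lm:xyz}) at every internal node with cleverly chosen "directions," and then sum all the resulting inequalities. First I would preprocess: replace $\acttree$ by a binary tree $\hat T$ in which every vertex of $S$ is a leaf, by repeatedly splitting high-degree vertices and inserting zero-cost edges; this does not change $\cc(\acttree)$, does not change which vertices connect actively (zero-cost edges are colored instantly), and the minimal subtree connecting $S$ in the new $\OPT$ is $\hat T$. Root $\hat T$ at an arbitrary leaf. For each internal node $x$ of $\hat T$, it has three "directions" (toward the parent, and toward each of its two children); in each direction $D$, let $\ell_D(x)$ be the leaf of $S$ in that direction that is \emph{closest} to $x$ (measured along $\hat T$), and let $d_D(x) = \dis(x, \ell_D(x))$ be that distance, which is at most the distance along the tree. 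Applying Lemma~\ref{lm:xyz} with $q = x$ and $u,v,w$ the three chosen leaves gives
\[
\reptim(x) + \reptim'(x) \;\le\; \frac{d_1(x) + d_2(x) + d_3(x)}{2} + \bta\,\frac{\min_i d_i(x)}{2},
\]
where $\reptim(x)$ is the first moment two of these three leaves share an active set and $\reptim'(x)$ the first moment all three do. (All three leaves are actively connected since $S$ is; $\stx,\sty,\stz$ in the lemma are exactly $d_1,d_2,d_3$.)

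Next I would set up the summation. Summing the right-hand sides over all internal $x$: because $\hat T$ is binary with $|S|$ leaves, it has $|S|-1$ internal nodes, hence $3(|S|-1)$ direction-terms of the form $d_D(x)/2$, plus $|S|-1$ terms $\bta\min_i d_i(x)/2$. The key geometric claim is that $\sum_{x} \big(d_1(x)+d_2(x)+d_3(x)\big) + \bta\sum_x \min_i d_i(x) \le (5+\bta)\cc(\hat T)$ — more precisely each edge of $\hat T$ is "charged" by the direction-terms at most a bounded number of times, because we always pick the \emph{closest} leaf in each direction, so the paths $x \to \ell_D(x)$ for a fixed direction nest and overlap each edge $O(1)$ times; the $\bta$-terms charge even less. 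I would prove this by a careful edge-charging argument, which is where most of the bookkeeping lives. For the left-hand side, I need $\sum_x \big(\reptim(x) + \reptim'(x)\big) \ge$ something like $6\big(\sum_{v\in S}\rstree_v - \rstree_{\max}(S)\big)$. The mechanism: fix a moment $\currenttime$; if $k$ active sets each contain at least one vertex of $S$, then since the assignment $\rstree$ is priority-based on $S$, at this moment the growth of each such active set is assigned (if at all) only to its highest-priority $S$-vertex, so at most $k$ of the values $\rstree_v$ are still increasing; equivalently at most $k$ values exceed $\currenttime$ \emph{among those that are "$\le \currenttime$-alive"} — and one must discount the single maximum. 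Simultaneously, with $k$ active sets meeting $S$, the binary-tree structure forces $3(k-1)$ of the quantities $\reptim(x),\reptim'(x)$ to still exceed $\currenttime$ (this is the combinatorial heart: contracting $\hat T$ along the current components leaves a forest with $\le k$ leaf-components, so $\ge k-1$ internal nodes are "unresolved," each contributing its $\reptim$ and $\reptim'$, roughly a factor 3 because $\reptim' \ge \reptim$ and of the three pairwise meetings...). Integrating the inequality $3(k-1) \le \frac{1}{2}\cdot(\text{number of unresolved } \reptim\text{-type quantities})$... over $\currenttime$ converts pointwise counts into the sums, yielding $\sum_x(\reptim(x)+\reptim'(x)) \ge 6(\sum_{v\in S}\rstree_v - \rstree_{\max}(S))$.

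Combining: $6\big(\sum_{v\in S}\rstree_v - \rstree_{\max}(S)\big) \le \sum_x(\reptim(x)+\reptim'(x)) \le \frac{5+\bta}{2}\cdot 2\cc(\hat T)$ — wait, dividing correctly, $\sum_x \text{RHS} \le \tfrac{1}{2}(5+\bta)\cc(\hat T)$ seems off by a factor of $2$; the intended statement rearranges to $\frac{6}{5+\bta}(\sum_v \rstree_v - \rstree_{\max}(S)) \le \cc(\hat T) = \cc(\acttree)$, so the right-hand side total must come out to exactly $(5+\bta)\cc(\hat T)$, meaning the direction-terms should sum to $5\cc(\hat T)$ (each edge charged total weight $5$ across the six half-distances and the min-terms) — I would verify this constant precisely in the edge-charging step. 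The main obstacle is exactly this twofold accounting: (i) on the cost side, proving the edge-by-edge charging bound with the sharp constant $5+\bta$, choosing "closest leaf per direction" so that overlaps telescope; and (ii) on the assigned-value side, the time-integration argument that at every moment the number of "active" $\reptim$-quantities dominates six times the number of still-growing assigned values (minus one for the max), which requires carefully relating the laminar family of active sets to the internal nodes of $\hat T$. Everything else — the binarization, instantiating Lemma~\ref{lm:xyz}, the final arithmetic — is routine.
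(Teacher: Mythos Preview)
Your overall architecture matches the paper's — binarize, apply the claw inequality at each internal node, edge-charge the right-hand sides, time-integrate the left — but your left-hand accounting is broken. You claim that with $k$ active sets intersecting $S$, at least $3(k-1)$ of the quantities $\reptim(x),\reptim'(x)$ exceed $\currenttime$; at $\currenttime=0$ with $k=|S|$ this asks for $3(|S|-1)$ out of only $2(|S|-1)$ such quantities, which is impossible. The factor $3$ cannot come from a pointwise count, and your own parenthetical already exposes the problem (you arrive at $k-1$ unresolved internal nodes contributing two quantities each, and then invoke ``roughly a factor $3$'' with no mechanism).

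The paper obtains the factor $3$ through an intermediate substitution you do not have. Define a single quantity $\reptim_v$ per internal node $v$: the first moment some leaf in $v$'s left subtree meets one in its right subtree. At the claw centred on a non-root internal $q$ with parent $p$ and sibling $w$, pick the three leaves as the closest leaf in each of $q$'s two child subtrees and the closest leaf in the \emph{sibling's} subtree (not merely the closest in the full parent direction). Because any two of these three leaves lie in different subtrees of either $q$ or $p$, one gets $\sttm \ge \min(\reptim_q,\reptim_p)$ and $\sttmp \ge \max(\reptim_q,\reptim_p)$, so each claw LHS is at least $\reptim_q+\reptim_p$. Summing over $q$ (with a separate direct bound at the root), each $\reptim_v$ is counted three times — once as itself and once as $\reptim_p$ in the inequality for each of its two children — giving $3\sum_{v}\reptim_v$ on the left. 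Now the time-integration is easy: with $k$ active sets meeting $S$ at moment $\currenttime$, at least $k-1$ of the $\reptim_v$ exceed $\currenttime$ (the left/right leaf pairs form a connected graph on $S$), while priority-basedness gives at most $k$ of the $\rstree_v$ exceeding $\currenttime$; integrating yields $\sum_v\reptim_v \ge \sum_{v\in S}\rstree_v - \rstree_{\max}(S)$. The sibling-subtree choice is also what makes the right-hand edge-charging close at $\frac{5+\bta}{2}\,\cc(\acttree)$: each closest-leaf distance $\repd_v$ is charged twice and each edge $\pare_v$ three times (plus the $\bta$ term), combined with the telescoping bound $\sum_{v}\repd_v \le \cc(\acttree)$.
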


To simplify our arguments, we normalize the structure of the tree \( \acttree \) without loss of generality. 
We can transform \( \acttree \) into a rooted perfect binary tree using the following operations:
\begin{itemize}
    \item adding zero-cost edges and splitting high-degree vertices,
    \item duplicating vertices and connecting each duplicate to its original via a zero-cost edge, and
    \item bypassing degree-2 vertices by replacing their incident edges with a single edge.
\end{itemize}

These operations allow us to ensure that the leaves of the transformed tree form a set \( S' \) consisting of \( S \) along with possible duplicates of vertices in \( S \). 
Moreover, all leaves can be assumed to be attached to their parents via zero-cost edges. 
These modifications do not affect the behavior of \BoostedModGW{}, since duplicates are immediately connected to their originals during the moat growing process. Additionally, vertices in $S'$ would connect actively. 

Therefore, throughout the remainder of this section, we assume without loss of generality that \( \acttree \) is a rooted perfect binary tree with leaf set \( S \), and that each leaf is attached to its parent via a zero-cost edge.

Let \( \vi \) denote the set of internal vertices in \( \acttree \), and let \( \Root \) be its root. 
For any non-root vertex \( v \), we use \( \pare_v \) to represent the cost of the edge connecting \( v \) to its parent.
For each vertex \( v \) in \( \acttree \), let \( \rep_v \) be the closest leaf within its subtree, 
and let \( \repd_v \) be the distance from \( v \) to \( \rep_v \) in \( \acttree \).
Furthermore, for each internal vertex \( v \), we define \( \reptim_v \) as the first moment during the final execution of \BoostedModGW{} 
in the local search when a leaf from its left subtree and a leaf from its right subtree become connected.

As a first step toward proving Lemma~\ref{lm:steiner-tree}, we establish a basic bound on the cost of \( T_S \) in terms of the \( \repd_v \) values associated with its internal vertices.

\begin{lemma} \label{lm:st-dsum}
    We have
    $$\sum_{v\in\vi\setminus\RSET} \repd_v \leq \cc(\acttree).$$
\end{lemma}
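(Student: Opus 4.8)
The claim is that $\sum_{v \in \vi \setminus \RSET} \repd_v \le \cc(\acttree)$, where $\acttree$ is a rooted perfect binary tree with leaf set $S$, $\vi$ its internal vertices, $\repd_v$ the distance from $v$ to the closest leaf in its subtree, and zero-cost edges attach leaves to their parents. The natural approach is a charging argument: assign to each internal non-root vertex $v$ a path of edges in $\acttree$ of total cost at least $\repd_v$, in such a way that the paths used for distinct internal vertices are edge-disjoint. Then summing gives $\sum_v \repd_v \le \cc(\acttree)$.

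The concrete construction I would use: for each internal vertex $v$, the value $\repd_v$ is realized by the path from $v$ down to its nearest leaf $\rep_v$. The key observation is that $\rep_v$ lies in one of the two subtrees rooted at the children of $v$; let $c_v$ be that child. I would charge $\repd_v$ to the single edge $(v, c_v)$ from $v$ to that child, \emph{plus} recursively the path already charged below $c_v$ — but to keep things disjoint it is cleaner to set up the following. Process vertices top-down. For each internal vertex $v$, among its two children pick the one, say $c_v$, whose subtree contains $\rep_v$; note $\repd_v = \pare_{c_v} + \repd_{c_v}$ if $c_v$ is internal, or $\repd_v = \pare_{c_v}$ if $c_v$ is a leaf (distance to $\rep_v$ through $c_v$). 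The trouble is the recursive term $\repd_{c_v}$ overlaps with the charge for $c_v$ itself. So instead I would argue by induction on the tree structure directly: prove that for every internal vertex $r$, $\sum_{v \in \vi(r) \setminus \{r\}} \repd_v \le \cc(\acttree_r) - \repd_r$, where $\acttree_r$ and $\vi(r)$ denote the subtree rooted at $r$ and its internal vertices. Applying this at $r = \Root$ gives $\sum_{v \in \vi \setminus \RSET} \repd_v \le \cc(\acttree) - \repd_{\Root} \le \cc(\acttree)$.

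For the induction: the base case is $r$ whose both children are leaves, so $\vi(r) = \{r\}$, the left side is $0$, and $\cc(\acttree_r) - \repd_r = \cc(\acttree_r) - 0 = \cc(\acttree_r) \ge 0$ (both incident edges have cost $0$ anyway). For the inductive step, let $r$ have children $x, y$ with subtrees $\acttree_x, \acttree_y$, so $\cc(\acttree_r) = \pare_x + \pare_y + \cc(\acttree_x) + \cc(\acttree_y)$, and $\repd_r = \min(\pare_x + \repd_x, \pare_y + \repd_y)$ (interpreting $\repd$ of a leaf as $0$). Write $\vi(r) \setminus \{r\} = (\vi(x)) \cup (\vi(y))$ when $x,y$ are internal (dropping leaf children contributes nothing). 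Apply the inductive hypothesis to each internal child: $\sum_{v \in \vi(x)} \repd_v \le \cc(\acttree_x) - \repd_x$ (this version includes $x$ itself since the hypothesis form is "all internal vertices in the subtree including the root gives $\le \cc - \repd$"? — I need to restate cleanly). The cleanest invariant is: for every vertex $r$ of $\acttree$ (leaf or internal), $\sum_{v \in \vi(r)} \repd_v \le \cc(\acttree_r) - \repd_r$, where $\vi(r) = \emptyset$ and $\repd_r = 0$ and $\cc(\acttree_r)=0$ when $r$ is a leaf. Then for internal $r$ with children $x,y$: $\sum_{v \in \vi(r)} \repd_v = \repd_r + \sum_{v \in \vi(x)} \repd_v + \sum_{v \in \vi(y)} \repd_v \le \repd_r + (\cc(\acttree_x) - \repd_x) + (\cc(\acttree_y) - \repd_y)$. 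Since $\repd_r \le \pare_x + \repd_x$ and $\repd_r \le \pare_y + \repd_y$, adding these two gives $2\repd_r \le \pare_x + \pare_y + \repd_x + \repd_y$, so $\repd_r + (-\repd_x - \repd_y) \le \pare_x + \pare_y - \repd_r$, whence the bound becomes $\le \cc(\acttree_x) + \cc(\acttree_y) + \pare_x + \pare_y - \repd_r = \cc(\acttree_r) - \repd_r$, completing the induction. Taking $r = \Root$ and dropping the nonnegative $\repd_{\Root}$ term (actually $\repd_{\Root} \ge 0$ so $\cc(\acttree) - \repd_{\Root} \le \cc(\acttree)$) yields the lemma.

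The main obstacle — and the only real subtlety — is getting the inductive invariant stated in exactly the right form so that the $\repd_r$ term from the root telescopes correctly against the $-\repd_x, -\repd_y$ terms from the children via the two inequalities $\repd_r \le \pare_{x} + \repd_{x}$ and $\repd_r \le \pare_{y} + \repd_{y}$; everything else is bookkeeping over the perfect-binary-tree structure and the fact that leaf edges have zero cost (which is not even needed for this particular lemma, though it simplifies the base case). I would also remark that $\vi \setminus \RSET$ versus $\vi$ is a harmless distinction since we just drop the $\repd_{\Root} \ge 0$ term at the end.
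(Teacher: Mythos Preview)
Your proof is correct and follows essentially the same inductive framework as the paper: strengthen the claim to $\sum_{v \in \vi(r)} \repd_v \le \cc(\acttree_r) - (\text{some leaf-distance at }r)$ and induct on subtrees. The only difference is that the paper subtracts the \emph{farthest}-leaf distance $\fard_r$ (a tighter invariant, handled by a WLOG asymmetric case analysis), while you subtract the \emph{closest}-leaf distance $\repd_r$ and close the induction via the symmetric averaging $2\repd_r \le (\pare_x + \repd_x) + (\pare_y + \repd_y)$; both invariants suffice for the lemma.
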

\begin{proof}
    To prove the claim, we use induction to show a slightly stronger statement: 
    for each vertex \( v \), let \( T_v \) be the subtree of \( \acttree \) rooted at \( v \), and let \( \fard_v \) denote the length of the path from \( v \) to the farthest leaf from \( v\) in \( T_v \).  
    Then, 
    \[
    \sum_{u\in T_V} \repd_u \leq \cc(T_v) - \fard_v.\]
        
    In the base case, tree $T_v$ is a leaf and both sides of the inequality are zero, so the claim holds trivially.
    
Suppose the claim holds for the left and right children \( \ell \) and \( r \) of a vertex \( v \).  
Then, we have:
    \begin{align*}
    \sum_{u \in T_v} \repd_u &= \sum_{u \in T_\ell} \repd_u + \sum_{u \in T_r} \repd_u + \repd_v \\
    &\leq \cc(T_\ell) - \fard_\ell + \cc(T_r) - \fard_r + \repd_v 
    \tag{Induction Hypothesis} \\
    &= \cc(T_v) - \fard_\ell - \fard_r - \pare_\ell - \pare_r + \repd_v 
    \tag{\( \cc(T_v) = \cc(T_\ell) + \cc(T_r) + \pare_\ell + \pare_r \)}
\end{align*}

Now, note that the farthest leaf from \( v \) lies in either the left or right subtree.  
Without loss of generality, assume it lies in the left subtree, so that \( \fard_v = \fard_\ell + \pare_\ell \).  
Then
\begin{align*}
\sum_{u \in T_v} \repd_u 
&\leq \cc(T_v) - \fard_\ell - \pare_\ell - \fard_r - \pare_r + \repd_v \\
&= \cc(T_v) - \fard_v - \fard_r - \pare_r + \repd_v \\
&\leq \cc(T_v) - \fard_v
\end{align*}
where the last inequality uses the fact that \( \repd_v \leq \fard_r + \pare_r \), 
since \( \repd_v \) is the distance from \( v \) to its closest leaf in \( T_v \), and \( \fard_r + \pare_r \) is the distance from \( v \) to some leaf in \( T_r \subseteq T_v \).

Applying this bound at the root of \( \acttree \), and noting that \( \fard_{\Root} \geq 0 \), we obtain:
\[
\sum_{v \in \vi \setminus \RSET} \repd_v \leq \sum_{v \in T_{\Root}} \repd_v \leq \cc(\acttree).
\]
\end{proof}

To prove Lemma~\ref{lm:steiner-tree}, we first establish a bound that relates the sum of \( \reptim_v \) values over the internal vertices of \( \acttree \) to the cost of the tree \( \cc(\acttree) \) in Lemma~\ref{lm:timetree}.  
We then show that this same sum also serves as an upper bound on the sum of assignment values \( \rstree \) in Lemma~\ref{lm:st-rsum}.

\begin{lemma} \label{lm:timetree}
    We have
    $$3\sum_{v\in\vi} \reptim_v \leq \frac{5+\bta}{2} \cc(\acttree).$$
\end{lemma}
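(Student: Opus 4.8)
The idea is to apply the claw property (Lemma~\ref{lm:xyz}) at each internal vertex of the binary tree $\acttree$ and sum the resulting inequalities. The key combinatorial fact that must be established is that, at any moment $\tau$ during the final execution of \BoostedModGW{}, the number of internal vertices $v$ with $\reptim_v > \tau$ is controlled: if $k$ active sets contain at least one leaf of $\acttree$ at time $\tau$, then there are exactly $k-1$ internal vertices $v$ whose left and right subtrees have not yet been joined, i.e.\ with $\reptim_v > \tau$. (This is because the leaves lying in a single active set span a subtree-region whose internal ``merge structure'' is a forest; contracting gives that $k$ groups correspond to $k-1$ pending joins across the binary tree $\acttree$.) Summing $\sum_{v \in \vi} \reptim_v = \int_0^\infty |\{v : \reptim_v > \tau\}|\, d\tau$, I would pair this with an integral over the growth of active sets intersecting the leaf set, but more directly I would invoke the claw property at each internal vertex $v$ with its two children directions and the third direction toward the root, choosing the closest leaves.

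More concretely, for each internal vertex $v$ (that is not the root and has a sibling structure), I would apply Lemma~\ref{lm:xyz} with the triple of vertices $(\rep_\ell, \rep_r, \rep_{s})$ where $\ell, r$ are $v$'s children and $s$ is the nearest leaf reachable from $v$ going ``up'' (through $v$'s parent), taking $q = v$ as the center vertex. The claw property then yields $\sttm_v + \reptim_v \le \frac{\repd_\ell + \pare_\ell + \repd_r + \pare_r + (\text{up-distance})}{2} + \bta \cdot \frac{\min(\cdots)}{2}$, where $\sttm_v$ is the first moment two of these three leaves share an active set. Summing over all internal vertices, the left-hand side contributes $\ge 2\sum_{v\in\vi}\reptim_v$ worth of terms once I account for the fact that each $\reptim_v$ appears as $\sttmp$ at $v$ itself and also gets counted (via the $\sttm$ terms, which are $\reptim$ of ancestors) at children — this is where the factor $3$ emerges, giving $3\sum_{v\in\vi}\reptim_v$ on the left. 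On the right-hand side, the distance terms $\repd_v, \pare_v$ each get summed; using Lemma~\ref{lm:st-dsum} to bound $\sum \repd_v \le \cc(\acttree)$ and the trivial identity $\sum_{v\ne\Root}\pare_v = \cc(\acttree)$ (every edge of $\acttree$ is the parent-edge of exactly one vertex), plus the $\bta$-term which is bounded by $\frac{\bta}{2}\cc(\acttree)$ since $\min(\cdots) \le \repd_v + \pare_v \le$ some such local quantity, the right-hand side totals at most $\frac{5+\bta}{2}\cc(\acttree)$. The coefficient $\tfrac{5}{2} = \tfrac12 + \tfrac12 + \tfrac32$ splits as: $\tfrac12$ from the two child-distances, $\tfrac12$ from the up-distance, and an extra $\tfrac32$ bookkeeping factor from how often each edge is charged across the three directions summed over all vertices.

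The main obstacle I anticipate is the careful double-counting argument: making precise which leaf is chosen in each of the three directions at each internal vertex, verifying that the $\sttm_v$ terms on the left genuinely rearrange to give $3\sum \reptim_v$ (rather than $2\sum$ or something with boundary corrections at the root), and ensuring the right-hand side distance terms telescope or sum correctly to multiples of $\cc(\acttree)$ without overcounting any edge more than the claimed number of times. In particular the role of the root vertex — which has no ``up'' direction — and the leaves attached by zero-cost edges need to be handled as special cases, and I expect the cleanest route is to phrase the whole thing as an amortized charging scheme over moments $\tau$, showing that at each moment $\tau$ the instantaneous rate of increase of $3\sum_{v:\reptim_v>\tau} 1$ is matched against the instantaneous contribution to $\frac{5+\bta}{2}\cc(\acttree)$ coming from the claw inequalities active at that moment, which is exactly the $3(k-1)$-versus-active-set-count statement mentioned in the overview (Section~\ref{sec:overview_techniques}).
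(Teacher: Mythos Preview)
Your high-level strategy matches the paper's: apply the claw property (Lemma~\ref{lm:xyz}) at each non-root internal vertex $q$, using as the three leaves the representatives $\rep_u,\rep_v$ of $q$'s two children and $\rep_w$ of $q$'s sibling (which is exactly your ``nearest leaf going up''), then sum and bound the right-hand side via Lemma~\ref{lm:st-dsum} and $\sum_v \pare_v \le \cc(\acttree)$.

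However, there is a genuine gap in how you extract the factor $3$ on the left. You write that the claw inequality yields ``$\sttm_v + \reptim_v$'' on the left and that the $\sttm$ terms ``are $\reptim$ of ancestors''. Neither is correct as stated: $\sttmp$ (the moment all three of $\rep_u,\rep_v,\rep_w$ are joined) is \emph{not} equal to $\reptim_q$, and $\sttm$ is not equal to any single $\reptim$. The step you are missing is the inequality
\[
\sttm + \sttmp \;\ge\; \min(\reptim_q,\reptim_p) + \max(\reptim_q,\reptim_p) \;=\; \reptim_q + \reptim_p,
\]
where $p$ is the parent of $q$: before $\min(\reptim_q,\reptim_p)$ none of the three leaves can be joined, and once all three are joined both $\reptim_q$ and $\reptim_p$ have passed. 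With this in hand, summing over all non-root internal $q$ (plus a separate two-leaf bound at the root and the trivial bound $\reptim_p\le 0$ when $q$ is a leaf, since leaves are attached by zero-cost edges) makes each $\reptim_v$ appear exactly three times: once as $\reptim_q$ in its own inequality and once as $\reptim_p$ in each of its two children's inequalities. Your description of the counting never isolates this inequality, and without it the ``$3$'' does not follow.

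On the right-hand side, your sketch is also too loose. The paper bounds the $\bta$-term via $\min(a_1,a_2,a_3)\le (a_1+a_2)/2$ applied to the two child distances, giving the per-vertex inequality
\[
\reptim_q + \reptim_p \;\le\; \tfrac{1}{2}\bigl(\repd_u+\pare_u+\repd_v+\pare_v+\repd_w+\pare_q+\pare_w\bigr) + \tfrac{\bta}{4}\bigl(\repd_u+\pare_u+\repd_v+\pare_v\bigr).
\]
After summing, each non-root $\repd_v+\pare_v$ appears with coefficient $\tfrac{2+\bta}{4}$ (from its parent's inequality) and each non-root internal $v$ contributes an extra $\tfrac{\repd_v+2\pare_v}{2}$ (from its sibling's and its own inequality), yielding $\tfrac{4+\bta}{4}\sum\repd_v + \tfrac{6+\bta}{4}\sum\pare_v \le \tfrac{5+\bta}{2}\cc(\acttree)$. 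Your breakdown ``$\tfrac{5}{2}=\tfrac12+\tfrac12+\tfrac32$'' does not correspond to this accounting. Finally, the amortized ``$3(k-1)$'' argument you mention at the end is the mechanism behind Lemma~\ref{lm:st-rsum}, not this lemma.
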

\begin{proof}
\begin{figure}[t]
    \centering
        \begin{tikzpicture}[scale=0.7]
\def\dem{Red!70}
\def\demf{Red!20}
\def\demi{Blue!70}
\def\demif{Blue!20}
\def\demii{Green!70}
\def\demiif{Green!20}
\def\ter{White}
\def\col{Black!70}
\def\treeone{Purple}
\def\treetwo{RubineRed!70!Black}
\def\noder{0.1cm}

\def\tercl#1{%
  \ifcase#1
    White\or
    Goldenrod!20\or
    olive!20\or
    Green!20\or
    Sepia!20\or
    Plum!20\or
  \fi
}

\def\edge{1}
\def\height#1{%
  \ifcase#1
    1\or
    1\or
    0.5\or
    0.25\or
    0.25\or
    0.25\or
    0.25\or
  \fi
}
\def\total{3.5}
\def\width#1{%
  \ifcase#1
    1.118\or
    1.118\or
    0.577\or
    0.288\or
    3\or
    3.25\or
    3.5\or
  \fi
}
\pgfmathsetseed{34}
\def\num{%
  \pgfmathrandominteger{\temp}{1}{2}
  \pgfmathsetmacro{\rand}{\temp*2-3}
}

\def\edge{2}
\def\r{1.044}

\coordinate (P) at (0, 0);

\pgfmathsetmacro{\dx}{sqrt(4*\height{0}*\height{0}/3)/2}
\coordinate (Q) at ($(P) + (-\dx*\edge, -
\height{0}*\edge)$);
\coordinate (W) at ($(P) + (\dx*\edge, -
\height{0}*\edge)$);

\pgfmathsetmacro{\dx}{sqrt(4*\height{1}*\height{1}/3)/2}
\coordinate (U) at ($(Q) + (-\dx*\edge, -
\height{1}*\edge)$);
\coordinate (V) at ($(Q) + (\dx*\edge, -
\height{1}*\edge)$);
\coordinate (W1) at ($(W) + (\dx*\edge, -
\height{1}*\edge)$);

\coordinate (U1) at (U);
\coordinate (V1) at (V);

\foreach \i in {2,...,6} {
    \pgfmathsetmacro{\dx}{sqrt(4*\height{\i}*\height{\i}/3)/4}
    \pgfmathsetmacro{\prev}{\i-1}
    \foreach \j in {U, V, W} { 
        \num
        \coordinate (\j\i) at ($(\j\prev) + (\rand*\dx*\edge,-\height{\i}*\edge)$);
    }
}

\def\mar{0.2}
\coordinate (UVMID) at ($(U)!0.5!(V)$);
\coordinate (UVMID) at ($(UVMID) - \edge*(0, \total-\height{0}-\height{1}+\mar)$);

\coordinate (UMID) at ($(U) - \edge*(0, \total-\height{0}-\height{1}+\mar)$);
\coordinate (URIGHT) at ($(UVMID) - (0.6, 0)$);
\coordinate (ULEFT) at ($(URIGHT)!2!(UMID) - (1.2, 0)$);

\coordinate (VMID) at ($(V) - \edge*(0, \total-\height{0}-\height{1}+\mar)$);
\coordinate (VLEFT) at ($(UVMID) - (0.5, 0)$);
\coordinate (VRIGHT) at ($(VLEFT)!2!(VMID) - (0.1, 0)$);

\coordinate (WMID) at ($(W1) - \edge*(0, \total-\height{0}-\height{1}+\mar)$);
\coordinate (WLEFT) at ($(VRIGHT) + (0.1, 0)$);
\coordinate (WRIGHT) at ($(WLEFT)!2!(WMID) + (3.5, 0)$);

\def\UC{Red}
\def\WC{Blue}
\def\VC{Green}
\def\normal{1.5pt}
\def\thin{1.2pt}
\def\verythin{0.1pt}

\foreach \j/\c in {W/\WC, V/\VC, U/\UC} {
       \draw[\c, line width=\thin] (\j1) \foreach \i in {2,...,6} { -- (\j\i)};
}

\draw[\col] (P) -- ++(+0.5, +0.5) node[above right] {$\iddots$};
\draw[line width=\verythin] (URIGHT) -- (U) -- (ULEFT) -- cycle;
\draw[line width=\verythin] (VRIGHT) -- (V) -- (VLEFT) -- cycle;
\draw[line width=\verythin] (WRIGHT) -- (W) -- (WLEFT) -- cycle;
\draw[\WC, line width=\normal] (W) -- (P) -- (Q);
\draw[\UC, line width=\normal] (U) -- (Q);
\draw[\VC, line width=\normal] (Q) -- (V);
\draw[\WC, line width=\thin] (W) -- (W1) -- (W2);

\foreach \i in {P, Q, W, U, V} {
    \draw[\col, fill= \ter] (\i) circle(\noder);
}
\foreach \i in {W6, U6, V6} {
    \draw[\col, fill= \col] (\i) circle(\noder);
}
\node[\col, above left=-2pt] at (P) {$p$};
\node[\col, above left=-2pt] at (Q) {$q$};
\node[\col, above left=-2pt] at (U) {$u$};

\node[\col, above right=-2pt] at (V) {$v$};
\node[\col, above right=-2pt] at (W) {$w$};

\node[\col, left=-1pt] at (W6) {$\rep_{w}$};
\node[\col, left=-1pt] at (V6) {$\rep_{v}$};
\node[\col, left=-1pt] at (U6) {$\rep_{u}$};

    \end{tikzpicture}
    \caption{
An illustration of a section of the tree \( \acttree \) connecting its leaves in \( S \), focused on the internal vertex \( q \). 
Leaves \( \rep_u \) and \( \rep_v \) cannot connect before time \( \reptim_q \), while \( u \) and \( w \) cannot connect before \( \reptim_p \).  
Applying the claw property to \( q \), \( \rep_u \), \( \rep_v \), and \( \rep_w \) yields a bound on \( \reptim_p + \reptim_q \) in terms of the costs of the paths from the leaves to \( q \).  
    }
    \label{fig:claw-binary}
\end{figure} For each non-root internal vertex \( q \), let \( u \) and \( v \) be its children, \( w \) its sibling, and \( p \) its parent, as illustrated in Figure~\ref{fig:claw-binary}.  
Considering the leaves \( \rep_u \), \( \rep_v \), and \( \rep_w \), and applying Lemma~\ref{lm:xyz}, we obtain the bound:
\begin{align*}
    \sttm + \sttmp 
    &\leq \frac{d(q, \rep_u) + d(q, \rep_v) + d(q, \rep_w)}{2} 
    + \bta \cdot \frac{\min\big(d(q, \rep_u), d(q, \rep_v), d(q, \rep_w)\big)}{2},
\end{align*}
where \( \sttm \) denotes the first moment when any pair in \( \{\rep_u, \rep_v, \rep_w\} \) becomes connected and
\( \sttmp \) denotes the first moment when all three are connected.

Now, before the moment \( \min(\reptim_q, \reptim_p) \), none of the leaves in the set \( \{\rep_u, \rep_v, \rep_w\} \) can be connected, so \( \sttm \geq \min(\reptim_q, \reptim_p) \).  

Additionally, since all three of \( \rep_u \), \( \rep_v \), and \( \rep_w \) are connected at moment \( \sttmp \), it follows that \( \reptim_q \leq \sttmp \) and \( \reptim_p \leq \sttmp \), which means \( \sttmp \geq \max(\reptim_q, \reptim_p) \).  
Combining these bounds shows that
\[
\sttm + \sttmp \geq \min(\reptim_q, \reptim_p) + \max(\reptim_q, \reptim_p) = \reptim_q + \reptim_p.
\]
Substituting this into the earlier bound yields
\begin{align*}
    \reptim_q + \reptim_p 
    &\leq \frac{d(q, \rep_u) + d(q, \rep_v) + d(q, \rep_w)}{2} 
    + \bta \cdot \frac{\min\big(d(q, \rep_u), d(q, \rep_v), d(q, \rep_w)\big)}{2}.
\end{align*}
Next, we use the following bounds on the distances from \( q \) to leaves \( \{\rep_u, \rep_v, \rep_w\} \)
\[
d(q, \rep_v) \leq \repd_v + \pare_v, \quad
d(q, \rep_u) \leq \repd_u + \pare_u, \quad
d(q, \rep_w) \leq \repd_w + \pare_q + \pare_w,
\]
to show that
\begin{align*}
    \reptim_q + \reptim_p 
    &\leq \frac{(\repd_u + \pare_u) + (\repd_v + \pare_v) + (\repd_w + \pare_q + \pare_w)}{2} 
    + \bta \cdot \frac{\min\big( \repd_u + \pare_u,\, \repd_v + \pare_v,\, \repd_w + \pare_q + \pare_w \big)}{2}.
\end{align*}
Lastly, since \( \min(a_1, a_2, a_3) \leq \frac{a_1 + a_2}{2} \) for any values \( a_1, a_2, a_3 \), we obtain the following bound:
\begin{align}\label{ineq:tpq}
    \reptim_q + \reptim_p 
    &\leq \frac{\repd_u + \pare_u + \repd_v + \pare_v + \repd_w + \pare_q + \pare_w}{2} 
    + \bta \cdot \frac{\repd_u + \pare_u + \repd_v + \pare_v}{4}.
\end{align}

Let \( \ell \) and \( r \) be the children of the root and 
consider the leaves \(\rep_\ell\) and \(\rep_r\). 
These leaves belong to separate active sets and grow independently until at least moment \( \reptim_\Root \),  
during which they both contribute to coloring the path between them.  
Therefore, 
\[
\dis(\rep_\ell, \rep_r) \geq 2\reptim_\Root.
\]
On the other hand, their distance is at most the length of the path connecting them in the tree $\acttree$:
\[
\dis(\rep_\ell, \rep_r) \leq \repd_\ell + \pare_\ell + \repd_r + \pare_r.
\]
Consequently, we obtain
\begin{align*}
    \reptim_\Root 
    &\leq \frac{\repd_\ell + \pare_\ell + \repd_r + \pare_r}{2} \\
    &\leq \frac{\repd_\ell + \pare_\ell + \repd_r + \pare_r}{2} 
    + \bta \cdot \frac{\repd_\ell + \pare_\ell + \repd_r + \pare_r}{4},
\end{align*}
where the additional term with coefficient \( \bta \) is included to align with the structure of inequality~\eqref{ineq:tpq}.

Additionally, for each leaf \( v \) with parent \( p \), we have \( \reptim_p \leq 0 \),  
since leaves in \( \acttree \) are connected to their parent by zero-cost edges and are thus immediately connected to their siblings.

Now, summing inequality~\eqref{ineq:tpq} over all non-root internal vertices \( q \), and including the bounds for the root and leaf vertices described above, yields the following overall bound:
\begin{align*}
    3 \sum_{v \in \vi} \reptim_v
    &\leq \sum_{v \in \V(\acttree) \setminus \RSET} \frac{2 + \bta}{4} (\repd_v + \pare_v)
    + \sum_{v \in \vi \setminus \RSET} \frac{\repd_v + 2\pare_v}{2}.
\end{align*}
Here, each term \( \reptim_v \) appears three times on the left-hand side of the summed inequalities: once in the inequality associated with vertex \( v \) itself, and once in the inequality for each of its two children.

On the right-hand side, for
each non-root vertex \(v\), the term \(\repd_v+\pare_v\) is summed with a coefficient of \(\frac{1}{2}+\frac{\bta}{4}\) in the inequality corresponding to its parent.
Additionally, for each non-root internal vertex \(v\), \(\frac{\repd_v + \pare_v}{2}\) is counted in the inequality for its sibling and \(\frac{\pare_v}{2}\) appears in its own inequality.  

Now, since \( \repd_v = \pare_v = 0 \) for all \( v \notin \vi \), we can simplify the previous bound to:
\begin{align*}
    3 \sum_{v \in \vi} \reptim_v 
    &\leq \frac{4 + \bta}{4} \sum_{v \in \vi \setminus \RSET} \repd_v  
    + \frac{6 + \bta}{4} \sum_{v \in \vi \setminus \RSET} \pare_v.
\end{align*}
Note that \( \sum_{v \in \vi \setminus \RSET} \pare_v \leq \cc(\acttree) \),  
since each edge in \( \acttree \) appears in this sum at most once.  
In addition, Lemma~\ref{lm:st-dsum} implies that \( \sum_{v \in \vi \setminus \RSET} \repd_v \leq \cc(\acttree) \).

Combining these bounds, we conclude that

\begin{align*}
    3 \sum_{v \in \vi} \reptim_v 
    &\leq \frac{4 + \bta}{4} \sum_{v \in \vi \setminus \RSET} \repd_v  
    + \frac{6 + \bta}{4} \sum_{v \in \vi \setminus \RSET} \pare_v \\
    &\leq \frac{10 + 2\bta}{4} \cc(\acttree) \\
    &= \frac{5 + \bta}{2} \cc(\acttree).
\end{align*}
\end{proof}

Next, we relate the \(\reptim_v\) values to the assignment \(\rstree\) to complete the proof.
\begin{lemma} \label{lm:st-rsum}
    For assignment $\rstree$ that is priority-based on $S$, we have 
    $$\sum_{v\in S}\rstree_v - \rstree_{\max}(S) \leq \sum_{v\in \vi} \reptim_v.$$
\end{lemma}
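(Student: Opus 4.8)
The goal is to show that the sum of assignment values over the leaves $S$, minus the maximum, is dominated by $\sum_{v\in\vi}\reptim_v$. The natural strategy is a charging/potential argument over time: at each moment $\currenttime$ of the final \BoostedModGW{} execution, I want to bound the rate at which $\sum_{v\in S}\rstree_v$ grows (discounting the contribution to the current maximum) by the rate at which $\sum_{v\in\vi}\reptim_v$ grows. Concretely, fix a moment $\currenttime$ and let $k$ be the number of distinct active sets that contain at least one vertex of $S$ at that moment. Because the vertices of $S$ are connected while active, these $k$ active sets sit at the leaves of a contracted version of $\acttree$: contract each current active set and look at the minimal subtree of $\acttree$ spanning the $k$ ``groups.'' Since $\acttree$ is a binary tree, this contracted tree has exactly $k$ leaves and hence exactly $k-1$ internal vertices $q$, and for each such $q$ we have $\reptim_q > \currenttime$ (the left and right groups of $q$ are not yet merged at time $\currenttime$), so each of these $k-1$ internal vertices is ``live'' and contributes rate $1$ to $\sum_{v\in\vi}\reptim_v$ at moment $\currenttime$. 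On the other side, because $\rstree$ is priority-based on $S$ (Definition~\ref{def:priority-based}), at moment $\currenttime$ each of the $k$ active sets meeting $S$ assigns its growth only to the single highest-priority vertex of $S$ inside it, so at most $k$ vertices of $S$ are receiving assignment at moment $\currenttime$, each at rate at most $1$; the total assignment rate to $S$ at that moment is therefore at most $k$.

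**Handling the maximum.** The discrepancy between ``$k$ vertices growing'' and ``$k-1$ internal vertices live'' is exactly one unit of rate, and this is where $\rstree_{\max}(S)$ must absorb the slack. The plan is to identify, at each moment, one distinguished vertex of $S$ whose assignment at that moment is ``charged to the max'' rather than to the $\reptim$ sum — and to argue these charges telescope to at most $\rstree_{\max}(S)$. The cleanest way: let $v^\star=\argmax_{v\in S}\rstree_v$ (break ties by priority, consistent with Definition~\ref{def:priority}), and show that at every moment $\currenttime$ before $\rstree_{v^\star}$, the active set containing $v^\star$ assigns its growth precisely to $v^\star$. This should follow from priority-basedness plus a prefix-type argument: since the $\rstree$ assignment is priority-based on $S$, the vertex of $S$ receiving assignment in a given active set is the one of maximum priority currently in it; and if $v^\star$ has the maximum accumulated $\rstree$ value, it must have been receiving assignment continuously in its active set up to time $\rstree_{v^\star}$ — otherwise some higher-priority competitor in its active set would accumulate at least as much, contradicting maximality (this mirrors the reasoning in Lemma~\ref{lm:rplus_prefix_time_assignment} and Lemma~\ref{lm:rplus_vone_rr_vone}). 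Granting this, at every moment the rate of $\sum_{v\in S}\rstree_v - (\text{rate of growth of }\rstree_{v^\star})$ is at most $k-1$, and $\rstree_{v^\star}=\rstree_{\max}(S)$, so integrating over all time gives $\sum_{v\in S}\rstree_v - \rstree_{\max}(S)\le \int (k(\currenttime)-1)\,d\currenttime \le \sum_{v\in\vi}\reptim_v$.

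**The main obstacle.** The delicate step is the combinatorial claim that the $k$ active sets meeting $S$ at moment $\currenttime$ correspond bijectively to the leaves of a $k$-leaf subtree of the binary tree $\acttree$, with exactly $k-1$ internal vertices all having $\reptim_q>\currenttime$. The first half uses that $S$ is connected in $\OPT$ and the vertices are actively connected, so the active sets partition $S$ into groups that ``live'' inside subtrees of $\acttree$; the laminarity of active sets (Corollary~\ref{cor:active_sets_laminar}) and the tree structure give that each group occupies a connected subtree, and these subtrees are vertex-disjoint, so contracting each yields a tree on $k$ nodes — but one has to be careful that internal (Steiner) vertices of $\acttree$ not yet in any group-subtree don't inflate the count; the binary perfect-tree normalization (all such vertices have degree $3$, leaves attached by zero-cost edges) is what makes the internal-vertex count come out to exactly $k-1$. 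The second half — that each of these $k-1$ internal vertices $q$ satisfies $\reptim_q>\currenttime$ — is almost by definition of $\reptim_q$: if $\reptim_q\le\currenttime$ then a left-subtree leaf and a right-subtree leaf of $q$ would already be connected, forcing the two corresponding groups to have merged, contradicting that $q$ is an internal (branching) vertex of the contracted tree. I'd write this up carefully as a single lemma about the contracted tree, then feed it into the time-integration argument above.
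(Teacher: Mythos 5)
Your overall plan is aligned with the paper's proof in its broad strokes --- both reformulate the two sums as integrals over time, both observe that at any moment $\currenttime$ the number of vertices $v \in S$ with $\rstree_v \ge \currenttime$ is at most the number $k$ of active sets meeting $S$ (one high-priority representative per active set, by priority-basedness), and both aim to exhibit $k-1$ internal vertices with $\reptim \ge \currenttime$. But the crucial combinatorial step is where you go wrong.

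You propose to ``contract each current active set in $\acttree$'' and read off a subtree with $k$ leaves and $k-1$ internal vertices. This hinges on the claim that each active set intersects $S$ in a set of leaves spanning a \emph{connected} subtree of $\acttree$. That claim is false in general: the active sets in \BoostedModGW{} are formed by merging along cheap paths in the ambient graph $G$, and these merges need not respect the combinatorial structure of $\acttree$, which is the optimal Steiner tree on $S$. For example, in a complete binary tree $\acttree$ with four leaves $a,b,c,d$ (where $a,b$ share a parent $x$, $c,d$ share a parent $y$, and $x,y$ share parent $r$), the moat growing process may well merge $a$ with $c$ first while $b$ and $d$ remain isolated. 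Then the active set $\{a,c\}$ does not span a connected subtree of $\acttree$, ``contracting'' it in $\acttree$ produces a multigraph with a cycle rather than a tree, and the clean ``$k$ leaves, $k-1$ internal vertices'' count collapses. Laminarity of active sets (Corollary~\ref{cor:active_sets_laminar}) gives no help here: the laminar structure lives in the vertex set $V$, not in $\acttree$. The paper circumvents this entirely: instead of contracting inside $\acttree$, it builds an \emph{auxiliary graph} on $S$ whose edges are the leaf pairs $(u,w)$ defining each $\reptim_v$, shows by induction on the tree structure that this auxiliary graph is connected, and \emph{then} contracts the active sets in that graph. Connectivity plus $k$ super-nodes forces at least $k-1$ non-loop edges, and each non-loop edge corresponds to an internal vertex $v$ whose defining leaf pair straddles two active sets --- hence $\reptim_v \ge \currenttime$. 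This is robust to the tree-discordant merges that break your construction.

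A secondary issue: you frame the argument as tracking instantaneous assignment \emph{rates}, and you want a distinguished vertex $v^\star$ to be continuously assigned so that it absorbs one unit of rate at every moment. That property does not follow from ``priority-based'' alone (Definition~\ref{def:priority-based} constrains \emph{which} vertex of $S$ is assigned, not that it receives full or continuous rate); it requires a prefix-time structure that specific assignments like $\rplus$ happen to have but general priority-based assignments need not. The paper sidesteps this by using a level-set integral: $\sum_{v\in S}\rstree_v - \rstree_{\max}(S) = \int_0^{\stlsttime}\max\bigl(0, \lvert\{v\in S : \rstree_v \ge \currenttime\}\rvert - 1\bigr)\,\diff\currenttime$, in which the ``$-1$'' handles the maximum automatically without ever needing $v^\star$ to be assigned continuously. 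You should adopt the paper's formulation here; your rate-based version would need additional hypotheses.
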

\begin{proof}
Let \( \stlsttime \) denote the final moment during which the moat-growing procedure runs.
 It is clear that \( \rstree_v \leq \stlsttime \) and \( \reptim_v \leq \stlsttime \) for all \( v \).  
We now express both sides of the inequality in equivalent integral forms.
For the right-hand side, we have
\begin{align*}
    \sum_{v \in \vi} \reptim_v 
    &= \sum_{v \in \vi} \int_0^{\reptim_v} 1 \, \diff\currenttime \\
    &= \int_0^{\stlsttime} \left\lvert \{ v \in \vi \mid \reptim_v \geq \currenttime \} \right\rvert \, \diff\currenttime.
\end{align*}
On the other hand, for the left-hand side, we can write
\begin{align*}
    \sum_{v \in S} \rstree_v - \rstree_{\max}(S)
    &= \sum_{v \in S} \int_0^{\rstree_v} 1 \, \diff\currenttime - \int_0^{\rstree_{\max}(S)} 1 \, \diff\currenttime\\
    &= \int_0^{\stlsttime} \left\lvert \{ v \in S \mid \rstree_v \geq \currenttime \} \right\rvert \, \diff\currenttime 
    - \int_0^{\rstree_{\max}(S)} 1 \, \diff\currenttime \\
    &= \int_0^{\stlsttime} \max\left( 0, \left\lvert \{ v \in S \mid \rstree_v \geq \currenttime \} \right\rvert - 1 \right) \, \diff\currenttime.
\end{align*}
In the final step, we use the fact that 
\(\left\lvert \{ v \in S \mid \rstree_v \geq \currenttime \} \right\rvert=0\) for any \(\currenttime > \rstree_{\max}(S)\).

It now suffices to show that, for any moment \( \currenttime \leq \stlsttime \), the following inequality holds:
\[
\max\left(0, \left\lvert \{ v \in S \mid \rstree_v \geq \currenttime \} \right\rvert - 1 \right) 
\leq \left\lvert \{ v \in \vi \mid \reptim_v \geq \currenttime \} \right\rvert.
\]
Furthermore, since the right-hand side is always non-negative, it suffices to prove that
\[
\left\lvert \{ v \in S \mid \rstree_v \geq \currenttime \} \right\rvert - 1 
\leq \left\lvert \{ v \in \vi \mid \reptim_v \geq \currenttime \} \right\rvert.
\]
Fix a moment \( \currenttime \), and let \( \stacts \) denote the family of active sets at moment \( \currenttime \) that intersect \( S \) non-trivially, with \( k = \lvert \stacts \rvert \).
Now consider any vertex \(v \in S\) with \( \rstree_v \geq \currenttime \). Then, at some moment no earlier than \( \currenttime \), there must exist an active set \( S' \) that assigns growth to \( v \); otherwise, we would have \(\rstree_v < \currenttime\), contradicting the assumption. Since \( v \) is still assigned growth after \( \currenttime \), it must be active at moment \( \currenttime \) and therefore belongs to an active set \(S^*\in \stacts\). Since active sets only grow larger during the moat growing process, we must have \( S^* \subseteq S'\).

Moreover, since the assignment \(\rstree\) is priority-based on \(S\), \(v\) must have the highest priority in \(S\cap S'\) by Definition \ref{def:priority-based} and hence also in $S\cap S^*$. 
This implies that any vertex \(v\) with \(\rstree_v \geq \currenttime\) is the maximum priority vertex in the intersection of some active set \(S^*\in \stacts\) with \(S\). Since there are exactly \(k\) such active sets, and the maximum priority vertex is unique, it follows that
\[
\big\lvert\{v \in S \mid \rstree_v \geq \currenttime \}\big\rvert \leq k.
\]

On the other hand, for each value \( \reptim_v \), consider the pair of leaves \( u, w \in S \) that define it—namely, a pair of leaves from the left and right subtrees of \( v \) that become connected at moment \( \reptim_v \).  
We interpret each such pair as an edge between \( u \) and \( w \), forming a graph on the vertex set \( S \) with one edge per internal vertex \( v \in \vi \).  
This graph is connected, which can be shown by induction on the structure of the subtree rooted at each internal vertex of \( \acttree \).

Now consider the family of active sets \( \stacts \) at moment \( \currenttime \), and merge all vertices in each set \( S' \in \stacts \) into a single vertex.  
Since merging vertices preserves connectivity, the resulting graph has \( k = |\stacts| \) vertices and remains connected.  
Thus, it must contain at least \( k - 1 \) non-self-loop edges.  

Each such edge represents a pair \( u, w\) 
that lie in distinct active sets at time \( \currenttime \).  
Since these leaves are not yet connected at moment \( \currenttime \), the internal vertex \( v \) corresponding to this pair satisfies \( \reptim_v \geq \currenttime \).
Therefore, there are at least \( k - 1 \) such vertices \( v \in \vi \), implying:
\[
\left\lvert \{ v \in \vi \mid \reptim_v \geq \currenttime \} \right\rvert \geq k - 1.
\]
Combining this with the earlier bound on \( \rstree \), we obtain:
\begin{align*}
\left\lvert \{ v \in S \mid \rstree_v \geq \currenttime \} \right\rvert - 1 
&\leq k - 1 \\
&\leq \left\lvert \{ v \in \vi \mid \reptim_v \geq \currenttime \} \right\rvert,
\end{align*}
which completes the proof.
\end{proof}
 
Combining Lemmas \ref{lm:timetree} and \ref{lm:st-rsum} completes the proof of Lemma \ref{lm:steiner-tree}:
\begin{align*}
    \sum_{v\in S}\rstree_v - \rstree_{\max}(S) &\leq \sum_{v\in \vi} \reptim_v\\
    &\leq\frac{5+\bta}{6} \cc(\acttree).
\end{align*}

\subsection{Approximation for Steiner Tree: Proof of Theorem~\ref{thm:main_steiner_tree}}
\label{apx:steiner_tree}
We can now prove that Algorithm \ref{alg:steiner-tree} achieves a better than $2$ approximation for the Steiner Tree problem. 

\begin{proof}[Proof of Theorem~\ref{thm:main_steiner_tree}]
    We show that solution $\solone$ obtained in Algorithm \ref{alg:steiner-tree} achieves approximation factor $\stapx$. Since Algorithm \ref{alg:steiner-tree} is identical to the initial part of Algorithm \ref{alg:main}, we reuse the terminology and lemmas of Sections \ref{sec:mod-goemans} and \ref{sec:local_search} and Legacy Execution and Boosted Execution. 
    
    By Lemma \ref{lm:monotonic-forest-twice-growth}, we can bound the cost of solution $\solone$ as
    \begin{align*}
        \cc(\solone) \leq 2\left(\sum_{S\subseteq\V} \yy_S - \sum_{S\subseteq\V;v\in S} \yy_S\right)
    \end{align*}
    for any vertex $v$. Now, consider the vertex $v^*$ that maximizes $\rstar_v$. We can plug $v^*$ in to get
    \begin{align*}
        \sum_{S\subseteq\V} \yy_S - \sum_{S\subseteq\V;v^*\in S} \yy_S 
        &\leq \sum_{v\in \V} \rstar_v + \lossone  -\sum_{S\subseteq\V;v^*\in S} \yy_S \tag{ Lemma \ref{lm:yy-using-rstar-and-loss}}\\
        &\leq \sum_{v\in \V} \rstar_v + \lossone  - \rstar_{v^*} \tag{$\rstar_{v^*} \leq \sum_{S\subseteq \V;v^*\in S}\yy_S$}\\
        &= \sum_{v\in \V} \rstar_v + \lossone  - \max_{v\in\V}\rstar_{v} \tag{$\rstar_{v^*}=\max_{v\in\V}\rstar_{v}$}.
    \end{align*}

    Now, let $\terminals$ be the set of terminals for the Steiner Tree instance. Since vertices in $\terminals$ must be connected by the algorithm, they are not deactivated before reaching each other and are actively connected in Legacy Execution.
    By Lemma \ref{lm:large_fingerprint_superactive}, they must also actively connect in Boosted Execution, since it has a larger fingerprint.
    Additionally, $\rstar$ is priority-based on $\terminals$ and satisfies the conditions for Lemma \ref{lm:steiner-tree}. Therefore, we can apply this lemma using $S=\terminals$ and $\rstree = \rstar$ to get
    \begin{align*}
        \sum_{S\subseteq\V} \yy_S - \sum_{S\subseteq\V;v^*\in S} \yy_S 
        &\leq \sum_{v\in \V} \rstar_v + \lossone  - \max_{v\in\V}\rstar_{v}\\
        &\leq \frac{5+\bta}{6}\cc(\OPT) + \lossone 
    \end{align*}
    since the cost of the minimal tree $T_\terminals$ connecting $\terminals$ is $\cc(\OPT)$. Furthermore, we can use Lemma \ref{lm:lcl-loss} to achieve the following bound
    \begin{align*}
        \cc(\solone)
        &\leq 2\left(\frac{5+\bta}{6}\cc(\OPT) + \lossone \right)\\
        &\leq 2\left(\frac{5+\bta}{6}\cc(\OPT) + \frac{\alf}{\bta} \cc(\OPT)\right)\\
        &\leq2\left(\frac{5+\bta}{6} + \frac{\alf}{\bta}\right) \cc(\OPT)
    \end{align*}
    when $\cc(\solone)\geq (2-2\alf)\cc(\OPT)$. Equivalently, we can bound the cost of $\solone$ as follows
    \begin{align*}
        \cc(\solone)\leq \max\big(2-2\alf,2(\frac{5+\bta}{6} + \frac{\alf}{\bta})\big)\cc(\OPT).
    \end{align*}
    Finally, minimizing the coefficient of $\cc(\OPT)$ gives us an approximation ratio of $\stapx\approx\strgh$ when $\alf=\stalfval$ and $\bta=\stbtaval$.
\end{proof}

\section{Extended Moat Growing}
\label{sec:extension}

We show how our local search can yield a better-than-2 approximation for the Steiner Tree problem.
However, this approach does not directly extend to the Steiner Forest problem.
More precisely, the bound derived from our local search applies only to vertices that are actively connected, a property that does not necessarily hold for each connected component of the optimal solution in the Steiner Forest case.

To address this, we introduce a new monotonic moat growing algorithm, which extends an existing moat growing algorithm.
In our case, we use Boosted Execution as the base algorithm. In the extended version, a small fraction of each active set’s growth in the original algorithm is assigned as potential to one of its vertices. During the extended process, when an active set is about to become inactive, it can consume this potential to remain active for a longer period.
This additional step tends to result in the majority of vertices within each connected component of the optimal solution becoming actively connected, which allows us to apply the same bound as in the Steiner Tree case to each component.
Therefore, after performing the extension, we run our local search again to apply that bound.

Note that in some components, a significant portion of vertices may still fail to become actively connected. In these cases, we show that most of the growth introduced by the extension step ends up coloring more than one edge of the optimal solution, which in turn leads to a new bound for those components.

We now describe in detail how this extended moat growing procedure works.

\subsection{Algorithm}
In \ExtendedGW{}, described in Algorithm~\ref{alg:extend}, each vertex $v$ is provided with a fingerprint value $\tin_v$ and a potential value $\pot_v$ as input. Initially, potentials are assigned to singleton connected components. Whenever two components merge, their potentials are combined and assigned to the new component.

In this moat growing algorithm, active sets grow until they reach the maximum $\tin$ value among their vertices. After that, they continue growing for an extended period by consuming their potential until it is fully exhausted. The output fingerprint of this method is the time at which each vertex becomes deactivated, which is larger than fingerprint $\tin$.

The algorithm uses the notation $\setpot_S$ to store the remaining potential of each active set $S$. In Line~\ref{line:decrease-setpot}, this value is reduced as active sets consume their potential, and in Line~\ref{line:merge_setpot}, the potentials of merging active sets are combined. Finally, in Line~\ref{line:setpot_zero}, we use $\setpot_S$ to identify active sets that have both reached their maximum $\tin$ and exhausted their potential, allowing us to deactivate them.

\begin{algorithm}[H]
  \caption{Extended Moat Growing}
  \label{alg:extend}
  \hspace*{\algorithmicindent} \textbf{Input:} An undirected graph $\G=(\V, \E, \cc)$ with edge costs $\cc: \E \rightarrow \mathbb{R}_{\ge 0}$, a function $\tin: V \rightarrow \mathbb{R}_{\ge 0}$ indicating the input fingerprint, and a potential $\pot: V \rightarrow \mathbb{R}_{\ge 0}$ for each vertex.\\
  \hspace*{\algorithmicindent} \textbf{Output:} A function $\tout: V \rightarrow \mathbb{R}_{\ge 0}$ indicating the latest moment each vertex was actively growing during the algorithm.
  \begin{algorithmic}[1]
    \Procedure{\EGWr}{$\G,\tin,\pot$}
      \label{func:egw}
      \State $\tau \gets 0$
      \State $\currentsets \gets \{\{v\} \mid v \in \V\}$
      \State $\deactivesets \gets \{v \mid v\in \V, \tin_v=0, \pot_v=0\}$
      \State $\activesets \gets \{\{v\} \mid v \in \V, v \notin \deactivesets\}$ 
      \State $\tout_v \gets 0 \text{ for all } v\in\V$
      \State $\setpot_{\{v\}} \gets \pot_v$ for all $v\in V$
      \State Implicitly set $\ys \gets 0$ for all $S \subseteq \V$
      \label{line:init_setpot}
      \While{$\activesets \neq \emptyset$}
      \Comment{While there exists an active set}
        \State $\Delta_e \gets \min_{e = uv \in E} \frac{c_e - \sum_{S \ni e} y_S}{|\{S_u, S_v\} \cap \activesets|}$, where $u\in S_u\in \currentsets$, $v\in S_v\in \currentsets$, and $S_u \ne S_v$
        \State $\Delta_t \gets \min_{v\in \V, \tin_v > \currenttime} (\tin_v - \currenttime)$
        \State $\Delta_p \gets \min_{S \in \activesets, \setpot_S>0} \setpot_S$
        \State $\Delta \gets \min(\Deltae, \Delta_t, \Delta_p)$
        \For{$S \in \activesets$}
            \State $\ys \gets \ys + \Delta$
            \If {$\max_{v\in S} \tin_v \le \currenttime$}
                \State $\setpot_{S} \gets \setpot_{S} - \Delta$ 
            \label{line:decrease-setpot}
            \EndIf
        \EndFor
        \State $\currenttime \gets \currenttime + \Delta$
        \For{$e\in E$} 
          \State Let $S_v, S_u \in \currentsets$ be the sets that contains each endpoint of $e$
          \If{$\sum_{S: e \in \deltaS} \ys = \ce$ \textbf{and} $S_v \neq S_u$} 
          \Comment{Edge $(v, u)$ become fully colored}
            \State $\currentsets \gets (\currentsets \setminus \{S_u, S_v\}) \cup \{S_u \cup S_v\}$
            \State $\activesets \gets (\activesets \setminus \{S_u, S_v\}) \cup \{S_u \cup S_v\}$
            \State $\setpot_{S_u\cup S_v} \gets \setpot_{S_u}+\setpot_{S_v}$
            \label{line:merge_setpot}
          \EndIf
        \EndFor
        \For{$S \in \activesets$} 
          \If{$\setpot_S = 0$ \textbf{and} $\max_{v\in S}\tin_v\leq \currenttime$} 
          \Comment{$S$ become inactive}
          \label{line:setpot_zero}
            \State $\activesets \gets \activesets \setminus \{S\}$ 
            \For{$v \in S \setminus \deactivesets$}
               \State $\tout_v \gets \tau$
                \State $\deactivesets \gets \deactivesets \cup \{v\}$
            \EndFor
          \EndIf
        \EndFor
      \EndWhile
      \State \Return $\tout$
    \EndProcedure
  \end{algorithmic}
\end{algorithm}

In Algorithm~\ref{alg:main}, we do not directly call \ExtendedGW{}.  
Instead, we call \Extend{} in Line~\ref{exe:egw}, which internally uses \ExtendedGW{}.  
Within this call, we make use of the output of Boosted Execution. Based on the growth of active sets during their base phase in Boosted Execution—denoted by \(\yyb\) in Definition~\ref{def:yy-base-boost}—we construct an exclusive assignment \(\rex\) such that for any set \(S \subseteq \V\), it holds that  
\[
\sum_{v \in S} \rex_{S, v} = \yyb_S.
\]  
Note that, according to Definition~\ref{def:assignment}, we assign only to active vertices in \(S\).  
We then run \ExtendedGW{} with fingerprint \(\ttt\) and potential values \(\pot_v = \eps \rex_v\) for each vertex \(v\), producing a new fingerprint \(\tp\).  
We will later show that the output \(\tp\) remains unchanged as long as the assignment \(\rex\) used to define \(\pot\) satisfies the condition above. This allows us to substitute another assignment in our analysis without affecting the resulting fingerprint.
The pseudocode of \Extend{} is provided in Algorithm~\ref{alg:extend-wrapper}.  

Finally, we apply the \LocalSearch{} algorithm to the instance with fingerprint \(\tp\), resulting in the solution \(\solext\) (Line~\ref{exe:Xlocalsearch} in Algorithm~\ref{alg:main}).

\begin{algorithm}[H]
  \caption{Extend}
  \label{alg:extend-wrapper}
  \hspace*{\algorithmicindent} \textbf{Input:} 
  Graph $\G = (\V, \E, \cc)$ with edge costs $\cc: \E \to \mathbb{R}_{\geq 0}$, function $\tplus: V \to \mathbb{R}_{\geq 0}$ specifying the fingerprint of Legacy Execution, function $\ttt: V \to \mathbb{R}_{\geq 0}$ specifying the fingerprint of Boosted Execution,
  function $\yyb: 2^V \to \mathbb{R}_{\geq 0}$ specifying the growth in the base phase of Boosted Execution for each subset of vertices, and parameter $\eps > 0$.\\
  \hspace*{\algorithmicindent} \textbf{Output:} 
  $\tp$ is the fingerprint resulting from the extension.
  \begin{algorithmic}[1]
    \Procedure{\Extendr}{$\G, \tplus, \ttt, \yyb, \eps$}
        \label{func:extend}
        \State $\rex_v \gets 0 \text{ for all } v\in\V$
        \For{$S \subseteq V$ \textbf{ such that } $\yyb_S > 0$}
            \State $v \gets \arg\max_{v \in S} \tplus_v$
            \State $\rex_v \gets \rex_v + \yyb_S$
        \EndFor
        \State $\tp \gets \ExtendedGW(\G, \ttt, \eps\rex)$
        \label{line:egw}
        \State \Return $\tp$
    \EndProcedure
  \end{algorithmic}
\end{algorithm}

We observe that Algorithm~\ref{alg:extend} runs in polynomial time for any given input, since in each iteration of the loop, one of the following occurs: two active sets are merged, an active set becomes inactive, or $\currenttime$ passes the maximum $\tin$ value among the vertices of an active set. Each of these three events can occur at most a linear number of times in $\lvert \V \rvert$, ensuring the algorithm's overall polynomial runtime.
Consequently, Algorithm~\ref{alg:extend-wrapper} runs in polynomial time.

\begin{corollary}\label{cor:extend-polynomial}
    The \Extend{} procedure runs in polynomial time.
\end{corollary}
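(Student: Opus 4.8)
The plan is to bound the running time of \Extend{} (Algorithm~\ref{alg:extend-wrapper}) by splitting it into its two components: building the assignment $\rex$ from $\yyb$, and the single call to \ExtendedGW{} on Line~\ref{line:egw}; everything else in the procedure is a constant number of elementary operations, so these two pieces suffice.

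First I would argue the loop that constructs $\rex$ is polynomial. It ranges over subsets $S \subseteq \V$ with $\yyb_S > 0$. By Definition~\ref{def:yy-base-boost}, $\yyb_S$ is the growth of $S$ while it is in the base phase of Boosted Execution, which is positive only if $S$ appears as an active set at some moment of that execution. Active sets form a laminar family by Corollary~\ref{cor:active_sets_laminar}, so at most $2|\V|-1$ distinct sets can ever be active; hence only polynomially many $S$ satisfy $\yyb_S>0$, and these sets together with their $\yyb$ values are already produced as output of \BoostedModGW{}, so iterating over them incurs no extra cost. For each such $S$, computing $\arg\max_{v\in S}\tplus_v$ and updating a single coordinate of $\rex$ takes $O(|\V|)$ time, so the loop runs in polynomial time.

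Next I would show that \ExtendedGW{}$(\G,\ttt,\eps\rex)$ runs in polynomial time, mirroring the argument behind Corollary~\ref{cor:legacy_polynomial_time}. In each iteration of its main loop, the step size $\Delta=\min(\Delta_e,\Delta_t,\Delta_p)$ is attained by one of three events: an edge becomes fully colored (merging two components), an active set's potential $\setpot_S$ reaches $0$ while its fingerprint constraint is already satisfied (deactivating it), or $\currenttime$ crosses $\max_{v\in S}\ttt_v$ for some active set. Merges occur at most $|\V|-1$ times, deactivations at most $|\V|$ times, and $\currenttime$ can cross the finitely many thresholds among $\{\ttt_v\}_{v\in\V}$ at most $|\V|$ times; each iteration does only polynomially many elementary steps (recomputing the three $\Delta$'s, scanning edges, and updating components, active sets, and potentials). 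Hence \ExtendedGW{} terminates after polynomially many iterations, and combining this with the bound on the $\rex$-construction loop yields the corollary.

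I do not expect a real obstacle here — the argument is essentially the event-counting template used for Corollary~\ref{cor:legacy_polynomial_time} and noted in the paragraph preceding the statement. The one place to be slightly careful is justifying that only polynomially many sets satisfy $\yyb_S>0$; this is where the laminarity of active sets (Corollary~\ref{cor:active_sets_laminar}), rather than a naive count over all $2^{|\V|}$ subsets, is needed.
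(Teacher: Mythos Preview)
Your proposal is correct and follows essentially the same approach as the paper: the paper's justification is precisely the event-counting paragraph you cite (merges, deactivations, and $\currenttime$ crossing fingerprint thresholds each happen at most $O(|\V|)$ times), with the conclusion for the wrapper stated as an immediate consequence. Your treatment is slightly more careful in that you explicitly bound the $\rex$-construction loop via laminarity of active sets, whereas the paper absorbs this into the word ``Consequently.''
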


\subsection{Properties and Definitions}
Our first step is to show that the behavior of \ExtendedGW{} is the same if potential $\pot_v$ is defined using any exclusive assignment where $\sum_{v \in S} \rex_{S,v} = \yyb_S$ for all sets $S \subseteq \V$. 
Then, we will assume that $\rex_v=\rstar_v$ is used in the remainder of this section, as this assignment satisfies the required condition due to Corollary~\ref{cor:sum-rstar-yyb}. 

In proving this, we use the following corollary, which follows from Lemma \ref{lm:large_fingerprint_refinement} and the fact that the fingerprint of any run of \ExtendedGW{} is larger than its input fingerprint.
\begin{corollary}\label{cor:ext-any-pot-refine}
    Consider an execution of \ExtendedGW{} given fingerprint $\ttt_v$ and any potential $\pot_v \geq 0$. Then, at any moment $\currenttime$, the active sets of Boosted Execution form a refinement of the active sets in this execution at the same moment. Additionally, the connected components at the same moment in Boosted Execution are a refinement of the connected components in this execution.
\end{corollary}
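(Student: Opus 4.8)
The plan is to recognize the execution of \ExtendedGW{} on input $(\G, \ttt, \pot)$ as itself a monotonic moat growing algorithm whose output fingerprint is larger than $\ttt$ (the fingerprint of Boosted Execution), and then to invoke Lemma~\ref{lm:large_fingerprint_refinement} directly with $\tin = \ttt$ and the larger fingerprint playing the role of $\tout$ there. So the work reduces to two checks: that \ExtendedGW{} produces a genuine fingerprint of its own execution, and that this fingerprint dominates $\ttt$ pointwise.

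First I would observe that \ExtendedGW{}, as written in Algorithm~\ref{alg:extend}, fits Definition~\ref{def:monotonic}: it maintains an initially empty forest, grows active sets uniformly while they color adjacent edges, merges the endpoints' components and activates the merged component once an edge is fully colored, and only allows a deactivated component to become active again by merging with an active one. Let $\tout$ be the fingerprint it returns on $(\G, \ttt, \pot)$. I would then verify that $\tout$ satisfies both conditions of Definition~\ref{def:fingerprint} for this execution and, moreover, that $\tout_v \ge \ttt_v$ for every $v$. The crucial ingredient is the deactivation rule in Line~\ref{line:setpot_zero}: an active set $S$ leaves $\activesets$ only at a moment $\currenttime$ with $\setpot_S = 0$ \emph{and} $\max_{w \in S}\ttt_w \le \currenttime$. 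Hence any component containing a vertex $v$ remains active at least until time $\ttt_v$, which gives the first condition of Definition~\ref{def:fingerprint} and simultaneously $\tout_v \ge \ttt_v$, since $\tout_v$ is recorded precisely when the active set containing $v$ is deactivated. For the second condition, I would fix a moment $\tau$ and an active set $S$ present at $\tau$; since the procedure terminates, $S$ is eventually deactivated at some $\tau' \ge \tau$, and at that instant every not-yet-deactivated vertex of $S$ is assigned $\tout = \tau'$. As every active set keeps at least one such vertex throughout (a singleton starts deactivated only when it carries no required growth, and a merge always inherits a non-deactivated vertex from an active side), there is a vertex $v \in S$ with $\tout_v = \tau' \ge \tau$.

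Finally, I would note that $\ttt$ is itself a fingerprint of Boosted Execution (this is immediate from the deactivation rule in Line~\ref{line:BoostedMG_deactivation_condition} of \BoostedModGW{}, which keeps a component containing $v$ active exactly while some vertex has $\ttt$-value exceeding the current time), so both Boosted Execution and the \ExtendedGW{} execution are monotonic moat growing algorithms on $\G$, with fingerprints $\ttt$ and $\tout$ respectively, and $\tout$ is larger than $\ttt$. Lemma~\ref{lm:large_fingerprint_refinement} then yields, at every moment $\currenttime$, that the active sets of Boosted Execution form a refinement of those of the \ExtendedGW{} execution and likewise for connected components, which is the claim. I expect the only delicate point — a mild one — to be the bookkeeping in tracing the loop of Algorithm~\ref{alg:extend}: one must be certain that a set's potential is fully consumed before it is dropped once its $\ttt$-requirements are met, so that active sets really do survive until their last vertex reaches its $\ttt$-value, and that the recorded $\tout$ values are consistent with both fingerprint conditions.
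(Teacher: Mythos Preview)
Your proposal is correct and follows the same approach as the paper: the paper states this corollary as following from Lemma~\ref{lm:large_fingerprint_refinement} together with the observation that the fingerprint of any run of \ExtendedGW{} is larger than its input fingerprint, which is precisely the reduction you carry out. Your write-up is in fact more detailed than the paper's one-line justification, spelling out why the output $\tout$ of \ExtendedGW{} satisfies Definition~\ref{def:fingerprint} and dominates $\ttt$, but the underlying idea is identical.
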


\begin{lemma} \label{lm:ext-rex-no-matter}
    The behavior of \ExtendedGW{} given fingerprint \(\tp\) is identical for all choices of exclusive assignment $\rex_v$ used to define potential values $\pot_v=\eps \rex_v$, as long as the assignment satisfies \[\sum_{v \in S} \rex_{S, v} = \yyb_S\] for any set $S\subseteq \V$. In particular, the output $\tout$ will be the same.
\end{lemma}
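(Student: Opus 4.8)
The plan is to fix two exclusive assignments $\rxone$ and $\rxtwo$ for Boosted Execution, each satisfying $\sum_{v\in S}\gr_{S,v}=\yyb_S$ for every $S\subseteq\V$, and to compare the two runs of $\EGW(\G,\ttt,\eps\gr)$ obtained by taking potentials $\eps\rxone_v$ and $\eps\rxtwo_v$ respectively. I would prove by induction on the sequence of events (merges of components, an active set crossing its largest $\ttt$ value, and deactivations) that the two runs keep identical connected components $\currentsets$, active sets $\activesets$, growth values $\ys$, and remaining potentials $\setpot_S$ at every moment; equality of the outputs $\tout$ then follows immediately. This is the same event-by-event bookkeeping used in Lemmas~\ref{lm:ghost_equivalence} and~\ref{lm:large_fingerprint_refinement}. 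The base case is easy: a singleton $\{v\}$ starts in $\deactivesets$ precisely when $\ttt_v=0$ and $\gr_v=0$, but $\ttt_v=0$ forces $\tplus_v\le\ttt_v=0$, so $v$ begins deactivated in Boosted Execution, is never active there, and hence $\gr_v=0$ for every valid assignment; thus the initial configuration is the same in both runs.

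For the inductive step, note that of all the possible events only a deactivation can depend on the potentials: a merge occurs exactly when an edge becomes fully colored, and both the step size $\Delta=\min(\Deltae,\Delta_t,\Delta_p)$ and the test ``$\max_{v\in S}\ttt_v\le\currenttime$'' are determined by the current $\currentsets$, $\activesets$, $\ys$ and $\setpot$ values, which agree by the induction hypothesis. Hence if the two runs first differ at a moment $\currenttime^*$, it must be that some active set $S$ — the same set in both runs — deactivates in one run but not the other, so $\setpot_S(\currenttime^*)=0$ in one and $\setpot_S(\currenttime^*)>0$ in the other. Since the runs agree strictly before $\currenttime^*$, exactly the same amount of potential has been consumed from $S$ in both, so these two values of $\setpot_S(\currenttime^*)$ can differ only if the total potential deposited in $S$ initially, $\eps\sum_{v\in S}\gr_v$, differs between the two assignments. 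It therefore suffices to prove that $\sum_{v\in S}\gr_v$ is the same for every valid $\gr$ whenever $S$ is an active set that deactivates.

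This last statement is the main obstacle. Writing $\sum_{v\in S}\gr_v=\sum_{S'\subseteq\V}\sum_{v\in S}\gr_{S',v}$ and using that $\gr_{S',v}$ is nonzero only when $S'$ is an active set of Boosted Execution with $\yyb_{S'}>0$ and $v\in\core(S')$, each inner sum equals $\yyb_{S'}$ when $\core(S')\subseteq S$, equals $0$ when $\core(S')\cap S=\emptyset$, and is assignment-dependent only when $\core(S')$ \emph{straddles} $S$, i.e.\ meets both $S$ and $\V\setminus S$. So everything reduces to ruling out straddling for a deactivating $S$, and the key idea is a closure property at deactivation: if $S$ deactivates at $\currenttime^*$ and $a\in S$, then every vertex $b$ actively connected to $a$ in Boosted Execution already lies in $S$. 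Indeed, $\ttt$ is the fingerprint of Boosted Execution and the run of $\EGW$ in question has output fingerprint at least $\ttt$, so by Lemma~\ref{lm:large_fingerprint_superactive} $a$ and $b$ are actively connected in that run too; thus they first lie in a common component at some moment $\tau_{ab}$ at which $a$ is still active, and since $a$ becomes deactivated by time $\currenttime^*$ at the latest (it lies in $S$), we get $\tau_{ab}\le\currenttime^*$; as components only ever merge, $b$ therefore lies in $a$'s component at time $\currenttime^*$, which is $S$. Applying this with $a\in\core(S')\cap S$ — and recalling that any two vertices of a superactive set are actively connected in Boosted Execution — gives $\core(S')\subseteq S$, so no straddling can occur, $\sum_{v\in S}\gr_v=\sum_{S'\,:\,\core(S')\subseteq S}\yyb_{S'}$ is assignment-independent, and the assumed divergence at $\currenttime^*$ is impossible. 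Consequently the two runs agree throughout and in particular return the same $\tout$; since $\rstar$ satisfies the required condition by Corollary~\ref{cor:sum-rstar-yyb}, we are free to use $\gr=\rstar$ in the remainder of the section.
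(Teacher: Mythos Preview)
Your proof is correct and follows essentially the same event-by-event induction as the paper: both reduce the question to showing that when an active set $S$ deactivates in \EGW, the total initial potential $\sum_{v\in S}\gr_v$ is assignment-independent. The only difference is in the key containment step: the paper uses the refinement property (Corollary~\ref{cor:ext-any-pot-refine}) to show directly that any Boosted-Execution active set $S'$ assigning growth to a vertex of $S$ satisfies $S'\subseteq S$, whereas you use preservation of active connection (Lemma~\ref{lm:large_fingerprint_superactive}) to obtain the slightly weaker $\core(S')\subseteq S$, which is equally sufficient since assignments only go to vertices in $\core(S')$.
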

\begin{proof}
Assume otherwise that the output differs for two exclusive assignments $\rxone$ and $\rxtwo$ satisfying the condition. Then, consider the first moment $\currenttime$ where active sets in the two runs of \ExtendedGW{} are not the same. 
There are two possible cases:
\begin{enumerate}
    \item An edge $e$ becomes fully colored in one instance, merging two active sets while it does not become fully colored in the other. This is not possible, as active sets have been the same in the two runs until $\currenttime$, so the sum $\sum_{S;e\in\deltaS} \ys$ is the same between the two runs and edge $e$ can become fully colored in either both runs or neither.  
    \item 
    An active set $S$ becomes inactive in one run, while it remains active in the other. In this case, since $\max_{v\in S}\tin_v$ is the same between the two runs, we must have $\setpot_S=0$ in one run while $\setpot_S\neq0$ in the other, where $\setpot_S$ shows the remaining potential of set $S$. Now, $\setpot_S$ for any set $S$ is $\sum_{v\in S}\pot_v$ minus the potential used by subsets of $S$. Since active sets at any moment before $\currenttime$ are the same in the two runs and use their potential at the same moments, the potential used by subsets of $S$ is the same between runs. Therefore, it suffices to show that $\sum_{v\in S}\pot_v$ is the same in the different runs. Furthermore, since $\pot_v = \eps \rex_v$, we need to show that $\sum_{v\in S} \rxone_v = \sum_{v\in S} \rxtwo_v.$

    Now, consider any vertex $v \in S$. 
    Since $v$ can become inactive at moment $\currenttime$ in the \ExtendedGW{} runs, it cannot be in an active set at the same moment in Boosted Execution by Corollary \ref{cor:ext-any-pot-refine} and becomes inactive at $\currenttime$ at the latest. 
    Therefore, any active set $S'$ assigning growth to $v$ in Boosted Execution must include $v$ at a moment $\currenttime' \leq \currenttime$.
    Corollary \ref{cor:ext-any-pot-refine} implies that $S'\subseteq S^*$ where $S^*$ is the active set including $v$ at moment $\currenttime'$ in the \ExtendedGW{} run.
    Furthermore, $S^*$ must be a subset of $S$ since $\currenttime'\leq\currenttime$. Therefore, for any active set $S'$ that assigns growth to a vertex $v$ in $S$, we have $S' \subseteq S$.
    Now, we have
    \begin{align*}
        \sum_{v\in S} \rxone_v &= \sum_{v\in S} \sum_{S'\subseteq\V} \rxone_{S',v}\\
        &= \sum_{v\in S} \sum_{S'\subseteq S} \rxone_{S',v} \tag{$S' \subseteq S$ if $\rxone_{S',v} >0$ for any $v\in S$} \\
        &=\sum_{S'\subseteq S} \sum_{v\in S} \rxone_{S',v} \tag{Change summation order}\\
        &=\sum_{S'\subseteq S} \sum_{v\in S'} \rxone_{S',v} \tag{$S' \subseteq S$, $\rxone_{S',v} = 0$ for all $v \notin S'$}\\
        &=\sum_{S'\subseteq S} \yyb_{S'}.
    \end{align*}
    Similarly, we can prove the same equality for $\rxtwo$. Therefore, we have
    \[
    \sum_{v\in S} \rxone_v = \sum_{S'\subseteq S} \yyb_{S'} = \sum_{v\in S} \rxtwo_v. 
    \]
    Therefore, set $S$ cannot become inactive in one run and not in the other.
\end{enumerate}
As neither case for active sets differing is possible, the active sets must be the same at any moment for different initial assignments used for calculating potential values. This means that the behavior of the algorithm will be the same, and the same output will be generated. 
\end{proof}

Given Lemma \ref{lm:ext-rex-no-matter}, we assume in our analysis that \ExtendedGW{} is called with potentials defined using $\rstar$ as opposed to the the assignment used in Algorithm~\ref{alg:extend-wrapper}.

\subsection{Extended Execution}

We now introduce the notations for the calls to \ExtendedGW{}, beginning with Extended Execution:

\begin{definition}[Extended Execution]
\label{def:ep}
The execution of \ExtendedGW{} (Line~\ref{line:egw} of Algorithm~\ref{alg:extend-wrapper}) in the call of \Extend{} in Line~\ref{exe:egw} of Algorithm~\ref{alg:main} is referred to as \emph{Extended Execution}. In addition, we define the following notation:
\begin{itemize}
    \item $\yp_S$ represents the growth of set $S$,
    \item $\Ap_{\currenttime}$ is the family of active sets at time $\currenttime$,
    \item $\corep(S)$ is the superactive set derived from the active set $S$,
    \item $\maxactsp$ denotes the family of maximal superactive sets,
    \item $\tp$ is the output fingerprint such that $\tp_v \ge \ttt_v$ for all $v \in V$.
\end{itemize}
\end{definition}

Based on this fingerprint, we define the potential phase as follows:

\begin{definition}[Potential Phase]
\label{def:poten-phase}
    In any monotonic moat growing,
    vertex $v$ is in \emph{potential phase} at moment $\currenttime$ if $\ttt_v \le \currenttime < \tp_v$.

    An active set $S$ is considered to be in the \emph{potential phase} if:
    \begin{itemize}
        \item no vertex in $S$ is in the base phase or boost phase, and
        \item at least one vertex in $S$ is in the potential phase.
    \end{itemize}

\end{definition}

Since the fingerprint $\tp$ is larger than the fingerprint $\ttt$, we can use Lemma~\ref{lm:large_fingerprint_refinement} to obtain the following.

\begin{corollary}
\label{cor:extend_refinement}
    The active sets at any moment $\currenttime$ during Boosted Execution form a {\em refinement} of the active sets at the same moment during Extended Execution.
    Similarly, the connected components at the same moment in Boosted Execution form a refinement of the connected components in Extended Execution.
\end{corollary}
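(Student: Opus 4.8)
The plan is to obtain this corollary as an immediate instance of Lemma~\ref{lm:large_fingerprint_refinement}, so that the only real work is identifying the two fingerprints to which that lemma should be applied and checking its hypotheses. I would take the ``inner'' execution to be Boosted Execution, whose active-set and connected-component dynamics coincide with those of $\GBG(\G,\ttt)$ (the auxiliary bookkeeping of $\ybase$, $\yadd$, and $\yb$ carried by \BoostedModGW{} does not influence which sets are active or how components merge), and the ``outer'' execution to be Extended Execution, i.e.\ the run $\ExtendedGW(\G,\ttt,\eps\rstar)$ of Definition~\ref{def:ep}. Both are monotonic moat growing algorithms in the sense of Definition~\ref{def:monotonic}: in each, a deactivated component can become active again only by merging with an active component (for \ExtendedGW{} this is because a set leaves $\activesets$ only when its potential is exhausted and its $\ttt$-values have passed, and its vertices go to $\deactivesets$ only then), and both end with the standard pruning phase.

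Next I would verify the fingerprint hypotheses. For Boosted Execution, the argument is exactly that of Lemma~\ref{lm:tplus_is_fingerprint} adapted to the deactivation rule in Line~\ref{line:BoostedMG_deactivation_condition} of \BoostedModGW{}: every vertex $v$ stays in active sets until $\ttt_v$, and an active set at moment $\currenttime$ contains some vertex with $\ttt_v>\currenttime$; hence $\ttt$ is a fingerprint of Boosted Execution. For Extended Execution, $\tp_v$ is by construction the moment at which $v$ is placed into $\deactivesets$, so $v$ is in active sets up to $\tp_v$; and since all vertices of a set are deactivated simultaneously when the set becomes inactive, any active set at moment $\currenttime$ consists only of vertices $v$ with $\tp_v\ge\currenttime$. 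Therefore $\tp$ is a fingerprint of Extended Execution.

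I would then record the comparison that $\tp$ is larger than $\ttt$: in $\ExtendedGW{}$, an active set containing $v$ cannot be deactivated before $\currenttime$ reaches $\max_{u\in S}\ttt_u\ge\ttt_v$, and may remain active even longer while consuming its potential, so $\tp_v\ge\ttt_v$ for every $v\in\V$ (this is already stated in Definition~\ref{def:ep}). Applying Lemma~\ref{lm:large_fingerprint_refinement} with $\tin=\ttt$ and $\tout=\tp$ then yields directly that at every moment $\currenttime$ the active sets of Boosted Execution form a refinement of the active sets of Extended Execution, and likewise for the connected components, which is precisely the statement of the corollary.

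I do not anticipate a genuine obstacle: the claim is essentially a one-line consequence of Lemma~\ref{lm:large_fingerprint_refinement} once the fingerprint conditions and the ordering $\tp\ge\ttt$ are in place. The only point that calls for a moment of care is the (routine) observation that Boosted Execution may be treated as a monotonic moat growing algorithm with fingerprint $\ttt$ despite \BoostedModGW{} carrying the extra quantities $\ybase,\yadd,\yb$ --- but this is exactly the reduction already used implicitly in Corollaries~\ref{cor:boost_activeset_refinement} and~\ref{cor:localsearch_activeset_refinement}, so nothing new is needed.
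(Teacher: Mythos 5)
Your proposal is correct and follows exactly the same route as the paper, which derives the corollary in a single sentence from Lemma~\ref{lm:large_fingerprint_refinement} using the observation that $\tp$ is larger than $\ttt$ (as recorded in Definition~\ref{def:ep}). The extra checks you carry out --- that Boosted Execution can be treated as $\GBG(\G,\ttt)$ despite \BoostedModGW{}'s side bookkeeping, and that $\ttt$ and $\tp$ are fingerprints of the respective executions --- are the routine hypotheses the paper leaves implicit, so nothing in your argument diverges from theirs.
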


Now, we define assignments for Extended Execution.

\begin{definition}
\label{def:rp}
We simulate Extended Execution to compute this assignment. First, assume each vertex $v$ has a \emph{potential assignment capacity}, $\potcap_v$, which initially is set to $\eps\rstar_v$.

Whenever two active sets, $A$ and $B$, merge, for any pair of vertices $u$ and $v$ in $A \cup B$ such that $\optcom(u) = \optcom(v)$ and $\priority_u < \priority_v$, with both $u$ and $v$ still active, we transfer the remaining potential assignment capacity of $u$ to $v$, i.e., $\potcap_v \gets \potcap_v + \potcap_u$ and $\potcap_u \gets 0$. After all transfers are completed, at most one vertex in each connected component of the optimal solution in the active set has a positive remaining potential assignment capacity.

Consider the moment $\currenttime$ during Extended Execution. For each active set $S \in \Ap_{\currenttime}$:
\begin{itemize}
    \item If $S$ is in the \emph{base phase}, we assign its growth at this moment \emph{proportionally} to all vertices in $\priorityset(\BaseSet(S, \tau))$, where the fraction of growth assigned to each vertex is $1/|\priorityset(\BaseSet(S, \tau))|$. We refer to the total growth assigned to vertex $v$ in the base phase by $S$ as $\rpb_{S,v}$.
    \item If $S$ is in the \emph{boost phase}, no growth is assigned to any vertex.
    \item If $S$ is in the \emph{potential phase}, we assign this moment of growth \emph{proportionally} to all vertices with positive remaining potential assignment capacity in $S$ (with fraction $1/k$ if there are $k$ such vertices). We then decrease the potential assignment capacity of these vertices accordingly. We refer to the total growth assigned to vertex $v$ in the potential phase by $S$ as $\rpp_{S,v}$.
\end{itemize}

We denote the vertices to which active set $S$ assigns its growth at moment $\currenttime$ by $\asp(S, \currenttime)$.

We also refer to the total amount of growth assigned to vertex $v$ as $\rp_v$, and to the total amount of growth assigned to vertex $v$ by active set $S$ as $\rp_{S,v} = \rpb_{S,v} + \rpp_{S,v}$.
\end{definition}

We first show that throughout the simulation described above, the potential remaining for an active set in Algorithm~\ref{alg:extend} matches the remaining potential assignment capacity. This ensures the validity of the assignment, and shows that any active set in the potential phase has capacity to which it can assign growth.

\begin{lemma}  
\label{lm:potcap-equal-setpot}
    In the simulation of Extended Execution (Definition~\ref{def:rp}), at any time $\currenttime$, for any active set $S$,  
    \[
        \sum_{v \in S} \potcap_v = \setpot_S.
    \]
\end{lemma}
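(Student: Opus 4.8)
The plan is to prove the identity by induction on the sequence of events (growth steps, edge additions that merge components, and deactivations) of Extended Execution, observing that the simulation described in Definition~\ref{def:rp} is just a re-run of Extended Execution with the auxiliary quantities $\potcap_v$ maintained alongside. Recall that, by the convention adopted after Lemma~\ref{lm:ext-rex-no-matter}, we may assume \ExtendedGW{} is called with potentials $\pot_v=\eps\rstar_v$; by Corollary~\ref{cor:sum-rstar-yyb} this choice satisfies $\sum_{v\in S}\rstar_{S,v}=\yyb_S$, so it is a legitimate substitute for the assignment used in Algorithm~\ref{alg:extend-wrapper}, and its input fingerprint is $\ttt$. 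For the base case, at $\currenttime=0$ every active set is a singleton $\{v\}$, and Line~\ref{line:init_setpot} sets $\setpot_{\{v\}}=\pot_v=\eps\rstar_v$, which is exactly the initial value of $\potcap_v$; hence $\sum_{w\in\{v\}}\potcap_w=\setpot_{\{v\}}$.

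\textbf{Inductive step.} Assume the identity holds for every active set just before an event, and consider the three event types. First, when two active sets $A$ and $B$ merge, Line~\ref{line:merge_setpot} sets $\setpot_{A\cup B}=\setpot_A+\setpot_B$, while the capacity transfers prescribed in Definition~\ref{def:rp} only move capacity among vertices of $A\cup B$ and hence leave $\sum_{v\in A\cup B}\potcap_v$ unchanged; combined with the induction hypothesis for $A$ and for $B$, this gives $\sum_{v\in A\cup B}\potcap_v=\setpot_A+\setpot_B=\setpot_{A\cup B}$. Second, consider a growth step in which all active sets grow by $\Delta$, and fix an active set $S$. If $\max_{v\in S}\ttt_v>\currenttime$, then $S$ is in its base or boost phase, so Definition~\ref{def:rp} assigns none of $S$'s growth from its potential and the condition of Line~\ref{line:decrease-setpot} fails; neither side changes. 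If instead $\max_{v\in S}\ttt_v\le\currenttime$, then $S$ is in its potential phase; since $S$ is still active while $\max_{v\in S}\ttt_v\le\currenttime$, Line~\ref{line:setpot_zero} would have deactivated it were $\setpot_S=0$, so $\setpot_S>0$, and by the induction hypothesis $\sum_{v\in S}\potcap_v=\setpot_S>0$. In particular some vertex of $S$ has positive capacity, so the potential-phase rule is well defined; that rule distributes exactly $\Delta$ units of growth over vertices of $S$ and reduces their capacities by the same total, so $\sum_{v\in S}\potcap_v$ drops by $\Delta$, matching the update $\setpot_S\gets\setpot_S-\Delta$ of Line~\ref{line:decrease-setpot}. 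Third, a deactivation merely removes $S$ from $\activesets$; it occurs only when $\setpot_S=0$, which by the induction hypothesis means $\sum_{v\in S}\potcap_v=0$ at that instant, so nothing further need be verified.

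\textbf{Main obstacle.} The only delicate point is the potential-phase growth step: we must know that "assign proportionally among vertices with positive capacity" is meaningful and that it subtracts precisely $\Delta$ in total. This is exactly what the induction hypothesis delivers, since it forces the total capacity of an active set past its $\ttt$ to equal $\setpot_S>0$. As in Algorithm~\ref{alg:extend}, where the step size is capped by $\Delta_p=\min_{S\in\activesets,\,\setpot_S>0}\setpot_S$, one should read the simulation as proceeding in sufficiently small sub-steps, subdividing a growth step whenever some vertex's capacity reaches $0$ so that no capacity becomes negative; within each sub-step the set of positive-capacity vertices is fixed and the total capacity of $S$ decreases by exactly the sub-step length. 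Summing over all sub-steps and all events completes the induction, and with it the proof that $\sum_{v\in S}\potcap_v=\setpot_S$ for every active set $S$ at every moment.
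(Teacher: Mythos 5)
Your proof is correct and takes essentially the same approach as the paper: base case at $\currenttime = 0$ followed by an induction over the events of Extended Execution (merges, growth steps, deactivations), with the key step being that during a potential-phase growth step both $\setpot_S$ and $\sum_{v\in S}\potcap_v$ decrease by exactly $\Delta$. Your closing remark about subdividing growth steps so that no individual capacity overshoots zero makes explicit a well-definedness point that the paper treats implicitly by viewing the simulation as a continuous process.
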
  
\begin{proof}  
    We have $\pot_v = \eps \rstar_v$. 
    Initially, for each vertex $v$ in Extended Execution, $\setpot_{\{v\}} = \pot_v = \eps \rstar_v$ (Line~\ref{line:init_setpot}). Similarly, $\potcap_v = \eps \rstar_v$ for all $v \in \V$ (Definition~\ref{def:rp}), establishing the base case at $\currenttime = 0$.  

    At any later time, when two sets $S_u$ and $S_v$ merge, their potentials sum:  
    \(
        \setpot_{S_u \cup S_v} \gets \setpot_{S_u} + \setpot_{S_v}
    \) (Line ~\ref{line:merge_setpot}).
    Similarly, $\potcap$ can undergo capacity transfers that preserve the total capacity (Definition~\ref{def:rp}), ensuring that the condition of the lemma remains valid.  

    During a \emph{potential phase} of duration $\Delta$ for active set $S$, we update  
    \(
        \setpot_S \gets \setpot_S - \Delta
    \)  
    (Line~\ref{line:decrease-setpot}). 
    In the beginning of this period, the total remaining capacity in $S$ is equal to $\setpot_S$, which is at least $\Delta$. 
    During this period, $S$ assigns growth with fraction one (Definition~\ref{def:rp}), and it follows that the total capacity decreases by $\Delta$. Therefore, the condition of the lemma is preserved.  

    Since every modification to $\setpot$ and $\potcap$ maintains
    \[
        \sum_{v \in S} \potcap_v = \setpot_S
    \]  
    for any active set $S$, the proof is complete.  
\end{proof}  

Next, we focus on the relationship of assignment and growth of active sets in Extended Execution. 
\begin{lemma}
    \label{lm:ext-yps}
    The total growth in Extended Execution can be bounded as follows:
    \begin{align*}
        \sum_{S\subseteq \V} \yp_{S} \leq \sum_{v\in \V} \rp_v + \lossone.
    \end{align*}
\end{lemma}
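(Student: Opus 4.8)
# Proof Proposal for Lemma~\ref{lm:ext-yps}

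The plan is to account for the total growth $\sum_{S \subseteq \V} \yp_S$ by splitting it according to the three phases an active set can be in during Extended Execution: the base phase, the boost phase, and the potential phase. For the base phase and the potential phase, the assignment $\rp$ directly captures the growth (by Definition~\ref{def:rp}, in the base phase growth is distributed proportionally among $\priorityset(\BaseSet(S,\currenttime))$ with total fraction $1$, and in the potential phase growth is distributed with total fraction $1$ among vertices with positive remaining potential assignment capacity, which is nonempty by Lemma~\ref{lm:potcap-equal-setpot} since $\setpot_S > 0$ for any active set still growing in the potential phase). So the growth from those two phases is exactly $\sum_{v \in \V} \rp_v$. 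The only growth not captured by $\rp$ is the growth of active sets while they are in the boost phase. Thus the core of the argument is to show that the total growth of active sets during their boost phase in Extended Execution is at most $\lossone$.

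First I would make precise the decomposition: at any moment $\currenttime$, each active set $S \in \Ap_\currenttime$ is in exactly one of the base, boost, or potential phase (by Definitions~\ref{def:base-phase},~\ref{def:boost-phase},~\ref{def:poten-phase}, these are mutually exclusive and exhaustive for an active set — an active set with a vertex in the base phase is in the base phase; otherwise if it has a boost-phase vertex it is in the boost phase; otherwise it must have a potential-phase vertex since it is still active). Write $\yp_{\base}$, $\yp_{\add}$, $\yp_{pot}$ for the total growth summed over these three phase-types. Then $\sum_S \yp_S = \yp_{\base} + \yp_{\add} + \yp_{pot}$, and $\sum_{v} \rp_v = \yp_{\base} + \yp_{pot}$ by the assignment rules above. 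So it remains to show $\yp_{\add} \le \lossone$.

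To bound $\yp_{\add}$, I would relate the boost-phase growth in Extended Execution to the boost-phase growth in Boosted Execution, which equals $\yy_{\add} = \lossone$ (by Definition~\ref{def:yy-base-boost} and Lemma~\ref{lm:loss-equal-yadd}). The key tool is Corollary~\ref{cor:extend_refinement}: at every moment $\currenttime$, the active sets of Boosted Execution form a refinement of the active sets of Extended Execution. Consider an active set $S$ in Extended Execution that is in the boost phase at time $\currenttime$ — so no vertex of $S$ is in the base phase and some vertex $v \in S$ satisfies $\tplus_v \le \currenttime < \ttt_v$. Since $v$ is not past its $\ttt$ value, at the same moment in Boosted Execution $v$ lies in some active set $S'$, and by Corollary~\ref{cor:extend_refinement} we have $S' \subseteq S$. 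Moreover no vertex of $S'$ is in the base phase (since $S' \subseteq S$ and $\BaseSet(S',\currenttime) \subseteq \BaseSet(S,\currenttime) = \emptyset$ by Lemma~\ref{lm:base-set-subset}), while $v \in S'$ is in the boost phase, so $S'$ is in the boost phase in Boosted Execution. Since Boosted Execution's active sets at a fixed moment are disjoint and each boost-phase active set $S$ of Extended Execution contains at least one distinct such $S'$, summing over a moment $\currenttime$ the boost-phase growth rate in Extended Execution (= number of boost-phase active sets) is at most that in Boosted Execution. Integrating over time gives $\yp_{\add} \le \yy_{\add} = \lossone$, completing the proof.

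The main obstacle I anticipate is the careful bookkeeping in the last step: ensuring that the map $S \mapsto S'$ is well-defined and injective at each moment (different boost-phase active sets $S_1, S_2$ of Extended Execution are disjoint, so their chosen witnesses $S_1', S_2'$ — being subsets — are disjoint, hence distinct), and correctly handling the instantaneous "growth rate" argument via integration rather than a discrete event count. One should also double-check the edge case where an active set of Extended Execution is in the boost phase but the chosen witness vertex $v$ might have $\currenttime = \ttt_v$ exactly; picking $v$ with $\currenttime < \ttt_v$ (which exists by definition of boost phase for $S$) avoids this. Everything else follows from the phase decomposition and the assignment definitions already established.
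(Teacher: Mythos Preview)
Your proposal is correct and follows essentially the same approach as the paper: decompose the growth by phase, observe that base- and potential-phase growth is fully captured by $\rp$ (total fraction $1$ in each case), and bound the boost-phase growth by $\yy_{\add}=\lossone$ via the refinement map from Corollary~\ref{cor:extend_refinement} together with disjointness for injectivity. The paper presents this slightly less formally as a single moment-by-moment accounting argument, but the logic is identical.
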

\begin{proof}
    Consider any active set $S \in \Ap_\currenttime$ growing at any moment $\currenttime$ of Extended Execution. There are two cases for $S$:

    In the first case, suppose $S$ is in the base phase or the potential phase. By Definition~\ref{def:rp}, $S$ assigns its growth at this moment, with total fraction $1$, to a subset of its vertices. Consequently, the growth for this moment is calculated in both $\yp_S$ and $\sum_{v \in \V} \rp_v$.

    In the second case, consider $S$ is in the boost phase. For any $v \in S$, we have $\tplus_v \leq \currenttime$, and there exists a vertex $u$ such that $\currenttime < \ttt_u$. 
    At the same moment in Boosted Execution, $u$ is in an active set $S'$ since $\currenttime < \ttt_u$. By Corollary~\ref{cor:extend_refinement}, we have $S' \subseteq S$. Any vertex $v \in S' \subseteq S$ satisfies $\tplus_v \leq \currenttime$, meaning all vertices of $S'$ are in the boost phase. Therefore, at this moment, there exists a subset $S' \subseteq S$ whose growth is calculated in $\yy_{\add}$, while the growth of $S$ is calculated in $\yp_S$.

    Thus, at any given moment during Extended Execution, the growth of each active set is accounted for in either $\sum_{v \in \V} \rp_v$ or $\yy_{\add}$. Additionally, since active sets at any moment are mutually exclusive, no calculation is considered multiple times, which leads to:
    \begin{align*}
        \sum_{S \subseteq \V} \yp_S & \leq \sum_{v \in \V} \rp_v + \yy_{\add} \\
        &= \sum_{v \in \V} \rp_v + \lossone. \tag{Lemma \ref{lm:loss-equal-yadd}}
    \end{align*} 
\end{proof}

In the following lemma, we prove that the total growth assigned to a vertex by active sets in the base phase of Extended Execution is bounded by the total growth assigned to that vertex in Boosted Execution.
\begin{lemma}
\label{lm:rp-basephase-less-rstar}
    In assignment $\rp$, for any vertex $v$, 
    $$
        \sum_{S \subseteq \V} \rpb_{S, v} \le \rstar_v.
    $$
\end{lemma}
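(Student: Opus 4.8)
## Proof Proposal for Lemma~\ref{lm:rp-basephase-less-rstar}

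The plan is to compare the base-phase portion of the assignment $\rp$ in Extended Execution with the assignment $\rstar$ in Boosted Execution, moment by moment, exploiting the refinement relationship between the two executions established in Corollary~\ref{cor:extend_refinement}. The key observation is that both $\rpb$ (the base-phase part of $\rp$) and $\rstar$ distribute growth proportionally over $\priorityset(\BaseSet(S,\currenttime))$, so the natural strategy is to show that whenever an active set $S$ in Extended Execution assigns base-phase growth to a vertex $v$, there is a corresponding active set $S' \subseteq S$ in Boosted Execution assigning at least as much growth to $v$ at the same moment.

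First I would fix a moment $\currenttime$ and an active set $S \in \Ap_\currenttime$ that is in its base phase and assigns base-phase growth to $v$, so $v \in \priorityset(\BaseSet(S,\currenttime))$ and in particular $\currenttime < \tplus_v$. Since $\currenttime < \tplus_v \le \ttt_v$, vertex $v$ must lie in some active set $S'$ at moment $\currenttime$ in Boosted Execution, and by Corollary~\ref{cor:extend_refinement} we have $S' \subseteq S$. Because $\currenttime < \tplus_v$, the set $S'$ is in its base phase in Boosted Execution (it contains the base-phase vertex $v$), so by Definition~\ref{def:rstar} it assigns growth proportionally over $\priorityset(\BaseSet(S',\currenttime))$. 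Now I would invoke Lemma~\ref{lm:base-set-subset} to get $\BaseSet(S',\currenttime) \subseteq \BaseSet(S,\currenttime)$, and then Lemma~\ref{lm:smaller-priorityset} (with the smaller set $\BaseSet(S',\currenttime)$, which contains $v$) to conclude $v \in \priorityset(\BaseSet(S',\currenttime))$. Thus $S'$ indeed assigns $\rstar$-growth to $v$ at moment $\currenttime$.

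It remains to compare the fractions. The fraction $S$ assigns to $v$ in $\rpb$ is $1/|\priorityset(\BaseSet(S,\currenttime))|$, while the fraction $S'$ assigns to $v$ in $\rstar$ is $1/|\priorityset(\BaseSet(S',\currenttime))|$. Since $\BaseSet(S',\currenttime) \subseteq \BaseSet(S,\currenttime)$, Lemma~\ref{lm:smaller-priorityset-size} gives $|\priorityset(\BaseSet(S',\currenttime))| \le |\priorityset(\BaseSet(S,\currenttime))|$, hence the $\rstar$-fraction is at least the $\rpb$-fraction. To turn this pointwise domination into the stated inequality, I would argue that the map $S \mapsto S'$ (at each fixed moment) is injective on the active sets contributing base-phase growth to $v$: the active sets of Boosted Execution at moment $\currenttime$ are pairwise disjoint, and a single active set $S$ of Extended Execution can contain $v$ at moment $\currenttime$ only once, so the $S'$ assigned to distinct contributing moments track a single nested chain of Boosted-Execution active sets containing $v$. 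More cleanly, I would integrate over time: at every moment at which some base-phase set of Extended Execution assigns growth to $v$, the (unique) Boosted-Execution active set containing $v$ also assigns $\rstar$-growth to $v$ with a fraction that is at least as large; integrating over all such moments yields $\sum_{S} \rpb_{S,v} \le \rstar_v$.

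The main obstacle I anticipate is making the "matching" rigorous rather than hand-wavy: one must be careful that at a fixed moment $\currenttime$ there is exactly one active set of Extended Execution containing $v$ and exactly one of Boosted Execution containing $v$ (both true since active sets at any moment are disjoint), and that $v$ being in the base phase of the Extended set forces $v$ to be active — hence present in an active set — in Boosted Execution at that moment. The cleanest route is the integral formulation: write $\rstar_v = \int_0^{\infty} f^{\ee}(\currenttime)\,\diff\currenttime$ and $\sum_S \rpb_{S,v} = \int_0^{\infty} f^{\ep}(\currenttime)\,\diff\currenttime$, where $f^{\ee}(\currenttime)$ and $f^{\ep}(\currenttime)$ are the instantaneous assignment rates to $v$ in the respective executions, and show $f^{\ep}(\currenttime) \le f^{\ee}(\currenttime)$ pointwise using the chain of inclusions above (noting $f^{\ep}(\currenttime) = 0$ unless $v$ is in a base-phase active set, in which case $\currenttime < \tplus_v$ and the argument applies).
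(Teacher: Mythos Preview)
Your proposal is correct and follows essentially the same approach as the paper: fix a moment $\currenttime$, use Corollary~\ref{cor:extend_refinement} to find $S' \subseteq S$ in Boosted Execution containing $v$, apply Lemmas~\ref{lm:base-set-subset}, \ref{lm:smaller-priorityset}, and \ref{lm:smaller-priorityset-size} to show $v \in \priorityset(\BaseSet(S',\currenttime))$ with a fraction at least as large, and conclude by pointwise (moment-by-moment) comparison. Your integral formulation is slightly more explicit than the paper's phrasing, but the argument is the same.
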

\begin{proof}
    In Extended Execution, assume that at moment $\currenttime$, active set $S \in \Ap_{\currenttime}$ in the base phase assigns its growth to vertex $v$. This implies that $v \in \BaseSet(S, \currenttime)$, and therefore $\currenttime < \tplus_v$. Consequently, $v$ is active in Boosted Execution at moment $\currenttime$, meaning there exists an active set $S' \in \Aa_{\currenttime}$ such that $v \in S'$. 

By Corollary~\ref{cor:extend_refinement}, we have $S' \subseteq S$. Additionally, by Lemma~\ref{lm:base-set-subset}, it follows that $\BaseSet(S', \currenttime) \subseteq \BaseSet(S, \currenttime)$. Since $v \in \priorityset(\BaseSet(S, \currenttime))$, Lemma~\ref{lm:smaller-priorityset} tells us that $v \in \priorityset(\BaseSet(S', \currenttime))$. Therefore, $S'$ assigns its growth to $v$ in assignment~$\rstar$.

To prove the inequality, it suffices to show that, at any given moment, the growth that set $S$ assigns to vertex $v$ in the assignment $\rp$ is no greater than the growth that $S'$ assigns to $v$ at the same moment. This is because the growth assigned to $v$ by $S$ is calculated in $\rpb_{S, v}$, while the growth assigned to $v$ by $S'$ is calculated in $\rr_v$. 

To proceed, since $\BaseSet(S', \currenttime) \subseteq \BaseSet(S, \currenttime)$, by Lemma~\ref{lm:smaller-priorityset-size}, we have
$$
|\priorityset(\BaseSet(S', \currenttime))| \le |\priorityset(\BaseSet(S, \currenttime))|.
$$
Since both $S$ and $S'$ assign growth proportionally (Definitions~\ref{def:rstar} and~\ref{def:rp}), it follows that $S'$ assigns a larger fraction to $v$ than $S$, completing the proof.
\end{proof}

Next, we aim to establish a similar bound on the growth assigned to a vertex 
by active sets during the potential phase in Extended Execution, 
this time using the potential the vertex receives based on growth assigned to 
in Boosted Execution. 
While such a bound may not hold for individual vertices, 
we prove an analogous inequality for certain subsets of vertices—
specifically, the intersection of any maximal superactive set 
in Extended Execution with any connected component of the optimal solution.
\begin{lemma}
\label{lm:rp-potphase-exatcly-eps-rstar}
    In assignment $\rp$, for the intersection of any connected component $\opti$ of the optimal solution and any maximal superactive set $S$ of Extended Execution,
    $$
        \sum_{v \in S \cap \opti} \sum_{S' \subseteq V} \rpp_{S', v} = \eps \sum_{v \in S \cap \opti} \rstar_v.
    $$
\end{lemma}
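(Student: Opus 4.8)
## Proof Plan for Lemma~\ref{lm:rp-potphase-exatcly-eps-rstar}

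The plan is to track the potential assignment capacity $\potcap$ through the simulation described in Definition~\ref{def:rp}, and show that all of the $\eps\rstar$ potential originally placed on vertices of $S \cap \opti$ is eventually assigned during potential phases to vertices that remain in $S \cap \opti$, while no potential originating outside $S \cap \opti$ leaks in. The key structural fact is that $S$ is a maximal superactive set of Extended Execution, so every vertex in $S$ is active from the start of the algorithm until the moment the active set from which $S$ is derived is deactivated; consequently, all capacity transfers among vertices of $S$ stay within $S$, and the active set realizing $S$ consumes exactly the total potential of its vertices before it can deactivate.

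First I would set up the bookkeeping. By Lemma~\ref{lm:potcap-equal-setpot}, the remaining potential assignment capacity $\sum_{v \in S'} \potcap_v$ of any active set $S'$ equals $\setpot_{S'}$, so an active set in the potential phase always has positive capacity and can legitimately assign its growth. Next, I would argue that capacity transfers respect both the partition into optimal components and the superactive set $S$: a transfer in Definition~\ref{def:rp} moves capacity from $u$ to $v$ only when $\optcom(u) = \optcom(v)$ and both are active, so capacity stays within a single $\opti$; and since all vertices of $S$ remain active and $S$ is laminar-maximal, any active set containing a vertex of $S$ before $S$'s deactivation intersected with $S$ is exactly a union of vertices of $S$, so transfers involving vertices of $S$ keep the capacity inside $S$. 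Combining these, the total capacity $\sum_{v \in S \cap \opti} \potcap_v$ is conserved except when it is spent during potential-phase assignments, and it never receives inflow from outside $S \cap \opti$.

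Then I would quantify the spending. Initially $\sum_{v \in S \cap \opti} \potcap_v = \eps \sum_{v \in S \cap \opti} \rstar_v$. Any potential-phase growth assigned to a vertex $v \in S \cap \opti$ by some active set $S'$ decreases $\potcap_v$ by the same amount, contributing equally to $\sum_{v} \rpp_{S',v}$ and to the drop in total capacity; moreover by Corollary~\ref{cor:extend_refinement} (and the superactive maximality of $S$) any active set $S'$ in the potential phase that assigns to a vertex of $S \cap \opti$ has $S' \cap (S \cap \opti) \subseteq S \cap \opti$, so this growth is never assigned to a vertex outside $S \cap \opti$. Finally, I would show the capacity is fully exhausted: the active set $S^\ast$ from which $S$ is derived deactivates only when $\setpot_{S^\ast} = 0$ (Line~\ref{line:setpot_zero}), and since $S^\ast$'s capacity is at least that of its subset $S \cap \opti$ and only potential-phase spending within $S$ reduces it, $\sum_{v \in S \cap \opti} \potcap_v$ must reach $0$ by then. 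Therefore the total assigned potential-phase growth to $S \cap \opti$ equals the initial capacity $\eps \sum_{v \in S \cap \opti} \rstar_v$.

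The main obstacle I anticipate is the careful verification that no potential-phase growth assigned by an active set $S'$ to a vertex of $S\cap\opti$ is simultaneously assigned to a vertex outside $S \cap \opti$, i.e. that the ``assignee'' set $\asp(S',\currenttime)$ restricted to $\opti$ lies inside $S$ — this requires invoking that $S$ is a \emph{maximal} superactive set together with the refinement corollary to conclude that while $S^\ast$ is still active, every active set meeting $S$ contains $S$ (or a laminar piece of it) in a way compatible with the transfer rule; and symmetrically that capacity from vertices outside $S \cap \opti$ is never transferred in, which uses that such vertices either lie in a different optimal component or become inactive before merging with $S$'s active set. Once this containment is pinned down, the conservation-of-capacity argument and the full-exhaustion argument are routine.
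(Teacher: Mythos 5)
Your proposal is correct and follows essentially the same route as the paper's proof: you invoke Lemma~\ref{lm:potcap-equal-setpot} to equate remaining capacity with $\setpot$, argue that capacity transfers are confined to a single component $\opti$ and a single maximal superactive set $S$ (via the transfer rule requiring both endpoints active and in the same optimal component, together with laminarity of superactive sets), and then conclude by observing that $\setpot_{S^{\ast}} = 0$ at deactivation forces the total capacity of $S \cap \opti$ to zero, so the spent capacity equals the initial $\eps\sum_{v\in S\cap\opti}\rstar_v$. The one slip is the line ``$S' \cap (S \cap \opti) \subseteq S \cap \opti$,'' which is a tautology and does not express the containment you actually need (that the $\opti$-vertices to which $S'$ assigns potential-phase growth lie inside $S$); the surrounding sentences and your final paragraph make clear you have the right argument in mind, so this reads as a wording issue rather than a gap.
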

\begin{proof}
    In Extended Execution, let $S'$ be the active set $S$ is derived from, and let $\currenttime$ be the time when $S'$ becomes deactivated. We observe that $\setpot_{S'} = 0$ at this moment (Line~\ref{line:setpot_zero}). By Lemma~\ref{lm:potcap-equal-setpot}, we also have $\sum_{v \in S'} \potcap_v = \setpot_{S'} = 0$, indicating that no vertex in $S'$, and therefore none in $S \subseteq S'$, has any remaining capacity.  

    Now, we analyze the behavior of the assignment $\rp$ based on Definition~\ref{def:rp}.  
    First, no vertex transfers its capacity to any vertex outside its connected component of the optimal solution.  
    Second, an active set is deactivated precisely when its $\setpot$ value reaches zero, meaning no assignment capacity remains among its vertices. Thus, capacity is transferred only between vertices that have not been deactivated yet within the same active set.
    This means that whenever capacity is exchanged between two vertices, they must belong to the same superactive set in Extended Execution.
    It follows from laminarity of superactive sets that for any maximal superactive set in Extended Execution, capacity is not given to or received from vertices outside the superactive set, and its capacity remains confined to its vertices.  

    Combining these observations, we conclude that at moment $\currenttime$, the assignment capacity of vertices in $S \cap \opti$ is used solely through growth assigned to vertices within the same set. Since these vertices reach a final capacity of zero, the total growth assigned to vertices in $S \cap \opti$ during the potential phase of active must be exactly 
    \[\eps \cdot \sum_{v \in S \cap \opti} \rstar_v.\]  
\end{proof}

Now, we can combine the bounds obtained in Lemma~\ref{lm:rp-basephase-less-rstar} and Lemma~\ref{lm:rp-potphase-exatcly-eps-rstar} for the total growth assigned to each vertex in Extended Execution.
\begin{lemma}\label{lm:ext-rpeps}
    For the intersection of any connected component $\opti$ of the optimal solution and any maximal superactive set $S$ in Extended Execution, 
    $$
        \sum_{v \in S \cap \opti} \rp_v \le 
        (1+\eps) \sum_{v \in S \cap \opti} \rstar_v.
    $$
\end{lemma}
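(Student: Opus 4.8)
## Proof Proposal for Lemma~\ref{lm:ext-rpeps}

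The plan is to simply combine the two previously established bounds on the base-phase and potential-phase contributions to $\rp_v$. Recall from Definition~\ref{def:rp} that the total growth assigned to a vertex $v$ decomposes as $\rp_v = \sum_{S' \subseteq \V}\rp_{S',v} = \sum_{S' \subseteq \V}\left(\rpb_{S',v} + \rpp_{S',v}\right)$, since each active set's growth in Extended Execution is assigned either during its base phase (contributing to $\rpb$), during its potential phase (contributing to $\rpp$), or not at all (boost phase). So the first step is to write
\begin{align*}
    \sum_{v \in S \cap \opti} \rp_v = \sum_{v \in S \cap \opti}\sum_{S'\subseteq \V}\rpb_{S',v} + \sum_{v\in S\cap\opti}\sum_{S'\subseteq\V}\rpp_{S',v}.
\end{align*}

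Next, I would bound the first term using Lemma~\ref{lm:rp-basephase-less-rstar}, which gives $\sum_{S'\subseteq\V}\rpb_{S',v}\le \rstar_v$ for every individual vertex $v$; summing over $v\in S\cap\opti$ yields $\sum_{v\in S\cap\opti}\sum_{S'\subseteq\V}\rpb_{S',v}\le \sum_{v\in S\cap\opti}\rstar_v$. For the second term I would invoke Lemma~\ref{lm:rp-potphase-exatcly-eps-rstar}, whose hypothesis is exactly that $S$ is a maximal superactive set of Extended Execution and $\opti$ a connected component of $\OPT$ — precisely the hypothesis of the current lemma — giving $\sum_{v\in S\cap\opti}\sum_{S'\subseteq\V}\rpp_{S',v} = \eps\sum_{v\in S\cap\opti}\rstar_v$. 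Adding the two bounds gives
\begin{align*}
    \sum_{v\in S\cap\opti}\rp_v \le \sum_{v\in S\cap\opti}\rstar_v + \eps\sum_{v\in S\cap\opti}\rstar_v = (1+\eps)\sum_{v\in S\cap\opti}\rstar_v,
\end{align*}
which is the desired inequality.

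There is essentially no obstacle here: the lemma is a bookkeeping corollary of the two preceding lemmas, and the only thing to be careful about is that the decomposition $\rp_{S',v} = \rpb_{S',v} + \rpp_{S',v}$ (stated at the end of Definition~\ref{def:rp}) correctly accounts for all assigned growth — i.e., that growth during the boost phase contributes nothing, which is explicitly part of Definition~\ref{def:rp}. One subtlety worth a sentence of justification is that Lemma~\ref{lm:rp-basephase-less-rstar} is a per-vertex bound that we are free to sum, whereas Lemma~\ref{lm:rp-potphase-exatcly-eps-rstar} is genuinely a set-level equality (and need not hold per vertex, since potential-assignment capacity is transferred between vertices within a superactive set); restricting to the intersection with a single $\opti$ and a single maximal superactive set $S$ is exactly what makes the potential-phase accounting exact, so the statement must be phrased at that granularity — which it is.
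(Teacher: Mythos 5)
Your proof is correct and matches the paper's argument step for step: decompose $\rp_v$ into base-phase and potential-phase contributions via Definition~\ref{def:rp}, bound the former per vertex with Lemma~\ref{lm:rp-basephase-less-rstar}, apply the set-level equality of Lemma~\ref{lm:rp-potphase-exatcly-eps-rstar} to the latter, and add. Your closing remark about why the potential-phase bound must be stated at the granularity of $S\cap\opti$ rather than per vertex is accurate and not over-elaborated.
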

\begin{proof}
    We derive the inequality as follows:
    \begin{align*}
        \sum_{v \in S \cap \opti} \rp_v 
        &= \sum_{v \in S \cap \opti} \sum_{S' \subseteq V} \rp_{S', v}\\
        &= \sum_{v \in S \cap \opti} \sum_{S' \subseteq V} \rpb_{S', v}
        + \sum_{v \in S \cap \opti} \sum_{S' \subseteq V} \rpp_{S', v} \tag{Definition~\ref{def:rp}}\\
        &\le \sum_{v \in S \cap \opti} \rstar_{v}
        + \sum_{v \in S \cap \opti} \sum_{S' \subseteq V} \rpp_{S', v} \tag{Lemma~\ref{lm:rp-basephase-less-rstar}}\\
        &= \sum_{v \in S \cap \opti} \rstar_{v}
        + \eps \sum_{v \in S \cap \opti} \rstar_v \tag{Lemma~\ref{lm:rp-potphase-exatcly-eps-rstar}}\\
        &= (1 + \eps) \sum_{v \in S \cap \opti} \rstar_v.
    \end{align*}
\end{proof}

\subsection{Extended-Boosted Execution}

Next, we focus on the execution of \LocalSearch{} after extension.

\begin{definition}[Extended-Boosted Execution]
\label{def:ez}
We refer to the corresponding execution of \BoostedModGW{} (Line~\ref{line:corres-mg}; Algorithm~\ref{alg:local-search}) for the \LocalSearch{} in Line~\ref{exe:Xlocalsearch} of Algorithm~\ref{alg:main} as Extended-Boosted Execution. Consequently, we define the following:
\begin{itemize}
    \item The growth of set $S$ is denoted by $\yz_S$,
    \item The family of active sets at moment $\currenttime$ is $\Az_{\currenttime}$,
    \item The superactive set derived from the active set $S$ is $\corez(S)$, 
    \item The family of maximal superactive sets is $\maxactsz$,
    \item The solution found by this execution is $\solext$,
    \item The total loss in the \LocalSearch{} is $\losstwo$,
    \item The total win in the \LocalSearch{} is $\wintwo$,
    \item The output fingerprint of \LocalSearch{} is $\tz$, where $\tz_v \ge \tp_v$ for all $v \in \V$.
\end{itemize}  
\end{definition}

For any connected component $\opti$ in $\OPT$, we define the following notation.

\begin{definition}
\label{def:largest}
     For any component $\opti\in\OPT$, let $S\in \maxactsz$ be the maximal superactive set with a non-empty intersection with $\opti$ that is deactivated the latest in Extended-Boosted Execution. Then, we define $\largest(\opti)$ as the intersection $S \cap \opti$ and use $\xtree$ to denote the minimal subtree of $\otree$ that connects the vertices in $\largest(\opti)$.
\end{definition}

Next, we define the final boost phase as follows:
\begin{definition}(Final Boost Phase)
    A vertex $v$ is in the final boost phase at moment $\currenttime$ if $\tp_v \le \currenttime < \tz_v$.

    A set $S$ is in the final boost phase at moment $\currenttime$ if it has no vertex in base, boost, or potential phases while it has at least one vertex in the final boost phase. 
\end{definition}

We refer to the total growth of active sets in Extended-Boosted Execution during the final boost phase as $\yz_{\add}$. This corresponds to the $\yadd$ in the output of $\BoostedModGW(\G, \tp, \tz)$ in the last \LocalSearch{} call in Line~\ref{exe:Xlocalsearch} of Algorithm~\ref{alg:main}, since in Line~\ref{line:BoostedMG_compute_boost}, the condition $\tp_v \le \currenttime$ holds for all vertices in an active set, which is equivalent to the final boost phase.

We next define the function $\Deputy$, which is conceptually similar to the $\priorityset$ function. This allows us to define an assignment for Extended-Boosted Execution that follows priority values. 

\begin{definition}[Deputy]
\label{def:deputy}
    Let $U$ and $S$ be sets of vertices such that $S \subseteq U \subseteq \V$. Then, $\Deputy(U, S)$ is the set of vertices in $U$ that have the highest priority within the intersection of $U$ and their respective connected components in $\OPT$, for those components that intersect $S$ non-trivially, i.e.,
    $$
        \Deputy(U, S) = \{u \in U \mid \exists v \in S: u = \argmax_{w \in U \cap \optcom(v)} \priority_w\}.
    $$
\end{definition}
Then, we prove the following for $\Deputy$.
\begin{lemma}
\label{lm:deputy-large}
For any sets of vertices \( S \subseteq U \subseteq V \),
\[
    |\Deputy(U, S)| \ge \left| \{ \optcom(v) \mid v \in S \} \right|.
\]
\end{lemma}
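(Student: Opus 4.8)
The plan is to exhibit an injection from the family of optimal components that meet $S$ into $\Deputy(U,S)$. Write $\mathcal{K} = \{\optcom(v) \mid v \in S\}$ for the set of connected components of $\OPT$ that intersect $S$; the goal is to assign to each $\opti \in \mathcal{K}$ a distinct representative vertex lying in $\Deputy(U,S)$, which immediately gives $|\Deputy(U,S)| \ge |\mathcal{K}|$.

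First I would fix $\opti \in \mathcal{K}$ and pick some $v \in S$ with $\optcom(v) = \opti$. Since $S \subseteq U$, we have $v \in U \cap \opti$, so $U \cap \opti \neq \emptyset$; because priorities form a total order on $\V$ (Definition~\ref{def:priority}), the vertex $u_\opti := \argmax_{w \in U \cap \opti} \priority_w$ is well-defined and unique, and it belongs to $U$. Applying Definition~\ref{def:deputy} with this choice of $v$ shows $u_\opti \in \Deputy(U,S)$. This defines a map $\opti \mapsto u_\opti$ from $\mathcal{K}$ into $\Deputy(U,S)$.

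Next I would verify injectivity. If $\opti_1, \opti_2 \in \mathcal{K}$ are distinct, then as connected components of the forest $\OPT$ they have disjoint vertex sets (recall from Section~\ref{sec:prelim} that $\OPT$ is also viewed as a partition of the vertices). Since $u_{\opti_1} \in \opti_1$ and $u_{\opti_2} \in \opti_2$, we conclude $u_{\opti_1} \neq u_{\opti_2}$. Hence the map is injective, and therefore $|\Deputy(U,S)| \ge |\mathcal{K}| = |\{\optcom(v) \mid v \in S\}|$, which is the claim.

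There is no substantial obstacle in this argument; the only points requiring care are the well-definedness of the argmax — which relies on the fixed tie-breaking total order over vertices from Definition~\ref{def:priority} so that each nonempty $U \cap \opti$ has a unique highest-priority element — and the vertex-disjointness of distinct components of $\OPT$, both of which are already established in the preliminaries.
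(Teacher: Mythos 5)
Your proof is correct and follows essentially the same route as the paper: for each component $\opti$ that meets $S$, you take the highest-priority vertex of $U\cap\opti$, check it satisfies Definition~\ref{def:deputy}, and use vertex-disjointness of optimal components to get distinctness. The only cosmetic difference is that you phrase it explicitly as an injection, while the paper states the same counting implicitly.
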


\begin{proof}
Consider any vertex \( u \in S \), and let \( \opti = \optcom(u) \). Since \( u \in S \subseteq U \), it follows that \( U \cap \opti \ne \emptyset \).

Let \( v \in U \cap \opti \) be the vertex with the highest priority among those in the intersection. We claim that \( v \in \Deputy(U, S) \), which we will verify shortly. Assuming this claim, it follows that each distinct optimal component \( \opti \) corresponding to some \( u \in S \) maps to a vertex \( v \in \Deputy(U, S) \) such that \( v \in \opti \). As $v$ is in $\opti$ and the components are disjoint, these vertices are distinct. This implies the number of vertices in $\Deputy(U, S)$ is at least the number of distinct optimal components over \( S \), as desired.

To justify the claim that \( v \in \Deputy(U, S) \), we check that it satisfies the conditions of Definition~\ref{def:deputy}:
\begin{itemize}
    \item \( v \in U \) since \( v \in U \cap \opti \)
    \item For the mentioned \( u \), we have \( \opti = \optcom(u) \), and \( v = \arg\max_{w \in U \cap \optcom(u)} \priority_w \).
\end{itemize}
Hence, \( v \in \Deputy(U, S) \), completing the proof.
\end{proof}

Now, we define the growth assignment to vertices for this execution.

\begin{definition}
\label{def:rz}
    Consider the moment $\currenttime$ of Extended-Boosted Execution. For each active set $S$:
    \begin{itemize}
        \item If $S$ is in the base phase, we divide the growth of this moment proportionally among all vertices in $\priorityset(\BaseSet(S, \currenttime))$, meaning each vertex in this set is assigned a fraction of $1/|\priorityset(\BaseSet(S, \currenttime))|$. We refer to the total growth assigned by active set $S$ to vertex $v$ in its base phase as $\rzb_{S,v}$.
        \item If $S$ is in the boost phase, no growth is assigned.
        \item If $S$ is in the potential phase, define the family $\Fam$ as:
        $$
            \Fam = \{S' \in \Ap_{\currenttime} \mid S' \subseteq S\}.
        $$
        We divide the growth of this moment of $S$ proportionally among all vertices in 
        $$\Deputy(\corez(S), \bigcup_{S' \in \Fam} \asp(S', \currenttime)).$$
        Each vertex in this set receives $1/k$ if the size of this set is $k$. We will later show that this set is non-empty, ensuring that any active set in the potential phase assigns its growth.
        
        We refer to the total growth assigned by active set $S$ to vertex $v$ in its potential phase as $\rzp_{S,v}$.
        \item If $S$ is in the final boost phase, similar to the boost phase, no growth is assigned.
    \end{itemize}
    We denote $\rz_v$ as the total growth assigned to vertex $v$ and $\rz_{S,v}$ as the total growth assigned to vertex $v$ by set $S$.
\end{definition}

In the following lemma, we demonstrate that each active set in the potential phase assigns its growth to a non-empty set.
\begin{lemma}
\label{lm:rz-pot-features}
    For active set $S$ in the potential phase at moment $\currenttime$ of Extended-Boosted Execution, given $\Fam = \{S' \in \Ap_{\currenttime} \mid S' \subseteq S\}$,
    \[\bigcup_{S' \in \Fam} \asp(S', \currenttime)\]
    is a non-empty subset of $\corez(S)$. Consequently, \[\Deputy(\corez(S), \bigcup_{S' \in \Fam} \asp(S', \currenttime))\] is defined and non-empty.
\end{lemma}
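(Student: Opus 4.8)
I would split the statement into three claims: (i) the union $\bigcup_{S' \in \Fam} \asp(S', \currenttime)$ is non-empty; (ii) this union is contained in $\corez(S)$; and (iii) given (i) and (ii), the application of $\Deputy$ is well-defined and its output is non-empty. The workhorse throughout is fingerprint monotonicity: the three executions have pointwise-comparable fingerprints $\ttt \le \tp \le \tz$ (Corollary~\ref{cor:tt-ge-tplus}, Definitions~\ref{def:ep} and~\ref{def:ez}), so Lemmas~\ref{lm:large_fingerprint_refinement} and~\ref{lm:large_fingerprint_vertex_remains_active} apply to the pairs $(\tp,\tz)$ and $(\ttt,\tp)$. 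I would also use that the base phase and boost phase depend only on $\tplus$ and $\ttt$, hence are execution-independent.

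For (i): Since $S$ is in the potential phase of Extended-Boosted Execution, Definition~\ref{def:poten-phase} yields a vertex $v \in S$ with $\ttt_v \le \currenttime < \tp_v$ and guarantees that no vertex of $S$ is in the base or boost phase. Because $\tp$ is the output fingerprint of Extended Execution and $\currenttime < \tp_v$, the vertex $v$ lies in some active set $S'$ of Extended Execution at moment $\currenttime$. Applying Lemma~\ref{lm:large_fingerprint_refinement} to $(\tp,\tz)$, $S'$ is contained in an active set of Extended-Boosted Execution at $\currenttime$; as active sets at a fixed moment are disjoint and $v \in S' \cap S$, this forces $S' \subseteq S$, so $S' \in \Fam$. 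Now $S'$ inherits from $S$ that none of its vertices is in the base or boost phase, while $v \in S'$ is in the potential phase, so $S'$ is itself in the potential phase. For such an $S'$, Definition~\ref{def:poten-phase} also gives $\max_{w \in S'} \ttt_w \le \currenttime$, i.e. $S'$ has passed its input fingerprint in \ExtendedGW{}, so by the deactivation rule (Line~\ref{line:setpot_zero}) it is still active only because $\setpot_{S'} > 0$; then Lemma~\ref{lm:potcap-equal-setpot} gives $\sum_{w \in S'} \potcap_w = \setpot_{S'} > 0$, so some vertex of $S'$ has positive remaining potential assignment capacity, whence $\asp(S', \currenttime) \ne \emptyset$ by Definition~\ref{def:rp}. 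This proves (i).

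For (ii): fix any $S' \in \Fam$ and $u \in \asp(S', \currenttime)$ (if $S'$ is in the boost phase this set is empty and there is nothing to check). Then $u \in S' \subseteq S$, and since assignments are made only to active vertices of the set (Definition~\ref{def:assignment}), $u$ is active at moment $\currenttime$ in Extended Execution; by Lemma~\ref{lm:large_fingerprint_vertex_remains_active} applied to $(\tp,\tz)$, $u$ is also active at $\currenttime$ in Extended-Boosted Execution. A vertex that is active at $\currenttime$ and lies in the active set $S$ at $\currenttime$ belongs to $\corez(S)$, by induction along the laminar chain of active sets containing $u$ using the merge/deactivation rule for superactive sets. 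Hence the union is a subset of $\corez(S)$. Finally, for (iii): taking $U := \corez(S)$ and the non-empty set $\bigcup_{S' \in \Fam} \asp(S', \currenttime) \subseteq U$ as the second argument satisfies the hypothesis $S \subseteq U$ of Definition~\ref{def:deputy}, so $\Deputy(\corez(S), \bigcup_{S' \in \Fam} \asp(S', \currenttime))$ is defined, and Lemma~\ref{lm:deputy-large} bounds its cardinality below by the number of distinct optimal components met by the union, which is at least one.

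The main obstacle I anticipate is step (i), and within it the two subtleties: (a) arguing that the Extended-Execution active set $S'$ carrying $v$ is genuinely in the potential phase rather than the boost phase (otherwise it would assign nothing), which relies on the execution-independence of the base/boost phases combined with the refinement $S' \subseteq S$; and (b) deducing that a potential-phase active set of \ExtendedGW{} still has positive remaining potential, which is exactly where Lemma~\ref{lm:potcap-equal-setpot} and the precise deactivation condition in Line~\ref{line:setpot_zero} are needed. The remaining parts are routine applications of the fingerprint-monotonicity lemmas and the definitions of superactive sets and $\Deputy$.
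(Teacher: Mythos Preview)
Your proposal is correct and follows essentially the same approach as the paper: find a potential-phase vertex $v\in S$, lift it to an active set $S'\in\Ap_\currenttime$ via fingerprint refinement, argue $S'\subseteq S$ is itself in the potential phase so $\asp(S',\currenttime)\ne\emptyset$, and then show the union sits inside $\corez(S)$ before invoking Lemma~\ref{lm:deputy-large}. The only cosmetic differences are that you are more explicit than the paper about why a potential-phase set has a non-empty assignee set (your appeal to Lemma~\ref{lm:potcap-equal-setpot} and Line~\ref{line:setpot_zero} spells out what the paper leaves implicit), and for the containment you argue directly via Lemma~\ref{lm:large_fingerprint_vertex_remains_active} rather than citing Lemma~\ref{lm:large_fingerprint_superactive_refinement} as the paper does.
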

\begin{proof}
    Since $S$ is in the potential phase, there must exists a vertex $v \in S$ that is in the potential phase.
    Since $v$ is in the potential phase, it is in an active set at moment $\currenttime$ in Extended Execution. 
    Consider the set $S'\in\Ap_\currenttime$ that contains $v$. Then, by Corollary~\ref{cor:localsearch_activeset_refinement}, $S' \subseteq S$ and therefore $S'\in \Fam$. Additionally, any $S'\in \Fam$ cannot be in the base or boost phase, and must be in the potential phase.  

    since $\Fam$ is non-empty and each set $S'\in \Fam$ is in the potential phase, 
    \[
    \bigcup_{S' \in \Fam} \asp(S', \currenttime)
    \]  
    is non-empty. 
    
    Additionally, for any $S'\in \Fam$, $\asp(S',\currenttime)\subseteq\corep(S')$. Furthermore, Lemma \ref{lm:large_fingerprint_superactive_refinement}, 
    implies that $\corep(S')\subseteq\corez(S)$ since Extended-Boosted Execution has a larger fingerprint. Therefore, 
    \begin{align*}
        \bigcup_{S' \in \Fam} \asp(S', \currenttime) \subseteq  \bigcup_{S' \in \Fam} \corep(S') \subseteq \corez(S).
    \end{align*}
    Therefore, $\Deputy(\corez(S),\bigcup_{S' \in \Fam} \asp(S', \currenttime))$ is a valid use of $\Deputy$.
    Finally, Lemma \ref{lm:deputy-large} shows that since $\bigcup_{S' \in \Fam} \asp(S', \currenttime)$ is non-empty, $\Deputy(\corez(S),\bigcup_{S' \in \Fam} \asp(S', \currenttime))$ is also non-empty.
\end{proof}

We use the following bounds on the total growth during Extended-Boosted Execution.
\begin{lemma}
    \label{lm:ext-yzs}
    The total growth in Extended-Boosted Execution can be bounded as follows:
    \begin{align*}
        \sum_{S\subseteq \V} \yz_{S} = \sum_{S\subseteq \V} \yp_{S} - \wintwo + \losstwo.
    \end{align*}
\end{lemma}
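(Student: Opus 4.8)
The plan is to obtain this identity directly from Corollary~\ref{cl:total-ys-based-on-win-loss}, which already packages the net effect of a full \LocalSearch{} run on the total growth of active sets. By Definition~\ref{def:ez}, Extended-Boosted Execution is the run of \BoostedModGW{} inside the call $\LocalSearch(\G, \tp, \bta)$ made in Line~\ref{exe:Xlocalsearch} of Algorithm~\ref{alg:main}: its input fingerprint is $\tp$, the output of Extended Execution, and $\tz$ is the boosted fingerprint this local search produces, with total win $\wintwo$ and total loss $\losstwo$. So I would apply Corollary~\ref{cl:total-ys-based-on-win-loss} to the instance $(\G, \tp, \bta)$ with output fingerprint $\tz$. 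Writing $\YIn_S$ for the growth of $S$ in $\GBG(\G, \tp)$ and $\YOut_S$ for the growth in $\GBG(\G, \tz)$, the corollary gives $\sum_{S \subseteq \V} \YOut_S = \sum_{S \subseteq \V} \YIn_S - \wintwo + \losstwo$.

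The remaining work is to identify these two \GBG{} totals with $\sum_S \yp_S$ and $\sum_S \yz_S$. For the first, I would note that $\tp$ is a fingerprint of Extended Execution: the two conditions of Definition~\ref{def:fingerprint} follow from the way $\tout$ is assigned in Algorithm~\ref{alg:extend} (each vertex is active from moment $0$ until its $\tout$ value, and an active set is not deactivated while it still contains a vertex whose $\tin$-driven or potential-driven activity has not ended). Since Extended Execution is a monotonic moat growing algorithm, Lemma~\ref{lm:ghost_equivalence} then applies and $\GBG(\G, \tp)$ reproduces Extended Execution moment-for-moment, so $\YIn_S = \yp_S$ for all $S$. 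For the second, $\BoostedModGW(\G, \tp, \tz)$ and $\GBG(\G, \tz)$ carry out the identical continuous moat-growing process — same initial active sets, same merge rule, and same deactivation rule (deactivate $S$ once $\tz_v \le \currenttime$ for all $v \in S$; compare Line~\ref{line:BoostedMG_deactivation_condition}) — the only difference being the auxiliary bookkeeping $\ybase, \yadd, \yb$ and the harmless extra $\Delta_{\tm}$ term in the step size, neither of which alters any $\ys$. Hence $\YOut_S = \yz_S$. Substituting both identifications into the displayed equality gives exactly $\sum_{S \subseteq \V} \yz_S = \sum_{S \subseteq \V} \yp_S - \wintwo + \losstwo$.

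The one point that genuinely needs care is the first identification, i.e. checking that $\tp$ is a valid fingerprint of Extended Execution so that Lemma~\ref{lm:ghost_equivalence} applies; everything else is definitional bridging. If a cleaner route is preferred, one can bypass \GBG{} entirely and instead telescope Lemma~\ref{lm:new-ys-based-on-win-loss} along the finite sequence of valuable boost actions performed by this \LocalSearch{}, starting from the instance $(\G, \tp, \tp)$ (whose \BoostedModGW{} run has the same growth as Extended Execution) and ending at $(\G, \tp, \tz)$: the sum of $-\Win + \Loss$ over these actions equals $-\wintwo + \losstwo$ by Definition~\ref{def:total-win-loss}, yielding the same conclusion.
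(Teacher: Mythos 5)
Your proof is correct and takes essentially the same approach as the paper's, which simply invokes Corollary~\ref{cl:total-ys-based-on-win-loss} with $\YIn = \yp$ and $\YOut = \yz$. The additional care you take in verifying that $\tp$ is a valid fingerprint of Extended Execution (so that $\GBG(\G,\tp)$ reproduces the $\yp$ values via Lemma~\ref{lm:ghost_equivalence}) is the one step the paper treats as implicit, and your reasoning there is sound.
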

\begin{proof}
    Since $\yz$ values are obtained after a local search, this follows from Corollary \ref{cl:total-ys-based-on-win-loss} with $\YIn=\yp$ and $\YOut=\yz$. 
\end{proof}

\begin{lemma}
    \label{lm:ext-yzrz}
    The total growth in Extended-Boosted Execution can be bounded as follows:
    \begin{align*}
        \sum_{S\subseteq \V} \yz_{S} \leq \sum_{v\in \V} \rz_{v} + \lossone + \losstwo.
    \end{align*}
\end{lemma}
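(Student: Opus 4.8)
The plan is to prove the bound by a direct, moment-by-moment accounting of the growth of active sets in Extended-Boosted Execution, in the same spirit as the proof of Lemma~\ref{lm:ext-yps}, but now partitioning each active set into one of \emph{four} phases rather than three. The key observation is that at any moment $\currenttime$ and for any vertex $v$ still active (so $\tz_v>\currenttime$), the intervals $[0,\tplus_v)$, $[\tplus_v,\ttt_v)$, $[\ttt_v,\tp_v)$, $[\tp_v,\tz_v)$ partition $[0,\tz_v)$, since $\tplus_v\le\ttt_v\le\tp_v\le\tz_v$; hence every active set $S\in\Az_\currenttime$ falls into exactly one of the base, boost, potential, or final boost phase (the earliest among its vertices), and these four cases are exhaustive.

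By Definition~\ref{def:rz}, an active set in its base phase or potential phase distributes all of its current growth among a non-empty set of vertices — non-emptiness of the target in the potential phase is exactly Lemma~\ref{lm:rz-pot-features}, and in the base phase $\priorityset(\BaseSet(S,\currenttime))$ is non-empty because $S$ has a vertex with $\currenttime<\tplus_v$ — so the assigned fractions sum to $1$. An active set in its boost phase or final boost phase assigns nothing. Therefore $\sum_{v\in\V}\rz_v$ equals \emph{exactly} the total growth of active sets accumulated during their base and potential phases, and it remains to bound the total growth during the boost phase by $\lossone$ and during the final boost phase by $\losstwo$.

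For the final boost phase: as already noted in the text preceding Definition~\ref{def:deputy}, an active set is in its final boost phase at $\currenttime$ precisely when every $v\in S$ has $\tp_v\le\currenttime$, which is exactly the condition (Line~\ref{line:BoostedMG_compute_boost}) under which the growth of $S$ is counted toward the $\yadd$ output of $\BoostedModGW(\G,\tp,\tz)$, i.e. toward $\yz_{\add}$; hence the total final-boost-phase growth is $\yz_{\add}$, and since $\tz$ is the fingerprint returned by $\LocalSearch(\G,\tp,\bta)$, Lemma~\ref{lm:loss-equal-yadd} gives $\yz_{\add}=\losstwo$. For the boost phase: if $S\in\Az_\currenttime$ is in its boost phase, then all vertices of $S$ satisfy $\tplus_v\le\currenttime$ and some $u\in S$ has $\currenttime<\ttt_u$, so $u$ lies in an active set $S'$ of Boosted Execution at moment $\currenttime$. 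Chaining the refinement statements of Corollary~\ref{cor:extend_refinement} and Corollary~\ref{cor:localsearch_activeset_refinement} (applied to the second local search), together with transitivity of $\refines$, shows $S'\subseteq S$; since every vertex of $S'$ then has $\tplus_v\le\currenttime$ while $u\in S'$ has $\currenttime<\ttt_u$, the set $S'$ is in its boost phase in Boosted Execution and so contributes to $\yy_{\add}$. Distinct active sets in $\Az_\currenttime$ are disjoint, so the corresponding sets $S'$ are distinct; comparing growth rates at each $\currenttime$ and integrating, the total boost-phase growth in Extended-Boosted Execution is at most the total boost-phase growth in Boosted Execution, which is $\yy_{\add}=\lossone$ by Lemma~\ref{lm:loss-equal-yadd} applied to the first local search. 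Putting the three parts together yields $\sum_{S\subseteq\V}\yz_S\le\sum_{v\in\V}\rz_v+\lossone+\losstwo$.

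The main obstacle is the boost-phase step: one has to invoke exactly the right refinement corollaries to obtain a genuine subset $S'\subseteq S$ inside Boosted Execution, and then verify that $S'$ is actually in its \emph{boost} phase there (not its base phase), which relies on the fact that $S$, being in its boost phase, has all of its vertices past their $\tplus$ value. Everything else reduces to the same bookkeeping already used for Lemma~\ref{lm:ext-yps}. One could instead combine Lemma~\ref{lm:ext-yzs} with Lemma~\ref{lm:ext-yps} to get $\sum_{S\subseteq\V}\yz_S\le\sum_{v\in\V}\rp_v+\lossone-\wintwo+\losstwo$ and then argue $\sum_{v\in\V}\rp_v-\wintwo\le\sum_{v\in\V}\rz_v$, but relating the two potential-phase assignments together with $\wintwo$ is more delicate than the direct accounting, so I would avoid that route.
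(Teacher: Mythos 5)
Your proof is correct and takes essentially the same approach as the paper's: a moment-by-moment case analysis over the four phases (base/potential assign their full growth to vertices, boost phase is charged to $\yy_{\add}=\lossone$ via a refinement argument producing a subset $S'$ in Boosted Execution, final boost phase is exactly $\yz_{\add}=\losstwo$), with Lemma~\ref{lm:loss-equal-yadd} invoked twice at the end. The only cosmetic difference is that you chain Corollaries~\ref{cor:extend_refinement} and~\ref{cor:localsearch_activeset_refinement} where the paper applies Lemma~\ref{lm:large_fingerprint_refinement} directly to the fingerprints $\ttt$ and $\tz$, which is equivalent.
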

\begin{proof}
    First, we claim that
\begin{align}
    \sum_{S \subseteq \V} \yz_S \leq \sum_{v \in \V} \rz_v + \yy_{\add} + \yz_{\add}.
    \label{eq:yz-to-rz-lossone-losstwo}
\end{align}

Consider the moment $\currenttime$ of Extended-Boosted Execution. Let $S \in \Az_{\currenttime}$ be an active set. There are three cases for $S$.

\begin{itemize}
    \item $S$ is in the base or potential phase. In this case, $S$ assigns its growth with fraction one to a subset of its vertices. Therefore, the growth of $S$ is calculated in both $\yz_S$ and $\sum_{v \in \V} \rz_v$.
    
    \item $S$ is in the boost phase. For any vertex $v \in S$, we have $\tplus_v \le \currenttime$ and there exists a vertex $u \in S$ such that $\currenttime \le \ttt_u$. This means that $u$ is also active at the same time in Boosted Execution, in an active set $S'$. Since the fingerprint $\tz$ is larger than $\ttt$, by Lemma~\ref{lm:large_fingerprint_refinement}, and because $u$ appears in both $S$ and $S'$, we conclude that $S' \subseteq S$. For any vertex $v \in S' \subseteq S$, we also have $\tplus_v \le \currenttime$, and for $u$, we have $\currenttime \le \ttt_u$. Thus, $S'$ is in the boost phase. 
    
    On one hand, we know that the growth of this moment for $S$ is calculated in $\yz_S$. On the other hand, the growth of active set $S'$ is calculated in $\yy_{\add}$. Considering that active sets at a given moment are mutually exclusive and for all $S$, $S' \subseteq S$, this growth is calculated in both $\yz_S$ and $\yy_{\add}$, with no growth being calculated multiple times.
    
    \item $S$ is in the final boost phase. The growth of $S$ is calculated in both $\yz_S$ and $\yz_{\add}$.
\end{itemize}

Therefore, since the growth of each active set at any moment is calculated in one term on the right-hand side of inequality~\ref{eq:yz-to-rz-lossone-losstwo}, we have the claim proved.

Finally, applying Lemma~\ref{lm:loss-equal-yadd} for both $\yy_{\add}$ and $\yz_{\add}$ proves that
$$
    \sum_{S \subseteq \V} \yz_S \leq \sum_{v \in \V} \rz_v + \lossone + \losstwo.
$$
\end{proof}

We now prove the following lemmas regarding the relation between the assignment of Extended Execution and Extended-Boosted Execution.

\begin{lemma}
\label{lm:rz-basephase-less-rp}
    In assignment $\rz$, and any vertex $v$, 
    $$
        \sum_{S \subseteq \V} \rzb_{S, v} \le \sum_{S \subseteq \V} \rpb_{S, v}.
    $$
\end{lemma}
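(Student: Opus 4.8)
The plan is to mirror the argument of Lemma~\ref{lm:rp-basephase-less-rstar}, but one level up: there we compared the base-phase portion of the Extended assignment $\rp$ with the Boosted assignment $\rstar$; here we compare the base-phase portion of the Extended-Boosted assignment $\rz$ with the base-phase portion of the Extended assignment $\rp$. The key structural facts that make this work are (i) Extended-Boosted Execution has a larger fingerprint than Extended Execution, so by Corollary~\ref{cor:localsearch_activeset_refinement} (or Lemma~\ref{lm:large_fingerprint_refinement}) its active sets refine those of Extended Execution at every moment; (ii) both $\rzb$ and $\rpb$ assign a base-phase active set's growth \emph{proportionally} over $\priorityset(\BaseSet(\cdot,\currenttime))$ (Definitions~\ref{def:rz} and~\ref{def:rp}); and (iii) $\BaseSet$ and $\priorityset$ are monotone under set inclusion (Lemma~\ref{lm:base-set-subset}, Lemma~\ref{lm:smaller-priorityset}, Lemma~\ref{lm:smaller-priorityset-size}).

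First I would fix a moment $\currenttime$ and an active set $S \in \Az_\currenttime$ that is in its base phase and assigns a positive fraction of its growth at $\currenttime$ to a vertex $v$. By Definition~\ref{def:rz} this means $v \in \priorityset(\BaseSet(S,\currenttime))$, hence $v \in \BaseSet(S,\currenttime)$, so $\currenttime < \tplus_v$. Therefore $v$ is still in the base phase at $\currenttime$ in Extended Execution as well, so it lies in some active set $S' \in \Ap_\currenttime$. By Corollary~\ref{cor:localsearch_activeset_refinement} applied to the pair of fingerprints $\tp \le \tz$ (both larger than $\tplus$), we get $S' \subseteq S$; moreover $S'$ contains the base-phase vertex $v$, so $S'$ is itself in its base phase in Extended Execution, and by Definition~\ref{def:rp} it assigns its growth proportionally over $\priorityset(\BaseSet(S',\currenttime))$. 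By Lemma~\ref{lm:base-set-subset}, $\BaseSet(S',\currenttime) \subseteq \BaseSet(S,\currenttime)$, and since $v \in \priorityset(\BaseSet(S,\currenttime)) \cap S'$, Lemma~\ref{lm:smaller-priorityset} gives $v \in \priorityset(\BaseSet(S',\currenttime))$, so $S'$ does assign growth to $v$ at $\currenttime$ in $\rpb$. Finally, Lemma~\ref{lm:smaller-priorityset-size} gives $|\priorityset(\BaseSet(S',\currenttime))| \le |\priorityset(\BaseSet(S,\currenttime))|$, so the proportional fraction $S'$ gives to $v$ is at least the fraction $S$ gives to $v$. Since the active sets $S'$ arising this way across different $S$ at a fixed moment are distinct (they are subsets of distinct active sets $S$, as $\Az_\currenttime$ is a disjoint family), integrating over $\currenttime$ yields $\sum_{S}\rzb_{S,v} \le \sum_{S}\rpb_{S,v}$, as required.

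The main thing to be careful about — rather than a genuine obstacle — is making the distinctness/matching bookkeeping precise: at a fixed moment, each base-phase active set $S$ of Extended-Boosted Execution that assigns to $v$ is matched to the \emph{unique} Extended-Execution active set $S'$ containing $v$ at that moment, and this $S'$ is the same regardless of $S$ (there is at most one active set of Extended Execution containing $v$ at time $\currenttime$). So really the cleaner phrasing is pointwise in $\currenttime$: at each $\currenttime$, at most one $S \in \Az_\currenttime$ assigns to $v$ (since $\Az_\currenttime$ is disjoint), and the fraction it assigns is bounded by the fraction the corresponding $S' \in \Ap_\currenttime$ assigns to $v$; then integrate. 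Everything else is a routine repetition of the proof of Lemma~\ref{lm:rp-basephase-less-rstar} with the execution labels shifted by one, so I do not anticipate any real difficulty.
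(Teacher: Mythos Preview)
Your proposal is correct and follows essentially the same approach as the paper's own proof: match each base-phase active set $S\in\Az_\currenttime$ assigning to $v$ with the (unique) active set $S'\in\Ap_\currenttime$ containing $v$, use $S'\subseteq S$ via Corollary~\ref{cor:localsearch_activeset_refinement}, then apply Lemmas~\ref{lm:base-set-subset}, \ref{lm:smaller-priorityset}, and \ref{lm:smaller-priorityset-size} to conclude that $S'$ assigns $v$ at least as large a fraction. Your final paragraph clarifying that at each moment at most one $S$ (and one $S'$) contains $v$ is in fact a slightly cleaner way to phrase the distinctness bookkeeping than the paper's version.
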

\begin{proof}
    At time $\currenttime$ during Extended-Boosted Execution, consider active set $S \in \Az_{\currenttime}$ in its base phase that assigns a fraction $0 < x \leq 1$ of its growth to vertex $v$. To prove the inequality, we show that for any such set $S$, there exists a corresponding active set $S' \in \Ap_{\currenttime}$ in Extended Execution, also in its base phase, that assigns a fraction $y \geq x$ of its growth to $v$ at moment $\currenttime$. Moreover, all such sets $S'$ must be distinct members of $\Ap_{\currenttime}$ at this time.

We know that $S$ is in its base phase. At the same moment in Extended Execution, vertex $v$ is also in the base phase, so there exists an active set $S'$ in its base phase that includes $v$. By Corollary~\ref{cor:localsearch_activeset_refinement}, we have $S' \subseteq S$. Additionally, by Lemma~\ref{lm:base-set-subset}, $\BaseSet(S', \currenttime) \subseteq \BaseSet(S, \currenttime)$. 

Since $S$ assigns growth to $v$ in the base phase, $v$ must be in $\priorityset(\BaseSet(S, \currenttime))$. Then, by Lemma~\ref{lm:smaller-priorityset}, we know that $v \in \priorityset(\BaseSet(S', \currenttime))$, so $S'$ also assigns growth to $v$ in Extended Execution.

Furthermore, according to Lemma~\ref{lm:smaller-priorityset-size}, we have
\[
|\priorityset(\BaseSet(S', \currenttime))| \le |\priorityset(\BaseSet(S, \currenttime))|.
\]
Since the growth is assigned proportionally across vertices, this implies that the fraction assigned to $v$ by $S'$ is larger than or equal to the fraction assigned by $S$.
\end{proof}

Next, we prove a similar lemma for the growth assigned by active sets during their potential phase. The difference here is that instead of considering a single vertex, we consider the intersection of a maximal superactive set and a connected component of the optimal solution.
\begin{lemma}
\label{lm:rz-sum-potphase-less-rp}
    For the intersection of any connected component $\opti$ of the optimal solution and any maximal superactive set $S \in \maxactsz$ in Extended-Boosted Execution, we have 
    $$
        \sum_{v \in S \cap \opti} \sum_{S' \subseteq \V} \rzp_{S,v} \le \sum_{v \in S \cap \opti} \sum_{S' \subseteq \V} \rpp_{S,v}.
    $$
\end{lemma}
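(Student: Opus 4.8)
The plan is to prove the inequality moment by moment. Both sides accumulate growth over the course of the two executions, and at each moment $\currenttime$ a well‑defined rate of growth is assigned to the vertices of $S \cap \opti$ by active sets that are in their potential phase. So it suffices to show that, at every moment $\currenttime$, the total rate at which Extended‑Boosted Execution assigns growth to vertices of $S \cap \opti$ from potential‑phase active sets is at most the analogous rate in Extended Execution; integrating over $\currenttime$ then gives the lemma. Here I use that ``moment $\currenttime$'' is a common parameter of both executions and that whether a vertex is in the base, boost, potential, or final boost phase depends only on $\currenttime$ and the fixed fingerprints $\ttt,\tp,\tz$, so in particular a vertex in the potential phase is in the potential phase in every execution.

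Fix $\currenttime$ and let $S^*$ be an active set of Extended‑Boosted Execution that is in the potential phase and assigns a positive fraction of its growth to some vertex $w \in S \cap \opti$. By Definition~\ref{def:rz}, $S^*$ spreads its growth uniformly over $\Deputy(\corez(S^*), A)$, where $\Fam(S^*) = \{S' \in \Ap_{\currenttime} : S' \subseteq S^*\}$ and $A = \bigcup_{S' \in \Fam(S^*)} \asp(S', \currenttime)$. The first step is to check that $\corez(S^*) \subseteq S$: the superactive sets of Extended‑Boosted Execution form a laminar family, $w$ lies in both $\corez(S^*)$ and the maximal superactive set $S$, and $S$ is not properly contained in any superactive set, so $\corez(S^*) \subseteq S$. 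By Lemma~\ref{lm:rz-pot-features} we have $A \subseteq \corez(S^*)$, hence $A \subseteq S$, so every vertex that $\Fam(S^*)$ assigns to at moment $\currenttime$ already lies in $S$. Since $\Deputy$ selects at most one vertex from each optimal component and $A \subseteq \corez(S^*)$, the set $\Deputy(\corez(S^*), A)$ meets $\opti$ exactly when $A$ meets $\opti$, and then in a single vertex. Thus $S^*$ contributes to $S \cap \opti$ at rate $1/\lvert\Deputy(\corez(S^*), A)\rvert$ if $A \cap \opti \neq \emptyset$, and at rate $0$ otherwise; and the potential‑phase active sets of Extended‑Boosted Execution that contribute a nonzero rate to $S \cap \opti$ are precisely those of this form.

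The quantitative heart of the argument is: whenever $A \cap \opti \neq \emptyset$ there is some $S' \in \Fam(S^*)$ with $\asp(S', \currenttime) \cap \opti \neq \emptyset$, and for this $S'$ one has $\lvert\asp(S', \currenttime)\rvert \le \lvert\Deputy(\corez(S^*), A)\rvert$. Indeed, $\asp(S', \currenttime)$ contains at most one vertex per optimal component (Definition~\ref{def:rp}), so $\lvert\asp(S', \currenttime)\rvert$ equals the number of optimal components it meets; since $\asp(S', \currenttime) \subseteq A$, this is at most the number of components met by $A$, which is at most $\lvert\Deputy(\corez(S^*), A)\rvert$ by Lemma~\ref{lm:deputy-large}. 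Consequently $S'$ alone assigns growth to a vertex of $S \cap \opti$ in the potential phase of Extended Execution at rate $1/\lvert\asp(S', \currenttime)\rvert \ge 1/\lvert\Deputy(\corez(S^*), A)\rvert$, i.e. at a rate at least that of $S^*$; note $\asp(S', \currenttime) \subseteq A \subseteq S$, so meeting $\opti$ is the same as meeting $S \cap \opti$.

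To assemble the global bound I would sum over all such $S^*$. At a fixed moment the active sets of Extended‑Boosted Execution are pairwise disjoint, so the families $\Fam(S^*)$ are pairwise disjoint subfamilies of $\Ap_{\currenttime}$, and by Lemma~\ref{lm:rz-pot-features} every member of each $\Fam(S^*)$ is an active set of Extended Execution in its potential phase. Hence the total rate at which Extended‑Boosted Execution assigns growth to $S \cap \opti$ from potential‑phase active sets is at most $\sum_{S^*}\sum_{S' \in \Fam(S^*)}(\text{rate of }S'\text{ into }S \cap \opti)$, which is at most the sum of these rates over all potential‑phase active sets of Extended Execution — that is, exactly the Extended Execution rate into $S \cap \opti$. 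Integrating this pointwise inequality over $\currenttime$ yields the lemma. The main obstacle I anticipate is the structural bookkeeping of the middle two paragraphs: pinning down that the relevant superactive set $\corez(S^*)$ sits inside $S$ (so the whole comparison stays within $S \cap \opti$), correctly associating each potential‑phase active set of Extended‑Boosted Execution with the disjoint family of Extended Execution active sets that dominates it, and confirming that those families consist of potential‑phase active sets. Once this matching is in place, the counting inequality $\lvert\asp(S', \currenttime)\rvert \le \lvert\Deputy(\corez(S^*), A)\rvert$ is a short consequence of Lemma~\ref{lm:deputy-large} together with the one‑vertex‑per‑component property of $\asp$.
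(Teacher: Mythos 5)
Your proof is correct and follows essentially the same route as the paper's: fix a moment $\currenttime$, match each potential‑phase active set $S^*$ of Extended‑Boosted Execution contributing to $S\cap\opti$ to a subset $S'\in\Fam(S^*)$ that contributes to $S\cap\opti$ in Extended Execution, establish that $S'$ contributes at a rate at least as large via the counting bound $|\asp(S',\currenttime)|\le|\Deputy(\corez(S^*),A)|$ (Lemma~\ref{lm:deputy-large}), and use disjointness of the families $\Fam(S^*)$ to sum and integrate. The only cosmetic difference is that you derive $\corez(S^*)\subseteq S$ directly from laminarity and maximality, while the paper argues through the chosen vertices $u,v$ being active simultaneously; these are equivalent, and your write‑up correctly states the subset relation $S'\subseteq S^*$ where the paper's bullet list has a small typo ($S''\subseteq S$ should read $S''\subseteq S'$).
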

\begin{proof}
    At any moment $\currenttime$, for any active set $S'\in\Az_{\currenttime}$ in the potential phase that assigns a fraction $x$ of growth to a vertex $v\in S\cap\opti$ in Extended-Boosted Execution, we find an active set $S''\in\Ap_{\currenttime}$ in the potential phase such that $S''\subseteq S'$, and $S''$ assigns a fraction $y\geq x$ of growth to a vertex $u\in S\cap\opti$ in Extended Execution. Since each active set only assigns growth to one vertex in $\opti$, and active sets at moment $\currenttime$ are disjoint, this is sufficient to prove our claim. 

    Since $S'$ is in the potential phase and assigns growth to $v$, Definitions~\ref{def:deputy} and \ref{def:rz} imply that
    $v \in \corez(S')$ and
    there exists a vertex $u$ in $\asp(S'', \currenttime)$ such that:
    \begin{itemize}
        \item $u \in \opti$,
        \item $S'' \subseteq S$,
        \item $S'' \in \Ap_{\currenttime}$ is an active set in the potential phase of Extended Execution, and
        \item $S''$ assigns a fraction of its growth to $u$. It follows that $u$ is in its potential phase. 
    \end{itemize}
    We observe that both \( u \) and \( v \) must be active at moment \(\currenttime\) in Extended-Boosted Execution, since \( u \) is in its potential phase and \(v\in \corez(S')\).  
    Combined with the fact that \( S \) is the maximal superactive set containing \( v \), it follows that \( u \in S \).  
    Now, it remains to show that \( S'' \) assigns a larger fraction of its growth to \( u \) compared to the fraction that \( S' \) assigns to \( v \).

    Let  
    \[
    U = \corez(S')
    \]  
    and  
    \[
    H = \bigcup_{S^* \in \Fam} \asp(S^*, \currenttime),
    \]  
    where  
    \[
    \Fam = \{S^* \in \Ap_{\currenttime} \mid S^* \subseteq S'\}.
    \]  
    
    By Lemma~\ref{lm:rz-pot-features}, we know that \( H \subseteq U \). Also, by Lemma~\ref{lm:deputy-large}, we have  
    \[
    |\Deputy(U, H)| \geq |\{\optcom(w) \mid w \in H\}|.
    \]  
    Since in assignment $\rp$, $S''$ assigns its growth to at most one vertex of each component,  
    \[
    |\asp(S'', \currenttime)| = |\{\optcom(w) \mid w \in \asp(S'', \currenttime)\}|.
    \]  
    Since \(\asp(S'', \currenttime) \subseteq H\), it follows that  
    \[
    |\asp(S'', \currenttime)| \leq |\Deputy(U, H)|.
    \]  
    Therefore, at time \(\currenttime\), the growth fraction assigned by \( S'' \subseteq S'\) to \( u \in S \cap \opti \) is larger than the growth fraction assigned by \( S' \) to \( v \), as both assign their growth proportionally. This completes the proof.
\end{proof}

We proceed by combining Lemma~\ref{lm:rz-basephase-less-rp} and Lemma~\ref{lm:rz-sum-potphase-less-rp}.
\begin{lemma} \label{lm:ext-rzrp}
    For the intersection of any connected component $\opti$ of the optimal solution and any maximal superactive set $S \in \maxactsz$ in Extended-Boosted Execution, we have 
    $$
        \sum_{v \in S \cap \opti} \rz_v \leq \sum_{v \in S \cap \opti} \rp_v.
    $$
\end{lemma}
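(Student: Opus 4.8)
The plan is to obtain Lemma~\ref{lm:ext-rzrp} as a bookkeeping combination of the two preceding lemmas, after decomposing each assignment into its base-phase and potential-phase parts. First I would recall from Definition~\ref{def:rz} that in Extended-Boosted Execution an active set assigns growth to vertices only while it is in its base phase or its potential phase—nothing is assigned during the boost phase or the final boost phase—so that $\rz_{S',v} = \rzb_{S',v} + \rzp_{S',v}$ for every pair $(S',v)$, and hence $\rz_v = \sum_{S'\subseteq\V}\rzb_{S',v} + \sum_{S'\subseteq\V}\rzp_{S',v}$. By Definition~\ref{def:rp}, the analogous decomposition $\rp_v = \sum_{S'\subseteq\V}\rpb_{S',v} + \sum_{S'\subseteq\V}\rpp_{S',v}$ holds for Extended Execution (again nothing is assigned during the boost phase there).

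Next I would sum these identities over the vertices $v \in S \cap \opti$ and bound the two resulting pieces separately. For the base-phase piece, Lemma~\ref{lm:rz-basephase-less-rp} gives $\sum_{S'\subseteq\V}\rzb_{S',v} \le \sum_{S'\subseteq\V}\rpb_{S',v}$ for each individual vertex $v$; summing this over $v \in S \cap \opti$ yields $\sum_{v\in S\cap\opti}\sum_{S'\subseteq\V}\rzb_{S',v} \le \sum_{v\in S\cap\opti}\sum_{S'\subseteq\V}\rpb_{S',v}$. For the potential-phase piece, Lemma~\ref{lm:rz-sum-potphase-less-rp} directly provides $\sum_{v \in S \cap \opti}\sum_{S'\subseteq\V}\rzp_{S',v} \le \sum_{v \in S \cap \opti}\sum_{S'\subseteq\V}\rpp_{S',v}$ for precisely the sets $S \in \maxactsz$ and components $\opti$ in the statement. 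Adding the two bounds and folding the sums back into the assignment notation gives $\sum_{v\in S\cap\opti}\rz_v \le \sum_{v\in S\cap\opti}\rp_v$, as claimed.

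There is no genuinely hard step here: all the real work—the per-vertex base-phase comparison, which relies on the refinement structure of active sets under a larger fingerprint (Corollary~\ref{cor:localsearch_activeset_refinement}) together with the $\priorityset$ monotonicity lemmas, and the subset-level potential-phase comparison, which relies on the $\Deputy$ machinery and Lemma~\ref{lm:rz-pot-features}—has already been carried out in Lemmas~\ref{lm:rz-basephase-less-rp} and~\ref{lm:rz-sum-potphase-less-rp}. The only point that warrants a moment's attention in the write-up is justifying that the boost and final-boost phases contribute zero, so that $\rz$ and $\rp$ are captured entirely by their base- and potential-phase components; this is immediate from Definitions~\ref{def:rz} and~\ref{def:rp}. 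Thus the proof reduces to the two-line addition of the two inequalities above.
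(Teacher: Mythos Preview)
Your proposal is correct and follows essentially the same approach as the paper's proof: decompose $\rz_v$ and $\rp_v$ into their base-phase and potential-phase contributions via Definitions~\ref{def:rz} and~\ref{def:rp}, apply Lemma~\ref{lm:rz-basephase-less-rp} vertexwise to handle the base-phase part, apply Lemma~\ref{lm:rz-sum-potphase-less-rp} to handle the potential-phase part over $S\cap\opti$, and add. Your observation that the boost and final-boost phases contribute nothing is exactly the justification the paper encodes in its use of the definitions.
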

\begin{proof}
    We expand the left-hand side of the inequality as follows:
    \begin{align*}
        \sum_{v \in S \cap \opti} \rz_v 
        &= \sum_{v \in S \cap \opti} \sum_{S' \subseteq V} \rz_{S', v}\\
        &= \sum_{v \in S \cap \opti} \sum_{S' \subseteq V} \rzb_{S', v}
        + \sum_{v \in S \cap \opti} \sum_{S' \subseteq V} \rzp_{S', v} \tag{Definition~\ref{def:rz}}\\
        &\le \sum_{v \in S \cap \opti} \sum_{S' \subseteq V} \rpb_{S', v} 
        + \sum_{v \in S \cap \opti} \sum_{S' \subseteq V} \rzp_{S', v} 
        \tag{Lemma~\ref{lm:rz-basephase-less-rp} for all $v \in S \cap \opti$}\\
        &\le \sum_{v \in S \cap \opti} \sum_{S' \subseteq V} \rpb_{S', v} 
        + \sum_{v \in S \cap \opti} \sum_{S' \subseteq V} \rpp_{S', v} \tag{Lemma~\ref{lm:rz-sum-potphase-less-rp}}\\
        &= \sum_{v \in S \cap \opti} \rp_v .\tag{Definition~\ref{def:rp}}
    \end{align*}
    Consequently, the proof is complete.
\end{proof}

Next, we prove that the family of maximal superactive sets in Extended Execution is a refinement of the family of superactive sets in Extended-Boosted Execution. This also holds when we observe the intersection of the sets in each family with any connected component of the optimal solution.
\begin{lemma} \label{lm:ext-boost-superactive-union-of-ext-superactive}
    Let $U$ be the intersection of any connected component $\opti$ of the optimal solution and any maximal superactive set $S \in \maxactsz$ in Extended-Boosted Execution. Then, $U$ can be decomposed into intersections of maximal superactive sets in Extended Execution and $\opti$, i.e., there exists a family $\Fam \subseteq \maxactsp$ such that
    $$
        U = \bigcup_{S' \in \Fam} (\opti \cap S').
    $$
    Furthermore, $\Fam$ includes all sets in $\maxactsp$ that intersect $U$ non-trivially. 
\end{lemma}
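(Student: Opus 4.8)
The plan is to prove a single structural fact that immediately yields the lemma: the partition $\maxactsp$ of $\V$ into maximal superactive sets of Extended Execution is a \emph{refinement} of the partition $\maxactsz$ coming from Extended-Boosted Execution. Once this is established, intersecting both partitions with $\opti$ is a routine bookkeeping step.

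First I would show $\maxactsp \refines \maxactsz$. Fix a maximal superactive set $S' \in \maxactsp$. By the discussion following the definition of maximal superactive sets, $S'$ is derived from some active set $A'$ of Extended Execution that becomes inactive at some moment $\currenttime'$, and at that moment $\corep(A') = S'$. Since $\tz$ is larger than $\tp$ (Definition~\ref{def:ez}), Lemma~\ref{lm:large_fingerprint_refinement} applied at moment $\currenttime'$ gives an active set $A''$ in Extended-Boosted Execution with $A' \subseteq A''$ at that same moment, and then Lemma~\ref{lm:large_fingerprint_superactive_refinement} yields
\[
S' = \corep(A') \subseteq \corez(A'').
\]
As $\corez(A'')$ is a superactive set of Extended-Boosted Execution, by laminarity it is contained in some maximal superactive set $S \in \maxactsz$, so $S' \subseteq S$. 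Hence every block of $\maxactsp$ lies inside a block of $\maxactsz$, i.e.\ $\maxactsp \refines \maxactsz$. The one point requiring care here — and what I expect to be the main obstacle — is confirming that at the deactivation moment $\currenttime'$ of $A'$, Extended-Boosted Execution genuinely still has an active set $A'' \supseteq A'$ and that the superactive set of $A''$ at that moment contains $S'$; this is exactly what the combination of Lemmas~\ref{lm:large_fingerprint_refinement} and~\ref{lm:large_fingerprint_superactive_refinement} is designed to give, so it amounts to quoting them at the right instant.

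Second, using that both $\maxactsp$ and $\maxactsz$ partition $\V$, the refinement $\maxactsp \refines \maxactsz$ implies that for each $S \in \maxactsz$ and each $S' \in \maxactsp$, if $S' \cap S \neq \emptyset$ then $S' \subseteq S$; consequently $S = \bigcup \{\, S' \in \maxactsp : S' \subseteq S \,\}$, a disjoint union. Now take the maximal superactive set $S \in \maxactsz$ from the statement, so that $U = S \cap \opti$, and set
\[
\Fam = \{\, S' \in \maxactsp : S' \subseteq S \,\} \subseteq \maxactsp.
\]
Intersecting the decomposition of $S$ with $\opti$ gives $U = S \cap \opti = \bigcup_{S' \in \Fam}(\opti \cap S')$, which is the required equality. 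Finally, if some $S' \in \maxactsp$ satisfies $S' \cap U \neq \emptyset$, then in particular $S' \cap S \neq \emptyset$, hence $S' \subseteq S$ and $S' \in \Fam$; thus $\Fam$ contains every maximal superactive set of Extended Execution that intersects $U$ non-trivially, completing the proof.
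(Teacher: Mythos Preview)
Your proposal is correct and follows essentially the same approach as the paper: both arguments use Lemma~\ref{lm:large_fingerprint_superactive_refinement} (together with the fact that $\tz$ is larger than $\tp$) to show that every $S' \in \maxactsp$ is contained in a single block of $\maxactsz$, so $\maxactsp \refines \maxactsz$, and then intersect the resulting decomposition of $S$ with $\opti$. The paper phrases the refinement step as a short contradiction (assuming $S'$ straddles $S$ and its complement and invoking laminarity and maximality of $S$), while you argue it directly; the content is the same. Your flagged ``main obstacle'' about the deactivation moment $\currenttime'$ is not a real difficulty: simply apply Lemma~\ref{lm:large_fingerprint_superactive_refinement} at any moment when $A'$ is still active (its superactive set is already $S'$ there), and the conclusion follows.
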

\begin{proof}
    Let $\Fam \subseteq \maxactsp$ be the family of maximal superactive sets in Extended Execution that intersect $S$. We claim that $\Fam$ is a partition of $S$. To see this, note that $\Fam$ is pairwise disjoint and covers $S$ since $\maxactsp$ forms a partition over $\V$.

    Assume, for contradiction, that there exists \( S' \in \Fam \) such that \( S' \) contains a vertex \( v \in S \) and a vertex \( u \notin S \). By Corollary~\ref{lm:large_fingerprint_superactive_refinement}, the superactive sets of Extended Execution are a refinement of the superactive sets of Extended-Boosted Execution, as this execution has a larger fingerprint. Therefore, at some moment, \( v \) and \( u \) must both belong to the same superactive set in Extended-Boosted Execution. The laminarity of superactive sets then leads to a contradiction with the maximality of $S$.

    Now, since 
    \[
    S = \bigcup_{S' \in \Fam} S'
    \]
    and $U = \opti \cap S$, we have
    \[
    U = \bigcup_{S' \in \Fam} (\opti \cap S'),
    \]
    which completes the proof.
\end{proof}

We observe that since $\maxactsz$ forms a partition over $\V$, we can decompose any component $\opti$ of $\OPT$ into intersections of $\opti$ with maximal superactive sets in $\maxactsz$.
Noting that one of these intersections is $\LG$, we can also write $\NLG$ as a union of intersections of $\opti$ with with maximal superactive sets in $\maxactsz$ by putting aside the superactive set corresponding with $\LG$.
Additionally, the intersection of each superactive set in the $\maxactsz$ with $\opti$ can be decomposed into intersections of superactive sets in $\maxactsp$ with $\opti$ by Lemma \ref{lm:ext-boost-superactive-union-of-ext-superactive}. This implies that $\opti$ and $\NLG$ can be further decomposed into intersections of $\opti$ and superactive sets in $\maxactsp$.
 
Based on this observation, we can derive the following corollaries from Lemma \ref{lm:ext-rpeps}, Lemma~\ref{lm:rp-potphase-exatcly-eps-rstar}, and Lemma \ref{lm:ext-rzrp}.

\begin{corollary} \label{cor:ext-comp-eps}
    For any component $\opti$ of the optimal solution, we have
    \[
    \sum_{v\in \opti} \rp_v \leq (1+\eps)\sum_{v \in \opti} \rstar_v.
    \]
\end{corollary}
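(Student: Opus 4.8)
The plan is to reduce the statement about a whole component $\opti$ to the per-superactive-set bound already established in Lemma~\ref{lm:ext-rpeps}, using the fact that the family $\maxactsp$ of maximal superactive sets of Extended Execution partitions $\V$. Concretely, since $\maxactsp$ is a partition of the vertex set, the intersections $\{S \cap \opti : S \in \maxactsp,\ S \cap \opti \neq \emptyset\}$ partition $\opti$. So I would write
\[
    \sum_{v \in \opti} \rp_v = \sum_{\substack{S \in \maxactsp \\ S \cap \opti \neq \emptyset}} \sum_{v \in S \cap \opti} \rp_v
\]
and then apply Lemma~\ref{lm:ext-rpeps} to each term, which says $\sum_{v \in S \cap \opti} \rp_v \le (1+\eps)\sum_{v \in S \cap \opti} \rstar_v$ for the intersection of any component $\opti$ with any maximal superactive set $S$ of Extended Execution. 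Summing these inequalities over all $S \in \maxactsp$ meeting $\opti$ and using that $\{S \cap \opti\}$ partitions $\opti$ again on the right-hand side gives
\[
    \sum_{v \in \opti} \rp_v \le (1+\eps) \sum_{\substack{S \in \maxactsp \\ S \cap \opti \neq \emptyset}} \sum_{v \in S \cap \opti} \rstar_v = (1+\eps) \sum_{v \in \opti} \rstar_v,
\]
which is exactly the claim.

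The one point that needs a sentence of justification is why $\maxactsp$ is a partition of $\V$: this is stated in the definition of maximal superactive sets (the family $\maxacts$ "forms a partition of the vertex set"), and it applies to every monotonic moat growing execution, in particular Extended Execution. Given that, restricting to the blocks that meet $\opti$ partitions $\opti$, so there is no double counting or omission in either the expansion or the recombination step.

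I do not expect any real obstacle here — the corollary is a straightforward aggregation of Lemma~\ref{lm:ext-rpeps} over the partition $\maxactsp$. The only thing to be slightly careful about is keeping the index sets consistent (summing only over $S$ with $S \cap \opti \neq \emptyset$, since empty intersections contribute nothing on either side), and noting that $\rp_v, \rstar_v \ge 0$ so there are no sign issues when summing the inequalities. This is the kind of step that is essentially bookkeeping once Lemma~\ref{lm:ext-rpeps} is in hand.
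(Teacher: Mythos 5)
Your proof is correct and matches the paper's argument in all essentials: both decompose $\opti$ into its intersections with the maximal superactive sets $\maxactsp$ of Extended Execution and then sum Lemma~\ref{lm:ext-rpeps} over the blocks. The paper reaches the $\maxactsp$-decomposition of $\opti$ via a two-step route (first through $\maxactsz$, then refining each block via Lemma~\ref{lm:ext-boost-superactive-union-of-ext-superactive}) because the same paragraph also needs to decompose $\NLG$, which is only naturally defined relative to $\maxactsz$; for $\opti$ alone your direct appeal to the fact that $\maxactsp$ partitions $\V$ is a slight streamlining but not a different idea.
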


\begin{corollary} \label{cor:ext-nlg-eps}
    For any component $\opti$ of the optimal solution, we have
    \[
    \sum_{v\in \NLG} \rp_v \leq (1+\eps)\sum_{v \in \NLG} \rstar_v.
    \]
\end{corollary}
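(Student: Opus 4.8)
The plan is to derive Corollary~\ref{cor:ext-nlg-eps} from Lemma~\ref{lm:ext-rpeps} by exhibiting $\NLG$ as a disjoint union of sets of the form $S'\cap\opti$ with $S'\in\maxactsp$, and then summing the per-piece bound. This is essentially the same bookkeeping used to prove Corollary~\ref{cor:ext-comp-eps}, except that one distinguished maximal superactive set of $\opti$ is excised.

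First I would set up the decomposition. Since $\maxactsz$ is a partition of $\V$, we have $\opti=\bigsqcup_{S\in\maxactsz}(S\cap\opti)$ (discarding empty intersections). By Definition~\ref{def:largest} there is a distinguished $S^\ast\in\maxactsz$ with $\largest(\opti)=S^\ast\cap\opti$, so $\NLG=\opti\setminus\largest(\opti)=\bigsqcup_{S\in\maxactsz\setminus\{S^\ast\}}(S\cap\opti)$. Next, for each $S\in\maxactsz$ with $S\cap\opti\neq\emptyset$ I would apply Lemma~\ref{lm:ext-boost-superactive-union-of-ext-superactive} with $U=S\cap\opti$, obtaining the subfamily $\Fam_S\subseteq\maxactsp$ consisting of exactly those maximal superactive sets of Extended Execution that meet $S\cap\opti$, and with $S\cap\opti=\bigsqcup_{S'\in\Fam_S}(S'\cap\opti)$. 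Because superactive sets of Extended Execution refine those of Extended-Boosted Execution (Lemma~\ref{lm:large_fingerprint_superactive_refinement}) and both $\maxactsp$ and $\maxactsz$ partition $\V$, each $S'\in\maxactsp$ lies inside a \emph{unique} member of $\maxactsz$; hence the families $\Fam_S$ are pairwise disjoint, and in particular $\Fam_{S^\ast}$ is disjoint from all the others. Setting $\Fam=\bigcup_{S\in\maxactsz\setminus\{S^\ast\}}\Fam_S\subseteq\maxactsp$ gives the clean decomposition $\NLG=\bigsqcup_{S'\in\Fam}(S'\cap\opti)$.

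The final step is to sum the inequality. Each piece $S'\cap\opti$ has $S'\in\maxactsp$, so Lemma~\ref{lm:ext-rpeps} applies and yields $\sum_{v\in S'\cap\opti}\rp_v\le(1+\eps)\sum_{v\in S'\cap\opti}\rstar_v$. Summing over $S'\in\Fam$ and using the disjointness of the pieces gives
\[
\sum_{v\in\NLG}\rp_v=\sum_{S'\in\Fam}\sum_{v\in S'\cap\opti}\rp_v\le(1+\eps)\sum_{S'\in\Fam}\sum_{v\in S'\cap\opti}\rstar_v=(1+\eps)\sum_{v\in\NLG}\rstar_v,
\]
which is the claim. The only mildly delicate point is verifying that the $\Fam_S$ partition the part of $\maxactsp$ living inside $\opti$, so that discarding $\Fam_{S^\ast}$ leaves precisely the pieces whose union is $\NLG$ with no double counting; this rests entirely on the cross-execution refinement of superactive sets already recorded in Lemma~\ref{lm:large_fingerprint_superactive_refinement} and Lemma~\ref{lm:ext-boost-superactive-union-of-ext-superactive}, so no new machinery is needed.
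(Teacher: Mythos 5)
Your proposal is correct and follows exactly the paper's route: decompose $\NLG$ via the partition $\maxactsz$ (excising the piece corresponding to $\largest(\opti)$), refine each remaining piece into intersections with $\maxactsp$ sets using Lemma~\ref{lm:ext-boost-superactive-union-of-ext-superactive} (with disjointness following from the refinement relation in Lemma~\ref{lm:large_fingerprint_superactive_refinement}), and then apply Lemma~\ref{lm:ext-rpeps} to each piece and sum. The paper records this argument only as the informal "observation" paragraph immediately preceding the corollary, so your write-up simply fills in the same bookkeeping in slightly more detail.
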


\begin{corollary} \label{cor:ext-nlg-potphase-eps}
    For any component $\opti$ of the optimal solution, the total growth assigned to vertices of $\NLG$ in the potential phase of active sets in Extended Execution is exactly
    \[
    \eps\sum_{v \in \NLG} \rstar_v.
    \]
\end{corollary}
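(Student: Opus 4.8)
The plan is to express $\NLG$ as a disjoint union of sets of the form $M \cap \opti$ with $M$ a maximal superactive set of Extended Execution, and then to sum the corresponding instances of Lemma~\ref{lm:rp-potphase-exatcly-eps-rstar}. First I would use that $\maxactsp$ partitions $\V$, so $\{M \cap \opti : M \in \maxactsp\}$ partitions $\opti$. By Definition~\ref{def:largest}, $\LG = S \cap \opti$ for some $S \in \maxactsz$, and applying Lemma~\ref{lm:ext-boost-superactive-union-of-ext-superactive} with $U = S \cap \opti$ shows that $\LG$ is exactly the union of the pieces $M \cap \opti$ over all $M \in \maxactsp$ that meet $S \cap \opti$. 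Since $\LG \subseteq \opti$, each piece $M \cap \opti$ therefore lies either entirely inside $\LG$ or entirely inside $\NLG = \opti \setminus \LG$, which gives
$$\NLG = \bigcup_{M \in \Fam'} (M \cap \opti), \qquad \Fam' = \{M \in \maxactsp : M \cap \opti \ne \emptyset,\ M \cap \LG = \emptyset\},$$
and this union is disjoint because distinct members of $\maxactsp$ are disjoint.

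Next I would invoke Lemma~\ref{lm:rp-potphase-exatcly-eps-rstar} for each $M \in \Fam'$, taking the maximal superactive set there to be $M$ and the component to be $\opti$, to obtain $\sum_{v \in M \cap \opti} \sum_{S' \subseteq \V} \rpp_{S',v} = \eps \sum_{v \in M \cap \opti} \rstar_v$. Summing these equalities over $M \in \Fam'$ and using the disjoint decomposition of $\NLG$, the left-hand side equals $\sum_{v \in \NLG} \sum_{S' \subseteq \V} \rpp_{S',v}$, which by Definition~\ref{def:rp} is precisely the total growth assigned to vertices of $\NLG$ during the potential phases of active sets in Extended Execution; the right-hand side equals $\eps \sum_{v \in \NLG} \rstar_v$. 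That is the claimed identity.

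The only delicate point is the first step: one must make sure that no maximal superactive set of Extended Execution, once intersected with $\opti$, straddles the boundary between $\LG$ and $\NLG$. This is exactly what the final clause of Lemma~\ref{lm:ext-boost-superactive-union-of-ext-superactive} (``$\Fam$ includes all sets in $\maxactsp$ that intersect $U$ non-trivially'') secures, together with the laminarity of superactive sets and the fact that $\maxactsz$ and $\maxactsp$ are each partitions of $\V$. Everything after that is a single additive regrouping over a partition, so I expect the remainder of the argument to be short.
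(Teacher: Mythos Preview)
Your proposal is correct and follows essentially the same approach as the paper: decompose $\NLG$ as a disjoint union of sets of the form $M\cap\opti$ with $M\in\maxactsp$, then sum the instances of Lemma~\ref{lm:rp-potphase-exatcly-eps-rstar}. The paper reaches this decomposition by first partitioning $\opti$ via $\maxactsz$ (isolating $\LG$) and then refining each piece via Lemma~\ref{lm:ext-boost-superactive-union-of-ext-superactive}, whereas you partition $\opti$ directly via $\maxactsp$ and use the ``furthermore'' clause of Lemma~\ref{lm:ext-boost-superactive-union-of-ext-superactive} to rule out straddling; these are two orderings of the same argument.
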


\begin{corollary} \label{cor:ext-opti-rz-less-rp}
    For any component $\opti$ of the optimal solution, we have 
    \[
    \sum_{v\in \opti} \rz_V \leq \sum_{v\in \opti} \rp_v.
    \]
\end{corollary}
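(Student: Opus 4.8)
The plan is to derive this corollary directly from Lemma~\ref{lm:ext-rzrp} by summing over a partition. Recall that $\maxactsz$, the family of maximal superactive sets of Extended-Boosted Execution, forms a partition of the vertex set $\V$. Hence the connected component $\opti$ of $\OPT$ decomposes as a disjoint union $\opti = \bigcup_{S \in \maxactsz} (S \cap \opti)$, where most terms are empty and contribute nothing. This lets me split the sum $\sum_{v \in \opti} \rz_v$ into a sum over these pieces, apply the per-piece inequality of Lemma~\ref{lm:ext-rzrp} to each $S \cap \opti$, and reassemble.

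Concretely, I would write
\begin{align*}
    \sum_{v \in \opti} \rz_v
    &= \sum_{S \in \maxactsz} \sum_{v \in S \cap \opti} \rz_v \\
    &\le \sum_{S \in \maxactsz} \sum_{v \in S \cap \opti} \rp_v \tag{Lemma~\ref{lm:ext-rzrp}} \\
    &= \sum_{v \in \opti} \rp_v,
\end{align*}
where the first and last equalities use that $\maxactsz$ partitions $\V$ (so it partitions $\opti$), and the middle inequality applies Lemma~\ref{lm:ext-rzrp} to each maximal superactive set $S$ (the inequality is trivially $0 \le 0$ for any $S$ with $S \cap \opti = \emptyset$).

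I expect no real obstacle here: the only point requiring a word of justification is that $\maxactsz$ is a partition of $\V$, which is already recorded in the definition of maximal superactive sets (and is reiterated in the paragraph preceding the corollaries), so every vertex $v \in \opti$ lies in exactly one $S \in \maxactsz$ and is therefore counted exactly once on each side. Everything else is a routine reindexing of a finite sum.
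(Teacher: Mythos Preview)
Your proof is correct and matches the paper's approach exactly: the paper states this corollary as an immediate consequence of the observation that $\maxactsz$ partitions $\V$ (so $\opti$ decomposes into the pieces $S\cap\opti$) together with Lemma~\ref{lm:ext-rzrp} applied to each piece, which is precisely what you wrote out.
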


\begin{corollary} \label{cor:ext-nlg-rz-less-rp}
    For any component $\opti$ of the optimal solution, we have 
    \[
    \sum_{v\in \NLG} \rz_V \leq \sum_{v\in \NLG} \rp_v.
    \]
\end{corollary}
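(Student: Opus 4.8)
The plan is to mirror exactly the derivation of Corollary~\ref{cor:ext-opti-rz-less-rp}, but to run the decomposition over only those pieces of $\opti$ that are not $\LG$. The structural fact that makes this work is that $\maxactsz$, the family of maximal superactive sets of Extended-Boosted Execution, is a partition of $\V$. Consequently the connected component $\opti$ is partitioned into the (possibly empty) sets $\{S \cap \opti : S \in \maxactsz\}$. By Definition~\ref{def:largest}, $\LG$ is precisely one of these pieces, namely $S^* \cap \opti$, where $S^*$ is the maximal superactive set intersecting $\opti$ that is deactivated latest in Extended-Boosted Execution. Hence $\NLG$ is exactly the union $\bigcup_{S \in \maxactsz,\, S \neq S^*} (S \cap \opti)$, and this union is disjoint.

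First I would use this disjoint decomposition to write $\sum_{v \in \NLG} \rz_v = \sum_{S \in \maxactsz,\, S \neq S^*} \sum_{v \in S \cap \opti} \rz_v$. Then for each maximal superactive set $S \neq S^*$ I would invoke Lemma~\ref{lm:ext-rzrp}, which gives $\sum_{v \in S \cap \opti} \rz_v \le \sum_{v \in S \cap \opti} \rp_v$; when $S \cap \opti = \emptyset$ both sides vanish, so the inequality holds trivially there as well. Summing these inequalities over all $S \in \maxactsz$ with $S \neq S^*$ and recombining the right-hand side back into $\sum_{v \in \NLG} \rp_v$ completes the argument. This is the same recombination already used in the remark preceding Corollaries~\ref{cor:ext-comp-eps}--\ref{cor:ext-nlg-rz-less-rp}, which observes that $\opti$ and $\NLG$ decompose into intersections with maximal superactive sets.

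There is essentially no substantive obstacle: all the work is already packaged in Lemma~\ref{lm:ext-rzrp}. The only point that needs care is verifying that $\NLG$ is precisely the union of the non-largest pieces of $\opti$ under the partition $\maxactsz$, and this is immediate from $\maxactsz$ being a partition of $\V$ together with the definition of $\largest(\opti)$. So I expect the proof to be a two- or three-line computation identical in spirit to Corollary~\ref{cor:ext-opti-rz-less-rp}, with the single change of excluding $S^*$ from the range of summation.
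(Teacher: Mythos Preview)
Your proposal is correct and is exactly the argument the paper intends: the paragraph preceding Corollaries~\ref{cor:ext-comp-eps}--\ref{cor:ext-nlg-rz-less-rp} explains that $\NLG$ decomposes into the intersections $S\cap\opti$ over $S\in\maxactsz\setminus\{S^*\}$, and the corollary then follows by summing Lemma~\ref{lm:ext-rzrp} over these pieces.
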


Next, we provide bounds for the maximum assigned value to vertices in $\LG$ in both Extended Execution and Extended-Boosted Execution.
\begin{lemma}\label{lm:ext-rpmax}
    For any connected component $\opti$ of the optimal solution,
    $$
        \rp_{\max}(\largest(\opti)) \le \rstar_{\max}(\opti) + 
        \sum_{v \in \largest(\opti)} \eps\rstar_v
    $$
\end{lemma}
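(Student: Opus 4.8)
The plan is to bound $\rp_v$ for an arbitrary vertex $v \in \largest(\opti)$ by splitting it into its base-phase and potential-phase contributions, bounding each separately, and then taking the maximum over $v$. Recall from Definition~\ref{def:rp} that $\rp_v = \sum_{S' \subseteq \V} \rpb_{S',v} + \sum_{S' \subseteq \V} \rpp_{S',v}$, the first sum being growth assigned to $v$ by active sets in their base phase and the second by active sets in their potential phase.

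For the base-phase term, I would invoke Lemma~\ref{lm:rp-basephase-less-rstar}, which gives $\sum_{S' \subseteq \V} \rpb_{S',v} \le \rstar_v$ for every vertex, and then note that since $\largest(\opti) \subseteq \opti$ (Definition~\ref{def:largest}), we have $\rstar_v \le \rstar_{\max}(\opti)$. So the base-phase contribution to $\rp_v$ is at most $\rstar_{\max}(\opti)$ for every $v \in \largest(\opti)$.

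For the potential-phase term, the key point is that $\largest(\opti)$, being the intersection of a maximal superactive set of Extended-Boosted Execution with $\opti$, decomposes by Lemma~\ref{lm:ext-boost-superactive-union-of-ext-superactive} into a disjoint union of sets of the form $\opti \cap S'$ with $S' \in \maxactsp$. Applying Lemma~\ref{lm:rp-potphase-exatcly-eps-rstar} to each such piece and summing over the decomposition yields
\[
\sum_{v \in \largest(\opti)} \sum_{S' \subseteq \V} \rpp_{S',v} \;=\; \eps \sum_{v \in \largest(\opti)} \rstar_v .
\]
Since every term $\rpp_{S',w}$ is nonnegative, for any fixed $v \in \largest(\opti)$ we get $\sum_{S' \subseteq \V} \rpp_{S',v} \le \sum_{w \in \largest(\opti)} \sum_{S' \subseteq \V} \rpp_{S',w} = \eps \sum_{w \in \largest(\opti)} \rstar_w$. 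Combining the two bounds gives $\rp_v \le \rstar_{\max}(\opti) + \sum_{w \in \largest(\opti)} \eps\rstar_w$ for every $v \in \largest(\opti)$, and taking the maximum over $v$ gives the claimed inequality.

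I do not expect a serious obstacle here; the proof is essentially an assembly of Lemmas~\ref{lm:rp-basephase-less-rstar}, \ref{lm:rp-potphase-exatcly-eps-rstar}, and \ref{lm:ext-boost-superactive-union-of-ext-superactive}. The one place to be careful is making sure the decomposition of $\largest(\opti)$ used for the potential-phase sum is into maximal superactive sets of \emph{Extended} Execution (so that Lemma~\ref{lm:rp-potphase-exatcly-eps-rstar}, which is stated for $\maxactsp$, applies), even though $\largest(\opti)$ itself is defined via a maximal superactive set of \emph{Extended-Boosted} Execution — this is exactly what Lemma~\ref{lm:ext-boost-superactive-union-of-ext-superactive} provides, and the disjointness of the pieces (since $\maxactsp$ partitions $\V$) is what makes the summation valid.
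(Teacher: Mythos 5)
Your proof is correct and follows essentially the same path as the paper's: bound the base-phase contribution via Lemma~\ref{lm:rp-basephase-less-rstar}, bound the potential-phase contribution by decomposing $\largest(\opti)$ into intersections with $\maxactsp$ (Lemma~\ref{lm:ext-boost-superactive-union-of-ext-superactive}) and applying Lemma~\ref{lm:rp-potphase-exatcly-eps-rstar}, then combine. The only cosmetic difference is that the paper first fixes $v^*$ as the maximizer of $\rp$ over $\largest(\opti)$ and bounds $\rp_{v^*}$ directly, whereas you bound an arbitrary $v$ and take the maximum at the end — logically the same argument.
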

\begin{proof}
Let vertex $v^*$ be the vertex with the maximum $\rp$ value in $\largest(\opti)$. By Lemma~\ref{lm:rp-basephase-less-rstar}, the total growth assigned to $v^*$ in the base phase of Extended Execution is at most $\rstar_{v^*}$, and since $v^* \in \opti$, we have $\rstar_{v^*} \le \rstar_{\max}(\opti)$.

We know that $\largest(\opti)$ is the intersection of a maximal superactive set in $\maxactsz$ with $\opti$. By Lemma~\ref{lm:ext-boost-superactive-union-of-ext-superactive}, we can decompose $\LG$ into intersections of maximal superactive sets in $\maxactsp$ with $\opti$. Therefore, applying Lemma~\ref{lm:rp-potphase-exatcly-eps-rstar} to the superactive sets in this decomposition, the total growth assigned to vertices in $\LG$ during the potential phase is
\[
\eps \sum_{v \in \LG} \rstar_v.
\]
This is also an upper bound for the growth assigned to $v^*$ during the potential phase, since $v^* \in \LG$. Therefore, we have
\[
    \rp_{\max}(\largest(\opti)) = \rp_u \le \rstar_{\max}(\opti) + \sum_{v \in \largest(\opti)} \eps \rstar_v.
\]
\end{proof}

\begin{lemma}\label{lm:ext-rzmax}
    For any connected component $\opti$ of the optimal solution,
    $$
        \rz_{\max}(\largest(\opti)) \le \rstar_{\max}(\opti) + 
        \sum_{v \in \largest(\opti)} \eps\rstar_v
    $$
\end{lemma}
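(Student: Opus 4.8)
The plan is to mirror the proof of Lemma~\ref{lm:ext-rpmax} verbatim in structure, inserting one extra comparison layer that relates the assignment $\rz$ of Extended-Boosted Execution to the assignment $\rp$ of Extended Execution. Pick a vertex $v^*\in\largest(\opti)$ attaining $\rz_{\max}(\largest(\opti))=\rz_{v^*}$, and split it by phase according to Definition~\ref{def:rz}:
$$
\rz_{v^*}=\sum_{S\subseteq\V}\rzb_{S,v^*}+\sum_{S\subseteq\V}\rzp_{S,v^*},
$$
since the boost and final-boost phases contribute nothing.

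For the base-phase term, I would chain Lemma~\ref{lm:rz-basephase-less-rp} (giving $\sum_S\rzb_{S,v^*}\le\sum_S\rpb_{S,v^*}$) with Lemma~\ref{lm:rp-basephase-less-rstar} (giving $\sum_S\rpb_{S,v^*}\le\rstar_{v^*}$), and then use $\rstar_{v^*}\le\rstar_{\max}(\opti)$ because $v^*\in\opti$. For the potential-phase term, since all assigned quantities are nonnegative and $v^*\in\largest(\opti)$, I bound $\sum_S\rzp_{S,v^*}$ by the total potential-phase mass $\sum_{v\in\largest(\opti)}\sum_S\rzp_{S,v}$. By Definition~\ref{def:largest}, $\largest(\opti)=S^*\cap\opti$ for the maximal superactive set $S^*\in\maxactsz$ intersecting $\opti$ that deactivates last, so Lemma~\ref{lm:rz-sum-potphase-less-rp} applies with $S=S^*$ and gives $\sum_{v\in\largest(\opti)}\sum_S\rzp_{S,v}\le\sum_{v\in\largest(\opti)}\sum_S\rpp_{S,v}$. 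Finally, Lemma~\ref{lm:ext-boost-superactive-union-of-ext-superactive} writes $\largest(\opti)$ as a disjoint union of sets $\opti\cap S'$ with $S'$ ranging over a subfamily $\Fam\subseteq\maxactsp$; summing Lemma~\ref{lm:rp-potphase-exatcly-eps-rstar} over $S'\in\Fam$ turns the last quantity into $\eps\sum_{v\in\largest(\opti)}\rstar_v$. Adding the base-phase and potential-phase bounds yields
$$
\rz_{\max}(\largest(\opti))\le\rstar_{\max}(\opti)+\eps\sum_{v\in\largest(\opti)}\rstar_v,
$$
which is the claim.

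The proof is essentially pure bookkeeping over previously established inequalities, so I do not anticipate a real obstacle; the one point requiring attention is checking that the hypotheses of Lemma~\ref{lm:rz-sum-potphase-less-rp} and Lemma~\ref{lm:ext-boost-superactive-union-of-ext-superactive} are satisfied, i.e.\ that $\largest(\opti)$ is literally the intersection of a maximal superactive set of Extended-Boosted Execution with the optimal component $\opti$ — which holds by construction in Definition~\ref{def:largest} — and that the decomposition of $\largest(\opti)$ is over maximal superactive sets of Extended Execution so that Lemma~\ref{lm:rp-potphase-exatcly-eps-rstar} can be applied piecewise. Everything else is the same chain already used to prove Lemma~\ref{lm:ext-rpmax}, just with the extra Extended-Boosted-to-Extended comparison lemmas (Lemmas~\ref{lm:rz-basephase-less-rp} and~\ref{lm:rz-sum-potphase-less-rp}) prepended.
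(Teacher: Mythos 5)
Your proposal matches the paper's proof step for step: choose $v^*$ attaining the maximum, bound the base-phase contribution via Lemma~\ref{lm:rz-basephase-less-rp} composed with Lemma~\ref{lm:rp-basephase-less-rstar}, and bound the potential-phase contribution by the total over $\largest(\opti)$, then apply Lemma~\ref{lm:rz-sum-potphase-less-rp} followed by the decomposition from Lemma~\ref{lm:ext-boost-superactive-union-of-ext-superactive} and Lemma~\ref{lm:rp-potphase-exatcly-eps-rstar}. The argument is correct and the hypotheses you flag are indeed satisfied exactly as you say, so nothing further is needed.
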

\begin{proof}
    Consider a vertex $v^*$ in $\largest(\opti)$ such that $\rz_{v^*} = \rz_{\max}(\LG)$. We can bound the growth assigned to $v^*$ in the base phase as follows:
     \begin{align*}
     \sum_{S\subseteq \V} \rzb_{S,v^*} &\leq \sum_{S\subseteq \V } \rpb_{S,v^*} \tag{Lemma \ref{lm:rz-basephase-less-rp}}\\
     &\leq \rstar_{v^*} \tag{Lemma \ref{lm:rp-basephase-less-rstar}}.
     \end{align*}
     On the other hand, we can upper bound the growth assigned to $v^*$ in the potential phase by the total growth assigned to any vertex in $\LG$. Then, we have
     \begin{align*}
        \sum_{S\subseteq\V} \rzp_{S,{v^*}} 
         &\leq \sum_{v\in \LG} \sum_{S\subseteq \V} \rzp_{S,v}.
        \end{align*}
    Since $\LG$ is the intersection of a superactive set in Extended-Boosted Execution, we can apply Lemma~\ref{lm:rz-sum-potphase-less-rp} to obtain
        \[
        \sum_{S\subseteq\V} \rzp_{S,{v^*}} 
         \leq \sum_{v\in \LG} \sum_{S\subseteq \V} \rpp_{S,v}. 
         \]
         Furthermore, by Lemma~\ref{lm:ext-boost-superactive-union-of-ext-superactive}, we can decompose $\LG$ into intersections of superactive sets in Extended Execution and $\opti$, so we can apply Lemma~\ref{lm:rp-potphase-exatcly-eps-rstar}
         \begin{align*}
             \sum_{S\subseteq\V} \rzp_{S,{v^*}} 
         &\leq \sum_{v\in \LG} \eps \rstar_{v}\\ 
         &=\eps \sum_{v\in \LG} \rstar_{v}.
         \end{align*}
     
     Finally, we have: 
     \begin{align*}
         \rz_{\max}(\LG)&=\rz_{v^*}\\
         &=\sum_{S \subseteq \V} \rz_{S,v^*}\\
         &=\sum_{S \subseteq \V} \rzb_{S,v^*} + \sum_{S \subseteq \V} \rzp_{S,v^*}\\
         &\leq \rstar_{v^*} + \eps \sum_{v\in \LG} \rstar_v\\
         &\leq \rstar_{\max}(\opti) + \eps \sum_{v\in \LG} \rstar_v. \tag{$v^* \in \opti$}
     \end{align*}
\end{proof}

The following lemma asserts that any vertex assigned growth by an active set 
in Extended Execution has the highest priority among all vertices in the intersection 
of the active set, the connected component of the optimal solution it belongs to, 
and the maximal superactive set it is contained in.
\begin{lemma} \label{lm:rp-max-priority}
    For any connected component $\opti$ of the optimal solution and superactive set $S'$ in Extended Execution, if $\rp_{S, v} > 0$ for active set $S$ and vertex $v \in\opti \cap S'$, we have
    $$
        v = \argmax_{u \in S \cap \opti \cap S'} \priority_u.
    $$
\end{lemma}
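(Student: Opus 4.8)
The plan is to analyze the two ways an active set $S$ in Extended Execution can assign growth to $v$ — namely during its base phase (contributing $\rpb_{S,v}$) or during its potential phase (contributing $\rpp_{S,v}$) — and show that in each case $v$ must be the highest-priority vertex of $S \cap \opti \cap S'$. Throughout, I will use that superactive sets in Extended Execution form a laminar family, so $\corep$ is well-defined, and that $S' \in \maxactsp$ (or more generally any superactive set) satisfies $\corep(S) \subseteq S'$ whenever $S$ is an active set from which a subset of $S'$ is derived; in fact $v \in \corep(S)$ since only active (superactive) vertices receive assignments by Definition~\ref{def:assignment}.

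First I would handle the base-phase case. If $\rpb_{S,v} > 0$, then by Definition~\ref{def:rp} we have $v \in \priorityset(\BaseSet(S,\currenttime))$ at some moment $\currenttime$. By Lemma~\ref{lm:smaller-priorityset} applied to the subset $S \cap \opti \cap S' \cap \BaseSet(S,\currenttime)$ containing $v$ (note $v$ is in it, since $v$ is in $\BaseSet$ and in $\opti$ and in $S'$), it follows that $v \in \priorityset(\BaseSet(S,\currenttime) \cap S' \cap \opti)$ as well. Since $\priorityset$ picks the highest-priority vertex within each optimal component, and $S \cap \opti \cap S'$ intersects only the component $\opti$, $v$ must be the argmax of priority over $S \cap \opti \cap S' \cap \BaseSet(S,\currenttime)$. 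I would then need to argue that no vertex of $S \cap \opti \cap S'$ outside $\BaseSet(S,\currenttime)$ can have higher priority than $v$: such a vertex $u$ has $\tplus_u \le \currenttime < \tplus_v$, so by Corollary~\ref{cor:priority-tplus} we get $\priority_u < \priority_v$ (using the tie-breaking ordering, $\tplus_u < \tplus_v$ forces strictly lower priority). Hence $v$ is the global argmax over $S \cap \opti \cap S'$.

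Next the potential-phase case. If $\rpp_{S,v} > 0$, then by Definition~\ref{def:rp} the vertex $v$ has positive remaining potential-assignment capacity $\potcap_v$ at that moment, and $S$ assigns proportionally among such vertices. The key structural fact (already established in the discussion around Definition~\ref{def:rp}) is that after all capacity transfers inside an active set, at most one vertex per optimal component retains positive capacity, and capacity is always transferred from lower-priority to higher-priority vertices within the same component and the same superactive set — so the vertex that retains capacity is the one of maximum priority among the currently-active vertices of $S \cap \opti \cap S'$. I would make this precise by an induction over merge events: initially each singleton trivially satisfies the property; when two active sets merge, the transfer rule in Definition~\ref{def:rp} moves all capacity to the higher-priority active vertex within each (optimal component $\cap$ superactive set) class, preserving the invariant that "the unique capacity-holder in $S \cap \opti \cap S'$ is $\argmax$ of priority over the active vertices there." Combined with the fact that assignments only go to active vertices (which are exactly the members of the relevant superactive set at that moment), this yields $v = \argmax_{u \in S \cap \opti \cap S'} \priority_u$.

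The main obstacle I anticipate is the potential-phase bookkeeping: I need to verify carefully that capacity transfers never cross superactive-set boundaries (so that restricting attention to $S \cap \opti \cap S'$ is legitimate) and never cross optimal-component boundaries, and that the "active vs. deactivated" status interacts correctly — a vertex that has been deactivated cannot later reappear to steal the argmax role. These are exactly the properties used in the proof of Lemma~\ref{lm:rp-potphase-exatcly-eps-rstar}, so I would mirror that argument, invoking the laminarity of superactive sets (Corollary-level fact from Section~\ref{subsec:monotonic_preliminary}) and Lemma~\ref{lm:potcap-equal-setpot} to tie capacity exhaustion to deactivation. Once the invariant is stated cleanly, both cases collapse to a one-line application, and the two cases together (since $\rp_{S,v} = \rpb_{S,v} + \rpp_{S,v}$ and $\rp_{S,v}>0$ forces at least one summand positive) complete the proof.
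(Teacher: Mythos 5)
Your proposal is correct and captures all the essential ideas, but both cases can be discharged more directly than you anticipate, which is what the paper does. For the base phase you take the roundabout route: you pass through Lemma~\ref{lm:smaller-priorityset} to get $v \in \priorityset(\BaseSet(S,\currenttime) \cap S' \cap \opti)$ and then separately argue that vertices of $S \cap \opti \cap S'$ outside the base set have $\tplus_u \le \currenttime < \tplus_v$ and hence lower priority. But Lemma~\ref{lm:opti_cap_priorityset} already packages both steps: its conclusion identifies the unique vertex of $\priorityset(\BaseSet(S,\currenttime)) \cap \opti$ as $\argmax_{w \in S \cap \opti}\priority_w$ — the argmax over \emph{all} of $S \cap \opti$, not just the base-phase vertices — so $v = \argmax_{w\in S\cap\opti}\priority_w$ is immediate, and restricting to the subset $S\cap\opti\cap S'$ trivially preserves this. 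For the potential phase, the full inductive invariant over merge events that you propose (the unique capacity-holder per component per superactive set is the active argmax of priority) is sound, but it is more machinery than needed. The paper runs a direct contradiction: if $u \in S\cap\opti\cap S'$ had $\priority_u > \priority_v$, then since both lie in the superactive set $S'$, they first met while both still active, and at that merge the transfer rule of Definition~\ref{def:rp} must have drained $\potcap_v$ into $u$, contradicting $\rpp_{S,v}>0$ at the later moment $\currenttime$. This single observation — membership in the same superactive set forces an active meeting, hence a transfer — is exactly the kernel of your invariant, extracted without the induction. So there is no gap in your argument; it is simply longer than the paper's, and a referee might ask you to shorten both halves.
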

\begin{proof}
    Consider a moment $\currenttime$ when active set $S$ assigns its growth to vertex $v$. Since $S$ is assigning growth at moment $\currenttime$, it is either in the base phase or in the potential phase.

    First, assume $S$ is in the base phase. In this case, $S$ assigns its growth to vertices in $\priorityset(\BaseSet(S, \currenttime))$. By Lemma \ref{lm:opti_cap_priorityset}, it follows that $v = \argmax_{w \in S \cap \opti} \priority_w$. Consequently, we have $v = \argmax_{w \in S \cap \opti \cap S'} \priority_w$.

    Now, assume that $S$ is in the potential phase. For contradiction, suppose there exists a vertex $u \in S \cap \opti \cap S'$ such that $\priority_u > \priority_v$. Since $u \in S'$, $v$ and $u$ are first connected while they are both active. Therefore, the potential capacity of $v$ must be transferred to $u$, as $\priority_u > \priority_v$. However, this implies that $\potcap_v = 0$ at moment $\currenttime$, meaning that $v$ cannot be assigned growth by $S$. This leads to a contradiction, so our assumption must be false.

    Thus, $v = \argmax_{u \in S \cap \opti \cap S'} \priority_u$, as required.
\end{proof}

\begin{lemma}
\label{lm:two-v-in-same-act-comp-assigns-to-higher-priority}
    For any connected component $\opti$ of the optimal solution, considering assignment $\rz$, for an active set $S$, if $\rz_{S, v} > 0$ for some $v \in \largest(\opti)$, we have
    $$
        v = \argmax_{u \in S \cap \largest(\opti)} \priority_u.
    $$
\end{lemma}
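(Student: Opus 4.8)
The plan is to follow the structure of the proof of Lemma~\ref{lm:rp-max-priority}, splitting on the phase of $S$ at a moment when it assigns positive growth to $v$. Since $\rz_{S,v}=\rzb_{S,v}+\rzp_{S,v}>0$ and no growth is assigned during the boost or final-boost phases (Definition~\ref{def:rz}), there is a moment $\currenttime$ at which $S$ assigns positive growth to $v$ and $S$ is either in its base phase or in its potential phase. For the base-phase case, $v\in\priorityset(\BaseSet(S,\currenttime))$ by Definition~\ref{def:rz}, and since $v\in\largest(\opti)\subseteq\opti$, Lemma~\ref{lm:opti_cap_priorityset} gives $v=\argmax_{w\in S\cap\opti}\priority_w$; as $S\cap\largest(\opti)\subseteq S\cap\opti$, this already yields the claim.

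The potential-phase case is where the real work lies, and it is the main obstacle, because the assignment in this phase goes through $\Deputy$ rather than through the capacity-transfer argument that handled $\rp$ in Lemma~\ref{lm:rp-max-priority}. Here I would argue as follows. By Definition~\ref{def:rz}, $v\in\Deputy(\corez(S),H)$ with $H=\bigcup_{S'\in\Fam}\asp(S',\currenttime)$ and $\Fam=\{S'\in\Ap_{\currenttime}\mid S'\subseteq S\}$; hence $v\in\corez(S)$, and by Definition~\ref{def:deputy} there is some $h\in H$ with $v=\argmax_{w\in\corez(S)\cap\optcom(h)}\priority_w$. Since $v$ is the maximizer over $\corez(S)\cap\optcom(h)$ it lies in $\optcom(h)$, so $\optcom(h)=\optcom(v)=\opti$, and therefore $v=\argmax_{w\in\corez(S)\cap\opti}\priority_w$. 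It then suffices to show that every $u\in S\cap\largest(\opti)$ lies in $\corez(S)\cap\opti$: membership in $\opti$ is immediate, and membership in $\corez(S)$ follows once we know $u$ is active at moment $\currenttime$, because $u\in S$ and $\corez(S)$ consists exactly of the vertices of $S$ that are active at $\currenttime$. Given this, $\priority_v\ge\priority_u$ for all such $u$, which, combined with the base case, completes the proof.

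To see that $u$ is active at $\currenttime$, recall from Definition~\ref{def:largest} that $\largest(\opti)=\hat{S}\cap\opti$ for a maximal superactive set $\hat{S}\in\maxactsz$ of Extended-Boosted Execution, so both $u$ and $v$ lie in $\hat{S}$. The one point requiring care is the claim that all vertices of a maximal superactive set share the same activity interval; this follows from the properties of maximal superactive sets recorded in Section~\ref{subsec:monotonic_preliminary}, namely that $\hat{S}$ is derived from a single active set $\hat{S}'\supseteq\hat{S}$ that stays active until one moment $\currenttime^{*}$ and then deactivates, so every vertex of $\hat{S}$ is active precisely up to $\currenttime^{*}$. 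Since $v\in\corez(S)$ is active at $\currenttime$, we have $\currenttime\le\currenttime^{*}$, and hence $u\in\hat{S}$ is active at $\currenttime$ as well, giving $u\in\corez(S)$ as needed. I do not expect any further complications beyond carefully invoking these structural facts.
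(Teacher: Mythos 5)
Your proof is correct and follows essentially the same route as the paper's: split on whether $S$ is in the base or potential phase, use Lemma~\ref{lm:opti_cap_priorityset} for the base phase, and for the potential phase use the fact that all vertices of the maximal superactive set defining $\largest(\opti)$ deactivate at the same moment to place every $u\in S\cap\largest(\opti)$ inside $\corez(S)$, so the $\Deputy$ mechanism forces $v$ to have the highest priority among them. The only cosmetic difference is that you show $v$ directly maximizes priority over $\corez(S)\cap\opti$ and restrict to the subset $S\cap\largest(\opti)$, whereas the paper phrases the same step as a contradiction.
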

    Consider a moment $\currenttime$ when active set $S$ assigns its growth to $v$. Since $S$ is assigning growth at moment $\currenttime$, it is either in the base phase or the potential phase. 
    
    First, if $S$ is in the base phase, it assigns its growth to vertices in $\priorityset(\BaseSet(S,\currenttime))$. 
    Therefore, by Lemma \ref{lm:opti_cap_priorityset}, $v=\argmax_{w\in S\cap \opti}\priority_w$ and therefore $v=\argmax_{w\in S\cap\opti\cap S'}\priority_w$. 

    Otherwise, $S$ is in the potential phase. Let $\Fam=\{S'\mid S'\subseteq S, S'\in\Ap_\currenttime\}.$ Then, growth is assigned to vertices in 
    \[
    \Deputy(\corez(S), \bigcup_{S' \in \Fam} \asp(S', \currenttime)).
    \]
    Now, assume for contradiction that there exists $u\in S\cap \largest(\opti)$ such that $\priority_u>\priority_v$. Since $u$ and $v$ are both in $\largest(\opti)$, they are connected actively in Extended-Boosted Execution. Since $v\in \corez(S)$, $u$ must also be in $\corez(S)$ because $u$ and $v$ are deactivated at the same moment. However, if $u\in \corez(S)$, then \[\Deputy(\corez(S), \bigcup_{S' \in \Fam} \asp(S', \currenttime))\] 
    cannot include $v$, because $\priority_u > \priority_v$ and $u\in\corez(S)\cap\optcom(v)$. 

\subsection{Cost Analysis of $\solext$}

\paragraph{Component Specific Bounds.}
To bound the cost of the solution after extension, $\solext$, we analyze components of the optimal solution $\OPT$ one by one. 
For any component $\opti$ of $\OPT$, we recall that $\otree$ is used to refer to the tree in $\OPT$ that contains $\opti$. 

First, we provide the following bounds for components in $\A$.
\begin{lemma} \label{lm:ext-a-bound}
    For any component $\opti$ of $\OPT$, if $\opti \in \A$, we have
\[
\sum_{v\in\opti} \rp_v\leq (1+\eps)\sum_{v\in\opti}\rstar_v \leq (1+\eps)(1-\cone)\cc(\otree).
\]
\end{lemma}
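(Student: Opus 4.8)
The plan is to prove the two inequalities separately, each being essentially a direct invocation of a result established earlier in the paper. For the first inequality, $\sum_{v\in\opti}\rp_v \le (1+\eps)\sum_{v\in\opti}\rstar_v$, I would simply apply Corollary~\ref{cor:ext-comp-eps}, which asserts exactly this bound for \emph{every} connected component $\opti$ of the optimal solution (and which itself was obtained from Lemma~\ref{lm:ext-rpeps} by decomposing $\opti$ into intersections with the maximal superactive sets of Extended Execution). No case analysis on whether $\opti \in \A$ is needed for this half.

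For the second inequality, I would use the membership hypothesis $\opti \in \A$. By Definition~\ref{def:classify_A_B}, this means precisely that $\sum_{v\in\opti}\rstar_v \le (1-\cone)\,\cc(\otree)$. Multiplying both sides of this inequality by the constant $1+\eps$, which is positive since $\eps>0$, preserves the direction of the inequality and yields $(1+\eps)\sum_{v\in\opti}\rstar_v \le (1+\eps)(1-\cone)\,\cc(\otree)$. Chaining the two displayed inequalities then gives the statement of the lemma.

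There is no genuine obstacle here: the lemma is a bookkeeping step that packages Corollary~\ref{cor:ext-comp-eps} together with the definition of family $\A$ into the form that will be convenient when summing over components of $\OPT$ to bound $\cc(\solext)$. The only point worth stating explicitly in the write-up is that $1+\eps > 0$, so that multiplying the class-$\A$ inequality by it is legitimate.
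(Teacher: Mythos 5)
Your proof is correct and matches the paper's own proof exactly: both cite Corollary~\ref{cor:ext-comp-eps} for the first inequality and Definition~\ref{def:classify_A_B} (membership in $\A$) for the second. The only difference is that you spell out the steps in more detail than the paper's one-line justification.
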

\begin{proof}
    This follows from Corollary \ref{cor:ext-comp-eps} and Definition \ref{def:classify_A_B}.
\end{proof}

The second bound follows from the previous bound and Corollary \ref{cor:ext-opti-rz-less-rp}.

\begin{corollary}
    \label{cor:ext-a-rz-bound}
    For any component $\opti$ of $\OPT$, if $\opti \in \A$, we have
\[
\sum_{v\in\opti} \rz_v\leq (1+\eps)\sum_{v\in\opti}\rstar_v \leq (1+\eps)(1-\cone)\cc(\otree).
\]
\end{corollary}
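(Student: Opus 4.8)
The plan is to obtain this bound by simply concatenating Corollary~\ref{cor:ext-opti-rz-less-rp} with the chain of inequalities already established in Lemma~\ref{lm:ext-a-bound}. Fix a component $\opti$ of $\OPT$ with $\opti \in \A$. Corollary~\ref{cor:ext-opti-rz-less-rp} holds for every component of the optimal solution and states $\sum_{v\in\opti}\rz_v \le \sum_{v\in\opti}\rp_v$, so we may begin from the left-hand side of the claimed inequality and immediately pass from the Extended-Boosted Execution assignment $\rz$ to the Extended Execution assignment $\rp$.

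From there, Lemma~\ref{lm:ext-a-bound} supplies exactly the two remaining steps: first $\sum_{v\in\opti}\rp_v \le (1+\eps)\sum_{v\in\opti}\rstar_v$, and then, using the hypothesis $\opti\in\A$ together with Definition~\ref{def:classify_A_B} (which says $\sum_{v\in\opti}\rstar_v \le (1-\cone)\cc(\otree)$ for components in $\A$), $(1+\eps)\sum_{v\in\opti}\rstar_v \le (1+\eps)(1-\cone)\cc(\otree)$. Stringing these together gives
\[
    \sum_{v\in\opti}\rz_v \;\le\; \sum_{v\in\opti}\rp_v \;\le\; (1+\eps)\sum_{v\in\opti}\rstar_v \;\le\; (1+\eps)(1-\cone)\cc(\otree),
\]
which is precisely the double bound asserted in the corollary.

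There is no real obstacle here; the statement is a direct consequence, and the only thing to be careful about is to invoke Corollary~\ref{cor:ext-opti-rz-less-rp} and Lemma~\ref{lm:ext-a-bound} for the same component $\opti$ and to record that the second inequality in the displayed chain needs $\opti\in\A$. If one prefers not to cite Lemma~\ref{lm:ext-a-bound} as a black box, the identical chain can instead be assembled from Corollary~\ref{cor:ext-opti-rz-less-rp}, Corollary~\ref{cor:ext-comp-eps} (for the $(1+\eps)$ factor relating $\rp$ and $\rstar$ over a whole component), and Definition~\ref{def:classify_A_B} (for the $\A$-membership bound), yielding the same result.
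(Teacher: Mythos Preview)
Your proof is correct and matches the paper's approach exactly: the paper derives this corollary directly from Lemma~\ref{lm:ext-a-bound} together with Corollary~\ref{cor:ext-opti-rz-less-rp}, which is precisely the chain you wrote down.
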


Next, we focus on components in $\B$. We provide several different lower bounds for these components and combine them to produce our final bound. 

For each component $\opti$ in $\B$, our bounds utilize the partitioning of $\opti$ into $\largest(\opti)$ and $\NLG$. For vertices in $\NLG$, we define the following value, which is utilized in several bounds. 
\begin{definition} \label{def:ext-X}
    We use $\X(\opti)$ to denote the total growth assigned to vertices in $\NLG$ in assignment $\rp$ corresponding to active sets $S$ cutting at least two edges of $\otree$. That is, 
    \begin{align*}
\X(\opti)=\sum_{v\in\NLG}\sum_{\substack{S\subseteq\V\\\lvert\deltaS\cap\otree\rvert>1}} \rp_{S,v}.
    \end{align*}
\end{definition}

The following lemma establishes that vertices in $\NLG$ cannot connect to vertices in $\LG$ while they are active in Extended Execution. This is later used to identify active sets that cut $\otree$.
\begin{lemma} \label{lm:ext-nlg}
Consider any component $\opti$ of $\OPT$, and
    let $S$ be any active set assigning growth to a vertex $v$ in $\NLG$ during Extended Execution, i.e., $\rp_{S,v}>0$ for some $v \in \NLG$.
    Then, $S$ does not contain any vertex in $\largest(\opti)$.
\end{lemma}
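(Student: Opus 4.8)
The plan is to argue by contradiction: assume there is a vertex $u \in S \cap \largest(\opti)$ and deduce that $v$ itself must lie in $\largest(\opti)$, which contradicts $v \in \NLG$. The crux of the argument is to lift the configuration from Extended Execution up into Extended-Boosted Execution, where $u$ and $v$ can both be shown to be active at the same moment; this makes $\{u,v\}$ actively connected there, hence contained in a single maximal superactive set, and since $u$ already lies in the maximal superactive set $S^*$ of Extended-Boosted Execution that realizes $\largest(\opti)$ (Definition~\ref{def:largest}), laminarity forces that common set to be $S^*$, putting $v$ into $\largest(\opti)$.

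First I would fix a moment $\currenttime$ at which the active set $S$ assigns a positive amount of its growth to $v$. By Definition~\ref{def:assignment} (and the assignment $\rp$ of Definition~\ref{def:rp}), $S$ distributes growth only among vertices of its superactive set, so $v$ is active at $\currenttime$ in Extended Execution and $v \in S$. Since the fingerprint $\tz$ of Extended-Boosted Execution is larger than the fingerprint $\tp$ of Extended Execution (Definition~\ref{def:ez}), Lemma~\ref{lm:large_fingerprint_vertex_remains_active} shows that $v$ is active at $\currenttime$ in Extended-Boosted Execution as well, and Lemma~\ref{lm:large_fingerprint_refinement} supplies an active set $\tilde{S}$ of Extended-Boosted Execution at time $\currenttime$ with $S \subseteq \tilde{S}$; in particular $u, v \in \tilde{S}$.

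Next I would establish that $u$ is also active at $\currenttime$ in Extended-Boosted Execution. Because $v$ is active there at $\currenttime$ and $v \in \opti$, it belongs to a maximal superactive set $S^\dagger \in \maxactsz$ with $S^\dagger \cap \opti \neq \emptyset$, and since $v$ is still active at $\currenttime$, the deactivation moment $\currenttime^\dagger$ of $S^\dagger$ satisfies $\currenttime < \currenttime^\dagger$ (otherwise $v$ would already have lain in the inactive set into which $S^\dagger$'s defining active set turns). By Definition~\ref{def:largest}, $S^*$ is the maximal superactive set meeting $\opti$ whose deactivation moment $\currenttime^*$ is latest, so $\currenttime^* \ge \currenttime^\dagger > \currenttime$. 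As $u \in \largest(\opti) \subseteq S^*$ and every vertex of a maximal superactive set remains active until that set is deactivated, $u$ is active at $\currenttime$. Hence at time $\currenttime$ in Extended-Boosted Execution both $u$ and $v$ are active and lie in the common active set $\tilde{S}$, so $\{u,v\}$ is actively connected and $u,v$ lie in one maximal superactive set; since $u \in S^*$, this set is $S^*$, giving $v \in S^* \cap \opti = \largest(\opti)$ and contradicting $v \in \NLG$.

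The hard part will be the third step, namely certifying that $u$ is active in Extended-Boosted Execution at the relevant moment: activeness only propagates from a smaller fingerprint to a larger one, so the argument genuinely cannot be run inside Extended Execution, and the ``deactivated the latest'' clause in the definition of $\largest(\opti)$ is exactly what yields the inequality chain $\currenttime < \currenttime^\dagger \le \currenttime^*$ that we need. The remaining bookkeeping—that the vertices of a maximal superactive set stay active until its deactivation, and that the active sets of the smaller-fingerprint run refine those of the larger-fingerprint run—is routine given the definitions and lemmas of Section~\ref{subsec:monotonic_preliminary} together with the setup of the four executions.
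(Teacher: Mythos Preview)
Your proposal is correct and follows essentially the same approach as the paper: both argue by contradiction, lift the configuration from Extended Execution to Extended-Boosted Execution via the larger fingerprint, use the ``deactivated the latest'' property of $\largest(\opti)$ to conclude $u$ is still active at the relevant moment, and then derive that $u$ and $v$ are actively connected and hence lie in the same maximal superactive set $S^*$, forcing $v \in \largest(\opti)$. Your write-up is a bit more explicit in justifying the inequality $\currenttime < \currenttime^\dagger \le \currenttime^*$, but the argument is otherwise the same.
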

\begin{proof}
Assume, for contradiction, that $S$ contains a vertex $u \in \largest(\opti)$ while $\rp_{S, v} > 0$ for some $v \in \opti \setminus \largest(\opti)$. Since $\rp_{S,v}>0$, we must have $v\in\corep(S)$, and $v$ is active at moment $\currenttime$. By Lemma \ref{lm:large_fingerprint_vertex_remains_active}, $v$ is also active at moment $\currenttime$ in Extended-Boosted Execution, since fingerprint $\tz$ is larger than $\tp$.

Let $S'$ be the active set containing $v$ at moment $\currenttime$ in Extended-Boosted Execution. By Corollary \ref{cor:localsearch_activeset_refinement}, we know that $S\subseteq S'$, and therefore $u\in S'$. Since $u\in \largest(\opti)$, $u$ cannot be deactivated before $v$ in Extended-Boosted Execution. Consequently, $u$ is also active at moment $\currenttime$. This implies that $u$ and $v$ are connected while active, meaning they will be deactivated at the same moment. Hence, $u$ can only be in $\largest(\opti)$ if $v\in\largest(\opti)$. However, we assumed that $v \in \NLG$ while $u\in \LG$, which leads to a contradiction.
\end{proof}

In the following lemma, during Extended Execution, we identify a class of active sets that cut a connected component of the optimal solution.
\begin{lemma} \label{lm:ext-lone-vertex-lg-always-cut}
    For any component $\opti$ of $\OPT$, there exists a vertex $v^*$ in $\largest(\opti)$, such that for any active set $S$ assigning growth to a vertex $v\in\opti\setminus\{v^*\}$ during Extended Execution, $S$ cuts $\otree$.  
\end{lemma}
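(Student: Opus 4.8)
The claim is that in Extended Execution there is a single vertex $v^*\in\largest(\opti)$ such that every active set that assigns growth (in the $\rp$-assignment) to any other vertex of $\opti$ must cut $\otree$. I would take $v^*$ to be the highest-priority vertex of $\largest(\opti)$, i.e.\ $v^* = \argmax_{u\in\largest(\opti)}\priority_u$. The strategy is a case split on whether the active set $S$ assigning growth to $v\in\opti\setminus\{v^*\}$ is in its base phase or its potential phase at the moment $\currenttime$ of assignment (these are the only phases in which an active set assigns growth, by Definition~\ref{def:rp}), and in each case to exhibit two vertices of $\opti$ that lie on opposite sides of $\deltaS$, so that the tree path between them inside $\otree$ is cut by $S$, giving $|\deltaS\cap\otree|\ge 1$.

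\textbf{Base-phase case.} Suppose $S$ is in the base phase and $\rpb_{S,v}>0$. Then $v\in\priorityset(\BaseSet(S,\currenttime))\subseteq\BaseSet(S,\currenttime)$, so $\currenttime<\tplus_v$, meaning $v$ is still unsatisfied in Legacy Execution; hence $\pairv\notin S$ while $v\in S$ (this is exactly the mechanism used in Lemma~\ref{lm:rplus_assigning_condition}). Since $\pairv\in\opti$, the path from $v$ to $\pairv$ in $\otree$ is cut by $S$, so $S\odot\otree$, which is what we want. Note this case needs nothing about $v^*$ at all — every base-phase assignment to any vertex of $\opti$ already forces $S$ to cut $\otree$.

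\textbf{Potential-phase case.} Here is where $v^*$ enters. Suppose $S$ is in its potential phase and assigns growth to $v\in\opti$ with $v\neq v^*$; I want to show $S$ cuts $\otree$. If $v\in\NLG$, Lemma~\ref{lm:ext-nlg} says $S$ contains no vertex of $\largest(\opti)$; in particular $v^*\notin S$, while $v\in S$, and both lie in $\opti$, so the $\otree$-path between $v$ and $v^*$ is cut by $S$, giving $S\odot\otree$. If instead $v\in\largest(\opti)$ but $v\neq v^*$: by Lemma~\ref{lm:rp-max-priority} (or Lemma~\ref{lm:two-v-in-same-act-comp-assigns-to-higher-priority}), a set assigning growth to $v$ makes $v$ the highest-priority vertex of $S\cap\largest(\opti)$ (using that $v$ lies in the superactive set of $S$, hence in $\corez(\cdot)$ of the containing Extended-Boosted set, so $v$ and $v^*$ would be actively connected and deactivated together if both were in $S$). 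Since $\priority_{v^*}>\priority_v$ — $v^*$ being the priority-maximizer of $\largest(\opti)$ — it follows that $v^*\notin S$, and again $v,v^*$ are on opposite sides of $\deltaS$ inside $\opti$, so $S\odot\otree$. Combining the two cases, every active set assigning growth to a vertex of $\opti\setminus\{v^*\}$ cuts $\otree$, completing the argument.

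\textbf{Main obstacle.} The delicate point is the potential-phase subcase where $v\in\largest(\opti)$: I must be sure that if $v^*$ were also in $S$ at moment $\currenttime$, then $v$ could not be the assignee. This is precisely the content of the priority-based assignee lemma, but it hinges on $v^*$ and $v$ being \emph{actively connected in Extended-Boosted Execution} — i.e.\ on the definition of $\largest(\opti)$ as the intersection with a maximal superactive set in $\maxactsz$ — so that whenever they share an active set they get deactivated simultaneously. I would want to quote Lemma~\ref{lm:two-v-in-same-act-comp-assigns-to-higher-priority} carefully (it is stated for $\rz$, not $\rp$; the analogous reasoning for $\rp$ is Lemma~\ref{lm:rp-max-priority} applied with the superactive set being the one defining $\largest(\opti)$), making sure the "deactivated together" step is justified by laminarity of superactive sets and the fact that $v^*\in\largest(\opti)$ cannot be deactivated before any other vertex of $\largest(\opti)$. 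Everything else is routine: the existence of the $\otree$-path between two vertices of a single component, and the phase dichotomy from Definition~\ref{def:rp}.
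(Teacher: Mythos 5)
Your choice of $v^*$ — the priority-maximizer of $\largest(\opti)$ — differs from the paper's, and the difference is where the proof breaks. The paper picks a vertex $u\in\largest(\opti)$ that remains \emph{active the longest in Extended Execution}, lets $S'$ be the maximal superactive set of Extended Execution containing $u$, and sets $v^*=\argmax_{w\in S'\cap\opti}\priority_w$. That $v^*$ is active in Extended Execution at every moment when any vertex of $\largest(\opti)$ is active (it deactivates together with $u$). This is exactly what the potential-phase contradiction needs: if $v^*\in S$ and $v$ is assigned growth by $S$ at moment $\currenttime$, then $v$ is active, so $v^*$ is active too, so $v$ and $v^*$ are actively connected \emph{in Extended Execution}, and Lemma~\ref{lm:rp-max-priority} (with $S'$ an \emph{Extended Execution} superactive set) yields the contradiction. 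Your $v^*$ lacks this property: $\largest(\opti)$ decomposes into several maximal superactive sets of Extended Execution (Lemma~\ref{lm:ext-boost-superactive-union-of-ext-superactive}), those pieces can deactivate at different times, and there is no reason the highest-priority vertex sits in the one deactivating last — priority is set by $\tplus$ from Legacy Execution while Extended-Execution activity depends on the potential budget, which is unrelated.

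This gap surfaces precisely in your flagged "main obstacle." You argue $v$ and $v^*$ are "actively connected and deactivated together" because both lie in $\corez(\cdot)$ of the containing Extended-Boosted set, i.e.\ actively connected in \emph{Extended-Boosted} Execution. But Lemma~\ref{lm:rp-max-priority} requires a superactive set of \emph{Extended} Execution, and active connection in the larger-fingerprint run (Extended-Boosted) does not imply it in the smaller-fingerprint run (Extended): Lemma~\ref{lm:large_fingerprint_superactive} gives only the reverse implication. If $v^*$ was already deactivated in Extended Execution and later merges into $S$, then $v^*\in S$ but $v^*\notin\corep(S)$, $v^*$ is not in the Extended-Execution superactive set of $v$, and the priority argument never fires.

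Your base-phase shortcut is also unsound as written. From $\currenttime<\tplus_v$ you conclude $v$ and $\pairv$ are disconnected at $\currenttime$ in Legacy Execution — correct — and then assert $\pairv\notin S$ for the \emph{Extended} active set $S$. That step fails: Extended Execution has a larger fingerprint, so by Lemma~\ref{lm:large_fingerprint_refinement} its components are supersets of Legacy's, and a potential-phase moat may have merged $v$ and $\pairv$ in Extended Execution at a time when Legacy still keeps them apart; the mechanism of Lemma~\ref{lm:rplus_assigning_condition} is specific to Legacy active sets. The base-phase case is salvageable — Lemma~\ref{lm:rp-max-priority} already gives $v=\argmax_{w\in S\cap\opti}\priority_w$ via Lemma~\ref{lm:opti_cap_priorityset}, so $v^*\notin S$ for any higher-priority $v^*$ — but once $v^*$ is chosen the way the paper does, that same uniform argument handles both phases and the base/potential split becomes unnecessary. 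Your handling of the $\NLG$ case via Lemma~\ref{lm:ext-nlg} is correct and matches the paper.
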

\begin{proof}
    Consider a vertex $u$ in $\largest(\opti)$ that remains active the longest in Extended Execution. Let $S'$ be the maximal superactive set containing $u$, 
    and let $v^*$ be the vertex in $S' \cap \opti$ with the highest priority, i.e.,
    \[
    v^* = \argmax_{w \in S' \cap \opti} \priority_w.
    \]
    Since both $u$ and $v^*$ belong to $S'$, they are deactivated at the same moment 
    during Extended Execution. As a result, whenever any vertex in $\LG$ is active 
    in Extended Execution, $v^*$ must also be active.
    Additionally, it follows from Lemma \ref{lm:ext-boost-superactive-union-of-ext-superactive} that $S'\cap\opti\subseteq\LG$ and consequently $v^*\in\largest(\opti)$. 

    Now, consider any active set $S$ such that $\rp_{S,v}>0$ for a vertex $v\in\opti\setminus\{v^*\}$. We show that $S$ cannot include $v^*$. 
    If $v\in \NLG$, this results from Lemma \ref{lm:ext-nlg}. 
    Otherwise, if $v\in\largest(\opti)$, assume for contradiction that $v^*\in S$.
    Since $v$ is assigned growth by $S$, it must be active at this moment.  
    Since $v^*$ is active as long as any other vertex in $\LG$, it follows that $v$ and $v^*$ would connect while active.
    Therefore, $v$ must also belong to the maximal superactive set $S'$.
    Now, by definition, $v^*=\argmax_{w\in S'\cap \opti}\priority_w$ and thus 
    \[
    v^*=\argmax_{w\in S\cap \opti\cap S'}\priority_w.
    \]
    Thus, it follows from Lemma \ref{lm:rp-max-priority} that 
    $S$ 
    cannot assign growth to $v$ since
    $v\in S\cap\opti\cap S'$ and \(v \neq v^*\).
    This is a contradiction, so $v^*$ must not belong to $S$.
    
    Finally, since $S$ includes $v$ but not $v^*$, and $v$ and $v^*$ are connected by $\otree$, it follows that $S$ must cut $\otree$. 
\end{proof}

The following lemma presents our first bound for the cost of components $\opti$ in $\B$. In this bound, we consider how the sets contributing to $\X(\opti)$ color at least two edges of the tree $\otree$. 
\begin{lemma} \label{lm:ext-x-bound}
    For any component $\opti\in\B$, we have
    \[
    \sum_{v\in\opti}\rp_v + \X(\opti) - \rp_{\max}(\largest(\opti)) \leq \cc(\otree).
    \]
\end{lemma}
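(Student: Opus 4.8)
The goal is a lower bound on $\cc(\otree)$ involving the total $\rp$-assignment over $\opti$, plus the ``multi-edge'' portion $\X(\opti)$, minus the maximum $\rp$-value over $\largest(\opti)$. The natural approach is to apply Lemma~\ref{lm:tree-bound-by-activeset-num-cuts} with $F = \otree$: it gives $\cc(\otree) \ge \sum_{S\subseteq\V} |\deltaS\cap\otree|\cdot\yp_S$. So I want to charge the left-hand side quantity to $\sum_S |\deltaS\cap\otree|\cdot\yp_S$, using the structure of the assignment $\rp$ and Lemma~\ref{lm:ext-lone-vertex-lg-always-cut}.

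First I would invoke Lemma~\ref{lm:ext-lone-vertex-lg-always-cut} to fix the special vertex $v^*\in\largest(\opti)$ such that every active set $S$ with $\rp_{S,v}>0$ for some $v\in\opti\setminus\{v^*\}$ cuts $\otree$, i.e.\ $|\deltaS\cap\otree|\ge 1$. Then I would write $\sum_{v\in\opti}\rp_v - \rp_{v^*} = \sum_{v\in\opti\setminus\{v^*\}}\rp_v = \sum_{v\in\opti\setminus\{v^*\}}\sum_{S}\rp_{S,v}$, and note that since $\rp_{v^*}\le\rp_{\max}(\largest(\opti))$, it suffices to prove $\sum_{v\in\opti\setminus\{v^*\}}\sum_{S}\rp_{S,v} + \X(\opti) \le \cc(\otree)$. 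Here the key combinatorial point is that in the assignment $\rp$, each active set $S$ assigns growth to at most one vertex of each connected component $\opti$ of $\OPT$ (this holds in the base phase by Lemma~\ref{lm:opti_cap_priorityset} applied to $\priorityset(\BaseSet(S,\currenttime))$, and in the potential phase because at most one vertex per optimal component retains positive $\potcap$). So $\sum_{v\in\opti\setminus\{v^*\}}\rp_{S,v} \le \yp_S$ for every $S$, and moreover the sets contributing here all satisfy $|\deltaS\cap\otree|\ge 1$.

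Next I would split the sum according to whether $S$ cuts exactly one edge of $\otree$ or at least two. For $S$ with $|\deltaS\cap\otree|\ge 2$, each such term is bounded by $\yp_S \le \tfrac{1}{2}|\deltaS\cap\otree|\cdot\yp_S$; this ``extra half'' of the multi-edge contribution is exactly what will absorb $\X(\opti)$, since by Definition~\ref{def:ext-X}, $\X(\opti) = \sum_{v\in\NLG}\sum_{|\deltaS\cap\otree|>1}\rp_{S,v} \le \sum_{|\deltaS\cap\otree|>1}\yp_S$ (using again the one-vertex-per-component property, now restricted to $\NLG\subseteq\opti\setminus\{v^*\}$, which is valid because $v^*\in\largest(\opti)$). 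Then combining: $\sum_{v\in\opti\setminus\{v^*\}}\sum_S \rp_{S,v} + \X(\opti) \le \sum_{|\deltaS\cap\otree|=1}\yp_S + 2\sum_{|\deltaS\cap\otree|\ge 2}\yp_S \le \sum_S |\deltaS\cap\otree|\cdot\yp_S \le \cc(\otree)$ by Lemma~\ref{lm:tree-bound-by-activeset-num-cuts}. Rearranging gives the claim.

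\textbf{Main obstacle.} The delicate point is getting the bookkeeping exactly right for the multi-edge sets: I need $\sum_{v\in\opti\setminus\{v^*\}}\rp_{S,v} + \sum_{v\in\NLG}\rp_{S,v} \le |\deltaS\cap\otree|\cdot\yp_S$ to hold simultaneously for each such $S$. Since $\NLG\subseteq\opti\setminus\{v^*\}$, the vertex of $\opti$ that $S$ assigns to is counted once in the first sum, and possibly once more in the second sum if that vertex happens to lie in $\NLG$ — so the combined coefficient on $\yp_S$ is at most $2$, which matches $|\deltaS\cap\otree|$ precisely when $|\deltaS\cap\otree|\ge 2$; for $|\deltaS\cap\otree|=1$ the $\NLG$-term must vanish, which is guaranteed by Lemma~\ref{lm:ext-nlg} (a set assigning to $\NLG$ and containing a $\largest(\opti)$ vertex is impossible, and combined with $v^*\in\largest(\opti)$ and Lemma~\ref{lm:ext-lone-vertex-lg-always-cut} one checks such a set actually cuts $\otree$ at least... wait — here I should instead argue directly that any $S$ with $\rp_{S,v}>0$ for $v\in\NLG$ has $|\deltaS\cap\otree|\ge 2$, but that is not claimed; the safe route is the two-case split above, being careful that a single-edge-cutting $S$ contributes its $\NLG$-mass only if it assigns to a vertex of $\NLG$, and then that same mass is already the unique $\opti\setminus\{v^*\}$ contribution, so no double counting — i.e.\ for single-edge $S$, $\sum_{v\in\opti\setminus\{v^*\}}\rp_{S,v}+\sum_{v\in\NLG}\rp_{S,v}\le 2\yp_S$ is too weak and I instead need these two sums to be the *same* sum, hence $\le\yp_S$). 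I would resolve this by observing that for $S$ assigning to some $v\in\NLG$, the $\opti\setminus\{v^*\}$-term and the $\NLG$-term are literally the same single contribution $\rp_{S,v}$, so they should be merged rather than added; then for single-edge $S$ the total is $\le\yp_S$, for multi-edge $S$ it is $\le 2\yp_S$, and the telescoping works. Making this merging precise — essentially, writing $\X(\opti)$'s contribution as already contained within the $\sum_{v\in\opti\setminus\{v^*\}}$ sum and only the multi-edge surplus being genuinely extra — is the crux of the argument.
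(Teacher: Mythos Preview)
Your plan is correct and is essentially the paper's proof: decompose $\sum_S|\deltaS\cap\otree|\cdot\yp_S \ge \sum_{S\odot\otree}\yp_S + \sum_{|\deltaS\cap\otree|>1}\yp_S$, bound the first piece below by $\sum_{v\in\opti\setminus\{v^*\}}\rp_v$ via Lemma~\ref{lm:ext-lone-vertex-lg-always-cut} (together with $\sum_{v}\rp_{S,v}\le\yp_S$) and the second by $\X(\opti)$, then finish with $\rp_{v^*}\le\rp_{\max}(\largest(\opti))$. Your ``main obstacle'' dissolves once you recall that $\X(\opti)$ by Definition~\ref{def:ext-X} sums only over sets with $|\deltaS\cap\otree|>1$, so single-edge $S$ contribute nothing to it and there is no double counting to untangle --- for single-edge $S$ you need only $\sum_{v\in\opti\setminus\{v^*\}}\rp_{S,v}\le\yp_S$, and for multi-edge $S$ the two separate bounds $\sum_{v\in\opti\setminus\{v^*\}}\rp_{S,v}\le\yp_S$ and $\sum_{v\in\NLG}\rp_{S,v}\le\yp_S$ together give at most $2\yp_S\le|\deltaS\cap\otree|\cdot\yp_S$.
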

\begin{proof}
    
    We can lower bound $\cc(\otree)$ as follows:
    \begin{align*}
        \cc(\otree) &\geq \sum_{S\subseteq \V} \lvert \deltaS\cap \otree\rvert \cdot \yp_S. \tag{Lemma~\ref{lm:tree-bound-by-activeset-num-cuts}}
    \end{align*}
    Breaking down the sum based on $\lvert \deltaS\cap \otree\rvert$, we get
    \begin{align*}
        \cc(\otree) &\geq \sum_{\substack{S\subseteq \V\\
        S \odot \otree}} \yp_S + \sum_{\substack{S\subseteq \V\\\lvert \deltaS\cap \otree\rvert>1}} \yp_S.
    \end{align*}
    Next, we lower bound $\yp_S$ values by their contribution to $\rp$ values for subsets of $\V$. 
    Take the vertex $v^*\in \largest(\opti)$ considered in Lemma \ref{lm:ext-lone-vertex-lg-always-cut}. 
    For the first summation term, we use the subset $\opti\setminus\{v^*\}$, since by Lemma \ref{lm:ext-lone-vertex-lg-always-cut}, any set assigning growth to a vertex in this subset in Extended Execution must cut $\otree$. In the second summation, we use $\NLG$ to find a term matching the definition of $\X(\opti)$.
    \begin{align*}    
        \cc(\otree)&\geq \sum_{\substack{S\subseteq \V\\
        S \odot \otree}} \sum_{v\in\opti\setminus\{v^*\}} \rp_{S,v} + \sum_{\substack{S\subseteq \V\\\lvert \deltaS\cap \otree\rvert>1}} \sum_{v\in \NLG} \rp_{S,v}  \tag{$\yp_S \geq \sum_{v\in V_1} \rp_{S,v}$ for any subset $V_1$}\\
        &\geq \sum_{\substack{S\subseteq \V}} \sum_{v\in \opti\setminus\{v^*\}} \rp_{S,v} + \sum_{\substack{S\subseteq \V\\\lvert \deltaS\cap \otree\rvert>1}} \sum_{v\in \NLG} \rp_{S,v} \tag{Lemma \ref{lm:ext-lone-vertex-lg-always-cut}}\\
        &\geq \sum_{v\in \opti\setminus\{v^*\}}\sum_{\substack{S\subseteq \V}}  \rp_{S,v} + \sum_{v\in \NLG} \sum_{\substack{S\subseteq \V\\\lvert \deltaS\cap \otree\rvert>1}} \rp_{S,v} \tag{Change order of summations}\\
        &= \sum_{v\in \opti\setminus\{v^*\}} \rp_v + \sum_{v\in \NLG} \sum_{\substack{S\subseteq \V\\\lvert \deltaS\cap \otree\rvert>1}} \rp_{S,v} \tag{$\sum_{S\subseteq V} \rp_{S,v} = \rp_v$ for any vertex $v$}\\
        &\geq \sum_{v\in \opti\setminus\{v^*\}} \rp_v + \X(\opti) \tag{Definition \ref{def:ext-X}}\\
        &=\sum_{v\in \opti} \rp_v + \X(\opti) - \rp_{v^*}.
    \end{align*}
    
    Finally, since $v^* \in \largest(\opti)$, we have $\rp_{v^*} \leq \rp_{\max}(\largest(\opti))$, and we can conclude that 
    \[
    \sum_{v\in\opti}\rp_v + \X(\opti) - \rp_{\max}(\largest(\opti)) \leq \cc(\otree)
    \]
    as desired.
\end{proof}

In the following lemma, during Extended Execution, we identify a class of active sets that cut either zero or at least two edges of $\xtree$ for a connected component $\opti$ of the optimal solution.
\begin{lemma} \label{lm:ext-cut-lg}
    Consider any component $\opti$ of $\OPT$, and
    let $S$ be any active set assigning growth to a vertex $v$ in $\NLG$ during Extended Execution, i.e., $\rp_{S,v}>0$ for some $v \in \NLG$. Then, $S$ cannot cut exactly one edge of $\xtree$.
\end{lemma}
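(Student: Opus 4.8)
## Proof Plan for Lemma~\ref{lm:ext-cut-lg}

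The statement to prove is the following: for any component $\opti$ of $\OPT$, if $S$ is an active set with $\rp_{S,v} > 0$ for some $v \in \NLG$, then $S$ cannot cut exactly one edge of $\xtree$.

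\textbf{Approach.} The plan is to argue by contradiction. Suppose $S$ cuts exactly one edge $e$ of $\xtree$, say $\deltaS \cap \xtree = \{e\}$. Since $\xtree$ is the minimal subtree of $\otree$ connecting $\largest(\opti)$, its leaf set is (a subset of) $\largest(\opti)$. By Lemma~\ref{lm:cut_one_edge_cut_leaves} applied to the tree $\xtree$ with leaf set contained in $\largest(\opti)$, the fact that $S$ cuts exactly one edge of $\xtree$ forces $S \odot \largest(\opti)$, i.e., $S$ contains at least one vertex of $\largest(\opti)$ (and misses at least one). But by Lemma~\ref{lm:ext-nlg}, any active set $S$ assigning growth to a vertex $v \in \NLG$ (that is, with $\rp_{S,v} > 0$) contains no vertex of $\largest(\opti)$. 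These two conclusions contradict each other, completing the proof.

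\textbf{Key steps in order.} First I would recall that $\xtree$ is by Definition~\ref{def:largest} the minimal subtree of $\otree$ connecting $\largest(\opti)$; hence every leaf of $\xtree$ lies in $\largest(\opti)$ (a leaf of $\xtree$ that were not a terminal of $\largest(\opti)$ could be pruned while still connecting $\largest(\opti)$, contradicting minimality). Second, I would assume for contradiction $|\deltaS \cap \xtree| = 1$ and invoke Lemma~\ref{lm:cut_one_edge_cut_leaves} with $T = \xtree$ and $S'$ being the leaf set of $\xtree$, yielding $S \odot (\text{leaves of } \xtree)$; since the leaf set is contained in $\largest(\opti)$ and $S \odot(\text{leaves})$ means $S$ contains a leaf and excludes another leaf, in particular $S$ contains a vertex of $\largest(\opti)$. (If $\xtree$ is a single vertex or a single edge, the argument degenerates but one checks $|\deltaS\cap\xtree|=1$ still forces $S$ to contain an endpoint, which lies in $\largest(\opti)$.) Third, I would invoke Lemma~\ref{lm:ext-nlg}: since $\rp_{S,v} > 0$ for $v \in \NLG$, $S$ contains no vertex of $\largest(\opti)$. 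The contradiction is immediate.

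\textbf{Main obstacle.} The only subtle point is the edge case handling when $\xtree$ is trivial (a single vertex, which happens when $|\largest(\opti)| = 1$) or has very few edges; in the single-vertex case, $\xtree$ has no edges, so $|\deltaS \cap \xtree| = 0$ always and the claim holds vacuously, while in all other cases Lemma~\ref{lm:cut_one_edge_cut_leaves} applies directly once we observe the leaf set of $\xtree$ is contained in $\largest(\opti)$. I expect the cleanest writeup simply cites Lemma~\ref{lm:cut_one_edge_cut_leaves} and Lemma~\ref{lm:ext-nlg} in sequence; the bulk of the work is really just making precise that the leaves of $\xtree$ lie in $\largest(\opti)$, which is essentially definitional.
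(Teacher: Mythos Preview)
Your proposal is correct and follows essentially the same approach as the paper: assume for contradiction that $S$ cuts exactly one edge of $\xtree$, apply Lemma~\ref{lm:cut_one_edge_cut_leaves} to conclude that $S$ contains a leaf of $\xtree$ (hence a vertex of $\largest(\opti)$), and contradict Lemma~\ref{lm:ext-nlg}. Your edge-case discussion for trivial $\xtree$ is a nice addition that the paper omits.
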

\begin{proof}
    Assume, for contradiction, that $S$ cuts exactly one edge of $\xtree$. By Lemma \ref{lm:cut_one_edge_cut_leaves}, cutting exactly one edge of $\xtree$ implies that $S$ must cut one of the leaves of $\xtree$. Therefore, $S$ must include at least one of the leaves of $\xtree$.

    Since $\xtree$ is the minimal tree connecting the leaves in $\otree$, the leaves of $\xtree$ must be in the connected component $\largest(\opti)$ of $\otree$. However, by Lemma \ref{lm:ext-nlg}, an active set assigning growth to a vertex in $\NLG$ during Extended Execution cannot include any vertex from $\largest(\opti)$. This leads to a contradiction.
    Therefore, $S$ cannot cut exactly one edge of $\xtree$.
\end{proof}

We can now prove our second lower bound on the cost of component $\opti\in \B$, which utilizes Lemma \ref{lm:steiner-tree} to bound the cost of $\xtree$, and lower bounds the coloring on the rest of $\otree$ using the fact that growth assigned to vertices in $\NLG$ cannot contribute to coloring exactly one edge of $\xtree$. Figure \ref{fig:extend-tree} illustrates this partition of $\otree$.

\tikzset{
  square/.style args={#1}{
    insert path={
        ++(45:1.1*#1)
      -- ++($(45:-1.2*#1)+(135:1.1*#1)$)
      -- ++($(135:-1.2*#1)+(225:1.1*#1)$)
      -- ++($(225:-1.2*#1)+(315:1.1*#1)$)
      -- cycle %
    }
  }
}
\tikzset{
  star/.style args={#1}{
    insert path={
      ++(90:1.1*#1)
      \foreach \a in {90,162,234,306,378} {
        -- ++($(\a:-1.4*#1) + (\a+36:0.7*#1)$)
        -- ++($(\a+36:-0.7*#1) + (\a+72:1.4*#1)$)
      }
      -- cycle
    }
  }
}
\begin{figure}[ht]
    \centering
\begin{tikzpicture}[
scale=0.7
]

\def\dem{Red!70}
\def\ter{White}
\def\col{Black!70}
\def\noder{0.1cm}
\def\inoder{0.1cm}
\def\enoder{0.1cm}

\def\dblue{Black!100}

\def\n{4}
\def\m{4}
\pgfmathtruncatemacro{\last}{\n+1}

\def\tercl#1{%
  \ifcase#1
    White\or
    olive!20\or
    Green!20\or
    Sepia!20\or
    Plum!20\or
    Goldenrod!20\or
  \fi
}
\def\tersh#1{%
  \ifcase#1
    none\or
    star\or
    square\or
    pentagon\or
    triangle\or
    hexagon\or
  \fi
}

\tikzset{
  treenode/.style={
    circle,
    draw,
    \col,
    minimum size=2*\noder,
    inner sep=0pt,
    outer sep=0pt,
    fill=\ter
  },
  rootnode/.style={
    treenode,
  },
  leafnode/.style={
    treenode,
    fill=\col
  },
  treeedge/.style={
    color=\col,
  },
  tickedge/.style={
    treeedge,
    color=\dblue,
    line width=1.5pt
  }
}

\newcommand{\sml}[1]{\scalebox{0.9}{#1}}
\def\lmar{-2pt}
\def\edge{1.2}

\coordinate (A) at (0*\edge,0*\edge);

\coordinate (B) at (-2*\edge,-1.3*\edge);
\coordinate (C) at (0*\edge,-1.1*\edge);
\coordinate (D) at (2*\edge,-1.6*\edge);
\draw[tickedge] (A) -- (B) node[pos=0.2, above left] {\sml{$\xtree$}};
\draw[tickedge] (A) -- (C);
\draw[tickedge] (A) -- (D);

\coordinate (E) at (-2.8*\edge,-2.6*\edge);
\coordinate (F) at (-2.1*\edge,-2.4*\edge);
\coordinate (G) at (-1.4*\edge,-2.2*\edge);
\draw[tickedge] (B) -- (E);
\draw[treeedge] (B) -- (F);
\draw[tickedge] (B) -- (G);

\coordinate (H) at (-0.5*\edge,-2.3*\edge);
\draw[tickedge] (C) -- (H);

\coordinate (I) at (1.3*\edge,-2.1*\edge);
\coordinate (J) at (2.5*\edge,-2.5*\edge);
\draw[tickedge] (D) -- (I);
\draw[treeedge] (D) -- (J);

\coordinate (K) at (-2.9*\edge,-3.7*\edge);
\coordinate (L) at (-2.4*\edge,-3.4*\edge);
\draw[tickedge] (E) -- (K);
\draw[treeedge] (E) -- (L);

\coordinate (M) at (-1.3*\edge,-3.5*\edge);
\draw[tickedge] (G) -- (M);

\coordinate (N) at (1.0*\edge,-3.0*\edge);
\coordinate (O) at (1.5*\edge,-3.4*\edge);
\coordinate (P) at (2.2*\edge,-3.1*\edge);
\draw[treeedge] (I) -- (N);
\draw[tickedge] (I) -- (O);
\draw[treeedge] (I) -- (P);

\foreach \i in {A, B, C, D, E, G, I} {
    \draw[\col, fill= \ter] (\i) circle (\inoder);
}
\foreach \i in {F, J, L, N, P} {
    \draw[\col, fill=\col] (\i) [square=\enoder];
}
\foreach \i in {H, K, M, O} {
    \draw[\col, fill=\col] (\i) [star=\enoder];
}

\end{tikzpicture}
    \caption{An illustration of tree $\otree$ for a component $\opti$ of $\OPT$ with the bolded subtree representing $\xtree$. We use the fact that vertices in $\LG$ connect actively to provide a stronger bound on the cost of $\xtree$ based on our local search algorithm, while also bounding the cost of the rest of the tree $\otree$ using growth assigned to vertices outside $\LG$.
    }
    \label{fig:extend-tree}
\end{figure} 
\begin{lemma} \label{lm:ext-tree-bound}
    For any component $\opti\in\B$, we have
    \[
    \frac{6}{5+\bta} \left[\sum_{v\in\largest(\opti)}\rz_v - \rz_{\max}(\largest(\opti))\right] + \sum_{v\in\opti\setminus\largest(\opti)} \rz_v -  \X(\opti)  \leq \cc(\otree)
    \]
\end{lemma}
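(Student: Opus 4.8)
The plan is to split the cost of $\otree$ into the cost of the subtree $\xtree$ connecting the vertices of $\largest(\opti)$ and the cost of the remaining edges, bounding each piece separately. First I would recall that the vertices of $\largest(\opti)$ are, by Definition~\ref{def:largest}, the intersection of a single maximal superactive set in $\maxactsz$ with $\opti$, so they are actively connected in Extended-Boosted Execution, and $\largest(\opti)$ is connected in $\OPT$ with spanning subtree $\xtree$. I would also check that the restriction $\rz$ to $\largest(\opti)$ is priority-based on $\largest(\opti)$ in the sense of Definition~\ref{def:priority-based}: this is exactly what Lemma~\ref{lm:two-v-in-same-act-comp-assigns-to-higher-variety-priority} (the lemma stated just above, \lm:two-v-in-same-act-comp-assigns-to-higher-priority) gives, since whenever an active set $S$ assigns growth to $v\in\largest(\opti)$ we have $v=\argmax_{u\in S\cap\largest(\opti)}\priority_u$. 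With these hypotheses, Lemma~\ref{lm:steiner-tree} applied with $S=\largest(\opti)$ and $\rstree=\rz$ yields
\[
\frac{6}{5+\bta}\left[\sum_{v\in\largest(\opti)}\rz_v - \rz_{\max}(\largest(\opti))\right] \le \cc(\xtree).
\]

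For the remaining part of the tree, the idea is to lower bound $\cc(\otree)-\cc(\xtree)$ — really the coloring on $\otree\setminus\xtree$ — by $\sum_{v\in\NLG}\rz_v$ minus $\X(\opti)$. I would start from Lemma~\ref{lm:tree-bound-by-activeset-num-cuts} in the form $\cc(\otree)\ge \sum_S |\deltaS\cap\otree|\cdot\yp_S$, and separate active sets in Extended Execution by how many edges of $\xtree$ they cut, paralleling the argument in Lemma~\ref{lm:ext-x-bound}. The key structural input is Lemma~\ref{lm:ext-cut-lg}: any active set $S$ with $\rp_{S,v}>0$ for some $v\in\NLG$ cannot cut exactly one edge of $\xtree$, hence it either cuts $\xtree$ in at least two edges (equivalently, cuts $\otree$ in at least two edges, contributing to $\X(\opti)$ by Definition~\ref{def:ext-X}) or it cuts no edge of $\xtree$ but at least one edge of the rest of $\otree$. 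Writing $\cc(\otree)\ge \cc(\xtree)$-part plus the remainder and carefully assigning each unit of growth of such sets either to coloring an edge of $\xtree$ (those are $\ge 2$ when the set is in $\X(\opti)$, so we can peel off one copy into the $\xtree$ budget and keep one for the remainder) or an edge outside $\xtree$, I expect to get
\[
\cc(\otree) - \cc(\xtree) \ge \sum_{v\in\NLG}\rp_v - \X(\opti),
\]
and then convert $\rp$ to $\rz$ on $\NLG$ using Corollary~\ref{cor:ext-nlg-rz-less-rp}, giving $\sum_{v\in\NLG}\rz_v - \X(\opti)$. Since $\opti\setminus\largest(\opti)=\NLG$, adding this to the $\xtree$ bound from the previous paragraph gives the desired inequality.

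The main obstacle I anticipate is the bookkeeping in the second part: carefully accounting for the coloring of $\xtree$ versus $\otree\setminus\xtree$ so that the same growth is not charged twice, while still extracting the $\X(\opti)$ term. The subtlety is that a set $S$ cutting $\otree$ in $\ge 2$ edges might cut $\xtree$ in $0$ or in $\ge 2$ edges; in the former case its growth is already available for the ``remainder'' and also shows up in $\X(\opti)$, and in the latter case it colors $\ge 2$ edges of $\xtree$ but Lemma~\ref{lm:steiner-tree} has already been applied to bound $\cc(\xtree)$, so I must be sure the $\xtree$ bound does not rely on the same $\yp_S$ mass. The clean way around this is to avoid double counting by keeping the two bounds completely separate: $\cc(\xtree)$ bounded purely via Lemma~\ref{lm:steiner-tree} (which internally uses growth of sets cutting $\xtree$), and $\cc(\otree)\ge \cc(\xtree) + (\text{coloring outside }\xtree)$, then lower-bounding the coloring outside $\xtree$ by growth assigned to $\NLG$-vertices from sets that cut $\otree$ but not $\xtree$, plus the part of $\X(\opti)$ coming from sets that cut $\otree$ in $\ge 2$ edges but $\xtree$ in $0$ — and arguing that for the sets in $\X(\opti)$ cutting $\xtree$ in $\ge 2$ edges, one of those cuts is already ``inside'' the Lemma~\ref{lm:steiner-tree} budget while a second lies outside and is free. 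Getting the inequality $\cc(\otree)\ge \cc(\xtree) + \sum_{v\in\NLG}\rp_v - \X(\opti)$ right will require this case split to be executed precisely, but it is essentially the same technique as in Lemma~\ref{lm:ext-x-bound}, now localized to $\xtree$.
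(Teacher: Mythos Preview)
Your proposal is correct and follows essentially the same route as the paper: split $\cc(\otree)=\cc(\xtree)+\cc(\otree\setminus\xtree)$, bound $\cc(\xtree)$ via Lemma~\ref{lm:steiner-tree} applied to $\largest(\opti)$ with $\rstree=\rz$ (using Lemma~\ref{lm:two-v-in-same-act-comp-assigns-to-higher-priority} for the priority-based condition), and bound $\cc(\otree\setminus\xtree)$ from below by $\sum_{v\in\NLG}\rp_v-\X(\opti)$ using Lemmas~\ref{lm:tree-bound-by-activeset-num-cuts}, \ref{lm:ext-nlg}, \ref{lm:ext-cut-lg}, then convert via Corollary~\ref{cor:ext-nlg-rz-less-rp}.

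Your double-counting worry is overcomplicated: there is none, because the two bounds use disjoint resources. The $\xtree$ bound from Lemma~\ref{lm:steiner-tree} does not use any $\yp_S$ coloring mass---it is a purely structural bound on $\rz$-values in terms of $\cc(\xtree)$. The paper then bounds $\cc(\otree\setminus\xtree)$ directly by $\sum_{S\odot(\otree\setminus\xtree)}\yp_S \ge \sum_{S\odot\otree,\,\deltaS\cap\xtree=\emptyset}\sum_{v\in\NLG}\rp_{S,v}$, rewrites this as the full sum over $S\odot\otree$ minus the sum over $S\odot\xtree$, and applies Lemma~\ref{lm:ext-cut-lg} to the subtracted term to recognize it as bounded by $\X(\opti)$. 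No case split on ``$0$ versus $\ge 2$ cuts of $\xtree$'' is needed beyond what Lemma~\ref{lm:ext-cut-lg} already provides.
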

\begin{proof}
    First, consider the set $\largest(\opti)$ and the tree $\xtree$ connecting these vertices. By definition, the vertices in $\largest(\opti)$ belong to the same maximal superactive set and reach each other actively in Extended-Boosted Execution. Additionally, by Lemma \ref{lm:two-v-in-same-act-comp-assigns-to-higher-priority}, assignment $\rz$ is priority-based on $\largest(\opti)$. Thus, we can apply Lemma \ref{lm:steiner-tree} with $S=\largest(\opti)$ and $\rstree=\rz$ to get
    \begin{align*}
        \frac{6}{5+\bta} \left[\sum_{v\in\largest(\opti)}\rz_v - \rz_{\max}(\largest(\opti))\right]\leq \cc(\xtree).
    \end{align*}
    On the other hand, we can bound the cost of $\otree\setminus\xtree$ based on the total coloring as follows:
    \begin{align*}
    \cc(\otree\setminus\xtree) &\geq \sum_{\substack{S\subseteq\V\\S\odot (\otree\setminus\xtree)}}  \yp_S \tag{Lemma \ref{lm:tree-bound-by-activeset-num-cuts}}\\ &\geq \sum_{\substack{S\subseteq\V\\S\odot \otree \\\deltaS\cap\xtree=\emptyset}}  \yp_S \\
    &\geq
    \sum_{\substack{S\subseteq\V\\S\odot \otree \\\deltaS\cap\xtree=\emptyset}} \sum_{v\in\NLG} \rp_{S,v}. \tag{$\yp_S \geq \sum_{v\in v_1} \rp_{S,v}$ for any $V_1\subseteq \V$}
    \end{align*}
    Next, we can break down the outer summation into two parts by taking all sets that cut $\otree$ and subtracting those that also cut $\xtree$, resulting in
    \begin{align*}
    \cc(\otree\setminus\xtree) &\geq \sum_{\substack{S\subseteq\V\\S\odot \otree}} \sum_{v\in\NLG} \rp_{S,v} - \sum_{\substack{S\subseteq\V\\S\odot \xtree}} \sum_{v\in\NLG} \rp_{S,v}.
    \end{align*}
    By Lemma \ref{lm:ext-cut-lg}, no active set $S$ with $\rp_{S,v}>0$ for some vertex $v\in \NLG$ can cut exactly one edge of $\xtree$. Therefore, any active set $S$ contributing to the second summation with a non-zero summand $\sum_{v\in\NLG} \rp_{S,v}$ must cut at least two edges of $\xtree$. Additionally, any active set cutting at least two edges of $\xtree$ cuts at least two edges of $\otree$, and we have 
    \begin{align*}
\cc(\otree\setminus\xtree)&\geq\sum_{\substack{S\subseteq\V\\S\odot \otree}} \sum_{v\in\NLG} \rp_{S,v} - \sum_{\substack{S\subseteq\V\\\lvert\deltaS\cap\xtree\rvert>1}} \sum_{v\in\NLG} \rp_{S,v}\\
&\geq\sum_{\substack{S\subseteq\V\\S\odot \otree}} \sum_{v\in\NLG} \rp_{S,v} - \sum_{\substack{S\subseteq\V\\\lvert\deltaS\cap\otree\rvert>1}} \sum_{v\in\NLG} \rp_{S,v}. \tag{$\xtree\subseteq\otree$}
    \end{align*}
    Furthermore, any active set with $\sum_{v\in\NLG}\rp_{S,v}>0$ must include a vertex in $\NLG$, and no vertex in $\largest(\opti)$ by Lemma \ref{lm:ext-nlg}, and must therefore cut $\otree$, which connects $\NLG$ and $\largest(\opti)$. 
    Thus, we can simplify the bound for the first summation to get
    \begin{align*}    
    \cc(\otree\setminus\xtree)& \geq \sum_{S\subseteq\V} \sum_{v\in\NLG} \rp_{S,v} - \sum_{\substack{S\subseteq\V\\\lvert\deltaS\cap\otree\rvert>1}} \sum_{v\in\NLG} \rp_{S,v}\\
    &= \sum_{S\subseteq\V} \sum_{v\in\NLG} \rp_{S,v} - \X(\opti) \tag{Definition \ref{def:ext-X}}\\
    &\geq  \sum_{v\in\NLG}\sum_{S\subseteq\V} \rp_{S,v} - \X(\opti) \tag{Change order of summation}\\
    &=\sum_{V\in\NLG} \rp_v  - \X(\opti) \tag{$\sum_{S\subseteq\V}\rp_{S,v}=\rp_v$ for any $v$}\\
    &\geq \sum_{V\in\NLG} \rz_v  - \X(\opti). \tag{Corollary \ref{cor:ext-nlg-rz-less-rp}}
    \end{align*}
    Finally, the desired inequality is found by combining the bounds on $\cc(\xtree)$ and $\cc(\otree\setminus\xtree)$ since $\cc(\otree)=\cc(\otree\setminus\xtree) + \cc(\xtree)$.
\end{proof}

The following lemma includes our third bound for the cost of component $\opti \in \B$, which does not depend on $\X(\opti)$.
\begin{lemma}\label{lm:ext-simp-bound}
    For any component $\opti\in \B$, we have 
    \[
    \sum_{v\in\opti}\rz_v - \sum_{v\in\largest(\opti)}\rz_v \leq (1+\eps)\cc(\otree) -(1+\eps)\sum_{v\in\largest(\opti)}\rstar_v.
    \]
\end{lemma}
\begin{proof}
    We have
    \begin{align*}
        \sum_{v\in\opti}\rz_v - \sum_{v\in\largest(\opti)}\rz_v  
        &=\sum_{v\in\NLG} \rz_v\\
        &\leq\sum_{v\in\NLG} \rp_v\tag{Lemma \ref{cor:ext-nlg-rz-less-rp}}\\
        &\leq (1+\eps)\sum_{v\in\NLG} \rstar_v \tag{Corollary \ref{cor:ext-nlg-eps}}\\
        &=(1+\eps)\left(\sum_{v\in\opti}\rstar_v - \sum_{v\in\largest(\opti)}\rstar_v\right)\\
        &\leq (1+\eps)\cc(\otree)- (1+\eps)\sum_{v\in\largest(\opti)}\rstar_v \tag{Lemma \ref{lm:opt-lowerbound_rstar}}.
    \end{align*}
\end{proof}

The next set of lemmas helps us bound the value of $\X(\opti)$ for any component $\opti$ of $\OPT$.

\begin{lemma} \label{lm:ext-pot-satisfied}
    For any active set $S$ in the potential phase of Extended Execution,
    we have 
    \[
    \unsatisfied(S)=\emptyset.
    \]
\end{lemma}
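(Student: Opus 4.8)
\textbf{Proof plan for Lemma~\ref{lm:ext-pot-satisfied}.}

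The goal is to show that if an active set $S$ is in the potential phase during Extended Execution, then every vertex in $S$ is satisfied (connected to its pair). The plan is to argue by contradiction: suppose some vertex $w \in \unsatisfied(S)$, so $\pair_w \notin S$. By the definition of the potential phase (Definition~\ref{def:poten-phase}), no vertex of $S$ is in the base or boost phase, so for every $v \in S$ we have $\ttt_v \le \currenttime$ at the current moment $\currenttime$; in particular $\ttt_w \le \currenttime$, hence also $\tplus_w \le \currenttime$ by Corollary~\ref{cor:tt-ge-tplus}. The key point is that $\tplus_w$ is exactly the moment when $w$ and $\pair_w$ first become connected in Legacy Execution. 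So at moment $\currenttime \ge \tplus_w$, the vertices $w$ and $\pair_w$ lie in the same connected component in Legacy Execution.

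Next I would use the refinement relations between the executions to transfer this connectivity to Extended Execution. Since $\tplus$ is smaller than $\ttt$ (Corollary~\ref{cor:tt-ge-tplus}) and $\ttt$ is smaller than $\tp$ (Definition~\ref{def:ep}), $\tp$ is larger than $\tplus$. By Lemma~\ref{lm:large_fingerprint_refinement}, at moment $\currenttime$ the connected components of Legacy Execution form a refinement of those of Extended Execution. Hence $w$ and $\pair_w$, being in the same connected component of Legacy Execution at moment $\currenttime$, must also be in the same connected component of Extended Execution at moment $\currenttime$. But at moment $\currenttime$, $w \in S$ and $S$ is a connected component of Extended Execution, so $\pair_w \in S$ as well — contradicting $w \in \unsatisfied(S)$, i.e. $\pair_w \notin S$. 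This contradiction shows $\unsatisfied(S) = \emptyset$.

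One subtlety to be careful about: I should make sure the argument applies at the precise moment $S$ is in the potential phase rather than some later moment, and that $S$ being a ``connected component'' of Extended Execution at that moment is exactly what Definition~\ref{def:monotonic}/the algorithm guarantees for active sets. Since active sets in a monotonic moat growing algorithm are always connected components of the current forest $\F$, and the refinement in Lemma~\ref{lm:large_fingerprint_refinement} is stated for connected components at each moment $\currenttime$, this lines up directly. The only genuinely substantive step is the transfer of connectivity via the fingerprint ordering $\tplus \preceq \tp$ together with Lemma~\ref{lm:large_fingerprint_refinement}; everything else is bookkeeping about phases. I do not expect a serious obstacle here — the lemma is essentially the Extended-Execution analogue of the fact (used implicitly throughout) that once $\currenttime$ passes $\tplus_v$, the pair of $v$ is already together. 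The main thing to watch is citing the correct chain of ``larger fingerprint'' facts (Corollary~\ref{cor:tt-ge-tplus} and Definition~\ref{def:ep}) to justify that $\tp$ dominates $\tplus$ before invoking Lemma~\ref{lm:large_fingerprint_refinement}.
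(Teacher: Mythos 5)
Your proof is correct and takes essentially the same route as the paper: argue by contradiction, observe that the potential phase forces $\tplus_w \le \currenttime$ so $w$ and $\pair_w$ are already in the same Legacy-Execution component at time $\currenttime$, then use a fingerprint-refinement argument to transfer that connectivity into $S$ and derive the contradiction. The only cosmetic difference is that the paper chains Corollary~\ref{cor:localsearch_activeset_refinement} and Corollary~\ref{cor:extend_refinement} (Legacy $\to$ Boosted $\to$ Extended), whereas you invoke Lemma~\ref{lm:large_fingerprint_refinement} directly once you note $\tplus \le \ttt \le \tp$; both are equivalent ways of getting the same refinement fact.
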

\begin{proof}
    Assume otherwise that there exists a vertex $v \in S$ such that $\pairv \notin S$. 
    Let $\currenttime$ be a moment when $S$ is in the potential phase. 
    Since $S$ is in the potential phase, we must have \(\ttt_v \leq \currenttime\).
    Let $S'$ be the set containing $v$ in Legacy Execution at moment $\currenttime$.
    By Corollary \ref{cor:localsearch_activeset_refinement} and Corollary  \ref{cor:extend_refinement},  it follows that $S' \subseteq S$.
    However, since $\tplus_v \leq \ttt_v \leq \currenttime$, $S'$ must contain both $v$ and $\pairv$, because $v$ and $\pairv$ are connected by $\tplus_v$ in Legacy Execution. This implies that $\pairv \in S' \subseteq S$, which is a contradiction.
\end{proof}

\begin{lemma} \label{lm:ext-pot-grow}
    For any component $\opti$ of $\OPT$, any active set in the potential phase of Extended Execution cannot cut exactly one edge of $\otree$.
\end{lemma}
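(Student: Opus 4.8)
The plan is to argue by contradiction, combining the structural fact about potential‑phase active sets (Lemma~\ref{lm:ext-pot-satisfied}) with the redundancy lemma for single‑edge cuts of optimal components (Lemma~\ref{lm:remove_one_edge}) and the minimality of the fixed optimal solution. So suppose, for the sake of contradiction, that some active set $S$ in the potential phase of Extended Execution cuts exactly one edge of $\otree$, say $\deltaS \cap \otree = \{e\}$.

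First I would invoke Lemma~\ref{lm:ext-pot-satisfied}, which gives $\unsatisfiedS = \emptyset$: every vertex of $S$ has its pair inside $S$. Next, since $S$ cuts exactly one edge $e$ of the component $\opti$ whose tree is $\otree$, Lemma~\ref{lm:remove_one_edge} applies and tells us that removing $e$ from $\OPT$ can only disconnect pairs having an endpoint in $\unsatisfiedS$. But $\unsatisfiedS = \emptyset$, so removing $e$ violates no demand at all; the forest $\OPT \setminus \{e\}$ is still a feasible solution to the Steiner Forest instance.

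Finally I would derive the contradiction from how the optimal solution was chosen in the Preliminaries (minimum cost, and among those, fewest edges). If $\ce > 0$, then $\cc(\OPT \setminus \{e\}) < \cc(\OPT)$, contradicting optimality. If $\ce = 0$, then $\OPT \setminus \{e\}$ has the same cost as $\OPT$ but strictly fewer edges, contradicting the choice of $\OPT$ as a minimum‑cost optimal solution with the fewest edges. In either case we reach a contradiction, so no active set in the potential phase of Extended Execution can cut exactly one edge of $\otree$.

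I do not expect a genuine obstacle here: the statement is essentially the ``potential‑phase'' analogue of the single‑edge‑set observation already used for $\BBG$, and both ingredients are available. The only points requiring mild care are checking that the hypothesis $\deltaS \cap \otree = \{e\}$ matches the hypothesis of Lemma~\ref{lm:remove_one_edge} verbatim, and splitting on whether $\ce$ is zero or positive so that the contradiction with the minimal‑edge choice of $\OPT$ is invoked in the degenerate case. This mirrors the reasoning sketched earlier in the paper for why an inactive set cannot cut exactly one edge of the optimal solution.
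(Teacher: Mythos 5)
Your proposal is correct and matches the paper's proof essentially step by step: contradiction, invoking Lemma~\ref{lm:ext-pot-satisfied} to get $\unsatisfiedS = \emptyset$, then Lemma~\ref{lm:remove_one_edge} to show the edge is removable, and finally the minimality of $\OPT$ (lower cost or fewer edges). Your explicit case split on whether $\ce$ is zero is the same observation the paper compresses into the phrase ``strictly lower cost or fewer edges.''
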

\begin{proof}
    Assume for contradiction that an active set $S$ in the potential phase at moment $\currenttime$ 
    cuts exactly one edge of $\otree$. By Lemma~\ref{lm:remove_one_edge}, 
    removing this single edge from $\otree$ can only disconnect demand pairs in $\opti$ 
    that are cut by $S$, which correspond precisely to $\unsatisfied(S)$. 
    However, since $S$ is in the potential phase, Lemma~\ref{lm:ext-pot-satisfied} implies that 
    $\unsatisfied(S) = \emptyset$.
    Therefore, removing this edge from $\OPT$ does not disconnect any demand pairs, yielding a valid solution with either a strictly lower cost or fewer edges than $\OPT$. This contradicts the definition of $\OPT$ as a minimal optimal solution (see Section~\ref{sec:prelim}).
\end{proof}

\begin{lemma} \label{lm:ext-epsx}
    For any component $\opti$ of $\OPT$, we have
    \[
    \X(\opti) \geq \eps \sum_{v \in \NLG} \rstar_v.
    \]
\end{lemma}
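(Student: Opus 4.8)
The plan is to show that the quantity $\X(\opti)$, which measures growth assigned to vertices of $\NLG$ by active sets cutting at least two edges of $\otree$, is at least the total potential $\eps\sum_{v\in\NLG}\rstar_v$ given to those vertices. The key observation is that all of the potential-phase growth assigned to vertices in $\NLG$ comes from active sets that must cut at least two edges of $\otree$, so the entire potential contribution to $\NLG$ is counted inside $\X(\opti)$.

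Concretely, I would start from Corollary~\ref{cor:ext-nlg-potphase-eps}, which states that the total growth assigned to vertices of $\NLG$ during the potential phase of active sets in Extended Execution is exactly $\eps\sum_{v\in\NLG}\rstar_v$. In the notation of Definition~\ref{def:rp}, this says
\[
\sum_{v\in\NLG}\sum_{S\subseteq\V}\rpp_{S,v} = \eps\sum_{v\in\NLG}\rstar_v.
\]
Each term $\rpp_{S,v}$ is only nonzero when $S$ is an active set in its potential phase that assigns growth to $v$ during that phase. I would then invoke Lemma~\ref{lm:ext-nlg}: since $\rpp_{S,v}\le\rp_{S,v}$, any active set $S$ with $\rpp_{S,v}>0$ for some $v\in\NLG$ satisfies $\rp_{S,v}>0$, hence by Lemma~\ref{lm:ext-nlg} such an $S$ contains no vertex of $\largest(\opti)$. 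Combined with the fact that $\otree$ connects $\NLG$ to $\largest(\opti)$ (so $S$ must cut $\otree$ at least once) and with Lemma~\ref{lm:ext-pot-grow} (a potential-phase active set cannot cut exactly one edge of $\otree$), we conclude $|\deltaS\cap\otree|>1$ for every such $S$. Therefore
\[
\eps\sum_{v\in\NLG}\rstar_v = \sum_{v\in\NLG}\sum_{S\subseteq\V}\rpp_{S,v}
= \sum_{v\in\NLG}\sum_{\substack{S\subseteq\V\\|\deltaS\cap\otree|>1}}\rpp_{S,v}
\le \sum_{v\in\NLG}\sum_{\substack{S\subseteq\V\\|\deltaS\cap\otree|>1}}\rp_{S,v}
= \X(\opti),
\]
where the last equality is Definition~\ref{def:ext-X}.

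The main subtlety — and the step I would be most careful about — is confirming that every active set contributing to the potential-phase assignment of $\NLG$ is actually in its \emph{potential phase} (not its base phase) at the relevant moment, so that Lemma~\ref{lm:ext-pot-grow} applies; this is exactly what the superscript $p$ in $\rpp_{S,v}$ encodes by Definition~\ref{def:rp}, so it is immediate, but it must be stated explicitly. A second point worth spelling out is that $S$ cutting at least two edges of $\otree$ follows from the combination of "$S$ cuts $\otree$" (because $S$ contains a vertex of $\NLG$ but none of $\largest(\opti)$, and both lie in the connected tree $\otree$) together with Lemma~\ref{lm:ext-pot-grow} ruling out the single-edge case; neither fact alone suffices. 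Everything else is a straightforward rearrangement of summations, so I do not expect any real obstacle beyond bookkeeping.
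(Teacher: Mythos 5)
Your proof is correct and follows essentially the same route as the paper: invoke Corollary~\ref{cor:ext-nlg-potphase-eps} for the exact potential-phase total, use Lemma~\ref{lm:ext-nlg} (together with $\rpp_{S,v} \le \rp_{S,v}$) to show any contributing set avoids $\largest(\opti)$ and hence cuts $\otree$, and apply Lemma~\ref{lm:ext-pot-grow} to rule out the single-edge case before matching against Definition~\ref{def:ext-X}. Your write-up is slightly more explicit than the paper's in tracking the $\rpb$/$\rpp$ decomposition and in spelling out the final chain of summations, but the underlying argument is identical.
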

\begin{proof}
    This is trivially true if $\NLG$ is empty. Otherwise, consider any active set in the potential phase assigning growth to a vertex in $\NLG$ during Extended Execution. This active set cannot include any vertex in $\largest(\opti)$ by Lemma \ref{lm:ext-nlg}. Therefore, it must cut $\otree$.

    Moreover, by Lemma \ref{lm:ext-pot-grow}, it cannot cut exactly one edge of $\otree$. Thus, by Corollary~\ref{cor:ext-nlg-potphase-eps}, the growth assigned to vertices in $\NLG$ by sets in the potential phase is at least $\eps \sum_{v \in \NLG} \rstar_v$. Since all of this growth is counted in $\X(\opti)$, we conclude that
    \[
    \X(\opti) \geq \eps \sum_{v \in \NLG} \rstar_v.
    \]
\end{proof}

\paragraph{Overall Bounds.}
Building on the bounds for the cost of individual components of $\OPT$, we now introduce bounds for the total cost of $\OPT$. To enhance readability, we begin by defining the following notations. 
\begin{definition} \label{def:ext-short}
    Given the optimal solution $\OPT$, we define the following:
    \begin{align*}
        &\XX = \sum_{\opti \in \B} \X(\opti)\\
        &\lgr = \sum_{\opti \in \B} \sum_{v \in \largest(\opti)} \rstar_v\\
        &\lgrz = \sum_{\opti \in \B} \sum_{v \in \largest(\opti)} \rz_v.
    \end{align*}
\end{definition}
\begin{lemma} \label{lm:ext-bound1}
    We can bound the sum of $\yz$ values as
    $$
        \sum_{S\subseteq\V} \yz_S 
        \le (1 + \eps)(1 - \cone) 
        \cc(\OPTA) 
        + \cc(\OPTB) - 
        \XX + 
        \eps \lgr+ 
        \rmaxsum - \bta \losstwo + \lossone.
    $$
\end{lemma}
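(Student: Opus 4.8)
The plan is to reduce the claim to the per-component estimates already established in this subsection, after first converting the win of the second local search into the $-\bta\losstwo$ term. I would begin from Lemma~\ref{lm:ext-yzs}, which gives $\sum_{S\subseteq\V}\yz_S = \sum_{S\subseteq\V}\yp_S - \wintwo + \losstwo$. Since Extended-Boosted Execution is produced by a \LocalSearch{} call that applies only valuable boost actions, Definition~\ref{def:total-win-loss} yields $\wintwo \ge (1+\bta)\losstwo$, hence $-\wintwo + \losstwo \le -\bta\losstwo$ and therefore $\sum_{S\subseteq\V}\yz_S \le \sum_{S\subseteq\V}\yp_S - \bta\losstwo$. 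Next I would apply Lemma~\ref{lm:ext-yps}, which bounds $\sum_{S\subseteq\V}\yp_S \le \sum_{v\in\V}\rp_v + \lossone$, so that it remains to prove $\sum_{v\in\V}\rp_v \le (1+\eps)(1-\cone)\cc(\OPTA) + \cc(\OPTB) - \XX + \eps\lgr + \rmaxsum$.

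For this last inequality I would split $\sum_{v\in\V}\rp_v$ over the components of $\OPT$, using the partition $\OPT = \A\cup\B$ from Definition~\ref{def:classify_A_B}, so that $\sum_{v\in\V}\rp_v = \sum_{\opti\in\A}\sum_{v\in\opti}\rp_v + \sum_{\opti\in\B}\sum_{v\in\opti}\rp_v$. For a component $\opti\in\A$, Lemma~\ref{lm:ext-a-bound} gives $\sum_{v\in\opti}\rp_v \le (1+\eps)(1-\cone)\cc(\otree)$, and summing over $\A$ with $\cc(\OPTA) = \sum_{\opti\in\A}\cc(\otree)$ disposes of that family. For a component $\opti\in\B$, Lemma~\ref{lm:ext-x-bound} gives $\sum_{v\in\opti}\rp_v \le \cc(\otree) - \X(\opti) + \rp_{\max}(\largest(\opti))$; Lemma~\ref{lm:ext-rpmax} bounds $\rp_{\max}(\largest(\opti)) \le \rstar_{\max}(\opti) + \eps\sum_{v\in\largest(\opti)}\rstar_v$; and Lemma~\ref{lm:rstar-smaller-rr} (giving $\rstar_v \le \rr_v$ for every vertex, hence $\rstar_{\max}(\opti) \le \rmax(\opti)$) lets me replace $\rstar_{\max}(\opti)$ by $\rmax(\opti)$. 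Summing the resulting bound over $\opti\in\B$ and using $\cc(\OPTB) = \sum_{\opti\in\B}\cc(\otree)$, together with the identities $\XX = \sum_{\opti\in\B}\X(\opti)$, $\lgr = \sum_{\opti\in\B}\sum_{v\in\largest(\opti)}\rstar_v$, and $\rmaxsum = \sum_{\opti\in\B}\rmax(\opti)$ of Definition~\ref{def:ext-short}, gives $\sum_{\opti\in\B}\sum_{v\in\opti}\rp_v \le \cc(\OPTB) - \XX + \rmaxsum + \eps\lgr$. Adding the $\A$- and $\B$-estimates and feeding the total back through the reduction of the previous paragraph yields exactly the claimed bound.

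I do not anticipate a genuine obstacle: the argument is a chain of inequalities that have all been proved earlier. The two points that need care are (i) obtaining the $-\bta\losstwo$ term from $\wintwo - \losstwo \ge \bta\losstwo$, i.e.\ from the valuability of the boosts, rather than from any sign assumption on $\losstwo$ (which need not be nonnegative); and (ii) the bookkeeping ensuring that the per-component right-hand sides add up exactly to the global quantities $\cc(\OPTA)$, $\cc(\OPTB)$, $\XX$, $\lgr$, and $\rmaxsum$. This bound will later be combined with the companion estimates on $\sum_{v\in\V}\rz_v$ (via Lemma~\ref{lm:ext-tree-bound}, Lemma~\ref{lm:ext-simp-bound}, and the $\X(\opti)$ lower bounds) to control the cost of $\solext$.
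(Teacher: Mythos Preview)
Your proposal is correct and follows essentially the same argument as the paper: both combine Lemma~\ref{lm:ext-a-bound} (for $\A$) and Lemma~\ref{lm:ext-x-bound} (for $\B$) to bound $\sum_{v\in\V}\rp_v$, then invoke Lemma~\ref{lm:ext-rpmax} and Lemma~\ref{lm:rstar-smaller-rr} to handle the $\rp_{\max}(\largest(\opti))$ term, and finally pass through Lemmas~\ref{lm:ext-yps} and~\ref{lm:ext-yzs} together with Definition~\ref{def:total-win-loss}. The only difference is the order of exposition (you reduce $\yz\to\yp\to\rp$ first and then bound $\rp$, while the paper does it the other way around); one incidental remark is that $\losstwo$ is in fact nonnegative (it equals $\yz_{\add}$ by Lemma~\ref{lm:loss-equal-yadd}), so your caution in point~(i) is not actually needed, though it does no harm.
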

\begin{proof}
    We use Lemma \ref{lm:ext-a-bound} for components in $\A$ and Lemma \ref{lm:ext-x-bound} for components in $\B$. Then, summing these inequalities over all components of $\OPT$, we get
    \begin{align*}
        \sum_{\opti \in \A} \sum_{v\in\opti} \rp_v + \sum_{\opti \in \B} \left[\X(\opti) -\rp_{\max}(\largest(\opti)) + \sum_{v\in\opti} \rp_v\right] 
        &\leq \sum_{\opti \in \A} (1+\eps)(1-\cone)\cc(\otree) + \sum_{\opti\in\B} \cc(\otree)\\
        &= (1+\eps)(1-\cone)\cc(\OPTA) + \cc(\OPTB)
    \end{align*}
    Next, gather the sum $\sum_{v\in\V}\rp_v$ on the left-hand side and move the other terms to the right-hand side, which results in
    \begin{align*}
        \sum_{v\in \V} \rp_v &\leq (1+\eps)(1-\cone)\cc(\OPTA) + \cc(\OPTB) - \sum_{\opti\in\B} \X(\opti) +\sum_{\opti \in \B}\rp_{\max}(\largest(\opti))\\
        &\leq (1+\eps)(1-\cone)\cc(\OPTA) + \cc(\OPTB) - \XX +\sum_{\opti \in \B}\rp_{\max}(\largest(\opti)) \tag{Definition \ref{def:ext-short}}\\
        &\leq (1+\eps)(1-\cone)\cc(\OPTA) + \cc(\OPTB) -\XX +\sum_{\opti \in \B}(\rstar_{\max}(\opti) + \eps \sum_{v \in \largest(\opti)} \rstar_v) \tag{Lemma \ref{lm:ext-rpmax}}\\
        &\leq (1+\eps)(1-\cone)\cc(\OPTA) + \cc(\OPTB) -\XX + \eps \lgr + \sum_{\opti \in \B}\rstar_{\max}(\opti)\tag{Definition \ref{def:ext-short}}\\
        & \leq (1+\eps)(1-\cone)\cc(\OPTA) + \cc(\OPTB) -\XX + \eps \lgr + \sum_{\opti \in \B}\rr_{\max}(\opti) \tag{Lemma \ref{lm:rstar-smaller-rr}}.
    \end{align*}
    Additionally, we have
    \begin{align*}
        \sum_{S\subseteq\V} \yz_S &\leq \sum_{S\subseteq\V} \yp_S - \wintwo + \losstwo \tag{Lemma \ref{lm:ext-yzs}}\\&\leq \sum_{S\subseteq\V} \yp_S - (1+\bta)\losstwo + \losstwo\tag{Definition \ref{def:total-win-loss}}\\
        &\leq \sum_{v\in V} \rp_v - \bta\losstwo + \lossone\tag{Lemma \ref{lm:ext-yps}}.
    \end{align*}
    Finally, we can combine the inequalities to get
    \begin{align*}
        \sum_{S\subseteq\V} \yz_S
        & \leq (1+\eps)(1-\cone)\cc(\OPTA) + \cc(\OPTB) -\XX + \eps \lgr + \rmaxsum - \bta\losstwo + \lossone.
    \end{align*}
\end{proof}

\begin{lemma}\label{lm:ext-bound2}
    We can bound the sum of $\yz$ values as
\begin{align*} 
        \sum_{S\subseteq\V} \yz_S 
        &\le 
        (1 + \eps)(1 - \cone) \cc(\OPTA) 
        + \cc(\OPTB)- \frac{1-\bta}{5+\bta} \lgrz
        + \frac{6\eps}{5+\bta} 
        \lgr
        + \frac{6}{5+\bta} 
        \rmaxsum +  \XX + \lossone + \losstwo.
\end{align*}
\end{lemma}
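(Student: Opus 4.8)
The plan is to follow the same skeleton as the proof of Lemma~\ref{lm:ext-bound1}, but to carry the assignment $\rz$ of Extended-Boosted Execution through the whole argument instead of $\rp$, so that the Steiner-tree–style estimate of Lemma~\ref{lm:ext-tree-bound} can be used on the components in $\B$. The starting point is Lemma~\ref{lm:ext-yzrz}, which already gives
\[
\sum_{S\subseteq\V}\yz_S \;\le\; \sum_{v\in\V}\rz_v + \lossone + \losstwo,
\]
so it suffices to bound $\sum_{v\in\V}\rz_v$, and for that I would split the sum across the connected components of $\OPT=\A\cup\B$ (recall $\OPT$ is a partition of $\V$).

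For each $\opti\in\A$ I would simply invoke Corollary~\ref{cor:ext-a-rz-bound}, which gives $\sum_{v\in\opti}\rz_v \le (1+\eps)(1-\cone)\cc(\otree)$; summing over $\A$ contributes the term $(1+\eps)(1-\cone)\cc(\OPTA)$. For each $\opti\in\B$ I would rearrange Lemma~\ref{lm:ext-tree-bound}: writing $\sum_{v\in\opti}\rz_v=\sum_{v\in\largest(\opti)}\rz_v+\sum_{v\in\opti\setminus\largest(\opti)}\rz_v$ and substituting the bound on the second summand, one gets
\[
\sum_{v\in\opti}\rz_v \;\le\; \cc(\otree) + \X(\opti) - \frac{1-\bta}{5+\bta}\sum_{v\in\largest(\opti)}\rz_v + \frac{6}{5+\bta}\,\rz_{\max}(\largest(\opti)),
\]
where the coefficient $-\tfrac{1-\bta}{5+\bta}$ on $\sum_{v\in\largest(\opti)}\rz_v$ comes from the identity $1-\tfrac{6}{5+\bta}=-\tfrac{1-\bta}{5+\bta}$. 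I would then upper bound $\rz_{\max}(\largest(\opti))$ via Lemma~\ref{lm:ext-rzmax} by $\rstar_{\max}(\opti)+\eps\sum_{v\in\largest(\opti)}\rstar_v$, and further bound $\rstar_{\max}(\opti)\le\rr_{\max}(\opti)=\rmax(\opti)$ using Lemma~\ref{lm:rstar-smaller-rr}.

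Summing this inequality over $\opti\in\B$ and collecting the pieces according to Definition~\ref{def:ext-short} (with $\XX=\sum_{\opti\in\B}\X(\opti)$, $\lgr=\sum_{\opti\in\B}\sum_{v\in\largest(\opti)}\rstar_v$, and $\lgrz=\sum_{\opti\in\B}\sum_{v\in\largest(\opti)}\rz_v$) yields
\[
\sum_{\opti\in\B}\sum_{v\in\opti}\rz_v \;\le\; \cc(\OPTB) - \frac{1-\bta}{5+\bta}\lgrz + \frac{6\eps}{5+\bta}\lgr + \frac{6}{5+\bta}\rmaxsum + \XX.
\]
Adding the $\A$-bound and plugging the total into the Lemma~\ref{lm:ext-yzrz} inequality gives exactly the asserted bound on $\sum_{S\subseteq\V}\yz_S$. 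There is no real conceptual obstacle left, since everything substantive (the claw property, the potential/extension accounting, and the fact that $\rz$ is priority-based on $\largest(\opti)$) is already packaged inside Lemmas~\ref{lm:ext-tree-bound},~\ref{lm:ext-rzmax}, and Corollary~\ref{cor:ext-a-rz-bound}; the only thing that demands care is the coefficient bookkeeping — in particular getting the sign and value of the $\lgrz$ term right out of $1-\tfrac{6}{5+\bta}$, and verifying that the $\X(\opti)$ contribution ends up with a $+\XX$ (it does, because Lemma~\ref{lm:ext-tree-bound} carries $-\X(\opti)$ on its left-hand side, so it moves across with a plus sign).
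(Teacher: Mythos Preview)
Your proposal is correct and follows essentially the same approach as the paper: start from Lemma~\ref{lm:ext-yzrz}, apply Corollary~\ref{cor:ext-a-rz-bound} on $\A$ and Lemma~\ref{lm:ext-tree-bound} on $\B$, then bound $\rz_{\max}(\largest(\opti))$ via Lemma~\ref{lm:ext-rzmax} and Lemma~\ref{lm:rstar-smaller-rr}, and collect terms via Definition~\ref{def:ext-short}. The only cosmetic difference is that you rearrange and bound per component before summing, whereas the paper sums first and then bounds the aggregate $\sum_{\opti\in\B}\rz_{\max}(\largest(\opti))$; the coefficient bookkeeping (in particular $1-\tfrac{6}{5+\bta}=-\tfrac{1-\bta}{5+\bta}$) is identical.
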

\begin{proof}
    By Lemma \ref{lm:ext-yzrz}, we have
    \begin{align*}
        \sum_{S\subseteq\V} \yz_S \leq \sum_{v\in\V} \rz_v + \lossone + \losstwo.
    \end{align*}
    Therefore, to prove the desired bound, it suffices to show that 
    \begin{align*}
        \sum_{v\in\V} \rz_v \leq (1 + \eps)(1 - \cone) \cc(\OPTA) 
        + \cc(\OPTB)- \frac{1-\bta}{5+\bta} \lgrz
        + \frac{6\eps}{5+\bta} 
        \lgr
        + \frac{6}{5+\bta} 
        \rmaxsum +  \XX.
    \end{align*}
    We use Corollary \ref{cor:ext-a-rz-bound} for components in $\A$ and Lemma \ref{lm:ext-tree-bound} for components in $\B$. Then, summing these inequalities over all components of $\OPT$, we get
    \begin{align*}
        \sum_{\opti \in \A} \sum_{v\in\opti} \rz_v + \sum_{\opti \in \B} \left(\frac{6}{5+\bta} \left[\sum_{v\in\largest(\opti)}\rz_v - \rz_{\max}(\largest(\opti))\right] + \sum_{v\in\opti\setminus\largest(\opti)} \rz_v -  \X(\opti)  \right)
        &\leq (1+\eps)(1-\cone)\sum_{\opti \in \A}\cc(\otree) + \sum_{\opti \in \B}\cc(\otree)\\
        &\leq (1+\eps)(1-\cone)\cc(\OPTA) + \cc(\OPTB)
    \end{align*}
    
    Then, gathering the sum $\sum_{v\in V} \rz_v$ on the left-hand side and moving the other terms to right-hand side results in
    \begin{align*}
        \sum_{v\in V} \rz_v &\leq (1+\eps)(1-\cone)\cc(\OPTA) + \cc(\OPTB) - \frac{1-\bta}{5+\bta}\sum_{\opti \in \B} \sum_{v\in\largest(\opti)}\rz_v + \frac{6}{5+\bta}\sum_{\opti \in \B} \rz_{\max}(\largest(\opti)) + \sum_{\opti\in \B} \X(\opti)
    \end{align*}
    which can be rewritten as
    \begin{align*}
        \sum_{v\in V} \rz_v&\leq (1+\eps)(1-\cone)\cc(\OPTA) + \cc(\OPTB) - \frac{1-\bta}{5+\bta}\lgrz + \frac{6}{5+\bta}\sum_{\opti \in \B} \rz_{\max}(\largest(\opti)) + \XX
    \end{align*}
    using Definition \ref{def:ext-short}. Now, it suffices to show that
    \[
    \frac{6}{5+\bta}\sum_{\opti \in \B} \rz_{\max}(\largest(\opti)) 
    \leq \frac{6\eps}{5+\bta} 
        \lgr
        + \frac{6}{5+\bta} 
        \rmaxsum
    \]
    to prove our desired bound. This can be established as follows to complete the proof:
    \begin{align*}
        \frac{6}{5+\bta}\sum_{\opti \in \B} \rz_{\max}(\largest(\opti)) 
        &\leq \frac{6}{5+\bta}\sum_{\opti \in \B} \left[\rstar_{\max}(\opti) + \eps \sum_{v\in \largest(\opti)}\rstar_v \right]\tag{ Lemma \ref{lm:ext-rzmax}}\\
        &\leq \frac{6\eps}{5+\bta} \lgr +\frac{6}{5+\bta}\sum_{\opti \in \B} \rstar_{\max}(\opti)\tag{Definition \ref{def:ext-short}}\\
        &\leq \frac{6\eps}{5+\bta} \lgr +\frac{6}{5+\bta}\sum_{\opti \in \B} \rr_{\max}(\opti)\tag{Lemma \ref{lm:rstar-smaller-rr}}.
    \end{align*}
\end{proof}
\begin{lemma} \label{lm:ext-bound-3}
    We can bound the sum of $\yz$ values as
    \begin{align*}
        \sum_{S\subseteq V} \yz_S \leq (1 + \eps)(1 - \cone) \cc(\OPTA) 
        + (1+\eps) \cc(\OPTB) - (1+\eps)\lgr + \lgrz + \lossone + \losstwo.
    \end{align*}
\end{lemma}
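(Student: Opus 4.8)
The plan is to bound $\sum_{S\subseteq V} \yz_S$ using Lemma~\ref{lm:ext-yzrz}, which gives $\sum_{S\subseteq\V} \yz_S \leq \sum_{v\in\V} \rz_v + \lossone + \losstwo$, so it suffices to bound $\sum_{v\in\V}\rz_v$ by the remaining terms. I would split the vertex set into the components of $\OPT$ and, within each component $\opti$, into $\largest(\opti)$ and $\NLG$. For components in $\A$, I apply Corollary~\ref{cor:ext-a-rz-bound} to get $\sum_{v\in\opti}\rz_v \le (1+\eps)(1-\cone)\cc(\otree)$. For components in $\B$, the natural tool is Lemma~\ref{lm:ext-simp-bound}, which bounds the ``non-largest'' part: $\sum_{v\in\opti}\rz_v - \sum_{v\in\largest(\opti)}\rz_v \le (1+\eps)\cc(\otree) - (1+\eps)\sum_{v\in\largest(\opti)}\rstar_v$. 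Rearranging this gives $\sum_{v\in\opti}\rz_v \le (1+\eps)\cc(\otree) - (1+\eps)\sum_{v\in\largest(\opti)}\rstar_v + \sum_{v\in\largest(\opti)}\rz_v$.

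Then I would sum these bounds over all components. The $\A$-components contribute $(1+\eps)(1-\cone)\cc(\OPTA)$. The $\B$-components contribute $(1+\eps)\cc(\OPTB) - (1+\eps)\sum_{\opti\in\B}\sum_{v\in\largest(\opti)}\rstar_v + \sum_{\opti\in\B}\sum_{v\in\largest(\opti)}\rz_v$, which by Definition~\ref{def:ext-short} is exactly $(1+\eps)\cc(\OPTB) - (1+\eps)\lgr + \lgrz$. Adding back $\lossone + \losstwo$ from Lemma~\ref{lm:ext-yzrz} yields precisely the claimed inequality
\[
\sum_{S\subseteq V} \yz_S \leq (1 + \eps)(1 - \cone) \cc(\OPTA) + (1+\eps) \cc(\OPTB) - (1+\eps)\lgr + \lgrz + \lossone + \losstwo.
\]

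The proof is essentially a bookkeeping argument: the only subtlety is making sure the per-component bounds (Corollary~\ref{cor:ext-a-rz-bound} for $\A$ and the rearranged Lemma~\ref{lm:ext-simp-bound} for $\B$) sum correctly into the global quantities $\cc(\OPTA)$, $\cc(\OPTB)$, $\lgr$, and $\lgrz$ defined in Definition~\ref{def:ext-short}. I expect no real obstacle here — unlike the earlier bounds (Lemma~\ref{lm:ext-bound1} and Lemma~\ref{lm:ext-bound2}), this one does not involve $\XX$, the claw property, or the maximum-assigned-value terms, because Lemma~\ref{lm:ext-simp-bound} already absorbs everything about $\NLG$ directly via $\rstar$ and $\cc(\otree)$. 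The main thing to be careful about is the direction of the inequality when moving $\sum_{v\in\largest(\opti)}\rz_v$ to the right-hand side, and confirming that the $\rstar$-to-$\rz$ comparison is not needed in this particular bound (it is already baked into Lemma~\ref{lm:ext-simp-bound}'s statement). Hence the full argument is short: apply Lemma~\ref{lm:ext-yzrz}, bound $\sum_{v\in\V}\rz_v$ componentwise using Corollary~\ref{cor:ext-a-rz-bound} and Lemma~\ref{lm:ext-simp-bound}, sum, and invoke Definition~\ref{def:ext-short}.
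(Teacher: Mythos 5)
Your proposal is correct and takes essentially the same approach as the paper: both start from Lemma~\ref{lm:ext-yzrz}, bound $\sum_{v\in\V}\rz_v$ componentwise via Corollary~\ref{cor:ext-a-rz-bound} for $\A$ and Lemma~\ref{lm:ext-simp-bound} for $\B$, then sum and invoke Definition~\ref{def:ext-short} to collect the terms into $\cc(\OPTA)$, $\cc(\OPTB)$, $\lgr$, and $\lgrz$. The only cosmetic difference is that you rearrange Lemma~\ref{lm:ext-simp-bound} per component before summing, whereas the paper sums the raw form and moves $\sum_{\opti\in\B}\sum_{v\in\largest(\opti)}\rz_v$ to the right afterward, which is the same calculation.
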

\begin{proof}
    By Lemma \ref{lm:ext-yzrz}, we have
    \begin{align*}
        \sum_{S\subseteq\V} \yz_S \leq \sum_{v\in\V} \rz_v + \lossone + \losstwo.
    \end{align*}
    Therefore, to prove the desired bound, it suffices to show that 
    \begin{align*}
        \sum_{v\in\V} \rz_v \leq (1 + \eps)(1 - \cone) \cc(\OPTA) 
        + (1+\eps) \cc(\OPTB) - (1+\eps)\lgr + \lgrz.
    \end{align*}
    We use Corollary \ref{cor:ext-a-rz-bound} for components in $\A$ and Lemma \ref{lm:ext-simp-bound} for components in $\B$ and sum up the inequalities to arrive at the following:
    \begin{align*}
        \sum_{\opti\in\A}\sum_{v\in\opti}\rz_v + \sum_{\opti\in\B} \left(\sum_{v\in\opti}\rz_v - \sum_{v\in\largest(\opti)}\rz_v\right) 
        &\leq (1+\eps)(1-\cone)\sum_{\opti \in \A}\cc(\otree) + \sum_{\opti \in \B}\left((1+\eps)\cc(\otree) - (1+\eps)\sum_{v\in\largest(\opti)}\rstar_v\right)\\
        &= (1+\eps)(1-\cone)\cc(\OPTA) + (1+\eps) \cc(\OPTB) - (1+\eps) \sum_{\opti \in \B}\sum_{v\in\largest(\opti)}\rstar_v\\
        &= (1+\eps)(1-\cone)\cc(\OPTA) + (1+\eps) \cc(\OPTB) - (1+\eps) \lgr \tag{Definition \ref{def:ext-short}}.
    \end{align*}
    Then, gathering the sum $\sum_{v\in V} \rz_v$ on the left-hand side and moving the other terms to right-hand side results in
    \begin{align*}
        \sum_{v\in V}\rz_v &\leq (1+\eps)(1-\cone)\cc(\OPTA) + (1+\eps) \cc(\OPTB) - (1+\eps) \lgr + \sum_{\opti\in\B}\sum_{v\in\largest(\opti)}\rz_v\\
        &= (1+\eps)(1-\cone)\cc(\OPTA) + (1+\eps) \cc(\OPTB) - (1+\eps) \lgr + \lgrz \tag{Definition \ref{def:ext-short}},
    \end{align*}
    which completes the proof.
\end{proof}

Taking a weighted average of the bounds in Lemmas \ref{lm:ext-bound1} and \ref{lm:ext-bound2} with weights $1-\wvar$ and $\wvar$ for $0\leq \wvar \leq 1$ respectively, we arrive at the following bound.
\begin{corollary} \label{cor:ext-w}
    For any $0 \leq \wvar \leq 1$, we have
    \begin{align*}
        \sum_{S\subseteq V} \yz_S &\leq (1+\eps)(1-\cone)\cc(\OPTA) + \cc(\OPTB) - \wvar \frac{1-\bta}{5+\bta} \lgrz+ (2\wvar-1)\XX
        \\&\phantom{\leq}+ (1+\wvar\frac{1-\bta}{5+\bta})(\eps\lgr + \rmaxsum)  + \lossone + (\wvar-\bta + \wvar\bta) \losstwo.
    \end{align*}
\end{corollary}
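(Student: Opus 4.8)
\textbf{Proof proposal for Corollary~\ref{cor:ext-w}.}

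The plan is to simply form the convex combination of the two bounds established in Lemmas~\ref{lm:ext-bound1} and~\ref{lm:ext-bound2}, weighting the first by $1-\wvar$ and the second by $\wvar$, and then collect like terms. Since $0 \le \wvar \le 1$, both weights are nonnegative, so the weighted sum of the two upper bounds on $\sum_{S\subseteq V}\yz_S$ is again an upper bound on the same quantity (the left-hand side $\sum_{S\subseteq V}\yz_S$ is identical in both lemmas, so it is unaffected by averaging). Thus the only work is bookkeeping on the right-hand side.

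Concretely, I would track the coefficient of each of the recurring quantities $\cc(\OPTA)$, $\cc(\OPTB)$, $\lgrz$, $\XX$, $\eps\lgr$, $\rmaxsum$, $\lossone$, and $\losstwo$ separately. The term $(1+\eps)(1-\cone)\cc(\OPTA)$ appears with coefficient $1$ in both lemmas, so it survives with coefficient $1$; likewise $\cc(\OPTB)$ has coefficient $1$ in both and survives. For $\lgrz$: Lemma~\ref{lm:ext-bound1} has no $\lgrz$ term (coefficient $0$) while Lemma~\ref{lm:ext-bound2} contributes $-\frac{1-\bta}{5+\bta}\lgrz$, so the average gives $-\wvar\frac{1-\bta}{5+\bta}\lgrz$. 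For $\XX$: Lemma~\ref{lm:ext-bound1} has $-\XX$ and Lemma~\ref{lm:ext-bound2} has $+\XX$, so the combination is $\big((1-\wvar)(-1) + \wvar(1)\big)\XX = (2\wvar-1)\XX$. For $\eps\lgr$ and $\rmaxsum$ (equivalently $\rmaxsum = \sum_{\opti\in\B}\rmax(\opti)$): Lemma~\ref{lm:ext-bound1} has coefficient $1$ on both, Lemma~\ref{lm:ext-bound2} has coefficient $\frac{6}{5+\bta}$; averaging gives $(1-\wvar)\cdot 1 + \wvar\cdot\frac{6}{5+\bta}$, and since $\frac{6}{5+\bta} = 1 + \frac{1-\bta}{5+\bta}$, this simplifies to $1 + \wvar\frac{1-\bta}{5+\bta}$, matching the stated coefficient. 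For $\lossone$: coefficient $1$ in both lemmas, so it survives with coefficient $1$. For $\losstwo$: Lemma~\ref{lm:ext-bound1} gives $-\bta\losstwo$ and Lemma~\ref{lm:ext-bound2} gives $+\losstwo$, so the average is $\big((1-\wvar)(-\bta) + \wvar(1)\big)\losstwo = (\wvar - \bta + \wvar\bta)\losstwo$, exactly as claimed.

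The only genuinely nontrivial step — and the one I would double-check carefully — is the algebraic identity $\frac{6}{5+\bta} = 1 + \frac{1-\bta}{5+\bta}$, since getting this wrong would throw off the coefficients of $\eps\lgr$ and $\rmaxsum$; but this is immediate since $5+\bta + 1 - \bta = 6$. There is no real obstacle here: the lemma being proved is a pure linear-combination corollary of two already-established inequalities, so the "hard part" is merely presenting the arithmetic cleanly. I would write the proof as: invoke Lemmas~\ref{lm:ext-bound1} and~\ref{lm:ext-bound2}, multiply the former by $1-\wvar\ge 0$ and the latter by $\wvar\ge 0$, add, and simplify term by term using the identity above, arriving at the displayed bound.
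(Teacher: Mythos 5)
Your proposal is correct and matches the paper's own proof exactly: the paper introduces Corollary~\ref{cor:ext-w} with the sentence ``Taking a weighted average of the bounds in Lemmas~\ref{lm:ext-bound1} and~\ref{lm:ext-bound2} with weights $1-\wvar$ and $\wvar$ \dots,'' and the term-by-term coefficient calculations you carry out (including the identity $\frac{6}{5+\bta} = 1 + \frac{1-\bta}{5+\bta}$) are precisely the intended bookkeeping.
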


Now, taking $\wvar \leq \frac{1}{2}$ and using the bound on $\X$ from Lemma \ref{lm:ext-epsx}, we arrive at our next bound.

\begin{lemma} \label{lm:ext-w-bound}
    For any $0 \leq \wvar \leq \frac{1}{2}$, we have
    \begin{align*}
        \sum_{S\subseteq V} \yz_S &\leq (1+\eps)(1-\cone)\cc(\OPTA) + (1 - (1-2\wvar)(1-\cone)\eps) \cc(\OPTB) - \wvar \frac{1-\bta}{5+\bta} \lgrz
        \\&\phantom{\leq}+ (2 - \wvar\frac{9+3\bta}{5+\bta})\eps\lgr +(1+\wvar\frac{1-\bta}{5+\bta})\rmaxsum  + \lossone + (\wvar-\bta + \wvar\bta) \losstwo.
    \end{align*}
\end{lemma}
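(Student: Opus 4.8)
The plan is to start from Corollary~\ref{cor:ext-w}, which already gives a bound on $\sum_{S\subseteq V}\yz_S$ in terms of a weighted average (with weights $1-\wvar$ and $\wvar$) of the two bounds in Lemmas~\ref{lm:ext-bound1} and~\ref{lm:ext-bound2}. The only term in that corollary that is not yet in the desired final form is $(2\wvar-1)\XX$. Under the hypothesis $0\le\wvar\le\tfrac12$ we have $2\wvar-1\le 0$, so $(2\wvar-1)\XX$ is a term with a nonpositive coefficient multiplying $\XX$. The key move is therefore to \emph{lower bound} $\XX$ and use it to absorb this term, converting the $\XX$-dependence into a $\cc(\OPTB)$-dependence (which strengthens the bound since the coefficient is nonpositive).

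First I would recall from Definition~\ref{def:ext-short} that $\XX=\sum_{\opti\in\B}\X(\opti)$, and apply Lemma~\ref{lm:ext-epsx} componentwise: for each $\opti\in\B$, $\X(\opti)\ge\eps\sum_{v\in\NLG}\rstar_v$. Summing over $\opti\in\B$ gives $\XX\ge\eps\sum_{\opti\in\B}\sum_{v\in\NLG}\rstar_v$. Next I would rewrite $\sum_{v\in\NLG}\rstar_v = \sum_{v\in\opti}\rstar_v-\sum_{v\in\largest(\opti)}\rstar_v$, and lower bound $\sum_{v\in\opti}\rstar_v$ using the classification of $\B$ (Definition~\ref{def:classify_A_B}): for $\opti\in\B$, $\sum_{v\in\opti}\rstar_v>(1-\cone)\cc(\otree)$. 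Summing over $\opti\in\B$ and using $\cc(\OPTB)=\sum_{\opti\in\B}\cc(\otree)$ together with $\lgr=\sum_{\opti\in\B}\sum_{v\in\largest(\opti)}\rstar_v$ (Definition~\ref{def:ext-short}), this yields
\[
\XX\ \ge\ \eps(1-\cone)\cc(\OPTB)-\eps\lgr.
\]
Then, since $2\wvar-1\le 0$ (as $\wvar\le\tfrac12$), multiplying this inequality by $(2\wvar-1)$ flips its direction:
\[
(2\wvar-1)\XX\ \le\ (2\wvar-1)\big(\eps(1-\cone)\cc(\OPTB)-\eps\lgr\big)\ =\ -(1-2\wvar)(1-\cone)\eps\,\cc(\OPTB)+(1-2\wvar)\eps\lgr.
\]
Substituting this into the bound of Corollary~\ref{cor:ext-w}, the $\cc(\OPTB)$ coefficient becomes $1-(1-2\wvar)(1-\cone)\eps$, and the $\eps\lgr$ coefficient becomes $\big(1+\wvar\tfrac{1-\bta}{5+\bta}\big)+(1-2\wvar) = 2-2\wvar+\wvar\tfrac{1-\bta}{5+\bta} = 2-\wvar\tfrac{2(5+\bta)-(1-\bta)}{5+\bta} = 2-\wvar\tfrac{9+3\bta}{5+\bta}$, which matches the statement. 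All other terms ($\cc(\OPTA)$, $\lgrz$, $\rmaxsum$, $\lossone$, $\losstwo$) carry over unchanged, so this completes the derivation.

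The calculations here are entirely routine arithmetic; the only thing to be careful about is the direction of the inequality when multiplying by $2\wvar-1$, which is why the hypothesis $\wvar\le\tfrac12$ is essential (it is the one place that restriction is used, beyond what Corollary~\ref{cor:ext-w} already assumed). A secondary point worth double-checking is that Lemma~\ref{lm:ext-epsx} holds for \emph{every} component $\opti$ of $\OPT$ (including the trivial case $\NLG=\emptyset$, where it is vacuous), so summing only over $\opti\in\B$ is legitimate and the bound $\XX\ge\eps(1-\cone)\cc(\OPTB)-\eps\lgr$ is valid. Thus I expect no real obstacle — the lemma is a bookkeeping consequence of Corollary~\ref{cor:ext-w}, Lemma~\ref{lm:ext-epsx}, and the defining inequality of family $\B$; the main (minor) care is in tracking coefficients and the sign flip.
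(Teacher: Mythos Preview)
Your proposal is correct and follows essentially the same approach as the paper: start from Corollary~\ref{cor:ext-w}, lower bound $\XX$ via Lemma~\ref{lm:ext-epsx} and the defining inequality of $\B$ to get $\XX\ge\eps(1-\cone)\cc(\OPTB)-\eps\lgr$, flip the inequality using $2\wvar-1\le 0$, and substitute. Your coefficient bookkeeping (in particular the simplification of the $\eps\lgr$ coefficient to $2-\wvar\tfrac{9+3\bta}{5+\bta}$) matches the paper exactly.
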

\begin{proof}
    By Corollary \ref{cor:ext-w}, we have
    \begin{align*}
        \sum_{S\subseteq V} \yz_S &\leq (1+\eps)(1-\cone)\cc(\OPTA) + \cc(\OPTB) - \wvar \frac{1-\bta}{5+\bta} \lgrz+ (2\wvar-1)\XX
        \\&\phantom{\leq}+ (1+\wvar\frac{1-\bta}{5+\bta})(\eps\lgr + \rmaxsum)  + \lossone + (\wvar-\bta + \wvar\bta) \losstwo.
    \end{align*}
    Rewriting $\XX$ as a sum by Definition \ref{def:ext-short}, we can use Lemma \ref{lm:ext-epsx} to show that
    \begin{align*}
        \XX &= \sum_{\opti\in\B} \X(\opti)\\
        &\geq \sum_{\opti\in\B}\sum_{v\in\NLG} \eps\rstar_v\\
        &=\eps\sum_{\opti\in\B}\left[\sum_{v\in\opti} \rstar_v - \sum_{v\in\largest(\opti)} \rstar_v\right]\\
        &=\eps\sum_{\opti\in\B}\sum_{v\in\opti} \rstar_v - \eps\lgr \tag{Definition \ref{def:ext-short}}
        . 
    \end{align*}
    Additionally, for any $\opti\in \B$, \(
    \sum_{v\in\opti} \rstar_v \geq (1-\cone)\cc(\otree)
    \) by Definition \ref{def:classify_A_B}, which implies 
    \begin{align*}
    \XX &\geq \eps\sum_{\opti\in\B}(1-\cone)\cc(\opti) - \eps\lgr \\
        &= \eps(1-\cone)\cc(\OPTB) - \eps \lgr.
    \end{align*}
    Since $2\wvar -1\leq 0$, it follows that
    \begin{align*}
        (2\wvar -1)\XX &\leq (2\wvar -1)\eps(1-\cone)\cc(\OPTB) - (2\wvar-1)\eps \lgr.
    \end{align*}
    Combining this with the inequality from Corollary \ref{cor:ext-w} completes the proof.
\end{proof}

Next, we introduce a bound that does not include $\lgr$ and $\lgrz$ terms by taking a convex combination of Lemma \ref{lm:ext-w-bound} and Lemma $\ref{lm:ext-bound-3}$. 
\begin{lemma}\label{lm:ext-wp-bound}
    For any $\wvar\leq \frac{1}{2}$ and 
    $0\leq \wpvar \leq 1$ such that
    \(1/(1+\frac{1-\bta}{5+\bta}\wvar)\leq\wpvar\leq (1+\eps)/(1 + 3\eps - \eps\wvar\frac{9+3\bta}{5+\bta})
    ,\) we have
    \begin{align*}
        \sum_{S\subseteq V} \yz_S 
        &\leq (1+\eps)(1-\cone)\cc(\OPTA)
        + (1 + \eps - \eps\wpvar (1+(1-2\wvar)(1-\cone))) \cc(\OPTB) \\&\phantom{\leq}+\wpvar(1+\wvar\frac{1-\bta}{5+\bta})\rmaxsum  + \lossone +(1 - \wpvar(1+\bta)(1-\wvar)) \losstwo.
    \end{align*}
\end{lemma}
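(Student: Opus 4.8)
\textbf{Proof proposal for Lemma~\ref{lm:ext-wp-bound}.}

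The plan is to take a convex combination of the two bounds we already have, namely Lemma~\ref{lm:ext-w-bound} (which contains the terms $-\wvar\frac{1-\bta}{5+\bta}\lgrz$ and $(2-\wvar\frac{9+3\bta}{5+\bta})\eps\lgr$) and Lemma~\ref{lm:ext-bound-3} (which contains $+\lgrz$ and $-(1+\eps)\lgr$), with weights chosen so that both the $\lgr$ and $\lgrz$ contributions are eliminated, or at least their net coefficients become non-positive so they can be dropped using the trivial facts $\lgr\ge 0$ and $\lgrz\ge 0$. Write the combination as $\wpvar\cdot(\text{Lemma~\ref{lm:ext-bound-3}}) + (1-\wpvar)\cdot(\text{Lemma~\ref{lm:ext-w-bound}})$ for a parameter $0\le\wpvar\le 1$. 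Both right-hand sides are upper bounds on $\sum_{S\subseteq V}\yz_S$, so any convex combination is as well; this is the only structural step, and the rest is bookkeeping on coefficients.

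First I would collect the coefficient of $\lgrz$ in the combination: it is $\wpvar\cdot 1 + (1-\wpvar)\cdot\bigl(-\wvar\frac{1-\bta}{5+\bta}\bigr)$. We want this $\le 0$ so the term can be discarded; solving $\wpvar \le \wvar\frac{1-\bta}{5+\bta}(1-\wpvar)$, i.e. $\wpvar(1+\wvar\frac{1-\bta}{5+\bta})\le \wvar\frac{1-\bta}{5+\bta}$ — but checking against the hypothesis, the stated lower bound is $\wpvar\ge 1/(1+\frac{1-\bta}{5+\bta}\wvar)$, which forces the $\lgrz$ coefficient to be \emph{non-negative}; hence the intent must instead be that we are \emph{adding} $\lgrz$ with a small non-negative coefficient and then bounding $\lgrz$ from above — or, more likely, the combination is arranged so the $\lgrz$ coefficient is exactly controlled and the $\lgr$ coefficient is what gets killed. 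Concretely, I would: (i) compute the coefficient of $\lgr$, which is $\wpvar\cdot(-(1+\eps)) + (1-\wpvar)\cdot(2-\wvar\frac{9+3\bta}{5+\bta})\eps$, and verify that the two stated inequalities on $\wpvar$ are exactly the conditions making this coefficient $\le 0$ (the upper bound $\wpvar\le(1+\eps)/(1+3\eps-\eps\wvar\frac{9+3\bta}{5+\bta})$ rearranges to precisely $\wpvar(1+\eps)\ge(1-\wpvar)(2-\wvar\frac{9+3\bta}{5+\bta})\eps$); and (ii) handle $\lgrz$ by noting its coefficient, together with the coefficient of $\rmaxsum$, can be merged: since $\lgrz=\sum_{\opti\in\B}\sum_{v\in\LG}\rz_v\le\sum_{\opti\in\B}\sum_{v\in\opti}\rz_v$ and also $\lgrz\ge 0$, whichever sign the net coefficient has, it gets absorbed. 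I expect the lower bound on $\wpvar$ is exactly what makes the $\lgrz$ coefficient land in $[0,\text{small}]$ so that a crude bound $\lgrz\ge 0$ suffices to drop it when its coefficient is negative, or $\lgrz \le \text{(something already on the RHS)}$ when positive.

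After eliminating $\lgr$ and $\lgrz$, I would simply read off the remaining coefficients: the $\cc(\OPTA)$ coefficient is $\wpvar(1+\eps)(1-\cone) + (1-\wpvar)(1+\eps)(1-\cone) = (1+\eps)(1-\cone)$, unchanged; the $\cc(\OPTB)$ coefficient is $\wpvar(1+\eps) + (1-\wpvar)(1-(1-2\wvar)(1-\cone)\eps)$, which I would expand and regroup as $1+\eps - \eps\wpvar(1+(1-2\wvar)(1-\cone))$ — this is the one genuinely tedious algebraic simplification, combining the $\wpvar(1+\eps)$ from Lemma~\ref{lm:ext-bound-3} with the $(1-\wpvar)$-weighted piece and using $1-\wpvar = \wpvar\cdot\frac{1-\wpvar}{\wpvar}$-type regrouping; the $\rmaxsum$ coefficient is $(1-\wpvar)(1+\wvar\frac{1-\bta}{5+\bta})$ from Lemma~\ref{lm:ext-w-bound} plus $0$ from Lemma~\ref{lm:ext-bound-3}, but the claimed answer has $\wpvar(1+\wvar\frac{1-\bta}{5+\bta})$, so I would recheck the weighting convention (it may be $(1-\wpvar)\cdot(\text{Lemma~\ref{lm:ext-bound-3}}) + \wpvar\cdot(\text{Lemma~\ref{lm:ext-w-bound}})$ instead); the $\lossone$ coefficient stays $1$; and the $\losstwo$ coefficient is the convex combination of $1$ and $(\wvar-\bta+\wvar\bta)$, which simplifies to $1-\wpvar(1+\bta)(1-\wvar)$ since $1-(\wvar-\bta+\wvar\bta) = 1-\wvar+\bta-\wvar\bta = (1-\wvar)(1+\bta)$. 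The main obstacle will be getting the weighting convention and all the sign conditions on $\wpvar$ to line up exactly with the hypothesis — in particular confirming that the two displayed inequalities $1/(1+\frac{1-\bta}{5+\bta}\wvar)\le\wpvar\le(1+\eps)/(1+3\eps-\eps\wvar\frac{9+3\bta}{5+\bta})$ are precisely the non-positivity conditions for the $\lgrz$ and $\lgr$ coefficients respectively (or vice versa) — once that is pinned down the rest is a direct substitution of constants.
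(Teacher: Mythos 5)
Your proposal follows the same route as the paper: take a convex combination of Lemma~\ref{lm:ext-bound-3} and Lemma~\ref{lm:ext-w-bound}, observe that the hypotheses on $\wpvar$ are exactly what make the net coefficients of $\lgr$ and $\lgrz$ non-positive, and then drop those terms using $\lgr,\lgrz\ge 0$.

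The one wobble is the weighting convention: you start with $\wpvar$ on Lemma~\ref{lm:ext-bound-3} and $(1-\wpvar)$ on Lemma~\ref{lm:ext-w-bound}, notice this makes the $\lgrz$ coefficient non-negative under the stated lower bound on $\wpvar$, and then (correctly) diagnose from the $\rmaxsum$ coefficient that the convention must be reversed. The paper indeed uses $\wpvar$ as the weight on Lemma~\ref{lm:ext-w-bound}. Once that is fixed, the $\lgrz$ coefficient is $1-\wpvar-\wpvar\wvar\tfrac{1-\bta}{5+\bta}$, which the lower bound on $\wpvar$ forces to be $\le 0$, and the $\lgr$ coefficient is $\wpvar\eps\bigl(2-\wvar\tfrac{9+3\bta}{5+\bta}\bigr)-(1-\wpvar)(1+\eps)$, which the upper bound on $\wpvar$ forces to be $\le 0$; no ``merging with $\rmaxsum$'' or case analysis on the sign is needed — both terms are simply dropped. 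Your algebra for the remaining coefficients ($\cc(\OPTA)$, $\cc(\OPTB)$, $\rmaxsum$, $\lossone$, $\losstwo$) is correct, including the factorization $1-(\wvar-\bta+\wvar\bta)=(1-\wvar)(1+\bta)$ for the $\losstwo$ coefficient. Since you explicitly flag and self-correct the convention issue, there is no genuine gap; this is the paper's argument with a minor false start.
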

\begin{proof}
    Taking a convex combination of the bounds in Lemma $\ref{lm:ext-bound-3}$ and Lemma \ref{lm:ext-w-bound}, with $\wpvar$ as the coefficient for the second inequality results in the following bound:
    \begin{align*}
        \sum_{S\subseteq V} \yz_S
        &\leq (1+\eps)(1-\cone)\cc(\OPTA)
        + (1 + \eps - \eps\wpvar (1+(1-2\wvar)(1-\cone))) \cc(\OPTB) \\&\phantom{\leq}+\wpvar(1+\wvar\frac{1-\bta}{5+\bta})\rmaxsum  + \lossone+(1 - \wpvar(1+\bta)(1-\wvar)) \losstwo\\
        &\phantom{\leq}+(1-\wpvar - \wpvar\wvar \frac{1-\bta}{5+\bta}) \lgrz + \left(\wpvar\eps(2 - \wvar\frac{9+3\bta}{5+\bta})-(1-\wpvar)(1+\eps)\right)\lgr.
    \end{align*}
    It suffices for the coefficients of $\lgr$ and $\lgrz$ terms to be non-positive to arrive at our desired inequality. We can verify that the coefficient for $\lgrz$ is non-positive when
    \begin{align*}
    &1-\wpvar - \wpvar\wvar \frac{1-\bta}{5+\bta} \leq 0.
    \end{align*}
    This can be rearranged into
    \[
     \wpvar  \geq 1/(1+\frac{1-\bta}{5+\bta}\wvar),
    \]
    observing that $1+\frac{1-\bta}{5+\bta}\wvar$ is always positive.
    
    On the other hand, for the coefficient of $\lgr$ to be non-positive, we must have
    \begin{align*}
        &\wpvar\eps(2 - \wvar\frac{9+3\bta}{5+\bta})-(1-\wpvar)(1+\eps) \leq 0.
    \end{align*}
    This is equivalent to
        \[
        \wpvar(1 + 3\eps - \eps\wvar\frac{9+3\bta}{5+\bta}) \leq (1+\eps).
        \]
    Noting that $1+3\eps-\eps\wvar\frac{9+3\bta}{5+\bta}$  is always positive since $\eps\wvar\frac{9+3\bta}{5+\bta} \leq \frac{1}{2}\cdot3\cdot\eps <3\eps$, we find the following upper bound for $\wpvar$:
    \[
    \wpvar \leq (1+\eps)/(1 + 3\eps - \eps\wvar\frac{9+3\bta}{5+\bta}).
    \]
    Therefore, under the condition that $1/(1+\frac{1-\bta}{5+\bta}\wvar)\leq\wpvar\leq (1+\eps)/(1 + 3\eps - \eps\wvar\frac{9+3\bta}{5+\bta})$, the terms containing $\lgr$ and $\lgrz$ vanish and we arrive at the desired inequality.
\end{proof}

We use the next two lemmas to upper bound the $\lossone$ and $\losstwo$ terms in Lemma \ref{lm:ext-wp-bound}.

 \begin{lemma} \label{lm:ext-losstwo-bound}
 If $\solext$ is not a $2-2\alf$ approximation for the optimal solution $\OPT$, then we must have
     \[\losstwo <= (\alf + \eps)/\bta \cdot \cc(\OPT) - (1 + \eps + \eps/\bta) \cdot \lossone.\]
 \end{lemma}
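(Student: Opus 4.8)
The plan is to mimic the structure of Lemma~\ref{lm:lcl-loss} (the analogous statement for $\solone$ and $\lossone$), but now tracking the extra $\eps$-dependent terms introduced by the extension step. First I would assume $\solext$ is not a $(2-2\alf)$-approximation, i.e. $\cc(\solext) > (2-2\alf)\cc(\OPT)$, so there is something to prove. Starting from Lemma~\ref{lm:monotonic-forest-twice-growth}, I would write $\cc(\solext) \le 2\sum_{S\subseteq\V}\yz_S$. Then I would chain through the growth accounting: using Lemma~\ref{lm:ext-yzs} to replace $\sum\yz_S$ by $\sum\yp_S - \wintwo + \losstwo$, then Definition~\ref{def:total-win-loss} to bound $-\wintwo \le -(1+\bta)\losstwo$, and Lemma~\ref{lm:ext-yps} to bound $\sum\yp_S \le \sum_{v}\rp_v + \lossone$. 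So far this gives $\cc(\solext) \le 2\bigl(\sum_v \rp_v + \lossone - \bta\losstwo\bigr)$.

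Next I would bound $\sum_v \rp_v$ in terms of $\cc(\OPT)$. The natural route is Corollary~\ref{cor:ext-comp-eps}, which gives $\sum_{v\in\opti}\rp_v \le (1+\eps)\sum_{v\in\opti}\rstar_v$ for every component $\opti$; summing over all components and invoking Lemma~\ref{lm:opt-lowerbound_rstar} (which says $\sum_{v\in\opti}\rstar_v \le \cc(\otree)$) yields $\sum_v \rp_v \le (1+\eps)\cc(\OPT)$. Substituting, $\cc(\solext) \le 2\bigl((1+\eps)\cc(\OPT) + \lossone - \bta\losstwo\bigr)$. Combining this with the assumption $\cc(\solext) > (2-2\alf)\cc(\OPT)$, dividing by $2$ and rearranging gives
\[
\bta\losstwo < (1+\eps)\cc(\OPT) + \lossone - (1-\alf)\cc(\OPT) = (\alf+\eps)\cc(\OPT) + \lossone.
\]
Dividing by $\bta$ gives $\losstwo < \frac{\alf+\eps}{\bta}\cc(\OPT) + \frac{1}{\bta}\lossone$, which is \emph{not} quite the claimed inequality; the target has $-(1+\eps+\eps/\bta)\lossone$ on the right-hand side, with a negative sign. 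This tells me the naive bound on $\sum_v\rp_v$ is too lossy, and I need a sharper handle that keeps $\lossone$ tracked negatively.

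The main obstacle, then, is getting the $\lossone$ bookkeeping right so that the $\lossone$ terms cancel or combine with the correct sign. I would revisit whether the $+\lossone$ from Lemma~\ref{lm:ext-yps} should actually be absorbed differently — perhaps the intended argument bounds $\sum_v\rp_v + \lossone$ directly, or uses Lemma~\ref{lm:lcl-loss}/\ref{lm:lcl-win} to substitute a bound on $\lossone$ in terms of $\cc(\OPT)$ before rearranging. Another possibility is that one should not pass through $\cc(\OPT)$ for $\sum_v\rp_v$ at all but instead keep $\cc(\OPTA)$, $\cc(\OPTB)$ separate (as in the $\B$-decomposition), though that seems heavier than needed here. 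I suspect the cleanest fix is: from $\cc(\solext) \le 2(1+\eps)\cc(\OPT) + 2\lossone - 2\bta\losstwo$ and the failure assumption, one gets $\bta\losstwo < (\alf+\eps)\cc(\OPT) + \lossone$; then one must additionally use the fact that when $\solext$ fails, $\solone$ also cannot be too good (it is the same algorithm's first candidate and $\solext$ is returned only if it beats $\solone$ — actually Line~\ref{line:best-sol} returns the best, so if $\solext$ is not a $(2-2\alf)$-approximation, then \emph{neither is $\solone$}), hence Lemma~\ref{lm:lcl-loss} forces $\lossone \le \frac{\alf}{\bta}\cc(\OPT)$ and we could fold a $(1+\eps+\eps/\bta)$ multiple of this in. I would set up the algebra so that writing $\losstwo \le \frac{\alf+\eps}{\bta}\cc(\OPT) - (1+\eps+\eps/\bta)\lossone$ follows by combining $\bta\losstwo < (\alf+\eps)\cc(\OPT) + \lossone$ appropriately with the substitution — the exact routing of the $\eps$-coefficients will be the calculation to pin down, and I would double-check signs by specializing to $\eps = 0$, where the statement should collapse to $\losstwo \le \frac{\alf}{\bta}\cc(\OPT) - \lossone$, i.e. $\lossone + \losstwo \le \frac{\alf}{\bta}\cc(\OPT)$, a clean "total loss across both local searches" bound that matches the $\eps=0$ Steiner-Tree-style analysis.
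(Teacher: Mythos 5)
Your opening chain — Lemma~\ref{lm:monotonic-forest-twice-growth}, then Lemma~\ref{lm:ext-yzs}, Definition~\ref{def:total-win-loss} to trade $\wintwo$ for $(1+\bta)\losstwo$, and Lemma~\ref{lm:ext-yps} — exactly matches the paper's first steps, and your $\eps=0$ sanity check of the target is correct. But the gap you identify is real and none of your proposed repairs close it. The lossy step is bounding $\sum_{v}\rp_v$ via Corollary~\ref{cor:ext-comp-eps} followed by $\sum_{v\in\opti}\rstar_v\le\cc(\otree)$ (Lemma~\ref{lm:opt-lowerbound_rstar}): that throws away the $\winone$ information. The paper instead invokes Lemma~\ref{lm:sum-rstar-less-copt-minus-win}, which gives $\sum_{v\in\V}\rstar_v\le\cc(\OPT)-\winone$, and then Definition~\ref{def:total-win-loss} turns $-\winone$ into $-(1+\bta)\lossone$. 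After multiplying by $(1+\eps)$ and combining with the $+\lossone$ from Lemma~\ref{lm:ext-yps}, this yields the coefficient $-(\eps+\bta+\eps\bta)$ on $\lossone$ inside the bound $\cc(\solext)\le 2(1+\eps)\cc(\OPT)-2(\eps+\bta+\eps\bta)\lossone-2\bta\losstwo$, which rearranges to exactly the $-(1+\eps+\eps/\bta)\lossone$ term in the claim. The ingredient you were missing is precisely that $\winone$-aware upper bound on $\sum_v\rstar_v$.

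Your suggested fallback — fold in $\lossone\le\frac{\alf}{\bta}\cc(\OPT)$ from Lemma~\ref{lm:lcl-loss} — cannot work. Starting from your derived $\bta\losstwo<(\alf+\eps)\cc(\OPT)+\lossone$, where $\lossone$ already appears with a positive sign, that substitution only gives $\losstwo<\bigl(\frac{\alf+\eps}{\bta}+\frac{\alf}{\bta^2}\bigr)\cc(\OPT)$, which is strictly weaker than the target (the target implies $\losstwo\le\frac{\alf+\eps}{\bta}\cc(\OPT)$ since $\lossone\ge 0$, with no $\frac{\alf}{\bta^2}$ slack, and in fact reproduces a negative $\lossone$ term). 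Once the $\winone$ term is discarded in the bound on $\sum_v\rstar_v$, no post hoc one-sided substitution on $\lossone$ can recover the needed sign.
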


\begin{proof}
By Lemma \ref{lm:monotonic-forest-twice-growth}, the solution found by the algorithm after the local search has cost at most
\begin{align*}
\cc(\solext)&\leq 2\sum_{S\subseteq \V} \yz_S\\
&=2\sum_{S\subseteq \V} \yp_S - 2\wintwo + 2\losstwo \tag{ Lemma \ref{lm:ext-yzs}}\\&\leq 2\sum_{v\in \V}\rp_v -2\wintwo + 2\lossone + 2\losstwo \tag{ Lemma \ref{lm:ext-yps}}.
\end{align*}
Furthermore, we know by Corollary \ref{cor:ext-comp-eps} that $\sum_{v\in \V} \rp_v \leq \sum_v(1+\eps)\rstar_v$. Then, we can say 
\begin{align*}
    \cc(\solext)&\leq 2\sum_{v\in \V}(1+\eps) \rstar_v + 2\lossone -2\wintwo +2\losstwo\\
    &\leq 2 (1 + \eps)\cc(\OPT) - 2(1+\eps)\winone + 2\lossone -2\wintwo +2\losstwo \tag{Lemma~\ref{lm:sum-rstar-less-copt-minus-win}}\\
    &\leq 2 (1 + \eps)\cc(\OPT) - 2(1+\eps)(1+\bta)\lossone + 2\lossone -2\wintwo +2\losstwo \tag{Definition \ref{def:total-win-loss}}\\
    &\leq 2 (1 + \eps)\cc(\OPT) - 2(1+\eps)(1+\bta)\lossone + 2\lossone -2(1+\bta) \losstwo +2\losstwo \tag{Definition \ref{def:total-win-loss}}\\
    &= 2 (1 + \eps)\cc(\OPT) - 2(\eps+\bta + \eps\bta)\lossone - 2\bta\losstwo.
\end{align*}
Finally, since we assumed that $\cc(\solext)\geq (2-2\alf)\cc(\OPT)$, we must have
\begin{align*}
    &2 (1 + \eps)\cc(\OPT) - 2(\eps+\bta + \eps\bta)\lossone - 2\bta\losstwo \geq (2-2\alf)\cc(\OPT)
\end{align*}
and therefore
\begin{align*}
    \losstwo \leq (\alf + \eps)/\bta \cdot \cc(\OPT) - (1+\eps+\eps/\bta)\cdot\lossone.
\end{align*}
\end{proof}

\begin{lemma} \label{lm:ext-lscoeff}
    If neither $\solone$ nor $\solext$ is a $2-2\alf$ approximation for a given instance with optimal solution $\OPT$, then for any coefficient $\losscoeff$, we have
    \[\lossone + \losscoeff\cdot\losstwo \leq 
    \max\left(
    \frac{\alf}{\bta},
    \frac{\losscoeff(\alf+\eps)}{\bta}
    ,\frac{\alf+\losscoeff(1-\alf-\alf/\bta)\eps}{\bta}\right)\cc(\OPT).
    \]
\end{lemma}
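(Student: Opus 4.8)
The statement bounds the linear combination $\lossone + \losscoeff \cdot \losstwo$ under the assumption that neither $\solone$ nor $\solext$ achieves a $(2-2\alf)$-approximation. The natural approach is to combine the two conditional bounds we already have: Lemma~\ref{lm:lcl-loss} gives $\lossone \le \frac{\alf}{\bta}\cc(\OPT)$ when $\solone$ is not a $(2-2\alf)$-approximation, and Lemma~\ref{lm:ext-losstwo-bound} gives $\losstwo \le \frac{\alf+\eps}{\bta}\cc(\OPT) - (1+\eps+\eps/\bta)\lossone$ when $\solext$ is not a $(2-2\alf)$-approximation. Under the hypothesis of the current lemma, both of these hold simultaneously, so we can feed both into the expression $\lossone + \losscoeff\cdot\losstwo$.

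First I would substitute the bound on $\losstwo$ into the combination, writing
\[
\lossone + \losscoeff\cdot\losstwo \le \lossone + \losscoeff\left(\frac{\alf+\eps}{\bta}\cc(\OPT) - (1+\eps+\eps/\bta)\lossone\right)
= \frac{\losscoeff(\alf+\eps)}{\bta}\cc(\OPT) + \big(1 - \losscoeff(1+\eps+\eps/\bta)\big)\lossone.
\]
Now the right-hand side is an affine function of $\lossone$, so its maximum over the feasible range of $\lossone$ is attained at an endpoint. We know $0 \le \lossone \le \frac{\alf}{\bta}\cc(\OPT)$ (nonnegativity follows since $\loss$ is a sum of $\Loss$ values which, by Lemma~\ref{lm:loss-equal-yadd}, equals $\yadd \ge 0$; the upper bound is Lemma~\ref{lm:lcl-loss}). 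Depending on the sign of the coefficient $1 - \losscoeff(1+\eps+\eps/\bta)$, the maximum is either at $\lossone = 0$ — giving the term $\frac{\losscoeff(\alf+\eps)}{\bta}\cc(\OPT)$ — or at $\lossone = \frac{\alf}{\bta}\cc(\OPT)$ — giving $\frac{\losscoeff(\alf+\eps)}{\bta}\cc(\OPT) + (1 - \losscoeff(1+\eps+\eps/\bta))\frac{\alf}{\bta}\cc(\OPT)$, which after simplification is $\frac{\alf + \losscoeff(1-\alf-\alf/\bta)\eps}{\bta}\cc(\OPT)$. I would also note that, ignoring the $\losstwo$ term entirely and just using $\lossone \le \frac{\alf}{\bta}\cc(\OPT)$ together with $\losstwo$'s contribution possibly being negative is subsumed; but to be safe the $\max$ should include $\frac{\alf}{\bta}$ as well to cover the case where the $\losscoeff\cdot\losstwo$ term is handled crudely. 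Taking the maximum over these candidate values yields exactly the claimed bound.

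The only mild subtlety — and the step to be careful about — is the algebraic simplification of the endpoint value $\frac{\losscoeff(\alf+\eps)}{\bta} + (1 - \losscoeff(1+\eps+\eps/\bta))\frac{\alf}{\bta}$ into $\frac{\alf+\losscoeff(1-\alf-\alf/\bta)\eps}{\bta}$: expanding gives $\frac{1}{\bta}\big(\losscoeff\alf + \losscoeff\eps + \alf - \losscoeff\alf - \losscoeff\alf\eps - \losscoeff\alf\eps/\bta\big) = \frac{1}{\bta}\big(\alf + \losscoeff\eps(1 - \alf - \alf/\bta)\big)$, which matches. There is essentially no deep obstacle here; the lemma is a bookkeeping consequence of the two prior conditional bounds. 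The main thing is to justify that under the joint hypothesis both Lemma~\ref{lm:lcl-loss} and Lemma~\ref{lm:ext-losstwo-bound} apply, and that the maximum of an affine function on an interval is at an endpoint, so that all three expressions inside the $\max$ arise as the possible values, and the claimed inequality holds regardless of the sign of $\losscoeff$ or of the coefficient of $\lossone$.
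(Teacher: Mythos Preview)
Your approach matches the paper's: substitute the Lemma~\ref{lm:ext-losstwo-bound} bound on $\losstwo$, then split on the sign of the resulting $\lossone$-coefficient and invoke Lemma~\ref{lm:lcl-loss} at the appropriate endpoint. One point you should make explicit, however: the substitution step $\losscoeff\cdot\losstwo \le \losscoeff\cdot\bigl(\tfrac{\alf+\eps}{\bta}\cc(\OPT) - (1+\eps+\eps/\bta)\lossone\bigr)$ is valid only when $\losscoeff \ge 0$; if $\losscoeff < 0$ the inequality flips and the substitution fails. The paper handles $\losscoeff < 0$ as a separate first case, using $\losstwo \ge 0$ to get $\lossone + \losscoeff\cdot\losstwo \le \lossone \le \tfrac{\alf}{\bta}\cc(\OPT)$ directly --- this is precisely what produces the $\tfrac{\alf}{\bta}$ term in the $\max$. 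Your remark about including $\tfrac{\alf}{\bta}$ ``to be safe'' is the right instinct, but it needs this explicit case split rather than being folded into the endpoint analysis.
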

\begin{proof}
    We begin by observing that $\lossone$ and $\losstwo$ are always non-negative.
    If $\losscoeff<0$,  we have
    \begin{align*}
        \lossone + \losscoeff\cdot\losstwo &\leq \lossone \leq \frac{\alf}{\bta}\cc(\OPT)
    \end{align*}
    by Lemma \ref{lm:lcl-loss}. Otherwise, we can use Lemma \ref{lm:ext-losstwo-bound} and get
    \begin{align*}
        \lossone + \losscoeff\cdot\losstwo &\leq \lossone + \losscoeff\left(\frac{\alf+\eps}{\bta}\cc(\OPT) - (1+\eps+\eps/\bta)\lossone\right)\\
        &\leq \frac{\losscoeff(\alf+\eps)}{\bta}\cc(\OPT)+ (1-\losscoeff(1+\eps+\eps/\bta))\lossone.
    \end{align*}
    Then, either $1-\losscoeff(1+\eps+\eps/\bta)<0$, in which case we have
    \begin{align*}
        \lossone + \losscoeff\cdot\losstwo 
        &\leq \losscoeff\frac{\alf+\eps}{\bta}\cc(\OPT) + (1-\losscoeff(1+\eps+\eps/\bta))\lossone\\
        &\leq \frac{\losscoeff(\alf+\eps)}{\bta}\cc(\OPT)
    \end{align*}
    or $1-\losscoeff(1+\eps+\eps/\bta)\geq 0$, in which case we can again use Lemma \ref{lm:lcl-loss} to get
    \begin{align*}
        \lossone + \losscoeff\cdot\losstwo 
        &\leq \losscoeff\frac{\alf+\eps}{\bta}\cc(\OPT) + (1-\losscoeff(1+\eps+\eps/\bta))\lossone\\
        &\leq \frac{\losscoeff(\alf+\eps)}{\bta}\cc(\OPT)+ (1-\losscoeff(1+\eps+\eps/\bta))\frac{\alf}{\bta}\cc(\OPT)\\
        &=\frac{\alf+\losscoeff(1-\alf-\alf/\bta)\eps}{\bta}\cc(\OPT).
    \end{align*}
    Since one of these three cases must apply, the maximum of the three values is always an upper bound for $\lossone+\losscoeff\cdot\losstwo$.

\end{proof}

\begin{lemma}\label{lm:ext-coeff-bound}
    For any $\wvar\leq \frac{1}{2}$ and 
    $0\leq \wpvar \leq 1$ such that
    \(1/(1+\frac{1-\bta}{5+\bta}\wvar)\leq\wpvar\leq (1+\eps)/(1 + 3\eps - \eps\wvar\frac{9+3\bta}{5+\bta}),\) either the following bound on $\cc(\solext)$ holds
        \begin{align*}
            \cc(\solext)&\leq 2\biggl((1+\eps)(1-\cone)\cc(\OPTA)
            + (1 + \eps - \eps\wpvar(1+(1-2\wvar)(1-\cone))) \cc(\OPTB) \\
            &\phantom{\leq}+\wpvar(1+\wvar\frac{1-\bta}{5+\bta})\rmaxsum  \\
            &\phantom{\leq}+\max(\frac{\alf}{\bta},\frac{\left(1 - \wpvar(1+\bta)(1-\wvar)\right)(\alf+\eps)}{\bta},\frac{\alf+(1-\alf-\alf/\bta)\left(1 - \wpvar(1+\bta)(1-\wvar)\right)\eps}{\bta})\cc(\OPT)\biggr).
        \end{align*}
    or one of $\solone$ and $\solext$ is a $2-2\alf$ approximation of the optimal solution $\OPT$.
\end{lemma}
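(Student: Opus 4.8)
The plan is to chain three results established earlier: Lemma~\ref{lm:monotonic-forest-twice-growth}, which says the cost of the forest produced by any monotonic moat growing algorithm is at most twice the total growth of its active sets; Lemma~\ref{lm:ext-wp-bound}, which gives the structural upper bound on $\sum_{S\subseteq\V}\yz_S$; and Lemma~\ref{lm:ext-lscoeff}, which bounds the combined loss $\lossone+\losscoeff\cdot\losstwo$. I would begin with a case split. If either $\solone$ or $\solext$ is already a $(2-2\alf)$-approximation of $\OPT$, the second alternative in the conclusion holds and we are done. So assume neither is a $(2-2\alf)$-approximation; this is precisely the hypothesis needed to invoke Lemma~\ref{lm:ext-lscoeff}, and it is what unlocks the first alternative.

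Under this assumption I would first apply Lemma~\ref{lm:monotonic-forest-twice-growth} to Extended-Boosted Execution, whose output forest is $\solext$, to obtain $\cc(\solext)\le 2\sum_{S\subseteq\V}\yz_S$ (using $\sum_{S:v\notin S}\yz_S\le\sum_{S\subseteq\V}\yz_S$ for any $v$). Since the hypotheses on $\wvar$ and $\wpvar$ here coincide exactly with those of Lemma~\ref{lm:ext-wp-bound}, I would then substitute that lemma's bound, yielding
\[
\cc(\solext)\le 2\Big((1+\eps)(1-\cone)\cc(\OPTA)+\big(1+\eps-\eps\wpvar(1+(1-2\wvar)(1-\cone))\big)\cc(\OPTB)+\wpvar(1+\wvar\frac{1-\bta}{5+\bta})\rmaxsum+\lossone+\losscoeff\cdot\losstwo\Big),
\]
where $\losscoeff=1-\wpvar(1+\bta)(1-\wvar)$ is exactly the coefficient of $\losstwo$ produced by Lemma~\ref{lm:ext-wp-bound}.

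It then remains only to bound $\lossone+\losscoeff\cdot\losstwo$. Applying Lemma~\ref{lm:ext-lscoeff} with this value of $\losscoeff$ (valid by the standing assumption that neither $\solone$ nor $\solext$ is a $(2-2\alf)$-approximation) gives
\[
\lossone+\losscoeff\cdot\losstwo\le\max\Big(\frac{\alf}{\bta},\ \frac{\losscoeff(\alf+\eps)}{\bta},\ \frac{\alf+\losscoeff(1-\alf-\alf/\bta)\eps}{\bta}\Big)\cc(\OPT).
\]
Substituting $\losscoeff=1-\wpvar(1+\bta)(1-\wvar)$ into this maximum reproduces verbatim the three arguments of the $\max$ in the statement of Lemma~\ref{lm:ext-coeff-bound}; plugging this into the previous display gives exactly the claimed bound on $\cc(\solext)$.

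I do not expect a real obstacle: the argument is a direct composition of already-proved inequalities. The only care needed is bookkeeping — checking that the coefficient of $\losstwo$ emerging from Lemma~\ref{lm:ext-wp-bound} is precisely $1-\wpvar(1+\bta)(1-\wvar)$, so that the choice of $\losscoeff$ is forced and the $\max$ matches the stated form, and noting that no sign assumption on $\losscoeff$ is required since Lemma~\ref{lm:ext-lscoeff} already treats all signs. Choosing good concrete values of $\wvar,\wpvar$ (together with $\cone,\alf,\bta,\eps$) is not done at this stage; that optimization is carried out later in Section~\ref{sec:final}.
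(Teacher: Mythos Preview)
Your proposal is correct and follows essentially the same approach as the paper's proof: apply Lemma~\ref{lm:monotonic-forest-twice-growth} to bound $\cc(\solext)$ by $2\sum_{S\subseteq\V}\yz_S$, invoke Lemma~\ref{lm:ext-wp-bound}, and then use Lemma~\ref{lm:ext-lscoeff} with $\losscoeff=1-\wpvar(1+\bta)(1-\wvar)$ to handle the loss terms. Your explicit case split on whether $\solone$ or $\solext$ is already a $(2-2\alf)$-approximation is what the paper leaves implicit in its appeal to Lemma~\ref{lm:ext-lscoeff}.
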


\begin{proof}
    We know by Lemma \ref{lm:monotonic-forest-twice-growth} that $\cc(\solext)\leq 2\sum_{S\subseteq \V} \yz_S$. Then, it follows from Lemma \ref{lm:ext-wp-bound} that
    \begin{align*}
        \cc(\solext)&\leq 2\biggl((1+\eps)(1-\cone)\cc(\OPTA)
        + (1 + \eps - \eps\wpvar(1 + (1-2\wvar)(1-\cone))) \cc(\OPTB) \\&\phantom{\leq}+\wpvar(1+\wvar\frac{1-\bta}{5+\bta})\rmaxsum  + \lossone + (1 - \wpvar(1+\bta)(1-\wvar)) \losstwo\biggr).
    \end{align*}
    Now, using Lemma \ref{lm:ext-lscoeff} with $\losscoeff=1 - \wpvar(1+\bta)(1-\wvar)$ completes the proof.
\end{proof}

\begin{lemma}\label{lm:ext-final-bound}
    For any $\wvar\leq \frac{1}{2}$ and 
    $0\leq \wpvar \leq 1$ such that
    \(1/(1+\frac{1-\bta}{5+\bta}\wvar)\leq\wpvar\leq (1+\eps)/(1 + 3\eps - \eps\wvar\frac{9+3\bta}{5+\bta}),\) either the following bound on $\cc(\solext)$ holds
        \begin{align*}
            \cc(\solext)&\leq 2\biggl((1+\eps)(1-\cone)\cc(\OPTA)
            + (1 + \eps - \eps\wpvar(1 + (1-2\wvar)(1-\cone))) \cc(\OPTB) \\
            &\phantom{\leq}+ \cfive\wpvar(1+\wvar\frac{1-\bta}{5+\bta}) \cc(\OPTBone)+\wpvar(1+\wvar\frac{1-\bta}{5+\bta})\sum_{\opti\in\Btwo}\rmax(\opti)  \\&\phantom{\leq}+\max(\frac{\alf}{\bta},\frac{\left(1 - \wpvar(1+\bta)(1-\wvar)\right)(\alf+\eps)}{\bta},\frac{\alf+(1-\alf-\alf/\bta)\left(1 - \wpvar(1+\bta)(1-\wvar)\right)\eps}{\bta})\cc(\OPT)\biggr).
        \end{align*}
    or one of $\solone$ and $\solext$ is a $2-2\alf$ approximation of the optimal solution $\OPT$.
\end{lemma}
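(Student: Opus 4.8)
The plan is to obtain Lemma~\ref{lm:ext-final-bound} from Lemma~\ref{lm:ext-coeff-bound} by a straightforward substitution that refines the single term $\rmaxsum$ using the partition $\B = \Bone \cup \Btwo$ from Definition~\ref{def:classify_B}. Concretely, recall $\rmaxsum = \sum_{\opti \in \B} \rmax(\opti)$; splitting this sum along the partition gives
\[
\rmaxsum = \sum_{\opti \in \Bone} \rmax(\opti) + \sum_{\opti \in \Btwo} \rmax(\opti).
\]
For each $\opti \in \Bone$, Definition~\ref{def:classify_B} gives $\rmax(\opti) \le \cfive \cdot \cc(\otree)$, so summing over $\Bone$ yields $\sum_{\opti \in \Bone} \rmax(\opti) \le \cfive \cdot \cc(\OPTBone)$. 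Substituting this into the bound of Lemma~\ref{lm:ext-coeff-bound} replaces the term $\wpvar(1+\wvar\frac{1-\bta}{5+\bta})\rmaxsum$ by $\cfive\wpvar(1+\wvar\frac{1-\bta}{5+\bta})\cc(\OPTBone) + \wpvar(1+\wvar\frac{1-\bta}{5+\bta})\sum_{\opti\in\Btwo}\rmax(\opti)$, which is exactly the form appearing in the statement.

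The key steps, in order, are: (1) invoke Lemma~\ref{lm:ext-coeff-bound} under the same hypotheses on $\wvar$ and $\wpvar$, which either directly gives that one of $\solone$, $\solext$ is a $2-2\alf$ approximation (in which case we are done), or gives the stated inequality on $\cc(\solext)$; (2) in the latter case, decompose $\rmaxsum$ along $\B = \Bone \cup \Btwo$; (3) apply the $\Bone$-defining inequality $\rmax(\opti) \le \cfive\,\cc(\otree)$ termwise and sum, using $\cc(\OPTBone) = \sum_{\opti \in \Bone}\cc(\otree)$ from Definition~\ref{def:classify_B}; (4) note the coefficient $\wpvar(1+\wvar\frac{1-\bta}{5+\bta})$ is nonnegative (since $\wpvar \ge 0$, $\wvar \ge 0$, $\bta \in (0,1)$), so the inequality is preserved under this substitution; and (5) leave all other terms—the $\OPTA$ term, the $\OPTB$ term, and the $\max(\cdot)\cc(\OPT)$ loss term—untouched, since the substitution only touched the $\rmaxsum$ summand.

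I do not expect any genuine obstacle here: this is a bookkeeping lemma that isolates the $\Bone$ contribution so that the final optimization in Section~\ref{sec:final} can plug in the value of $\cfive$. The only point requiring a small check is sign-positivity of the coefficient multiplying $\rmaxsum$, so that replacing $\sum_{\opti\in\Bone}\rmax(\opti)$ by its upper bound $\cfive\,\cc(\OPTBone)$ weakens (rather than strengthens) the right-hand side in the desired direction; this follows immediately from the ranges of $\wvar, \wpvar, \bta$. Everything else is a direct rewrite of the conclusion of Lemma~\ref{lm:ext-coeff-bound}.
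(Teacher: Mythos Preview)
Your proposal is correct and matches the paper's proof essentially line for line: the paper also invokes Lemma~\ref{lm:ext-coeff-bound}, decomposes $\rmaxsum$ over $\Bone$ and $\Btwo$, applies $\rmax(\opti)\le\cfive\,\cc(\otree)$ for $\opti\in\Bone$, and notes the nonnegativity of the coefficient $\wpvar(1+\wvar\frac{1-\bta}{5+\bta})$ to justify the substitution.
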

\begin{proof}
    We start with the bound given in Lemma \ref{lm:ext-coeff-bound}. We can decompose $\rmaxsum$ into \[\sum_{\opti\in\Bone}\rmax(\opti) + \sum_{\opti\in\Btwo}\rmax(\opti).\] Now, by Definition \ref{def:classify_B}, for any set $\opti\in\Bone$, we have $\rmax(\opti)\leq \cfive\cc(\otree)$.
    Observing that the coefficient of this term is always non-negative, we can apply this upper bound for every component $\opti$ in $\Bone$ to derive the desired bound.
\end{proof}

\section{Autarkic Pairs}
\label{sec:candidate}
In this section, we introduce a new approach designed specifically for the Steiner Forest problem, in contrast to the more general local search method discussed earlier.
This method centers on the notion of \emph{autarkic pairs}, which we identify and connect directly, while the remaining demands are handled by Legacy Moat Growing.
Each autarkic pair consists of two subsets of vertices such that the vertices in one subset are the pairs of those in the other, and both subsets are expected to belong to the same connected component of the optimal solution.
Within each subset, the vertices connect to each other significantly earlier than the time it takes for the two subsets to connect.

The motivation behind detecting and directly connecting autarkic pairs is that, in some connected components of the optimal solution, certain vertices may be merged into the same group early in a moat growing algorithm, but it may take a long time for these groups to connect to each other. 
This delay increases the maximum assigned value for the vertices in that component, leading to a large bound in Lemma~\ref{lm:ext-coeff-bound}. 
As a result, $\solext$ may not perform well.

To address this issue, we attempt to identify such connected components in the optimal solution. 
In these components, instead of allowing the last two active sets to grow further to reach each other, we select and add the shortest path between a pair of vertices, each belonging to one of these two active sets, to our solution. 
This reduces the maximum assigned value of that connected component since the last two active sets do not need to grow to reach each other, as they can use the selected shortest path.

Since we do not have access to the optimal solution, we make rational guesses about these pairs to ensure our selection covers all relevant ones.
However, it may include pairs that do not belong to the intended connected components of the optimal solution, as we lack explicit knowledge of it.
Nonetheless, we carefully select autarkic pairs to ensure that false guesses do not interfere with our analysis.

In Section~\ref{sec:candidate_algorithm}, we describe the \candidate{} procedure and provide its pseudocode. 
Then, in Section~\ref{sec:candidate_properties}, we identify properties for connected components in $\Btwo$ and show that they have autarkic pairs. 
Finally, in Section~\ref{sec:candidate_analysis}, we use these properties to analyze the solution produced by executing \candidate{} in Line~\ref{exe:candid}, given that the algorithm previously calls \LocalSearch{} in Line~\ref{exe:localsearch} and we are finding autarkic pairs based on Boosted Execution.

\subsection{Algorithm}
\label{sec:candidate_algorithm}
In this section, we describe the \candidate{} procedure, with its pseudocode provided in Algorithm~\ref{alg:candidate}. This method aims to identify specific pairs of subsets of vertices, referred to as autarkic pairs, and connect them directly. See also Figure~\ref{fig:ten} for an illustration of how this procedure works.

To formally define the selection of autarkic pairs, we first introduce the following notation.

\begin{definition}
\label{def:dis}
    Recall that for any vertices $v, u \in \V$, $\dis(v, u)$ denotes their distance in the graph $G$.
    For any subset of vertices $S \subseteq \V$, we define $\maxdisS$ as the maximum distance between a vertex $v \in S$ and its pair:
    \[
        \maxdisS = \max_{v \in S} \dis(v, \pairv).
    \]
\end{definition}

First, for each set that was active at any point during Boosted Execution, we accumulate its growth duration $\yys$ into $\YUS$.
At the end, $\YS$ represents the total time during which the vertices in $S$ were the only unsatisfied vertices in an active set.
Importantly, if there exists a vertex $v \in S$ such that $\pairv \in S$, then $\YS = 0$, as no set $S' \subseteq \V$ satisfies $\unsatisfied(S') = S$ in this case.

Then, for any set $S \neq \emptyset$ with $\YS > 0$, we compare $\YS + \YPS$ to $\maxdisS$, using the threshold parameter $\ccnd \in (0,1)$ provided by the main algorithm.
If the condition in Line~\ref{line:autarkic_condition} is satisfied, we select $S$ and $\pairS$ as an autarkic pair.
Next, we choose an arbitrary vertex $\vap \in S$, add the shortest path between $\vap$ and its pair $\pair_{\vap}$ to the solution, and insert a zero-cost edge between them in $\EAP$, since they are now connected.

Finally, we run \BBG{} on $\Gp = \G \cup \EAP$ to satisfy the remaining demands.
The vertices in autarkic pairs are connected using the paths selected earlier, while the remaining pairs are handled by \BBG{}.
Note that, instead of \BBG{}, other Steiner Forest algorithms—such as Algorithm~\ref{alg:main}, which achieves a better approximation factor—could be used on the new instance. 
However, for simplicity in our analysis, we continue to use \BBG{}.

\begin{algorithm}[ht]
  \caption{Autarkic Pairs}
  \label{alg:candidate}
  \hspace*{\algorithmicindent} \textbf{Input:} 
  A graph $\G=(\V, \E, \cc)$ with edge costs $\cc: \E \rightarrow \mathbb{R}_{\ge 0}$, demand function $\pair : \V \to \V$,
  a function $\yy: 2^{\V} \rightarrow \mathbb{R}_{\ge 0}$ indicating the growth of active sets in Boosted Execution, and parameter $0 < \ccnd < 1$.\\
  \hspace*{\algorithmicindent} \textbf{Output:} 
  A forest $F$ that satisfies all demands. 
  \begin{algorithmic}[1]
    \Procedure{$\candidater$}{$\CInsExp$}
      \label{func:candid}
      \State $F \gets \emptyset$
      \State $\EAP \gets \emptyset$
      \State Implicitly set $\YS \gets 0$ for all $S \subseteq \V$
      \For{$S \subseteq \V$} \label{line:autarkic_for_yys}
        \State $\YUS \gets \YUS + \yys$ \label{line:define_YS}
      \EndFor
      \For{$S \subseteq \V, S \neq \emptyset$} \label{line:autarkic_for_YS}
        \If{$(1+\ccnd)\left(\YS + \YPS\right) > \maxdisS$}  \label{line:autarkic_condition}
          \State Select arbitrary vertex $\vap \in S$
          \State Add the shortest path between $\vap$ and $\pair_{\vap}$ to $F$ \label{line:autarkicpair_add_edge}
          \State Insert a zero-cost edge $(\vap, \pair_{\vap})$ into $\EAP$ \label{line:contract_vertices}
        \EndIf
      \EndFor
      \State $\Gp \gets \G \cup \EAP$ \label{line:define_Gp}
      \State $F \gets F \cup \BBG(\Gp, \pair) \setminus \EAP$ \label{line:autarkic_legacy} 
      \State \Return $F$
    \EndProcedure
  \end{algorithmic}
\end{algorithm}

Note that the for loop in Line~\ref{line:autarkic_for_yys} only needs to consider subsets $S \subseteq \V$ with $\yys > 0$, and the for loop in Line~\ref{line:autarkic_for_YS} only needs to consider subsets $S \subseteq \V$ with $\YS + \YPS > 0$. Since the number of such sets is polynomial, and by applying Corollary~\ref{cor:legacy_polynomial_time}, we obtain the following corollary.

\begin{corollary}
    \label{cor:autarkic_pair_polynomial_time}
    The \candidate{} procedure runs in polynomial time.
\end{corollary}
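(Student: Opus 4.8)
\textbf{Proof plan for Corollary~\ref{cor:autarkic_pair_polynomial_time}.}
The plan is to show that every component of the \candidate{} procedure runs in polynomial time, so that their composition does as well. First I would address the two \texttt{for} loops, which nominally range over all $2^{|\V|}$ subsets of $\V$. The key observation---already flagged in the remark preceding the corollary---is that these loops only do nontrivial work on a polynomially small collection of sets. For the loop in Line~\ref{line:autarkic_for_yys}, a set $S$ contributes to $\YUS$ only when $\yys > 0$, i.e.\ when $S$ appears as an active set at some moment of Boosted Execution; by Lemma~\ref{lm:localsearch_polynomial_time} (and the discussion of the corresponding moat growing) there are at most polynomially many such sets, since active sets change only at merge or deactivation events, each occurring $O(|\V|)$ times. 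Hence we may iterate only over the support of $\yy$, which is given to us explicitly as part of the input. For the loop in Line~\ref{line:autarkic_for_YS}, a set $S$ can satisfy the condition $(1+\ccnd)(\YS + \YPS) > \maxdisS \ge 0$ only if $\YS + \YPS > 0$, which forces either $S$ or $\pair(S)$ to lie in the support of $\YUS$; since $\pair$ is an involution, the number of such sets is again polynomial, and each can be enumerated from the already-computed values $\{\YS\}$.

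Next I would check that the per-iteration work is polynomial. Computing $\maxdisS$ requires the shortest-path distances $\dis(v, \pairv)$ for $v \in S$, which are obtained by a single all-pairs (or $|\V|$ single-source) shortest-path computation on $\G$ up front; evaluating the condition in Line~\ref{line:autarkic_condition} is then $O(|S|)$. Inside the \texttt{if} branch, selecting $\vap$, adding the shortest $\vap$--$\pair_{\vap}$ path to $F$, and inserting the zero-cost edge into $\EAP$ are all polynomial, and the branch is entered at most polynomially often. Building $\Gp = \G \cup \EAP$ in Line~\ref{line:define_Gp} adds at most $|\V|$ edges, so $\Gp$ has polynomial size.

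Finally, the call to $\BBG(\Gp, \pair)$ in Line~\ref{line:autarkic_legacy} runs in polynomial time by Corollary~\ref{cor:legacy_polynomial_time}, applied to the polynomial-size graph $\Gp$; removing $\EAP$ afterwards is a polynomial clean-up step. Summing the contributions of all stages yields an overall polynomial running time, completing the proof. I do not anticipate a serious obstacle here: the only subtlety is the observation that the set-indexed loops have polynomially bounded effective ranges, which follows directly from the structure of monotonic moat growing established earlier in the paper, and from the fact that the supports of $\yy$ and $\YUS$ are explicitly available rather than needing to be recomputed by brute-force enumeration.
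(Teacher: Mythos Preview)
Your proposal is correct and follows essentially the same approach as the paper: the paper's justification (given in the remark immediately preceding the corollary rather than in a separate proof environment) is precisely that the two \texttt{for} loops need only range over the polynomially many sets with $\yys>0$ and $\YS+\YPS>0$, respectively, and then invokes Corollary~\ref{cor:legacy_polynomial_time} for the \BBG{} call. Your write-up is more detailed---spelling out the shortest-path precomputation, the per-iteration work, and the size of $\Gp$---but the underlying argument is identical.
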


Moreover, if $S$ is selected as an autarkic pair, we should not check $\pairS$ as another subset in the for loop in Line~\ref{line:autarkic_for_YS}.
This prevents selecting a pair twice as an autarkic pair. 

Furthermore, in Line~\ref{exe:candid}, we pass $\yyb$ instead of $\yy$ to \candidate{}. This does not affect the outcome, since for sets with nonempty unsatisfied vertex sets, we have $\yyb = \yy$. Additionally, as we ignore $\Y_{\emptyset}$ (see Line~\ref{line:autarkic_for_YS}), we can assume that the input to \candidate{} is effectively $\yy$.

To analyze our algorithm, we first need to identify properties of the connected components of the optimal solution in $\Btwo$.
These properties demonstrate that these connected components contain autarkic pairs, allowing us to bound $\solcnd$.

\subsection
[Properties of Connected Components in $\Btwo$]
{Properties of Connected Components in $\boldsymbol{\Btwo}$}
\label{sec:candidate_properties}
In this section, we establish several properties for the connected components of the optimal solution in the class $\Btwo$.
These properties rely on the fact that \LocalSearch{} has already been executed before invoking \candidate{}, and that the values $\yys$ in the input to \candidate{} represent the growth of active sets from Boosted Execution, which does not involve any valuable boost actions.

Recall that the class $\Btwo$, introduced in Definition~\ref{def:classify_B}, depends on the parameter $\cfive$, whose exact form was not specified earlier.
Here, we define $\cfive$ together with another parameter $\ctwo$, both expressed in terms of $\bta$, $\cone$, and $\ccnd$.
Their exact numerical values will be fixed in Section~\ref{sec:final}, once the remaining parameters have been determined.

\begin{definition}
\label{def:ctwo_cfive}
    We define two constants $\ctwo$ and $\cfive$ based on $\beta$, $\cone$, and $\ccnd$ as follows:
    \begin{align*}
        \ctwo &= 4\cone \dfrac{1+\bta}{1-\bta}, \\
        \cfive &= \dfrac{\cone}{\ccnd} \left(4 + 3\ccnd + 4 (1 + \ccnd) \dfrac{1+\bta}{1 - \bta}\right).
    \end{align*}
\end{definition}

To simplify the presentation, we focus on a fixed connected component $\opti \in \Btwo$ and denote its corresponding tree in the optimal solution by $T_{\opti}$.
The following notations and definitions are specific to this component, starting with the selection of the two highest-priority vertices within $\opti$.

\begin{definition}
\label{def:vone_vtwo}
    We define $\vone$ as the vertex with the maximum priority in $\opti$, and $\vtwo$ as the pair of $\vone$.
    That is,
    \begin{align*}
        \vone &= \argmax_{v \in \opti} \priority_v\\
        \vtwo &= \pair_{\vone}.
    \end{align*} 
\end{definition}

Note that $\vtwo \in \opti$, since $\vone \in \opti$ and is connected to its pair $\vtwo$ in any valid solution.
One straightforward observation from this definition is that $\vone$ has the maximum assigned value as stated formally below.

\begin{lemma}
\label{lm:vone_has_rmax}
    Vertex $\vone$ has the maximum assigned value for $\rplus$ and $\rr$. That means
    $$\rplus_{\vone} = \rplus_{\max}(\opti)$$
    $$\rr_{\vone} = \rmax(\opti).$$
\end{lemma}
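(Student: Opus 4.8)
The statement claims that $\vone$, the highest-priority vertex in $\opti$, realizes the maximum assigned value in both the prefix-time assignments $\rplus$ (Legacy Execution) and $\rr$ (Boosted Execution). The natural approach is to exploit the fact that both $\rplus$ and $\rr$ are prefix-time assignments (Lemmas~\ref{lm:rplus_prefix_time_assignment} and~\ref{lm:rr_prefix_time_assignment}), together with Lemma~\ref{lm:opti_cap_priorityset}, which says that for any active set $S$ and any time $\currenttime$, at most one vertex of $\opti$ lies in $\priorityset(\BaseSet(S,\currenttime))$, and that this vertex is the one of maximum priority in $S\cap\opti$. Since $\vone$ is the globally highest-priority vertex in $\opti$, whenever $\vone \in \BaseSet(S,\currenttime)$ it is automatically the highest-priority vertex of $S\cap\opti$, hence it is the unique vertex of $\opti$ in $\priorityset(\BaseSet(S,\currenttime))$, so the growth of $S$ at that moment is assigned to $\vone$ (with fraction $1$, since these are prefix-time assignments).

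First I would handle $\rr_{\vone} = \rmax(\opti)$. Take any $v \in \opti$; I want to show $\rr_v \le \rr_{\vone}$. Because $\rr$ is a prefix-time assignment, the set of moments assigned to $v$ is exactly the interval $[0,\rr_v)$. Fix any such moment $\currenttime < \rr_v$: there is an active set $S \ni v$ in Boosted Execution whose growth at $\currenttime$ is assigned to $v$, which forces $v \in \priorityset(\BaseSet(S,\currenttime))$, and in particular $\currenttime < \tplus_v$ so $v \in \BaseSet(S,\currenttime)$. Now I claim $\vone \in S$ as well: since $v$ and $\vone$ lie in the same component $\opti$ of $\OPT$ and both remain active and in base phase through time $\currenttime$ — here I need $\tplus_{\vone} \ge \tplus_v$, which follows from $\priority_{\vone} > \priority_v$ via Corollary~\ref{cor:priority-tplus} — and... actually the cleanest route is: since $v$ is assigned growth at $\currenttime$, $\rplus$-type reasoning (or directly Lemma~\ref{lm:legacy_connected_at_rmax} combined with the laminar structure) gives that $\vone$ has not yet been deactivated, so $\vone$ is active at $\currenttime$; and because $v$ is in active set $S$ while $\vone$ is also active, if they were in different active sets the assignment to $v$ would still be valid, so I should instead argue at the level of $\rplus$ and then transfer. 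Let me restructure: it is likely cleaner to prove $\rplus_{\vone} = \rplus_{\max}(\opti)$ first and then use Lemma~\ref{lm:rplus_vone_rr_vone}, which states $\rplus_{\max}(\opti) = \rr_{\max}(\opti)$, together with Lemma~\ref{lm:rr-le-rplus} ($\rr_v \le \rplus_v$), to conclude the $\rr$ case: indeed $\rr_{\vone} \ge$ (something) — wait, $\rr_{\vone} \le \rplus_{\vone} = \rplus_{\max}(\opti) = \rr_{\max}(\opti)$, and I need the reverse too. So the proof that $\vone$ attains $\rr_{\max}$ should mirror the $\rplus$ argument directly in Boosted Execution.

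So the concrete plan is: (1) Prove $\rplus_{\vone} = \rplus_{\max}(\opti)$. For this, show that for every vertex $v\in\opti$, every moment $\currenttime$ assigned to $v$ in Legacy Execution is also assigned to $\vone$. At such a moment there is an active set $S'$ with $v \in \priorityset(\BaseSet(S',\currenttime))$ and $\currenttime < \tplus_v \le \tplus_{\vone}$; by the argument in Lemma~\ref{lm:legacy_connected_at_rmax}, $\vone$ is still active and, since active sets are laminar and both $v,\vone\in\opti$ will eventually connect, there is an active set $S \supseteq S'$ (at time $\currenttime$, possibly $S=S'$) containing $\vone$ — actually I need $\vone\in S'$ itself. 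Hmm: $v$ and $\vone$ are in the same component of $\OPT$, but in Legacy Execution at moment $\currenttime$ they need not yet be in the same active set. So a cleaner statement: since $\rplus$ is prefix-time, it suffices to show $\tplus_v \le \rplus_{\vone}$ is not what we want either. The right formulation: I will show $\rplus_v \le \rplus_{\vone}$ by showing the interval of moments assigned to $v$ is contained in the interval assigned to $\vone$. The key lemma to invoke is essentially Lemma~\ref{lm:rr-le-rplus}-style monotonicity combined with Lemma~\ref{lm:opti_cap_priorityset}. Concretely: whenever an active set $S$ (in Legacy Execution) assigns its growth to $\vone$ it does so for all of $[0, \rplus_{\vone})$; and for any $v$ and moment $\currenttime \in [0,\rplus_v)$, the active set $S'$ assigning to $v$ satisfies $v = \argmax_{u\in S'\cap\opti}\priority_u$, but if $\vone\in S'$ this contradicts $\priority_{\vone}>\priority_v$ (unless $v=\vone$), so $\vone\notin S'$; yet $\vone$ is active at $\currenttime$ (as $\currenttime < \tplus_v \le \tplus_{\vone}$), so $\vone$ is in some active set $S''$ disjoint from $S'$, and $S''$ assigns its growth at $\currenttime$ to the max-priority vertex of $S''\cap\opti$, which is $\vone$ since $\vone\in\opti$ and has global max priority. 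Hence $\currenttime$ is assigned to $\vone$, giving $[0,\rplus_v)\subseteq[0,\rplus_{\vone})$, i.e.\ $\rplus_v \le \rplus_{\vone}$. Then $\rplus_{\vone}=\rplus_{\max}(\opti)$. (2) Repeat verbatim with $\rr$ in place of $\rplus$ and Boosted Execution in place of Legacy Execution; the only facts used are that $\rr$ is prefix-time, the laminarity of active sets (Corollary~\ref{cor:active_sets_laminar}), Lemma~\ref{lm:opti_cap_priorityset}, and that $\vone$ stays active at least as long as any other vertex of $\opti$, for which I invoke that in Boosted Execution vertices of $\opti$ are active through $\tplus$ at least (Corollary~\ref{cor:tt-ge-tplus} and Lemma~\ref{lm:large_fingerprint_vertex_remains_active}) and $\tplus_{\vone}\ge\tplus_v$.

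\textbf{Main obstacle.} The delicate point is the claim "$\vone$ is in some active set whose growth at moment $\currenttime$ is assigned to $\vone$" when $\vone$ and $v$ are not yet in the same active set. This needs: (a) $\vone$ is still active at $\currenttime$ — true because $\currenttime < \tplus_v \le \tplus_{\vone}$ and, by the fingerprint property, $\vone$ remains in active sets until $\tplus_{\vone}$; (b) the active set $S''$ containing $\vone$ is in base phase at $\currenttime$ with $\vone\in\BaseSet(S'',\currenttime)$ — true since $\currenttime<\tplus_{\vone}$; (c) therefore $S''$ assigns growth to $\priorityset(\BaseSet(S'',\currenttime))$ and by Lemma~\ref{lm:opti_cap_priorityset} the unique vertex of $\opti$ in that set is $\argmax_{u\in S''\cap\opti}\priority_u$, which equals $\vone$ because $\vone\in S''\cap\opti$ and $\vone$ has the globally maximal priority in $\opti$. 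Assembling (a)--(c) carefully, handling the prefix-time "fraction $1$" bookkeeping, is the crux; the rest is routine. I would also remark that once $\rplus_{\vone}=\rplus_{\max}(\opti)$ is established, the identity $\rplus_{\max}(\opti)=\rr_{\max}(\opti)$ from Lemma~\ref{lm:rplus_vone_rr_vone} plus $\rr_{\vone}\le\rplus_{\vone}$ (Lemma~\ref{lm:rr-le-rplus}) shows $\rr_{\vone}\le\rr_{\max}(\opti)$, so it only remains to show $\rr_{\vone}\ge\rr_{\max}(\opti)$, which is exactly the step-(2) argument.
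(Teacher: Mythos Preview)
Your proposal is correct and arrives at essentially the same argument as the paper, though via a more circuitous route. The paper's proof is the streamlined version of your steps (1)--(2): it observes directly that (i) $\tplus_{\vone}\ge\tplus_v$ for all $v\in\opti$ by Corollary~\ref{cor:priority-tplus}; (ii) for every $\currenttime<\tplus_{\vone}$ the active set containing $\vone$ (in either execution) assigns its growth to $\vone$, hence $\rplus_{\vone},\rr_{\vone}\ge\tplus_{\vone}$; and (iii) no vertex $v$ can be assigned after $\tplus_v$, hence $\rplus_v,\rr_v\le\tplus_v$. Chaining gives $\rplus_{\vone}\ge\tplus_{\vone}\ge\tplus_v\ge\rplus_v$ (and likewise for $\rr$), which is exactly your conclusion but without the detour of tracking whether $\vone\in S'$ or $\vone\notin S'$; that case split is unnecessary once you notice the bounds factor through $\tplus$. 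Your invocation of Lemma~\ref{lm:rplus_vone_rr_vone} is also unneeded, as you yourself realized.
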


\begin{proof}
    Given Definition~\ref{def:vone_vtwo}, $\vone$ has the maximum priority in $\opti$.
    Therefore, we can use Corollary~\ref{cor:priority-tplus} to conclude that $\tplus_{\vone}$ is the maximum value between $\tplus$ of vertices in $\opti$. 
    That means for any vertex $v\in \opti$ we have
    \begin{align}
    \label{eq:tvone_tv}
        \tplus_{\vone} \ge \tplus_v.
    \end{align}

    Let us fix any time $\currenttime < \tplus_{\vone}$.
    According to Corollary~\ref{cor:tt-ge-tplus}, we also have $\currenttime < \ttt_{\vone}$.
    If $S$ is the connected component containing $\vone$ in Legacy Execution (and similarly in Boosted Execution), then $S$ is an active set since $\vone$ remains in an active set until time $\tplus_{\vone}$ (and similarly $\ttt_{\vone}$), as per Definition~\ref{def:fingerprint} and the fact that $\tplus$ (similarly $\ttt$) is the fingerprint of the execution.
    Furthermore, we have $\vone \in \BaseSet(S, \currenttime)$.
    Therefore, $\vone \in \priorityset(\BaseSet(S, \currenttime))$ because $\max_{u \in \BaseSet(S, \currenttime) \cap \opti} \priority_u \le \max_{u \in \opti} \priority_u = \priority_{\vone}$.
    Thus, we can conclude that we are assigning to $\vone$ at any time $\currenttime < \tplus_{\vone}$.
    Using Lemmas~\ref{lm:rplus_prefix_time_assignment} and~\ref{lm:rr_prefix_time_assignment}, we know that both $\rplus$ and $\rr$ are prefix-time assignments, respectively.
    Therefore, we can conclude that 
    \begin{align}
    \label{eq:r_vone_tplus_vone}
        \rplus_{\vone}, \rr_{\vone} \ge \tplus_{\vone}.
    \end{align}

    Moreover, for any vertex $v \in \opti$ and anytime $\currenttime \ge \tplus_v$ and any subset of vertices $S \subseteq \V$, we have $v \notin \BaseSet(S, \currenttime)$, which means, $v \notin \priorityset(\BaseSet(S, \currenttime))$.
    Therefore we can conclude that
    \begin{align}
    \label{eq:r_v_tplus_v}
        \rplus_v, \rr_v \le \tplus_v.
    \end{align}

    By combining Equations~\ref{eq:tvone_tv},~\ref{eq:r_vone_tplus_vone}, and~\ref{eq:r_v_tplus_v}, for any vertex $v\in \opti$ we have
    \begin{align*}
        \rplus_{\vone} \ge \tplus_{\vone} \ge \tplus_v \ge \rplus_v \\
        \rr_{\vone} \ge \tplus_{\vone} \ge \tplus_v \ge \rr_v
    \end{align*}
    which proves the lemma.
\end{proof}

Given Lemma~\ref{lm:rplus_vone_rr_vone}, we have $\rplus_{\max}(\opti) = \rmax(\opti)$.  
Using the above lemma, we can conclude the following corollary.

\begin{corollary}
\label{cor:rplus_vone_rr_vone}
    The assignment to $\vone$ is the same in both $\rplus$ and $\rr$.  
    In other words,  
    $$
        \rplus_{\vone} = \rr_{\vone}.
    $$
\end{corollary}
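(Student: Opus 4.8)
# Proof Proposal for Corollary~\ref{cor:rplus_vone_rr_vone}

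The plan is to deduce the equality $\rplus_{\vone} = \rr_{\vone}$ by combining three facts already available in the excerpt: Lemma~\ref{lm:vone_has_rmax} (which identifies $\vone$ as the vertex achieving both $\rplus_{\max}(\opti)$ and $\rr_{\max}(\opti)$), Lemma~\ref{lm:rplus_vone_rr_vone} (which asserts $\rplus_{\max}(\opti) = \rr_{\max}(\opti)$ — this is the lemma labeled \texttt{lm:rplus\_vone\_rr\_vone}, despite the slightly confusing label name), and Lemma~\ref{lm:rr-le-rplus} (which gives $\rr_v \le \rplus_v$ for every vertex $v$). Since all three of these are stated earlier, the corollary should follow in essentially one line.

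Concretely, I would write:
\begin{align*}
    \rplus_{\vone} &= \rplus_{\max}(\opti) \tag{Lemma~\ref{lm:vone_has_rmax}}\\
    &= \rr_{\max}(\opti) \tag{Lemma~\ref{lm:rplus_vone_rr_vone}}\\
    &= \rr_{\vone} \tag{Lemma~\ref{lm:vone_has_rmax}}.
\end{align*}
This uses Lemma~\ref{lm:vone_has_rmax} in both directions — once to say $\rplus_{\vone}$ equals the maximum $\rplus$-value over $\opti$, and once to say that the maximum $\rr$-value over $\opti$ is attained at $\vone$ — and slots in the equality of the two maxima from Lemma~\ref{lm:rplus_vone_rr_vone} in the middle. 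If one wanted an alternative route avoiding direct appeal to Lemma~\ref{lm:rplus_vone_rr_vone}, one could instead combine Lemma~\ref{lm:rr-le-rplus} (giving $\rr_{\vone} \le \rplus_{\vone}$) with the fact from Lemma~\ref{lm:vone_has_rmax} that $\rplus_{\vone} = \rplus_{\max}(\opti)$ and $\rr_{\vone} = \rr_{\max}(\opti)$, plus the chain $\rplus_{\max}(\opti) \le \rr_{\max}(\opti)$ hidden inside the proof of Lemma~\ref{lm:rplus_vone_rr_vone}; but the cleanest presentation is simply to cite Lemma~\ref{lm:rplus_vone_rr_vone} directly.

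There is essentially no obstacle here: this is a bookkeeping corollary whose entire content is chaining together three previously established statements. The only thing to be careful about is citing the right lemma labels — in particular making sure the reference to the equality of maxima points at the lemma whose statement is ``$\rplus_{\max}(\opti) = \rr_{\max}(\opti)$'' (label \texttt{lm:rplus\_vone\_rr\_vone}), and that the reference identifying $\vone$ as the argmax points at Lemma~\ref{lm:vone_has_rmax}. No new calculation, induction, or case analysis is needed.
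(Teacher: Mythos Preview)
Your proposal is correct and matches the paper's approach exactly: the paper derives the corollary in the same one-line fashion by combining Lemma~\ref{lm:vone_has_rmax} (which gives $\rplus_{\vone} = \rplus_{\max}(\opti)$ and $\rr_{\vone} = \rr_{\max}(\opti)$) with Lemma~\ref{lm:rplus_vone_rr_vone} (which gives $\rplus_{\max}(\opti) = \rr_{\max}(\opti)$).
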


Now, we aim to identify important properties that demonstrate the presence of autarkic pairs within connected component $\opti \in \Btwo$.
The first property, detailed in Lemma~\ref{lm:bound_rone-rtwo}, shows that $\rr_{\vone}$ and $\rr_{\vtwo}$ have almost identical values.
Then, in Lemma~\ref{lm:bound_rr_vthree}, we demonstrate that other vertices of $\opti$, which are actively connected to $\vone$ or $\vtwo$, connect with them early in Boosted Execution.
Next, Lemma~\ref{lm:bound_twoopt} reveals that most of the growth time of active sets containing $\vone$ and $\vtwo$ does not involve unsatisfied vertices from other connected components of the optimal solution.
These three properties lead to the identification of two subsets of vertices, one containing $\vone$ and the other containing $\vtwo$.
We establish a lower bound for the total $\Y$ of these subsets in Lemma~\ref{lm:bound_YS_candidate}, provide an upper bound for the $\maxdis$ of these two subsets in Lemma~\ref{lm:bound_dis_candidate}, and finally demonstrate that they are autarkic pairs in Lemma~\ref{lm:Btwo_has_candidate}.
To facilitate the proof of these properties and lemmas, we present auxiliary lemmas throughout to simplify the process.
Here is one such auxiliary lemma.

\begin{lemma}
\label{lm:rplus_vtwo_rr_vone}
    We can relate the assigned values of $\vtwo$ and $\vone$ for Legacy Execution and Boosted Execution as follows: 
    $$\rplus_{\vtwo} = \rr_{\vone}.$$
\end{lemma}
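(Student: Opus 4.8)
\textbf{Proof plan for Lemma~\ref{lm:rplus_vtwo_rr_vone}.}

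The goal is to show $\rplus_{\vtwo} = \rr_{\vone}$. The plan is to prove the two inequalities $\rplus_{\vtwo} \le \rr_{\vone}$ and $\rr_{\vone} \le \rplus_{\vtwo}$ separately, and the main leverage is Corollary~\ref{cor:rplus_vone_rr_vone}, which already tells us $\rplus_{\vone} = \rr_{\vone}$, so it suffices to relate $\rplus_{\vtwo}$ to $\rplus_{\vone}$. For the direction $\rplus_{\vtwo} \le \rplus_{\vone}$: since $\vone$ has the maximum priority in $\opti$ and $\vtwo \in \opti$ (as $\vtwo = \pair_{\vone}$ must lie in the same optimal component), Lemma~\ref{lm:vone_has_rmax} (or rather its proof, via Equations~\ref{eq:tvone_tv}–\ref{eq:r_v_tplus_v}) gives $\rplus_{\vtwo} \le \tplus_{\vtwo} \le \tplus_{\vone} \le \rplus_{\vone}$. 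Combined with Corollary~\ref{cor:rplus_vone_rr_vone}, this yields $\rplus_{\vtwo} \le \rr_{\vone}$.

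For the reverse direction, I would argue that $\rr_{\vone} \le \rplus_{\vtwo}$, i.e. $\tplus_{\vone} \le \rplus_{\vtwo}$ using $\rr_{\vone} \ge \tplus_{\vone}$ from Equation~\ref{eq:r_vone_tplus_vone} (actually we want an upper bound $\rr_{\vone} \le \tplus_{\vone}$ too, which holds since $\vone \in \opti$ by Equation~\ref{eq:r_v_tplus_v}, so in fact $\rr_{\vone} = \tplus_{\vone}$; likewise $\rplus_{\vone} = \tplus_{\vone}$). So I need $\tplus_{\vone} \le \rplus_{\vtwo}$. The key observation is that $\vone$ and $\vtwo$ form a demand pair, so in Legacy Execution they remain in active sets and are both unsatisfied until exactly time $\tplus_{\vone} = \tplus_{\vtwo} = \tplus_v$ for $v \in \{\vone,\vtwo\}$ (the fingerprint $\tplus$ records the moment the pair connects, so $\tplus_{\vone} = \tplus_{\vtwo}$). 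Now fix any time $\currenttime < \tplus_{\vtwo}$. At this moment $\vone$ and $\vtwo$ are in distinct active sets (they have not yet connected), and $\vtwo$ lies in some active set $S$ with $\vtwo \in \BaseSet(S,\currenttime)$. Since $\vone \notin S$ at this moment, the only vertex of $\opti$ in $S$ with priority exceeding $\vtwo$'s would have to be someone other than $\vone$; I need to rule this out or handle it. Because $\vone$ has the globally maximum priority in $\opti$ and $\vone \notin S$, among vertices of $\BaseSet(S,\currenttime) \cap \opti$ the maximizer of priority — call it $u$ — may or may not be $\vtwo$. If $u = \vtwo$, then $\vtwo \in \priorityset(\BaseSet(S,\currenttime))$ and growth is assigned to $\vtwo$ at time $\currenttime$. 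If $u \ne \vtwo$, I claim a contradiction with $\vtwo = \pair_{\vone}$ being the second-highest priority vertex or with the tie-breaking rule; more carefully, I should use that $\vtwo$ is chosen as $\pair_{\vone}$ and Definition~\ref{def:vone_vtwo} only guarantees $\vone$ is the maximizer, not $\vtwo$. So the cleaner route is: the set $S$ containing $\vtwo$ at time $\currenttime < \tplus_{\vtwo}$ cannot contain $\vone$ (not yet connected), and I want to show it still assigns to $\vtwo$. This requires knowing $\vtwo$ has the highest priority in $S \cap \opti$. Since $\vone$ is excluded from $S$ and $\vone$ is the unique global max, $\vtwo$ being the runner-up would suffice — but the lemma statement and surrounding context (this is exactly the setup for autarkic pairs in $\Btwo$) suggest this is indeed the intended structural fact, likely established implicitly or to be combined with a priority argument analogous to Lemma~\ref{lm:vone_has_rmax}.

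The main obstacle is precisely this second direction: establishing that growth is assigned to $\vtwo$ by active sets throughout $[0,\tplus_{\vtwo})$ in Legacy Execution, which needs $\vtwo$ to be the priority-maximizer within $S \cap \opti$ for every such active set $S$. I expect the resolution to rely on the fact that once $\vone$ leaves the picture (it is in a separate component until $\tplus$), the next-highest-priority vertex in $\opti$ that is still actively growing and unsatisfied is $\vtwo$ — and if some third vertex $w \in \opti$ had higher priority than $\vtwo$ and were in $S$, then by Corollary~\ref{cor:priority-tplus} we would have $\tplus_w \ge \tplus_{\vtwo} = \tplus_{\vone}$, forcing $\tplus_w = \tplus_{\vone}$, and then the fixed tie-breaking order from Definition~\ref{def:priority} would need to be invoked to show this cannot displace $\vtwo$ in the relevant sets, or alternatively $w$'s presence would itself be impossible before $\tplus_{\vtwo}$ because $w$'s pair is also in $\opti$ and connectivity constraints apply. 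Once the assignment-to-$\vtwo$ claim is secured, the prefix-time property of $\rplus$ (Lemma~\ref{lm:rplus_prefix_time_assignment}) gives $\rplus_{\vtwo} \ge \tplus_{\vtwo} = \tplus_{\vone} = \rr_{\vone}$, and combining with the first direction completes the proof.
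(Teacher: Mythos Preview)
Your approach matches the paper's exactly: reduce to $\rplus_{\vtwo} = \rplus_{\vone}$ via Corollary~\ref{cor:rplus_vone_rr_vone}, use $\tplus_{\vone} = \tplus_{\vtwo}$ (they are a demand pair), and argue that both $\vone$ and $\vtwo$ are assigned growth at every moment $\currenttime < \tplus_{\vone}$ and neither is assigned afterward, so the prefix-time property forces $\rplus_{\vone} = \rplus_{\vtwo} = \tplus_{\vone}$.

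The one step you flag as an obstacle---showing that $\vtwo$ is the priority-maximizer of $S \cap \opti$ whenever $\vone \notin S$---is resolved precisely by the tie-breaking structure you point to in Definition~\ref{def:priority}, and the argument is short. Among vertices of $\opti$, $\tplus_{\vone}$ is maximal (since $\vone$ has top priority and Corollary~\ref{cor:priority-tplus} applies), and $\tplus_{\vtwo} = \tplus_{\vone}$. Any hypothetical $w \in \opti$ with $\priority_w > \priority_{\vtwo}$ must then have $\tplus_w = \tplus_{\vone}$, so we are in the tie-break regime. But Definition~\ref{def:priority} breaks ties by \emph{first} ordering demand pairs and \emph{then} ordering the two vertices inside each pair; since $\vone$ is globally highest, the pair $\{\vone,\vtwo\}$ is ranked first among tied pairs, which forces $\vtwo$ to be second overall. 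Hence no such $w$ exists, $\vtwo$ has the second-highest priority in $\opti$, and it is the representative of $\opti$ in any active set that excludes $\vone$. Your alternative route (arguing $w$'s presence in $S$ is impossible via connectivity) is not needed and would not work in general.
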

\begin{proof}
    Based on Corollary~\ref{cor:rplus_vone_rr_vone}, we have $\rplus_{\vone} = \rr_{\vone}$.
    It remains to show that $\rplus_{\vtwo} = \rplus_{\vone}$, which completes the proof.

    To prove that $\rplus_{\vtwo} = \rplus_{\vone}$, first note that $\tplus$ returned by \BBG{} indicates the time at which each vertex connects to its pair, which is the same for $\vone$ and $\vtwo$. In other words, $\tplus_{\vone} = \tplus_{\vtwo}$.
    Additionally, by Lemma~\ref{lm:rplus_prefix_time_assignment}, $\rplus$ is a prefix-time assignment.
    Therefore, if we show that for any time $\currenttime_- < \tplus_{\vone}$ both vertices are assigned, and for any time $\currenttime_+ > \tplus_{\vone}$ neither is assigned, it follows that these two vertices receive the same $\rplus$ value.

    For $\currenttime_+ > \tplus_{\vone}$, both $\vone$ and $\vtwo$ are not members of $\BaseSet(S, \currenttime_+)$ for any subset $S \subseteq \V$, and consequently, are not members of $\priorityset(\BaseSet(S, \currenttime_+))$.

    For $\currenttime_- < \tplus_{\vone}$, these two vertices have not yet reached each other. Thus, each active set containing one of them does not contain the other. Let $S \subseteq \V$ be a subset of vertices containing one of them at time $\currenttime_-$. 
    We know that this vertex is also in $\BaseSet(S, \currenttime_-)$, since $\currenttime_- < \tplus_{\vone}, \tplus_{\vtwo}$. 
    Moreover, the vertex is in $\priorityset(\BaseSet(S, \currenttime_-))$. This is immediate for $\vone$, as it has the maximum priority in $\opti$.
    For $\vtwo$, note that since $\vone$ has the maximum priority in $\opti$, its $\tplus$ is also maximal among vertices in $\opti$. Given that $\tplus_{\vtwo} = \tplus_{\vone}$ and by Definition~\ref{def:priority}, $\vtwo$ has the second-highest priority in $\opti$. 
    Therefore, if $\vtwo \in S$ at time $\currenttime_-$, which implies $\vone \notin S$, then $\vtwo$ is the vertex with the maximum priority from $\opti$ in $S$, and thus belongs to $\priorityset(\BaseSet(S, \currenttime_-))$.

    This completes the proof that $\rplus_{\vtwo} = \rplus_{\vone}$, and consequently proves the lemma.
\end{proof}

Now we provide the first important property, showing that $\rr_{\vone}$ and $\rr_{\vtwo}$ have roughly the same value.

\begin{lemma}
\label{lm:bound_rone-rtwo}
    The difference between the assigned values $\rr$ of vertices $\vone$ and $\vtwo$ can be bounded by
    $$\rr_{\vone} - \rr_{\vtwo} \le \cone \cc(\otree).$$
\end{lemma}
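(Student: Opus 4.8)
The plan is to bound $\rr_{\vone}-\rr_{\vtwo}$ by comparing it to quantities that the component membership $\opti\in\B$ controls, then using a boost argument to bound the residual growth. First I would recall from Lemma~\ref{lm:vone_has_rmax} and Lemma~\ref{lm:rplus_vtwo_rr_vone} that $\rr_{\vone}=\rr_{\max}(\opti)=\rplus_{\vone}=\rplus_{\vtwo}$. So $\rr_{\vone}-\rr_{\vtwo}=\rplus_{\vtwo}-\rr_{\vtwo}$, i.e. the gap I need to control is exactly the loss of assigned value to $\vtwo$ when passing from Legacy Execution to Boosted Execution. Intuitively, after a certain moment in Boosted Execution, $\vtwo$ stops being the priority representative of $\opti$ in its active set even though it is still active, which happens precisely because it has become actively connected (in the boosted run) to $\vone$ or to a higher-priority vertex of $\opti$. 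So the "missing" assignment after that moment is bounded by the time between when $\vtwo$ last receives assignment in Boosted Execution and $\tplus_{\vtwo}$, which is a stretch of active growth of some set $S'$ in Boosted Execution that contains both $\vtwo$ and some higher-priority vertex of $\opti$ but where $\vtwo$ and that vertex are not yet connected in Legacy Execution.

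Next I would set up a boost action on $\vtwo$ (or on $\vone$) in the final boosted instance of the Legacy Execution's local search — but wait, the local search has already terminated for Boosted Execution, so I should instead argue directly on the Legacy-to-Boosted comparison using Lemma~\ref{lm:lcl-loss} / Lemma~\ref{lm:lcl-win} style accounting, or apply the claw property. Actually the cleanest route: observe that the sets counted in $\rplus_{\vtwo}-\rr_{\vtwo}$ are, at each relevant moment $\currenttime$ with $\rr_{\vtwo}<\currenttime<\tplus_{\vtwo}$, active sets $S$ in Boosted Execution in whose base set $\vtwo$ is dominated by a higher-priority vertex $u\in\opti$; for all such $\currenttime$, in Legacy Execution $\vtwo$ and $u$ lie in separate active sets $S'_\currenttime\subsetneq S$ (by Corollary~\ref{cor:localsearch_activeset_refinement}) — if they were connected in Legacy then $\rplus_{\vtwo}$ would have stopped being assigned too. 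So during this whole interval, the Legacy Execution has two distinct active sets, one containing $\vtwo$ and one containing some higher-priority vertex of $\opti$, both cutting $\otree$. This is exactly the kind of configuration where a valuable boost on $\vtwo$ would have been applied during the Boosted Execution's local search. The point is: since that local search terminated, boosting $\vtwo$ to connect it (actively) to $\vone$ is not valuable, so $\Win\le(1+\bta)\Loss$, and combined with a lower bound on $\Win$ (the residual Legacy growth of the $\vtwo$-containing set, which is at least $\rplus_{\vtwo}-\rr_{\vtwo}$ worth of growth) and the constraint $\opti\in\B$ giving $\sum_{v\in\opti}\rr_v>(1-\cone)\cc(\otree)$ via Corollary~\ref{corollary:B_bound_r}, we get the stated bound after algebra. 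The $\cone$ on the right-hand side strongly suggests this is where the $\B$-membership (equivalently the $(1-\cone)$-closeness of $\sum_{v\in\opti}\rr_v$ to $\cc(\otree)$) is invoked to squeeze the gap.

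More concretely, I would (i) show $\rr_{\vone}-\rr_{\vtwo}\le \rr_{\vone}-\sum_{v\in\opti\setminus\{\vone\}}\rr_v+\big(\sum_{v\in\opti\setminus\{\vone\}}\rr_v-\rr_{\vtwo}\big)$ — no, better: bound $\rr_{\vone}-\rr_{\vtwo}$ by first writing $\rr_{\vone}+\rr_{\vtwo}\le$ (something involving $\cc(\otree)$) using Lemma~\ref{lm:opt-lowerbound_rstar}/Lemma~\ref{lm:upper_bound_sum_rplus}, and then producing a matching lower bound $\rr_{\vone}+\rr_{\vtwo}\ge 2\rr_{\vtwo}+(\rr_{\vone}-\rr_{\vtwo})$; combining with $\sum_{v\in\opti}\rr_v>(1-\cone)\cc(\otree)$ and $\sum_{v\in\opti}\rr_v\le\cc(\otree)$ to conclude that $\sum_{v\in\opti\setminus\{\vone,\vtwo\}}\rr_v$ is small, hence $\rr_{\vone}+\rr_{\vtwo}$ is close to $\cc(\otree)$ from both sides. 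Then I would use the claw property (Lemma~\ref{lm:xyz}) or the binary-tree generalization (Lemma~\ref{lm:steiner-tree}) applied to a triple $\{\vone,\vtwo,w\}$, or more directly use a two-vertex version: $\vone$ and $\vtwo$ grow in separate active sets until they connect at time $\ge\min(\rr_{\vone},\rr_{\vtwo})=\rr_{\vtwo}$ (by Lemma~\ref{lm:connect_after_min_r}), so they together color at least $2\rr_{\vtwo}$ of the path between them in $\otree$, hence $2\rr_{\vtwo}\le\dis(\vone,\vtwo)\le\cc(\otree)$... but this only gives $\rr_{\vtwo}\le\cc(\otree)/2$, not tight enough. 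The real mechanism must instead bound $\rr_{\vone}-\rr_{\vtwo}$ directly: after time $\rr_{\vtwo}$, the active set containing $\vtwo$ in Boosted Execution continues assigning to $\vone$ (since $\vone,\vtwo$ are now in the same active set there with $\vone$ dominant); meanwhile in Legacy Execution $\vone$ and $\vtwo$ are still separate until $\tplus_{\vone}=\rplus_{\vone}$. So the interval $(\rr_{\vtwo},\rr_{\vone})$ is one where Boosted Execution has merged $\vone,\vtwo$ but Legacy has not — precisely a situation created by boost actions. I expect the main obstacle to be pinning down the correct accounting: showing that the total growth contributing to $\rr_{\vone}-\rr_{\vtwo}$ can be identified with the loss of a non-valuable (hypothetical) boost action connecting $\vtwo$ actively to $\vone$, and that the corresponding win is at least $\rr_{\vone}-\rr_{\vtwo}$ plus a term forcing the $(1-\bta)$ and $4\cone\frac{1+\bta}{1-\bta}$-type constants (cf. Definition~\ref{def:ctwo_cfive}) to appear. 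Getting the constant exactly $\cone$ here (rather than $\ctwo$) is the delicate bookkeeping step, and carefully handling whether the relevant boost is on $\vone$ or $\vtwo$, and whether the win is measured against Legacy or Boosted active-set growth, will be where the care is needed.
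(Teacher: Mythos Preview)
Your first step is exactly right: by Lemma~\ref{lm:rplus_vtwo_rr_vone}, $\rr_{\vone}-\rr_{\vtwo}=\rplus_{\vtwo}-\rr_{\vtwo}$. But from there you badly overcomplicate things. None of the boost-action accounting, the claw property, or the two-vertex coloring arguments are needed, and the worry about getting $\ctwo$ instead of $\cone$ is a sign you are on the wrong track.

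The missing idea is a one-line observation: by Lemma~\ref{lm:rr-le-rplus}, every term $\rplus_v-\rr_v$ is nonnegative, so
\[
\rplus_{\vtwo}-\rr_{\vtwo}\ \le\ \sum_{v\in\opti}(\rplus_v-\rr_v)\ =\ \sum_{v\in\opti}\rplus_v-\sum_{v\in\opti}\rr_v.
\]
Now apply the two bounds you already named: $\sum_{v\in\opti}\rplus_v\le\cc(\otree)$ (Lemma~\ref{lm:upper_bound_sum_rplus}) and $\sum_{v\in\opti}\rr_v>(1-\cone)\cc(\otree)$ (Corollary~\ref{corollary:B_bound_r}, since $\opti\in\B$). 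Subtracting gives $\cone\,\cc(\otree)$, with the constant $\cone$ falling out immediately. This is the paper's entire proof; no local-search termination argument or structural analysis of active sets is involved.
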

\begin{proof}
    We prove this by showing that $\rr_{\vone} - \rr_{\vtwo}$ is a lower bound on the difference between the total $\rplus$ of vertices in $\opti$ and their total $\rr$, and then show that $\cone \cc(T_{\opti})$ is an upper bound for this value.
    \begin{align*}
        \rr_{\vone} - \rr_{\vtwo} &= \rplus_{\vtwo} - \rr_{\vtwo} \tag{Lemma~\ref{lm:rplus_vtwo_rr_vone}} \\
        &\le \rplus_{\vtwo} - \rr_{\vtwo} + \sum_{v\in \opti \setminus \{\vtwo\}} (\rplus_v - \rr_v) \tag{Lemma~\ref{lm:rr-le-rplus}} \\
        &= \sum_{v\in \opti} \rplus_v - \sum_{v\in \opti} \rr_v \\
        &\le \cc(\otree) - (1-\cone) \cc(\otree) \tag{Lemma~\ref{lm:upper_bound_sum_rplus}, Corollary~\ref{corollary:B_bound_r}} \\
        &= \cone \cc(T_{\opti})
    \end{align*}
\end{proof}

The next three lemmas establish connections between $\extraboostedotree$ and the total assigned value $\rr$ of vertices in $\opti$, and derive bounds on $\extraboostedotree$.
Here, $\extraboosted$ refers to $\extra$ as defined in Definition~\ref{def:uncolored_single_multi_colored}, with respect to Boosted Execution. 
These lemmas play a key role in the proofs of Lemmas~\ref{lm:bound_rr_vthree} and~\ref{lm:bound_twoopt}.

\begin{lemma}
\label{lm:bound_rr_with_ys}
    For any subset of vertices $S' \subseteq \opti$ with $\vone \notin S'$, we have
    $$\sum_{v \in S'} \rr_v \le \sum_{\substack{S \subseteq \V \\ S \cap S' \neq \emptyset \\ \vone \notin S}} \yys.$$
\end{lemma}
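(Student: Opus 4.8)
The plan is to bound the left-hand side by tracking, moment by moment in Boosted Execution, where the growth that contributes to $\sum_{v \in S'} \rr_v$ actually comes from. Recall that $\rr$ is the prefix-time assignment from Definition~\ref{def:rr}: at each moment $\currenttime$, an active set $S$ in its base phase assigns growth (with fraction $1$) exactly to the vertices in $\priorityset(\BaseSet(S,\currenttime))$. So the quantity $\sum_{v\in S'}\rr_v$ counts, over all moments, the growth of active sets $S$ that assign to some vertex $v \in S'$ at that moment. The goal is to show every such active set $S$ satisfies $S \cap S' \ne \emptyset$ and $\vone \notin S$, so that each such contribution is captured by the right-hand sum $\sum_{S : S\cap S'\ne\emptyset,\ \vone\notin S} \yys$.

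First I would fix an active set $S$ at a moment $\currenttime$ in Boosted Execution that assigns growth to some vertex $v \in S'$, i.e. $v \in \priorityset(\BaseSet(S,\currenttime))$. Since $\priorityset(\BaseSet(S,\currenttime)) \subseteq S$, we immediately get $v \in S$, hence $S \cap S' \ne \emptyset$. Next I would show $\vone \notin S$. Suppose for contradiction $\vone \in S$. Since $S$ is in its base phase and assigning to $v$, we have $v \in \BaseSet(S,\currenttime)$, so $\currenttime < \tplus_v$; by Definition~\ref{def:priority} and the fact that $v$ and $\vone$ both lie in $\opti = \optcom(\vone) = \optcom(v)$, and $\vone$ has the strictly highest priority in $\opti$ (Definition~\ref{def:vone_vtwo}), we have $\priority_{\vone} > \priority_v$. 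The key point is that $\vone$ is still active at time $\currenttime$: because $\currenttime < \tplus_v \le \tplus_{\vone}$ (using $\priority_v < \priority_{\vone}$ and Corollary~\ref{cor:priority-tplus}), and since $\tplus$ is a fingerprint, $\vone$ remains in an active set until $\tplus_{\vone}$, so in particular $\vone \in \BaseSet(S,\currenttime)$ (it is in $S$ since we assumed $\vone \in S$, and $\currenttime < \tplus_{\vone}$). But then $\vone \in \BaseSet(S,\currenttime) \cap \opti$ and $\priority_{\vone} > \priority_v$, contradicting $v \in \priorityset(\BaseSet(S,\currenttime))$ by Definition~\ref{def:priority-set}. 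Hence $\vone \notin S$.

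Having established that every active set contributing to $\sum_{v\in S'}\rr_v$ is of the form $S$ with $S\cap S' \ne \emptyset$ and $\vone \notin S$, I would finish by a double-counting argument: each such active set $S$ contributes, over the whole execution, at most its total growth $\yys$ to $\sum_{v\in S'}\rr_v$ (since the assigned fraction at any moment is $0$ or $1$, and $S$ can assign growth to at most one vertex of $\opti$ at any moment by Lemma~\ref{lm:opti_cap_priorityset}, so contributions to distinct vertices of $S'$ do not double-count a single moment of $S$'s growth). Summing over all such $S$ gives $\sum_{v\in S'}\rr_v \le \sum_{S : S\cap S'\ne\emptyset,\ \vone\notin S}\yys$, as desired. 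I expect the main obstacle to be the careful handling of the "at most one vertex of $\opti$ per moment" point — one must invoke Lemma~\ref{lm:opti_cap_priorityset} with $S' = S$ (the active set) to argue that $\priorityset(\BaseSet(S,\currenttime)) \cap \opti$ is a singleton, so that the growth of $S$ at moment $\currenttime$ assigned to $S' \subseteq \opti$ is counted at most once; everything else is a routine tracking of which active set is responsible for which piece of assigned value.
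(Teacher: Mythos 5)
Your proof is correct and follows essentially the same route as the paper's: both establish that any active set $S$ assigning growth to a vertex in $S'$ must satisfy $S \cap S' \ne \emptyset$ and $\vone \notin S$, and both invoke Lemma~\ref{lm:opti_cap_priorityset} to guarantee that at each moment the growth of $S$ is credited to at most one vertex of $\opti \supseteq S'$, so that summing $\yys$ over such $S$ dominates $\sum_{v\in S'}\rr_v$. The only cosmetic difference is that the paper derives $\vone \notin S$ directly from the characterization of the representative in Lemma~\ref{lm:opti_cap_priorityset}, whereas you re-derive the same fact from first principles via $\currenttime < \tplus_v \le \tplus_{\vone}$; both are valid.
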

\begin{proof}
    First, we show that each $\yys$ value from sets on the right-hand side can be assigned to at most one $\rr_v$ of a vertex on the left-hand side. Then we show that only $\yys$ values from those sets can be assigned to $\rr_v$ of vertices in $S'$, completing the proof.

    For any time $\currenttime$ and any subset of vertices $S \subseteq \V$, according to Lemma~\ref{lm:opti_cap_priorityset}, we have
    $$(\priorityset(\BaseSet(S, \currenttime)) \cap \opti) \in \{\emptyset, \{\argmax_{v \in S \cap \opti} \priority_v\}\}.$$
    Since $S' \subseteq \opti$, we conclude that
    $$|\priorityset(\BaseSet(S, \currenttime)) \cap S'| \leq 1,$$ 
    meaning the value of $\yys$ can contribute to $\rr_v$ of at most one vertex in $S'$.

    Furthermore, for any subset of vertices $S \subseteq \V$ and time $\currenttime$, if $|\priorityset(\BaseSet(S, \currenttime)) \cap S'| = 1$, then
    $$(\priorityset(\BaseSet(S, \currenttime)) \cap S') = \{\argmax_{v \in S \cap \opti} \priority_v\}.$$
    This happens only if $\argmax_{v \in S\cap \opti} \priority_v$ is in $S'$.
    Therefore, $(S\cap \opti) \cap S' \neq \emptyset$, and consequently $S \cap S' \neq \emptyset$.
    Additionally, since $\vone$ has the maximum priority in $\opti$, if $\vone \in S$, then $\argmax_{v \in S\cap \opti} \priority_v = \vone \notin S'$, contradicting the assignment to $S'$.
    Thus, a $\yys$ value can be assigned to $\rr_v$ of some vertex $v \in S'$ only if $S\cap S'\neq \emptyset$ and $\vone \notin S$.
\end{proof}

\begin{lemma}
\label{lm:bound_r_with_coloring_opti}
    The total assigned value $\rr$ to vertices in $\opti$ is bounded by
    $$ \sum_{v \in \opti} \rr_v \le (\rr_{\vone} - \rr_{\vtwo}) + \cc(\otree) - \dfrac{\extraboostedotree}{2}.$$
\end{lemma}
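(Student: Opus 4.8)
The plan is to relate the total $\rr$-assignment on $\opti$ to the portion of $\otree$ colored by active sets in Boosted Execution, using the single-edge/multi-edge classification machinery already developed. First I would recall the identity $\cc(\otree) = \singlecolored(\otree) + \extraboostedotree$ from Corollary~\ref{cor:F_USM_SE}, so that bounding $\sum_{v\in\opti}\rr_v$ in terms of $\singlecolored(\otree) + (\rr_{\vone} - \rr_{\vtwo}) + \extraboostedotree/2$ would suffice. The key observation is that the growth assigned to a vertex $v \in \opti$ by an active set $S$ in the $\rr$-assignment (Definition~\ref{def:rr}) requires $v \in \priorityset(\BaseSet(S,\currenttime))$, and by Lemma~\ref{lm:rplus_assigning_condition}-style reasoning (or directly, since $\pairv \in \opti$), such an $S$ must satisfy $S \odot \opti$, hence $|\deltaS \cap \otree| \ge 1$. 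So every unit of $\rr$-growth assigned to $\opti$ is "paid for" by an active set that cuts $\otree$ at least once.

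The main work is to separate out the contribution of the highest-priority vertex $\vone$. Here I would use Lemma~\ref{lm:bound_rr_with_ys} applied to the subset $S' = \opti \setminus \{\vone\}$: this gives $\sum_{v\in\opti\setminus\{\vone\}} \rr_v \le \sum_{S: S\cap(\opti\setminus\{\vone\})\ne\emptyset,\,\vone\notin S} \yys$. Now I partition these active sets according to whether they cut exactly one edge of $\otree$ or at least two. Those cutting $\ge 2$ edges contribute (after halving) at most $\extraboostedotree/2$ via the estimate $\sum_{S\in\macts_{\otree}} \ys \le \multicolored(\otree)/2 \le \extraboostedotree/2$ (the argument in Lemma~\ref{lm:ys_le_T_extra}). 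Those cutting exactly one edge of $\otree$ that do not contain $\vone$: I want to argue their total $\yys$ is at most $\singlecolored(\otree) - \rr_{\vtwo}$-ish, or more cleanly, that $\sum_{v\in\opti}\rr_v$ together with the "missing" amount $\rr_{\vone}-\rr_{\vtwo}$ accounts correctly. Actually the cleaner route: add $\rr_{\vone}$ back to both sides, and note $\rr_{\vone} - \rr_{\vtwo}$ is precisely the slack that lets us treat $\vone$ as if it were assigned only $\rr_{\vtwo}$; then every $\yys$ with $S \odot \opti$ gets counted at most once among the vertices with priority at most that of $\vtwo$, plus once more for the $\vone$-vs-$\vtwo$ discrepancy. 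So the structure is: $\sum_{v\in\opti}\rr_v = \rr_{\vone} + \sum_{v\in\opti\setminus\{\vone\}}\rr_v \le (\rr_{\vone}-\rr_{\vtwo}) + \rr_{\vtwo} + \sum_{S\odot\otree,\,\vone\notin S}\yys$, and then bound $\rr_{\vtwo} + \sum_{S\odot\otree,\,\vone\notin S}\yys$ by $\sum_{S\odot\otree}\yys \le \cc(\otree) - \extraboostedotree/2$ via Lemma~\ref{lm:ys_le_T_extra} — but I need to check that $\rr_{\vtwo}$ is subsumed, i.e. that the sets assigning to $\vtwo$ are among those with $\vone\notin S$ and $S\odot\otree$, which holds since when $\vtwo$ is assigned, $\vone \notin S$ (they haven't met) and $\pair_{\vtwo}=\vone\notin S$ forces $S \odot \opti$.

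The main obstacle will be getting the double-counting bookkeeping exactly right: ensuring that when I sum $\rr_v$ over all $v \in \opti$ (not just $\opti\setminus\{\vone\}$), each active set's growth $\yys$ is charged at most once, plus the extra $(\rr_{\vone}-\rr_{\vtwo})$ slack for $\vone$. The point is that $\rr$ restricted to $\opti$ is a prefix-time assignment where at each moment at most one vertex of $\opti$ receives growth (Lemma~\ref{lm:opti_cap_priorityset}), so $\sum_{v\in\opti}\rr_v \le \sum_{S\odot\otree}\yys$ directly — but then I have not used the $(\rr_{\vone}-\rr_{\vtwo})$ term at all, and the $\extraboosted/2$ refinement must come from the multi-edge sets being double-counted against $\otree$. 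I expect the correct chain is: $\sum_{v\in\opti}\rr_v \le \sum_{S\odot\otree}\yys$, then split into single-edge and multi-edge sets of $\otree$, bound the multi-edge part by $\multicolored(\otree)/2$, and the single-edge part by $\singlecolored(\otree)$; the $(\rr_{\vone}-\rr_{\vtwo})$ term then enters because $\vtwo$ and $\vone$ are never simultaneously assignable, so one of the two single-edge "slots" near the root of $\otree$ that $\vone$ would occupy is instead available, contributing the discrepancy. I would write this carefully by tracking, at each moment $\currenttime$, which vertex of $\opti$ is assigned and which edge(s) of $\otree$ the assigning set cuts, summing the inequality $\sum_{v\in\opti}\rr_{S,v} \le \min(1, |\deltaS\cap\otree|) \cdot \ys$ and then improving the $|\deltaS\cap\otree|\ge 2$ terms; the $(\rr_{\vone}-\rr_{\vtwo})$ correction is then added to absorb the boundary effect at $\vone$.
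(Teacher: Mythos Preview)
Your overall plan matches the paper's: split off $\vone$, bound $\sum_{v\in\opti\setminus\{\vone\}}\rr_v$ via Lemma~\ref{lm:bound_rr_with_ys}, and finish with Lemma~\ref{lm:ys_le_T_extra}. However, the key step is mishandled in two places.

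First, the ``direct'' bound $\sum_{v\in\opti}\rr_v \le \sum_{S\odot\otree}\yys$ you contemplate in the last paragraph is \emph{false} in general. In Boosted Execution, $\vone$ and $\vtwo$ may connect at some time $T$ strictly earlier than $\tplus_{\vone}$ (the Boosted connected components refine the Legacy ones only in the reverse direction). For $\currenttime\in[T,\tplus_{\vone})$, the active set $S$ containing $\vone$ has $\vone\in\BaseSet(S,\currenttime)$ and hence assigns to $\vone$, yet $S$ may already contain all of $\opti$ and so need not cut $\otree$. That excess contribution to $\rr_{\vone}$ is precisely what the $(\rr_{\vone}-\rr_{\vtwo})$ term absorbs; it is not an artifact.

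Second, in your ``cleaner route'' you need $\rr_{\vtwo}+\sum_{S\odot\otree,\,\vone\notin S}\yys \le \sum_{S\odot\otree}\yys$, i.e.\ $\rr_{\vtwo}\le \sum_{S\odot\otree,\,\vone\in S}\yys$. Your justification (``the sets assigning to $\vtwo$ have $\vone\notin S$'') establishes $\rr_{\vtwo}\le \sum_{S\odot\otree,\,\vone\notin S}\yys$, which is the wrong side. What is actually needed is the symmetry the paper uses: since $\vone,\vtwo$ are a demand pair, they sit in separate active sets for the same duration $T$ before meeting, giving $\sum_{S:\vone\in S,\,\vtwo\notin S}\yys = \sum_{S:\vtwo\in S,\,\vone\notin S}\yys \ge \rr_{\vtwo}$. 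These $\vone$-containing sets omit $\vtwo$ and hence cut $\opti$, and they are disjoint from the $\vone\notin S$ family, so the two contributions together sit inside $\sum_{S\odot\opti}\yys$. Once you insert this symmetry step, your chain becomes exactly the paper's proof.
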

\begin{proof}
    We begin by bounding $\rr_{\vone}$ using Lemma~\ref{lm:bound_rr_with_ys} with $S' = \{\vtwo\}$:
    \begin{align*}
        \rr_{\vone} &= \rr_{\vtwo} + \left(\rr_{\vone} - \rr_{\vtwo}\right)\\
        &\le \sum_{\substack{S\subseteq \V \\ \vtwo \in S \\ \vone \notin S}} \yys + \left(\rr_{\vone} - \rr_{\vtwo}\right). \tag{Lemma~\ref{lm:bound_rr_with_ys}}
    \end{align*}
    Since $\vone$ and $\vtwo$ form a pair, they remain in active sets until they are connected in Legacy Execution. As a consequence, Corollary~\ref{cor:localsearch_activeset_refinement} ensures that $\vone$ and $\vtwo$ also remain active until they connect in Boosted Execution.
    Consequently, at any moment in Boosted Execution, a connected component containing $\vtwo$ but not $\vone$ is active, and another component containing $\vone$ but not $\vtwo$ is also active. 
    It follows that $\sum_{\substack{S\subseteq \V \\ \vtwo \in S \\ \vone \notin S}} \yys  = \sum_{\substack{S\subseteq \V \\ \vone \in S \\ \vtwo \notin S}} \yys$, which combining by above inequality lead to
    \begin{align}
    \label{eq:bound_rr_vone}
        \rr_{\vone} \le \sum_{\substack{S\subseteq \V \\ \vone \in S \\ \vtwo \notin S}} \yys + \left(\rr_{\vone} - \rr_{\vtwo}\right).
    \end{align}

    Next, we bound the contribution of all vertices in $\opti$ except for $\vone$.
    By Lemma~\ref{lm:bound_rr_with_ys}, we have
    \begin{align}
    \label{eq:bound_rr_opti_except_vone}
        \sum_{\substack{v \in \opti \\ v\neq \vone}} \rr_v \le \sum_{\substack{S\subseteq \V \\ S\cap \opti \neq \emptyset \\ \vone \notin S}} \yys.
    \end{align}

    We now define the following families of sets:
    \begin{align*}
        \mathcal{S}_1 &= \{S \subseteq \V \mid \vone \in S, \vtwo \notin S\}, \\
        \mathcal{S}_2 &= \{S \subseteq \V \mid S \cap \opti \neq \emptyset, \vone \notin S\}, \\
        \mathcal{S}_3 &= \{S \subseteq \V \mid S \odot \opti\}.
    \end{align*}
    Observe that $\mathcal{S}_1$ and $\mathcal{S}_2$ are disjoint, since each set in $\mathcal{S}_1$ contains $\vone$ while each set in $\mathcal{S}_2$ does not. Furthermore, every set in $\mathcal{S}_1 \cup \mathcal{S}_2$ cuts $\opti$, and thus $(\mathcal{S}_1 \cup \mathcal{S}_2) \subseteq \mathcal{S}_3$. These observations imply
    $$\sum_{S\in \mathcal{S}_1}\yys + \sum_{S\in \mathcal{S}_2}\yys \le \sum_{S\in \mathcal{S}_3}\yys,$$
    which, in expanded form, is
    \begin{align}
    \label{eq:sum_ys_of_sets_cut_opti}
        \sum_{\substack{S\subseteq \V \\ \vone \in S \\ \vtwo \notin S}} \yys + \sum_{\substack{S\subseteq \V \\ S\cap \opti \neq \emptyset \\ \vone \notin S}} \yys \le \sum_{\substack{S\subseteq \V \\ S\odot \opti}} \yys.
    \end{align}
    
    Finally, we combine all these inequalities to complete the proof.
    \begin{align*}
        \sum_{v \in \opti} \rr_v &= \rr_{\vone} + \sum_{\substack{v \in \opti \\ v\neq \vone}} \rr_v \\
        &\le \left(\sum_{\substack{S\subseteq \V \\ \vone \in S \\ \vtwo \notin S}} \yys + \left(\rr_{\vone} - \rr_{\vtwo}\right)\right) + \sum_{\substack{S\subseteq \V \\ S\cap \opti \neq \emptyset \\ \vone \notin S}} \yys \tag{Equations~\ref{eq:bound_rr_vone} and~\ref{eq:bound_rr_opti_except_vone}}\\
        &\le (\rr_{\vone} - \rr_{\vtwo}) + \sum_{\substack{S\subseteq \V \\ S\odot \opti}} \yys \tag{Equation~\ref{eq:sum_ys_of_sets_cut_opti}}\\
        &\le (\rr_{\vone} - \rr_{\vtwo}) + \cc(\otree) - \dfrac{\extraboostedotree}{2}. \tag{Lemma~\ref{lm:ys_le_T_extra}}
    \end{align*}
\end{proof}

\begin{lemma}
\label{lm:extra_coloring}
    The quantity $\extraboostedotree$ can be bounded as follows:
    $$\extraboostedotree \le 4\cone \cc(\otree).$$
\end{lemma}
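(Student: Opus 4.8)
Here is my plan for proving Lemma~\ref{lm:extra_coloring}, i.e., $\extraboostedotree \le 4\cone \cc(\otree)$.

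The strategy is to combine the two bounds on $\sum_{v \in \opti} \rr_v$ that we already have: the lower bound coming from $\opti \in \B$ (Corollary~\ref{corollary:B_bound_r}), which says $\sum_{v \in \opti} \rr_v > (1-\cone)\cc(\otree)$, and the upper bound from Lemma~\ref{lm:bound_r_with_coloring_opti}, which says $\sum_{v \in \opti} \rr_v \le (\rr_{\vone} - \rr_{\vtwo}) + \cc(\otree) - \frac{\extraboostedotree}{2}$. Chaining these gives
\[
(1-\cone)\cc(\otree) < (\rr_{\vone} - \rr_{\vtwo}) + \cc(\otree) - \frac{\extraboostedotree}{2},
\]
so $\frac{\extraboostedotree}{2} < (\rr_{\vone} - \rr_{\vtwo}) + \cone\cc(\otree)$. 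Then I would invoke Lemma~\ref{lm:bound_rone-rtwo}, which bounds $\rr_{\vone} - \rr_{\vtwo} \le \cone\cc(\otree)$, to conclude $\frac{\extraboostedotree}{2} < 2\cone\cc(\otree)$, hence $\extraboostedotree < 4\cone\cc(\otree)$. Since the statement is a non-strict inequality, the strict bound is more than enough; I'd just write $\le$ in the conclusion.

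So the proof is essentially a three-line chain of inequalities: start from Lemma~\ref{lm:bound_r_with_coloring_opti}, substitute the lower bound from Corollary~\ref{corollary:B_bound_r} on the left, rearrange, and then absorb the $\rr_{\vone} - \rr_{\vtwo}$ term using Lemma~\ref{lm:bound_rone-rtwo}. All the real work has already been done in the earlier lemmas — in particular Lemma~\ref{lm:bound_r_with_coloring_opti} is where the coloring argument (via Lemma~\ref{lm:ys_le_T_extra} and the laminar/active-set accounting) happens, and Lemma~\ref{lm:bound_rone-rtwo} is where the $\rplus$-versus-$\rr$ comparison happens. This lemma is just the bookkeeping step that packages those into a clean bound on $\extraboostedotree$.

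I don't anticipate a genuine obstacle here; the only thing to be a little careful about is direction of inequalities (the $\B$-membership bound is strict, Lemma~\ref{lm:bound_r_with_coloring_opti} and Lemma~\ref{lm:bound_rone-rtwo} are non-strict) and making sure the $\opti \in \Btwo \subseteq \B$ containment is used so that Corollary~\ref{corollary:B_bound_r} applies. Here is the proof:

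\begin{proof}
    Since $\opti \in \Btwo \subseteq \B$, Corollary~\ref{corollary:B_bound_r} gives
    \[
        (1-\cone)\cc(\otree) < \sum_{v \in \opti} \rr_v.
    \]
    On the other hand, Lemma~\ref{lm:bound_r_with_coloring_opti} gives
    \[
        \sum_{v \in \opti} \rr_v \le (\rr_{\vone} - \rr_{\vtwo}) + \cc(\otree) - \dfrac{\extraboostedotree}{2}.
    \]
    Combining these two inequalities, we obtain
    \[
        (1-\cone)\cc(\otree) < (\rr_{\vone} - \rr_{\vtwo}) + \cc(\otree) - \dfrac{\extraboostedotree}{2},
    \]
    which rearranges to
    \[
        \dfrac{\extraboostedotree}{2} < (\rr_{\vone} - \rr_{\vtwo}) + \cone \cc(\otree).
    \]
    Finally, applying Lemma~\ref{lm:bound_rone-rtwo}, which states that $\rr_{\vone} - \rr_{\vtwo} \le \cone \cc(\otree)$, we conclude
    \[
        \dfrac{\extraboostedotree}{2} < 2\cone \cc(\otree),
    \]
    and therefore $\extraboostedotree \le 4\cone \cc(\otree)$.
\end{proof}
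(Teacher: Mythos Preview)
Your proof is correct and follows essentially the same approach as the paper: both chain Corollary~\ref{corollary:B_bound_r}, Lemma~\ref{lm:bound_r_with_coloring_opti}, and Lemma~\ref{lm:bound_rone-rtwo} in that order and rearrange. The paper's version is just slightly more compact, presenting the chain as a single aligned display before rearranging.
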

\begin{proof}
    The bound follows easily from previous lemmas:
    \begin{align*}
        (1-\cone) \cdot \cc(\otree) &< \sum_{v\in \opti} \rr_v \tag{$\opti \in \B$, Corollary~\ref{corollary:B_bound_r}}\\
        &\le (\rr_{\vone} - \rr_{\vtwo}) + \cc(\otree) - \dfrac{\extraboostedotree}{2} \tag{Lemma~\ref{lm:bound_r_with_coloring_opti}}\\
        &\le (1+\cone) \cdot \cc(\otree) - \dfrac{\extraboostedotree}{2}. \tag{Lemma~\ref{lm:bound_rone-rtwo}}
    \end{align*}
    After reordering terms, we have
    \begin{align*}
        \extraboostedotree \le 4\cone \cc(\otree).
    \end{align*}
\end{proof}

Using Definition~\ref{def:ctwo_cfive}, we show a lower bound on the value of $\rr_{\vtwo}$.

\begin{lemma}
\label{lm:rr_vtwo_lower_bound}
    The value $\rr_{\vtwo}$ satisfies the following lower bound:
    $$\rr_{\vtwo} > \ctwo \cc(\otree).$$
\end{lemma}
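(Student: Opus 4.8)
### Proof Plan

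The goal is to show $\rr_{\vtwo} > \ctwo \cc(\otree)$ where $\ctwo = 4\cone \frac{1+\bta}{1-\bta}$. The natural strategy is to combine the facts we have already assembled about $\opti \in \Btwo$. By Definition~\ref{def:classify_B}, $\opti \in \Btwo$ means $\rmax(\opti) > \cfive \cc(\otree)$, and by Lemma~\ref{lm:vone_has_rmax}, $\rmax(\opti) = \rr_{\vone}$. So we know $\rr_{\vone}$ is large, namely $\rr_{\vone} > \cfive \cc(\otree)$. On the other hand, Lemma~\ref{lm:bound_rone-rtwo} tells us $\rr_{\vone} - \rr_{\vtwo} \le \cone \cc(\otree)$. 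Chaining these two,
\[
\rr_{\vtwo} \ge \rr_{\vone} - \cone \cc(\otree) > (\cfive - \cone)\cc(\otree).
\]
So it suffices to verify the arithmetic inequality $\cfive - \cone \ge \ctwo$, i.e. that the chosen value of $\cfive$ in Definition~\ref{def:ctwo_cfive} is large enough.

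First I would write out $\cfive - \cone$ using Definition~\ref{def:ctwo_cfive}:
\[
\cfive - \cone = \dfrac{\cone}{\ccnd}\left(4 + 3\ccnd + 4(1+\ccnd)\dfrac{1+\bta}{1-\bta}\right) - \cone.
\]
Since $0 < \ccnd < 1$, the factor $\frac{1}{\ccnd} > 1$, and each summand inside the parentheses is positive. The term $4(1+\ccnd)\frac{1+\bta}{1-\bta} \cdot \frac{\cone}{\ccnd}$ alone already exceeds $4\cone\frac{1+\bta}{1-\bta} = \ctwo$, because $\frac{1+\ccnd}{\ccnd} > 1$. The remaining terms $\frac{\cone}{\ccnd}(4 + 3\ccnd) = \frac{4\cone}{\ccnd} + 3\cone$ more than absorb the subtracted $\cone$ (indeed $\frac{4\cone}{\ccnd} \ge 4\cone > \cone$). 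Hence $\cfive - \cone > \ctwo$, which closes the argument. I would present this as: $\cfive - \cone = \frac{4\cone}{\ccnd} + 3\cone + \frac{4\cone(1+\ccnd)}{\ccnd}\cdot\frac{1+\bta}{1-\bta} - \cone \ge \frac{4\cone(1+\ccnd)}{\ccnd}\cdot\frac{1+\bta}{1-\bta} \ge \ctwo$, using $\frac{4\cone}{\ccnd} + 2\cone \ge 0$ and $\frac{1+\ccnd}{\ccnd} \ge 1$.

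I do not anticipate a genuine obstacle here — this is essentially a bookkeeping lemma whose entire content is that the constants were defined to make the inequality work. The only mild subtlety is making sure every inequality is strict where it needs to be: the strictness comes for free from $\rmax(\opti) > \cfive\cc(\otree)$ (strict, from Definition~\ref{def:classify_B}), so even if $\cfive - \cone$ only equalled $\ctwo$ we would still get the strict conclusion $\rr_{\vtwo} > \ctwo\cc(\otree)$. One should also note $\cc(\otree) \ge 0$ (it could be zero for a trivial component), but if $\cc(\otree) = 0$ then $\opti \in \Btwo$ would force $\rmax(\opti) > 0 = \cfive\cc(\otree)$, still consistent, and the conclusion $\rr_{\vtwo} > 0$ would follow from $\rr_{\vtwo} \ge \rr_{\vone} - \cone\cdot 0 = \rmax(\opti) > 0$; I would just carry the $\cc(\otree)$ factor through symbolically and not worry about this degenerate case separately since all inequalities are homogeneous in $\cc(\otree)$.
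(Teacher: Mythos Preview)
Your proposal is correct and follows essentially the same approach as the paper: chain $\rr_{\vtwo} \ge \rr_{\vone} - \cone\,\cc(\otree) = \rmax(\opti) - \cone\,\cc(\otree) > (\cfive - \cone)\,\cc(\otree)$ via Lemma~\ref{lm:bound_rone-rtwo}, Lemma~\ref{lm:vone_has_rmax}, and Definition~\ref{def:classify_B}, then verify the arithmetic $\cfive - \cone \ge \ctwo$ from Definition~\ref{def:ctwo_cfive}. The only cosmetic difference is that the paper multiplies through by $\ccnd/\cone$ and simplifies to $4 + 2\ccnd + 4\tfrac{1+\bta}{1-\bta} \ge 0$, whereas you do a term-by-term comparison; both are valid.
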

\begin{proof}
    \begin{align*}
        \rr_{\vtwo} &\ge \rr_{\vone} - \cone \cc(\otree) \tag{Lemma~\ref{lm:bound_rone-rtwo}} \\
        &= \rmax(\opti) - \cone \cc(\otree) \tag{Lemma~\ref{lm:vone_has_rmax}} \\
        &> (\cfive - \cone) \cc(\otree) \tag{Definition~\ref{def:classify_B}}
    \end{align*}
    Therefore, it suffices to show that $\cfive - \cone \ge \ctwo$. Using Definition~\ref{def:ctwo_cfive}, this inequality reduces to:
    $$\dfrac{\cone}{\ccnd} \left(4 + 3\ccnd + 4 (1 + \ccnd)\dfrac{1+\bta}{1 - \bta}\right) - \cone \ge 4\cone \dfrac{1+\bta}{1-\bta}.$$
    Multiplying both sides by the positive factor $\frac{\ccnd}{\cone}$, we obtain:
    $$4 + 3\ccnd + 4 (1 + \ccnd)\dfrac{1+\bta}{1 - \bta} - \ccnd \ge 4\ccnd \dfrac{1+\bta}{1-\bta},$$
    which simplifies to:
    $$4 + 2\ccnd + 4\dfrac{1+\bta}{1 - \bta} \ge 0.$$
    This inequality clearly holds, as all parameters are positive and $\bta < 1$.
\end{proof}

\begin{definition} 
\label{def:vthree} 
    Let $\iota_u$ denote the moment when a vertex $u$ becomes connected to either $\vone$ or $\vtwo$, and let $S$ be the set of vertices $u \in \opti \setminus \{\vone, \vtwo\}$ for which $\iota_u$ is defined—that is, $u$ connects to one of them—and $u$ is unsatisfied just before $\iota_u$ (regardless of whether it becomes satisfied at $\iota_u$). Then, we define
    \[
    \vthree = \argmax_{u \in S} \iota_u.
    \]
    Intuitively, $\vthree$ is the last vertex in $\opti \setminus \{\vone, \vtwo\}$ that was not previously connected to either $\vone$ or $\vtwo$, and then becomes connected to one of them for the first time while still unsatisfied just before that moment in Boosted Execution.
    
    If no such vertex exists, we assume $\vthree$ to be a dummy vertex in $\opti$ that is connected to $\vone$ by a zero-cost edge. 
\end{definition}

Next, we aim to use the above lemmas to establish another important property, which gives an upper bound for $\rr_{\vthree}$.

\begin{lemma}
\label{lm:bound_rr_vthree}
    The value of $\rr_{\vthree}$ can be bounded as 
    $$
    \rr_{\vthree} \le \ctwo \cc(\opti).
    $$
\end{lemma}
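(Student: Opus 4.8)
The plan is to bound $\rr_{\vthree}$ by relating it to the growth of active sets that contain $\vthree$ but not both of $\vone,\vtwo$, and then to show that this growth is small because of the bound $\extraboostedotree \le 4\cone\,\cc(\otree)$ from Lemma~\ref{lm:extra_coloring}. Concretely, the starting point will be Lemma~\ref{lm:bound_rr_with_ys} applied to the singleton set $S' = \{\vthree\}$ (valid since $\vthree \neq \vone$), which gives $\rr_{\vthree} \le \sum_{S \subseteq \V,\ \vthree \in S,\ \vone \notin S} \yys$. So it suffices to upper bound the total growth of active sets that contain $\vthree$ but not $\vone$. I would split these active sets into two groups: those that also avoid $\vtwo$, and those that contain $\vtwo$ (and necessarily not $\vone$).

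For the first group — active sets containing $\vthree$ but neither $\vone$ nor $\vtwo$ — I want to argue that each such set, while it grows, must be coloring part of $\otree$ ``wastefully,'' i.e., must be a multi-edge set or be coloring an uncolored portion, so that the total such growth is controlled by $\extraboostedotree$. The key observation is the following: before moment $\iota_{\vthree}$ (the time $\vthree$ connects to $\vone$ or $\vtwo$, from Definition~\ref{def:vthree}), the active set containing $\vthree$ cuts $\otree$ (since it contains $\vthree$ but not $\vone$, which are joined by a path in $\otree$); and moreover if it cuts exactly one edge of $\otree$, then by Lemma~\ref{lm:remove_one_edge} removing that edge only disconnects pairs with an endpoint in $\unsatisfied(S)$ — but by the choice of $\vthree$ as the \emph{last} such vertex, after $\iota_{\vthree}$ no unsatisfied vertex of $\opti\setminus\{\vone,\vtwo\}$ outside the $\{\vone,\vtwo\}$-component remains, so combined with the minimality of $\OPT$ this should force any single-edge situation to contradict optimality or to have already connected $\vthree$. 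The upshot I am aiming for is that the growth of these sets is at most (a constant times) $\extraboostedotree \le 4\cone\,\cc(\otree)$, hence at most a term of the form $c\cone\,\cc(\otree)$.

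For the second group — active sets containing both $\vthree$ and $\vtwo$ but not $\vone$ — I would use that $\vtwo$ and $\vone$ stay in distinct active sets until time $\tplus_{\vone}=\tplus_{\vtwo}$, and that $\rr_{\vtwo} \le \tplus_{\vtwo}$ while $\rr_{\vone}\ge\tplus_{\vone}$, together with Lemma~\ref{lm:bound_rone-rtwo} giving $\rr_{\vone}-\rr_{\vtwo}\le\cone\,\cc(\otree)$; the growth of sets containing $\vtwo$ but not $\vone$ during the window after $\rr_{\vtwo}$ until they connect is at most $\rr_{\vone}-\rr_{\vtwo}\le\cone\,\cc(\otree)$ (since any set containing both $\vtwo$ and $\vthree$, if $\vthree$ connected to $\vtwo$ at time $\iota_{\vthree}$, sits inside the $\vtwo$-component, which merges with the $\vone$-component only at $\tplus_{\vone}$). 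Adding the two groups and tracking the constants through Definition~\ref{def:ctwo_cfive} — where $\ctwo = 4\cone\frac{1+\bta}{1-\bta}$ — should give exactly $\rr_{\vthree}\le\ctwo\,\cc(\opti)$ after absorbing the $(1+\bta)/(1-\bta)$ slack that comes from the valuable-boost condition (via Lemma~\ref{lm:lcl-loss}-style arguments or the bound $\extraboostedotree\le 4\cone\cc(\otree)$).

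The main obstacle will be the case analysis in the first group: precisely pinning down why an active set containing $\vthree$ (but not $\vone,\vtwo$) that grows \emph{after} it has colored only a single edge of $\otree$ cannot do so — this requires carefully combining the definition of $\vthree$ as the last unsatisfied vertex to join the $\{\vone,\vtwo\}$-component with Lemma~\ref{lm:remove_one_edge} and the minimality of $\OPT$, and handling the possibility that $\vthree$ is the dummy vertex (in which case the bound is trivial since $\rr_{\vthree}\le\tplus_{\vthree}=0$ or $\rr_{\vthree}$ equals $\rr_{\vone}$ restricted appropriately). A secondary subtlety is bookkeeping: ensuring the $\yys$ values counted in the two groups are disjoint so that the bounds genuinely add, which follows because active sets at a fixed moment are disjoint (Corollary~\ref{cor:active_sets_laminar}) and a set cannot simultaneously avoid $\vtwo$ and contain $\vtwo$.
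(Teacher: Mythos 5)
Your proposal does not use the Claw Property (Lemma~\ref{lm:xyz}), which is the essential ingredient in the paper's proof, and I don't see how to close the gap without it. The paper proceeds by contradiction: assuming $\rr_{\vone}, \rr_{\vtwo}, \rr_{\vthree}$ all exceed $\ctwo\,\cc(\opti)$, it applies the claw property to the triple $(\vone, \vtwo, \vthree)$ with the star center $q$ of $\otree$ as the candidate boost vertex. Letting $\currenttime$ and $\currenttime'$ be the first moments two (resp.\ all three) of these vertices merge, the claw property gives $\currenttime + \currenttime' \le \frac{\stxp+\styp+\stzp}{2} + \bta\frac{\min(\stxp,\styp,\stzp)}{2}$; the paper then bounds the right-hand side in terms of $\currenttime, \currenttime'$ and $\extraboostedotree$ via Lemmas~\ref{lm:coloring_exactly_one_edge} and~\ref{lm:extra_subgraph}, which yields $\currenttime \le \frac{1+\bta}{1-\bta}\extraboostedotree \le \ctwo\,\cc(\otree)$ by Lemma~\ref{lm:extra_coloring} — contradicting Lemma~\ref{lm:connect_after_min_r}, which forces $\currenttime \ge \min(\rr_{\vone},\rr_{\vtwo},\rr_{\vthree}) > \ctwo\,\cc(\opti)$.

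The core problem with your Group~1 bound is the claim that active sets containing $\vthree$ but neither $\vone$ nor $\vtwo$ are controlled by $\extraboostedotree$. This quantity bounds only the uncolored and multi-colored portions of $\otree$ — it does \emph{not} bound the total $\yys$ of single-edge sets. And single-edge sets around $\vthree$ are perfectly possible: the small initial moat around $\vthree$ cuts exactly one edge of $\otree$, and as it grows toward $q$ it can remain a single-edge set for a long time, coloring $\otree$ one edge at a time. Your argument via Lemma~\ref{lm:remove_one_edge} and the minimality of $\OPT$ cannot exclude this, because during the time window you are counting, $\vthree$ (and possibly others) are still unsatisfied, so removing the cut edge \emph{would} disconnect a real demand — there is no contradiction with optimality. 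The ``wastefulness'' you are hoping for simply isn't there at the level of the moat-growing execution alone; it is precisely the local-search termination (encoded by the claw property) that forbids $\vthree$ from wandering far from $\vone$ and $\vtwo$, and this is where the $\frac{1+\bta}{1-\bta}$ factor enters. Lemma~\ref{lm:lcl-loss} cannot substitute: it bounds the \emph{global} quantity $\lossone$, not the per-component growth around a specific vertex. Your Group~2 accounting is also off — $\vthree$ may join $\vtwo$ well before $\rr_{\vtwo}$, so the window is not controlled by $\rr_{\vone}-\rr_{\vtwo}$ — but this is secondary to the Group~1 issue.
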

\begin{proof}
    We prove this by contradiction, assuming that $\rr_{\vthree} > \ctwo \cc(\opti)$.
    
    Since $\vone$, $\vtwo$, and $\vthree$ belong to $\opti$, they are all connected in the optimal solution. 
    This means the subgraph of the optimal solution induced by these three vertices and the paths between them forms a star with three leaves, with a vertex $q \in \opti$ as the center. 
    We refer to this star as $\textit{ST}$.
    Note that it is possible that $q$ coincides with one of these vertices.
    In this case, we can assume that they are distinct vertices connected by a zero-cost edge.

    Let $\currenttime$ represent the first moment that two of these vertices are connected, and $\currenttime'$ the first time that all three are connected in Boosted Execution. 
    Given Lemma~\ref{lm:vone_has_rmax}, we have $\rr_{\vone} \ge \rr_{\vtwo}$. 
    Moreover, by Lemma~\ref{lm:rr_vtwo_lower_bound}, we have $\rr_{\vtwo} > \ctwo \cc(\opti)$, and by the contradiction assumption, $\rr_{\vthree} > \ctwo \cc(\opti)$.
    Since the $\rr$ values of all these vertices are greater than $\ctwo \cc(\opti)$, using Lemma~\ref{lm:connect_after_min_r}, we can conclude that no pair among these three vertices connects until time $\ctwo \cc(\opti)$. 
    Therefore,
    \begin{align}
    \label{eq:t_t'_ctwo_opti}
        \currenttime \ge \min(\rr_{\vone}, \rr_{\vtwo}, \rr_{\vthree}) > \ctwo \cc(\opti).
    \end{align} 

    Next, we aim to contradict the above inequality.

    Since $\vone$ and $\vtwo$ are paired, they are actively connected in Legacy Execution, which implies they are also actively connected in Boosted Execution (by Lemma~\ref{lm:large_fingerprint_superactive}).
    Additionally, according to Definition~\ref{def:vthree}, $\vthree$ remains unsatisfied before connecting to the other two vertices.
    Thus, by Corollary~\ref{cor:localsearch_activeset_refinement}, we can similarly conclude that $v_3$ is actively connected to $v_1$ and $v_2$.
    Therefore, we can apply Lemma~\ref{lm:xyz} as follows:
    \begin{align}
    \label{eq:ST_bound_for_add_new_ball}
        \currenttime + \currenttime'  \leq \frac{\stxp+\styp+\stzp}{2} + \bta\frac{\min(\stxp,\styp,\stzp)}{2},
    \end{align}
    where $\stxp$, $\styp$, and $\stzp$ are the distances from $\vone$, $\vtwo$, and $\vthree$ to $q$, respectively.

    Next, we bound $\stxp+\styp+\stzp$ and $\min(\stxp,\styp,\stzp)$ based on $\currenttime$, $\currenttime'$, and $\extraboostedotree$.

    Let $S' = \{\vone, \vtwo, \vthree\}$ be the set of leaves of $\textit{ST}$. 
    We aim to show that $\sum_{S \odot S'} \yys \le \currenttime + 2\currenttime'$.
    In Boosted Execution, until time $\currenttime$, there are three connected components containing at least one of these vertices, contributing at most $3\currenttime$ to the sum.
    After time $\currenttime$, two vertices become connected, forming a single component. 
    Between $\currenttime$ and $\currenttime'$, this contributes an additional $2(\currenttime' - \currenttime)$ to the sum.
    After $\currenttime'$, all vertices are in the same component, and the sum remains unchanged.
    Hence,
    \begin{align}
    \label{eq:bound_yss_cutting_leaves}
        \sum_{\substack{S \subseteq \V \\ S \odot S'}} \yys \le 3\currenttime + 2(\currenttime' - \currenttime) = \currenttime + 2\currenttime'.
    \end{align}

    Now, since $\textit{ST}$ is a subgraph of $\otree$, Lemma~\ref{lm:extra_subgraph} implies:
    \begin{align*}
        \extraboostedotree &\ge \extraboosted(\textit{ST}) \tag{Lemma~\ref{lm:extra_subgraph}}\\
        &\ge \cc(\textit{ST}) - \sum_{\substack{S\subseteq \V \\ S\odot S'}} \yys \tag{Lemma~\ref{lm:coloring_exactly_one_edge}}\\
        &\ge \stxp+\styp+\stzp - (\currenttime + 2\currenttime'). \tag{Equation~\ref{eq:bound_yss_cutting_leaves}}
    \end{align*}
    Rearranging gives:
    \begin{align}
    \label{eq:bound_sum_x_y_z}
        \stxp+\styp+\stzp \le \currenttime + 2\currenttime' + \extraboostedotree.
    \end{align}

    Next, we bound $\min(\stxp,\styp,\stzp)$.  
    We know that two vertices of $S'$ are connected at time $\currenttime$ in Boosted Execution.  
    Without loss of generality, let us assume that these are $\vone$ and $\vtwo$.  
    We refer to the path between these vertices in the optimal solution as $\textit{PT}$.  
    We know that $\cc(\textit{PT}) \ge \stxp+\styp$.
    
    Moreover, since these two vertices reach each other at time $\currenttime$, the total $\yys$ of active sets containing exactly one of $\vone$ and $\vtwo$ is at most $2\currenttime$.  
    This means
    \begin{align}
        \label{eq:bound_yss_cutting_vone_vtwo}
        \sum_{\substack{S\subseteq \V \\ S\odot \{\vone,\vtwo\}}} \yys \le 2\currenttime.
    \end{align}
    
    Since $\textit{PT}$ is a subgraph of $\otree$, we can use Lemma~\ref{lm:extra_subgraph} to conclude
    \begin{align*}
        \extraboostedotree &\ge \extraboosted(\textit{PT}) \tag{Lemma~\ref{lm:extra_subgraph}}\\
        &\ge \cc(\textit{PT}) - \sum_{\substack{S\subseteq \V \\ S\odot \{\vone,\vtwo\}}} \yys \tag{Lemma~\ref{lm:coloring_exactly_one_edge}}\\
        &\ge \stxp+\styp - 2\currenttime \tag{Equation~\ref{eq:bound_yss_cutting_vone_vtwo}}.
    \end{align*}
    
    By simple reordering, we obtain:
    \[
        2\currenttime + \extraboostedotree \ge \stxp+ \styp \ge 2\min(\stxp, \styp).
    \]
    
    Therefore,
    \[
        \currenttime + \extraboostedotree \ge \min(\stxp,\styp),
    \]
    and it follows that
    \begin{align}
        \label{eq:bound_min_x_y_z}
        \min(\stxp,\styp,\stzp) \le \currenttime + \extraboostedotree.
    \end{align}

    Substituting inequalities from \eqref{eq:bound_sum_x_y_z} and \eqref{eq:bound_min_x_y_z} into \eqref{eq:ST_bound_for_add_new_ball} yields:
    \begin{align*}
        \currenttime + \currenttime' &\le \dfrac{\stxp+\styp+\stzp}{2} + \bta \dfrac{\min(\stxp,\styp,\stzp)}{2} \tag{Equation~\ref{eq:ST_bound_for_add_new_ball}} \\
        &\le  \dfrac{\currenttime + 2\currenttime' + \extraboostedotree}{2} + \bta \dfrac{\currenttime + \extraboostedotree}{2}. \tag{Equations~\ref{eq:bound_sum_x_y_z} and \ref{eq:bound_min_x_y_z}}
    \end{align*}
    Rearranging this results in
    \begin{align*}
        \currenttime &\le \dfrac{1+\bta}{1-\bta}\extraboostedotree \\
        &\le 4\cone \dfrac{1+\bta}{1-\bta} \cc(\otree) \tag{Lemma~\ref{lm:extra_coloring}}\\
        &= \ctwo \cc(\otree) \tag{Definition~\ref{def:ctwo_cfive}}
    \end{align*}
    This contradicts inequality \eqref{eq:t_t'_ctwo_opti}, completing the proof.
\end{proof}

\begin{definition}
\label{def:twoopt}
    Let $\twoopt$ denote the total growth of active sets that contain unsatisfied vertices from $\opti$ as well as vertices outside of it. Specifically, 
    $$\twoopt = \sum_{\substack{S\subseteq \V \\ \unsatisfied(S) \cap \opti \neq \emptyset \\ \unsatisfiedS \nsubseteq \opti}} \yys.$$
    Note that according to Line~\ref{line:define_YS}, this is equivalent to the following:
    $$\twoopt = \sum_{\substack{S\subseteq \V \\ S \cap \opti \neq \emptyset \\ S \nsubseteq \opti}} \YS.$$
\end{definition}

First, we establish a relation between the total assignment to vertices in $\opti$ under $\rr$ and $\rstar$ using $\twoopt$. 
Then, we use this relation to bound $\twoopt$ in Lemma~\ref{lm:bound_twoopt}, which plays a key role in proving that $\opti$ contains an autarkic pair.

\begin{lemma}
\label{lm:rr_to_rstar}
    We can bound the sum of $\rstar$ values for vertices in $\opti$ in terms of their $\rr$ values and $\twoopt$ as follows:
    $$\sum_{v \in \opti}\rstar_v \le \sum_{v \in \opti}\rr_v - \dfrac{\twoopt}{2}.$$
\end{lemma}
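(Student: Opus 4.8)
The plan is to compare the two assignments $\rstar$ (the exclusive assignment of Boosted Execution, Definition~\ref{def:rstar}) and $\rr$ (the prefix-time assignment of Boosted Execution, Definition~\ref{def:rr}) vertex-wise on $\opti$, and to show that the \emph{deficit} $\rr_v-\rstar_v$ accumulated over all $v\in\opti$ is at least half of $\twoopt$. Both assignments are driven by exactly the same active sets in Boosted Execution and both assign growth only while an active set $S$ is in its base phase, only to $\priorityset(\BaseSet(S,\currenttime))$. The sole difference is that $\rr$ gives the full unit of growth to \emph{each} vertex of $\priorityset(\BaseSet(S,\currenttime))$ while $\rstar$ splits that unit equally, giving $1/|\priorityset(\BaseSet(S,\currenttime))|$ to each. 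Hence for a fixed active set $S$ contributing growth at a moment $\currenttime$ with $k=|\priorityset(\BaseSet(S,\currenttime))|$ vertices receiving assignment, the total $\rr$-mass handed out at that instant to vertices of $\opti$ is at most $1$ (since at most one vertex of $\opti$ lies in $\priorityset(\BaseSet(S,\currenttime))$ by Lemma~\ref{lm:opti_cap_priorityset}), whereas the total $\rstar$-mass to that same vertex of $\opti$ is $1/k$. So the per-instant deficit on $\opti$ equals $1-1/k\ge \tfrac12$ precisely when $k\ge 2$.

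The key step is therefore to identify which instants have $k\ge2$: these are exactly the moments when the active set $S$ has at least two distinct representatives, i.e.\ $\priorityset(\BaseSet(S,\currenttime))$ meets at least two connected components of $\OPT$. The claim to prove is that whenever $S$ is an active set of Boosted Execution at moment $\currenttime$ with $\unsatisfied(S)\cap\opti\neq\emptyset$ and $\unsatisfied(S)\not\subseteq\opti$ — i.e.\ $S$ is one of the sets counted in $\twoopt$ (Definition~\ref{def:twoopt}, using that $\yys$ of such a set is accumulated into some $\YS$ with $S$-component straddling $\opti$) — then $|\priorityset(\BaseSet(S,\currenttime))|\ge2$. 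Indeed, if $v\in\unsatisfied(S)\cap\opti$ then $\pairv\notin S$, so $v$ is in the base phase ($\tplus_v>\currenttime$), hence $v\in\BaseSet(S,\currenttime)$, and $\priorityset$ contains the top-priority vertex of $S\cap\opti$, which is a vertex of $\opti$ in the base phase — one representative. Similarly, taking $u\in\unsatisfied(S)\setminus\opti$ (which exists since $\unsatisfied(S)\not\subseteq\opti$), $u$ lies in the base phase and in a different component $\optcom(u)\neq\opti$, contributing a second, distinct representative. So $k\ge2$ at every such instant, giving a per-instant deficit of at least $\tfrac12$.

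I would then integrate over time: the total $\rr$-mass received by $\opti$ equals $\sum_{v\in\opti}\rr_v$, the total $\rstar$-mass equals $\sum_{v\in\opti}\rstar_v$, and by the above the difference $\sum_{v\in\opti}\rr_v-\sum_{v\in\opti}\rstar_v$ is at least $\tfrac12$ times the total growth of all active sets whose unsatisfied part straddles $\opti$ and which are in their base phase at that moment — and since an active set counted in $\twoopt$ contains an unsatisfied vertex of $\opti$, which forces it to be in the base phase for its entire counted duration, this lower bound is exactly $\tfrac12\twoopt$. (One should be careful to phrase this instant-by-instant, as in the proof of Lemma~\ref{lm:monotonic-forest-twice-growth}, since active sets at a fixed moment are disjoint so there is no double counting, and Corollary~\ref{cor:sum-rstar-yyb}/Lemma~\ref{lm:sum-rstar-y-base}-style bookkeeping handles the accumulation.) Rearranging gives $\sum_{v\in\opti}\rstar_v\le\sum_{v\in\opti}\rr_v-\tfrac12\twoopt$, as required.

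The main obstacle I anticipate is the bookkeeping in the last paragraph: making the integration argument fully rigorous requires matching, at each moment, the active sets contributing to $\twoopt$ with the instantaneous deficits, and verifying that a set counted in $\twoopt$ is indeed in its base phase throughout the window in which its growth $\yys$ is accrued into $\YS$ — this uses that an unsatisfied vertex of $\opti$ inside $S$ has $\tplus_v>\currenttime$, so $S$ cannot yet have left the base phase. The vertex-wise comparison and the $k\ge2$ argument are routine once the relevant definitions are lined up; the care lies in not conflating "the set $S$ straddles $\opti$" with "$\unsatisfied(S)$ straddles $\opti$" and in correctly relating $\twoopt$ (defined via $\YS$ in Definition~\ref{def:twoopt}) back to the growth of active sets via Line~\ref{line:define_YS}.
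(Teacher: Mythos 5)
Your proposal is correct and follows essentially the same approach as the paper's proof. The paper organizes the argument by partitioning all subsets into the family $\mathcal{S}_1$ of sets counted in $\twoopt$ (those with $\unsatisfied(S) \cap \opti \neq \emptyset$ and $\unsatisfied(S) \nsubseteq \opti$) and the complementary family $\mathcal{S}_2$, then establishes exactly your two key facts for $S \in \mathcal{S}_1$ — namely $\sum_{v \in \opti}\rr_{S,v} = \yys$ (the $\rr$-mass to $\opti$ is the full growth) and $\sum_{v \in \opti}\rstar_{S,v} \le \tfrac12 \yys$ (since $|\priorityset(\BaseSet(S,\currenttime))| \ge 2$) — and then carries out the same telescoping comparison; your instant-by-instant phrasing of the deficit $1 - 1/k \ge \tfrac12$ is an equivalent presentation of the identical idea.
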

\begin{proof}
    First, we define two families of subsets of vertices:
    \begin{align*}
        \mathcal{S}_1 &= \{S \subseteq \V \mid \unsatisfiedS \cap \opti \neq \emptyset, \unsatisfiedS \nsubseteq \opti\}, \\
        \mathcal{S}_2 &= \{S \subseteq \V \mid \unsatisfiedS \cap \opti = \emptyset \text{ or } \unsatisfiedS \subseteq \opti\}.
    \end{align*}
    Note that
    \begin{align}
    \label{eq:family_S_1_S_2}
        \mathcal{S}_1 \cap \mathcal{S}_2 = \emptyset \quad \text{and} \quad \mathcal{S}_1 \cup \mathcal{S}_2 = 2^{\V}.
    \end{align}
    Intuitively, we partition all sets into those that contribute to $\twoopt$ (the sets in $\mathcal{S}_1$) and those that do not (the sets in $\mathcal{S}_2$). 
    For sets in $\mathcal{S}_1$, their growth is fully assigned to the $\rr$ values of vertices in $\opti$. 
    However, due to the structure of $\rstar$ assignments, each such set contributes at most half of that growth to the $\rstar$ values of these vertices. 
    This gap between the full assignment in $\rr$ and the partial assignment in $\rstar$ is what yields the factor of $\twoopt / 2$ in the inequality.

    For any set $S \in \mathcal{S}_1$ active at time $\currenttime$ during Boosted Execution (i.e., $S \in \Aa_{\currenttime}$), we have $\unsatisfiedS \cap \opti \neq \emptyset$. Therefore, there exists a vertex $v \in S \cap \opti$ with $\pairv \notin S$ (by Definition~\ref{def:unsatisfied}).  
    In Legacy Execution, vertex $v$ and $\pairv$ reach each other at time $\tplus_v$ (see Definition~\ref{def:legacy_execution}). Moreover, by Corollary~\ref{cor:localsearch_activeset_refinement}, they are in the same connected component at time $\tplus_v$ in Boosted Execution.  
    Since in Boosted Execution they are not yet connected at time $\currenttime$, but are connected by time $\tplus_v$, it follows that $\currenttime < \tplus_v$.
    This implies $v \in \BaseSet(S, \currenttime)$ (see Definition~\ref{def:base-phase}), and hence $\BaseSet(S, \currenttime) \cap \opti \neq \emptyset$.
    Applying Lemma~\ref{lm:opti_cap_priorityset} then gives:  
    \begin{align}
    \label{eq:S_1_contain_opti}
        |\priorityset(\BaseSet(S, \currenttime))\cap \opti| = 1.
    \end{align}
    By Definition~\ref{def:rr}, the growth from $S$ is assigned entirely to exactly one vertex in $\opti$, meaning:
    \begin{align}
    \label{eq:r_s_v_y_s_for_S_1}
        \quad \sum_{v \in \opti} \rr_{S, v} = \yys \quad \text{ for all } S \in \mathcal{S}_1.
    \end{align}
    
    Similarly, for $S \in \mathcal{S}_1$, since $\unsatisfiedS \nsubseteq \opti$, there exists $v \in S \setminus \opti$ with $\pairv \notin S$. Using the same reasoning, $v \in \BaseSet(S, \currenttime)$, and hence $\BaseSet(S, \currenttime) \setminus \opti \neq \emptyset$. 
    Consequently, by Lemma~\ref{lm:opti_cap_priorityset}, we obtain that $|\priorityset(\BaseSet(S, \currenttime))\setminus \opti| \ge 1$.
    Combining this with Equation~\ref{eq:S_1_contain_opti}, we conclude that
    $$|\priorityset(\BaseSet(S, \currenttime))| \ge 2.$$
    By Definition~\ref{def:rstar}, the growth from such sets is distributed across multiple components, with vertices in $\opti$ collectively receiving a fraction of the growth equal to $1/|\priorityset(\BaseSet(S, \currenttime))|$. Therefore:
    \begin{align}
    \label{eq:rstar_s_v_y_s_for_S_1}
          \sum_{v \in \opti} \rstar_{S, v} \leq \dfrac{1}{2}\yys \quad \text{ for all } S \in \mathcal{S}_1.
    \end{align}

    We can now complete the proof:
    \begin{align*}
        \sum_{v \in \opti}\rr_v - \dfrac{\twoopt}{2} &= \sum_{v \in \opti}\sum_{S \subseteq \V} \rr_{S, v} - \dfrac{1}{2} \sum_{S\in \mathcal{S}_1} \yys \tag{Definitions~\ref{def:assignment} and~\ref{def:twoopt}} \\
        &= \sum_{S\in \mathcal{S}_1} \left( \sum_{v \in \opti}\rr_{S,v} - \dfrac{1}{2}\yys\right) + \sum_{S\in \mathcal{S}_2} \sum_{v \in \opti}\rr_{S,v} \tag{Equation~\ref{eq:family_S_1_S_2}} \\
        &= \sum_{S\in \mathcal{S}_1} \left( \yys - \dfrac{1}{2}\yys\right) + \sum_{S\in \mathcal{S}_2} \sum_{v \in \opti}\rr_{S,v} \tag{Equation~\ref{eq:r_s_v_y_s_for_S_1}} \\
        &\ge \sum_{S\in \mathcal{S}_1} \dfrac{1}{2}\yys + \sum_{S\in \mathcal{S}_2} \sum_{v \in \opti}\rstar_{S,v} \tag{Lemma~\ref{lm:rstar-smaller-rr}} \\
        &\ge \sum_{S\in \mathcal{S}_1} \sum_{v \in \opti}\rstar_{S,v} + \sum_{S\in \mathcal{S}_2} \sum_{v \in \opti}\rstar_{S,v} \tag{Equation~\ref{eq:rstar_s_v_y_s_for_S_1}}\\
        &= \sum_{v \in \opti}\rstar_v. \tag{Equation~\ref{eq:family_S_1_S_2}}
    \end{align*}
\end{proof}

\begin{lemma}
\label{lm:bound_twoopt}
    We can bound the value of $\twoopt$ as follows:
    $$\twoopt \le 4 \cone \cc(\otree).$$
\end{lemma}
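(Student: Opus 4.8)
The plan is to mimic the structure of Lemma~\ref{lm:extra_coloring}, but now working with the refined assignment $\rstar$ and the quantity $\twoopt$ instead of $\extraboostedotree$. The starting point is that $\opti \in \Btwo \subseteq \B$, so by Definition~\ref{def:classify_A_B} we have the lower bound $\sum_{v\in\opti}\rstar_v > (1-\cone)\cc(\otree)$. The goal is to squeeze $\twoopt/2$ between this lower bound and an upper bound on $\sum_{v\in\opti}\rstar_v$ that is not much larger than $\cc(\otree)$.

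First I would chain together the tools already proved about $\rstar$ on a single optimal component. By Lemma~\ref{lm:rr_to_rstar} we have $\sum_{v\in\opti}\rstar_v \le \sum_{v\in\opti}\rr_v - \twoopt/2$, and by Lemma~\ref{lm:bound_r_with_coloring_opti} we have $\sum_{v\in\opti}\rr_v \le (\rr_{\vone}-\rr_{\vtwo}) + \cc(\otree) - \extraboostedotree/2 \le (\rr_{\vone}-\rr_{\vtwo}) + \cc(\otree)$, dropping the nonnegative $\extraboostedotree/2$ term. Then Lemma~\ref{lm:bound_rone-rtwo} gives $\rr_{\vone}-\rr_{\vtwo}\le \cone\cc(\otree)$. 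Combining these,
\[
\sum_{v\in\opti}\rstar_v \;\le\; (1+\cone)\,\cc(\otree) \;-\; \frac{\twoopt}{2}.
\]
Putting this together with the $\B$-lower bound $(1-\cone)\cc(\otree) < \sum_{v\in\opti}\rstar_v$ yields $(1-\cone)\cc(\otree) < (1+\cone)\cc(\otree) - \twoopt/2$, i.e. $\twoopt/2 < 2\cone\cc(\otree)$, hence $\twoopt \le 4\cone\cc(\otree)$ (strict inequality is fine; $\le$ is weaker).

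I do not anticipate a genuine obstacle here — the lemma is essentially a corollary of Lemmas~\ref{lm:rr_to_rstar}, \ref{lm:bound_r_with_coloring_opti}, and \ref{lm:bound_rone-rtwo}, exactly paralleling how Lemma~\ref{lm:extra_coloring} was derived from Lemma~\ref{lm:bound_r_with_coloring_opti} and Lemma~\ref{lm:bound_rone-rtwo}. The only point requiring a moment's care is bookkeeping: making sure one uses the $\rstar$-based definition of class $\B$ (Definition~\ref{def:classify_A_B}, which is phrased in terms of $\sum_{v\in\opti}\rstar_v$, not $\rr$) for the lower bound, so the two sides of the final comparison are about the same quantity $\sum_{v\in\opti}\rstar_v$. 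An alternative route — bounding $\twoopt$ directly via the $\rr$-based Corollary~\ref{corollary:B_bound_r} together with Lemma~\ref{lm:rr_to_rstar} and Lemma~\ref{lm:opt-lowerbound_rstar} — also works, but the route above is the cleanest and most self-contained.
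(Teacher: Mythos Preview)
Your proposal is correct and follows essentially the same route as the paper: start from the $\B$-membership lower bound $(1-\cone)\cc(\otree) < \sum_{v\in\opti}\rstar_v$, chain Lemma~\ref{lm:rr_to_rstar}, Lemma~\ref{lm:bound_r_with_coloring_opti} (dropping the nonnegative $\extraboostedotree/2$), and Lemma~\ref{lm:bound_rone-rtwo} to get the upper bound $(1+\cone)\cc(\otree)-\twoopt/2$, and rearrange. Your observation that this parallels the derivation of Lemma~\ref{lm:extra_coloring} is exactly right.
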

\begin{proof}
    The proof follows from the previous lemmas as outlined below:
    \begin{align*}
        (1-\cone) \cc(\otree) &< \sum_{v \in \opti} \rstar_v \tag{$\opti \in \B$, Definition~\ref{def:classify_A_B}}\\
        &\le \sum_{v \in \opti}\rr_v  - \dfrac{\twoopt}{2}\tag{Lemma~\ref{lm:rr_to_rstar}} \\
        &\le (\rr_{\vone} - \rr_{\vtwo}) + \cc(T_{\opti})  - \dfrac{\twoopt}{2} \tag{Lemma~\ref{lm:bound_r_with_coloring_opti}, $\extraboostedotree\ge 0$}\\
        &\le (1 + \cone) \cc(\otree)  - \dfrac{\twoopt}{2}\text{.} \tag{Lemma~\ref{lm:bound_rone-rtwo}} 
    \end{align*}
    Reordering the terms results in
    \begin{align*}
        \twoopt \le 4\cone \cc(\otree).
    \end{align*}
\end{proof}

\begin{definition}
\label{def:Sone_Stwo}
    Let $\currenttime_{\vthree}$ denote the moment immediately after $\rr_{\vthree}$, meaning all merges and deactivations at time $\rr_{\vthree}$ have already occurred in Boosted Execution.  
    Let $\sone', \stwo' \in \Aa_{\currenttime_{\vthree}}$ be the active sets containing $\vone$ and $\vtwo$, respectively, at this moment.  
    We define subsets $\sone, \stwo \subseteq \V$ as the sets of unsatisfied vertices in $\opti$ that lie in $\sone'$ and $\stwo'$, respectively.  
    That is,  
    \begin{align*}
        \sone &= \unsatisfied(\sone') \cap \opti, \\
        \stwo &= \unsatisfied(\stwo') \cap \opti.
    \end{align*}
\end{definition}

We use the notation $\sone$, $\stwo$, $\sone'$, and $\stwo'$, and first show that $\sone$ and $\stwo$ form pairs.  
Subsequently, in Lemma~\ref{lm:bound_YS_candidate}, we provide a lower bound for $\Y_{\sone} + \Y_{\stwo}$, and in Lemma~\ref{lm:bound_dis_candidate}, we give an upper bound for $\maxdis(\sone)$.  
Finally, using the value of $\cfive$ from Definition~\ref{def:ctwo_cfive} and the autarkic condition in Line~\ref{line:autarkic_condition} of \candidate, we conclude in Lemma~\ref{lm:Btwo_has_candidate} that $\sone$ and $\stwo$ are selected as autarkic pairs.
 
\begin{lemma}
\label{lm:pair_sone_stwo}
    Sets $\sone$ and $\stwo$ are pairs of each other, meaning
    $$
        \pair(\sone) = \stwo.
    $$
\end{lemma}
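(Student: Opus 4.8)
The plan is to show that the vertices of $\sone$ and $\stwo$ are exactly complementary under the pairing map, relying on the structure of Boosted Execution at the moment $\currenttime_{\vthree}$ and the fact that $\vone$ and $\vtwo$ are still separated and still unsatisfied at that moment. First I would recall that $\vone$ and $\vtwo$ are a demand pair lying in $\opti$, hence actively connected in Legacy Execution and therefore also in Boosted Execution (Corollary~\ref{cor:localsearch_activeset_refinement}). Since $\rr_{\vthree} \le \ctwo\cc(\opti)$ by Lemma~\ref{lm:bound_rr_vthree}, whereas $\rr_{\vtwo} > \ctwo\cc(\opti)$ by Lemma~\ref{lm:rr_vtwo_lower_bound} and $\rr_{\vone}=\rmax(\opti)\ge\rr_{\vtwo}$ by Lemma~\ref{lm:vone_has_rmax}, Lemma~\ref{lm:connect_after_min_r} tells us that $\vone$ and $\vtwo$ have not yet connected to each other by time $\rr_{\vthree}$, so the active sets $\sone'$ and $\stwo'$ from Definition~\ref{def:Sone_Stwo} are genuinely distinct, and both $\vone,\vtwo$ are still unsatisfied at $\currenttime_{\vthree}$, hence $\vone \in \sone$ and $\vtwo \in \stwo$.

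The core of the argument is: for any vertex $u \in \opti$ that is unsatisfied at time $\currenttime_{\vthree}$, exactly one of $u, \pair_u$ lies in $\sone' \cup \stwo'$, and in fact $u \in \sone' \iff \pair_u \in \stwo'$. To see this, note that $u$ and $\pair_u$ are not yet connected at $\currenttime_{\vthree}$ (since $u$ is unsatisfied then), so they lie in two distinct active sets $S_u, S_{\pair_u}$. I would argue both of these active sets must contain either $\vone$ or $\vtwo$: by Definition~\ref{def:vthree}, $\vthree$ is the last vertex of $\opti\setminus\{\vone,\vtwo\}$ that, while unsatisfied, first becomes connected to one of $\vone,\vtwo$ at moment $\iota_{\vthree} \le \rr_{\vthree} < \currenttime_{\vthree}$. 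Hence any vertex of $\opti\setminus\{\vone,\vtwo\}$ still unsatisfied at $\currenttime_{\vthree}$ must already be connected to $\vone$ or to $\vtwo$ by that moment (otherwise it would contradict the maximality defining $\vthree$, or it would mean it never connects to either while unsatisfied — but $u$ is in $\opti$ so it must eventually connect to its pair which is also in $\opti$, and I'd trace through that this forces a connection to $\vone$ or $\vtwo$ through $\otree$). Thus $S_u$ is either $\sone'$ or $\stwo'$, and likewise $S_{\pair_u}$; since $S_u \ne S_{\pair_u}$, one is $\sone'$ and the other $\stwo'$. This gives the bijection $\unsatisfied(\sone')\cap\opti \leftrightarrow \unsatisfied(\stwo')\cap\opti$ via the pairing map, which is exactly $\pair(\sone)=\stwo$.

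I should also handle the boundary subtlety that $\sone'$ or $\stwo'$ might contain unsatisfied vertices of $\opti$ whose pair is \emph{also} in the same set — but if $u,\pair_u$ were both in $\sone'$ then $u$ would be satisfied at $\currenttime_{\vthree}$, contradicting $u\in\unsatisfied(\sone')$, so this cannot happen; this is why intersecting with $\unsatisfied(\cdot)$ rather than taking the whole active set is the right choice. The main obstacle I anticipate is the careful case analysis establishing that every $\opti$-vertex still unsatisfied at $\currenttime_{\vthree}$ must already sit in one of the two specific active sets $\sone'$ or $\stwo'$ — this requires correctly invoking the definition of $\vthree$ (in particular the ``unsatisfied just before $\iota_u$'' clause and the argmax), and dealing with the dummy-vertex convention when no genuine $\vthree$ exists. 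Once that containment claim is in hand, the pairing bijection and hence $\pair(\sone)=\stwo$ follows immediately.
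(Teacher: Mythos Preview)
Your approach is essentially the same as the paper's. The paper proceeds more directly: it takes $u \in \sone$, observes $\pair_u \notin \sone'$ (since $u$ is unsatisfied), argues $\pair_u$ must eventually reach $\vone$ (via $u$), invokes Definition~\ref{def:vthree} to conclude this first connection cannot happen after $\currenttime_{\vthree}$, and since $\pair_u \notin \sone'$ deduces $\pair_u \in \stwo'$, hence $\pair_u \in \stwo$; the reverse inclusion is by symmetry. Your extra setup verifying $\sone' \ne \stwo'$ and $\vone \in \sone,\ \vtwo \in \stwo$ is fine but not used in the paper's proof, and your symmetric ``bijection on all unsatisfied $\opti$-vertices'' framing is equivalent to the paper's two one-sided inclusions.

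One point to tighten: you assert $\iota_{\vthree} \le \rr_{\vthree}$, but Lemma~\ref{lm:connect_after_min_r} together with $\rr_{\vone},\rr_{\vtwo} > \rr_{\vthree}$ gives the \emph{opposite} direction $\iota_{\vthree} \ge \rr_{\vthree}$, so that inequality needs its own justification rather than being stated in passing. The paper sidesteps this by not writing the inequality explicitly and instead appealing directly to the maximality in Definition~\ref{def:vthree}. Also, your ``I'd trace through that this forces a connection to $\vone$ or $\vtwo$ through $\otree$'' is unnecessarily indirect: since $u \in \sone'$ already contains $\vone$, and $\pair_u$ eventually connects to $u$, the connection of $\pair_u$ to $\vone$ is immediate---no detour through $\otree$ is needed.
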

\begin{proof}
    Recall the moment $\currenttime_{\vthree}$ from Definition~\ref{def:Sone_Stwo}.  
    Let $u \in \sone$. We claim that $\pair_u \in \stwo$ at this moment.

    Since $u$ is unsatisfied at time $\currenttime_{\vthree}$, $\pair_u$ cannot belong to $\sone'$.  
    On the other hand, $\pair_u$ must eventually be connected to $u$ and thus to $\vone$ by the end of the algorithm.  
    By Definition~\ref{def:vthree}, $\pair_u$ cannot become connected to $\vone$ or $\vtwo$ for the first time after $\currenttime_{\vthree}$, as it is unsatisfied at this moment.  
    Therefore, $\pair_u$ must already be in $\stwo'$ at moment $\currenttime_{\vthree}$.

    Moreover, since $u$ is unsatisfied, so is $\pair_u$, meaning $\pair_u \in \unsatisfied(\stwo')$.  
    And since $u \in \sone \subseteq \opti$, we also have $\pair_u \in \opti$.  
    Hence, $\pair_u \in \unsatisfied(\stwo') \cap \opti = \stwo$.

    By symmetry, for any $u \in \stwo$, we similarly have $\pair_u \in \sone$.  
    This completes the proof.
\end{proof}

\begin{lemma}
\label{lm:bound_YS_candidate}
    There is a lower bound for the value that leads us to consider $S_1$ and $S_2$ as an autarkic pair as follows:
    $$\Y_{S_1} + \Y_{S_2} \ge 2\rr_{\vone} - (6\cone + 2\ctwo)\cc(\otree).$$
\end{lemma}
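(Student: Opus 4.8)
The goal is to lower bound $\Y_{\sone} + \Y_{\stwo}$, where by Line~\ref{line:define_YS} of \candidate{} this equals $\sum_{S:\unsatisfied(S)=\sone}\yys + \sum_{S:\unsatisfied(S)=\stwo}\yys$, i.e.\ the total growth of active sets in Boosted Execution whose unsatisfied-vertex set (intersected with $\opti$) is exactly $\sone$ or exactly $\stwo$. The plan is to relate this to $2\rr_{\vone}$ via the assignment $\rr$, losing only a bounded amount along the way. The starting point is to recall that $\rr_{\vone} = \rr_{\vtwo} + (\rr_{\vone} - \rr_{\vtwo})$ and that $\sum_{S: \vtwo\in S, \vone\notin S}\yys = \sum_{S:\vone\in S, \vtwo\notin S}\yys$ (as in the proof of Lemma~\ref{lm:bound_r_with_coloring_opti}, using that $\vone,\vtwo$ remain actively separated until they connect), combined with the bound $\rr_{\vtwo}\le \sum_{S:\vone\in S,\vtwo\notin S}\yys$ obtained from Lemma~\ref{lm:bound_rr_with_ys} with $S'=\{\vtwo\}$. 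This gives roughly $\rr_{\vone} \lesssim \sum_{S: \vone\in S,\vtwo\notin S}\yys + (\rr_{\vone}-\rr_{\vtwo})$, and symmetrically for $\vtwo$; adding the two yields $2\rr_{\vone}$ up to the additive slack $2(\rr_{\vone}-\rr_{\vtwo}) \le 2\cone\cc(\otree)$ by Lemma~\ref{lm:bound_rone-rtwo}.

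Next I would argue that the active sets $S$ with $\vone\in S$, $\vtwo\notin S$ whose growth occurs after time $\currenttime_{\vthree}$ (Definition~\ref{def:Sone_Stwo}) have $\unsatisfied(S)\cap\opti = \sone$, and symmetrically for $\vtwo$ and $\stwo$. The reason: after $\currenttime_{\vthree}$, by Definition~\ref{def:vthree} no vertex of $\opti\setminus\{\vone,\vtwo\}$ that is still unsatisfied can join $\sone'$ or $\stwo'$ for the first time, and by laminarity (Corollary~\ref{cor:active_sets_laminar}) the active set containing $\vone$ can only grow, so its unsatisfied $\opti$-vertices stay fixed at $\sone$ until $\vone$ and $\vtwo$ meet — while also possibly containing unsatisfied vertices outside $\opti$, which is where $\twoopt$ comes in. So the growth of sets $S\supseteq\{\vone\}$, $\vtwo\notin S$, after time $\currenttime_{\vthree}$, with $\unsatisfied(S)\subseteq\opti$, is counted in $\Y_{\sone}$; the portion with $\unsatisfied(S)\not\subseteq\opti$ is counted in $\twoopt$ (Definition~\ref{def:twoopt}). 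Thus $\sum_{S:\vone\in S,\vtwo\notin S}\yys \le \Y_{\sone} + \twoopt + (\text{growth before }\currenttime_{\vthree})$, and the "growth before $\currenttime_{\vthree}$" of sets containing $\vone$ but not $\vtwo$ is at most $\currenttime_{\vthree} = \rr_{\vthree}\le \ctwo\cc(\otree)$ by Lemma~\ref{lm:bound_rr_vthree}.

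Putting the pieces together: $2\rr_{\vone} \le \big(\Y_{\sone}+\twoopt+\rr_{\vthree}\big) + \big(\Y_{\stwo}+\twoopt+\rr_{\vthree}\big) + 2(\rr_{\vone}-\rr_{\vtwo})$, so $\Y_{\sone}+\Y_{\stwo} \ge 2\rr_{\vone} - 2\twoopt - 2\rr_{\vthree} - 2(\rr_{\vone}-\rr_{\vtwo})$. Applying Lemma~\ref{lm:bound_twoopt} ($\twoopt\le 4\cone\cc(\otree)$), Lemma~\ref{lm:bound_rr_vthree} ($\rr_{\vthree}\le\ctwo\cc(\otree)$, noting $\cc(\opti)=\cc(\otree)$), and Lemma~\ref{lm:bound_rone-rtwo} ($\rr_{\vone}-\rr_{\vtwo}\le\cone\cc(\otree)$) would give $\Y_{\sone}+\Y_{\stwo}\ge 2\rr_{\vone} - (8\cone + 2\ctwo + 2\cone)\cc(\otree)$; I will need to be careful to sharpen the $\twoopt$ accounting — since each of $\sone$ and $\stwo$ only picks up the portion of $\twoopt$ relevant to its own active set, and $\twoopt$ already sums over all such sets, the correct charge is $\twoopt$ total (not $2\twoopt$), and similarly re-examine whether the $\rr_{\vthree}$ term needs a factor of $2$ or $1$, and whether the $\rr_{\vtwo}\le\sum\yys$ step can absorb part of the slack. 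The bookkeeping to land exactly at the coefficient $6\cone + 2\ctwo$ is the main obstacle: it requires tracking which active sets are charged to $\Y_{\sone}$, $\Y_{\stwo}$, $\twoopt$, or the "before $\currenttime_{\vthree}$" bucket without double counting, and exploiting that the sets counted in $\sum_{S:\vone\in S,\vtwo\notin S}\yys$ and $\sum_{S:\vtwo\in S,\vone\notin S}\yys$ at the same moment are genuinely distinct (so the $\twoopt$ contributions they reference are at disjoint active sets), which should collapse one factor of $2$ and give the stated bound with $6\cone + 2\ctwo$.
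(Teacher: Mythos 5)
Your proposal is correct and follows essentially the same route as the paper's proof: both rest on the facts that $\vone$ and $\vtwo$ occupy distinct active sets until time $\rr_{\vtwo}$, that during $(\rr_{\vthree},\rr_{\vtwo})$ the unsatisfied $\opti$-vertices of those two sets are exactly $\sone$ and $\stwo$, and that the out-of-component portion is absorbed by $\twoopt$; the slack terms are then bounded by Lemmas~\ref{lm:bound_rone-rtwo}, \ref{lm:bound_rr_vthree}, and \ref{lm:bound_twoopt}. The bookkeeping concern you flag resolves exactly as you suspect—the two active sets at any moment are disjoint, so their $\twoopt$-contributions are disjoint and charged once total (not twice), while $\rr_{\vthree}$ does pick up a factor of $2$ (once per side), giving $2(\rr_{\vone}-\rr_{\vtwo}) + 2\rr_{\vthree} + \twoopt \le (2\cone + 2\ctwo + 4\cone)\cc(\otree) = (6\cone + 2\ctwo)\cc(\otree)$ as required.
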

\begin{proof}
    By Lemma~\ref{lm:connect_after_min_r}, the vertices $\vone$ and $\vtwo$ do not reach each other until time $\min(\rr_{\vone}, \rr_{\vtwo})$, which is equal to $\rr_{\vtwo}$ by Lemma~\ref{lm:vone_has_rmax}, in Boosted Execution.
    Moreover, since these vertices are pairs of each other, they are actively connected in Legacy Execution and, by Lemma~\ref{lm:large_fingerprint_superactive}, also in Boosted Execution.
    Therefore, from the beginning of Boosted Execution until time $\rr_{\vtwo}$, they are both unsatisfied and belong to different active sets, which implies
    $$\sum_{\substack{S\subseteq \V\\S\odot \{\vone, \vtwo\}}} \YS = \sum_{\substack{S\subseteq \V\\ \unsatisfiedS\odot \{\vone, \vtwo\}}} \yys \ge 2\rr_{\vtwo}.$$
    
    Furthermore, since $\rr_{\vthree}$ is the last time an unsatisfied vertex from $\opti$ reaches either $\vone$ or $\vtwo$, after time $\rr_{\vthree}$ no additional unsatisfied vertex from $\opti$ reaches them. 
    This means that for any time $\currenttime \in (\rr_{\vthree}, \rr_{\vtwo})$, if $S'_{\currenttime}$ is the active set containing $\vone$ at time $\currenttime$ and $S''_{\currenttime} = \unsatisfied(S'_{\currenttime}) \cap \opti$, then $S''_{\currenttime} \subseteq S_1$.
    Additionally, no vertex in $S_1$ or $S_2$ becomes satisfied during the interval $(\rr_{\vthree}, \rr_{\vtwo})$, because as mentioned earlier, $\vone$ and $\vtwo$ do not meet until time $\rr_{\vtwo}$, meaning $\sone$ and $\stwo$ do not meet either.
    By Lemma~\ref{lm:pair_sone_stwo}, the pair of every vertex in $\sone$ lies in $\stwo$ and vice versa, so they cannot become satisfied until they meet. 
    Therefore, $S_1 \subseteq S''_{\currenttime}$.
    We conclude that $S''_{\currenttime} = S_1$, which shows that throughout the interval $(\rr_{\vthree}, \rr_{\vtwo})$, the unsatisfied vertices of $\opti$ in the active set containing $\vone$ are exactly $\sone$.
    A symmetric argument holds for $\vtwo$ and $\stwo$.
    Thus,
    \begin{align}
    \label{eq:Y_S1_S2_with_to}
        \sum_{\substack{S\subseteq \V \\ S\cap \opti \in \{S_1, S_2\}}} \YS \ge 2 \rr_{\vtwo} - 2 \rr_{\vthree}.
    \end{align}
    
    Moreover, by Definition~\ref{def:twoopt}, the total amount of time during which a set $S$ containing unsatisfied vertices from $\opti$ also includes vertices from other connected components of the optimal solution is at most $\twoopt$. Therefore,
    \begin{align*}
        \Y_{S_1} + \Y_{S_2} &\ge \sum_{\substack{S\subseteq \V \\ S\cap \opti \in \{S_1, S_2\}}} \YS - \sum_{\substack{S\subseteq \V \\ S\cap \opti \in \{S_1, S_2\} \\ S\nsubseteq \opti}} \YS \\
        &\ge (2 \rr_{\vtwo} - 2 \rr_{\vthree}) - \sum_{\substack{S\subseteq \V \\ S\cap \opti \neq \emptyset \\ S\nsubseteq \opti}} \YS \tag{ Equation~\ref{eq:Y_S1_S2_with_to}} \\
        &\ge \left(2\rr_{\vone} - 2(\rr_{\vone} - \rr_{\vtwo}) - 2 \rr_{\vthree}\right) - \twoopt \tag{ Definition~\ref{def:twoopt}} \\
        &\ge 2\rr_{\vone} - 2\cone \cc(\opti) - 2\ctwo \cc(\opti) - 4\cone \cc(\opti) \tag{ Lemmas~\ref{lm:bound_rone-rtwo},~\ref{lm:bound_rr_vthree}, and~\ref{lm:bound_twoopt}}\\
        &= 2\rr_{\vone} - (6\cone + 2\ctwo)\cc(\opti).
    \end{align*}
\end{proof}

The next lemmas establish an upper bound on $\extralegacyotree$ in terms of the cost of $\otree$.
Here, $\extralegacy$ refers to $\extra$ as defined in Definition~\ref{def:uncolored_single_multi_colored}, with respect to Legacy Execution. 

\begin{lemma}
\label{lm:legacy_extra_coloring}
    The following inequality holds:
    $$
    \extralegacyotree \le 2\cone \cc(\otree).
    $$
\end{lemma}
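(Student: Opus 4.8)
The plan is to run essentially the same argument that proves Lemma~\ref{lm:extra_coloring}, but carried out in Legacy Execution and with the assignment $\rplus$ in place of $\rr$. The reason this gives the better constant $2\cone$ rather than $4\cone$ is that the upper bound analogous to Lemma~\ref{lm:bound_r_with_coloring_opti}, when written for $\rplus$ and Legacy Execution, does not pay the gap term $\rr_{\vone}-\rr_{\vtwo}$: instead of $\sum_{v\in\opti}\rr_v \le (\rr_{\vone}-\rr_{\vtwo}) + \cc(\otree) - \extraboostedotree/2$ we will get the clean inequality $\sum_{v\in\opti}\rplus_v \le \cc(\otree) - \extralegacyotree/2$. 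So the whole proof is to sandwich $\sum_{v\in\opti}\rplus_v$ between a lower bound coming from $\opti\in\B$ and this upper bound.

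For the lower bound, since $\Btwo\subseteq\B$, Definition~\ref{def:classify_A_B} gives $\sum_{v\in\opti}\rstar_v > (1-\cone)\cc(\otree)$; applying Lemma~\ref{lm:rstar-smaller-rr} and then Lemma~\ref{lm:rr-le-rplus} (equivalently, Corollary~\ref{corollary:B_bound_r} followed by Lemma~\ref{lm:rr-le-rplus}) yields $\sum_{v\in\opti}\rplus_v > (1-\cone)\cc(\otree)$. For the upper bound, I re-derive Lemma~\ref{lm:upper_bound_sum_rplus} but stop one step early so as to keep the $\extra$ term. By Lemma~\ref{lm:rplus_assigning_condition}, $\rplus_{S,v}>0$ for some $v\in\opti$ forces $S\odot\opti$, and by Lemma~\ref{lm:opti_cap_priorityset} each active set of Legacy Execution assigns its growth at any moment to at most one vertex of $\opti$, so $\sum_{v\in\opti}\rplus_{S,v}\le\yplus_S$. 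Summing over all sets gives $\sum_{v\in\opti}\rplus_v \le \sum_{S\,\odot\,\opti}\yplus_S = \sum_{S\,\odot\,\otree}\yplus_S$, where the equality uses $V(\otree)=\opti$. Now apply Lemma~\ref{lm:ys_le_T_extra} to the forest $\otree$ in Legacy Execution to obtain $\sum_{S\,\odot\,\otree}\yplus_S \le \cc(\otree) - \extralegacyotree/2$. Combining the two bounds yields $(1-\cone)\cc(\otree) < \cc(\otree) - \extralegacyotree/2$, which rearranges to $\extralegacyotree < 2\cone\cc(\otree)$, and in particular $\extralegacyotree \le 2\cone\cc(\otree)$.

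The step needing the most care is re-establishing $\sum_{v\in\opti}\rplus_v \le \sum_{S\,\odot\,\otree}\yplus_S$ with the inequality in the correct direction while feeding the resulting partial sum into Lemma~\ref{lm:ys_le_T_extra} instead of Lemma~\ref{lm:tree-bound-by-activeset-num-cuts}: this is just the bookkeeping in the proof of Lemma~\ref{lm:upper_bound_sum_rplus} truncated before the final estimate, so it amounts to checking that the two cited facts about $\rplus$ (that it is supported on sets cutting $\opti$ and that it charges at most one vertex of $\opti$ per active set per moment) are exactly what is needed. Everything else is a one-line rearrangement, so I expect no genuine obstacle beyond this verification.
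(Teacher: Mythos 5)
Your proof is correct and follows essentially the same route as the paper's: the paper also establishes $\sum_{v\in\opti}\rplus_v \le \sum_{S\odot\opti}\yplus_S$ via Lemma~\ref{lm:opti_cap_priorityset} and Lemma~\ref{lm:rplus_assigning_condition}, sandwiches that sum between the lower bound from Corollary~\ref{corollary:B_bound_r} together with Lemma~\ref{lm:rr-le-rplus} and the upper bound from Lemma~\ref{lm:ys_le_T_extra} applied to $\otree$, and rearranges.
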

\begin{proof}
    First, note that for any moment $\currenttime$ and any active set $S \subseteq \V$ at that moment in Legacy Execution, we have  
    $|\priorityset(\BaseSet(S, \currenttime)) \cap \opti| \leq 1$
    by Lemma~\ref{lm:opti_cap_priorityset}.
    This implies that the value of $\yplus_S$ can contribute to $\rplus_v$ for at most one vertex $v \in \opti$.
    Furthermore, by Lemma~\ref{lm:rplus_assigning_condition}, $\yplus_S$ can be assigned to $\rplus_v$ for some vertex $v \in \opti$ only if $v \in S$ and $\pairv \notin S$. In other words, $S \odot \opti$. Therefore:
    \begin{align}
    \label{eq:rplus_opti_yplus_opti}
        \sum_{v\in \opti} \rplus_v \le \sum_{\substack{S\subseteq \V \\ S\odot \opti}} \yplus_S
    \end{align}

    Now we have:
    \begin{align*}
        (1 - \cone) \cc(\otree) &< \sum_{v \in \opti} \rr_v \tag{Corollary~\ref{corollary:B_bound_r}} \\
        &\leq \sum_{v \in \opti} \rplus_v \tag{Lemma~\ref{lm:rr-le-rplus}} \\
        &\leq \sum_{\substack{S \subseteq \V \\ S \odot \opti}} \yplus_S \tag{Equation~\ref{eq:rplus_opti_yplus_opti}} \\
        &\leq \cc(\otree) - \dfrac{\extralegacyotree}{2}, \tag{Lemma~\ref{lm:ys_le_T_extra}}
    \end{align*}
    Rearranging terms yields:
    $$
    \extralegacyotree \leq 2\cone \cc(\otree),
    $$
    as desired.
\end{proof}

\begin{lemma}
\label{lm:bound_dis_candidate}
    We can bound $\maxdis(\sone)$ as follows:
    $$\maxdis(\sone) \le 2\cone\cc(\otree) + 2\rr_{\vone}.$$
\end{lemma}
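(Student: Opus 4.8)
The goal is to bound $\maxdis(\sone) = \max_{v \in \sone} \dis(v, \pairv)$. By Lemma~\ref{lm:pair_sone_stwo}, for each $v \in \sone$ its pair $\pairv$ lies in $\stwo$, so I need to show that every such $v$ and $\pairv$ are at distance at most $2\cone\cc(\otree) + 2\rr_{\vone}$ in $\G$. The natural upper bound on $\dis(v,\pairv)$ is the cost of a path between them inside the optimal tree $\otree$ (both $v,\pairv \in \opti$), so it suffices to bound the length of the $\otree$-path connecting $v$ and $\pairv$. I would split this path cost into the portion colored by active sets in Boosted Execution and the uncolored (plus multi-colored) portion captured by $\extraboostedotree$.

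First I would set up the coloring bound. Fix $v \in \sone$ with pair $\pairv \in \stwo$, and let $\textit{PT}_v$ be the $\otree$-path between them, whose cost is at least $\dis(v,\pairv)$. Since $v$ and $\pairv$ both lie in $\opti$, $\textit{PT}_v$ is a subgraph of $\otree$, so Lemma~\ref{lm:extra_subgraph} gives $\extra^+(\textit{PT}_v) \le \extraboostedotree$, and Lemma~\ref{lm:coloring_exactly_one_edge} (applied with the tree $\textit{PT}_v$, whose leaf set is $\{v, \pairv\}$) gives $\extra^+(\textit{PT}_v) \ge \cc(\textit{PT}_v) - \sum_{S \odot \{v,\pairv\}} \yys$. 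Combining, $\cc(\textit{PT}_v) \le \sum_{S \odot \{v,\pairv\}} \yys + \extraboostedotree$. The key observation is that $v$ and $\pairv$ are actively connected in Boosted Execution — they are pairs, hence actively connected in Legacy Execution, and Lemma~\ref{lm:large_fingerprint_superactive} lifts this to Boosted Execution — and moreover, by the argument in Lemma~\ref{lm:bound_YS_candidate} (specifically the analysis around Definition~\ref{def:Sone_Stwo}), $v$ stays in $\sone' \ni \vone$ and $\pairv$ stays in $\stwo' \ni \vtwo$, so they do not connect before time $\rr_{\vtwo} = \min(\rr_{\vone},\rr_{\vtwo})$ (using Lemmas~\ref{lm:connect_after_min_r} and~\ref{lm:vone_has_rmax}). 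Actually I should be a bit careful: I only need that before they connect, there are at most two active sets separating them, each growing for at most the connection time. Since $v$ and $\pairv$ are in different active sets until they connect, and each active set cutting $\{v,\pairv\}$ contributes growth for at most the connection time, $\sum_{S \odot \{v,\pairv\}} \yys \le 2 \cdot (\text{connection time})$. The connection time is at most $\rr_{\vone}$ (at time $\rr_{\vone}$, since $\rr_{\vone} \ge \rr_{\vtwo}$ by Lemma~\ref{lm:vone_has_rmax} and $\vone,\vtwo$ connect at $\rr_{\vtwo}$... hmm, actually I want an upper bound on when $v,\pairv$ connect). The cleanest route: $v \in \sone$ means $v$ is unsatisfied at time $\currenttime_{\vthree}$ and remains in $\sone'$, while $\pairv \in \stwo'$; they connect when $\sone'$-descendant meets $\stwo'$-descendant, which by Lemma~\ref{lm:connect_after_min_r} applied to $\vone,\vtwo$ — no wait. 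I would instead argue directly that the active set containing $v$ and the active set containing $\pairv$ each have $\ys$ value at most $\rr_{\vone}$ over the relevant period, giving $\sum_{S \odot \{v,\pairv\}} \yys \le 2\rr_{\vone}$; this uses that $v \in \sone \subseteq \opti$ so $\rr_v \le \rr_{\vone} = \rmax(\opti)$ (Lemma~\ref{lm:vone_has_rmax}), plus the actively-connected structure.

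Then I would combine: $\dis(v,\pairv) \le \cc(\textit{PT}_v) \le 2\rr_{\vone} + \extraboostedotree \le 2\rr_{\vone} + 4\cone\cc(\otree)$ by Lemma~\ref{lm:extra_coloring}. This gives $\maxdis(\sone) \le 2\rr_{\vone} + 4\cone\cc(\otree)$, which is weaker than the claimed $2\cone\cc(\otree) + 2\rr_{\vone}$. So I actually need the sharper bound $\extralegacyotree \le 2\cone\cc(\otree)$ from Lemma~\ref{lm:legacy_extra_coloring} rather than the Boosted version, meaning I should run the coloring argument in \emph{Legacy Execution} instead: the $\otree$-path's uncolored/multi-colored portion in Legacy Execution is at most $\extralegacyotree \le 2\cone\cc(\otree)$, and $\sum_{S \odot \{v,\pairv\}} \yplus_S \le 2\rr_{\vone}$ because in Legacy Execution $v$ and $\pairv$ connect at time $\tplus_v \le \rplus_{\vone} = \rr_{\vone}$ (using Corollary~\ref{cor:rplus_vone_rr_vone} and the fact from Lemma~\ref{lm:vone_has_rmax}'s proof that $\tplus_v \le \rplus_v$... more precisely $\tplus_v \le \tplus_{\vone} \le \rplus_{\vone}$). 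This yields exactly $\dis(v,\pairv) \le 2\rr_{\vone} + 2\cone\cc(\otree)$.

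The main obstacle I anticipate is the bookkeeping to show $\sum_{S \odot \{v,\pairv\}} \yplus_S \le 2\rr_{\vone}$ cleanly — i.e., verifying that in Legacy Execution each active set separating $v$ from $\pairv$ contributes total growth at most $\rr_{\vone}$, which requires knowing $v$ and $\pairv$ are both active (hence in active sets separating them) precisely until time $\tplus_v$, and that $\tplus_v \le \tplus_{\vone} = \tplus_{\vtwo} \le \rplus_{\vone} = \rr_{\vone}$. The ingredients are all present (priority of $\vone$ in $\opti$, Corollary~\ref{cor:priority-tplus}, Lemmas~\ref{lm:rplus_prefix_time_assignment},~\ref{lm:vone_has_rmax},~\ref{lm:rplus_vtwo_rr_vone}), but assembling them into the two-active-sets-each-for-time-$\rr_{\vone}$ statement, and correctly pairing the Legacy coloring lemma (Lemma~\ref{lm:legacy_extra_coloring}) with the path rather than the whole tree, is where care is needed.
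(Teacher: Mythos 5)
Your proposal is correct and, after your self-correction to Legacy Execution, follows essentially the same route as the paper: bound $\dis(v,\pairv)$ by the cost of the $\otree$-path, split that cost into the portion colored by active sets cutting $\{v,\pairv\}$ (at most $2\rplus_{\max}(\opti) = 2\rr_{\vone}$, since $v$ and $\pairv$ connect in Legacy Execution by time $\rplus_{\max}(\opti)$ and at any earlier moment at most two disjoint active sets cut the pair) plus $\extralegacy$ of the path, then push $\extralegacy$ of the path up to $\extralegacyotree \le 2\cone\cc(\otree)$ via Lemmas~\ref{lm:extra_subgraph} and~\ref{lm:legacy_extra_coloring}. The only cosmetic difference is that the paper packages the timing argument as Lemma~\ref{lm:legacy_connected_at_rmax} whereas you re-derive it from the chain $\tplus_v \le \tplus_{\vone} \le \rplus_{\vone} = \rr_{\vone}$; both are the same content.
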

\begin{proof}
    First, we want to show that for every vertex $v\in \opti$, we can bound the distance between $v$ and $\pairv$ by $\dis(v, \pairv) \le 2\cone\cc(\otree) + 2\rr_{\vone}$.

    By Lemma~\ref{lm:legacy_connected_at_rmax}, the vertices $v$ and $\pairv$ are connected in Legacy Execution at time $\rplus_{\max}(\opti)$. 
    This implies that after time $\rplus_{\max}(\opti)$, no set cuts the pair $\{v, \pairv\}$. 
    Moreover, at each moment before time $\rplus_{\max}(\opti)$, there are at most two active sets containing $v$ or $\pairv$ since active sets at each moment are disjoint.
    Thus, we have
    \begin{align}
    \label{eq:yplus_pathv}
        \sum_{\substack{S\subseteq \V \\ S \odot \{v, \pairv\}}} \yplus_S \le 2\rplus_{\max}(\opti).
    \end{align}
    
    In addition, since $v$ and $\pairv$ are connected in the optimal solution, there exists a path between them; let us denote this path by $path_v$. 
    Then:
    \begin{align*}
        \dis(v, \pairv) &\le \cc(path_v) \\
        &\le  \extralegacy(path_v) + \sum_{\substack{S\subseteq \V \\ S \odot \{v, \pairv\}}} \yplus_S \tag{Lemma~\ref{lm:coloring_exactly_one_edge}} \\
        &\le \extralegacy(path_v) + 2\rplus_{\max}(\opti) \tag{Equation~\ref{eq:yplus_pathv}}\\
        &\le \extralegacyotree + 2\rplus_{\vone} \tag{Lemmas~\ref{lm:extra_subgraph} and~\ref{lm:vone_has_rmax}} \\
        &\le 2\cone\cc(\otree) + 2\rr_{\vone}. \tag{Lemma~\ref{lm:legacy_extra_coloring}, Corollary~\ref{cor:rplus_vone_rr_vone}}
    \end{align*}
    Since this holds for every $v \in \opti$, by Definition~\ref{def:dis}, we conclude that
    $$
    \maxdis(\sone) = \max_{v \in \sone} \dis(v, \pairv) \leq 2\cone \cc(\otree) + 2\rr_{\vone},
    $$
    completing the proof.
\end{proof}

\begin{lemma}
\label{lm:Btwo_has_candidate}
    Sets $\sone$ and $\stwo$ are selected as an autarkic pair.
\end{lemma}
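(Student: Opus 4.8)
The goal is to verify that the pair $(\sone,\stwo)$ satisfies the selection condition of \candidate{} in Line~\ref{line:autarkic_condition}, namely that $(1+\ccnd)(\Y_{\sone}+\Y_{\sone \text{-pair}}) > \maxdis(\sone)$. Since $\pair(\sone)=\stwo$ by Lemma~\ref{lm:pair_sone_stwo}, the quantity $\YPS$ in the condition (with $S=\sone$) equals $\Y_{\stwo}$, so what must be shown is
\[
(1+\ccnd)\bigl(\Y_{\sone}+\Y_{\stwo}\bigr) > \maxdis(\sone).
\]
I would plug in the two bounds already established: the lower bound $\Y_{\sone}+\Y_{\stwo} \ge 2\rr_{\vone} - (6\cone+2\ctwo)\cc(\otree)$ from Lemma~\ref{lm:bound_YS_candidate}, and the upper bound $\maxdis(\sone) \le 2\cone\cc(\otree) + 2\rr_{\vone}$ from Lemma~\ref{lm:bound_dis_candidate}. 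Also, I would need to handle the corner case where $\sone$ contains a vertex whose pair is also in $\sone$; but by construction $\sone=\unsatisfied(\sone')\cap\opti$ consists only of unsatisfied vertices at time $\currenttime_{\vthree}$, and Lemma~\ref{lm:pair_sone_stwo} shows their pairs lie in $\stwo$, disjoint from $\sone'$, so $\Y_{\sone}>0$ is not degenerate; I should also note $\Y_{\sone}$ and $\Y_{\stwo}$ are nonnegative and that $\sone$ itself is nonempty (it contains $\vone$).

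The core of the proof is then a single chain of inequalities. Starting from the left side,
\[
(1+\ccnd)\bigl(\Y_{\sone}+\Y_{\stwo}\bigr)
\ge (1+\ccnd)\Bigl(2\rr_{\vone} - (6\cone+2\ctwo)\cc(\otree)\Bigr),
\]
and I want to show this exceeds $2\cone\cc(\otree)+2\rr_{\vone}$. Rearranging, it suffices to establish
\[
2\ccnd\,\rr_{\vone} > \Bigl(2\cone + (1+\ccnd)(6\cone+2\ctwo)\Bigr)\cc(\otree).
\]
Now I invoke the defining property of $\Btwo$ (Definition~\ref{def:classify_B}) together with Lemma~\ref{lm:vone_has_rmax}: since $\opti\in\Btwo$, we have $\rmax(\opti) > \cfive\cc(\otree)$, and $\rr_{\vone}=\rmax(\opti)$, so $\rr_{\vone} > \cfive\cc(\otree)$. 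Thus it is enough to check the purely arithmetic inequality
\[
2\ccnd\,\cfive \ge 2\cone + (1+\ccnd)(6\cone+2\ctwo),
\]
which, once verified, closes the chain (the strict inequality $\rr_{\vone}>\cfive\cc(\otree)$ supplies the strictness needed for Line~\ref{line:autarkic_condition}, after noting $\cc(\otree)>0$; if $\cc(\otree)=0$ the bound $\maxdis(\sone)\le 2\rr_{\vone}$ still gives the strict inequality directly since $\ccnd>0$ and $\rr_{\vone}>0$).

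The one genuine computation is substituting the definitions $\ctwo = 4\cone\frac{1+\bta}{1-\bta}$ and $\cfive = \frac{\cone}{\ccnd}\bigl(4+3\ccnd+4(1+\ccnd)\frac{1+\bta}{1-\bta}\bigr)$ from Definition~\ref{def:ctwo_cfive} and expanding both sides; I expect the left side $2\ccnd\cfive = 2\cone\bigl(4+3\ccnd+4(1+\ccnd)\frac{1+\bta}{1-\bta}\bigr)$ and the right side $2\cone + 6\cone + 6\ccnd\cone + 2(1+\ccnd)\cdot 4\cone\frac{1+\bta}{1-\bta} = 8\cone + 6\ccnd\cone + 8(1+\ccnd)\cone\frac{1+\bta}{1-\bta}$; dividing through by $2\cone>0$ reduces the claim to $4+3\ccnd+4(1+\ccnd)\frac{1+\bta}{1-\bta} \ge 4+3\ccnd+4(1+\ccnd)\frac{1+\bta}{1-\bta}$, i.e. it holds with equality, which is exactly why $\cfive$ was chosen with that specific constant. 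The main (minor) obstacle is simply bookkeeping the constants correctly so the cancellation is clean; there is no conceptual difficulty since all the hard structural work is already done in Lemmas~\ref{lm:bound_YS_candidate} and~\ref{lm:bound_dis_candidate}. I would also remark that, having shown $(\sone,\stwo)$ meets the condition, the \candidate{} procedure indeed selects this pair (or its symmetric copy $(\stwo,\sone)$), which is all that is needed for the subsequent analysis.
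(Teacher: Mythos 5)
Your proposal is correct and follows essentially the same route as the paper's proof: both rely on Lemma~\ref{lm:pair_sone_stwo} to identify $\pair(\sone)=\stwo$, both plug the lower bound of Lemma~\ref{lm:bound_YS_candidate} and the upper bound of Lemma~\ref{lm:bound_dis_candidate} into the selection condition, and both reduce to comparing $\rr_{\vone}$ with $\cfive\cc(\otree)$ and then invoking the $\Btwo$ membership via Lemma~\ref{lm:vone_has_rmax}. The only cosmetic difference is that the paper argues by contradiction (assume the pair is \emph{not} selected and derive $\rr_{\vone}\le\cfive\cc(\otree)$) whereas you argue directly; the arithmetic identity $2\ccnd\cfive = 2\cone + (1+\ccnd)(6\cone+2\ctwo)$ that you verify is exactly the cancellation the paper exploits, and your $\cc(\otree)=0$ aside is harmless (that case is in fact vacuous under $\opti\in\Btwo$).
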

\begin{proof}
    For contradiction,  assume that $\sone$ and $\stwo$ are not selected as an autarkic pair.
    Given Lemma~\ref{lm:pair_sone_stwo}, they are pairs of each other.
    Therefore, not getting selected as autarkic pairs would imply that they do not satisfy the condition in Line~\ref{line:autarkic_condition}.
    This means that
    \begin{align}
    \label{eq:not_selected_autarkic}
        (1+\ccnd) (\Y_{\sone} + \Y_{\stwo}) \le \maxdis(\sone)
    \end{align}
    We can use this to show that
    \begin{align*}
        (1+\ccnd) (2\rr_{\vone} - (6\cone + 2\ctwo)\cc(\otree)) &\le (1+\ccnd) (\Y_{\sone} + \Y_{\stwo}) \tag{Lemma~\ref{lm:bound_YS_candidate}} \\
        &\le \maxdis(\sone) \tag{Equation~\ref{eq:not_selected_autarkic}} \\
        &\le 2\cone\cc(\otree) + 2\rr_{\vone}. \tag{Lemma~\ref{lm:bound_dis_candidate}}
    \end{align*}
    By moving $\rr_{\vone}$ to one side and other terms to the other side, we get
    \begin{align*}
        2 \ccnd\rr_{\vone} \le (2\cone + (6\cone + 2\ctwo)(1+\ccnd))\cc(\otree).
    \end{align*}
    Then, dividing both sides by $2\ccnd$ results in
    \begin{align*}
        \rr_{\vone} &\le \dfrac{1}{\ccnd}(\cone + (3\cone + \ctwo)(1+\ccnd))\cc(\otree) \\
        &=\dfrac{1}{\ccnd}\left(4\cone + 3\ccnd \cone + 4(1+\ccnd)\cone\dfrac{1+\bta}{1-\bta}\right)\cc(\otree) \tag{Definition~\ref{def:ctwo_cfive}}\\
        &=\dfrac{\cone}{\ccnd}\left(4+3\ccnd+4(1+\ccnd)\dfrac{1+\bta}{1-\bta}\right)\cc(\otree) \\
        &=\cfive \cc(\otree) \tag{Definition~\ref{def:ctwo_cfive}}
    \end{align*}
        However, since $\opti \in \Btwo$, given Definition~\ref{def:classify_B}, we have $\rmax(\opti) > \cfive \cc(\otree)$ which, combined with Lemma~\ref{lm:vone_has_rmax}, gives $\rr_{\vone} > \cfive \cc(\otree)$.
        This contradicts the above inequality and proves that $\sone$ and $\stwo$ are selected as an autarkic pair.
\end{proof}

\subsection{Cost Analysis of $\solcnd$}
\label{sec:candidate_analysis}
Now that we have established properties for connected components in $\Btwo$, we first provide a lower bound for $\Y$ of all autarkic pairs in Lemma~\ref{lm:total_candidate_time}.
Then, in Lemma~\ref{lm:bound_sol_cnd}, we derive an upper bound for the solution $\solcnd$, which is obtained by executing \candidate{} in Line~\ref{exe:candid}, and a new upper bound for the solution $\solone$ in Lemma~\ref{lm:bound_sol_one_for_cnd}.
Finally, since the cost of our final solution in Algorithm~\ref{alg:main} is at most equal to the minimum cost between $\solcnd$ and $\solone$, we combine their bounds to establish an upper bound on the minimum cost between them in Lemma~\ref{lm:sol-cnd-one-ub}.

First, we start by defining a notation for the total $\Y$ of autarkic pairs and bounding it.

\begin{definition}[Total Autarkic Growth]
    We define $\Ycnd$ as the total value of $\YS + \Y_{\pairS}$ for subsets of vertices $S, \pairS \subseteq \V$ that are selected as autarkic pairs in Line~\ref{line:autarkic_condition}.
\end{definition}

\begin{lemma}
\label{lm:total_candidate_time}
    The total autarkic growth has a lower bound as follows:
    $$
    \Ycnd \ge 2\left(1 - \dfrac{3\cone + \ctwo}{\cfive}\right) \sum_{\opti \in \Btwo} \rmax(\opti).
    $$
\end{lemma}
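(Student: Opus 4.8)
The plan is to prove Lemma~\ref{lm:total_candidate_time} by summing, over all components $\opti \in \Btwo$, the lower bound on $\Y_{\sone}+\Y_{\stwo}$ obtained in Lemma~\ref{lm:bound_YS_candidate}, and then using the fact (established in Lemma~\ref{lm:Btwo_has_candidate}) that for each such $\opti$ the sets $\sone,\stwo$ are actually selected as an autarkic pair. First I would fix a component $\opti\in\Btwo$ and recall from Lemma~\ref{lm:vone_has_rmax} that $\rr_{\vone}=\rmax(\opti)$, so Lemma~\ref{lm:bound_YS_candidate} reads
\[
\Y_{\sone}+\Y_{\stwo}\ \ge\ 2\rmax(\opti)-(6\cone+2\ctwo)\cc(\otree).
\]
Next, since $\opti\in\Btwo$, Definition~\ref{def:classify_B} gives $\rmax(\opti)>\cfive\cc(\otree)$, equivalently $\cc(\otree)<\rmax(\opti)/\cfive$. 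Substituting this into the displayed inequality (and noting $6\cone+2\ctwo=2(3\cone+\ctwo)\ge 0$, so the substitution is in the correct direction) yields
\[
\Y_{\sone}+\Y_{\stwo}\ \ge\ 2\rmax(\opti)-\frac{6\cone+2\ctwo}{\cfive}\rmax(\opti)\ =\ 2\left(1-\frac{3\cone+\ctwo}{\cfive}\right)\rmax(\opti).
\]

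Then I would sum this over all $\opti\in\Btwo$. The key point is that the pairs $(\sone,\stwo)$ arising from distinct components $\opti\in\Btwo$ are distinct autarkic pairs: by Definition~\ref{def:Sone_Stwo} we have $\sone,\stwo\subseteq\opti$, so for two different components the corresponding pairs involve vertices from disjoint parts of $\OPT$ and hence cannot coincide. By Lemma~\ref{lm:Btwo_has_candidate}, each such pair is indeed among the pairs selected in Line~\ref{line:autarkic_condition}, and since \candidate{} avoids double-counting a pair and its reverse (as noted after Corollary~\ref{cor:autarkic_pair_polynomial_time}), each contributes its full $\Y_{\sone}+\Y_{\stwo}=\YS+\Y_{\pairS}$ exactly once to $\Ycnd$. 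Therefore
\[
\Ycnd\ \ge\ \sum_{\opti\in\Btwo}(\Y_{\sone}+\Y_{\stwo})\ \ge\ \sum_{\opti\in\Btwo}2\left(1-\frac{3\cone+\ctwo}{\cfive}\right)\rmax(\opti)\ =\ 2\left(1-\frac{3\cone+\ctwo}{\cfive}\right)\sum_{\opti\in\Btwo}\rmax(\opti),
\]
which is the claimed bound.

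The step I expect to require the most care is the disjointness/no-double-counting argument: I must confirm that the map $\opti\mapsto(\sone,\stwo)$ from $\Btwo$ into the collection of selected autarkic pairs is injective and that $\Ycnd$ (being a sum over selected pairs of $\YS+\Y_{\pairS}$) is at least the sum of these contributions — i.e. that every $\Y_{\sone}+\Y_{\stwo}$ for $\opti\in\Btwo$ appears in $\Ycnd$ with a nonnegative remainder from any other selected pairs. Since $\Y$ is nonnegative by construction (it accumulates growth durations $\yys\ge 0$ via Line~\ref{line:define_YS}), dropping the contributions of non-$\Btwo$ autarkic pairs only weakens the bound, so this is safe. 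A minor subtlety worth spelling out is that $\sone$ could in principle be empty or $\vthree$ could be a dummy vertex, but these cases are harmless: Lemma~\ref{lm:bound_YS_candidate} still applies and the inequality $\Y_{\sone}+\Y_{\stwo}\ge 2(1-\tfrac{3\cone+\ctwo}{\cfive})\rmax(\opti)$ holds regardless, so the summation goes through unchanged.
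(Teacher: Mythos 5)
Your proof is correct and follows essentially the same approach as the paper: sum the component-wise bound from Lemma~\ref{lm:bound_YS_candidate} (rewritten via Lemma~\ref{lm:vone_has_rmax}) over $\opti\in\Btwo$, substitute $\cc(\otree)<\rmax(\opti)/\cfive$ from Definition~\ref{def:classify_B}, and observe via Lemma~\ref{lm:Btwo_has_candidate} and non-negativity of $\Y$ that $\Ycnd$ dominates the resulting sum. The only difference is that you spell out the injectivity of $\opti\mapsto(\sone,\stwo)$ and the non-negativity argument more explicitly than the paper does; these are correct and harmless elaborations (in fact, the worry that $\sone$ could be empty never materializes, since $\vone\in\sone$ and $\vtwo\in\stwo$ at time $\currenttime_{\vthree}$ because $\rr_{\vthree}<\rr_{\vtwo}$ by Lemmas~\ref{lm:bound_rr_vthree} and~\ref{lm:rr_vtwo_lower_bound}).
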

\begin{proof}
    For any connected component of the optimal solution $\opti \in \Btwo$, let $\sone^{\opti}$ and $\stwo^{\opti}$ be defined as $\sone$ and $\stwo$ in Definition~\ref{def:Sone_Stwo}.
    According to Lemma~\ref{lm:Btwo_has_candidate}, we know that they are selected as autarkic pairs.
    Furthermore, we have a lower bound on the value of $\Y_{\sone^{\opti}} + \Y_{\stwo^{\opti}}$ using Lemma~\ref{lm:bound_YS_candidate}.
    Therefore, the total autarkic growth is at least the sum of these lower bounds for all connected components in $\Btwo$, which is
    \begin{align*}
        \Ycnd &\ge \sum_{\opti \in \Btwo} \left(\Y_{\sone^{\opti}} + \Y_{\stwo^{\opti}}\right) \tag{Lemma~\ref{lm:Btwo_has_candidate}} \\
        &\ge \sum_{\opti \in \Btwo} \left(2\rmax(\opti) - (6\cone + 2\ctwo)\cc(\otree)\right) \tag{Lemmas~\ref{lm:bound_YS_candidate} and~\ref{lm:vone_has_rmax}} \\
        &> \sum_{\opti \in \Btwo} \left(2\rmax(\opti) - \dfrac{6\cone + 2\ctwo}{\cfive}\rmax(\opti)\right) \tag{Definition~\ref{def:classify_B}} \\
        &= 2\left(1 - \dfrac{3\cone + \ctwo}{\cfive}\right) \sum_{\opti \in \Btwo} \rmax(\opti)
    \end{align*}
\end{proof}

\begin{definition}
\label{def:tone_ttwo}
    Let $\tone$ be the total value of $\yys$ for active sets such that $\unsatisfiedS$ is selected as an autarkic pair and $S$ is a single-edge set.
    Similarly, let $\ttwo$ be the total value of $\yys$ for active sets such that $\unsatisfiedS$ is selected as an autarkic pair and $S$ is a multi-edge set.
\end{definition}
The following lemma establishes that $
    \Ycnd = \tone + \ttwo$ by showing any set $S$ contributing to $\Ycnd$ must be a single-edge or multi-edge set.  
\begin{lemma}
    We have 
    $$
    \Ycnd = \tone + \ttwo.
    $$
\end{lemma}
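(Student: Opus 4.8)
The plan is to unwind the definitions of $\Ycnd$, $\tone$, and $\ttwo$ and show the claimed equality holds termwise over the autarkic pairs. Recall $\Ycnd = \sum (\YS + \Y_{\pairS})$ summed over selected autarkic pairs $\{S,\pairS\}$. By Line~\ref{line:define_YS}, for any set $U\subseteq\V$ we have $\Y_U = \sum_{S':\,\unsatisfied(S')=U} \yy_{S'}$, i.e.\ $\Y_U$ collects the growth of exactly those active sets whose unsatisfied-vertex set equals $U$. So $\Ycnd$ is the total growth $\sum_{S'}\yy_{S'}$ over all active sets $S'$ for which $\unsatisfied(S')$ is selected as one of the two parts of some autarkic pair. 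Meanwhile, Definition~\ref{def:tone_ttwo} says $\tone$ (resp.\ $\ttwo$) is the total $\yy_{S'}$ over active sets $S'$ such that $\unsatisfied(S')$ is selected as (a part of) an autarkic pair \emph{and} $S'$ is a single-edge set (resp.\ multi-edge set). So the identity $\Ycnd = \tone + \ttwo$ is exactly the claim that every active set $S'$ contributing to $\Ycnd$ is either a single-edge set or a multi-edge set with respect to $\OPT$ — i.e.\ it cuts at least one edge of $\OPT$.

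The key step, then, is: if $S'$ is an active set in Boosted Execution with $\yy_{S'}>0$ and $\unsatisfied(S')$ equals one of the two parts of a selected autarkic pair, then $|\delta(S')\cap\OPT|\ge 1$. This follows because the part of an autarkic pair is always nonempty (an autarkic pair is selected only for $S\neq\emptyset$ with $\Y_S + \Y_{\pair(S)}>0$, and $S$ has a nonempty unsatisfied set — indeed for a selected autarkic pair $S$ we have $\YS>0$ which, as noted right after Definition~\ref{def:dis}, forces that no vertex of $S$ has its pair inside $S$, so $S$ itself is a valid unsatisfied set). Concretely, take $v\in\unsatisfied(S')$; then $\pairv\notin S'$, and since $v$ and $\pairv$ must be connected in $\OPT$, there is a path between them in $\OPT$, so at least one edge of that path crosses $S'$, giving $|\delta(S')\cap\OPT|\ge 1$. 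Hence $S'$ is either a single-edge set ($|\delta(S')\cap\OPT|=1$) or a multi-edge set ($|\delta(S')\cap\OPT|>1$); there is no third case. This dichotomy partitions the active sets contributing to $\Ycnd$ into the two groups defining $\tone$ and $\ttwo$, and summing gives $\Ycnd = \tone + \ttwo$.

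So the proof would proceed as follows. First I would observe that $\Ycnd = \sum_{S'}\yy_{S'}$ ranging over active sets $S'$ with $\unsatisfied(S')$ among the selected autarkic-pair parts, by expanding $\YS = \sum_{S':\unsatisfied(S')=S}\yy_{S'}$ from Line~\ref{line:define_YS} and summing over the (finitely many, disjoint) selected parts. Second, I would note that each such $\unsatisfied(S')$ is nonempty, so I can pick $v\in\unsatisfied(S')$ with $\pairv\notin S'$, and conclude $S'$ cuts at least one edge of $\OPT$ via the $v$-to-$\pairv$ path in $\OPT$. Third, I would split the active sets into single-edge and multi-edge sets according to this, matching Definition~\ref{def:tone_ttwo}, and conclude $\Ycnd = \tone + \ttwo$.

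I do not expect a serious obstacle here; this is essentially a bookkeeping lemma. The one point requiring mild care is confirming that the sets $S$ appearing as autarkic-pair parts are genuinely the unsatisfied sets of the active sets contributing to $\YS$ — i.e.\ that $\YS$ and $\Y_{\pairS}$ are defined via $\unsatisfied(\cdot)$ in Line~\ref{line:define_YS} and that a selected $S$ has $\unsatisfied(S)=S$ (so that the parts of autarkic pairs are themselves "unsatisfied sets"), which is guaranteed because $\YS>0$ for selected $S$ and, as remarked after Definition~\ref{def:dis}, $\YS=0$ whenever some vertex of $S$ has its pair inside $S$. With that in hand, the argument that every contributing active set cuts $\OPT$ is immediate and the equality follows.
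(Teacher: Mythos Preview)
Your proposal is correct and follows essentially the same approach as the paper: unwind the definitions so that $\Ycnd$ is a sum of $\yy_{S'}$ over active sets $S'$ whose $\unsatisfied(S')$ is a selected autarkic-pair part, then pick $v\in\unsatisfied(S')$ (nonempty since selected parts are nonempty), observe $\pair_v\notin S'$ forces $S'$ to cut the $v$--$\pair_v$ path in $\OPT$, and conclude $S'$ is either a single-edge or a multi-edge set. The paper's proof is the same argument with slightly different phrasing.
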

\begin{proof}
    Given Line~\ref{line:define_YS}, for any subset of vertices $S \subseteq \V$ such that $\yys$ is added to $\Y_{S'}$, the vertices in $S'$ are unsatisfied in $S$. 
    According to Line~\ref{line:autarkic_for_YS}, if $S'$ is an autarkic pair, it is not an empty set.
    Therefore, there exists a vertex $v \in S' = \unsatisfiedS$ such that $v$ is unsatisfied, and hence $\pairv \notin S$.
    Since the optimal solution must connect $v$ and $\pairv$, there is a path between them, and since $S\odot \{v, \pairv\}$, the set $S$ cuts the path between them and consequently cuts the optimal solution.
    This implies that $S$ either colors exactly one edge of the optimal solution, meaning $S$ is a single-edge set and $\yys$ is added to $\tone$, or colors at least two edges of the optimal solution, meaning $S$ is a multi-edge set and $\yys$ is added to $\ttwo$.
    Hence, we conclude that
    $$
    \Ycnd = \tone + \ttwo.
    $$
\end{proof}

Recall that $\Gp$ is defined in Line~\ref{line:define_Gp}. We now analyze the cost of the optimal solution in $\Gp$.

\begin{definition}
    We define $\OPT'$ as the optimal solution for $\Gp$.
\end{definition}

Since we run \BBG{} on $\Gp$ in Line~\ref{line:autarkic_legacy}, and \BBG{} has an approximation factor of 2, we can find a solution with cost at most $2\cc(\OPT')$ for $\Gp$.
Therefore, we first aim to bound $\cc(\OPT')$.
Note that instead of using \BBG{}, we could call our algorithm recursively on $\Gp$ to obtain a better-than-2 approximation.
However, we use \BBG{} to simplify the analysis.

The next lemma is necessary to establish an upper bound for $\cc(\OPT')$.

\begin{lemma}
\label{lm:only_one_autarkic}
    For any vertex $v \in \V$, it can only belong to at most one autarkic pair.
\end{lemma}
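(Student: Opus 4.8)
The plan is to argue by contradiction: suppose a vertex $v$ lies in both $S_1\cup\pair(S_1)$ and $S_2\cup\pair(S_2)$ for two distinct selected autarkic pairs $\{S_1,\pair(S_1)\}$ and $\{S_2,\pair(S_2)\}$. First I would rule out $v=\pairv$: in that case every set $S$ containing $v$ also contains $\pairv$, so no set $S'$ has $\unsatisfied(S')=S$, hence $\YS=0$, and likewise $\Y_{\pair(S)}=0$; so the test in Line~\ref{line:autarkic_condition} can never fire for such a pair, and $v$ cannot belong to any autarkic pair. Therefore $v\neq\pairv$, and exactly one of $v,\pairv$ lies in $S_1$; since the unordered pair $\{S,\pair(S)\}$ is unchanged by swapping its two parts, I may relabel so that $v\in S_1$, and similarly $v\in S_2$. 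I would then note that $S_1\neq S_2$ — otherwise $\pair(S_1)=\pair(S_2)$ and the two pairs coincide — and hence also $\pair(S_1)\neq\pair(S_2)$.

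The heart of the argument is a two‑sided estimate of $\Y_{S_1}+\Y_{\pair(S_1)}+\Y_{S_2}+\Y_{\pair(S_2)}$. For the lower bound I use that $v\in S_i$ forces $\maxdis(S_i)\ge\dis(v,\pairv)$, so each pair's having passed the test in Line~\ref{line:autarkic_condition} gives $(1+\ccnd)\big(\Y_{S_i}+\Y_{\pair(S_i)}\big)>\dis(v,\pairv)$, and summing the two gives $\Y_{S_1}+\Y_{\pair(S_1)}+\Y_{S_2}+\Y_{\pair(S_2)}>\frac{2}{1+\ccnd}\dis(v,\pairv)$. For the upper bound I unfold the definition of $\Y$ from Line~\ref{line:define_YS}: $\Y_S=\sum_{S':\unsatisfied(S')=S}\yy_{S'}$, so $\sum_{S:v\in S}\Y_S=\sum_{S':\,v\in S',\ \pairv\notin S'}\yy_{S'}$ and symmetrically for $\pairv$. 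Since $\Y_{S_1},\Y_{S_2}$ are two distinct nonnegative summands of $\sum_{S:v\in S}\Y_S$ (as $S_1\neq S_2$), and $\Y_{\pair(S_1)},\Y_{\pair(S_2)}$ two distinct summands of $\sum_{S:\pairv\in S}\Y_S$, adding gives
\[
\Y_{S_1}+\Y_{\pair(S_1)}+\Y_{S_2}+\Y_{\pair(S_2)}\ \le\ \sum_{S:v\in S}\Y_S+\sum_{S:\pairv\in S}\Y_S\ =\ \sum_{\substack{S\subseteq\V\\ S\odot\{v,\pairv\}}}\yy_S.
\]

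To finish, I would bound $\sum_{S\odot\{v,\pairv\}}\yy_S\le\dis(v,\pairv)$ by taking $P$ to be a shortest $v$–$\pairv$ path in $\G$ (a tree with $\cc(P)=\dis(v,\pairv)$): every $S$ with $S\odot\{v,\pairv\}$ separates $v$ from $\pairv$ and hence cuts at least one edge of $P$, so $\sum_{S\odot\{v,\pairv\}}\yy_S\le\sum_{S\subseteq\V}|\deltaS\cap P|\cdot\yy_S\le\cc(P)$ by Lemma~\ref{lm:tree-bound-by-activeset-num-cuts} applied to Boosted Execution. Chaining the two estimates yields $\frac{2}{1+\ccnd}\dis(v,\pairv)<\dis(v,\pairv)$; if $\dis(v,\pairv)>0$ this forces $\ccnd>1$, contradicting $\ccnd\in(0,1)$, while $\dis(v,\pairv)=0$ is impossible here, since then $v$ and $\pairv$ are co‑connected from the outset of Boosted Execution, so no active set ever has $v$ or $\pairv$ among its unsatisfied vertices, giving $\Y_{S_1}=\Y_{\pair(S_1)}=0$ and again failing the test in Line~\ref{line:autarkic_condition}. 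The only point requiring care is the bookkeeping of distinct summands — this is exactly where the relabeling and $S_1\neq S_2$ enter, so that the four $\Y$‑terms are not double counted — after which the whole proof reduces to the standard single‑path cut bound of Lemma~\ref{lm:tree-bound-by-activeset-num-cuts}; I do not anticipate any genuine obstacle beyond that.
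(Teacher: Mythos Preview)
Your proposal is correct and follows essentially the same approach as the paper's proof: both argue by contradiction, deriving a lower bound on $\Y_{S_1}+\Y_{\pair(S_1)}+\Y_{S_2}+\Y_{\pair(S_2)}$ from the autarkic selection condition in Line~\ref{line:autarkic_condition}, and an upper bound via $\sum_{S\odot\{v,\pairv\}}\yy_S\le\dis(v,\pairv)$ using the shortest-path coloring argument (the paper phrases this last step directly in terms of coloring, while you invoke Lemma~\ref{lm:tree-bound-by-activeset-num-cuts}, but the content is identical). You are somewhat more careful than the paper about corner cases---explicitly justifying the relabeling so that $v\in S_1\cap S_2$, and separately disposing of $v=\pairv$ and $\dis(v,\pairv)=0$---whereas the paper asserts without detailed justification that the four sets $S',\pair(S'),S'',\pair(S'')$ are distinct; the core argument, however, is the same.
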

\begin{proof}
    We will prove this by contradiction. 
    Assume that there are two autarkic pairs, $(S', \pair(S'))$ and $(S'', \pair(S''))$, such that $v \in S'$ and $v \in S''$, with $S' \neq S''$.

    According to the autarkic pair condition in Line~\ref{line:autarkic_condition}, we have:
    \begin{align*}
        \Y_{S'} + \Y_{\pair(S')} &> \dfrac{1}{1+\ccnd} \maxdis(S') \\
        &\ge \dfrac{1}{1+\ccnd}\dis(v, \pairv) \tag{$v\in S$, Definition~\ref{def:dis}} 
    \end{align*}
    Similarly, we derive the same inequality for $S''$:
    $$
        \Y_{S''} + \Y_{\pair(S'')} > \dfrac{1}{1+\ccnd} \dis(v, \pairv).
    $$
    Summing these inequalities, we get:
    \begin{equation}
    \label{eq:two_autarkic_dis_v_Y_S}
        \Y_{S'} + \Y_{\pair(S')} + \Y_{S''} + \Y_{\pair(S'')} > \dfrac{2}{1+\ccnd} \dis(v, \pairv).
    \end{equation}

    Let $\mathcal{S} = \{S', \pair(S'), S'', \pair(S'')\}$.
    We know that $\mathcal{S}$ contains four different subsets of vertices, and according to Line~\ref{line:define_YS}, the growth $\yys$ of any set $S \subseteq \V$ is assigned to at most one of the sets in $\mathcal{S}$.
    Moreover, since every set in $\mathcal{S}$ cuts the pair $\{v, \pairv\}$, any subset $S \subseteq \V$ for which $\unsatisfiedS\in\mathcal{S}$ (meaning its growth $\yys$ is assigned to the $\Y$ of one of the sets in $\mathcal{S}$) must also cut $\{v, \pairv\}$.
    If $S$ does not cut $\{v, \pairv\}$, then $\unsatisfiedS \cap \{v, \pairv\} = \emptyset$, implying that $\unsatisfiedS \notin \mathcal{S}$.
    Therefore,
    \begin{align*}
        \sum_{\substack{S\subseteq \V \\ S\odot \{v,\pairv\}}}\yys& \geq \sum_{\substack{S\subseteq\V \\ \unsatisfied(S)\in\mathcal{S}}}\yys\\&= \sum_{S\in \mathcal{S}}\YS\\
        &> \dfrac{2}{1+\ccnd}\dis(v, \pairv) \tag{Equation~\ref{eq:two_autarkic_dis_v_Y_S}}\\
        &> \dis(v, \pairv) \tag{$0 < \ccnd < 1$}
    \end{align*}

    However, this contradicts the fact that the sets $S \subseteq \V$ that cut $\{v, \pairv\}$ in Legacy Execution color the shortest path between $v$ and $\pairv$ for the duration of $\yys$, and each portion of an edge can only be colored once. 
    Therefore, it is not possible for the sum to exceed the distance $\dis(v, \pairv)$.

    Thus, our assumption is false, and we conclude that $v$ can belong to at most one autarkic pair.
\end{proof}

\begin{lemma}
\label{lm:bound_cost_opt'}
    We can bound the cost of $\OPT'$ as follows:
    $$
    \cc(\OPT') \le \cc(\OPT) - \tone.
    $$
\end{lemma}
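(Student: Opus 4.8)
�The plan is to exhibit a feasible solution for the modified instance $\Gp$ whose cost is at most $\cc(\OPT) - \tone$, which immediately yields the stated bound on $\cc(\OPT')$. The natural candidate is the forest obtained from $\OPT$ by deleting the single edges colored by the single-edge active sets whose unsatisfied set is an autarkic pair, and relying on the zero-cost edges $\EAP$ added in Line~\ref{line:contract_vertices} to repair connectivity.

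First I would set up the bookkeeping. For each autarkic pair $(S', \pair(S'))$ selected in Line~\ref{line:autarkic_condition}, the algorithm adds a zero-cost edge between the chosen representative $\vap \in S'$ and $\pair_{\vap}$. Now consider the active sets $S$ with $\unsatisfied(S)$ an autarkic pair that are single-edge sets; by Definition~\ref{def:tone_ttwo} their total growth is $\tone$, and each colors exactly one edge $e_S$ of $\OPT$. I would argue that the total cost of these edges is exactly $\tone$: since each such $S$ is a single-edge set coloring $e_S$, and $e_S \in \OPT$ is fully colored, the portion of $e_S$ colored by $S$ during growth duration $\yys$ contributes $\yys$ to $\cc(e_S)$; summing over all these sets and using that each portion of an edge is colored at most once (Lemma~\ref{lm:tree-bound-by-activeset-num-cuts}-style reasoning) gives that the distinct edges among the $e_S$'s have total cost at least $\tone$. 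Actually the cleaner route is: let $E^- = \{e_S : S \text{ single-edge, } \unsatisfied(S) \text{ autarkic}\}$; then $\cc(E^-) \ge \sum_S \yys = \tone$ because these $S$ color only edges in $E^-$ and each portion is colored once. So removing $E^-$ from $\OPT$ saves at least $\tone$ in cost.

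Next I would verify feasibility of $\OPT \setminus E^-$ together with $\EAP$ as a solution in $\Gp$. By Lemma~\ref{lm:remove_one_edge}, removing $e_S$ from $\OPT$ (more precisely from the component $\optcom$ containing it) only violates demands for pairs with an endpoint in $\unsatisfied(S)$. But $\unsatisfied(S)$ is an autarkic pair, say $S'$ with partner $\pair(S')$; every vertex $u \in \unsatisfied(S) = S'$ must reach $\pair_u \in \pair(S')$. The zero-cost edge in $\EAP$ connects the representative $\vap$ to $\pair_{\vap}$; combined with the fact that within $\OPT \setminus E^-$ the vertices of $S'$ remain connected to $\vap$ and the vertices of $\pair(S')$ remain connected to $\pair_{\vap}$ — here I would need to check that removing the $e_S$'s does not disconnect $S'$ internally, which follows because $e_S$ separates $\unsatisfied(S)$ from the rest, so all of $S'$ stays on one side — each demand pair in $S'$ is satisfied via the shortcut. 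I also need Lemma~\ref{lm:only_one_autarkic} to ensure the $e_S$'s and the repair structure are consistent across different autarkic pairs (no vertex is shared between two autarkic pairs, so the removals and repairs don't interfere).

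I expect the main obstacle to be the feasibility argument: carefully showing that after deleting all the single edges $e_S$ simultaneously, the remaining forest plus $\EAP$ still connects every demand pair — in particular handling the interaction between autarkic pairs lying in the same or different optimal components, and confirming that each $e_S$ really does cut off exactly $\unsatisfied(S)$ and that the zero-cost edge lands on the correct side of every such cut. Once feasibility is established, the cost bound $\cc(\OPT') \le \cc(\OPT \setminus E^-) + \cc(\EAP) = \cc(\OPT) - \cc(E^-) + 0 \le \cc(\OPT) - \tone$ is immediate.
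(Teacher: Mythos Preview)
Your proposal is correct and follows essentially the same approach as the paper: build a candidate solution by removing from $\OPT$ the single edges cut by single-edge active sets whose unsatisfied set is autarkic, add the zero-cost edges $\EAP$, and verify feasibility via Lemma~\ref{lm:remove_one_edge} together with Lemma~\ref{lm:only_one_autarkic}. The one place where your sketch is thinner than the paper is the feasibility step: the paper makes your ``all of $S'$ stays on one side'' claim precise by taking the minimal subtrees $T_1, T_2 \subseteq \OPT$ spanning $\unsatisfied(S)$ and $\pair(\unsatisfied(S))$, and showing (using Lemma~\ref{lm:cut_one_edge_cut_leaves}, laminarity of active sets, and Lemma~\ref{lm:only_one_autarkic}) that no removed edge can lie in $T_1$ or $T_2$ --- which is exactly the obstacle you flagged.
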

\begin{proof}
    To prove the lemma, we first provide a valid solution for $\Gp$ and show that its cost is at most $\cc(\OPT) - \tone$. 
    Then, since $\OPT'$ is the optimal solution for $\Gp$, its cost must be less than or equal to the cost of any valid solution for $\Gp$.
    Let $\solp$ denote the constructed solution for $\Gp$.

    Initially, we set $\solp$ equal to $\OPT \cup \EAP$, which clearly satisfies all demands.
    Then, for any single-edge set $S$—that is, $S$ cuts exactly one edge of $\OPT$—such that $\unsatisfiedS$ is selected as an autarkic pair, we remove from $\solp$ the edge of $\OPT$ that is cut by $S$.
    Note that while $S$ is a single-edge set with respect to $\OPT$, it may cut more than one edge in $\solp$ since $\solp$ also includes the edges of $\EAP$.
    Moreover, for each such removed, we have added a corresponding edge to $\EAP$ in Line~\ref{line:contract_vertices}, and thus to $\solp$, and will later show that these additions preserve feasibility.
    Clearly, the cost of $\solp$ is at most $\cc(\OPT) - \tone$, since the total cost of removed edges is at least $\tone$, and all edges in $\EAP$ have zero cost.

    Next, we need to prove that $\solp$ satisfies all demands, given that some edges from $\OPT$ have been removed but edges of $\EAP$ between autarkic pairs have been added.

    Let $e$ be an edge that was part of $\OPT$ and got removed from $\solp$ because a single-edge set $S$ cuts $e$ and $\unsatisfiedS$ is selected as an autarkic pair.
    Given Lemma~\ref{lm:remove_one_edge}, removing $e$ only disconnects vertices in $\unsatisfiedS$ from their pair and does not affect other vertices.
    Therefore, we need to show that those vertices are still satisfied in $\solp$.
    Since $S$ was an autarkic pair, we have added a zero-cost edge from a vertex $\vap\in S$ to $\pair_{\vap}$.
    If we show that all vertices in $\unsatisfiedS$ are connected to $\vap$ and all vertices in $\pair(\unsatisfiedS)$ are connected to $\pair_{\vap}$ in $\solp$, we can conclude that all vertices in $\unsatisfiedS$ are satisfied in $\solp$. We observe that the leaves of $T_1$ are contained in $\unsatisfiedS$, and leaves of $T_2$ are contained in $\pair(\unsatisfiedS)$.

    We observe that since $S$ is a single-edge set, $\unsatisfiedS$ can only include vertices from a single component of $\opti$. Therefore, we can define tree $T_1$ as the minimal subtree of $\OPT$ that connects vertices in $\unsatisfiedS$.
    Similarly, since $\pair(\unsatisfiedS)$ must lie within the same component, we define $T_2$ as the minimal subtree of $\OPT$ that connects the vertices in $\pair(\unsatisfiedS)$.  
    Figure~\ref{fig:autarkic_edge_removal} illustrates the structure of an autarkic pair, highlighting the aforementioned notation.

    Assume, for the sake of contradiction, that there is an edge in $T_1$ that is removed from $\solp$.
    This edge is removed by some single-edge set $S' \subseteq \V$ where $\unsatisfied(S')$ is selected as an autarkic pair. 
    Since $S'$ only cuts one edge of the optimal solution and also cuts $T_1$, it only cuts one edge of $T_1$.
    According to Lemma~\ref{lm:cut_one_edge_cut_leaves}, $S'$ must cut the leaves of $T_1$, which means $\unsatisfied(S') \subseteq S'$ cannot be $\unsatisfiedS$ as $\unsatisfiedS$ contains all leaves of $T_1$.
    Moreover, there must be a vertex $v \in S'$ that is a leaf of $T_1$.

    We can also show that $v \in \unsatisfied(S')$. Otherwise, $\pairv$ would need to be in $S'$.
    Since $v \in \unsatisfiedS$, $\pairv \notin S$, therefore $S$ would have some vertices of $S'$ but not all of them.
    Moreover, since $S'$ cuts leaves of $T_1$, it also has some vertices from $S$ but not all of them.
    This would imply that $S$ and $S'$ intersect but are not subsets of each other, which contradicts the laminarity of active sets as per Corollary~\ref{cor:active_sets_laminar}.
    Thus, $v \in \unsatisfied(S')$.

\tikzset{
  triangle/.style args={#1}{
    insert path={
        ++(90:1.2*#1)
      -- ++($(90:-1.2*#1)+(210:1.2*#1)$)
      -- ++($(210:-1.2*#1)+(330:1.2*#1)$)
      -- cycle %
    }
  }
}
\tikzset{
  hexagon/.style args={#1}{
    insert path={
      ++(90:#1)
      -- ++($(90:-#1)+(30:#1)$)
      -- ++($(30:-#1)+(-30:#1)$)
      -- ++($(-30:-#1)+(-90:#1)$)
      -- ++($(-90:-#1)+(-150:#1)$)
      -- ++($(-150:-#1)+(150:#1)$)
      -- cycle
    }
  }
}
\tikzset{
  pentagon/.style args={#1}{
    insert path={
      ++(90:1.05*#1)
      -- ++($(90:-1.05*#1)+(162:1.05*#1)$)
      -- ++($(162:-1.05*#1)+(234:1.05*#1)$)
      -- ++($(234:-1.05*#1)+(306:1.05*#1)$)
      -- ++($(306:-1.05*#1)+(18:1.05*#1)$)
      -- cycle
    }
  }
}
\tikzset{
  square/.style args={#1}{
    insert path={
        ++(45:1.1*#1)
      -- ++($(45:-1.1*#1)+(135:1.1*#1)$)
      -- ++($(135:-1.1*#1)+(225:1.1*#1)$)
      -- ++($(225:-1.1*#1)+(315:1.1*#1)$)
      -- cycle %
    }
  }
}
\tikzset{
  star/.style args={#1}{
    insert path={
      ++(90:1.1*#1)
      \foreach \a in {90,162,234,306,378} {
        -- ++($(\a:-1.1*#1) + (\a+36:0.5*#1)$)
        -- ++($(\a+36:-0.5*#1) + (\a+72:1.1*#1)$)
      }
      -- cycle
    }
  }
}

\begin{figure}[t]
    \centering
    \begin{subfigure}{0.9\textwidth}
        \centering
        \begin{tikzpicture}[scale=0.7]
\def\dem{Red!70}
\def\demi{Blue!70}
\def\demii{Green!70}
\def\ter{White}
\def\col{Black!70}
\def\treeone{Purple}
\def\treetwo{RubineRed!70!Black}
\def\noder{0.1cm}

\def\n{4}
\def\m{4}
\pgfmathtruncatemacro{\last}{\n+1}

\def\tercl#1{%
  \ifcase#1
    White\or
    olive!20\or
    Green!20\or
    Sepia!20\or
    Plum!20\or
    Goldenrod!20\or
  \fi
}
\def\tersh#1{%
  \ifcase#1
    none\or
    star\or
    square\or
    pentagon\or
    triangle\or
    hexagon\or
  \fi
}

\def\angle{40}
\def\smallRotate{0}

\def\petalLength{4}
\def\forkLength{6/6}
\def\shortLen{6/24}
\def\edgeLen{0.6} %
\def\cof{1.7}

\def\heightStep{1.2}
\def\widthStep{1.2}

\tikzset{
    center arc/.style args={#1:#2:#3}{
        insert path={+ (#1:#3) arc (#1:#1+#2:#3)}
    }
}

\coordinate (A) at (0, 0);
\coordinate (R) at ($(A) + (\smallRotate:\petalLength)$);
\coordinate (B) at ($(R) + (-\smallRotate:\petalLength)$);

\coordinate (Rf) at ($(R) + (-\smallRotate:0)$);
\coordinate (Rs) at ($(R) + (-\smallRotate:2*\cof*\shortLen)$);

\coordinate (Rf) at ($(R) + (-\smallRotate:0)$);
\coordinate (Rs) at ($(R) + (-\smallRotate:2*\cof*\shortLen)$);

\coordinate (R1) at ($(R) + (-\smallRotate:\petalLength/2)$);

\coordinate (Rf) at ($(R) + (-\smallRotate:0)$);
\coordinate (Rs) at ($(R) + (-\smallRotate:2*\cof*\shortLen)$);

\coordinate (Rff) at ($(Rf) + (-\smallRotate+90:\shortLen)$);
\coordinate (Rss) at ($(Rs) + (-\smallRotate+90:\shortLen)$);
\foreach \i in {Rf, Rs} {
    \draw[\demi, line width=0.3pt, dash pattern=on 1.2pt off 0.4pt] (\i) circle(\cof*\shortLen);
}

\coordinate (A1) at ($(A) + (180+\smallRotate-\angle:\forkLength)$);
\coordinate (A2) at ($(A) + (180+\smallRotate+\angle:\forkLength)$);
\coordinate (A3) at ($(A2) + 
(180+\smallRotate:\forkLength)$);
\coordinate (A4) at ($(A2) + 
(180+\smallRotate+2*\angle:\forkLength)$);

\coordinate (B1) at ($(B) + (-\smallRotate-2*\angle:\forkLength)$);
\coordinate (B2) at ($(B) + (-\smallRotate-0.5*\angle:\forkLength)$);
\coordinate (B3) at ($(B) + 
(-\smallRotate+1*\angle:\forkLength)$);
\coordinate (B4) at ($(B) + 
(-\smallRotate+2.5*\angle:1.5*\forkLength)$);

\coordinate (e2) at ($(A)!1.3cm!(R)$);

\coordinate (e3) at ($(R1) + (-\smallRotate:0.9cm)$);

\draw[\col] (A) -- (R) -- (R1);

\draw[\treeone, line width=0.7pt] (A1) -- (A) -- (A2);
\draw[\treeone, line width=0.7pt] (A3) -- (A2) -- (A4);

\draw[\treetwo, line width=0.7pt] (B1) -- (B) -- (B2);
\draw[\treetwo, line width=0.7pt] (B) -- (B3);
\draw[\treetwo, line width=0.7pt] (R1) -- (B4);

\draw[\col] (Rf) -- (Rff);
\draw[\col] (Rs) -- (Rss);

\draw[\treetwo, line width=0.7pt] (R1) -- (B);
\draw[line width=1.5pt] (A) -- node[below=-2pt] {\small{$e$}} (e2);
\draw[\treetwo, line width=0.7pt] (R1) -- (e3);
\draw[\treetwo, line width=0.7pt] (R1) --  (B4);

\draw[\col, thick, dashed] (B1) to[out=190, in=-20] node[below=-2pt] {\small{0}} (A4);

\def\r{2*\forkLength}

\draw[\demii, line width=0.3pt, dash pattern=on 1.2pt off 0.4pt] (A1) [center arc=\smallRotate-39:205:\r];
\draw[\demii, line width=0.3pt, dash pattern=on 1.2pt off 0.4pt] (A3) [center arc=\smallRotate+118:121:\r];
\draw[\demii, line width=0.3pt, dash pattern=on 1.2pt off 0.4pt] (A4) [center arc=\smallRotate+201:188:\r];

\draw[\demi, line width=0.3pt, dash pattern=on 1.2pt off 0.4pt] (B1) [center arc=-\smallRotate+152:172:\r];
\draw[\demi, line width=0.3pt, dash pattern=on 1.2pt off 0.4pt] (B2) [center arc=-\smallRotate+296:88:\r];
\draw[\demi, line width=0.3pt, dash pattern=on 1.2pt off 0.4pt] (B3) [center arc=-\smallRotate-4:74:\r];
\draw[\demi, line width=0.3pt, dash pattern=on 1.2pt off 0.4pt] (B4) [center arc=-\smallRotate+32:197:\r];

\def\r{1.5*\forkLength}
\draw[\dem, line width=0.3pt, dash pattern=on 1.2pt off 0.4pt] (A1) [center arc=\smallRotate-54:229:\r] node[pos=0.8, below right=-0pt, color=\treeone] {$T_1$};
\draw[\dem, line width=0.3pt, dash pattern=on 1.2pt off 0.4pt] (A3) [center arc=\smallRotate+109:136:\r];
\draw[\dem, line width=0.3pt, dash pattern=on 1.2pt off 0.4pt] (A4) [center arc=\smallRotate+194:211:\r];

\draw[\dem, line width=0.3pt, dash pattern=on 1.2pt off 0.4pt] (B1) [center arc=-\smallRotate+135:194:\r];
\draw[\dem, line width=0.3pt, dash pattern=on 1.2pt off 0.4pt] (B2) [center arc=-\smallRotate+290:100:\r];
\draw[\dem, line width=0.3pt, dash pattern=on 1.2pt off 0.4pt] (B3) [center arc=-\smallRotate-10:87:\r];
\draw[\dem, line width=0.3pt, dash pattern=on 1.2pt off 0.4pt] (B4) [center arc=-\smallRotate+25:220:\r] node[pos=0.2, below left=-0pt, color=\treetwo] {$T_2$};

\foreach \i in {
A, B, 
 R1, e2, e3, Rf, Rs} {
    \draw[\col, fill= \ter] (\i) circle (\noder);
}

\foreach \i/\c in {
A1/\tersh{1}, A2/\tersh{2}, A3/\tersh{3}, A4/\tersh{4}, 
B1/\tersh{4}, B2/\tersh{2}, B3/\tersh{3}, B4/\tersh{1}, 
Rff/\tersh{5}, Rss/\tersh{5}} {
    \draw[\col, fill=\col] (\i) [\c=\noder];
}

\node[\col, below left=-2pt] at (A4) {\small{$\vap$}};
\node[\col, below right=-2pt] at (B1) {\small{$\pair_{\vap}$}};
        \end{tikzpicture}
    \end{subfigure}
    \caption{An autarkic pair within a component of the optimal solution. The moats on the right are multi-edge sets and therefore do not trigger any edge removal. However, the moats on the left cut only the edge $e$, allowing us to remove it from $\solp$.
    This removal only impacts the unsatisfied vertices within that moat. To maintain feasibility, we add an edge between $\vap$ and $\pair_{\vap}$ to both $\EAP$ and our solution.
    By ensuring that all edges in $T_1$ and $T_2$—the subtrees of the optimal solution connecting each subset of the autarkic pair—are retained in $\solp$, we guarantee that all vertices in the autarkic pair remain satisfied.}
    \label{fig:autarkic_edge_removal}
\end{figure}     
    Now that we know $\unsatisfiedS$ and $\unsatisfied(S')$ are distinct autarkic pairs and $v$ is in both of them, we can apply Lemma~\ref{lm:only_one_autarkic}, which states that each vertex can belong to at most one autarkic pair. 
    This contradiction shows that no edge from $T_1$ can be removed.

    The same argument can be applied to $T_2$. 
    Therefore, no edge from $T_1$ and $T_2$ is removed from $\solp$, and all vertices in $\unsatisfiedS$ are connected to $\vap$ and all vertices in $\pair(\unsatisfiedS)$ are connected to $\pair_{\vap}$.
    Since $\vap$ and $\pair_{\vap}$ are connected in $\EAP$ and consequently in $\solp$, removing $e$ does not violate any pair demand, and therefore, $\solp$ satisfies all demands.
\end{proof}

\begin{lemma}
\label{lm:bound_sol_cnd}
    We can give an upper bound for $\solcnd$ as follows:
    $$
    \solcnd \le 2\cc(\OPT) - (1 - \ccnd) \Ycnd + 2\ttwo.
    $$
\end{lemma}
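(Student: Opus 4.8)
The plan is to bound $\cc(\solcnd)$ by tracking precisely which edges the \candidate{} procedure adds to its output forest $F$. By construction, $F$ is contained in the union of two parts: the shortest $\vap$–$\pair_{\vap}$ paths in $\G$ added in Line~\ref{line:autarkicpair_add_edge} for each selected autarkic pair $S$, and the forest $\BBG(\Gp,\pair)$ computed in Line~\ref{line:autarkic_legacy}. Since every edge of $\EAP$ has zero cost, removing $\EAP$ at the end does not change the cost, and subadditivity of $\cc$ over a union of edge sets yields
\[
    \cc(\solcnd)\le \sum_{S\text{ autarkic}}\dis(\vap,\pair_{\vap})+\cc(\BBG(\Gp,\pair)),
\]
where the sum ranges over all subsets $S$ selected as autarkic pairs in Line~\ref{line:autarkic_condition}.

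For the second summand, I would invoke the fact (already used in the excerpt) that \BBG{} is a $2$-approximation for Steiner Forest, so $\cc(\BBG(\Gp,\pair))\le 2\cc(\OPT')$, and then apply Lemma~\ref{lm:bound_cost_opt'} to get $\cc(\OPT')\le\cc(\OPT)-\tone$, hence $\cc(\BBG(\Gp,\pair))\le 2\cc(\OPT)-2\tone$. For the first summand, the key observation is that for every selected autarkic pair $S$ the vertex $\vap$ lies in $S$, so $\dis(\vap,\pair_{\vap})\le\maxdisS$ by Definition~\ref{def:dis}; the selection criterion in Line~\ref{line:autarkic_condition}, namely $(1+\ccnd)(\YS+\YPS)>\maxdisS$, then upgrades this to $\dis(\vap,\pair_{\vap})\le(1+\ccnd)(\YS+\YPS)$. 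Summing over all autarkic pairs and recalling the definition of $\Ycnd$ gives $\sum_{S\text{ autarkic}}\dis(\vap,\pair_{\vap})\le(1+\ccnd)\Ycnd$.

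Combining the two bounds gives $\cc(\solcnd)\le(1+\ccnd)\Ycnd+2\cc(\OPT)-2\tone$. To finish, I would substitute $\tone=\Ycnd-\ttwo$ (from the identity $\Ycnd=\tone+\ttwo$ established just before this lemma), turning $-2\tone$ into $-2\Ycnd+2\ttwo$ and collapsing the $\Ycnd$ terms into $-(1-\ccnd)\Ycnd$, which is exactly the claimed bound. I do not expect a genuine obstacle here; the only points that need care are (i) confirming that the arbitrary choice of the representative $\vap\in S$ is harmless — it is, since $\maxdisS$ is a maximum over all of $S$ and therefore dominates $\dis(\vap,\pair_{\vap})$ for any choice — and (ii) making sure the summation bookkeeping is consistent, i.e.\ that each autarkic pair contributes $\YS+\YPS$ to $\Ycnd$ exactly once (which is how $\Ycnd$ was defined and how Lemma~\ref{lm:bound_YS_candidate} and the identity $\Ycnd=\tone+\ttwo$ account for it), so that the three ingredients can be combined without double counting.
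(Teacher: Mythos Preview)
Your proposal is correct and follows essentially the same route as the paper: bound the added shortest paths by $(1+\ccnd)\Ycnd$ via the autarkic selection criterion, bound the \BBG{} output by $2\cc(\OPT')\le 2\cc(\OPT)-2\tone$ using Lemma~\ref{lm:bound_cost_opt'}, and then substitute $\tone=\Ycnd-\ttwo$. The paper's proof is identical in structure and uses the same ingredients in the same order.
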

\begin{proof}
    Since we only select autarkic pairs that satisfy $(1+\ccnd)(\YS + \YPS) \ge \maxdisS$ based on Line~\ref{line:autarkic_condition}, and we add an edge between $\vap \in S$ and $\pair_{\vap} \in \pairS$ with a cost $\dis(\vap, \pair_{\vap}) \le \maxdisS$, the cost of the selected edge in Line~\ref{line:autarkicpair_add_edge} for an autarkic pair is at most $(1+\ccnd)(\YS + \YPS)$. This means the total cost of all selected edges is at most $(1+\ccnd)\Ycnd$.

    Since we run the 2-approximation algorithm on $\G'$, we obtain a solution with cost $2\cc(\OPT')$. Including the cost of the selected edges in Line~\ref{line:autarkicpair_add_edge}, the total cost is at most:
    \begin{align*}
        \solcnd &\le 2\cc(\OPT') + (1+\ccnd)\Ycnd \\
        &\le 2(\cc(\OPT) - \tone) + (1+\ccnd)\Ycnd \tag{Lemma~\ref{lm:bound_cost_opt'}}\\
        &= 2\cc(\OPT) - 2(\Ycnd - \ttwo) + (1+\ccnd)\Ycnd \tag{Definition~\ref{def:tone_ttwo}}\\
        &= 2\cc(\OPT) - (1 - \ccnd) \Ycnd + 2\ttwo 
    \end{align*}
\end{proof}

\begin{lemma}
\label{lm:bound_sol_one_for_cnd}
    We can either give an upper bound for $\solone$ as follows:
    $$
    \solone \le \left(2+2\alf + \dfrac{4\alf}{\bta}\right)\cc(\OPT) - 2\ttwo,
    $$
    or $\solone$ is a $2-2\alf$ approximation of the optimal solution $\OPT$.
\end{lemma}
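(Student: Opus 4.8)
The target bound is exactly ``twice a refined upper bound on the total growth of active sets in Boosted Execution,'' so I would start from Lemma~\ref{lm:monotonic-forest-twice-growth}, which gives $\cc(\solone)\le 2\sum_{S\subseteq\V}\yys$. It therefore suffices to prove
\[
\sum_{S\subseteq\V}\yys \;\le\; \Bigl(1+\alf+\tfrac{2\alf}{\bta}\Bigr)\cc(\OPT)-\ttwo ,
\]
under the assumption that $\solone$ is \emph{not} a $(2-2\alf)$-approximation (if it is, the lemma holds trivially). Under that assumption, Lemma~\ref{lm:lcl-loss} yields $\lossone\le\tfrac{\alf}{\bta}\cc(\OPT)$ and Lemma~\ref{lm:lcl-win} yields $\winone\le(\alf+\tfrac{\alf}{\bta})\cc(\OPT)$; these are the only facts about the local search I will use here.

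\textbf{Key step: the $\ttwo$ gain comes from multi-edge coloring of $\OPT$.} Apply Lemma~\ref{lm:tree-bound-by-activeset-num-cuts} to Boosted Execution with the forest $\OPT$, giving $\cc(\OPT)\ge\sum_{S\subseteq\V}\lvert\deltaS\cap\OPT\rvert\cdot\yys$. Only sets with $S\odot\OPT$ contribute, and writing $\lvert\deltaS\cap\OPT\rvert=1+(\lvert\deltaS\cap\OPT\rvert-1)$ we get
\[
\sum_{S\subseteq\V}\lvert\deltaS\cap\OPT\rvert\,\yys
=\sum_{\substack{S\subseteq\V\\ S\odot\OPT}}\yys+\sum_{\substack{S\subseteq\V\\ \lvert\deltaS\cap\OPT\rvert>1}}(\lvert\deltaS\cap\OPT\rvert-1)\,\yys
\;\ge\;\sum_{\substack{S\subseteq\V\\ S\odot\OPT}}\yys+\ttwo ,
\]
since every active set counted in $\ttwo$ is by Definition~\ref{def:tone_ttwo} a multi-edge set (so $\lvert\deltaS\cap\OPT\rvert-1\ge1$), and the $\ttwo$-sets form a subfamily of all multi-edge sets. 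Hence $\sum_{S\odot\OPT}\yys\le\cc(\OPT)-\ttwo$.

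\textbf{Assembling the bound.} Split $\sum_{S}\yys=\sum_{S\odot\OPT}\yys+\sum_{S:\deltaS\cap\OPT=\emptyset}\yys$ and bound the second sum by $\winone+\lossone$ via Lemma~\ref{lm:ys_not_cut_opt_win_loss}. Combining with the key step and the loss/win bounds,
\[
\sum_{S\subseteq\V}\yys\le\bigl(\cc(\OPT)-\ttwo\bigr)+(\winone+\lossone)
\le\cc(\OPT)-\ttwo+\Bigl(\alf+\tfrac{\alf}{\bta}\Bigr)\cc(\OPT)+\tfrac{\alf}{\bta}\cc(\OPT),
\]
which is exactly $\bigl(1+\alf+\tfrac{2\alf}{\bta}\bigr)\cc(\OPT)-\ttwo$; multiplying by $2$ gives $\cc(\solone)\le\bigl(2+2\alf+\tfrac{4\alf}{\bta}\bigr)\cc(\OPT)-2\ttwo$, as required.

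\textbf{Anticipated obstacle.} There is no deep obstacle; the proof is essentially a short accounting argument. The one point requiring care is the key step: one must correctly observe that each active set charged in $\ttwo$ cuts at least two edges of $\OPT$ and therefore contributes one extra unit of $\yys$ to the total $\OPT$-coloring beyond the ``one edge each'' bound $\sum_{S\odot\OPT}\yys$, and that these sets are a subfamily of all multi-edge sets so no double counting occurs. One should also be explicit that the escape clause ``$\solone$ is a $(2-2\alf)$-approximation'' is precisely what licenses the use of Lemmas~\ref{lm:lcl-loss} and~\ref{lm:lcl-win}.
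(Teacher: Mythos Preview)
Your proposal is correct and follows essentially the same approach as the paper: split $\sum_S\yys$ into sets that cut $\OPT$ and sets that do not, bound the latter by $\winone+\lossone$ via Lemma~\ref{lm:ys_not_cut_opt_win_loss}, and extract the $\ttwo$ gain from the multi-edge coloring of $\OPT$. The only cosmetic difference is that the paper routes the key step through Lemma~\ref{lm:ys_le_T_extra} and the notation $\multicolored^+(\OPT)\ge 2\ttwo$, whereas you derive the equivalent bound $\sum_{S\odot\OPT}\yys\le\cc(\OPT)-\ttwo$ directly from Lemma~\ref{lm:tree-bound-by-activeset-num-cuts} via the decomposition $|\deltaS\cap\OPT|=1+(|\deltaS\cap\OPT|-1)$; the underlying idea is identical.
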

\begin{proof}
    If $\cc(\solone) \le (2-2\alf)\cc(\OPT)$, the lemma is already proved.
    Therefore, we assume that this condition does not hold.
    In this case, we can bound $\lossone$ and $\winone$ using Lemmas~\ref{lm:lcl-loss} and~\ref{lm:lcl-win}.
    
    Let $\multicolored^+$ be $\multicolored$ as defined in Definition~\ref{def:uncolored_single_multi_colored}, corresponding to Boosted Execution.
    By Definition~\ref{def:tone_ttwo}, during the duration of $\ttwo$, at least two edges of the optimal solution are being colored by each active set.
    Thus, the total portion of $\OPT$ colored by these active sets is at least $2\ttwo$, since each active set colors at least two edges.
    Therefore,
    \begin{align}
    \label{eq:MC+OPT_2ttwo}
        2\ttwo \le \multicolored^+(\OPT).
    \end{align}

    Additionally, each active set either cuts the optimal solution (meaning $S\odot \OPT$) or does not cut it (meaning $\deltaS \cap \OPT=\emptyset$).
    Both cannot happen simultaneously.
    Hence, we divide all active sets and their corresponding $\yys$ between these two groups to bound $\solone$.
    \begin{align*}
        \solone &\le 2\sum_{S \subseteq \V} \yy_S \tag{Lemma~\ref{lm:monotonic-forest-twice-growth}}\\
        &=2\left(\sum_{\substack{S \subseteq \V \\ S\odot \OPT}} \yys + \sum_{\substack{S \subseteq \V \\ \deltaS\cap \OPT = \emptyset}} \yys\right) \\
        &\le 2\left(\cc(\OPT) - \dfrac{\multicolored^+(\OPT)}{2} + \sum_{\substack{S\subseteq \V \\ \deltaS \cap \OPT = \emptyset}}\yys\right) \tag{Lemma~\ref{lm:ys_le_T_extra}}\\
        &\le 2\left(\cc(\OPT) + \winone + \lossone\right) - \multicolored^+(\OPT) \tag{Lemma~\ref{lm:ys_not_cut_opt_win_loss}}\\
        &\le 2\cc(\OPT) + 2\left(\alf+\dfrac{\alf}{\bta}\right) \cc(\OPT) + 2\dfrac{\alf}{\bta}\cc(\OPT) - \multicolored^+(\OPT) \tag{Lemmas~\ref{lm:lcl-win} and~\ref{lm:lcl-loss}}\\
        &\le \left(2+2\alf + \dfrac{4\alf}{\bta}\right)\cc(\OPT) - 2\ttwo. \tag{Equation~\ref{eq:MC+OPT_2ttwo}}
    \end{align*}
\end{proof}

\begin{lemma}
\label{lm:sol-cnd-one-ub}
    We can either bound the best solution between $\solcnd$ and $\solone$ as follows:
    $$\min(\cc(\solcnd), \cc(\solone)) \le \left(2 + \alf + \dfrac{2\alf}{\bta}\right) \cc(\OPT) - (1-\ccnd)\left(1-\dfrac{3\cone+\ctwo}{\cfive}\right)\sum_{\opti \in \Btwo}\rmax(\opti),$$
    or $\solone$ is a $2-2\alf$ approximation of the optimal solution $\OPT$.
\end{lemma}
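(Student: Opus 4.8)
The plan is to obtain the bound by a simple weighted combination of three results already in hand: the upper bound on $\cc(\solcnd)$ from Lemma~\ref{lm:bound_sol_cnd}, the upper bound on $\cc(\solone)$ from Lemma~\ref{lm:bound_sol_one_for_cnd}, and the lower bound on the total autarkic growth $\Ycnd$ from Lemma~\ref{lm:total_candidate_time}. The key structural observation is that the bound on $\cc(\solcnd)$ carries a $+2\ttwo$ term while the bound on $\cc(\solone)$ carries a $-2\ttwo$ term, so averaging the two bounds with equal weight $\frac12$ cancels the dependence on $\ttwo$ completely, and it is precisely this cancellation that makes the argument go through.

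Concretely, I would first handle the trivial case: if $\solone$ is already a $(2-2\alf)$-approximation of $\OPT$, the statement holds. Otherwise Lemma~\ref{lm:bound_sol_one_for_cnd} applies and gives $\cc(\solone) \le (2+2\alf+\frac{4\alf}{\bta})\cc(\OPT) - 2\ttwo$, while Lemma~\ref{lm:bound_sol_cnd} gives $\cc(\solcnd) \le 2\cc(\OPT) - (1-\ccnd)\Ycnd + 2\ttwo$. Since $\min(\cc(\solcnd),\cc(\solone)) \le \frac12\bigl(\cc(\solcnd)+\cc(\solone)\bigr)$, averaging the two displayed inequalities and cancelling the $\ttwo$ terms yields
$$\min(\cc(\solcnd),\cc(\solone)) \le \Bigl(2+\alf+\frac{2\alf}{\bta}\Bigr)\cc(\OPT) - \frac{1-\ccnd}{2}\,\Ycnd.$$

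It remains to replace $\Ycnd$ by its lower bound. Because $\ccnd \in (0,1)$, the coefficient $\frac{1-\ccnd}{2}$ is strictly positive, so substituting $\Ycnd \ge 2\bigl(1-\frac{3\cone+\ctwo}{\cfive}\bigr)\sum_{\opti\in\Btwo}\rmax(\opti)$ from Lemma~\ref{lm:total_candidate_time} gives exactly the claimed bound. The only point requiring a word of care is the sign of $1-\frac{3\cone+\ctwo}{\cfive}$: if it happens to be negative then the right-hand side of the claim is at least $(2+\alf+\frac{2\alf}{\bta})\cc(\OPT)$, and the inequality still follows from the displayed bound together with $\Ycnd \ge 0$ and $\sum_{\opti\in\Btwo}\rmax(\opti)\ge 0$. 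I do not anticipate a genuine obstacle in this proof — it is essentially bookkeeping over sign conventions and the one case split on whether $\solone$ meets the target ratio; in Section~\ref{sec:final} the parameters are chosen so that $1-\frac{3\cone+\ctwo}{\cfive}>0$, which is what makes the subtracted term a real saving.
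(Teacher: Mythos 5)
Your proof follows exactly the same route as the paper's: case-split on whether $\solone$ is already a $(2-2\alf)$-approximation, then average the bounds of Lemmas~\ref{lm:bound_sol_cnd} and~\ref{lm:bound_sol_one_for_cnd} with weight $\tfrac12$ so the $\ttwo$ terms cancel, and finally substitute the lower bound on $\Ycnd$ from Lemma~\ref{lm:total_candidate_time}. Your extra discussion of the sign of $1-\tfrac{3\cone+\ctwo}{\cfive}$ is harmless but unnecessary: since $\tfrac{1-\ccnd}{2}\geq 0$, multiplying the inequality $\Ycnd\geq 2\bigl(1-\tfrac{3\cone+\ctwo}{\cfive}\bigr)\sum_{\opti\in\Btwo}\rmax(\opti)$ by $-\tfrac{1-\ccnd}{2}$ is monotone regardless of that sign.
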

\begin{proof}
    If $\cc(\solone) \le (2-2\alf)\cc(\OPT)$, the proof is done.
    Therefore, we assume that this condition does not hold.
    In this case, we can use the bound in Lemma~\ref{lm:bound_sol_one_for_cnd} for $\solone$.
    Therefore, we can prove our claim as follows:
    \begin{align*}
        \min(\cc(\solcnd), \cc(\solone))
        &\le \dfrac{\solcnd + \solone}{2} \\
        &\le \dfrac{\left(2\cc(\OPT) - (1 - \ccnd)\Ycnd + 2\ttwo\right) + \left(\left(2+2\alf + \dfrac{4\alf}{\bta}\right)\cc(\OPT) - 2\ttwo\right)}{2} \tag{Lemmas~\ref{lm:bound_sol_cnd} and~\ref{lm:bound_sol_one_for_cnd}}\\
        &= \left(2 + \alf + \dfrac{2\alf}{\bta}\right) \cc(\OPT) - \dfrac{1-\ccnd}{2}\Ycnd \\
        &\le \left(2 + \alf + \dfrac{2\alf}{\bta}\right) \cc(\OPT) - (1-\ccnd)\left(1-\dfrac{3\cone+\ctwo}{\cfive}\right)\sum_{\opti \in \Btwo}\rmax(\opti). \tag{Lemma~\ref{lm:total_candidate_time}}
    \end{align*}
\end{proof}

\section{Approximation for Steiner Forest: Proof of Theorem~\ref{thm:main_steiner_forest}}
\label{sec:final}
This section establishes that Algorithm~\ref{alg:main} has an approximation factor of $(2 - 2\alf)$ in polynomial time.  

\begin{proof}[Proof of Theorem~\ref{thm:main_steiner_forest}]
We claim that Algorithm~\ref{alg:main} deterministically solves the Steiner Forest problem in polynomial time with an approximation factor of $2 - \appx$.

Recall that our final solution is chosen from three possible solutions, as specified in Line~\ref{line:best-sol}.  
Our goal is to show that the minimum cost among these three solutions is at most $(2 - 2\alf) \cdot \cc(\OPT)$ for some value of $\alf\ge\frac{\alfbound}{2}$.  
To this end, we define the forest $\ALG$ as the best choice among $\solone$, $\solext$, and $\solcnd$.  

First, all three solutions satisfy the demands as explained below:  
\begin{itemize}  
    \item $\solone$ is obtained from Boosted Execution, where the fingerprint is larger than that of Legacy Execution. By Lemma~\ref{lm:large_fingerprint_satisfies}, this ensures that the demands are met. 
    \item $\solext$ is obtained from Extended-Boosted Execution, where the fingerprint is larger as that of Legacy Execution. This follows from the fact that $\tz_v \ge \tp_v \ge \ttt_v \ge \tplus_v$ for all $v \in \V$. Thus, by Lemma~\ref{lm:large_fingerprint_satisfies}, the demands are satisfied.  
    \item $\solcnd$ selects some paths, adds a zero-cost edge between their endpoints, and then applies Legacy Moat Growing.
    Since Legacy Moat Growing satisfies all demands, replacing the zero-cost edges with the corresponding selected paths also satisfies all demands.
\end{itemize}  

Second, all these three solutions are obtained in polynomial time since \BBG{}, \LocalSearch{}, \candidate{}, and \Extend{} run in polynomial time (see Lemma~\ref{lm:localsearch_polynomial_time};  Corollaries~\ref{cor:legacy_polynomial_time},~\ref{cor:extend-polynomial}, and~\ref{cor:autarkic_pair_polynomial_time}; and Figure~\ref{fig:overview}).

In the remainder, we prove the following:
$$
    \cc(\ALG) \le (2 - 2\alf) \cdot \cc(\OPT)
$$    

In the previous sections, we analyzed each solution and established several bounds. Specifically, we have proven that:  
\begin{enumerate}
    \item either $\solone$ is a $(2-2\alf)$-approximation solution or the upper bound for $\solone$ in Lemma~\ref{lm:solone-ub} holds,  
    \item either $\solone$ or $\solext$ is a $(2-2\alf)$-approximation solution, or the upper bound for $\solext$ in Lemma~\ref{lm:ext-final-bound} holds, and  
    \item either $\solone$ is a $(2-2\alf)$-approximation solution or the upper bound given in Lemma~\ref{lm:sol-cnd-one-ub} holds for the better solution between $\solcnd$ and $\solone$.  
\end{enumerate}  

First, if for the chosen $\alf$, either $\solone$, $\solext$, or $\solcnd$ provides a $(2 - 2\alf)$-approximation, then the theorem follows immediately:
\begin{align*}
\cc(\ALG) \le \min(\cc(\solone), \cc(\solext), \cc(\solcnd)) \le (2 - 2\alf) \cdot \cc(\OPT).
\end{align*}

Consequently, from the listed lemmas, we obtain three upper bounds: one for $\cc(\solone)$, one for $\cc(\solext)$, and one for $\min\big(\cc(\solone), \cc(\solcnd)\big)$. Since  
\[
\cc(\ALG) \le \min(\cc(\solone), \cc(\solext), \cc(\solcnd)),
\]  
these bounds also apply to $\cc(\ALG)$.  

We index these bounds according to their positions in the list above. The $i$-th bound follows the general form  
\[
    \cc(\ALG) \le \kappa_{i,\A} \cdot \cc(\A) + \kappa_{i,\Bone} \cdot \cc(\Bone) + \kappa_{i,\Btwo} \cdot \cc(\Btwo) + \kappa_{i,r} \cdot \sum_{\opti \in B} \rmax(\opti),
\]  
where each coefficient \(\kappa_{i,j}\) is a function of the algorithm's parameters and is determined by the \(i\)-th bound. 

We form a weighted sum of these three upper bounds by assigning a weight of \(\omega_i\) to the \(i\)-th upper bound such that 
$\sum_{i = 1}^{3} \omega_i = 1$ and $\omega_i \ge 0$. It follows that
\begin{align*}
    \cc(\ALG)
    &\le  \sum_{i = 1}^{3} \omega_i \cdot \bigg(\kappa_{i,\A} \cdot \cc(\A) + \kappa_{i,\B} \cdot \cc(\B) + \kappa_{i,r} \cdot \sum_{\opti \in B} \rmax(\opti)\bigg).
\end{align*}
Therefore, it suffices to proof the existence of {\em proper values} for the parameters of the algorithm and the analysis ($\alf, \bta, \ccnd, \cone, \varepsilon$, and $\wvar$), along with $\omega_1, \omega_2$, and $\omega_3$, such that
\begin{align}
\label{eq:final-ineq}
    \sum_{i = 1}^{3} \omega_i \cdot \bigg(\kappa_{i,\A} \cdot \cc(\A) + \kappa_{i,\B} \cdot \cc(\B) + \kappa_{i,r} \cdot \sum_{\opti \in B} \rmax(\opti)\bigg) \le (2 - 2\alf) \cdot \cc(\OPT)
\end{align}
It is important to note that \emph{proper values} must fit within the constraints of the algorithm’s parameters, as explained in the lemmas' statements, as well as the constraints related to the weighted sum $\omega$.

Observing Definitions~\ref{def:classify_A_B} and~\ref{def:classify_B}, we obtain $\cc(\OPT) = \cc(\A) + \cc(\Bone) + \cc(\Btwo)$. To proceed, we show that there exist suitable values for the parameters and $\omega$ such that  
\begin{align}  
    &\sum_{i=1}^{3} \omega_i \cdot \kappa_{i,\A} \leq (2 - 2\alf), \label{eq:for-A}\\  
    &\sum_{i=1}^{3} \omega_i \cdot \kappa_{i,\Bone} \leq (2 - 2\alf), \label{eq:for-B1}\\  
    &\sum_{i=1}^{3} \omega_i \cdot \kappa_{i,\Btwo} \leq (2 - 2\alf), \text{ and} \label{eq:for-B2}\\  
    &\sum_{i=1}^{3} \omega_i \cdot \kappa_{i,r} \leq 0.\label{eq:for-rmax}
\end{align}  
Summing these inequalities with the factors $\cc(\A)$, $\cc(\Bone)$, $\cc(\Btwo)$, and $\sum_{\opti \in B} \rmax(\opti)$ leads directly to inequality~\ref{eq:final-ineq}, which completes the proof.  

\newcommand{\maxterm}{\max\big(\frac{\alf}{\bta},\frac{\left(1 - \wpvar(1+\bta)(1-\wvar)\right)(\alf+\eps)}{\bta},\frac{\alf+(1-\alf-\alf/\bta)\left(1 - \wpvar(1+\bta)(1-\wvar)\right)\eps}{\bta}\big)}
\newcommand{\cndterm}{\left(2 + \alf + \dfrac{2\alf}{\bta}\right)}

For the left hand side of inequality~\ref{eq:for-A}, we have
\begin{align*}
\label{eq:open-A}
    \sum_{i=1}^{3} \omega_i \cdot \kappa_{i,\A} 
    &= 2\omega_1 \big((1 - \cone) + \dfrac{\alf}{\bta}\big)  \tag{Lemma~\ref{lm:solone-ub}}\\
    &+ 2\omega_2 (1 + \eps)(1  - \cone) \tag{Lemma~\ref{lm:ext-final-bound}}\\
    &+ 2\omega_2 \cdot \maxterm \tag{Lemma~\ref{lm:ext-final-bound}} \\
    &+ \omega_3\cndterm \tag{Lemma~\ref{lm:sol-cnd-one-ub}}.
\end{align*}

For the left hand side of inequality~\ref{eq:for-B1}, we have
\begin{align*}
\label{eq:open-B1}
    \sum_{i=1}^{3} \omega_i \cdot \kappa_{i,\Bone}
    &= 2\omega_1 (1 + \dfrac{\alf}{\bta}) \tag{Lemma~\ref{lm:solone-ub}}\\
    &+ 2\omega_2 \big(1 + \eps - \eps\wpvar(1 + (1-2\wvar)(1-\cone)) + \cfive\wpvar(1+\wvar\frac{1-\bta}{5+\bta})\big) \tag{Lemma~\ref{lm:ext-final-bound}}\\
    &+ 2\omega_2 \cdot \maxterm \tag{Lemma~\ref{lm:ext-final-bound}} \\
    &+ \omega_3\cndterm \tag{Lemma~\ref{lm:sol-cnd-one-ub}}.   
\end{align*}

\newcommand\Tstrut{\rule{0pt}{3.5ex}}         %
\newcommand\Bstrut{\rule[-2.3ex]{0pt}{0pt}}   %
\begin{table}[ht]
    \centering
    \begin{tabular}{ll}
        \toprule
        Constraint & \\
        \midrule
            $\begin{aligned}[t]
                0 < \bta, \eps, \ccnd < 1
            \end{aligned}$
            \Tstrut\Bstrut
            & Algorithm~\ref{alg:main} Assumptions\\ 
        \hline 
            $\begin{aligned}[t]
                0 < \wvar \le \frac{1}{2}
            \end{aligned}$
            \Tstrut\Bstrut
            & Lemma~\ref{lm:ext-final-bound}\\
        \hline 
            $\begin{aligned}[t]
                \wpvar \ge 1/(1+\frac{1-\bta}{5+\bta}\wvar)
            \end{aligned}$
            \Tstrut\Bstrut
            & Lemma~\ref{lm:ext-final-bound}\\ 
        \hline
            $\begin{aligned}[t]
                \wpvar \le (1+\eps)/(1 + 3\eps - \eps\wvar\frac{9+3\bta}{5+\bta})
            \end{aligned}$
            \Tstrut\Bstrut
            & Lemma~\ref{lm:ext-final-bound}\\
        \hline
            $\begin{aligned}[t]
                \ctwo = 4\cone \dfrac{1+\bta}{1-\bta}
            \end{aligned}$
            \Tstrut\Bstrut
            & Definition~\ref{def:ctwo_cfive}\\ 
        \hline
            $\begin{aligned}[t]
                \cfive = \dfrac{\cone}{\ccnd} \left(4 + 3\ccnd + 4 (1 + \ccnd)\dfrac{1+\bta}{1 - \bta}\right) 
            \end{aligned}$
            \Tstrut\Bstrut
            & Definition~\ref{def:ctwo_cfive}\\ 
        \hline
            $\begin{aligned}[t]
                \sum_{i=1}^{3} \omega_i = 1
            \end{aligned}$
            \Tstrut\Bstrut
            & Weighted Sum\\ 
        \hline
            $\begin{aligned}[t]
                \sum_{i=1}^{3} \omega_i \cdot \kappa_{i,\A} \leq (2 - 2\alf)
            \end{aligned}$
            \Tstrut\Bstrut
            & Inequality~\ref{eq:for-A}, \hyperref[eq:open-A]{Explicit form}\\
        \hline
            $\begin{aligned}[t]
                \sum_{i=1}^{3} \omega_i \cdot \kappa_{i,\Bone} \leq (2 - 2\alf)
            \end{aligned}$
            \Tstrut\Bstrut
            & Inequality~\ref{eq:for-B1}, \hyperref[eq:open-B1]{Explicit form}\\
        \hline
            $\begin{aligned}[t]
                \sum_{i=1}^{3} \omega_i \cdot \kappa_{i,\Btwo} \leq (2 - 2\alf)
            \end{aligned}$
            \Tstrut\Bstrut
            & Inequality~\ref{eq:for-B2}, \hyperref[eq:open-B2]{Explicit form}\\
        \hline
            $\begin{aligned}[t]
                \sum_{i=1}^{3} \omega_i \cdot \kappa_{i,r} \leq 0
            \end{aligned}$
            \Tstrut\Bstrut
            & Inequality~\ref{eq:for-rmax}, \hyperref[eq:open-rmax]{Explicit form}\\
        \bottomrule
        
    \end{tabular}
    \caption{\emph{Conditions for a $(2-2\alf)$-Approximation.}
If we can determine values for the previously defined parameters that satisfy the given constraints, the existence of a $(2-2\alf)$-approximation for PCSF is guaranteed.}
    \label{tab:inequalities}
\end{table}

For the left hand side of inequality~\ref{eq:for-B2}, we have 
\begin{align*}
\label{eq:open-B2}
    \sum_{i=1}^{3} \omega_i \cdot \kappa_{i,\Btwo}
    &= 2\omega_1 (1 + \dfrac{\alf}{\bta}) \tag{Lemma~\ref{lm:solone-ub}}\\
    &+ 2\omega_2 (1 + \eps - \eps\wpvar(1 + (1-2\wvar)(1-\cone))) \tag{Lemma~\ref{lm:ext-final-bound}}\\
    &+ 2\omega_2 \cdot \maxterm \tag{Lemma~\ref{lm:ext-final-bound}} \\
    &+ \omega_3\cndterm \tag{Lemma~\ref{lm:sol-cnd-one-ub}}.   
\end{align*}

Finally, for the left hand side of inequality~\ref{eq:for-rmax}, we have
\begin{align*}
\label{eq:open-rmax}
    \sum_{i=1}^{3} \omega_i \cdot \kappa_{i,r}
    &= 2\omega_1 \cdot 0 \tag{Lemma~\ref{lm:solone-ub}}\\
    &+ 2\omega_2 \cdot \wpvar(1+\wvar\frac{1-\bta}{5+\bta}) \tag{Lemma~\ref{lm:ext-final-bound}}\\
    &- \omega_3 (1-\ccnd)\left(1-\dfrac{3\cone+\ctwo}{\cfive}\right) \tag{Lemma~\ref{lm:sol-cnd-one-ub}}.   
\end{align*}

Now, we just need to provide values for the algorithm's parameters and weights such that all these conditions (Table~\ref{tab:inequalities}) hold. This can be viewed as a numerical optimization problem with the goal of maximizing $\alf$. However, for the purpose of this theorem, we must provide values that satisfy the constraints while ensuring $\alf \ge \frac{\appx}{2}$.

To this end, we provide values for the algorithm's parameters and weights $\omega_i$ in Table~\ref{tab:parameters}.
\end{proof}
\begin{table}[H]
    \centering
    \renewcommand{\arraystretch}{1.15} %
    \begin{tabular}{ccc}
        \toprule
        Parameter & Value & Fractional Value \\
        \midrule
        $\alf$   & $5.000000000000 \times 10^{-12}$ & $\frac{1}{200000000000}$ \\ \hline
        $\bta$   & $1.000000000000 \times 10^{-1}$ & $\frac{1}{10}$ \\ \hline
        $\cone$  & $1.183079480386 \times 10^{-8}$ & $\frac{3575638326237933}{302231454903657293676544}$ \\ \hline
        $\eps$ & $3.438743241429 \times 10^{-7}$ & $\frac{6495602330607721}{18889465931478580854784}$ \\ \hline
        $\wvar$  & $1.000000000000 \times 10^{-2}$ & $\frac{1}{100}$ \\ \hline
        $\wpvar$  & $9.999993185227 \times 10^{-1}$ & $\frac{32112103126037549486258500}{32112125009721801303670549}$ \\ \hline
        $\ccnd$  & $5.000000000000 \times 10^{-1}$ & $\frac{1}{2}$ \\ \hline
        $\ctwo$  & $5.783944126333 \times 10^{-8}$ & $\frac{13110673862872421}{226673591177742970257408}$ \\ \hline
        $\cfive$  & $3.036570666325 \times 10^{-7}$ & $\frac{91774717040106947}{302231454903657293676544}$ \\ \hline
        $\omega_1$ & $4.960317460317 \times 10^{-1}$ & $\frac{125}{252}$ \\ \hline
        $\omega_2$ & $7.936507936508 \times 10^{-3}$ & $\frac{1}{126}$ \\ \hline
        $\omega_3$ & $4.960317460317 \times 10^{-1}$ & $\frac{125}{252}$ \\
        \bottomrule
    \end{tabular}
    \caption{\emph{Parameters and Values.} This table presents values for the algorithm's parameters that ensure the provided algorithm is an $(2-\appx)$-approximation. Scientific notation provides an estimate for each parameter, while fractional representation helps verify the correctness of constraints without losing precision.}
    \label{tab:parameters}
\end{table}

\section{Acknowledgements}
This work is partially supported by DARPA QuICC, ONR MURI 2024 award on Algorithms, Learning, and Game Theory, Army-Research Laboratory (ARL) grant W911NF2410052, NSF AF:Small grants 2218678, 2114269, 2347322.

\bibliographystyle{alpha}
\bibliography{references}

\appendix

\section{Tight Examples and Counterexamples}

Here, we provide some instances to give a better intuition on different components of our algorithm.

\subsection{The Strength of Local Search}
\label{eg:grid}
\newcommand{\supertiny}[1]{\scalebox{0.6}{#1}}
\tikzset{
  triangle/.style args={#1}{
    insert path={
        ++(90:1.2*#1)
      -- ++($(90:-1.2*#1)+(210:1.2*#1)$)
      -- ++($(210:-1.2*#1)+(330:1.2*#1)$)
      -- cycle %
    }
  }
}
\tikzset{
  pentagon/.style args={#1}{
    insert path={
      ++(90:1.05*#1)
      -- ++($(90:-1.05*#1)+(162:1.05*#1)$)
      -- ++($(162:-1.05*#1)+(234:1.05*#1)$)
      -- ++($(234:-1.05*#1)+(306:1.05*#1)$)
      -- ++($(306:-1.05*#1)+(18:1.05*#1)$)
      -- cycle
    }
  }
}
\tikzset{
  square/.style args={#1}{
    insert path={
        ++(45:1.1*#1)
      -- ++($(45:-1.1*#1)+(135:1.1*#1)$)
      -- ++($(135:-1.1*#1)+(225:1.1*#1)$)
      -- ++($(225:-1.1*#1)+(315:1.1*#1)$)
      -- cycle %
    }
  }
}
\tikzset{
  star/.style args={#1}{
    insert path={
      ++(90:1.1*#1)
      \foreach \a in {90,162,234,306,378} {
        -- ++($(\a:-1.1*#1) + (\a+36:0.5*#1)$)
        -- ++($(\a+36:-0.5*#1) + (\a+72:1.1*#1)$)
      }
      -- cycle
    }
  }
}
\begin{figure}[ht]
    \centering
    \begin{subfigure}{0.32\textwidth}
    \centering
\begin{tikzpicture}[scale=0.7]
\draw[White] (4,-5.8) -- (4, 0.85);
\def\dem{Red!70}
\def\ter{White}
\def\col{Black!70}
\def\noder{0.1cm}

\def\n{4}
\def\m{4}
\pgfmathtruncatemacro{\last}{\n+1}

\def\tercl#1{%
  \ifcase#1
    White\or
    Green!20\or
    Sepia!20\or
    Plum!20\or
    Goldenrod!20
  \fi
}
\def\tersh#1{%
  \ifcase#1
    none\or
    star\or
    square\or
    pentagon\or
    triangle\or
  \fi
}

\def\angle{5}
\def\smallRotate{40}

\def\petalLength{0.5}
\def\forkLength{\petalLength/3}
\def\heightStep{1.2}
\def\widthStep{1.2}

\foreach \i in {1,...,\last} {
    \foreach \j in {0,...,\m} {
        \coordinate (T\i\j) at (\i*\widthStep, -\j*\heightStep);
    }
}

\foreach \i in {1,...,\n} {
    \foreach \j in {1,...,\m} {
        \draw[\col] (T\i0) to[out=-90-(\j*\angle), in=100] node[name=MV\i\j]{} (T\i\j);
        \pgfmathtruncatemacro{\deg}{180+((\n-\i+1)*\angle)}
        \draw[\col] (T\last\j) to[out=\deg, in=-10] node[name=MH\i\j]{} (T\i\j);
    }
}
\foreach \j in {1,...,\m} {
    \draw[\col] (T\last0) to[out=-90+(\j*\angle), in=80] node[name=ML\j]{} (T\last\j);
}
\foreach \i in {1,...,\n} {
    \pgfmathtruncatemacro{\deg}{180-((\n-\i+1)*\angle)}
    \draw[\col] (T\i0) to[out=10, in=\deg] node[name=MF\i]{} (T\last0);
}

\node[\col, below=-2pt] at (MH1\m) {\footnotesize{$1+\xi$}};
\node[\col, left=-2pt] at (MV1\m) {\footnotesize{2}};

\foreach \i in {1,...,\n} {
    \foreach \j in {0,...,\m} {
        \draw[\col, fill=\col] (T\i\j) [\tersh{\i}=\noder];
    }
}
\foreach \j in {0,...,\m} {
    \draw[\col, fill=\ter] (T\last\j) circle(\noder);
}

\end{tikzpicture}
\caption{}
\label{fig:grid-inp}
\end{subfigure}
\hfill
\begin{subfigure}{0.32\textwidth}
    \centering
\begin{tikzpicture}[scale=0.7]
\draw[White] (4,-5.8) -- (4, 0.85);
\def\dem{Red!70}
\def\ter{White}
\def\col{Black!70}
\def\noder{0.1cm}

\def\n{4}
\def\m{4}
\pgfmathtruncatemacro{\last}{\n+1}

\def\tercl#1{%
  \ifcase#1
    White\or
    Green!20\or
    Sepia!20\or
    Plum!20\or
    Goldenrod!20
  \fi
}
\def\tersh#1{%
  \ifcase#1
    none\or
    star\or
    square\or
    pentagon\or
    triangle\or
  \fi
}

\def\angle{5}
\def\smallRotate{40}

\def\petalLength{0.5}
\def\forkLength{\petalLength/3}
\def\heightStep{1.2}
\def\widthStep{1.2}

\foreach \i in {1,...,\last} {
    \foreach \j in {0,...,\m} {
        \coordinate (T\i\j) at (\i*\widthStep, -\j*\heightStep);
    }
}

\def\margin{1mm}
\def\dashfill{0.25mm}
\def\dashspace{0.05mm}
\def\dashsize{0.3mm}

\foreach \i in {1,...,\n} {
    \foreach \j in {1,...,\m} {
        \pgfmathtruncatemacro{\deg}{90+(\j*\angle)}
        \draw[\dem, line width=1.5pt] (T\i0) to[out=-\deg, in=100] (T\i\j);
        \pgfmathtruncatemacro{\deg}{180+((\n-\i+1)*\angle)}
        \draw[\col,postaction={decorate, decoration={
    markings,%
    mark=between positions 0.5cm and 1 step 0.3mm with {\draw[\dem, line width=1.5pt] (0,0) -- (0.4mm, 0);}
}}] (T\last\j) to[out=\deg, in=-10] (T\i\j);
    }
}
\foreach \j in {1,...,\m} {
    \pgfmathtruncatemacro{\deg}{-90+(\j*\angle)}
    \draw[\col] (T\last0) to[out=\deg, in=80] (T\last\j);
}
\foreach \i in {1,...,\n} {
    \pgfmathtruncatemacro{\deg}{180-((\n-\i+1)*\angle)}
    \draw[\col,postaction={decorate, decoration={
    markings,%
    mark=between positions 0 and 1-0.5cm step 0.3mm with {\draw[\dem, line width=1.5pt] (0,0) -- (0.4mm, 0);}
}}] (T\i0) to[in=\deg, out=10] (T\last0);
}

\foreach \i in {1,...,\n} {
    \foreach \j in {1,...,\m} {
        \draw[\col, fill=\col] (T\i\j) [\tersh{\i}=\noder];
        \node[\dem, above right=-2pt] at (T\i\j) {\supertiny{1}};
    }
    \draw[\col, fill=\col] (T\i0) [\tersh{\i}=\noder];
    \node[\dem, below right=-2pt] at (T\i0) {\supertiny{1}};
}
\foreach \j in {0,...,\m} {
    \draw[\col, fill=\ter] (T\last\j) circle(\noder);
}
\end{tikzpicture}
\caption{}
\label{fig:grid-2}
\end{subfigure}
\hfill
\begin{subfigure}{0.32\textwidth}
    \centering
\begin{tikzpicture}[scale=0.7]
\draw[White] (4,-5.8) -- (4, 0.85);
\def\dem{Blue!70}
\def\ter{White}
\def\col{Black!70}
\def\noder{0.1cm}

\def\n{4}
\def\m{4}
\pgfmathtruncatemacro{\last}{\n+1}

\def\tercl#1{%
  \ifcase#1
    White\or
    Green!20\or
    Sepia!20\or
    Plum!20\or
    Goldenrod!20
  \fi
}
\def\tersh#1{%
  \ifcase#1
    none\or
    star\or
    square\or
    pentagon\or
    triangle\or
  \fi
}

\def\angle{5}
\def\smallRotate{40}

\def\petalLength{0.5}
\def\forkLength{\petalLength/3}
\def\heightStep{1.2}
\def\widthStep{1.2}

\foreach \i in {1,...,\last} {
    \foreach \j in {0,...,\m} {
        \coordinate (T\i\j) at (\i*\widthStep, -\j*\heightStep);
    }
}

\def\margin{1mm}
\def\dashfill{0.25mm}
\def\dashspace{0.05mm}
\def\dashsize{0.3mm}

\foreach \i in {1,...,\n} {
    \foreach \j in {1,...,\m} {
        \pgfmathtruncatemacro{\deg}{90+(\j*\angle)}
        \draw[\dem, line width=1.5pt] (T\i0) to[out=-\deg, in=100] (T\i\j);
        \pgfmathtruncatemacro{\deg}{180+((\n-\i+1)*\angle)}
        \draw[\dem, line width=1.5pt] (T\last\j) to[out=\deg, in=-10] (T\i\j);
    }
}
\foreach \j in {1,...,\m} {
    \pgfmathtruncatemacro{\deg}{-90+(\j*\angle)}
    \draw[\dem, line width=1.5pt] (T\last0) to[out=\deg, in=80] (T\last\j);
}
\foreach \i in {1,...,\n} {
    \pgfmathtruncatemacro{\deg}{180-((\n-\i+1)*\angle)}
    \draw[\dem, line width=1.5pt] (T\i0) to[in=\deg, out=10] (T\last0);
}

\foreach \i in {1,...,\n} {
    \foreach \j in {1,...,\m} {
        \draw[\col, fill=\col] (T\i\j) [\tersh{\i}=\noder];
        \node[\dem, above right=-2pt] at (T\i\j) {\supertiny{$\frac{5+3\xi}{8}$}};
    }
    \draw[\col, fill=\col] (T\i0) [\tersh{\i}=\noder];
    \node[\dem, below right=-2pt] at (T\i0) {\supertiny{$\frac{5+3\xi}{8}$}};
}
\draw[\col, fill=\ter] (T\last0) circle(\noder);%
\foreach \j in {1,...,\m} {
    \draw[\col, fill=\ter] (T\last\j) circle(\noder); %
}
\end{tikzpicture}
\caption{}
\label{fig:grid-one-step}
\end{subfigure}
\caption{
An illustration of an instance on a grid-like graph, which presents a challenging instance for local search algorithms 
that prioritize minimizing total cost. 
(a) The grid structure of the instance: edges within each column have length \(2\), and edges within each row 
have length \(1 + \xi\), with sufficiently small $\xi >0$. In the corresponding Steiner Forest instance, each column—except the rightmost one—
must be connected. 
(b) The outcome of \BBGr{} on this instance, which fails to achieve a better than \(2\) approximation. 
The vertical edges in each column (except the rightmost) are selected, while all horizontal edges remain 
partially uncolored. Vertex labels indicate a possible assignment of growth values. 
(c) The outcome of our local search algorithm, given appropriate values of \( \bta \). 
All horizontal edges are selected, and vertical edges are fully colored; however, only a subset forming a 
spanning tree is ultimately chosen. The growth assignments differ from part (b), reflecting the total 
\( \win \) value captured by our local search.
}
\label{fig:grid}
\end{figure}
 
Intuitively, a local search algorithm for the Steiner Forest problem might start with Legacy Moat Growing algorithm and try possible moves to improve the overall cost of the solution. One intuitive set of moves for such an algorithm would be our choice of Boost actions, which keep vertices active for longer, even if they have no unsatisfied demands.
One important argument here is to determine whether keeping a vertex active for a longer period of time is beneficial. 
A natural idea is to compare the cost of the solution before and after a boost action. 
However, this example demonstrates that the idea does not work. Instead, it led us to observe a different objective: the total growth of active sets decreases after applying a boost action. 
Since twice this value serves as a bound for our solution, this observation resulted in an improvement that we leveraged to obtain our results.

\paragraph{No Cost-Based Boost Exists.}
For this example, we prove that Legacy Moat Growing results in a solution with a cost of at least $2-\xi$ times the optimal solution for any $\xi>0$ such that no boost action improving the cost is possible. 
Then, the boost-based local search algorithm focusing on the cost of the solution will fail to produce a better than $2$ approximation.

Consider the example illustrated in Figure  \ref{fig:grid} with $n$ rows and $m$ columns and parameter $\xi$. In this graph, vertices in each row are connected to the vertex in the last column with edges of length $1+\xi$, and vertices in each column are connected to the first vertex in their column using edges of length $2$. The demands for this instance require that each column except for the rightmost one form a connected component. 

In the initial moat growing run on this instance, each vertex in all columns except the last one is active for a duration of $1$, and each of these columns will form a connected component of the solution with cost $2(n-1)$, for a total cost of $2(n-1)(m-1)$. On the other hand, for small enough values of $\xi$ and large enough $n$ and $m$, the optimal solution chooses the edges in each row plus the edges of one column, achieving a total cost of $2(n-1)+(1+\xi)n(m-1)$. Therefore, this solution is a 
\[
\frac{2(n-1)(m-1)}{2(n-1)+(1+\xi)n(m-1)}=2 - \frac{4(n-1) + 2\xi(n)(m-1) + 2(m-1)}{2(n-1)+(1+\xi)n(m-1)}
\]
approximation of the optimal, which given appropriate values of $n$, $m$, and $\xi$, this solution will not be a better than $2-\xi'$ approximation. 

Next, we show that no boost action will lead to an improvement in terms of solution cost.
First, allowing any vertex not in the rightmost column to remain active longer cannot help achieve a lower solution, as the algorithm's behavior would not change before all demands are satisfied using the same forest, and additional growth afterward cannot decrease the cost of the solution. Additionally, for any vertex $v$ on the rightmost column, allowing it to grow for a duration of less than $\xi$ will not change the algorithm's behavior, as no vertex in the other columns will reach $v$ before its demand is satisfied. 

Now, if vertex $v$ in the rightmost column is allowed to grow for at least $\xi$, all the vertices in the same row as $v$ will reach $v$ before being connected to their demands and then grow together. However, the final solution will not improve. To see why this is true, note that any other vertex in the rightmost column is separated from any active set by edges of at least $1+\xi$ and, therefore, will not be connected to any vertex before this time. Since all demands are satisfied by time $1$, adding any additional edges would not affect the final solution produced by the moat growing algorithm. Now, in the subgraph containing only one vertex of the last column and all the vertices in the other columns, the lowest cost forest that satisfies the demands is the forest that connects each column separately, which is the same as the solution found by the moat growing algorithm. Therefore, any boost action cannot improve the total cost of the solution.

\paragraph{Effectiveness of Our Local Search.}
In contrast, our local search algorithm focusing on minimizing $\ybase$ values can find the optimal solution given an appropriate value of $\bta$. To see why, consider a boost action of $\frac{1+\xi}{2}$ on any vertex in the rightmost column. Then, all vertices in the same row are connected by time $\frac{1+\xi}{4}$. Then, this combined set grows until moment $1$, which is an additional duration of $\frac{3-\xi}{4}$. This means that the total growth of 
\(
m
\)
for this row is reduced to 
\[
m\frac{1+\xi}{4} + \frac{3-\xi}{4}=\frac{(m+3)+(m-1)\xi}{4}.
\]
This means we have a boost action with a win of
\[
\frac{(3-\xi)(m-1)}{4}
\]
while incurring a loss of 
\[
\frac{1+\xi}{4}.
\]
Now, for appropriate values of $\bta$, this will be a valuable boost action. Additionally, applying the boost action for each row does not detract from the value for the other rows, ensuring that the optimal solution can be found by the local search algorithm. Figure \ref{fig:grid} also illustrates assignments before and after applying our local search, showcasing how the moves are beneficial.

\subsection{A Lower Bound for the Local Search Algorithm}
\label{eg:binary15}

While our algorithm achieves a general approximation ratio better than 2 for the Steiner Tree problem, the example we have discovered demonstrates a lower bound of $3/2$, as described below.
Note that in this example, there is no boost action with positive win. However, the condition for a boost action to be valuable is $\Win \ge (1 + \bta) \cdot \Loss$. This suggests that, for a fixed $\bta$, one could construct a tighter example that yields a higher lower bound.

\paragraph{Constructing the Instance.}

Consider a complete binary tree with a clear distinction between left and right children. This tree has $2^h$ leaves at the lowest layer. The edge weights decrease exponentially from bottom to top:

\begin{itemize}
    \item The bottom-layer edges connecting leaves to their parents have weight $1$.
    \item Each upper-layer edge weight is exactly half the weight of the edges directly below it (i.e., the immediate parents have edges of weight $1/2$, their parents $1/4$, and so forth).
    \item Additionally, every two consecutive leaves (from leftmost to rightmost) are directly connected with edges of weight $(2 - \xi)$.
    \item All the leaves are the set of terminal vertices.
\end{itemize}

Clearly, the given structure provides a Steiner Tree instance (See Figure \ref{fig:binary-input}).

\begin{figure}
    \centering
    \begin{subfigure}{0.95\textwidth}
    \centering
    \begin{tikzpicture}[
  scale=0.7,
  level distance=1.5cm, 
  level 1/.style={sibling distance=8cm},
  level 2/.style={sibling distance=4cm},
  level 3/.style={sibling distance=2cm},
  edge from parent/.style={draw=\col} %
]
    \def\dem{Red!70}
            \def\ter{White}
            \def\col{Black!70}
            \def\noder{0.1cm}

  \node[circle,draw=\col,fill=\ter,inner sep=0pt,minimum size=2*\noder] (root) {}
    child {node[circle,draw=\col,fill=\ter,inner sep=0pt,minimum size=2*\noder] (A) {}
      child {node[circle,draw=\col,fill=\ter,inner sep=0pt,minimum size=2*\noder] (B) {}
        child {node[circle,draw=\col,fill=\col,inner sep=0pt,minimum size=2*\noder] (L1) {}}
        child {node[circle,draw=\col,fill=\col,inner sep=0pt,minimum size=2*\noder] (L2) {}}
      }
      child {node[circle,draw=\col,fill=\ter,inner sep=0pt,minimum size=2*\noder] (C) {}
        child {node[circle,draw=\col,fill=\col,inner sep=0pt,minimum size=2*\noder] (L3) {}}
        child {node[circle,draw=\col,fill=\col,inner sep=0pt,minimum size=2*\noder] (L4) {}}
      }
    }
    child {node[circle,draw=\col,fill=\ter,inner sep=0pt,minimum size=2*\noder] (D) {}
      child {node[circle,draw=\col,fill=\ter,inner sep=0pt,minimum size=2*\noder] (E) {}
        child {node[circle,draw=\col,fill=\col,inner sep=0pt,minimum size=2*\noder] (L5) {}}
        child {node[circle,draw=\col,fill=\col,inner sep=0pt,minimum size=2*\noder] (L6) {}}
      }
      child {node[circle,draw=\col,fill=\ter,inner sep=0pt,minimum size=2*\noder] (F) {}
        child {node[circle,draw=\col,fill=\col,inner sep=0pt,minimum size=2*\noder] (L7) {}}
        child {node[circle,draw=\col,fill=\col,inner sep=0pt,minimum size=2*\noder] (L8) {}}
      }
    };

  \draw[\col] (root) -- ++(+0.5, +0.5) node[above right] {$\iddots$};

  \def\threshold{1.55}
  \foreach \i in {1,2,3,4,5,6,7} {
    \pgfmathtruncatemacro{\j}{\i+1}
    \coordinate (C\i) at ($(L\i)!0.5!(L\j) + (270:\threshold)$);
    \draw[\col] (L\i) .. controls (C\i) .. (L\j);
  }
  
  \node[\col] at (C1) {$2 - \xi$};
  \node[\col, above left] at ($(L1)!0.5!(B)$) {1};
  \node[\col, above left] at ($(B)!0.5!(A)$) {$\frac{1}{2}$};
  \node[\col, above left] at ($(A)!0.5!(root)$) {$\frac{1}{4}$};
\end{tikzpicture}
\caption{}
\label{fig:binary-input}
\end{subfigure}
\begin{subfigure}{0.95\textwidth}
    \centering
    \begin{tikzpicture}[
  scale=0.7,
  level distance=1.5cm, 
  level 1/.style={sibling distance=8cm},
  level 2/.style={sibling distance=4cm},
  level 3/.style={sibling distance=2cm},
  edge from parent/.style={draw=\col} %
]
    \def\dem{Red!70}
            \def\ter{White}
            \def\col{Black!70}
            \def\noder{0.1cm}

  \node[circle,draw=\col,fill=\ter,inner sep=0pt,minimum size=2*\noder] (root) {}
    child {node[circle,draw=\col,fill=\ter,inner sep=0pt,minimum size=2*\noder] (A) {}
      child {node[circle,draw=\col,fill=\ter,inner sep=0pt,minimum size=2*\noder] (B) {}
        child {node[circle,draw=\col,fill=\col,inner sep=0pt,minimum size=2*\noder] (L1) {}}
        child {node[circle,draw=\col,fill=\col,inner sep=0pt,minimum size=2*\noder] (L2) {}}
      }
      child {node[circle,draw=\col,fill=\ter,inner sep=0pt,minimum size=2*\noder] (C) {}
        child {node[circle,draw=\col,fill=\col,inner sep=0pt,minimum size=2*\noder] (L3) {}}
        child {node[circle,draw=\col,fill=\col,inner sep=0pt,minimum size=2*\noder] (L4) {}}
      }
    }
    child {node[circle,draw=\col,fill=\ter,inner sep=0pt,minimum size=2*\noder] (D) {}
      child {node[circle,draw=\col,fill=\ter,inner sep=0pt,minimum size=2*\noder] (E) {}
        child {node[circle,draw=\col,fill=\col,inner sep=0pt,minimum size=2*\noder] (L5) {}}
        child {node[circle,draw=\col,fill=\col,inner sep=0pt,minimum size=2*\noder] (L6) {}}
      }
      child {node[circle,draw=\col,fill=\ter,inner sep=0pt,minimum size=2*\noder] (F) {}
        child {node[circle,draw=\col,fill=\col,inner sep=0pt,minimum size=2*\noder] (L7) {}}
        child {node[circle,draw=\col,fill=\col,inner sep=0pt,minimum size=2*\noder] (L8) {}}
      }
    };

  \draw[\col] (root) -- ++(+0.5, +0.5) node[above right] {$\iddots$};

  \def\threshold{1.55}
  \foreach \i in {1,2,3,4,5,6,7} {
    \pgfmathtruncatemacro{\j}{\i+1}
    \coordinate (C\i) at ($(L\i)!0.5!(L\j) + (270:\threshold)$);
    \draw[\dem, thick] (L\i) .. controls (C\i) .. (L\j);
  }

  \foreach \i/\j in {1/B,2/B,3/C,4/C,5/E,6/E,7/F,8/F} {
  \draw[\dem, dashed] (L\i) circle (\threshold);
  \draw[\dem, thick] (L\i) -- ($(L\i)!0.86!(\j)$);
  }
\end{tikzpicture}
    \caption{}
    \label{fig:binary-2}
\end{subfigure}
\begin{subfigure}{0.95\textwidth}
    \centering
    \begin{tikzpicture}[
  scale=0.7,
  level distance=1.5cm, 
  level 1/.style={sibling distance=8cm},
  level 2/.style={sibling distance=4cm},
  level 3/.style={sibling distance=2cm},
  edge from parent/.style={draw=\col} %
]
    \def\dem{Blue!70}
            \def\ter{White}
            \def\col{Black!70}
            \def\noder{0.1cm}

  \node[circle,draw=\col,fill=\ter,inner sep=0pt,minimum size=2*\noder] (root) {}
    child {node[circle,draw=\col,fill=\ter,inner sep=0pt,minimum size=2*\noder] (A) {}
      child {node[circle,draw=\col,fill=\ter,inner sep=0pt,minimum size=2*\noder] (B) {}
        child {node[circle,draw=\col,fill=\col,inner sep=0pt,minimum size=2*\noder] (L1) {}}
        child {node[circle,draw=\col,fill=\col,inner sep=0pt,minimum size=2*\noder] (L2) {}}
      }
      child {node[circle,draw=\col,fill=\ter,inner sep=0pt,minimum size=2*\noder] (C) {}
        child {node[circle,draw=\col,fill=\col,inner sep=0pt,minimum size=2*\noder] (L3) {}}
        child {node[circle,draw=\col,fill=\col,inner sep=0pt,minimum size=2*\noder] (L4) {}}
      }
    }
    child {node[circle,draw=\col,fill=\ter,inner sep=0pt,minimum size=2*\noder] (D) {}
      child {node[circle,draw=\col,fill=\ter,inner sep=0pt,minimum size=2*\noder] (E) {}
        child {node[circle,draw=\col,fill=\col,inner sep=0pt,minimum size=2*\noder] (L5) {}}
        child {node[circle,draw=\col,fill=\col,inner sep=0pt,minimum size=2*\noder] (L6) {}}
      }
      child {node[circle,draw=\col,fill=\ter,inner sep=0pt,minimum size=2*\noder] (F) {}
        child {node[circle,draw=\col,fill=\col,inner sep=0pt,minimum size=2*\noder] (L7) {}}
        child {node[circle,draw=\col,fill=\col,inner sep=0pt,minimum size=2*\noder] (L8) {}}
      }
    };

  \draw[\col] (root) -- ++(+0.5, +0.5) node[above right] {$\iddots$};

  \def\threshold{1.55}
  \foreach \i in {1,2,3,4,5,6,7} {
    \pgfmathtruncatemacro{\j}{\i+1}
    \coordinate (C\i) at ($(L\i)!0.5!(L\j) + (270:\threshold)$);
    \draw[\col] (L\i) .. controls (C\i) .. (L\j) coordinate[pos=0.25] (M\i) coordinate[pos=0.75] (N\i);
  }
  \foreach \i in {1,2,3} {
    \pgfmathtruncatemacro{\j}{\i+1}
                \draw[\dem, thick] (L\i) -- (M\i);
                \draw[\dem, thick] (L\j) -- (N\i);
            }
    \foreach \i in {4,5,6,7} {
    \pgfmathtruncatemacro{\j}{\i+1}
    \coordinate (C\i) at ($(L\i)!0.5!(L\j) + (270:\threshold)$);
    \draw[\dem, thick] (L\i) .. controls (C\i) .. (L\j);
  }
  \foreach \i/\j in {E/D, F/D} {
    \draw[\dem, thick] (\j) -- ($(\j)!0.92!(\i)$);
  }

  \foreach \i/\j in {5/E,6/E,7/F,8/F} {
  \draw[\dem, dashed] (L\i) circle (\threshold);
  \draw[\dem, thick] (L\i) -- ($(L\i)!0.86!(\j)$);
  }

    \def\r{1.1}
\newcommand{\DrawArcFromPoint}[5]{%
    \draw[#1] (#2) ++(#4:#3) arc (#4:#5:#3);
}
  \DrawArcFromPoint{\dem, dashed}{root}{3}{12}{166}
  \DrawArcFromPoint{\dem, dashed}{root}{3}{-105}{-51}
  \DrawArcFromPoint{\dem, dashed}{A}{2.5}{65}{186}
  \DrawArcFromPoint{\dem, dashed}{B}{1.4}{120}{200}
  \DrawArcFromPoint{\dem, dashed}{C}{\r}{-0}{12}
  \DrawArcFromPoint{\dem, dashed}{D}{2.3}{117}{-159}
  \DrawArcFromPoint{\dem, dashed}{L1}{\r}{110}{334}
  \DrawArcFromPoint{\dem, dashed}{L2}{\r}{-155}{-25}
  \DrawArcFromPoint{\dem, dashed}{L3}{\r}{-155}{-46}
  \DrawArcFromPoint{\dem, dashed}{L4}{\threshold}{-145}{89}
  \foreach \i/\j in {L1/B/,L2/B,L3/C,L4/C, B/A, C/A, A/root, D/root} {
  \draw[\dem, thick] (\i) -- (\j);
  }
  \foreach \i in {5,6,7,8} {
  \draw[\dem, dashed] (L\i) circle (\threshold);
  }
  \node[\col, above left] at (A) {$v$};
\end{tikzpicture}
    \caption{}
    \label{fig:binary-1.5}
\end{subfigure}
    \caption{Illustration of the $3/2$-approximation example. (A) The input graph: a complete binary tree with exponentially decreasing edge weights and additional horizontal edges of weight $2 - \xi$ between leaves. All leaves are terminals. (B) The initial forest constructed by \BBG{}, connecting leaves with horizontal $(2 - \xi)$ edges. (C) The effect of a boost action applied to an internal vertex $v$, which causes earlier merges within its subtree but does not decrease the total growth, and is thus not considered a valuable action.}
    \label{fig:binary}
\end{figure} 
\paragraph{Running \BBG{} and \LocalSearch{}.}

The local search begins with the execution of the \BBG{} algorithm. At time $(2 - \xi)/2$, each leaf vertex's active set intersects with neighboring active sets, fully coloring the extra edges at the bottom layer (See Figure \ref{fig:binary-2}). Hence, the fingerprint of the initial run sets the active times as follows:

\[
t_v = \begin{cases}
\frac{2 - \xi}{2}, & \text{if $v$ is a leaf},\\[6pt]
0, & \text{otherwise}.
\end{cases}
\]

This results in the initial forest with all additional edges of length $2 - \xi$ at the bottom of the input graph.

\paragraph{Analysis of Local Search.} 
We now aim to show that our local search is unable to find a better solution.

\begin{lemma}
Local search on this instance, starting from the fingerprint produced by Legacy Moat Growing, returns the same initial forest and fingerprint.
\end{lemma}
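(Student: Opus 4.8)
The plan is to show that every boost action available on this instance has nonpositive \Win{}, so in particular no boost is valuable, and hence the \LocalSearch{} loop terminates immediately with $\tmp = \tplus$, returning the forest produced by \BBG{}. To do this, I would argue that the total growth of active sets $\sum_{S}\ys$ cannot be decreased by any boost, by exhibiting a lower bound on this total growth that is already attained by Legacy Moat Growing on this instance, combined with Lemma~\ref{lm:positive_win} (which gives $\Win \geq 0$) and Lemma~\ref{lm:new-ys-based-on-win-loss} (which says $\sum_{S}\ys' = \sum_{S}\ys - \Win + \Loss$); since $\Win \geq 0$ and we will show $\sum_S \ys' \geq \sum_S \ys$, we get $\Loss \geq \Win \geq 0$, and combined with $\Win \geq 0$ this would only allow $\Win = \Loss = 0$ if the lower bound is tight — but even without exact tightness, it suffices to show $\Win \le (1+\bta)\Loss$ fails, i.e. $\Win < (1+\bta)\Loss$, which follows once $\Win \le \Loss$ and we rule out the degenerate equality case or simply observe $\Win\le\Loss$ forces non-valuability unless $\Win=\Loss=0$, which is also not valuable.

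The key computation is the lower bound on total growth. In the Legacy run, the total growth is $2^h \cdot \frac{2-\xi}{2} = 2^{h-1}(2-\xi)$, since each of the $2^h$ leaf singletons is active for exactly $\frac{2-\xi}{2}$ and no internal vertex is ever active. I would argue this is optimal among all fingerprints larger than $\tplus$ (equivalently, among all monotonic moat growing algorithms consistent with the demands) by analyzing which boost actions can reduce it. A boost on a leaf does nothing useful: the leaves already pairwise connect via the bottom $(2-\xi)$ edges before any tree edge is colored, and keeping a leaf active longer only adds growth. A boost on an internal vertex $v$ at subtree-root level can cause the leaves below $v$ to merge earlier through the tree edges inside the subtree of $v$; but one must check that the total growth does not decrease. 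The point is that merging two active sets through a tree path of cost $c$ requires growth at least $c/2$ on each side before they meet — and the internal tree edges, while individually cheap, still cost growth, while the merged set then continues to grow for the remaining duration; a careful accounting (as in the grid example of Appendix~\ref{eg:grid}, but now going the other way) should show the net change is nonnegative. Concretely, if a boost on $v$ causes the $2^k$ leaves in its subtree to merge into one active set by time $\tau^*$ and then grow together until $\frac{2-\xi}{2}$, the growth contributed is at least (cost of the Steiner tree inside $v$'s subtree connecting those leaves)$\,/\,$(something) plus $\big(\frac{2-\xi}{2} - \tau^*\big)$, and because the bottom tree edges have cost $1$, the internal subtree has total cost $\geq 2^{k-1}+2^{k-2}+\dots$, which dominates; I'd verify the inequality makes the boost non-beneficial for every $k$ and every choice of $v$.

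The main obstacle I anticipate is making the accounting in the previous paragraph fully rigorous for \emph{arbitrary} internal vertices and arbitrary boost times $\tmpp$, rather than just the ``natural'' ones, since a boost action allows any $\tmpp \geq \tm_v$, and one must rule out all of them simultaneously — including combinations where a boost on a high-up vertex causes cascading early merges through several tree levels. The cleanest way around this is probably to avoid case analysis and instead use the structural results already proved: note that $\sum_S \ys$ for any fingerprint $\gt \geq \tplus$ is lower-bounded by the total growth needed just to satisfy the demands, and that in this instance the Legacy run already achieves the minimum possible (each leaf needs to be active for $\frac{2-\xi}{2}$ to reach its nearest neighbor, and the laminar/refinement structure of Lemma~\ref{lm:large_fingerprint_refinement} forces any larger fingerprint to have active sets that are supersets, hence at least as much total growth when the leaves are still the drivers). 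If that abstract argument can be pushed through, the conclusion is immediate: no boost has positive \Win{} beyond its \Loss{}, so the while-loop in Algorithm~\ref{alg:local-search} never executes, and \LocalSearch{} returns $\F, \tplus, \yb$ with $\F$ the Legacy forest, as claimed. I would close by noting this forest has cost $2^h(2-\xi)$ while the optimum (choosing all tree edges) costs $\sum_{i=1}^{h} 2^{h-i} \cdot 2^{1-i}\cdot(\text{appropriate count}) + \text{bottom edges}$, which for large $h$ is roughly $\frac{4}{3} 2^{h-1}$, giving the claimed $3/2$ lower bound on the local search's approximation ratio in the surrounding discussion.
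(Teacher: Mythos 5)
Your high-level plan is right and matches the paper's approach: show that no boost action decreases the total growth $\sum_S \ys$, then invoke Lemma~\ref{lm:valuable-decrease-total} to conclude no boost is valuable. However, the fallback ``abstract argument'' you retreat to is incorrect, and the concrete computation you start but do not finish is in fact the load-bearing step, so there is a real gap.

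The abstract claim is backwards. You write that the refinement structure of Lemma~\ref{lm:large_fingerprint_refinement} ``forces any larger fingerprint to have active sets that are supersets, hence at least as much total growth.'' But supersets mean \emph{fewer} active sets at each moment, and since $\sum_S \ys$ accrues one unit per active set per unit time, fewer active sets at a given moment means the increment to total growth is \emph{smaller}, not larger. This is exactly why boost actions can be valuable at all --- see Figure~\ref{fig:wheel} and Appendix~\ref{eg:grid}, where a boost merges active sets early and strictly decreases total growth. Similarly, the claim that ``each leaf needs to be active for $\frac{2-\xi}{2}$'' is a description of the Legacy run, not a lower bound for boosted runs: a boosted internal vertex $v$ can absorb the leaves of its subtree before time $\frac{2-\xi}{2}$, reducing the number of active sets containing them. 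So the abstract route does not establish the needed inequality, and cannot be salvaged with the parenthetical caveat.

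What actually works is the explicit accounting you gesture at and then abandon. The paper fixes an internal vertex $v$ at hop-distance $d$ from its leaves and considers the single natural boost value $\tmpp = \frac{2 - (1/2)^{d-1}}{2}$ (the time at which $v$ meets the leaves of its subtree). Before the boost, the $2^d$ leaves below $v$ contribute $2^d\cdot\frac{2-\xi}{2}$; after, they contribute $2^d\cdot\frac{2-(1/2)^{d-1}}{2}$ (growing until they merge with $v$), while $v$'s set plus the merged set contribute a combined $\frac{2-\xi}{2}$. The inequality
\[
2^d\cdot\frac{2-\xi}{2} \;\le\; 2^d\cdot\frac{2-(1/2)^{d-1}}{2} + \frac{2-\xi}{2}
\]
holds for all $d\ge 0$ whenever $\xi < (1/2)^{d-1}$, so total growth does not decrease. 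The ``arbitrary boost time'' concern you raise is then handled by a short monotonicity observation, which is exactly what the paper does: any larger $\tmpp$ only adds growth on top of this, and any smaller $\tmpp$ reduces $v$'s own growth by some amount but keeps all $2^d$ leaves active longer (each by at least that amount), so the net change is again nonnegative. Cascading early merges through several tree levels do not arise because, for this boost value, the merge with $v$ happens in a single step just before time $\frac{2-\xi}{2}$ and no other tree edge is fully colored before then. Completing this computation and the monotonicity argument would close the gap; without it, the claim that the local search returns the Legacy forest is unsupported.
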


\begin{proof}
We show that for any vertex $v$, boosting its value $\tV$ to a certain threshold does not reduce the total growth $\sumYS$. By Lemma~\ref{lm:valuable-decrease-total}, such a boost is not considered valuable. We then argue that neither increasing nor decreasing the value further yields a valuable boost, and thus no valid local search move exists.

Figure~\ref{fig:binary-2} illustrates this boost on vertex \(v\).

Assume vertex \(v\) is at hop-distance \(d\) from the leaves in its subtree. Two observations are in order:

\begin{enumerate}
    \item The distance from \(v\) to any leaf outside its subtree is at least \(2\).
    \item The distance from \(v\) to any leaf within its subtree is
    \[
    2 - \left(\frac{1}{2}\right)^{d-1}.
    \]
\end{enumerate}

Now, suppose we boost \(v\) to:
\[
t_v = \frac{2 - \left(\frac{1}{2}\right)^{d-1}}{2}.
\]

At the moment of this boosted value, the active set containing \(v\) will merge with all active sets corresponding to the leaves of its subtree. And only in the next step, when time reaches \((2 - \xi)/2\) (assuming $\xi < (1/2)^{d-1}$), all active sets merge and moat growing ends. This implies only the subtree of \(v\) is affected by this boost (see Figure~\ref{fig:binary-1.5}).

Let’s compare the total growth for this subtree before and after the boost (denoted as $y$ and $y'$).

Before the boost:
\[
\sumYS = 2^d \cdot \frac{(2 - \xi)}{2}.
\]

After the boost:
\[
\sumYS' = \frac{2^d \cdot \left(2 - \left(\frac{1}{2}\right)^{d-1}\right)}{2} + \frac{2 - \xi}{2}.
\]

Subtracting the pre-boost from the post-boost expression, and using the assumption $\xi < (1/2)^{d-1}$, one verifies:
\[
2^d \cdot \frac{(2 - \xi)}{2} \le \frac{2^d \cdot \left(2 - \left(\frac{1}{2}\right)^{d-1}\right)}{2} + \frac{2 - \xi}{2}.
\]

Since the total growth does not decrease, this boost is not valuable. Any larger boost only increases the growth. A smaller boost may decrease the growth of $v$'s own active set, but increases it for all leaves in the subtree—which cancels out the gain.

Since the total growth does not decrease, this boost is not valuable. Any larger boost (i.e., increasing \(t_v\) beyond the value we just analyzed) only increases the total growth. On the other hand, any smaller boost may reduce the growth of \(v\)'s own active set, but increases the growth for all leaves in the subtree—compared to the analyzed boost value—thereby canceling out any potential decrease in total growth.

Thus, no boost action reduces the total cost. Local search halts without any change to the forest or fingerprint.

\end{proof}

\paragraph{Optimal Solution Analysis.}

The optimal solution clearly corresponds to selecting only the original binary tree edges, excluding any additional leaf-to-leaf edges. Its total cost, denoted $\cc(\OPT)$, is:

\[
\cc(\OPT) = 2^h \left(1 + \frac{1}{4} + \frac{1}{16} + \dots + \frac{1}{2^{2(h-1)}} \right) \le \frac{4}{3} \cdot 2^h
\]

\paragraph{Approximation Factor Analysis.}

The cost of our algorithm's forest, denoted as $\cc(F)$, is:
\[
\cc(F) = (2^h - 1) \cdot (2 - \xi)
\]
Hence, the approximation ratio is:
\[
\frac{\cc(F)}{\cc(\OPT)} \le \frac{(2^h - 1)(2 - \xi)}{2^h \cdot \frac{4}{3}} \le \frac{3(2^h - 1)}{2(2^h)}
\]
As $h$ grows large, this ratio approaches $3/2$:
\[
\lim_{h \to \infty} \frac{\cc(\F)}{\cc(\OPT)} = \frac{3}{2}
\]
This completes our analysis and demonstrates that the given instance achieves a $3/2$ approximation ratio.

Thus, this example establishes a lower bound of $3/2$ for the approximation ratio of our algorithm. Determining whether this bound is tight remains an intriguing open question.

\subsection{Necessity of Autarkic Pairs}
\label{eg:horseshoe}

In this section, we present an instance where the \BBG{} algorithm produces a solution with a 2-approximation, but neither local search with boost actions nor extension is able to improve it. This example highlights the limitations of these techniques in certain cases and illustrates the necessity of handling autarkic pairs separately using the \candidate{} algorithm.

\paragraph{Constructing the Instance.}

Consider a vertical path with \( n+1 \) edges of length 1, connecting a top vertex \( v \) to a bottom vertex \( u \). For each intermediate vertex along this path, attach a new vertex via an edge of length 1. Then, create multiple demand pairs over the endpoints of these added edges by duplicating the vertices and connecting each duplicate back with a near-zero-cost edge (with sufficiently small length \(\xi\)). Finally, add one more demand pair between \( v \) and \( u \), and place a direct edge of length 2 between them (see Figure~\ref{fig:ten-inp}).

\paragraph{Running \BBG{}, \LocalSearch{} and \EGW{}.}

When \BBG{} is executed on this instance, it assigns the same timestamp of \( \frac{1}{2} \) to all demand vertices (see Figure~\ref{fig:ten-norm}), and all are deactivated simultaneously at \( \tau = \frac{1}{2} \). No vertex remains active beyond this point. Consequently, any attempt to boost a vertex is ineffective—the algorithm’s mechanism for boosting cannot trigger a valid change, as any smaller boost is ignored once all vertices are halted. This uniform fingerprint represents an edge case in which the \LocalSearch{} fails to improve the solution.
Similarly, \EGW{} offers no improvement either, as it only increases the growth of active sets already finalized. Any additional growth preserves the previously selected forest without impacting the outcome.

\paragraph{Optimal Solution Analysis.}

The optimal solution in this instance connects each demand pair via its direct edge. Thus, the cost is:
\[
\cc(\OPT) = 2 + \sum_{i=1}^{n} 1 = n + 2.
\]

\paragraph{Approximation Factor Analysis.}

In contrast, the solution produced by the moat growing algorithm includes all the edges of the base vertical path (which sum to \( n+1 \)) plus the \( n \) additional edges connecting each \( v_i \) to \( u_i \). Therefore, the total cost of the forest \( F \) is:
\[
\cc(F) = (n+1) + n = 2n + 1.
\]
This yields an approximation ratio of:
\[
\frac{2n+1}{n+2} \approx 2 \quad \text{for sufficiently large } n.
\]

\paragraph{Key Observation and Role of the Autarkic Pairs Algorithm.}

The main observation from this example is that---even though the moat growing procedure yields a 2-approximation---this specific structure causes the boost action to be ineffective, specifically not choosing the direct edge between $v$ and $u$. In this edge case, the intermediate demands keep active sets growing over many unnecessary edges of the path and force the $(v, u)$ demand to connect through these edges.

This is precisely where the \candidate{} algorithm becomes essential: it detects these intermediate demands as autarkic pairs and connects them directly (see Figure~\ref{fig:ten-zero-edge}). This leads to immediate deactivation of their corresponding active sets and, as a result, allows the active sets of \( v \) and \( u \) to grow toward each other and connect via the direct edge of length 2 (see Figure~\ref{fig:ten-our}).

\subsection{A Lower Bound Worse than 2 for the Gluttonous Algorithm}
\label{eg:anupam83}
Given the high significance of Steiner Forest problem, long effort has been put even for finding another greedy algorithm that gives a constant approximation. Gupta and Kumar \cite{DBLP:conf/stoc/Gupta015} showed  the gluttonous algorithm preserves an $O(1)$-approximation. They also left this as an important open problem whether this algorithm is a 2-approximation.

In this section, we present a counterexample showing that the gluttonous algorithm is not a 2-approximation for the Steiner Forest problem; in fact, its approximation ratio is at least \(\frac{8}{3} \approx 2.666\). 

First, we provide an exact presentation of the gluttonous algorithm.

\paragraph{The Gluttonous Algorithm.}

The algorithm proceeds by iteratively selecting the minimum-cost edge that connects previously disconnected components of unsatisfied vertices. This process continues until all unsatisfied vertices are connected to their pairs and get satisfied. The gluttonous algorithm is notable for its simplicity and intuitive approach.

The gluttonous algorithm can be formally defined as follows:

\begin{enumerate}
  \item Initialize the forest $\cc(F)$ as an empty set.
  \item While there exists unsatisified (disconnected) demands:
  \begin{enumerate}
    \item Find the minimum distance pair $e = (u, v)$ such that $u$ and $v$ belong to different components in the current forest $\cc(F)$ and neither $v$ is connected to $\pairv$ nor $u$ is connected to $\pair_u$--both are unsatisifed.
    \item Add the edge $e$ to the forest $\cc(F)$ with cost of current shortest-path (equivalent to adding the edges of the current shortest-path to the forest).
    \item Connect $u$ and $v$ with a zero edge in the context graph.
  \end{enumerate}
\end{enumerate}

Here, the context graph models the evolving connectivity state among vertices by introducing zero-cost virtual connections between already-joined vertices.

Previous research has established that the gluttonous algorithm is a 96-approximation for the Steiner Forest problem. 

\paragraph{Counter Example.}

Here, we present a counterexample that demonstrates the gluttonous algorithm is not a 2-approximation for the Steiner Forest problem. The counterexample we provide exhibits a ratio of $\frac{8}{3}$ (approximately 2.666).

\paragraph{Description of the Counterexample.}

\begin{figure}[th]
\centering

\begin{tikzpicture}[scale = 0.5]
\def\dx{8}
\def\dy{3.5}
\def\r{0.7}
\foreach \i/\place in {-1/above, 0/left, 1/above} {
    \draw[Black!90] (0, 0) -- node[\place] {1} (\i * \dx, -\dy);
}
\foreach \i in {-1, 0, 1} {
    \foreach \j/\L in {2/1, 3/2, 4/4} {
        \draw[Black!90] (\i * \dx, -\j * \dy) -- node[left] {\L} (\i * \dx, -\j * \dy + \dy);
    }
}
\def\noder{0.2cm}
\draw[Black!90, fill=white] (0, 0) circle (\noder);
\node[Black!90,above] (a) at (0, 0) {$v$};
\foreach \i/\I in {-1/1, 0/2, 1/3} {
    \foreach \J/\j/\L/\col in {0/1/1/green, 1/2/2/blue, 2/3/4/orange, 3/4/8/red} {
        \draw[Black!90] (\i * \dx, -\j * \dy) -- node[above] {\L} ($ (\i * \dx, -\j * \dy) + (-170: \dx/3 + \j/2) $);
        \draw[Black!90, fill=white] (\i * \dx, -\j * \dy) circle (\noder);
        \node[Black!90, below right] (a) at (\i * \dx, -\j * \dy) {$v_{\J,\I}$};
        \draw[Black!90, line width=0.4mm,fill=\col] ($ (\i * \dx, -\j * \dy) + (-170: \dx/3 + \j/2) $) circle (\noder);
        \node[Black!90, left] (a) at ($ (\i * \dx, -\j * \dy) + (-170: \dx/3 + \j/2) $) {$u_{\J,\I}$};
    }
}
\end{tikzpicture}
\caption{The figure shows a counterexample for which the Gluttonous algorithm yields an approximation ratio of at least \(8/3\). Each column corresponds to one of the \(n = 3\) replicas of a base tree structure with \(k + 1 = 4\) layers and exponentially increasing edge weights. Each layer of leaves (i.e., vertices at the same depth across columns) forms a terminal group that must be pairwise connected. 
The Gluttonous algorithm connects each group of connectivity demands independently through directed edges between them, while the optimal solution reuses the underlying tree structure to connect all groups more efficiently. It is important to note that we can assume direct edges between demands are slightly smaller than the shortest path in the tree.}
\label{fig:ex83}

\end{figure}
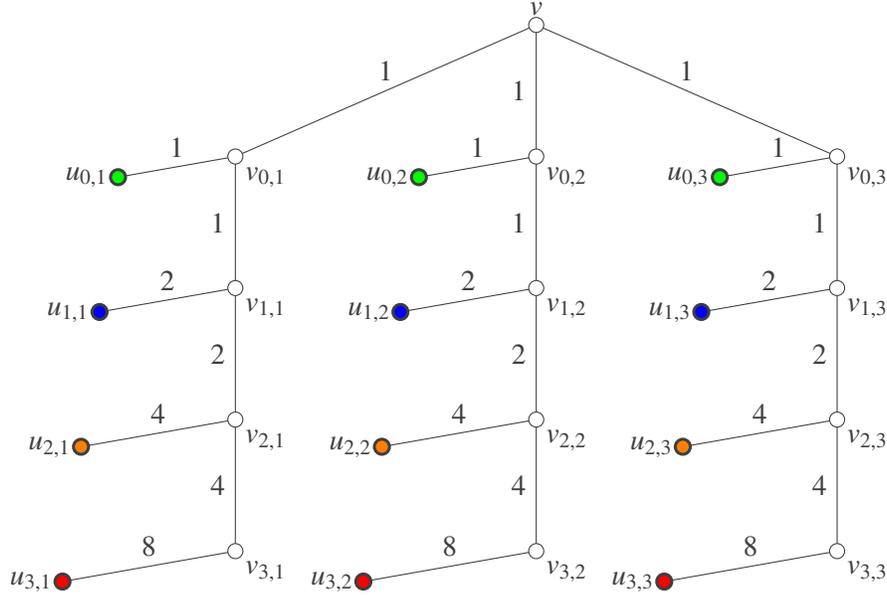

Let us consider a graph, $H$, defined in the following manner. The graph encompasses a path that consists of vertices denoted as $v_0$, $v_1$, ..., $v_k$. Additionally, each index, $i$, has an associated vertex, $u_i$, which connects directly to the corresponding $v_i$ vertex. This configuration constructs a tree-like structure (Actually, $H$ corresponds to each column in Figure~\ref{fig:ex83}).

The costs of the edges within this graph adhere to a specific pattern. The edge cost connecting vertices $v_i$ and $v_{i+1}$ is defined as $2^i$. A similar cost pattern is observed for the edge linking vertices $v_i$ and $u_i$, wherein the cost is also defined as $2^i$.

Building on the previous step, consider creating \( n \) copies of the graph \( H \), labeled \( H_1, H_2, \dots, H_n \). In addition to these copies, add a new extra vertex called \( v \). In each copy \( H_j \), let \( u_{i,j} \) and \( v_{i,j} \) represent the counterparts of the original vertices \( u_i \) and \( v_i \), respectively.

For every $j$, form a connection between vertex $v_{0,j}$ and the extra vertex, $v$. The resultant structure upon these connections can then be defined as tree $T$ (See Figure \ref{fig:ex83}).

The graph $G$ is generated as a weighted metric graph, derived by computing the shortest paths within tree $T$. Subsequently, for each fixed $j$, $P_j$ is defined as a collection of terminal groups, where the $j$-th terminal group consists of the vertices $u_{i,j}$ for all possible values of $i$. Hence, there are $k + 1$ such terminal groups that are to be connected.

Here, a terminal group refers to a collection of vertices that must be pairwise connected. As discussed earlier, this setting can be transformed into our \((v, \pairv)\) formulation by duplicating each vertex and adding zero-cost edges between the duplicates.

In this proposed counterexample, it will be demonstrated in the subsequent sections that the ratio of the output from the Gluttonous algorithm to the optimal solution is no less than $\frac{8}{3}$. This assertion underscores the performance characteristics of the algorithm within this specific context.

\paragraph{Gluttonous Algorithm Performance.}

The following characteristics of distances between vertices in the graph are observed:

\begin{itemize}
    \item The distance from $u_{i,j}$ to $v$ is $2^{i+1}$.
    \item The distance between two vertices, $u_{i,j_1}$ and $u_{i,j_2}$, is $2 \cdot 2^i = 2^{i+2}$.
    \item The distance between two vertices, $u_{i_1,j}$ and $u_{i_2,j}$ (with $i_1 < i_2$), is $2^{i+1}$.
    \item Furthermore, the distance between two vertices, $u_{i_1,j_1}$ and $u_{i_2,j_2}$ (with $i_1 < i_2$, and $j_1 \neq j_2$), does not decrease when zero-edges are added between any pair of $u_{i,j_1}$ and $u_{i,j_2}$ (for $i < i_1$).
\end{itemize}

Based on the aforementioned observation, we notice that the gluttonous algorithm consecutively merges $u_{0,1}$ and $u_{0,2}$, then $u_{0,1}$ and $u_{0,3}$, and continues in this manner until it merges $u_{0,1}$ and $u_{0,n}$. In this progression, the zeroth group, $P_0$, becomes completely satisfied. Consequently, any later $u_{0,j}$ vertices are excluded from the requirements.

Subsequently, attention is turned to the remaining groups, namely $P_1$, $P_2$, through to $P_k$. Throughout this process, each group necessitates $n-1$ steps to reach full satisfaction. For each group $P_i$, each step incurs a cost of $2^{i+2}$. Therefore, the total cost incurred is expressed as $$(n-1)\sum_{i=0}^{k} 2^{i+2} = (n-1)(2^{k+3} - 4).$$

\paragraph{The Optimal Solution.}

It is apparent that tree $T$, as it stands, provides a feasible solution to the aforementioned requirements. Consequently, if we designate the optimal solution cost as $\cc(\OPT)$ and the total cost of $T$ as $\cc(T)$, then it can be inferred that $\cc(\OPT) \leq \cc(T)$.

More specifically, the upper bound of $\cc(\OPT)$ can be calculated as follows:

\[
\cc(\OPT) \leq n \cdot \left(1 + \sum_{i=0}^{k} 2^i + \sum_{i=0}^{k - 1} 2^i\right) = n \cdot \left(2^k + 2^{k + 1} - 1\right) = n \cdot \left(3 \cdot 2^k - 1\right).
\]

Thus,

\[
\cc(\OPT) \leq n \cdot \left(3 \cdot 2^k - 1\right).
\]

\paragraph{Approximation Factor Analysis.}

Let $\cc(F)$ denote the outcome cost of the Gluttonous algorithm. Hence,
\[\cc(F) = (n-1)(2^{k+3} - 4) = (n-1)(8 \cdot 2^k - 4).\]

Given that $\cc(\OPT) \leq n \cdot (3 \cdot 2^k - 1)$, the approximation ratio can be expressed as:

\[
\lim_{n,k \to \infty} \frac{\cc(F)}{\cc(\OPT)} = \lim_{n,k \to \infty} \frac{(n-1)(8 \cdot 2^k - 4)}{n \cdot (3 \cdot 2^k - 1)} = \frac{8}{3}
\]

Therefore, assuming sufficiently large values of $n$ and $k$, it can be asserted that the approximation ratio is at least $\frac{8}{3}$.
 
\end{document}